\newtheorem*{theorem*}{Theorem}
\newtheorem*{corollary*}{Corollary}
\newtheorem{theorem}{Theorem}[chapter]
\newtheorem{definition}[theorem]{Definition}
\newtheorem{proposition}[theorem]{Proposition}
\newtheorem{lemma}[theorem]{Lemma}
\newtheorem{corollary}[theorem]{Corollary}
\newtheorem{example}{Example}
\newtheorem{remark}[theorem]{Remark}
\DeclareMathOperator{\Dom}{Dom}
\DeclareMathOperator{\sgn}{sgn} 
\DeclareMathOperator{\spann}{span} 
\DeclareMathOperator{\ind}{ind}
\DeclareMathOperator{\Var}{Var}
\DeclareMathOperator{\Inv}{Inv}
\DeclareMathOperator{\Id}{Id}
\DeclareMathOperator{\supp}{supp}
\DeclareMathOperator{\ran}{Ran}
\DeclareMathOperator{\Iso}{Iso}
\newcommand{\partiald}[2]{\frac{\partial {#1}}{\partial {#2}}}
\newcommand*\circled[1]{\tikz[baseline=(char.base)]{
            \node[shape=circle,draw,inner sep=2pt] (char) {#1};}}
\newcommand{\tnorm}[1]{{\left\vert\kern-0.25ex\left\vert\kern-0.25ex\left\vert #1 
    \right\vert\kern-0.25ex\right\vert\kern-0.25ex\right\vert}}
\renewcommand{\chaptermark}[1]{\markboth{\MakeUppercase{#1}}{}}
\begin{document}
\frontmatter 
% cabe\c{c}alho para as p\ 'aginas das se\c{c}\~oes anteriores ao cap\'itulo 1 (frontmatter)
\fancyhead[RO]{{\footnotesize\rightmark}\hspace{2em}\thepage}
\setcounter{tocdepth}{2}
\fancyhead[LE]{\thepage\hspace{2em}\footnotesize{\leftmark}}
\fancyhead[RE,LO]{}
\fancyhead[RO]{{\footnotesize\rightmark}\hspace{2em}\thepage}

\onehalfspacing  % espa\c{c}amento

% ---------------------------------------------------------------------------- %
% CAPA
% Nota: O t\'itulo para as disserta\c{c}\~oes/teses do IME-USP devem caber em um 
% orif\'icio de 10,7cm de largura x 6,0cm de altura que h\ 'a na capa fornecida pela SPG.
\thispagestyle{empty}
\begin{center}
    \vspace*{2.3cm}
    \textbf{\Large{Thermodynamic Formalism for Generalized Countable Markov Shifts}}\\
    
    \vspace*{1.2cm}
    \Large{Thiago Costa Raszeja}
    
    \vskip 2cm
    \textsc{
    Thesis presented\\[-0.25cm] 
    to the \\[-0.25cm]
    Institute of Mathematics and Statistics\\[-0.25cm]
    of the\\[-0.25cm]
    University of S\~ao Paulo\\[-0.25cm]
    in partial fulfillment of the requirements\\[-0.25cm]
    for the degree\\[-0.25cm]
    of\\[-0.25cm]
    Doctor of Science}
    
    \vskip 1.5cm
    Program: Applied Mathematics\\
    Advisor: Prof. Dr. Rodrigo Bissacot\\ %verificar se usa prof. Dr.
    %Coorientador: Prof. Dr. Nome do Coorientador

   	\vskip 1cm
    \normalsize{During the development of this work the author was supported by CAPES and CNPq.}
    
    \vskip 0.5cm
    \normalsize{S\~ao Paulo, December 2020}
\end{center}

% ---------------------------------------------------------------------------- %
% P\ 'agina de rosto (SÓ PARA A VERSÃO DEPOSITADA - ANTES DA DEFESA)
% Resolu\c{c}\~ao CoPGr 5890 (20/12/2010)
%
% IMPORTANTE:
%   Coloque um '%' em todas as linhas
%   desta p\ 'agina antes de compilar a vers\~ao
%   final, corrigida, do trabalho
%
%
%\newpage
%\thispagestyle{empty}
%    \begin{center}
%        \vspace*{2.3 cm}
%        \textbf{\Large{Thermodynamic Formalism for Generalized Countable Markov Shifts}}\\
%        \vspace*{2 cm}
%    \end{center}

%    \vskip 2cm

%    \begin{flushright}
%	This is the original version of the thesis elaborated by the\\
%	candidate Thiago Costa Raszeja, as it was \\
%	submitted to the Examiner Comission.
 %   \end{flushright}

%\pagebreak

% ---------------------------------------------------------------------------- %
% P\ 'agina de rosto (SÓ PARA A VERSÃO CORRIGIDA - APÓS DEFESA)
% Resolu\c{c}\~ao CoPGr 5890 (20/12/2010)
%
% Nota: O t\'itulo para as disserta\c{c}\~oes/teses do IME-USP devem caber em um 
% orif\'icio de 10,7cm de largura x 6,0cm de altura que h\ 'a na capa fornecida pela SPG.
%
% IMPORTANTE:
%   Coloque um '%' em todas as linhas desta
%   p\ 'agina antes de compilar a vers\~ao do trabalho que ser\ 'a entregue
%   à Comiss\~ao Julgadora antes da defesa
%
%
\newpage
\thispagestyle{empty}
    \begin{center}
        \vspace*{2.3 cm}
        \textbf{\Large{Thermodynamic Formalism for Generalized Countable Markov Shifts}}\\
        \vspace*{2 cm}
    \end{center}

    \vskip 2cm

    \begin{FlushRight}
     This is the final version of the thesis and it contains corrections \\
     and changes suggested by the examiner commitee during the\\
     defense of the original work, realized in December 17th, 2020.\\
     A copy of the original version of this text is avaliable at the \\
     Institute of Mathematics and Statistics of the University of S\~ao Paulo.
    
    \vskip 2cm

   \end{FlushRight}
    \vskip 4.2cm

    \begin{quote}
    \noindent Examiner Committee:
    
    \begin{itemize}
		\item Prof. Dr. Severino Toscano do R\^ego Melo - IME-USP (President)
		\item Prof. Dr. Godofredo Iommi - Pontificia Universidad Cat\'olica de Chile
		\item Prof. Dr. Bartosz Kosma Kwa\'sniewski  - Uniwersytet w Bia\l{}ystok
		\item Prof. Dr. Jean Renault - Universit\'e d'Orl\'eans
        \item Prof. Dr. Aidan Sims - University of Wollongong
    \end{itemize}
      
    \end{quote}
\pagebreak

\pagenumbering{roman}     % come\c{c}amos a numerar 

% ---------------------------------------------------------------------------- %
% Agradecimentos:
% Se o candidato n\~ao quer fazer agradecimentos, deve simplesmente eliminar esta p\ 'agina 
\chapter*{Aknowledgements}

This is a (perhaps too simple) part of this thesis where I can express my gratitude for all those who supported me in many different ways, from nice moments laughing together to intense mathematical discussions. I feel very lucky to meet so nice people and it was amazing to interact with all of you during all these years, and I wish the best for you. I hope to see you all again soon!

First I thank my family, to which I dedicate this thesis, for all the love and care. To my parents, Airton Raszeja and Solange Maria P. C. Raszeja, and my brother, Fabr\'icio C. Raszeja. They always supported me in my career and all the changes in it, thank you very much!

I thank my advisor (and friend), prof. Rodrigo Bissacot, for amazing discussions, for tons of motivation, great courses, and for all the hard work which allowed the occurrence of my professional phase transition: from a theoretical physicist into a mathematician physicist. Also, I thank his spouse, prof. Patricia L. Cunha and their son, Fabricio Bissacot, for many great conversations and laughs, and for being so supportive to me during this journey.

Also, I thank everybody from prof. Rodrigo's research group for being so supportive and enthusiastic on Mathematics, which implied a lot of interesting discussions, and for participating in my seminars: Lucas Affonso, Elmer R. B\'eltran, Henrique Corsini, Eric O. Endo, Rodrigo Frausino, Bruno Kimura, Rafael P. Lima, Jo\~ao V. Maia, Renan L. Molinari, D\'ebora Oliveira, Samir Salmen, Leonardo Teramatsu, Kelvyn Welsch. Special thanks to Lucas, Henrique, Rodrigo F., Rafael, and Jo\~ao Maia for careful readings and for pointing some text mistakes. Besides, the author thanks Rafael and Rodrigo F. for discussions on groupoid C$^*$-algebras theory, and to Elmer for teaching me a lot about standard Markov shift spaces and for suggesting us to try the potential of Theorem \ref{thm:complete_characterization_eigenmeasures_renewal} to study the characterization of eigenmeasures on the generalized renewal shift. A special thanks to Rodrigo F. as co-author of the paper related to this thesis \cite{BEFR2018} and for many technical discussions. The author also thanks Christian T\'afula for great mathematical discussions.

I thank prof. Ruy Exel, also co-author of the paper related to this thesis, for coming to S\~ao Paulo for an intense month of discussions and work and for later discussions at ICM2018. 

Also, I thank prof. Marcelo Laca for advising me during one semester at UVic, for great discussions and all the support. In addition, I thank everybody I had the opportunity to interact with at Victoria: prof. Ian Putnam, prof. Chris Bose, Chris Bruce, Mitch Haslehurst, Joseph Horan, Anna Duwenig, Jeremy Hume, Lorena Aguirre Salazar, Alexander Chernyavsky, Barbara MacDonald, Chris MacDonald, Scot MacDonald, Eugene Oppong, Derrick Oppong, Karina Perry and Yvonne Trott.  

I thank professor Manfred Denker for showing me his and Yuri's paper \cite{DenYu2015} during a conference in Rio de Janeiro, which allowed us to explore even further the characterization of the eigenmeasures of the generalized Markov shifts.

I also thank prof. Walter de Siqueira Pedra for the great course on C$^*$-algebras and foundations of Quantum Mechanics and Statistical Mechanics, which made me change my research topic from my master's to my Ph.D., by showing how interesting the Operator Algebras can be.

I thank prof. Oswaldo Rio Branco de Oliveira for the great course on Measure Theory, which was crucial for my research.

Furthermore, I thank all friends that motivated and inspired me: Gabriel de Lara Melo, Luiz Roberto de Lara Melo, Elza Lima de Lara, Paulo Henrique Macedo, Carolina Alexiou, Thomas Gauran, Pietro Raele, Wilson Maruyama, Ana Carolina Souza, Caroline Andreassa, Vin\'icius Fran\c{c}\~ao, Nicholas Busic, Pryscila B. Grossch\"adl, Benedito Faustinoni, Juliana P. de Souza, Deborah Yohana Bertoldo, Vitor Yukio Fugivala, N\'ickolas Alves, Pedro Tredezini, prof. Gabrielle Weber, prof. Rebeca Bacani, Gabriela Lima Lichtenstein, Rita Stasevskas Kujawski, Ricardo Correa da Silva, Andr\'e Z. Vitorelli,  Arthur Loureiro, Gustavo Soares, Guilherme Ruiz, Brendon Faccion, J\'ulio Vin\'ius Rodrigues Miguel, Andr\'e Rodrigo da Silva, Danilo F. Putinato, Rodrigo Fuscaldo, prof. Francisco Nascimento, Talita Lopes da Cruz, Gabriel P. e Dechiche, Rafael R. de Campos and Thiago G. Tartaro.

I thank the examiner committee for participating in my Ph.D. defense: prof. Severino T. do R\^ego Melo, prof. Godofredo Iommi, prof. Bartosz K. Kwa\'sniewski, prof. Jean Renault and prof. Aidan Sims. A special thanks to Italo Cipriano for the big e-mail with many good suggestions and showing me many typos I committed.

Finally, I thank CNPq and CAPES for the financial support during this Ph.D., and the Institute of Mathematics and Statistics of the University of S\~ao Paulo.

% ---------------------------------------------------------------------------- %
% Resumo
\chapter*{Resumo}

\noindent RASZEJA, T. \textbf{Formalismo Termodin\^amico para Shifts de Markov Cont\'aveis Generalizados}. 
2020. 218 f.
Tese (Doutorado) - Instituto de Matem\'atica e Estat\'istica,
Universidade de S\~ao Paulo, S\~ao Paulo, 2020.
\\

Shifts de Markov com alfabeto enumer\'avel, os quais denotamos por $\Sigma_A$ para uma matriz 0-1 infinita $A$, s\~ao objetos centrais em Din\^amica Simb\'olica e Teoria Erg\'odica. R. Exel e M. Laca introduziram suas correspondentes \'algebras de operadores como uma generaliza\c{c}\~ao das \'algebras de Cuntz-Krieger para um alfabeto infinito e cont\'avel. Eles introduziram o conjunto $X_A=\Sigma_A \cup Y_A$, que \'e um tipo de {\it shift de Markov cont\'avel generalizado}, uma vez que coincide com o espa\c{c}o $\Sigma_A$ no caso localmente compacto. O espa\c{c}o $X_A$ cont\'em como subconjuntos densos o shift de Markov usual e um subconjunto de palavras finitas permitidas $Y_A$, este \'ultimo \'e denso quando for n\~ao vazio. Desenvolvemos o formalismo termodin\^amico para os shifts de Markov generalizados, introduzindo a no\c{c}\~ao de medida conforme em $X_A$ e explorando suas conex\~oes com o formalismo termodin\^amico usual em $\Sigma_A$. Novos fen\^omenos surgem, como diferentes tipos de transi\c{c}\~ao de fase e novas medidas conformes que n\~ao s\~ao detectadas pelo formalismo termodin\^amico cl\'assico quando a matriz n\~ao \'e row-finite. Dado um potencial $F$ e inverso da temperatura $\beta$, estudamos o problema de exist\^encia e aus\^encia de medidas conformes $\mu_{\beta}$ associadas a $\beta F$. Apresentamos exemplos onde existe um valor cr\'itico $\beta_c$, em que temos exist\^encia de probabilidades conformes satisfazendo $\mu_{\beta}(\Sigma_A)=0$ para todo $\beta > \beta_c$ e, na topologia fraca$^*$, quando tomamos o limite $\beta$ indo para $\beta_c$, o conjunto de probabilidades conformes para inverso de temperatura $\beta > \beta_c$ colapsa para a probabilidade conforme usual $\mu_{\beta_c}$ tal que $\mu_{\beta_c}(\Sigma_A) = 1$. Estudamos em detalhe o shift renewal generalizado e modifica\c{c}\~oes deste. Destacamos a bije\c{c}\~ao entre os elementos do alfabeto que s\~ao emissores infinitos e medidas de probabilidade conformes para essa classe de shifts do tipo renewal. Provamos a exist\^encia de automedidas de probabilidade da transforma\c{c}\~ao de Ruelle para temperaturas baixas o suficiente para um potencial particular no shift de renewal generalizado; estas medidas n\~ao s\~ao detectadas no renewal shift usual, dado que, para temperaturas baixas, o potencial $\beta F$ \'e transiente.
\\

\noindent \textbf{Palavras-chave:} transi\c{c}\~ao de fase, shift de Markov com alfabeto enumer\'avel, formalismo termodin\^amico, Exel-Laca \'algebra, estados KMS, din\^amica simb\'olica, sistemas din\^amicos, medidas conformes.

% ---------------------------------------------------------------------------- %
% Abstract
\chapter*{Abstract}
\noindent RASZEJA, T. \textbf{Thermodynamic Formalism for Generalized Countable Markov Shifts}. 
2020. 218 pages.
PhD Thesis - Institute of Mathematics and Statistics,
University of S\~ao Paulo, S\~ao Paulo, 2020.
\\

Countable Markov shifts, which we denote by $\Sigma_A$ for a 0-1 infinite matrix $A$, are central objects in Symbolic Dynamics and Ergodic Theory. R. Exel and M. Laca have introduced the corresponding operator algebras as a generalization of the Cuntz-Krieger algebras for an infinite and countable alphabet. They introduced the set $X_A=\Sigma_A \cup Y_A$, a kind of {\it generalized countable Markov shift} that coincides with the space $\Sigma_A$ in the locally compact case. The space $X_A$ contains as dense subsets the standard countable Markov $\Sigma_A$ and a subset of finite allowed words $Y_A$. The last one is dense when it is non-empty. We develop the thermodynamic formalism for the generalized countable Markov shifts $X_A$, introducing the notion of conformal measure in $X_A$ and exploring its connections with the usual thermodynamic formalism for $\Sigma_A$. New phenomena appear, as different types of phase transitions and new conformal measures that are not detected in the classical thermodynamic formalism when the matrix $A$ is not row-finite. Given a potential $F$ and inverse of temperature $\beta$, we study the problem of the existence and absence of conformal measures $\mu_{\beta}$ associated with $\beta F$. We present examples where there exists a critical $\beta_c$ such that we have the existence of conformal probabilities satisfying $\mu_{\beta}(\Sigma_A)=0$ for every $\beta > \beta_c$ and, on the weak$^*$ topology, when we take the limit on $\beta$ going to $\beta_c$, the set of conformal probabilities for the inverse of temperature $\beta > \beta_c$ collapses to the standard conformal probability $\mu_{\beta_c}$ such that $\mu_{\beta_c}(\Sigma_A)=1$. We study in detail the generalized renewal shift and modifications of it. We highlight the bijection between the elements of the alphabet, which are infinite emitters, and extremal conformal probability measures for this class of renewal type shifts. We prove the existence and uniqueness of the eigenmeasure (probability) of the Ruelle's transformation at low enough temperature for a particular potential on the generalized renewal shift; these measures are not detected on the standard renewal shift since for low temperatures, the potential $\beta F$ is transient. 
\\

\noindent \textbf{Keywords:} phase transition, countable Markov shift, thermodynamic formalism, Exel-Laca algebra, KMS states, symbolic dynamics, dynamical systems, conformal measures.

% ---------------------------------------------------------------------------- %
% Sum\'ario
\tableofcontents    % imprime o sum\ 'ario

% ---------------------------------------------------------------------------- %
%\chapter{List of Abbreviations}
%\begin{tabular}{ll}
%         GMS         & Generalized Markov Shift\\
%         IFS         & Iterated Function System\\
%         KMS         & Kubo-Martin-Schwinger\\
%         LCH         & locally Compact Hausdorff\\
%         MS          & Markov Shift\\
%         SGDS        & Singly Generated Dynamical System
%\end{tabular}

% ---------------------------------------------------------------------------- %
%\chapter{List de Symbols}
%\begin{tabular}{ll}
%        $\mathcal{O}_A$ & Cuntz-Krieger/Exel-Laca algebra\\
%        $\Sigma_A$      & Classical Markov shift space\\
%        $X_A$ or $\widetilde{X}_A$ & Generalized Markov shift space\\
%\end{tabular}

% ---------------------------------------------------------------------------- %
% Listas de figuras e tabelas criadas automaticamente
\listoffigures            
\listoftables            

% ---------------------------------------------------------------------------- %
% Cap\'itulos do trabalho
\mainmatter

% cabe\c{c}alho para as p\ 'aginas de todos os cap\'itulos
\fancyhead[RE,LO]{\thesection}

\singlespacing              % espa\c{c}amento simples
%\onehalfspacing            % espa\c{c}amento um e meio

\chapter*{Introduction}
\addcontentsline{toc}{chapter}{Introduction}
\markboth{INTRODUCTION}{}
The Thermodynamic Formalism consists in a rigorous and abstract approach from the point of view of Mathematical Physics into the foundations of Equilibrium Statistical Mechanics. Its Dynamical System and Ergodic Theory flavors were introduced by D. Ruelle \cite{Ruelle1967,Ruelle1978}, Y. Sinai \cite{Sinai1972} and R. Bowen \cite{Bowen1975}. One of the main topics of the area is the study of phase transition phenomena \cite{AizBarsFern1987,BelBisEndo2020,Dobrushin1965,Dobrushin1968,EnterFernSokal1993,FriedliVelenik2018,Georgii2011,Iommi2007,KuchQuasWolf2020,Peierls1936,Pesin2014,PirogovSinai1974,Sarig2001,Sarig2000}. A phase transition consists into an abrupt change in the system and it can have many different meanings, some of them are equivalent and some not, as we explain below:
\begin{itemize}
    \item[$(a)$] the Ising model on $\mathbb{Z}^d$, consisting in the study of ferromagnetism phenomena on a lattice for a two state model (see chapter 3 of \cite{FriedliVelenik2018}), where the configurations are elements of $\Omega = \{-1,+1\}^{\mathbb{Z}^d}$. The set of Gibbs measures is the closed convex hull of thermodynamic limits of the Gibbs distributions (see Theorem C.4.1 of \cite{Ellis2006} or in \cite{Israel1979}) on increasing finite sets $\{\Lambda_n\}_{n \in \mathbb{N}}$ which invade $\mathbb{Z}^d$, as we explain next. Given a boundary condition $\eta \in \Omega$, consider a configuration on a finite box $\Lambda \subset \mathbb{Z}^d$ as an element of the finite set
    \begin{equation*}
        \Omega_\Lambda^\eta := \{\omega \in \Omega: \omega_i = \eta_i \text{ for every } i \in \Lambda^c\}
    \end{equation*}
    The energy of a configuration $\omega \in \Omega_\Lambda^\eta$ of the Ising model in $\mathbb{Z}^d$ with inverse of temperature $\beta>0$ is given by
    \begin{equation*}
        \beta \mathcal{H}_{\Lambda,h}^\eta(\omega) = -\beta \sum_{\{i,j\} \in \mathscr{E}_\Lambda^b} \sigma_i (\omega) \sigma_j (\omega) - \beta h \sum_{i \in \Lambda} \sigma_i (\omega),
    \end{equation*}
    where $h \in \mathbb{R}$ is the external field $\sigma_i(\omega)$ is the projection at the coordinate $i$ of $\omega$. Also,
    \begin{equation*}
        \mathscr{E}_\Lambda^b := \{\{i,j\}\subset \mathbb{Z}^d:\{i,j\}\cap\Lambda \neq \emptyset \text{ and } i \sim j\},
    \end{equation*}
    where $i \sim j$ means that $i$ and $j$ are nearest neighbors, that is, $\|i-j\|_1 = \sum_{k=1}^d |i_k-j_k| = 1$. The Gibbs distributions at the set $\Lambda$ with boundary condition $\eta$ at parameters $\beta$ and $h$ is defined by
    \begin{equation*}
        \mu_{\Lambda,\beta,h}^\eta(\omega) = \begin{cases}
                                                \frac{e^{-\beta\mathcal{H}_{\Lambda,h}^\eta(\omega)}}{Z_{\Lambda,\beta,h}^\eta}, \quad \text{if } \omega \in \Omega_\Lambda^\eta,\\
                                                0, \quad \text{otherwise};
                                             \end{cases}
    \end{equation*}
    where $Z_{\Lambda,\beta,h}^\eta$ is the partition function, given by
    \begin{equation*}
        Z_{\Lambda,\beta,h}^\eta := \sum_{\omega \in \Omega_\Lambda^\eta} e^{-\beta\mathcal{H}_{\Lambda,h}^\eta(\omega)}.
    \end{equation*}
    The pressure is thermodynamic limit with certain control on the boundaries\footnote{The ratio of the size of the boundary of $\Lambda_n$ over its volume goes to zero as it $n$ goes to infinity.} of the family $\{\Lambda_n\}_\mathbb{N}$,
    \begin{equation*}
        P(\beta,h) = \lim_n P_{\Lambda_n}^\eta(\beta,h),
    \end{equation*}
    where the pressure at finite volume $\Lambda$ is defined by
    \begin{equation*}
        P_{\Lambda}^\eta(\beta,h) := \frac{1}{|\Lambda|}\log Z_{\Lambda,\beta,h}^\eta.
    \end{equation*}
    The function $P(\beta,h)$ is a well-defined convex function in which the limit does not depend on the choice of the sequence $\{\Lambda_n\}$ and of the boundary condition $\eta$. For fixed $h$ and $\beta$, some of the most important Gibbs measures here are the weak$^*$ limit of sequences in the form $\{\mu_{\Lambda_n,\beta,h}^{\eta_n}\}_{n \in \mathbb{N}}$. In Probability Theory, also for the Statistical Mechanics community, a phase transition occurs when there exists more than one Gibbs measure, for low enough temperature. On the other hand, for the Dynamical Systems and Ergodic Theory communities and, in part of the literature of Mathematical Physics, a phase transition occurs when $P(\beta,h)$ is not differentiable or analytic with respect to $\beta$. Sometimes, the point where the system passes from one to several measures coincides with the non-analyticity point $\beta_c$ of the pressure function. Theorem 3.25 of \cite{FriedliVelenik2018} grants that there is no phase transition for $d=1$ or $h \neq 0$. On the other hand, for $d = 2$ and $h = 0$, Onsager \cite{Onsager1944} determined the pressure explicitly, and the non-analyticity point is precisely the critical point $\beta_c$ which separates the regions of uniqueness and where we have the existence of more than one Gibbs measure: there exists a critical value $\beta_c \in (0,\infty)$ s.t. for $\beta \leq \beta_c$ we have the uniqueness of the Gibbs measure, and for $\beta > \beta_c$ we have non-uniqueness. For $d > 2$ and $h = 0$, the critical point of phase transition in the sense of uniqueness of Gibbs measures \cite{AizBarsFern1987,Ellis2006} is a point of non-analyticity of the pressure. The study of $P(\beta,0)$ analyticity with respect to $\beta$ was recently closed by S. Ott \cite{Ott2019,Ott2020}. Here, the Gibbs measures are in the sense of Dobrushin-Lanford-Ruelle \cite{Dobrushin1968,Dobrushin1968b,Dobrushin1968c,LanRu1969}, also called DLR measures.
    
    \item[$(b)$] Although the two notions for phase transition are equivalent in the model above, the equivalence does not hold in general. In \cite{BisCasCioPres2015}, the authors considered a slight modification in the model, by taking
    \begin{equation*}
        \mathcal{H}_{\Lambda}^\eta(\omega) = -J \sum_{\{i,j\} \in \mathscr{E}_\Lambda^b} \sigma_i (\omega) \sigma_j (\omega) - \sum_{i \in \Lambda} h_i \sigma_i (\omega),
    \end{equation*}
    that is, the external field is not constant anymore, and in this case, it is taken to be
    \begin{equation*}
        h(x) =  \begin{cases}
                   \frac{h^*}{\|x\|^\alpha}, \quad x \neq 0;\\
                   h^*, \quad x = 0,
                \end{cases}
    \end{equation*}
    where $\alpha > 0$, $J > 0$ and $h^* > 0$. By taking van Hove sequences, it can be shown that the pressure for this model is the same as in item $(a)$ for $h = 0$, that is, when there is no external field, and therefore there exists a critical value $\beta_c$ s.t. the pressure is not analytic. In \cite{BisCasCioPres2015}, if $0 < \alpha < 1$, they proved that there exists $\beta_\alpha < \infty$ s.t. for $\beta > \beta_\alpha$ we have the uniqueness of the DLR measure. On the other hand, by the Dobrushin Uniqueness Theorem \cite{Dobrushin1968}, we also have a unique DLR measure for $\beta < 1/(2dJ)$. L. Ciolleti and R. Vila \cite{CiolettiVila2015} completed the description, proving that such uniqueness holds for every $\beta >0$. Therefore we have a model on $\mathbb{Z}^d$ such that the notions of phase transition on lack of analyticity of the pressure and the non-uniqueness of the DLR measures are not equivalent.
\end{itemize}

In the context of symbolic dynamics, when we compare the standard theory of Markov shifts with the generalized one presented in this thesis, we have a similar change of behavior in the sense of dissociation of the notions of phase transitions. In fact, we have by Theorem 5 of \cite{Sarig2001}, which says that for the renewal shift space and a bounded above potential with summable variations s.t. its induced potential is locally H\"older continuous, there exists a critical value $\beta_c$ s.t. the Gurevich pressure $P_G(\beta F)$ is not analytic for $\beta = \beta_c$. Moreover, due to the Generalized RPF Theorem \cite{Sarig2015}, we have the existence of a unidimensional space of conservative eigenmeasures for $\beta < \beta_c$ and the absence of conservative eigenmeasures for $\beta > \beta_c$. Eigenmeasures are examples of DLR measures \cite{Shwartz2019,BelBisEndo2020}, which sometimes coincide, sometimes not \cite{BelBisEndo2020,CioLopesStad2020}. On the generalized renewal shift, we have an example of similar behavior in $(b)$, since we have a unique probability eigenmeasure for every $\beta >0$, even with the pressure having a point which is non-analytic.  

%, which could be, for instance, the change from one to more than one (and consequently infinite) Gibbs Measures \cite{Ising1925}

%In the particular case of symbolic dynamics

The first connection between C$^*$-algebras and symbolic spaces was born in 1977 for the full shift case by J. Cuntz \cite{Cuntz1977}. It was generalized for Markov shifts in the case of finite symbols J. Cuntz and W. Krieger \cite{CK1980} three years later, where it was introduced the so-called Cuntz-Krieger algebras $\mathcal{O}_A$, with $A$ being the transition matrix. These algebras are universal C$^*$-algebras generated by a family of partial isometries indexed by the alphabet under some conditions. They codify the Markov shift spaces in the sense that a product of these generators is non-zero if and only if the word correspondent to the product is admissible. Moreover, $C(\Sigma_A)$ is a commutative C$^*$-subalgebra of $\mathcal{O}_A$. Remarkable results for the case row-finite matrix $A$ were achieved by A. Kumjian, D. Pask, I. Raeburn and J. Renault in \cite{KumjPaskRaebRen1997,KumjPaskRaeb1998} by using the Groupoid C$^*$-algebra theory (see \cite{Renault1980}). In 1999, R. Exel and M. Laca \cite{EL1999} generalized the Cuntz-Krieger theory for the case in which the matrix $A$ is not row-finite. For this general case, $\Sigma_A$ may not be locally compact. The generalized Markov shift $X_A$ arises as a natural object to replace $\Sigma_A$, and the C$^*$-subalgebra $C_0(X_A)$ takes the place of $C(\Sigma_A)$. The generalized setting includes $\Sigma_A$, and the dynamics can be extended almost to the whole space, excepting by the which we call empty-stem configurations.

Since then, even with the progress of the Thermodynamic Formalism on countable Markov shifts $\Sigma_A$ in the last 20 years, no study on Thermodynamic Formalism was made for $X_A$. This thesis aims to start the thermodynamic study of $X_A$ and relate it to what was developed so far for $\Sigma_A$. It holds that $\Sigma_A$ is dense in $X_A$, and for locally compact standard Markov shift, these sets coincide. The compatibility of Borel $\sigma$-algebras for both settings, in the sense that the Borel sets of $\Sigma_A$ are included in the Borel $\sigma$-algebra on $X_A$, plays a crucial role in order to allow that conformal measures from the standard theory can be seen as conformal ones in the generalized setting. Also, the topologies on each setting are partially similar. For instance, unlike in $\Sigma_A$ case, the complement of cylinders is not simply a union of cylinders, but they have an extra set that lives in $Y_A = \Sigma_A^c$. Also, $X_A$ is always locally compact, and in many cases, it is compact, even for $\Sigma_A$ not locally compact. On the other hand, the Gurevich pressure, which its exponential is the Ruelle's operator eigenvalue for suitable potentials (recurrent), only considers the periodic points, which cannot live in $Y_A$. However, by Denker-Yuri \cite{DenYu2015} results for Iterated Function Systems (IFS), it is possible to define the pressure at a point $P(\beta F,x)$, $x \in X_A$, and for a class of potentials presented in this thesis (section \ref{sec:pressures}), this pressure coincides with the Gurevich pressure on every point. Moreover, by Denker-Yuri's eigenmeasure existence theorem, we grant for every $\beta > 0$ that we have an eigenmeasure for the associated eigenvalue $e^{P(\beta F,x)}$. We studied phase transition phenomena for conformal measures and eigenmeasures, and by the set $X_A$, we can detect new of the aforementioned measures than the standard setting. Furthermore, we highlight a length-type phase transition phenomena: the existence of a critical value $\beta_c$ s.t. in the region $\beta \leq \beta_c$, we have the existence of a probability eigenmeasure that gives mass zero for $Y_A$, and for $\beta > \beta_c$, the eigenmeasure also exists, but vanishes in $\Sigma_A$. The term \textit{length-type} refers to that we can see the elements of $Y_A$ as finite words; sometimes, they have some multiplicity. For the case of conformal measures, we have similar results but with the standard measure (living on $\Sigma_A$) existing only for $\beta = \beta_c$ instead of $\beta \leq \beta_c$, and the absence of these measures for $\beta < \beta_c$.

The main contributions of this thesis are:
\begin{itemize}
    \item we prove the equivalence among several notions of conformal measure, including the standard and generalized settings, beyond the particular case of shift spaces (generalized or not). The result unifies Sarig's definition of conformal measure, quasi-invariant measures on groupoids, fixed point measures for Ruelle's operator, and others. It holds for locally compact Hausdorff second countable spaces;
    \item it is a general fact that for regular potentials $F$ and standard Markov shift spaces with finite alphabet, we have the existence of KMS$_\beta$ states only for a unique $\beta_c$, which corresponds to the value for which we have pressure zero and the potential $\beta F$ normalizes the Ruelle's operator ($L_{\beta F}(1) = 1$) \cite{BratJorgOstro2004,Exel2004,KerrPinz2002,KesStadStrat2007}. We give some examples which indicate that, at least for the class of renewal type shifts on the generalized setting, the behavior is different: we have an interval of values of $\beta$ for which we have the existence of KMS$_\beta$ states. In other words: for low temperatures, we have the existence of KMS$_\beta$-states (fixed points of the Ruelle's transformation). For the generalized renewal shift, this is true for any potential with a positive lower bound;
    \item for the renewal-type class of shift spaces and low temperatures, we have a bijection between the infinite emitters and the extremal KMS$_\beta$ states. In particular, for the pair renewal and prime renewal shifts, we have a phase transition on the number of KMS$_\beta$ states: for high temperatures, we have the absence of these states, and for temperature low enough, we have infintely many KMS$_\beta$ states. For the pair renewal shift, we calculated the critical value explicitly for the phase transition, namely $\beta_c = \log(1+\sqrt{2})$, and for this value, we have a unique KMS$_\beta$ state;  
    \item we study the weak convergence of measures on generalized countable Markov shifts. The study completes recent results about weak convergence of measures on the countable setting \cite{IommiVelozo2019}. The control of weak$^*$ convergence of the probability measures allows us to prove that the set of $e^{\beta F}$-conformal measures giving mass zero to $\Sigma_A$ (living on $Y_A$, for $\beta > \beta_c$) converges to a probability measure $\mu_{\beta_c}$ such that $\mu_{\beta_c}(\Sigma_A) = 1$;
    \item we introduced the concept of pressure on generalized renewal shifts by adapting a notion of Denker-Yuri for Iterated Function Systems. With this notion of pressure, we can produce examples where the pressure coincides with the standard Gurevich pressure on countable Markov shifts, and we found a new type of phase transition: the length-type;
    \item  we present an example of potential for the generalized renewal shift such that we have the existence of a unique eigenmeasure for each $\beta > 0$ but, after a critical value $\beta_c$, the measure gives mass $1$ to the set of finite words $Y_A$. For $\beta \geq \beta_c$, the Gurevich pressure is zero and therefore, in this case, the eigenmeasure is a fixed point of the Ruelle's transformation, and then it is associated to the a unique KMS$_\beta$ state. The the set of KMS$_\beta$ states is described for every $\beta >0$ since for $\beta < \beta_c$ we have the absence of KMS$_\beta$ states.
\end{itemize}

%Due to its rich mathematical setting, Thermodynamic Formalism was expanded to many different directions, differentiable dynamical systems \cite{Bowen1975,Pollicott1999}, symbolic dynamics and  

The first three chapters contain the general mathematical background and, they were written to make this thesis friendly to read for both Dynamical Systems and C$^*$-algebras communities. We briefly describe their contents as follows.

In chapter \ref{ch:Markov_shift_space}, we present the Markov shift spaces, which we denote by $\Sigma_A$ for a transition matrix $A$, and some of its properties, the different regularities for potentials in this theory and the Gurevich pressure, the notions of conformal measures in both senses of Denker-Urba\'nski and Sarig, the Ruelle's operator, and the eigenmeasures associated with it. Moreover, we also present essential notions and results from the Thermodynamic Formalism on Markov shifts, such as the equivalence between the aforementioned measures (Corollary \ref{cor:equivalences_conformality_classical}), the notion of recurrence for potentials, the generalized Ruelle-Perron-Frobenius Theorem, which is a result of the existence of eigenmeasures based on the recurrence of the potential; the Discriminant Theorem, which is a powerful result to decide the recurrence of the potential, and a result on the linearity of the pressure on the inverse of the temperature, after a critical value.

Chapter \ref{ch:C_star_algebras} contains the basics of C$^*$-algebras and a significant result for this thesis: the Gelfand's theorem for C$^*$-algebras. Also, in this chapter, we have an entire section dedicated to the construction of universal C$^*$-algebras and its properties, which are essential to define both Cuntz-Krieger and Exel-Laca algebras properly. Furthermore, we briefly introduce the notion of KMS-states, which are connected to conformal measures.

Chapter \ref{ch:Groupoid_algebras} presents topological groupoids, and its main objective is to construct the definition of a C$^*$-algebra of a groupoid and present some of its properties. One of the main topological features for groupoids we studied is the etalicity of a topological groupoid. We briefly present the notion of quasi-invariant measure, which connects the conformal measures to the KMS states via 1-cocycles associated to continuous potentials.

Now, we describe the chapters \ref{ch:CK_EL} and \ref{ch:TF_on_Generalized_Countable_Markov_shifts}, which are the main subject of this thesis, which comprehends the paper \cite{BEFR2018}.

In chapter \ref{ch:CK_EL}, we introduce the Cuntz-Krieger algebras and their generalizations, the Exel-Laca algebras, both denoted by $\mathcal{O}_A$. We show some of their properties. We compare each other, for instance, that the Exel-Laca algebras are, in fact, the universal algebras which generalize the Cuntz-Krieger algebras. Moreover, in the crucial part of this chapter, we present the generalized Markov shift spaces, denoted by $X_A$, which is the spectrum of a particular commutative C$^*$-subalgebra of $\mathcal{O}_A$. Therefore it is a locally compact space. This topological aspect is very useful for Topology and Measure Theory results, such as the Urysohn's lemma and the Riesz-Markov-Kakutani representation theorem. We present some important facts about this set, such as the density of the standard Markov shift in this space and the density of its complement $Y_A$, when this one is non-empty, and the compatibility of Borel sets between the standard and generalized shifts. It is also presented a characterization of a particular partition of $Y_A$ made of countable sets, which we call by $Y_A$-families. For the latter, we also explore the cylinder topology in this new setting. This chapter counts with three examples: the renewal shift, the pair renewal shift, and the prime renewal shift. These examples have the property that the standard shift space is not locally compact; however, the generalized one is compact. Moreover, the respective $Y_A$ sets in these cases are countable, in particular, with a finite number of infinite emitters for the renewal and pair renewal shifts, an infinite number of it for the prime renewal shift. We also present an example of non locally compact $\Sigma_A$ such that $X_A$ is not compact and such that we have uncountable elements in $Y_A$. At the end of the chapter, we have a section dedicated to present the Renault-Deaconu groupoid, a groupoid constructed from a Singly Generated Dynamical System (SGDS), and its essential connection between the Exel-Laca algebras and the Renault-Deaconu groupoid C$^*$-algebra of the SGDS $(X_A,\sigma)$, where $\sigma$ is the shift map: these C$^*$-algebras are isomorphic.

Chapter \ref{ch:TF_on_Generalized_Countable_Markov_shifts} comprehends the Thermodynamic Formalism on $X_A$. First, we define and discuss the weak$^*$ convergence of measures and state theorem \ref{thm:convergence_measures_Bogachev}, a well-known result of Measure Theory on weak$^*$ convergence with hypotheses on the basic sets which fits with our topology. Then, we introduce the notions of conformal measure in both senses of Denker-Urba\'nski and Sarig and the generalized version of the Ruelle operator, which we call Ruelle transformation, and jointly with the notion of quasi-invariant measure, we obtained an equivalence result among these notions (theorem \ref{thm:equivalences_conformal_measures_generalized_Markov_shift}), similar to Corollary \ref{cor:equivalences_conformality_classical}. Moreover, we proved that this theory is compatible with the standard one, in the sense that, in Sarig's sense, every conformal measure, when restricted to the Borel $\sigma$-algebra on $\Sigma_A$, is a conformal measure on $\Sigma_A$, and every conformal measure in the standard setting, when seen as a measure which vanishes out of $\Sigma_A$, is a conformal measure in the generalized setting. Furthermore, we proved that the extremal conformal (Denker-Urba\'nski sense) probability measures on $X_A$ that vanishes in $\Sigma_A$ are precisely those that live on a unique $Y_A$-family associated to an infinite emitter. Each one of these families may have at most one conformal probability living on them. We studied the existence of these new measures on  renewal, pair renewal, and prime renewal shift spaces and, in all these cases, we discovered extremal conformal measures beyond of where the standard formalism reaches. For instance, on the renewal shift case, let $U = \Dom \sigma \subsetneq X_A$ and a given potential $F$, we have the following results (Theorem \ref{theorem.potentialwithonecoordinate} and Corollary \ref{cor:phase_transition_conformal_potential_1}, respectively).
\begin{theorem*} For the generalized renewal shift, consider a potential $F:X_A\setminus\{\xi^0\} \to \mathbb{R}$ and $\beta>0$, we have the following:
\begin{itemize}
    \item[$(i)$] If $\inf F>0$, for $\beta>\frac{\log 2}{\inf F}$, there exists a unique $e^{\beta F}$-conformal probability measure $\mu_\beta$ that vanishes in $\Sigma_A$.
    \item[$(ii)$] If $0\leq \sup F < + \infty$ and $\beta\leq \frac{\log 2}{\sup F}$, there are no $e^{\beta F}$-conformal probability measures that vanish in $\Sigma_A$.
\end{itemize}
\end{theorem*}
\begin{corollary*}
For the generalized renewal shift, let $F\equiv1$. Then, for the constant $\beta_c=\log 2$, the result follows:
\begin{itemize}
    \item[$(i)$] For $\beta>\beta_c$ we have a unique $e^\beta$-conformal probability measure that vanishes on $\Sigma_A$.
    \item[$(ii)$] For $\beta\leq\beta_c$ there is no $e^\beta$- conformal probability measure that vanishes on $\Sigma_A$.
\end{itemize}
\end{corollary*}
In the corollary above, for $\beta_c = \log 2$, there is a unique conformal measure, which lives on $\Sigma_A$, and no other conformal measure living on $\Sigma_A$ for any other $\beta$. A $e^{\beta F}$-measures $\mu$ define a KMS$_\beta$-state $\varphi_\mu$ of the Renault-Deaconu full groupoid C$^*$-algebra $C^*(\mathcal{G}(X_A,\sigma)$ for a $1$-cocycle dynamics on the potential by setting
\begin{equation*}
    \varphi_\mu(f) = \int_X f(x,0,x) d\mu(x), \quad f \in C_c(\mathcal{G}(X_A,\sigma)).
\end{equation*}
For the pair renewal and prime renewal shift cases, we obtained similar results. After the aforementioned studies, we searched for phase transition phenomena for the eigenmeasure probabilities on the renewal shift. We discovered what we call length-type phase transition, which we explain here by one of the examples of this thesis: it consists in the change of space that the eigenmeasure lives after a critical value $\beta_c$; by decreasing the temperature, the measure once were living on $\Sigma_A$ passes to live on $Y_A$. We characterize all the probability eigenmeasures associated with the eigenvalue being the exponential of the Gurevich pressure for each $\beta$. In order to do this, we see $(X_A,\sigma)$ as an Interated Function System (IFS), and by a result of Denker and Yuri \cite{DenYu2015}, we grant the existence of eigenmeasures for the eigenvalue being the exponential of the pressure at a point, a different concept of pressure. For the class of potentials on the renewal shift that are bounded above and such that they can be written in the form
\begin{equation*}
    F(x) = g(x_0) - g(x_0 +1),
\end{equation*}
where $g$ is a continuous function and $x_0$ the first coordinate of the stem of the configuration, for every $\beta >0$ and every point $\xi \in X_A$, both Gurevich pressure and pressure at the point on $\beta F$ coincide. Then the eigenmeasures of the Denker-Yuri theorem are eigenmeasures associated to the aforementioned eingevalue on the Gurevich pressure. We obtain the following theorem:
\begin{theorem*} Let $A$ be the renewal shift transition matrix and $X_A$ its generalized Markov shift space. Consider the potential $F:U \to \mathbb{R}$ given by $$F(x)=\log(x_0)-\log(x_0+1).$$. Then, for every $\beta > 0$, there exists a unique eigenmeasure associated to the eigenvalue $\lambda_\beta = e^{P_G(\beta F)}$. Moreover, there is critical value $\beta_c$, which is the (real) solution for $\zeta(\beta_c) = 2$ such that
\begin{itemize}
    \item[$(i)$] if $\beta >\beta_c$, then the eigenmeasure lives on $Y_A$;
    \item[$(ii)$] if $\beta \leq \beta_c$, then the eigenmeasure lives on $\Sigma_A$.
\end{itemize}
\end{theorem*}
The theorem above shows the length-type phase transition and all the eigenmeasure are completely described.

\chapter{Thermodynamic Formalism on Countable Markov Shifts}
\label{ch:Markov_shift_space}

In the present chapter we introduce the classical notion of Markov shift space and some of its topological properties. In addition we present the concepts of eigenmeasures and conformal measures. This chapter is based mostly in the lecture notes \cite{Sarig2009}, the survey \cite{Sarig2015} and the PhD thesis \cite{Beltran2019}. Also, we are based on the papers \cite{Daon2013,DenUr1991,Iommi2007,Sarig1999,Sarig2001,Ruelle1976} and the book \cite{Walters2000}.

\section{Markov Shift Spaces}

We consider a countable alphabet $S = \{1,...,n\}$, $n \in \mathbb{N}$, or $S = \mathbb{N}$. The elements of $S$ are called \emph{letters} or \emph{symbols} and any finite sequence in $S$ is a \emph{word}. For any word $w$, we define its \emph{length} $|w|$ as being the number of its elements and we write $w = w_0 \cdots w_{|w|-1}$ or $w=w_0, ..., w_{|w|-1}$. The \emph{empty word} is the unique word on $S$ with length zero and it will be denoted by $\emptyset$. A \emph{transition matrix} $A$ over $S$ is a $\{0,1\}$ matrix with entries indexed by $S \times S$. A word $w$ is said to be \emph{admissible} if either $A(w_0,w_1) = \cdots = A(w_{|w|-2},w_{|w|-1}) = 1$ and $|w| \geq 2$ or $|w| \in {0,1}$.

\begin{definition}[Markov shift space]\label{def:Markov_shift} Given a countable alphabet $S$ and a transition matrix $A$, the \emph{Markov shift space} is the pair $(\Sigma_A,\sigma)$ where
\begin{equation*}
    \Sigma_A := \left\{x \in S^{\mathbb{N}_0}: A(x_j,x_{j+1}) = 1, \text{ for all } j \in \mathbb{N}_0\right\},
\end{equation*}
where $\mathbb{N}_0 = \mathbb{N}\cup\{0\}$, endowed with the topology generated by the metric $d:\Sigma_A \times \Sigma_A \to [0,1]$ defined by
\begin{equation*}
    d(x,y) = 2^{-\inf\{p:x_p \neq y_p\}};
\end{equation*}
and $\sigma: \Sigma_A \to \Sigma_A$ is the \emph{shift map}, given by
\begin{equation*}
    x = x_0 x_1 x_2 \cdots \mapsto \sigma(x) = x_1 x_2 \cdots
\end{equation*}
\end{definition}

\begin{remark} Many authors also call a Markov shift space as defined above as an \emph{one-sided Markov shift space}, since there is a similar construction for the \emph{two-sided Markov shift}, in which case we write 
\begin{equation*}
    \Sigma_A := \left\{x \in S^{\mathbb{Z}}: A(x_j,x_{j+1}) = 1, \text{ for all } j \in \mathbb{Z}\right\}.
\end{equation*}
In the present work we shall restrict ourselves to one-sided Markov shift spaces.
\end{remark}

\begin{remark} For every transition matrix $A$, $\Sigma_A$ is a closed subset of $S^{\mathbb{N}_0}$ and the map $\sigma$ is continuous. %in the product topology. It is straightforward to notice that the topology in $\Sigma_A$ is the subspace topology of $S^{\mathbb{N}_0}$. Moreover, it is straightforward to notice that $\sigma$ is a continuous map.
\end{remark}

There is a natural way to represent a Markov shift by directed graphs: take $S$ as the set of vertices and the set of edges as $\{(i,j) \in S\times S: A(i,j)=1\}$. In other words, $A(i,j) = 1$ if and only if there is an edge from $i$ to $j$. The Markov shift space can be seen as the set of infinite paths in this graph that start in some symbol. We call such graphs \emph{symbolic graphs}. The example below illustrates this construction.

\begin{example} Consider the Markov shift for $S = \{1,2,3,4\}$ and the transition matrix
\begin{equation}
 A = \begin{pmatrix}
        1 & 1 & 1 & 1\\
        1 & 0 & 0 & 1\\
        0 & 1 & 0 & 0\\
        0 & 1 & 1 & 0
     \end{pmatrix}.
\end{equation}
The graph associated to $\Sigma_A$ (or equivalently, to $A$) is the figure below.
\[
\begin{tikzcd}
\circled{1}\arrow[loop left] \arrow[r,bend left] \arrow[d] \arrow[rd]& \circled{2} \arrow[l] \arrow[d]\\
\circled{3}\arrow[ru] & \circled{4} \arrow[u,bend right] \arrow[l]
\end{tikzcd}
\]
\end{example}

In this thesis we will refer to a Markov shift space simply by $\Sigma_A$, and the reader will be warned about any features of the alphabet and the matrix when necessary. 

\begin{definition}[Cylinder sets] Consider a Markov shift space $\Sigma_A$. For any non-empty word $w=w_0 \cdots w_{|w|-1}$. The set $[w]:=\{(x_i)_{i\in \mathbb{N}_0}\in \Sigma_A: x_k=w_k; k = 0,\dots, |w|-1 \}$ is said to be a \emph{cylinder set} of $\Sigma_A$.
\end{definition}

\begin{remark} Every cylinder set is clopen and the family of all cylinder sets forms a topological basis. Indeed, we have
\begin{equation*}
    \Sigma_A = \bigcup_{i \in S}[i],
\end{equation*}
and for every two admissible words $w$ and $v$ it follows that
\begin{equation*}
    [w] \cap [v] = \begin{cases}
                        [w], \quad \text{if $w$ is a sub-word of $v$};\\
                        [v], \quad \text{if $v$ is a sub-word of $w$};\\
                        \emptyset, \quad \text{otherwise}.
                   \end{cases}
\end{equation*}
\end{remark}

Note that if $w$ is not admissible, then $[w] = \emptyset$. In addition, for $i \in S$, if $A$ has all the entries of the $i$-th row being zero, we have $[i] = \emptyset$. On the other hand, if $A$ has zero columns, say the $i$-th column, we have that $\sigma^{-n}([i]) = \emptyset$ and then $\Sigma_A$ has dynamical drawbacks. In order to avoid these problems we assume for the rest of this chapter the following.

\begin{mdframed} \textbf{Standing hypothesis:} for any Markov shift space the transition matrix has no zero rows or columns.
\end{mdframed}

We are basically interested in two special classes of Markov shifts which we define next.

\begin{definition}[Transitivity and Mixing properties] We say that a Markov shift $\Sigma_A$ is
\begin{itemize}
 \item[$(i)$] \emph{transitive} (or \emph{topologically transitive}) if for every pair of symbols $i,j \in S$ there exists an admissible finite word $w$ such that $iwj$ is admissible.
 \item[$(ii)$] \emph{topologically mixing} if for every pair of symbols $i,j \in S$ there exists $N=N_{i,j} \in \mathbb{N}$ such that, for every $n > N$, there is an admissible word $w$, $|w| = n$, such that $iwj$ is admissible.
\end{itemize}
In addition, we also say that $A$ is transitive (or topologically mixing) when $\Sigma_A$ is transitive (or topologically mixing). 
\end{definition}

%\begin{definition}[Transitivity and Mixing properties] We say that a Markov shift $\Sigma_A$ is
%\begin{itemize}
% \item[$(i)$] \emph{transitive} (or \emph{topologically transitive}) if for every pair of non-empty open sets $U,V \subseteq \Sigma_A$ there exists $N = N(U,V) \in \mathbb{N}$ such that
% \begin{equation*}
%    U \cap \sigma^{-N} V \neq \emptyset;
% \end{equation*}
% \item[$(ii)$] \emph{topologically mixing} if for every pair of non-empty open sets $U,V \subseteq \Sigma_A$ there exists $N = N(U,V) \in \mathbb{N}$ such that for every $n \geq N$ we have
% \begin{equation*}
%    U \cap \sigma^{-n} V \neq \emptyset.
% \end{equation*}
%\end{itemize}
%In addition, we also say that $A$ is transitive when $\Sigma_A$ is transitive. 
%\end{definition}

%\begin{remark} There are some equivalent definitions with respect to the notions of transitivity and topological mixing properties, namely
%\begin{itemize}
% \item $\Sigma_A$ is transitive if and only if for every $i,j \in S$, there exists an admissible word $w$ such that $iwj$ is admissible;
% \item $\Sigma_A$ is topologically mixing if and only if for every $i,j \in S$, there exists $N_{i,j} \in \mathbb{N}$ such that for every $n \geq N_{i,j}$ there exists an admissible word $w$ such that $|w|=n$, $w_0 = i$ and $w_{|w|-1} = j$.
%\end{itemize}
%Consequently, if $\Sigma_A$ is topologically mixing, then it is transitive as well. Moreover, if $\Sigma_A$ is transitive, then necessarily has no zero rows nor any zero columns.
%\end{remark}

Most of the studied Markov shifts are transitive and we depict some of them next.

\begin{example}[Full shift]\label{exa:full_shift} The full shift consists of $\Sigma_A$ without transition restrictions: $A(i,j) = 1$ for every $i,j \in S$, that is $\Sigma_A = S^{\mathbb{N}_0}$. Its symbolic graph is the complete directed graph on $S$. This shift space has the property that every word is admissible, and hence any word can be inserted between any two symbols. Therefore it is clear that the full shift is topologically mixing. 
\end{example}

\begin{example}[Renewal shift]\label{exa:renewal_shift} The renewal shift is a Markov shift space for the alphabet $\mathbb{N}$ and the following transition matrix:
\begin{equation*}
 A(i,j) = \begin{cases}
                1, \quad \text{if } j=i+1 \text{ or } i = 1;\\
                0, \quad \text{otherwise}.
          \end{cases}
\end{equation*}
Its symbolic graph is the following.
\[
\begin{tikzcd}
\circled{1}\arrow[loop left]\arrow[r,bend left]\arrow[rr,bend left]\arrow[rrr,bend left]\arrow[rrrr, bend left]&\circled{2}\arrow[l]&\circled{3}\arrow[l]&\circled{4}\arrow[l]&\arrow[l]\cdots
\end{tikzcd}
\]
In addition, the renewal shift is topologically mixing. Indeed, given two symbols $i,j \in \mathbb{N}$, it holds that
\begin{itemize}
 \item if $i =1$, then $i1^kj$ is admissible for every $k \in \mathbb{N}_0$, where
 \begin{equation*}
  1^k = \underbrace{111\cdots111}_{k \text{times}};
 \end{equation*}
 \item if $i \neq 1$, then $i,i-1,...,1^kj$ is admissible for every $k \in \mathbb{N}_0$.
\end{itemize}
Consequently, the topological mixing property holds. 
\end{example}

\begin{example} Consider $\Sigma_A$ for $S = \mathbb{N}$ and
\begin{equation*}
 A(i,j) = \begin{cases}
                1, \quad \text{if } $j=i+1$ \text{ or if } i = 1 \text{ and } j \text{ is even};\\
                0, \quad \text{otherwise},
          \end{cases}
\end{equation*}
that is,
\begin{equation*}
    A=\begin{pmatrix}
       0 & 1 & 0 & 1 & 0 & \cdots\\
       1 & 0 & 0 & 0 & 0 & \cdots\\
       0 & 1 & 0 & 0 & 0 & \cdots\\
       0 & 0 & 1 & 0 & 0 & \cdots\\
       \vdots & \vdots & \vdots & \vdots & \vdots &\ddots
      \end{pmatrix}.
\end{equation*}
The symbolic graph of $A$ is the following.
\[
\begin{tikzcd}
\circled{1}\arrow[r,bend left]\arrow[rrr,bend left]\arrow[rrrrr, bend left]&\circled{2}\arrow[l]&\circled{3}\arrow[l]&\circled{4}\arrow[l]&\circled{5}\arrow[l]&\arrow[l]\cdots
\end{tikzcd}
\]
We claim $A$ is transitive. Indeed, for any $i,j \in \mathbb{N}$, we may take the path $i,i-1,...,1,2j,2j-1,...,j$, and therefore the claim is proved. However, $\Sigma_A$ is not topologically mixing. If the word $1w2$ is admissible, then $w$ has even length. %mostrar isso
\end{example}

Now, we present some important topological properties of the Markov shift space.

\begin{proposition} The metric space $(\Sigma_A,d)$ is complete and separable.
\end{proposition}

\begin{proof} \textbf{Completeness:} Let $(x^n)_\mathbb{N}$ be a Cauchy sequence in $\Sigma_A$. Then, for every $k \in \mathbb{N}$ there exists $N = N_k \in \mathbb{N}$ such that for any $n,m> N$ we have
\begin{equation*}
    d(x^n,x^m) = 2^{-\inf\{p\in \mathbb{N}_0: x^n_p\neq x^m_p \}} < 2^{-k},
\end{equation*}
and hence for $x^n_p = x^m_p$ for $n,m>N$ and $p<k$. Then, for every $p_0 \in \mathbb{N}_0$, the sequences $(x_p^n)_{n \in \mathbb{N}}$ are eventually constant for each $p \leq p_0$, that is, the limit $\lim x_p^n$ exists for each $p \leq p_0$. Take $x\in S^{\mathbb{N}_0}$ such that $x_p = \lim x^n_p$. It is straightforward that $x \in \Sigma_A$. We claim that $\lim x^n = x$. Indeed, given $k \in \mathbb{N}$ there exists $N = N_k \in \mathbb{N}$ such that $x_p^n = x_p$ for every $p < k$ and $n > N$, hence 
\begin{equation*}
    d(x^n,x) = 2^{-\inf\{p\in \mathbb{N}_0: x^n_p\neq x_p \}} < 2^{-k},
\end{equation*}
and therefore the metric space $(\Sigma_A,d)$ is complete.

\textbf{Separability:} the family of all cylinder sets forms a countable basis for the topology, therefore $\Sigma_A$ is separable.
\end{proof}

\begin{remark} The proposition above holds independently of any hypotheses on the transition matrix. 
\end{remark}

\begin{proposition}\label{prop:compactness_Markov} Consider a countable alphabet $S$ and any transition matrix $A$ over $S$. Then $\Sigma_A$ corresponding to the alphabet $S$ and the matrix $A$ is compact if and only if $S$ is finite.
\end{proposition}
\begin{proof} The discrete topology on $S$ is compact if and only if $S$ is finite. Also, the product topology on $S^{\mathbb{N}_0}$ is compact if and only if the topology on each coordinate space is compact. Since $\Sigma_A$ is a closed subset of $S^{\mathbb{N}_0}$, we conclude that if $S$ is finite then $\Sigma_A$ is compact. For the inverse implication we must observe that, since $A$ has not zero rows, the coordinate projection $\pi_0:\Sigma_A \to S$, given by
\begin{equation*}
\pi_0(x) = x_0,
\end{equation*}
satisfies
\begin{equation*}
\pi_0^{-1}(\{i\}) \neq \emptyset, \quad \text{for all } i \in S,
\end{equation*}
and therefore
\begin{equation*}
\pi_0(\Sigma_A) = S.
\end{equation*}
If $S$ is countably infinite, then $S$ is not compact. Suppose that $\Sigma_A$ is compact. by the continuity of the projection maps, we would have that $\pi_0(\Sigma_A) = S$ is compact, a contradiction.
\end{proof}

The last proposition describes precisely what are the compact Markov shifts, namely, those related to a finite alphabet. Although these spaces are not compact for the infinitely countable alphabet case, they might be locally compact. Proposition \ref{prop:local_compactness_Markov} will show that the Markov shift spaces which are locally compact are precisely those whose transition matrix is \emph{row-finite}, that is, for every $i \in S$ we have that the set $\{j \in S: A(i,j) =1\}$ is finite. In order to prove the aformentioned proposition, we need the following definition. Given a transition matrix $A: S \times S \to \{0,1\}$, $k \in \mathbb{N}_0$ and $i \in S$, let
\begin{align*}
    A^k(i)&:= \begin{cases}
                \{i\}, \quad \text{if } k=0;\\
                \{j \in \mathbf{N}:A(i,j) =1\} , \quad \text{if } k=1.
              \end{cases}
\end{align*}
If $k \geq 2$ we shall define $A^k(i)$ as follows:
\begin{align*}
    p \in A^k(i) &\iff \exists p_1,...,p_{k-1} \in \mathbf{N}: A(i,p_1) \left(\prod_{j = 1}^{k-2}A(p_j,p_{j+1})\right)A(p_{k-1},p) = 1, \quad\text{if} \quad k>2,\\
    p \in A^2(i) &\iff \exists p_1 \in \mathbf{N}: A(i,p_1)A(p_1,p)=1.
\end{align*}

\begin{proposition}\label{prop:local_compactness_Markov} Consider a Markov shift space $\Sigma_A$ with $A$ transitive. The following are equivalent:
\begin{itemize}
    \item[$(i)$] $A$ is row-finite, that is, for each line of the transition matrix there is a finite number of $1$'s;
    \item[$(ii)$] the cylinder sets of $\Sigma_A$ are compact;
    \item[$(iii)$] $\Sigma_A$ is locally compact.
\end{itemize}
\end{proposition}
\begin{proof} We prove the following: $(i)$ implies $(ii)$, $(ii)$ implies $(iii)$, and $(iii)$ implies $(i)$.

\textbf{$(i)$ implies $(ii)$:} it is straightforward to notice that
\begin{equation}\label{eq:cylinder_contained_prod_A_k}
    [i] \subseteq \{i\} \times A(i) \times A^2(i) \times \cdots = \prod_{j=0}^\infty A^{j}(i)
\end{equation}
for every $i \in \mathbb{N}$. Since $A$ is row-finite, the sets $A^k(i)$ are finite for every $k \in \mathbb{N}$ and therefore they are compact sets in the discrete topology of $\mathbb{N}$. By Tychonov's Theorem, the RHS of \eqref{eq:cylinder_contained_prod_A_k} is compact, and since $[i]$ is a closed subset of it, we conclude that $[i]$ is compact. Moreover, for any cylinder set $[w]$, $w=w_o \cdots w_{n-1}$, $n \in \mathbb{N}$, we have that $[w]$ is a closed subset of the compact set $[w_0]$, and therefore $[w]$ is compact due to the Hausdorff property of $\Sigma_A$.

%For the opposite, suppose that the cylinders are compact and that $A$ is not row-finite. Then there exists $i \in \mathbb{N}$ such that $|A(i)| = \infty$. The family $\mathscr{U} = \{[ij]:j \in A(i)\}$, is an infinite countable open cover of $[i]$ composed such that
%\begin{equation*}
%    [i] = \bigsqcup_{j \in A(i)} [ij].
%\end{equation*}
%Since every $[ij]$ in the union above is non-empty, the open cover $\mathscr{U}$ does not have a finite open subcover and therefore $[i]$ is not compact, a contradiction.

\textbf{$(ii)$ implies $(iii)$:} for any $x \in \Sigma_A$ we have that $x \in [x_0]$, then $[x_0]$ is a compact neighborhood of $x$ and therefore $\Sigma_A$ is locally compact.

\textbf{$(iii)$ implies $(i)$:} suppose $A$ is not row-finite. Then there exists $i \in S$ such that $|A(i)| = \infty$. By local compactness and Hausdorff property, for every $x \in \Sigma_A$ there exists an open neighborhood of $x$ such that $\overline{U}$ is compact. We may write $U$ as a union of the basic sets
\begin{equation*}
    U = \bigcup_{w \in L} [\underline{w}],
\end{equation*}
where $L$ is a family of admissible words. Hence, there exists an admissible word $v\in L$ such that $x \in [v] \subseteq U \subseteq \overline{U}$, where we conclude that $[v]$ is compact. The transitivity implies that there exists a finite admissible word $u$ such that the cylinder $[vui] \subseteq [v]$ is non-empty, and hence $[vui]$ is a closed non-compact subset of $[v]$ because its open cover $\{[vuij]:j \in A(i)\}$ does not admit a finite subcover. By the Hausdorff property we conlude $[v]$ is not compact because it has a non-compact closed subset, a contradiction.
\end{proof}

From now on, given a topological space $X$, $\mathcal{B}_X$ will always denote the Borel $\sigma$-algebra on $X$.
%\begin{remark} It is straightforward that the matrix is row-finite if and only if
%\begin{equation*}
%    \sum_{j \in \mathbb{N}}A(i,j) < \infty 
%\end{equation*}
%for every $i \in \mathbb{N}$.
%\end{remark}

\section{Potentials and Gurevich pressure}

We call a measurable function $f:\Sigma_A\to\mathbb{R}$ with respect to the Borel $\sigma$-algebra $\mathcal{B}_{\Sigma_A}$ a potential. It defines a model on a classical Markov shift space. In order to maintain the relation between dynamics and C$^*$-algebras well-defined, potential functions will be, henceforth, supposed not only measurable but also continuous.

Potentials can be classified by their regularity and here we present some of the main classes. And for our purposes, we study them under following assumption.  

\begin{mdframed} \textbf{Standing hypothesis:} until the end of this chapter, every Markov shift space considered is transitive.
\end{mdframed}

We introduce the notions of Birkhoff sum and variations of a potential. These concepts will be used to define some of the potential classes. In particular, the Birkhoff sum will be used to contruct some of the conformal measures for generalized Markov shift spaces which are not detected in the standard setting presented in this chapter. In this section, the alphabet considered is $\mathbb{N}$.

\begin{definition}[Birkhoff sum] Given a continuous potential $f:\Sigma_A \to \mathbb{R}$ and $n \in \mathbb{N}$, the \emph{$n$-th Birkhoff sum} of $f$ is the function
\begin{equation*}
    f_n(x):= \sum_{i=0}^{n-1} f\circ \sigma^i(x).
\end{equation*}
\end{definition}

\begin{definition}[Variation] Given a continuous potential $f:\Sigma_A \to \mathbb{R}$ and $n \in \mathbb{N}$, the \emph{$n$-th variation} of $f$ is the quantity
\begin{equation*}
    \Var_n f := \sup\{|f(x)-f(y)|: x_i = y_i, i \in \{0,...,n-1\}\}.
\end{equation*}
\end{definition}

\begin{definition}\label{def:regularity_potentials} Given a potential $f: \Sigma_A \to \mathbb{R}$, we say that $f$
\begin{itemize}
    \item is \emph{summable (or exp-summable)} if 
    \begin{equation}\label{eq:summable_potential}
        \sum_{i \in \mathbb{N}} e^{\sup f\vert_{[i]}}<\infty;
    \end{equation}
    \item is \emph{locally H\"older}\footnote{In the literature this property is also called \emph{weakly H\"{o}lder}.} if there exists a constant $H_f>0$ and $r \in (0,1)$ such that for all $k \geq 2$ natural number, we have
    \begin{equation*}
        \Var_k f \leq H_f r^k;
    \end{equation*}
    \item \emph{has summable variations} if 
    \begin{equation*}
        \sum_{k \in \mathbb{N}} \Var_k f<\infty;
    \end{equation*}
    \item \emph{satisfies Walters' condition \cite{Walters1978}} if
    \begin{equation*}
        \sup_{n \in \mathbb{N}} \Var_{n+k}f_n < \infty \quad \text{for each }k \in \mathbb{N} \quad \text{and} \quad \lim_k \sup_{n \in \mathbb{N}} \Var_{n+k}f_n = 0.
    \end{equation*}
\end{itemize}
\end{definition}
 
Note that summable potentials cannot be bounded below, for the series \eqref{eq:summable_potential} would not converge otherwise. Furthermore, any locally H\"older potential has summable variations and every potential having summable variations satisfies the Walters' condition. %provar isso e botar exemplos

The next lemma presents sufficient conditions to bound Birkhoff sums of potentials satisfying Walters' condition over cylinders.

%Elmer
\begin{lemma}\label{lemma:Walters_var_1_finite_implies_bounded_diference_on_all_cilynders} Let $f:\Sigma_A \to \mathbb{R}$ be a potential satisfying Walters' condition with $\Var_1 f < \infty$. There exists $M>0$ s.t. for every cylinder set $[w]$ with $|w|>0$, we have
\begin{equation*}
    |f_{|w|}(x) - f_{|w|}(y)| \leq M,
\end{equation*}
for every $x,y \in [w]$.
\end{lemma}

\begin{proof} Let $C:= \sup_{n\geq 1} \Var_{n+1} f_n + \Var_1 f$. Note that $C$ is finite by hypotheses. Consider a cylinder $[w]$ s.t. $|w| > 0$. For any $x,y \in [w]$, we have two cases:
\begin{itemize}
    \item if $|w| = 1$, then
    \begin{equation*}
        |f(x) - f(y)| \leq \sup_{x,y \in [w]}\left|f(y) - f(x)\right| \leq \Var_{1}f \leq C;
    \end{equation*}
    \item if $|w| >1$, then
    \begin{align*}
        |f_{|w|}(y) - f_{|w|}(x)| &\leq |f_{|w|-1}(y) - f_{|w|-1}(x)| + |f(\sigma^{|w|-1}(x)) - f(\sigma^{|w|-1}(y))| \\ 
        &\leq \Var_{|w|} f_{|w|-1} + \Var_{1}f  \leq C.
    \end{align*}
\end{itemize}
%On the other hand we have
%\begin{align}\label{eq:inf_difference_Walters}
 %   \left|\inf_{y \in [w]} f_{|w|}(y) - f_{|w|}(x)\right|&=  \left|\sup_{y \in [w]} \left(f_{|w|}(x) - f_{|w|}(y) \right)\right|\leq \sup_{x,y \in [w]}\left|f_{|w|}(y) - f_{|w|}(x)\right| \leq C.
%\end{align}
\end{proof}

\begin{lemma} \label{lemma:Walters_bounded_var_n_plus_1_f_n} Let $f:\Sigma_A \to \mathbb{R}$ be a potential satisfying Walters' condition. Then for every $n \in \mathbb{N}$ we have $\sup_{[w]} |f_n| < \infty$, where $w$ is an admissible word of length $n+1$. 
\end{lemma}

\begin{proof} Take $y = wy' \in [w]$. Then
\begin{align*}
    \sup_{[w]} f_n &= \sup_{x \in [w]} |f_n(x) - f_n(y) + f_n(y)| \leq \sup_{x \in [w]} |f_n(x) - f_n(y)| + |f_n(y)| \\
    &\leq \Var_{n+1} f_n + |f_n(y)| < \infty.
\end{align*}
Since $|f_n(y)|$ is constant, we conclude that $\sup_{[w]} f_n$ is finite. 
\end{proof}

%REMARK FALSO
%\begin{remark}\label{remark:Walters_var_1_finite_implies_bounded_diference_on_all_cilynders} Observe that for any cylinder $[w]$, with $|w|>1$, the hypothesis $\Var_1 f < \infty$ is not needed. In this case we can simply take $C:= \sup_{n\geq 1} \Var_{n+1} f_n$.
%\end{remark}

In order to define the Gurevich pressure, we define partition functions on $\Sigma_A$ as follows. Let $\Sigma_A$ be topologically mixing and let $f:\Sigma_A \to \mathbb{R}$ be a potential. For each $n \in \mathbb{N}$ and $a \in S$, we define 
\begin{equation}
	Z_n(f,[a]) := \sum_{\sigma^n x = x} e^{f_n (x)} \mathbbm{1}_{[a]}(x).
\end{equation}

%Sarig Notas
\begin{lemma}\label{lemma:Z_n_b_bounded_Z_n_a} Suppose a potential $f$ satisfies Walters' condition, then for each $a,b \in S$, there exists constants $C_1,C_2 > 0$ and $k_1,k_2 \in \mathbb{N}_0$ such that
\begin{equation}\label{eq:partition_function_sandwiched}
 C_1 Z_{n-k_1}(f,[a]) \leq Z_n(f,[b]) \leq C_2 Z_{n+k_2}(f,[a])
\end{equation}
for all $n > k_1$.
\end{lemma}

\begin{proof} Fix $a,b \in S$. By transitivity, there exist two admissible words $w$ and $v$ with $|w|>0$ and $|v|>0$, and such that $aub$ and $bva$ admissible. Let $k = |u|+|v|+2$, define the bijective map
\begin{align*}
    \psi:\{x \in [b]: \sigma^n x = x\} &\to \{x \in [a]: \sigma^{n+k} x = x\},\\
    x &\mapsto (aux_0\cdots x_nv)^\infty,
\end{align*}
where $(aux_0\cdots x_nv)^\infty$ is the periodic word consisting of the repetition of the word $aux_0\cdots x_nv$. By the Walters property we get for every $x \in [b]$, $\sigma^n x = x$, that
\begin{align*}
    \left|f_{n+k}(\psi(x)) - f_n(x)\right| &= \left|\sum_{i=0}^{n+k-1}f\circ \sigma^i((aux_0\cdots x_nv)^\infty) - f_n(x)\right| \\
    &= \left|\sum_{i=0}^{|u|} f\circ \sigma^i((aux_0\cdots x_nv)^\infty) + \sum_{i=|u|+1}^{n+k-1}f\circ \sigma^i((aux_0\cdots x_nv)^\infty) - f_n(x)\right|\\
    &= \left| f_{|u|+1}((aux_0\cdots x_nv)^\infty) + \sum_{i=0}^{n+|v|}f\circ \sigma^i \circ \sigma^{|u|+1}((aux_0\cdots x_nv)^\infty) - f_n(x)\right|\\
    &\leq \sup_{[aub]}\left|f_{|u|+1}\right| + \left| \sum_{i=0}^{n+|v|}f\circ \sigma^i ((x_0\cdots x_nvau)^\infty) - f_n(x)\right|\\
    &\stackrel{|v|>0}{\leq} \sup_{[aub]}\left|f_{|u|+1}\right| + \left| \sum_{i=n}^{n+|v|}f\circ \sigma^i ((x_0\cdots x_nvau)^\infty)\right|\\
    &+\left|\sum_{i=0}^{n-1}f\circ \sigma^i ((x_0\cdots x_nvau)^\infty)- f_n(x)\right|\\
\end{align*}
\begin{align*}
    &\leq \sup_{[aub]}\left|f_{|u|+1}\right| + \left| \sum_{i=0}^{|v|}f\circ \sigma^i \circ \sigma^{n} ((x_0\cdots x_nvau)^\infty)\right|\\
    &+\left|\sum_{i=0}^{n-1}f\circ \sigma^i ((x_0\cdots x_nvau)^\infty)- f_n(x)\right|\\
    &\leq \sup_{[aub]}\left|f_{|u|+1}\right| + \left| \sum_{i=0}^{|v|}f\circ \sigma^i ((bvaux_0\cdots x_{n-1})^\infty)\right| + \sup_n[\Var_{n+1}f_n]\\
    &\leq \sup_{[aub]}\left|f_{|u|+1}\right| + \sup_{[bva]}\left|f_{|v|+1}\right| + \sup_n[\Var_{n+1}f_n].\\
\end{align*}
Define
\begin{equation*}
    \log M:= \sup_{[aub]}\left|f_{|u|+1}\right| + \sup_{[bva]}\left|f_{|v|+1}\right| + \sup_n[\Var_{n+1}f_n],
\end{equation*}
and notice that $\log M < \infty$. Indeed, $\sup_n[\Var_{n+1}f_n] < \infty$ due to the Walters' condition. Also, $\sup_{[aub]}\left|f_{|u|+1}\right|< \infty$ and $\sup_{[bva]}\left|f_{|v|+1}\right| < \infty$ hold because of Lemma \ref{lemma:Walters_bounded_var_n_plus_1_f_n}. It is straightforward to observe that $M = M(a,b,v,u)$ does not depend on $n$. In particular we have
\begin{equation*}
    f_{n+k}(\psi(x)) \geq f_n(x) - \log M,
\end{equation*}
and since $\psi$ is one-to-one we get
\begin{align*}
    Z_n(f,[b]) &= M \sum_{\sigma^n x = x} e^{f_n(x)-\log M} \mathbbm{1}_{[b]}(x) \leq M \sum_{\sigma^n x = x} e^{f_{n+k}(\psi(x))} \mathbbm{1}_{[a]}(\psi(x)) \\
    &= M \sum_{\sigma^{n+k} x = x} e^{f_{n+k}(x)} \mathbbm{1}_{[a]}(x) = M Z_{n+k}(f,[a]),
\end{align*}
that is
\begin{align*}
    Z_n(f,[b]) \leq  M Z_{n+k}(f,[a]),
\end{align*}
for every $n \in \mathbb{N}$. Conversely there exist $M'>0$ and $k' \in \mathbb{N}_0$ such that
\begin{align*}
    Z_n(f,[a]) \leq  M' Z_{n+k'}(f,[b]),
\end{align*}
that is,
\begin{align*}
    (M')^{-1}Z_{m-k'}(f,[a]) \leq Z_{m}(f,[b]),
\end{align*}
for all $m > k'$. By taking $C_1 = (M')^{-1}$, $C_2 = M$, $k_1 = k'$ and $k_2 = k$ we get
\begin{equation*}
  C_1 Z_{n-k_1}(f,[a]) \leq Z_n(f,[b]) \leq C_2 Z_{n+k_2}(f,[a])
\end{equation*}
for every $n > k_1$.
\end{proof}

\begin{proposition} Let $\Sigma_A$ be topologically mixing and $f:\Sigma_A \to \mathbb{R}$ be a potential satisfying the Walters' condition. For every $a \in S$, the limit
\begin{equation}
    \lim_n \frac{1}{n}\log Z_n(f,[a])
\end{equation}
exists in the extended real numbers set and it does not depend on $a$. Moreover, such limit is never $-\infty$. 
\end{proposition}

\begin{proof} Let $a \in S$. Since $\Sigma_A$ is topologically mixing, there exists $n_0 \in \mathbb{N}$ such that for every $n \geq n_0$ there exists an admissible word $w$ such that $|w|= n+1$ and $w_0 = w_{|w|-1} = a$, and then $Z_n(f,[a]) > 0$ for every $n\geq n_0$. Hence, the sequence $(\zeta_n)_{n\geq n_0}$, where $\zeta_n := \log Z_n(f,[a])$ is well-defined. We prove that such sequence is almost super-additive, that is, there exists some $c \in \mathbb{R}$ satisfying
\begin{equation}\label{eq:sequence_almost_super_additivity}
    \zeta_{n+m} \geq \zeta_n + \zeta_m - c,
\end{equation}
for every $n,m \geq n_0$. In fact,
\begin{align*}
    Z_n(f,[a])Z_m(f,[a]) &= \sum_{\substack{x \in [a]: \\ \sigma^n x = x}} \sum_{\substack{y \in [a]: \\ \sigma^m y = y}} e^{f_n (x)+f_m (y)}. 
\end{align*}
Now, given $x \in [a]$ $n$-periodic word, that is, $x$ satisfies $\sigma^n x = x$, and $y \in [a]$ $m$-periodic word, we may write $x = uuu\cdots := u^\infty$ and $y = vvv\cdots := v^\infty$, where $u = u(x)$ and $v = v(y)$ are finite admissible words satisfying $|u| = n$, $|v| = m$ and $u_0 = v_0 = a$. So there exists the $n+m$-periodic point $z^{x,y} \in [a]$ given by $z = uvuv \cdots := (uv)^\infty$. Then,
\begin{align*}
    Z_n(f,[a])Z_m(f,[a]) &= \sum_{\substack{x \in [a]: \\ \sigma^n x = x}} \sum_{\substack{y \in [a]: \\ \sigma^m y = y}} e^{f_n (x)+ f_m (y)-f_{n+m} (z^{x,y})+f_{n+m}(z^{x,y})}\\
    &\leq \sum_{\substack{x \in [a]: \\ \sigma^n x = x}} \sum_{\substack{y \in [a]: \\ \sigma^m y = y}} e^{|f_n (x)+ f_m (y)-f_{n+m} (z^{x,y})|}e^{f_{n+m}(z^{x,y})}. 
\end{align*}
We observe that
\begin{align*}
    |f_n (x)+ f_m (y)-f_{n+m} (z^{x,y})| &= \left|f_n (u^\infty)+ f_m (v^\infty)-  f_{n+m} ((uv)^\infty)\right| \\
                                         &= \left|f_n (u^\infty)-f_n ((uv)^\infty) + f_m (v^\infty)-  f_m ((vu)^\infty)\right|\\
                                         &\leq  \left|f_n (u^\infty)-f_n ((uv)^\infty) \right| + \left|f_m (v^\infty)-  f_m ((vu)^\infty)\right|\\
                                         &\leq \Var_{n+1} f_n + \Var_{m+1} f_m \leq c,
\end{align*}
where $c = 2 \sup_{n \in \mathbb{N}}\Var_{n+1} f_n < \infty$, due to the Walters' condition. So,
\begin{align*}
    Z_n(f,[a])Z_m(f,[a]) &\leq e^c \sum_{\substack{x \in [a]: \\ \sigma^n x = x}} \sum_{\substack{y \in [a]: \\ \sigma^m y = y}} e^{f_{n+m}(z^{x,y})} \leq e^c \sum_{\substack{z \in [a]: z = uv, \\ \sigma^n u^\infty = u^\infty\\ \sigma^m v^\infty = v^\infty}} e^{f_{n+m}(z)}\\
    &\leq e^c \sum_{\substack{z \in [a]: \\ \sigma^{n+m} z = z}} e^{f_{n+m}(z)} = e^c Z_{n+m}(f,[a]), 
\end{align*}
that is
\begin{align*}
    Z_n(f,[a])Z_m(f,[a]) &\leq e^c Z_{n+m}(f,[a]). 
\end{align*}
By applying $\log$ in the inequality above we obtain \eqref{eq:sequence_almost_super_additivity}. There are two possibilities: either $\zeta_n = \infty$ for some $m \geq n_0$ or $- \infty < \zeta_n < \infty$ for every $n \geq n_0$. In the first case, the almost supper-additivity implies that $\lim \frac{\zeta_n}{n} = \infty$. Now, suppose the second case and fix $m > n_0$. For every $n > m$ we may write $n = q_n m + r_n$, where $q_n \in \mathbb{N}$ and $r_n \in \{0,...,m-1\}$. By applying the almost supper-additivity $q_n - 1$ times, we have
\begin{align*}
    \frac{\zeta_n}{n} &= \frac{\zeta_{q_n m + r_n}}{n} \geq \frac{q_n \zeta_m + \zeta_{r_n} - q_nc}{q_n m + r_n}\geq \frac{q_n \zeta_m + \zeta_{r_n} - q_nc}{p_n m}
\end{align*}
where 
\begin{equation*}
    p_n = \begin{cases}
            q_n + 1, \quad \text{if } q_n \zeta_m + \zeta_{r_n} - q_nc \geq 0;\\
            q_n - 1, \quad \text{otherwise}.
          \end{cases}
\end{equation*}
Then,
\begin{align*}
    \frac{\zeta_n}{n} &\geq \frac{q_n \zeta_m + \zeta_{r_n} - q_nc}{p_n m} = \left(\frac{q_n}{p_n}\right) \frac{\zeta_m}{m} + \frac{\zeta_{r_n}}{p_n m} - \left(\frac{q_n}{p_n}\right)\frac{c}{m}. 
\end{align*}
By taking the limit inferior on $n$ in the inequality above, we have
\begin{equation*}
    \liminf_n \frac{\zeta_n}{n} \geq \frac{\zeta_m}{m} - \frac{c}{m},
\end{equation*}
and then
\begin{equation*}
    \liminf_n \frac{\zeta_n}{n} \geq \sup_{m\geq n_0}\left(\frac{\zeta_m}{m} - \frac{c}{m}\right) \geq \limsup_m\left(\frac{\zeta_m}{m} - \frac{c}{m}\right) = \limsup_m\frac{\zeta_m}{m}.
\end{equation*}
Then, $\frac{\zeta_n}{n}$ converges. Now we prove that its limit does not depends on the choice of $a$. Indeed, let $b \in S$. By Lemma \ref{lemma:Z_n_b_bounded_Z_n_a}, there exists constants $C_1,C_2 > 0$ and $k_1,k_2 \in \mathbb{N}_0$ such that
\begin{equation*}
    C_1 Z_{n-k_1}(f,[a]) \leq Z_n(f,[b]) \leq C_2 Z_{n+k_2}(f,[a])
\end{equation*}
for all $n > k_1$. By applying $\log$ in the inequalities above we obtain.
\begin{equation*}
    \log C_1 + \log Z_{n-k_1}(f,[a]) \leq \log Z_n(f,[b]) \leq \log C_2 + \log Z_{n+k_2}(f,[a]).
\end{equation*}
The left inequality above gives
\begin{equation*}
    \frac{\log C_1}{n-k_1} + \frac{\log Z_{n-k_1}(f,[a])}{n-k_1} \leq \frac{\log Z_n(f,[b])}{n-k_1},
\end{equation*}
and by the limit on $n$ we obtain
\begin{equation*}
    \lim_n \frac{\log Z_n(f,[a])}{n} \leq \lim_n \frac{\log Z_n(f,[b])}{n},
\end{equation*}
and similarly on the right inequality we obtain
\begin{equation*}
    \lim_n \frac{\log Z_n(f,[b])}{n} \leq \lim_n \frac{\log Z_n(f,[a])}{n},
\end{equation*}
that is, $\lim_n \frac{\log Z_n(f,[b])}{n} = \lim_n \frac{\log Z_n(f,[a])}{n}$. Finally, such limit cannot be $-\infty$ since for every $x \in [a]$ such that $\sigma^{n_0}x = x$ we have that $\sigma^{kn_0}x = x$ and then $Z_{kn_0}(f,[a]) \geq e^{f_{kn_0}(x)} = e^{kf_{n_0}}(x)$,  so $\lim_k \frac{\log Z_{kn_0}(f,[a])}{kn_0} \geq \frac{f_{n_0}(x)}{n_0} > - \infty$.
\end{proof}

One of central notions in Thermodynamic Formalism is the concept of pressure.

\begin{definition}[Gurevich pressure] Let $\Sigma_A$ be topologically mixing and $f:\Sigma_A \to \mathbb{R}$ be a potential satisfying the Walters' condition. The Gurevich pressure of the potential $f$ is the quantity
\begin{equation*}
 P_G(f) := \lim_n \frac{1}{n}\log Z_n(f,[a]),
\end{equation*}
where $a \in S$.
\end{definition}

\begin{remark} The Gurevich pressure was constructed originally in \cite{Gurevich1969} by B. M. Gurevich, for the zero potential, corresponding to the Gurevich entropy. Later, this construction was generalized in \cite{Gurevich1984} for Markovian potentials, that is, potentials depending on the two first coordinates. In the original definition, the construction done above was made for the finite symbolic subgraphs and the Gurevich pressure was defined as the supremum on the finite subgraphs for these pressures. The construction presented is based on Sarig papers \cite{Sarig1999,Sarig2015} and notes \cite{Sarig2009} and it is equivalent, since in this case, the supremum construction done by Gurevich is a theorem. Sarig \cite{Sarig1999,Sarig2009} and Daon \cite{Daon2013} extended the Gurevich pressure for a larger class of potentials.
\end{remark}

\begin{remark}\label{remar:finite_pressure} A natural question which arises when we study the notion of Gurevich pressure is to provide conditions that ensures its finiteness. For the renewal shift endowed with a potential $F$, it is sufficient that $\sup F < \infty$. Indeed, the quantity of $n$-periodic elements in $[1]$ is $2^{n-1}$ and then
\begin{align*}
    Z_n(\beta F,[1]) = \sum_{\substack{x \in [1], \\ \sigma^x = x}} e^{\beta F_n(x)} \leq 2^{n-1} e^{\beta n \sup F}, \quad n \in \mathbb{N}, 
\end{align*}
and hence
\begin{align*}
    P_G(\beta F) &= \lim_n \frac{1}{n}\log Z_n(\beta,[a]) \leq \lim_n \frac{1}{n}\log \left[2^{n-1} e^{\beta n \sup F}\right]\\
    &= \lim_n \frac{1}{n}\left[(n-1) \log 2 + \beta n \sup F\right]\\
    &= \log 2 + \beta \sup F < \infty,
\end{align*}
and therefore $P(\beta F) < \infty$ for every $\beta > 0$. More general results can be found in \cite{Sarig2009,Sarig2015,Beltran2019}.
\end{remark}

\section{Conformal measures and Eigenmeasures}

The notion of conformal measure was introduced by Patterson \cite{Patterson1976}, and it was constructed originally to calculate the Hausdorff dimension of the limit set of a finitely generated Fuchsian group of the second kind. This idea was later adapted to the context of Markov shift spaces by Denker and Urba\'nski \cite{DenUr1991} and they are, under some conditions, eigenmeasures of the Ruelle's operator for the eigenvalue $1$. On the other hand, Sarig also defined a notion of conformal measure \cite{Sarig2009} which is not equal to the one stated by Denker and Urba\'nski; however, it is strictly related to it, and it is related to eigenmeasures as well. In this section we present and relate these notions. In addition, we present the recurrence modes and some existence and uniqueness results about conformal measures.

We start presenting the notion of conformal measure by Denker and Urba\'nski \cite{DenUr1991}.

\begin{definition}[Conformal measure - Denker-Urba\'nski] Let $(X,\mathcal{F})$ be a measurable space, $\sigma: X \to X$ a measurable endomorphism and $D:X \to [0,\infty)$ also measurable. A set $B\subseteq X$ is called special if $B\in \mathcal{F}$ and $\sigma_B:=\sigma\vert_B: B\to \sigma(B)$ is injective. A measure $\mu$ in $X$ is said to be $D$-conformal in the sense of Denker-Urba\'nski if
\begin{equation}\label{eq:conformal_Ur_sets_potential_classical}
    \mu(\sigma(B)) = \int_B D d\mu,
\end{equation}
for every special set $B$.
\end{definition}

\begin{remark} In the definition above, for $X = \Sigma_A$, we have that every Borel set contained in a cylinder is a special set. 
\end{remark}

Before we present the conformal measures constructed by Sarig we introduce some necessary concepts.

\begin{definition}[Non-singularity] Let $(X,\mathcal{F},\mu)$ be a measure space and $T:X\to X$ be a measurable map. We say $T$ is non-singular or $\mu$ is non-singular if $\mu \circ T^{-1} \sim \mu$, i.e.,
\begin{equation}\label{eq:non-singular}
    \mu\left(T^{-1}E\right) = 0 \iff \mu\left(E\right) = 0, \quad E \in \mathcal{F}.
\end{equation}
\end{definition}

For the next definition we observe first that, considering $X = \Sigma_A$ and $T = \sigma$, we have that $\sigma(E\cap [a])$ is a borel set of $\Sigma_A$. Indeed, we claim the restriction
\begin{align*}
    \sigma\vert_{[a]}:[a] &\to \sigma([a]) = \{y \in \Sigma_A: ay \in \Sigma_A\}\\
    x &\mapsto \sigma(x)
\end{align*}
is a homeomorphism. It is straightforward to notice that $\sigma\vert_{[a]}$ is a bijection, and the inverse map is given by
\begin{equation*}
    \sigma\vert_{[a]}^{-1}(y) = ay.
\end{equation*}
Also, since $\sigma$ is continuous, we have that $\sigma\vert_{[a]}$ is continuous as well (subspace topology). On the other hand, for any sequence $(y^n)_\mathbb{N}$ in $\sigma([a])$ such that $y^n \to y \in \Sigma_A$. Note that $y \in \sigma([a])$, because the convergence $y^n \to y$ is equivalent to the coordinate-wise convergence, and in particular there exists $N_0 \in \mathbb{N}$ such that $n \geq N_0$ implies $y_0^n = y_0$ and therefore $A(a,y_0) = A(a,y^n) = 1$, that is, $y \in \sigma([a])$. Now,
\begin{equation*}
    \sigma\vert_{[a]}^{-1}(y^n) = ay^n \to ay = \sigma\vert_{[a]}^{-1}(y),
\end{equation*}
hence $\sigma\vert_{[a]}^{-1}$ is continuous, and therefore $\sigma\vert_{[a]}$ is a homeomorphism and this argument can be extended to any cylinder, which proves that $\sigma$ is a local homeomorphism. Observe that
\begin{equation}\label{eq:sigma_cylinder_is_union_of_cylinders}
    \sigma([a]) = \bigsqcup_{\substack{b \in S,\\ A(a,b) = 1}}[b]
\end{equation}
for every $a \in S$, and hence $\sigma([a])$ is an open subset of $\Sigma_A$. So given $B \in \mathcal{B}_{\Sigma_A}$, we have that $B \cap [a] \in \mathcal{B}_{\Sigma_A} \cap [a] := \{C \cap [a]: C \in \mathcal{B}_{\Sigma_A}\} \subseteq \mathcal{B}_{\Sigma_A}$. Since homeomorphisms perserves Borel sets, we have that $\sigma\vert_{[a]}(B \cap [a]) = \sigma(B \cap [a]) \in \mathcal{B}_{\Sigma_A} \cap \sigma([a]) \subseteq \mathcal{B}_{\Sigma_A}$. This discussion summarizes that the measure defined next is in fact well-defined.

\begin{definition}\label{def:mu_odot_sigma} Suppose $\mu$ is non-singular measure on $\Sigma_A$ with alphabet $S$ and set the measure
\begin{equation}\label{eq:mu_odot_sigma}
    \mu \odot \sigma(E) := \sum_{a \in S}\mu\left(\sigma(E\cap[a])\right), \quad E \in \mathcal{B}_{\Sigma_A}.
\end{equation} 
\end{definition}

\begin{remark} Note that in general $\mu \odot \sigma(E) \neq \mu (\sigma(E))$. For instance, if $E = [ab] \sqcup [cb]$, where $a,b,c \in S$, $A(a,b) = A(c,b) = 1$, $a\neq c$, and $\mu([b])>0$. We have
\begin{equation*}
    \mu \odot \sigma(E) = 2 \mu ( \sigma(E)).
\end{equation*}
This also shows that the $\mu \odot \sigma$ gives the measure of $T(E)$ by considering its multiplicity, in the sense that it considers how much $E$ is spreaded by the action of $\sigma$ over the basic cylinders $[a]$, $a \in S$.
\end{remark}

It follows directly from Definition \ref{def:mu_odot_sigma} that $\mu \odot \sigma \ll \mu$. Indeed, given a Borel set $E$, $\mu(E) = 0$ if and only if $\mu(E\cap[a]) = 0$ for every $a \in S$. It is straightforward the equation
\begin{equation*}
    \mu\odot \sigma(E\cap[a]) = \mu([a]),
\end{equation*}
hence
\begin{equation*}
    \mu \odot \sigma(E) = \sum_{a \in S} \mu \odot \sigma(E\cap [a]) = 0. 
\end{equation*}

Also, it is clear that the non-singularity condition in Definition \ref{def:mu_odot_sigma} does not require the non-singularity of $\mu$. However, this extra condition is added in order to construct a Radon-Nikodym derivative which defines the conformal measure in the sense of Sarig. The next lemma shows how to evaluate integrals for $\mu \odot \sigma$ and also that the non-singularity makes the measures $\mu$ and $\mu \odot \sigma$ equivalent.

\begin{lemma}\label{lemma:properties_mu_odot_sigma} For a Markov shift space $\Sigma_A$ the following holds.
\begin{item}
    \item[$1.$] For all non-negative Borel functions $f: \Sigma_A \to \mathbb{R}$,
    \begin{equation*}
        \int_{\Sigma_A} f d\mu \odot \sigma = \sum_{a \in S} \int_{\sigma([a])} f(ax)d\mu(x).
    \end{equation*}
    \item[$2.$] $\mu \sim \mu \odot \sigma$.
\end{item}
\end{lemma}

\begin{proof} $1.$ For $f = \mathbbm{1}_B$, where $B$ is a Borel set, we have that
\begin{align*}
    \int_{\Sigma_A} \mathbbm{1}_B d\mu \odot \sigma &= \mu \odot \sigma(B) = \sum_{a \in S}\mu(\sigma(B \cap [a])) = \sum_{a \in S}\int_{\sigma(B \cap [a])}d\mu = \sum_{a \in S}\int_{\Sigma_A}\mathbbm{1}_{\sigma(B \cap [a])}d\mu\\
    &= \sum_{a \in S}\int_{\Sigma_A}\mathbbm{1}_{B \cap [a]}\circ \sigma\vert_{[a]}^{-1}d\mu = \sum_{a \in S}\int_{\Sigma_A}\left(\mathbbm{1}_{B}\circ \sigma\vert_{[a]}^{-1}\right)\left(\mathbbm{1}_{[a]}\circ \sigma\vert_{[a]}^{-1}\right)d\mu\\
    &= \sum_{a \in S}\int_{\sigma([a])}\left(\mathbbm{1}_{B}\circ \sigma\vert_{[a]}^{-1}\right)d\mu = \sum_{a \in S}\int_{\sigma([a])}\mathbbm{1}_{B} (ax)d\mu(x).
\end{align*}
By linearity of the integral and the positive sums above it follows that the result also holds for simple functions. Now let $f$ be a positive Borel function. Then, there exists a pointwise increasing sequence of simple functions $(\varphi_n)_{n \in \mathbb{N}}$ such that $\varphi_n \to f$ pointwise. We have that
\begin{align*}
    \int_{\Sigma_A} f d\mu \odot \sigma &= \int_{\Sigma_A} \lim_n \varphi_n d\mu \odot \sigma \stackrel{(\bullet)}{=} \lim_n \int_{\Sigma_A}  \varphi_n d\mu \odot \sigma = \lim_n \sum_{a \in S}\int_{\sigma([a])}\varphi_n (ax)d\mu(x)\\
    &\stackrel{(\bullet)}{=}  \sum_{a \in S}\lim_n\int_{\sigma([a])}\varphi_n (ax)d\mu(x)\stackrel{(\bullet)}{=}  \sum_{a \in S}\lim_n\int_{\sigma([a])}\varphi_n (ax)d\mu(x),
\end{align*}
where in $(\bullet)$ we used the Monotone Convergence Theorem.

$2.$ As we discussed previously, we only need to prove that $\mu \ll \mu \odot \sigma$. Suppose that for $B$ Borel set of $\Sigma_A$ we have $\mu \odot \sigma(B) = 0$. Then $\mu(\sigma(B\cap [a])) = \mu(\sigma\vert_{[a]}(B\cap [a])) = 0$ for every $a \in S$. By the non-singularity property it follows that $\mu(\sigma^{-1}\sigma\vert_{[a]}(B\cap [a])) = 0$. Since $\sigma^{-1}\vert_{[a]}\sigma\vert_{[a]}(B\cap [a]) \subseteq \sigma^{-1}\sigma\vert_{[a]}(B\cap [a])$ and $\sigma$ is a local homeomorphism as we discussed previously, one gets, for all $a \in S$,
\begin{equation*}
    0 \leq \mu(B\cap [a]) = \mu(\sigma^{-1}\vert_{[a]}\sigma\vert_{[a]}(B\cap [a])) \leq \mu(\sigma^{-1}\sigma\vert_{[a]}(B\cap [a])) = 0 \implies \mu(B\cap [a]) = 0, 
\end{equation*}
and therefore $\mu(B) = 0$.
\end{proof}

\begin{definition}[Conformal measure - Sarig] Given a Borel non-singular $\sigma$-finite measure $\mu$ on $\Sigma_A$ and a potential $f:\Sigma_A \to \mathbb{R}$, we say $\mu$ is $(\beta F, \lambda)$-conformal in the sense of Sarig if there exists $\lambda >0$ such that
\begin{equation*}
\dfrac{d\mu\odot\sigma}{d\mu}(x)= \lambda e^{-\beta F(x)}\quad \mu\; a.e \; x\in \Sigma_A.
\end{equation*}
\end{definition}

In order to present the Ruelle's operator, we present the notion of transfer operator, which requires the following lemma.

\begin{lemma}\label{lemma:transfer_operator_well_defined} Let $T$ be a non-singular map on a $\sigma$-finite measure space $(X, \mathcal{F}, \mu)$. Then, the Radon-Nikodym derivative
\begin{equation}\label{eq:radon_nikodym_transfer_operator}
    \frac{d \mu_f \circ T^{-1}}{d \mu},
\end{equation}
where, $f \in L^1(\mu)$ and $d\mu_f := f d\mu$ exists and it belongs to $L^1(\mu)$. 
\end{lemma}

\begin{proof} Let $B \in \mathcal{F}$ such that $\mu(B) = 0$. By the non-singularity of $T$ we have that $\mu(T^{-1}B) = 0$ and then
\begin{equation*}
    0\leq \mu_f\circ T^{-1}(E) = \mu_f(T^{-1}E) = \int_{T^{-1}E} f d\mu \leq \|f\|_{L^1} \mu(T^{-1}E) = 0,
\end{equation*}
and hence $\mu_f\circ T^{-1}(E) = 0$. So $\mu_f \circ T^{-1} \ll \mu$, and since $(X,\mathcal{F},\mu)$ is $\sigma$-finite, the Radon-Nikodym Theorem grants the existence of the derivative \eqref{eq:radon_nikodym_transfer_operator}. On the other hand, denoting the derivative in \eqref{eq:radon_nikodym_transfer_operator} by $h$, we have
\begin{align*}
    \left\|h\right\|_{L^1} &= \int \sgn(h) h d\mu = \int \sgn(h) \frac{d \mu_f \circ T^{-1}}{d \mu} d\mu = \int \sgn(h) d \mu_f \circ T^{-1} \\
    &= \int \sgn(h)(T(x)) d \mu_f (x) = \int \sgn(h)(T(x)) f(x) d \mu (x) \leq \|f\|_{L^1} < \infty, 
\end{align*}
and therefore $\frac{d \mu_f \circ T^{-1}}{d \mu} \in L^1(\mu)$.
\end{proof}

The previous lemma ensures that the transfer operator is well-defined.

\begin{definition}[Transfer operator] Let $T$ be a non-singular map on a $\sigma$-finite measure space $(X, \mathcal{F}, \mu)$. The transfer transfer operator is the map $\widehat{T}:L^1(\mu) \to L^1(\mu)$ given by
\begin{equation*}
    \widehat{T}f:=\frac{d \mu_f \circ T^{-1}}{d \mu},
\end{equation*} 
\end{definition}

The next proposition presents some properties of the transfer operator, and its proof can be found in \cite{Sarig2009}.

\begin{proposition}\label{prop:transfer_operator_properties} Let $T$ be a non-singular map on a $\sigma$-finite measure space $(X, \mathcal{F}, \mu)$. The following are true.
\begin{itemize}
 \item[$(i)$] If $f \in L^1(\mu)$, then $\widehat{T}f$ is the unique function in $L^1(\mu)$ such that
 \begin{equation*}
    \int \psi \widehat{T}f d\mu = \int (\psi \circ T) f d\mu,
 \end{equation*}
for every $\psi \in L^\infty(\mu)$.
 \item[$(ii)$] $\widehat{T}$ is positive, in the sense that for every $f \in L^1(\mu)$ such that $f \geq 0$ $a.e.$, we have $\widehat{T}f \geq 0$ $a.e.$.
 \item[$(iii)$] $\widehat{T}$ is a bounded linear operator on $L^1(\mu)$, satisfying $\|\widehat{T}\| = 1$.
 \item[$(iv)$] $\widehat{T}^* \mu = \mu$, in the sense that
 \begin{equation*}
    \int \widehat{T}f d\mu = \int f d\mu,
 \end{equation*}
 for every $f \in L^1(\mu)$.
\end{itemize}
\end{proposition}

\begin{proposition}\label{prop:transfer_operator_Markov_shifts} Consider the Markov shift $\Sigma_A$ and a non-singular measure $\mu$ on $\mathcal{B}_{\Sigma_A}$. The transfer operator $\widehat{\sigma}$ is given by
\begin{equation*}
    (\widehat{\sigma}f)(x) = \sum_{y \in \sigma^{-1}x} \dfrac{d\mu}{d\mu\odot\sigma}(y) f(y) \quad \mu-a.e.
\end{equation*}
for every $f \in L^1(\mu)$. 
\end{proposition}

\begin{proof} First, note that
\begin{equation}\label{eq:equivalent_sums_Ruelle_operator}
    \sum_{y \in \sigma^{-1}x} \dfrac{d\mu}{d\mu\odot\sigma}(y) f(y) = \sum_{a \in S} \mathbbm{1}_{\sigma([a])}(x)\dfrac{d\mu}{d\mu\odot\sigma}(ax) f(ax)
\end{equation}
for every $x \in \Sigma_A$. Indeed, it is a straightforward consequence from the following chain of equivalences:
\begin{equation*}
    y \in \sigma^{-1} x \iff \sigma y = x \iff y = ax \text{ for some } a \in S \iff x \in \sigma([a])\text{ for some } a \in S.
\end{equation*}
Now, given $\psi \in L^\infty(\mu)$, we have that
\begin{align*}
    \int \psi(x) \sum_{y \in \sigma^{-1}x} \dfrac{d\mu}{d\mu\odot\sigma}(y) f(y) d\mu(x) &= \int \psi(x) \sum_{a \in S} \mathbbm{1}_{\sigma([a])}(x) \dfrac{d\mu}{d\mu\odot\sigma}(ax) f(ax) d\mu(x)\\
    &= \sum_{a \in S}  \int \psi(x) \mathbbm{1}_{\sigma([a])}(x)\dfrac{d\mu}{d\mu\odot\sigma}(ax) f(ax) d\mu(x)\\
    &= \sum_{a \in S}  \int_{\sigma([a])} \psi \circ \sigma(ax) \dfrac{d\mu}{d\mu\odot\sigma}(ax) f(ax) d\mu(x).
\end{align*}
Observe that $H:= (\psi \circ \sigma) \left(\dfrac{d\mu}{d\mu\odot\sigma}\right) f \in L^1(\mu)$. By dividing $H$ into its positive and negative parts, we have
\begin{align*}
    \int \psi(x) \sum_{y \in \sigma^{-1}x} \dfrac{d\mu}{d\mu\odot\sigma}(y) f(y) d\mu(x) &= \sum_{a \in S}  \int_{\sigma([a])} H(ax) d\mu(x)\\
    &= \left(\sum_{a \in S} \int_{\sigma([a])} H_+(ax) d\mu(x)\right) -\left(\sum_{a \in S} \int_{\sigma([a])} H_-(ax) d\mu(x) \right)\\
    &\stackrel{(\bullet)}{=} \left(\int H_+ d\mu\odot \sigma \right) -\left(\int H_- d\mu\odot \sigma \right)\\
    &= \int H d\mu\odot \sigma,
\end{align*}
where in $(\bullet)$ we used Lemma \ref{lemma:properties_mu_odot_sigma}. Then,
\begin{align*}
    \int \psi(x) \sum_{y \in \sigma^{-1}x} \dfrac{d\mu}{d\mu\odot\sigma}(y) f(y) d\mu(x) &= \int (\psi \circ \sigma) f \dfrac{d\mu}{d\mu\odot\sigma}  d\mu\odot \sigma = \int (\psi \circ \sigma) f  d\mu.
\end{align*}
By Proposition \ref{prop:transfer_operator_properties} $(i)$ we conclude that
\begin{equation*}
    (\widehat{\sigma}f)(x) = \sum_{y \in \sigma^{-1}x} \dfrac{d\mu}{d\mu\odot\sigma}(y) f(y) \quad \mu-a.e.
\end{equation*}
for every $f \in L^1(\mu)$. 
\end{proof}

\begin{remark} The Radon-Nikodym derivative $\dfrac{d\mu}{d\mu\odot\sigma}$ is also refered as the Jacobian of $\mu$. 
\end{remark}

It is straightforward to observe that, if $\mu$ is a $(\beta F,\lambda)$-conformal measure, then
\begin{equation*}
    (\widehat{\sigma}f)(x) = \lambda \sum_{y \in \sigma^{-1}x} e^{\beta F(y)} f(y) \quad \mu-a.e.,
\end{equation*}
and since the conformal measures in this sense are non-singular we have that the equality above is well-defined. By removing the term $\lambda^{-1}$, we have the Ruelle's operator, as defined below.

\begin{definition}[Ruelle's operator] Let $f: \Sigma_A \to \mathbb{R}$ be a potential on the Markov shift space $\Sigma_A$ and consider $\nu$ a $(\lambda,\beta F)$-conformal measure. The Ruelle's operator is the function $L_{\beta F}:L^1(\nu) \to L^1(\nu)$, defined by 
\begin{equation}\label{eq:Ruelle_operator}
    \left(L_{\beta F} f\right)(x) := \sum_{y \in \sigma^{-1}x} e^{\beta F(y)} f(y).
\end{equation}
\end{definition}

\begin{remark} \label{remark:transfer_operator_Ruelle_log_jacobian} By Proposition \ref{prop:transfer_operator_Markov_shifts} we have that the transfer operator $\hat{\sigma}$ of a non-singular measure on $\Sigma_A$ is the Ruelle's operator of its $\log$ Jacobian, i.e.,
\begin{equation*}
    \hat{\sigma}(f) = L_{\log \left(\frac{d\mu}{d\mu\odot\sigma}\right)}(f).
\end{equation*}
\end{remark}

\begin{remark} The Ruelle's operator defined as above was constructed for an arbitrary alphabet $S$, in order to make the sum in the RHS of \eqref{eq:Ruelle_operator} be convergent $\nu-a.e$. However, for locally compact Markov shifts, the same sum converges for any $f \in C_c(\Sigma_A)$. In particular, if $\Sigma_A$ is compact, we may define the Ruelle's operator on $C(\Sigma_A)$. See, for instance, the references \cite{Bowen2008,Ruelle1976} for the compact case. For the locally compact case, see \cite{Shwartz2019}. In addition, there is an different approach of the Ruelle's operator without the local compactness on \cite{MaulUr2001}, by defining the Ruelle's operator on the bounded continuous functions and restricting the potentials to the exp-summable case. 
\end{remark}

\begin{definition}[Eigenmeasures of the Ruelle's operator] Given a Borel $\sigma$-finite measure $\mu$ on $\Sigma_A$, a measurable potential $F: \Sigma_A \to \mathbb{R}$ and $\lambda>0$, we say that $\mu$ is an eigenmeasure of $L_F$ with eigenvalue $\lambda$ when
\begin{equation*}
    \int L_F f d\mu = \lambda \int f d\mu
\end{equation*}
for every $f \in L^1(\mu)$.
\end{definition}

The different notions of conformality and eigenmeasures of the Ruelle's operator are connected, as we show next.

\begin{theorem}\label{thm:equivalence_conformality_Sarig_eigen_measure_Ruelle_operator} Let $\Sigma_A$ be a Markov shift, $F: \Sigma_A \to \mathbb{R}$ a measurable potential and $\mu$ a $\sigma$-finite measure. Then, $\mu$ is $(F,\lambda)$-conformal if and only if it is an eigenmeasure of the Ruelle's operator $L_F$ for the eigenvalue $\lambda$.
\end{theorem}

\begin{proof} Let $B \in \mathcal{B}_{\Sigma_A}$. By item $1.$ from lemma \ref{lemma:properties_mu_odot_sigma}, we have that
\begin{align*}
    \int \mathbbm{1}_B e^{F} d \mu \odot \sigma &= \sum_{a \in S} \int_{\sigma([a])} \mathbbm{1}_B(ax) e^{F(ax)} d\mu(x) \stackrel{(\bullet)}{=} \int \sum_{a \in S} \mathbbm{1}_{\sigma([a])}(x) \mathbbm{1}_B(ax) e^{F(ax)} d\mu(x)\\
    &\stackrel{(\bullet \bullet)}{=} \int \sum_{y \in \sigma^{-1}x} \mathbbm{1}_B(y) e^{F(y)} d\mu(x) = \int L_F \mathbbm{1}_B d\mu,
\end{align*}
where the equality $(\bullet)$ is a consequence of the Monotone Convergence Theorem, while in $(\bullet \bullet)$ is proven by similar arguments used to prove \eqref{eq:equivalent_sums_Ruelle_operator}. Hence we have
\begin{align}\label{eq:integral_identity_Ruelle_operator_mu_odot_sigma}
    \int \mathbbm{1}_B e^{F} d \mu \odot \sigma = \int L_F \mathbbm{1}_B d\mu.
\end{align}
If $\mu$ is a $(F,\lambda)$-conformal measure, then by \eqref{eq:integral_identity_Ruelle_operator_mu_odot_sigma} we get
\begin{align*}
    \mu(B) &= \int \mathbbm{1}_B d\mu = \int \mathbbm{1}_B \dfrac{d\mu}{d \mu \odot \sigma} d \mu \odot \sigma = \lambda^{-1}\int \mathbbm{1}_B e^F d \mu \odot \sigma = \int L_F \mathbbm{1}_B d\mu,
\end{align*}
that is,
\begin{align}\label{eq:conformal_Sarig_is_eigenmeasure_characteristic_function}
    \int L_F \mathbbm{1}_B d\mu = \lambda \int \mathbbm{1}_B d\mu. 
\end{align}
By linearity of the integral, the identity \eqref{eq:conformal_Sarig_is_eigenmeasure_characteristic_function} is also valid for simple functions. Now, given $f \in L^1(\mu)$, $f \geq 0$, there exists a crescent sequence $\{g_n\}_{n \in \mathbb{N}}$ of simple functions that converges to pointwise to $f$, that is
\begin{equation*}
    0 \leq g_n(x) \leq g_{n+1}(x) \leq f(x) g_n(x) \to f(x)
\end{equation*}
for every $n \in \mathbb{N}$ and $x \in \Sigma_A$, with $g_n(x) \to f(x)$. By the Monotone Convergence Theorem, we have
\begin{align*}
    \lambda \int f d\mu &= \lambda \int \lim_n g_n d\mu = \lim_n \lambda \int g_n d\mu = \lim_n \int L_F g_n d\mu \\
    &= \int \lim_n L_F g_n d\mu = \int L_F \lim_n g_n d\mu = \int L_F f d\mu,
\end{align*}
where in the last equality above we used that $L_F$ is continuous on $L^1(\mu)$ because $\widehat{\sigma}$ is bounded due to the item $(iii)$ of Proposition \ref{prop:transfer_operator_properties}. For the general case $f \in L^1(\mu)$ is proved simply by applying the previous case for $f_+$ and $f_-$, and by using the linearity of the integral and $L_F$. Conversely, let $\mu$ be an eigenmeasure of $L_F$ for the eigenvalue $\lambda$. Since $\mu$ is a $\sigma$-finite measure, there exists a pairwise disjoint measurable partition $\{E_k\}_{k \in \mathbb{N}}$ of $\Sigma_A$ satisfying $\mu(E_k) < \infty$, for every $k \in \mathbb{N}$. Then,
\begin{align}\label{eq:mu_partition_to_mu_odot_sigma_partition}
    \mu(B\cap E_k) = \int \mathbbm{1}_{B\cap E_k} d\mu &\stackrel{(\dagger)}{=} \lambda^{-1} \int L_F \mathbbm{1}_{B\cap E_k} d\mu \stackrel{(\ddagger)}{=} \lambda^{-1} \int \mathbbm{1}_{B\cap E_k} e^F d \mu \odot \sigma,
\end{align}
where in $(\dagger)$ we used the definition of eigenmeasure\footnote{Observe that the eigenmeasure equation has its validity granted for $L^1(\mu)$ and since it $\mu$ is not necessarily a finite measure, we need to use a partition that witnesses the $\sigma$-finiteness of $\mu$.} and in $(\ddagger)$ we used the equation \eqref{eq:integral_identity_Ruelle_operator_mu_odot_sigma}. Then,
\begin{equation}\label{eq:measure_on_B_equals_to_integral_on_B_derivative_Radon_Nikodym}
    \int_B d\mu = \sum_{k \in \mathbb{N}} \mu(B \cap E_k) \stackrel{(\spadesuit)}{=} \sum_{k \in \mathbb{N}} \int \mathbbm{1}_{B\cap E_k} \lambda^{-1} e^F d \mu \odot \sigma = \int_B \lambda^{-1} e^F d \mu \odot \sigma,
\end{equation}
where in $(\spadesuit)$ we used \eqref{eq:mu_partition_to_mu_odot_sigma_partition}. By similar arguments used to prove that a $(F,\lambda)$-conformal measure is an eigenmeasure, we have that the result above is also valid for simple functions, then for positive functions in $L^1(\mu)$ and finally for any function in $L^1(\mu)$. We conclude that
\begin{equation*}
    \frac{d \mu \odot \sigma}{d\mu}(x) = \lambda e^{-F(x)}, \quad \mu-a.e. x \in \Sigma_A,
\end{equation*}
that is, $\mu$ is a $(F,\lambda)$-conformal measure. Observe that in this case the non-singularity of the eigenmeasure is straightforward because of the validity of the equation \eqref{eq:measure_on_B_equals_to_integral_on_B_derivative_Radon_Nikodym}. Indeed, given $B \in \mathcal{B}_{\Sigma_A}$ such that $\mu(B) = 0$ and $\mu(\sigma^{-1} B)>0$, then $\mu \odot \sigma (\sigma^{-1} B) >0$ and hence \eqref{eq:measure_on_B_equals_to_integral_on_B_derivative_Radon_Nikodym} does not hold. Covnersely, if $\mu(B) > 0$ and $\mu(\sigma^{-1} B)=0$, we necessarily have $\mu \odot \sigma (\sigma^{-1} B) =0$ and again we have a contradiction on the validity of \eqref{eq:measure_on_B_equals_to_integral_on_B_derivative_Radon_Nikodym}.
\end{proof}

\begin{theorem}\label{thm:equivalence_conformality_Sarig_and_DU} Let $\Sigma_A$ be a Markov shift and $F: \Sigma_A \to \mathbb{R}$ a measurable potential. A Borel measure $\mu$ is $e^F$-conformal (Denker-Urba\'nski) if and only if it is non-singular and $(-F,1)$-conformal (Sarig).
\end{theorem}

\begin{proof} Let $\mu$ be a $e^F$-conformal measure. Observe that for every $B \in \mathcal{B}_{\Sigma_A}$ and $a \in S$ we have that $B \cap [a]$ is a special set. Then,
\begin{equation}\label{eq:mu_sigma_B_cap_cylinder_a}
    \mu(\sigma(B \cap [a])) = \int_{B \cap [a]} e^F d \mu.
\end{equation}
By \eqref{eq:mu_sigma_B_cap_cylinder_a} we have
\begin{equation*}
    \mu \odot \sigma(B) = \sum_{a \in S} \mu(\sigma(B \cap [a])) = \sum_{a \in S} \int_{B \cap [a]} e^F d \mu = \int_B e^F d \mu,
\end{equation*}
and then $\mu$ is non-singular and
\begin{equation*}
    \frac{d \mu \odot \sigma}{d \mu} = e^F,
\end{equation*}
hence $\mu$ is also $(-F,1)$-conformal. Conversely, suppose that $\mu$ is non-singular and it is $(-F,1)$-conformal measure and let $B \in \mathcal{B}_{\Sigma_A}$ be a special set. Then, for every $a,b \in \mathbb{N}$, $a \neq b$, we have
\begin{equation*}
    \sigma(E \cap [a]) \cap \sigma(B \cap [b]) = \emptyset.
\end{equation*}
Indeed, w.l.o.g. suppose that $B \cap [a] \neq \emptyset$ and $B \cap [b] \neq \emptyset$. If $y \in \sigma(B \cap [a]) \cap \sigma(B \cap [b])$, then there exist $x^a \in B \cap [a]$ and $x^b \in B \cap [b]$ such that $y = \sigma\vert_B(x^a) = \sigma\vert_B(x^b)$. Since $B$ is special, we have that $\sigma\vert_B$ is injective and therefore $x^a = x^b$, which is a contradiction since $[a] \cap [b] = \emptyset$. Moreover, we have that
\begin{equation}\label{eq:union_sigma_B_cap_cylinders_equals_to_sigma_B}
    \bigsqcup_{a \in S} \sigma(B\cap [a]) = \sigma B.
\end{equation}
Indeed, it is straightforward that $\sigma(B \cap [a]) \subseteq \sigma B$. Now, let $y \in \sigma B$, and hence there exists $x \in B$ such that $\sigma x = y$. Consequently, there exists $b \in S$ such that $x \in B \cap [b]$, and therefore $y \in \sigma(B \cap [b])$, which proves \eqref{eq:union_sigma_B_cap_cylinders_equals_to_sigma_B}. Then
\begin{align*}
    \int_B e^F d\mu &= \sum_{a \in S} \int_{B\cap [a]} e^F d\mu = \sum_{a \in S} \int_{B\cap [a]} e^F \frac{d\mu}{d\mu \odot \sigma} d\mu \odot \sigma = \sum_{a \in S} \int_{B\cap [a]} 1 d\mu \odot \sigma \\
    &= \sum_{a \in S} \mu \odot \sigma (B\cap [a]) = \sum_{a \in S} \mu ( \sigma (B\cap [a])) \stackrel{(\dagger)}{=} \mu ( \sigma B),
\end{align*}
where in $(\dagger)$ we used \eqref{eq:union_sigma_B_cap_cylinders_equals_to_sigma_B}. We conclude that $\mu$ is $e^F$-conformal.
\end{proof}

The next result summarizes the last two theorems.

\begin{corollary}\label{cor:equivalences_conformality_classical} On a Markov shift $\Sigma_A$, consider a potential $F:\Sigma_A \to \mathbb{R}$ and $\mu$ a Borel measure on $\Sigma_A$. The following are equivalent:
\begin{itemize}
    \item[$(i)$] $\mu$ is $e^F$-conformal (in the sense of Denker-Urba\'nski);
    \item[$(ii)$] $\mu$ is non-singular and it is $(-F,1)$-conformal (in the sense of Sarig);
    \item[$(iii)$] $\mu$ is an eigenmeasure of $L_{-F}$ with eigenvalue $1$.
\end{itemize}
\end{corollary}

\begin{proof} It is a straightforward result from Theorem \ref{thm:equivalence_conformality_Sarig_eigen_measure_Ruelle_operator}, for $\lambda = 1$, and Theorem \ref{thm:equivalence_conformality_Sarig_and_DU}.
\end{proof}

\begin{remark}\label{remark:equivalences_conformality_classical_general_eigenvalue} The result above is also valid for general $\lambda$, by exchanging the potential $F$ by $F - \log \lambda$. Also, it is straightforward the validity of the equivalences here studied when we adapt the potential in order to include the inverse of the temperature $\beta > 0$. 
\end{remark}

\section{Recurrence, Conservativity and Existence of Conformal Measures}

In the last section we presented the different notions of conformal measures and the notion of eigen-measures of the Ruelle's operator, and how these definitions are related, by presenting and proving results about equivalences between these concepts. Now we present conditions that grants the existence of the conformal measures by two different ways of approach. The first one is a Measure Theory approach, and it is based on the notion of recurrence of the potential, which is inspired on the theory of Markov chains (see \cite{Sarig2009}). The second one is an approch by the point of the view of Analysis, constructed by M. Denker and M. Yuri \cite{DenYu2015}, and it will be presented for the context of Generalized Markov shift spaces in chapter \ref{ch:term_form_X_A}.

In order to present the definitions of recurrence we define the following. Let $a, n \in \mathbb{N}$, $x \in \Sigma_A$ and $F:\Sigma_A \to \mathbb{R}$ be a potential, we define
\begin{itemize}
    \item $\varphi_a(x):= \mathbbm{1}_{[a]}(x) \inf\{n\in \mathbb{N}:\sigma^n(x) \in [a]\}$ (first return time);
    \item $Z_n^*(F,[a]) := \sum_{\sigma^n x = x}e^{F_n(x)}\mathbbm{1}_{[\varphi_a = n]}(x)$;
    \item $\lambda:= e^{P_G(F)} = \lim_n [Z_n(F,[a])]^{1/n}$.
\end{itemize}

\begin{definition}[Modes of Recurrence] Let $\Sigma_A$ be topologically mixing and $F: \Sigma_A \to \mathbb{R}$ be a potential such that $P_G(F) < \infty$. Fix $a \in \mathbb{N}$. We say that 
\begin{itemize}
    \item $F$ is recurrent if $\sum_{n \in \mathbb{N}} \lambda^{-n} Z_n(F,[a]) = \infty$;
    \item $F$ is positive recurrent if it is recurrent and $\sum_{n \in \mathbb{N}}n \lambda^{-n} Z_n^*(F,[a]) < \infty$;
    \item $F$ is null recurrent if it is recurrent and $\sum_{n \in \mathbb{N}}n \lambda^{-n} Z_n^*(F,[a]) = \infty$;
    \item $F$ is transient if $\sum_{n \in \mathbb{N}} \lambda^{-n} Z_n(F,[a]) < \infty$.
\end{itemize}
\end{definition}

Now, we present the notion of conservativity and we study its relation with the recurrence modes.

\begin{definition}[wandering sets and conservativity] Let $T$ be non-singular map on a sigma finite measure space $(X,\mathcal{F},\nu)$. A set $W\in \mathcal{F}$ is said to be a wandering set if $\{T^{-n}W\}_{n\in \mathbb{N}_0}$ is a pairwise disjoint family. We say that $T$ (or $\mu$) is conservative if every wandering set $W \in \mathcal{F}$ satisfies $W=\emptyset$ or $W = \emptyset \mod\nu$. 
\end{definition}

\begin{remark} Equivalently, $W$ is a wandering set when $T^{-k}(W)\cap W=\emptyset$ for every $k\in \mathbb{N}$. 
\end{remark}

The next result is an equivalence for conservativity.

\begin{theorem}[Halmos]\label{thm:Halmos} Suppose $T$ is a non-singular map on a $\sigma$-finite measure space $(X,\mathcal{F},\nu)$. $T$ is conservative iff the following holds for every measurable set $E$ of strictly positive measure: for a.e $x\in E$, we have $T^n(x)\in E$ for infinitely many positive $n$'s.
\end{theorem}

\begin{proof} First suppose that for every measurable set $E$ of strictly positive measure we have that for a.e $x\in E$, we have $T^n(x)\in E$ for infinitely many positive $n$'s. For every wandering set $W$ we have necessarily that, for a.e. $x \in W$, it holds that $T^k(x) \notin W$ for every $k \in \mathbb{N}$, and by hypothesis we have $\nu(W) = 0$, and therefore $\nu$ is conservative. Conversely, suppose $T$ conservative, and assume that there is a measurable set $E$ s.t. $\nu\{x\in E: \# \{n\geq 0: T^n(x)\in E\}<\infty\}\neq 0$. Set
\begin{equation*}
    E_N:=\{x\in E: |\{n\geq 0:T^n(x)\in E\}|=N\}.
\end{equation*}
There exists $N\in\mathbb{N}$ satisfying $\nu(E_N)\neq 0$. For every $k\in \mathbb{N}$. We claim that $T^{-k}E_N\cap E_N=\emptyset$. Indeed, if there is a point $x\in T^{-k}E_N\cap E_N$, then this point would visit $E$ at least $N+1$ times (once at time zero, then $N$ times at times $k$ or larger). But this contradicts the definition of $E_N$, and then the claim is proved. Consequently, $E_N$ is a wandering set of positive measure, which is a contradiction since $T$ is conservative.
\end{proof}

\begin{proposition}\label{prop:T_conservative_sum_T_n} Let $T$ be a non-singular map of a $\sigma$-finite measure space.
 \begin{itemize}
     \item[1.] If there is a non-negative $f\in L^1$ s.t $\sum_{n \in \mathbb{N}}\hat{T}^nf=\infty$ a.e, then $T$ is conservative.
     \item[2.]If there is a strictly positive $f\in L^1$ s.t $\sum_{n \in \mathbb{N}}\hat{T}^nf<\infty$ on a set of positive measure, then $T$ is not conservative.
 \end{itemize}
\end{proposition}

\begin{proof} Denote the underlying measure space by $(X,\mathcal{F},\nu)$. Suppose $f$ is non-negative integrable s.t. $\sum_{n\in \mathbb{N}}\hat{T}^nf=\infty$ a.e. We will show that every wondering set $W$ has measure zero. Since $W$ is wandering, $\sum_{n \in \mathbb{N}} 1_W\circ T^n\leq 1$ everywhere. Indeed, every $x$ must be on only one of the $T^{-n}(W)$ or in no one. In the later the function is zero, otherwise it is one. Now, by Monotone Convergence Theorem and Theorem \ref{prop:transfer_operator_properties} $(i)$, we have
\begin{equation*}
   \|f\|_1\geq \int f\left(\sum_{n \in \mathbb{N}} 1_W\circ T^n\right)d\nu=\sum_{n \in \mathbb{N}}\int f 1_W\circ T^n d\nu=\sum_{n \in \mathbb{N}} \int_W\hat{T}^n fd\nu=\int_W\sum_{n \in \mathbb{N}}\hat{T}^n f d\nu. 
\end{equation*}
However, $\sum_{n \in \mathbb{N}}\hat{T}^n f=\infty$ a.e., so $W$ must have measure zero. This proves item 1.
 
For item 2, suppose that $f$ is a stricly positive integrable function such that $U:= [\sum \hat{T}^n f<\infty]$ has positive measure and we shall show that $T$ is not conservative. Since $\sum_n \hat{T}^n <\infty$ on $U$, $\exists B\subseteq U$ of positive measure such that $\int_B \sum \hat{T}^n f d\nu <\infty$, this is due to the measure being $\sigma$-finite. Now, due to the Monotone Convergence Theorem,
\begin{equation*}
    \int f \sum_{n \in \mathbb{N}} 1_B\circ T^n d\nu =\sum_{n \in \mathbb{N}} \int f  1_B\circ T^n d\nu =\sum_{n \in \mathbb{N}} \int_B \hat{T}^n f d\nu=\int_B \sum_{n \in \mathbb{N}}\hat{T}^n f d\nu<\infty,
\end{equation*}
whence since $f$ is strictly positive $\sum_{n \in \mathbb{N}} 1_B\circ T^n<\infty$ a.e in $X$, whence in $B$. It follows that for a.e $x\in B, T^n(x)\in B$ only finite many times. By Theorem \ref{thm:Halmos}, $T$ is not conservative, proving item 2.
\end{proof}

The next theorem relates recurrence with conservativity.

\begin{theorem}\label{thm:equivalence_conservativity_recurrence} Let $\Sigma_A$ be a transitive Markov shift space, $F$ be a potential, and suppose $\nu$ is a non-singular measure which is finite on cylinders and s.t. $\dfrac{d\nu}{d\nu\odot \sigma}=\lambda^{-1}e^{F}$. If $f$ satisfies the Walters property, then $\nu$ is conservative iff for some $a \in S$ (whence all $a \in S$),
\begin{equation}\label{eq:series_Z_n}
    \sum_{n=1}^\infty \lambda^{-1}Z_n(F,[a])= \infty.
\end{equation}
\end{theorem}

\begin{proof} Since $F$ satisfies the Walters property, by Lemma \ref{lemma:Z_n_b_bounded_Z_n_a} we observe that if the series $\sum_{n=1}^\infty \lambda^{-1}Z_n(F,a)$ converges for some $a \in S$, then it converges for all $a \in S$. Also, the transfer operator is $\hat{\sigma}g(x)=\lambda^{-1}\sum_{\sigma y=x}e^{F(y)}g(y)$. By applying $\hat{\sigma}$ on $g$ a finite number of times, we have by induction that
\begin{equation*}
    \hat{\sigma}^ng(x)=\lambda^{-n}\sum_{\sigma^n y=x}e^{F(y)}g(y).
\end{equation*}
For every state $b$, there are constants $C_1,C_2>0$ and $k_1,k_2\in \mathbb{N}$ satisfying
\begin{equation}\label{eq:transfer_operator_characteristic_bounded_by_Z_partition}
 C_1 \lambda^{-(n-k_1)}Z_{n-k_1}(F,b) \leq \hat{\sigma}^n 1_{[a]}(x) \leq  C_2 \lambda ^{-(n+k_2)}Z_{n+k_2}(F,b),
\end{equation}
for every $x \in [b]$ and every $n> k_1$. We omit the proof, since it is similar to the proof of Lemma \ref{lemma:Z_n_b_bounded_Z_n_a}. Suppose $\sum_n \lambda^{-n}Z_n(F,[a])=\infty$, then \eqref{eq:transfer_operator_characteristic_bounded_by_Z_partition} implies that $\sum_n \hat{T}^n1_{[a]}=\infty$ everywhere on $\Sigma_A$. By Proposition \ref{prop:T_conservative_sum_T_n}, item 1, $T$ is conservative. Conversely, suppose that $\sum_n \lambda^{-n}Z_n(F,[a])<\infty$ for some state $a$. Then $\sum_n \lambda^{-n}Z_n(F,[b])<\infty$ for every $b \in S$. By \eqref{eq:transfer_operator_characteristic_bounded_by_Z_partition}, for each $b \in S$ there exists a constant $C_{ab}>0$ s.t $\sum_n \hat{T}^n 1_{[a]}\leq C_{ab}$ on $[b]$. Choose $b$ s.t. $\nu [b]\neq 0$, and consider $\{\epsilon_a\}_{a\in S}$ be positive numbers s.t $\sum_a \epsilon_a C_{ab}<\infty$ and $\sum_a \epsilon_a \nu [a]<\infty$. The last inequality is possible because the measure is finite on cylinders. The function
\begin{equation*}
  h:= \sum_{a\in S} \epsilon_a 1_{[a]}
\end{equation*}
is positive and it belongs to $L^1(\nu)$. Moreover, for a.e. $x\in [b]$, it follows that
\begin{equation*}
    \sum_{n=1}^\infty \hat{\sigma}^n h=\sum_{n=1}^\infty\sum_{a\in S}\epsilon_a\hat{\sigma}^n 1_{[a]} \leq \sum_{a \in S} \epsilon_a C_{ab}<\infty.
\end{equation*}
Since $\nu[b]\neq 0$, we obtain by Proposition \ref{prop:T_conservative_sum_T_n} that $\nu$ is not conservative.  
\end{proof}

It is immediate from theorem above that the conservativity of a conformal measure depends exclusively on the recurrence, a property of the potential. Indeed, for a topologically mixing $\Sigma_A$ the equation \eqref{eq:series_Z_n} is precisely the definition of a recurrent potential when $c = \lambda$.

Next, we present the generalized Ruelle-Perron-Frobenius Theorem, a result on the existence of conformal measures on Markov shifts in the sense of Sarig, based on the recurrence of the potential.

\begin{theorem}[Generalized RPF Theorem]\label{thm:RPF_Generalized} Consider $\Sigma_A$ topologically mixing and $F$ a potential on $\Sigma_A$ satisfying the Walters condition such that $P_G(F)< \infty$. The following is true:
\begin{itemize}
    \item[$(a)$] $F$ is recurrent if and only if there exists $\lambda>0$, $h:\Sigma_A \to \mathbb{R}$ non-negative and continuous function, and $\nu$ conservative measure, strictly positive and finite on cylinders, such that $L_f h = \lambda h$ and $L_F^*\nu = \lambda \nu$. In this case, $\lambda = e^{P_G(F)}$;
    \item[$(b)$] $F$ is positive recurrent if and only if there exists $\lambda>0$, $h:\Sigma_A \to \mathbb{R}$ non-negative and continuous function and $\nu$ conservative measure, strictly positive and finite on cylinders, such that $L_f h = \lambda h$, $L_F^*\nu = \lambda \nu$ and $\int h d\nu < \infty$. In this case, $\lambda = e^{P_G(F)}$. Moreover, for each cylinder $[a]$ we have
    \begin{equation}\label{eq:convergence_PR_power_L}
        \lambda^{-n}L_F^n \mathbbm{1}_{[a]} \to \frac{h \nu[a]}{\int h d\nu}
    \end{equation}
    point-wise in $\Sigma_A$ and also uniformly on compact sets;
    \item[$(c)$] $F$ is null recurrent if and only if there exists $\lambda>0$, $h:\Sigma_A \to \mathbb{R}$ non-negative and continuous function and $\nu$ conservative measure,finite on cylinders, such that $L_f h = \lambda h$, $L_F^*\nu = \lambda \nu$ and $\int h d\nu = \infty$. In this case, $\lambda = e^{P_G(F)}$. Moreover, for each cylinder $[a]$ we have
    \begin{equation}\label{eq:convergence_NR_power_L}
        \lambda^{-n}L_F^n \mathbbm{1}_{[a]} \to 0
    \end{equation}
    uniformly on compact sets;
    \item[$(d)$] $F$ is transient if and only if there is not a conservative measure which is finite on cylinders $\nu$ such that $L_F^*\nu = \lambda \nu$ for some $\lambda > 0$.
\end{itemize}
\end{theorem}

Some important observations must to be made about the statements of theorem above.

\textbf{(1) Conservativity:} the conservativity does not interfere in theorem above because of Theorem \ref{cor:equivalences_conformality_classical} and Theorem \ref{thm:equivalence_conservativity_recurrence}, in the sense that every eigenmeasure which is finite on cylinders necessarily is conservative when $F$ is recurrent.

\textbf{(2) Uniqueness (eigenvalue):} Proposition 3.3 in Sarig's notes is part of the proof of the generalized RPF Theorem and it is stated here as Proposition \ref{prop:recurrence_equivalence_existence_eigenmeasure} below. Remark \ref{remark:any_conservative_eigenmeasure_has_the_same_eigenvalue_exponential_of_Gurevich_pressure}, about the proof of this result, implies that the eigenvalues associated to the eigenmeasures are the same, namely $\lambda = e^{P_G(F)}$.

Next propositions are the Propositions 3.3 and 3.4 of Sarig's notes.

\begin{proposition}[proposition 3.3 of \cite{Sarig2009}] \label{prop:recurrence_equivalence_existence_eigenmeasure} A potential $F$ satisfying Walters's condition is recurrent if and only if there exists a conservative measure $\mu$ which is finite on cylinders such that for some $\lambda > 0$, $L_F^* \mu = \lambda \mu$. In this case $\lambda = e^{P_G(F)}$ and $\mu$ gives any cylinder strictly positive measure.
\end{proposition}

\begin{remark} \label{remark:any_conservative_eigenmeasure_has_the_same_eigenvalue_exponential_of_Gurevich_pressure} In the proof of proposition above, for \textbf{any} conservative eigenmeasure which is finite on cylinders, we necessarily have that its associated eigenvalue is $\lambda = e^{P_G(F)}$. And then this implies that $F$ is recurrent. So observe that for recurrent potentials, not only there exists a conservative eigenmeasure, but also every eigenmeasure has the exponential of the Gurevich's pressure as its eigenvalue.
\end{remark}

\textbf{(3) Positive mass:} the measures of the RPF generalized Theorem are \emph{strictly positive} on cylinders.

In particular, we present the following result about the existence of the eigenfucntion which is part of the proof of the generalized RPF Theorem \ref{thm:RPF_Generalized}. 

\begin{proposition}[Proposition 3.4 of \cite{Sarig2009}]\label{prop:recurrence_implies_eigenfunction_associated_to_conservative_eigenmeasure} If $F$ is a recurrent potential satisfying Walters' condition and $\mu$ is a conservative eigenmeasure which is finite on cylinders. Then there exists a positive continous function $h$ s.t.
\begin{itemize}
 \item[1.] $L_F h = e^{P_G(F)}h$;
 \item[2.] $\Var_k[\log h] \leq \sup_n[\Var_{n+k} F_n] \leq \sum_{\ell \geq k+1}^\infty \Var_\ell F$;
 \item[3.] $\log h$ is uniformly continuous and $\Var_1[\log h] < \infty$.
\end{itemize} 
\end{proposition}

\begin{remark} In the original text, proposition above is for a measure in the conditions of Proposition \ref{prop:recurrence_equivalence_existence_eigenmeasure}. We simply explicited it. 
\end{remark}

The next proposition is part of Proposition 3.5 of \cite{Sarig2009}.

\begin{proposition}\label{prop:PR_implies_ACIM_finite} Suppose F satisifies Walters' condition and it is positive recurrent, and let $h$ and $\mu$ as in Proposition \ref{prop:recurrence_equivalence_existence_eigenmeasure} and Proposition \ref{prop:recurrence_implies_eigenfunction_associated_to_conservative_eigenmeasure}. Then $\int h d\mu < \infty$.
\end{proposition}

%\begin{proposition}\label{prop:PR_implies_sequence_converges_to_h_mu_on_cylinder} Suppose $F$ is positive recurrent satisfying Walters' condition, $\lambda =  e^{P_G(F)}$, and let $\mu$ and $h$ being respectively an conservative eigenmeasure be the eigenfunction and eigenmeasure constructed above. For every cylinder $[a]$,
%\begin{equation*}
%    \lambda^{-n} L_F^n \mathbbm{1}_{[a]} \to \frac{h \mu[a]}{\int h d\mu}
%\end{equation*}
%uniformly on compacts and in $L^1(\mu)$.
%\end{proposition}

%What we conclude is that every eigenmeasure which is finite on cylinder sets is [CONTINUAR AQUI: APRESENTAR O MOTIVO DE BOTARMOS OS RESULTADOS PARCIAIS DO TEOREMA RPF, QUE BASICAMENTE CONSISTE EM FALAR QUE TODA AUTOMEDIDA QUE SEJA FINITA EM CILINDROS TAMBEM EH UMA MEDIDA QUE EH ABRACADA PELO TEOREMA]  

The next result is the unidimensionality of the eigenmeasures for positive recurrent potentials.

\begin{proposition}\label{prop:eigenmeasures_for_positive_recurrent_potentials_dimension_1} Consider $\Sigma_A$ topologically mixing and $F$ a potential on $\Sigma_A$ satisfying the Walters\footnote{Same valid for summable variations.} condition such that $P_G(F)< \infty$. If $F$ is positive recurrent, then the family of eigen-functions and eigen-measures associated to $\lambda = e^{P_G(F)}$ have both dimension 1.
\end{proposition}

\begin{proof} Let $\mu$ and $\nu$ eigen-measures associated to $f$ and $h$ respectively. If $F$ is positive recurrent, then by \eqref{eq:convergence_PR_power_L} we have that $\frac{\nu[a]}{\mu[a]}$ is constant for every $a \in \mathbb{N}$. Indeed, for each $x \in \Sigma_A$ and each $a \in \mathbin{N}$ we have
\begin{align}
    \lambda^{-n}L_F^n \mathbbm{1}_{[a]}(x) &\to \frac{f(x) \mu[a]}{\int f d\mu} \label{eq:convergence_PR_dim_one_mu},\\
    \lambda^{-n}L_F^n \mathbbm{1}_{[a]}(x) &\to \frac{h(x) \nu[a]}{\int h d\nu} \label{eq:convergence_PR_dim_one_nu}.
\end{align}
By dividing the second equation above by the first one we obtain
\begin{equation*}
    1 = \frac{h(x)\nu[a]}{\int h d\nu} \frac{\int f d\mu}{f(x)\mu[a]} \implies \frac{\nu[a]}{\mu[a]} = \alpha \frac{f(x)}{h(x)},
\end{equation*}
where $\alpha = \frac{\int h d\nu}{\int f d\mu}>0$. Repeating the calculation for fixed $x$ and $b \in \mathbb{N}$ one gets
\begin{equation*}
    \frac{\nu[a]}{\mu[a]} = \frac{\nu[b]}{\mu[b]} = \alpha \frac{f(x)}{h(x)},
\end{equation*}
therefore $f = \gamma h$ for some $\gamma>0$. In a similar way, we have $\nu[a] = \gamma' \mu[a] \frac{\int h d\nu}{\int f d \mu}$. This can be extended to the algebra of the cylinders. By Caratheodory Extension Theorem it is extendend to the Borel $\sigma$-algebra as well.
\end{proof}

In order to obtain a characterization of the recurrence of the potentials, Sarig proved an important result called Discriminant Theorem \cite{Sarig2001}, stated in this thesis. In order to provide the statement of this result we introduce the notion of induced systems \cite{Sarig2001,Sarig2001_Null}.

\begin{definition}[Induced Markov shift] Given a Markov shift space $\Sigma_A$ with alphabet $S$. Fixed $a \in S$, set $S_{\ind} := \{[w]: w \in \mathfrak{W}^*, w_i = a \iff i = 0, [wa] \neq \emptyset \}$, where $\mathfrak{W}^*$ is the set of finite admissible non-empty words. The induced Markov shift space on $a$ is the set $\Sigma_A^{\ind}(a) := S_{\ind}^{\mathbb{N}_0}$, where the induced shift map $\sigma_{\ind}: \Sigma_A^{\ind}(a) \to \Sigma_A^{\ind}(a)$ is given by
\begin{equation*}
    \sigma_{\ind}(([w^0],[w^1],[w^2],...)) := ([w^1],[w^2],[w^3],...),
\end{equation*}
for every $([w^0],[w^1],[w^2],...) \in \Sigma_A^{\ind}(a)$. Let $\pi_{\ind} :\Sigma_A^{\ind}(a) \to [a]$ the map given by
\begin{equation*}
    \pi_{\ind}(([w^0],[w^1],[w^2],...)) := (w^0w^1w^2...).
\end{equation*}
Given a potential $F:\Sigma_A \to \mathbb{R}$, we define the induced potential on $[a]$, $F^{\ind}: \Sigma_A^{\ind}(a) \to \mathbb{R}$, by
\begin{equation*}
    F^{\ind} := \left(\sum_{k=0}^{\varphi_a - 1} F\circ \sigma^k \right)\circ \pi_{\ind}.
\end{equation*}
The pair $(\Sigma_A^{\ind}(a),F^{\ind})$ is said to be the induced system on $[a]$.
\end{definition}

In other words, the induced system $\Sigma_A^{\ind}(a)$ is a full shift where symbols are a particular cylinders in $[a]$. Besides that, the induced potential value taken on a sequence in the induced space is taken on the Birkhoff sum on the first return of the a priori potential on the word on the original Markov shift by connecting the symbols of the sequence of the induced shift space.

In terms of regularity, a locally H\"{o}lder potential induces a locally H\"{o}lder induced potential. However, a potential with summable variations does not necessarily imply that its induced potential does also have summable variations. Anyway, for the topologically mixing case, the Gurevich pressure is well-defined for the induced system as stated below. %talvez provar isso

\begin{lemma}[Lemma 2 of \cite{Sarig2001}] Suppose that $\Sigma_A$ is topologically mixing, $S$ be its alphabet, and let $F:\Sigma_A \to \mathbb{R}$ be a potential with summable variations. Fixed $a \in S$, let $(\Sigma_A^{\ind}(a),F^{\ind})$ be the induced system on $a$. Then the following limit exists for all $[w] \in S_{\ind}$ (although it may be infinite) and is independent of the choice of $[w]$:
\begin{equation*}
    P_G(F^{\ind}) = \lim_n \frac{1}{n} \log Z_n(F^{\ind},[w]).
\end{equation*}
\end{lemma} %talvez provar isso

\begin{definition} Let $\Sigma_A$ be topologically mixing and let $F : \Sigma_A \to \mathbb{R}$ have summable variations and finite Gurevich pressure. Fix $a \in S$ and let $(\Sigma_A^{\ind}(a),F^{\ind})$ be the induced system on $[a]$. Set
\begin{equation*}
    p^*_a[F] := \sup\{p : P_G((F + p)^{\ind}) < \infty\}
\end{equation*}
The $a$-discriminant of $F$, denoted by $\Delta_a[F]$ is defined as follows:
\begin{equation*}
    \Delta_a[F] := \sup\{P_G((F + p)^{\ind}) : p<p^*_a[\phi]\} \leq \infty.
\end{equation*}
\end{definition}

Still assuming the topological mixing property, there are some important properties envolving the $a$-discriminant, the Gurevich pressure and the partitions $Z_n^*(F,[a])$, which is essentialy Proposition 3 of \cite{Sarig2001}, presented next.

\begin{proposition}\label{prop:properties_of_discriminant} Let $\Sigma_A$ be topologically mixing and let $F$ be a potential with summable variations and finite Gurevich pressure. Also consider the induced system $(\Sigma_A^{\ind}(a),F^{\ind})$ on $[a]$, where $a \in S$. Then $P_G((F+p)^{\ind})$, seen as a function on $p$, is convex, strictly increasing and continuous in $(-\infty,p_a^*[F])$. Moreover, the following identities hold:
\begin{align}
    \Delta_a[F] &= P_G((F+p_a^*[F])^{\ind}),\label{eq:discriminant_is_Gurevich_pressure_induced_Markov_shift}
\end{align}
\begin{align}
    \left|\Delta_a[F] - \log\left(\sum_{k=1}^\infty R^k Z_k^*(F,[a])\right) \right| &\leq \sum_{k \geq 2} \Var_k(F), \label{eq:absol_discriminant_power_series},
\end{align}
\begin{align}
    p_a^*[F] &= - \lim_n \sup \frac{1}{n} \log Z_n^*(F,[a]),
    \label{eq:p_a_star_is_limsup_partition_Z_star}
\end{align}
\begin{align}
    \left|P_G((F+p_a^*[F])^{\ind}) - \log\left(\sum_{k=1}^\infty e^{kp} Z_k^*(F,[a])\right) \right| &\leq \sum_{k \geq 2} \Var_k(F), \label{eq:absol_Gurevich_pressure_power_series},
\end{align}
where $R$ in the identity \eqref{eq:absol_discriminant_power_series} is the convergence radius of the power series $\sum_{k=1}^\infty \xi^k Z_k^*(\phi,[a])$.
\end{proposition}

\begin{remark}\label{remark:properties_of_discriminant} It is important to observe that the relations \eqref{eq:absol_discriminant_power_series} and \eqref{eq:p_a_star_is_limsup_partition_Z_star} are consequences from the inequality \eqref{eq:absol_Gurevich_pressure_power_series}. Also, for potentials which depend on the first coordinate only, that is, $F(x) = F(x_0)$ for every $x \in \Sigma_A$, we have
\begin{equation}\label{eq:Bernoulli_potentials_variation_zero}
    \sum_{k \geq 2} \Var_k(F) = 0,
\end{equation}
since the $n$-variation is zero for each $n \geq 2$ and it is straightforward that they have summable variations.
\end{remark}

Now we present the Sarig's Discriminant Theorem.

\begin{theorem}[Discriminant Theorem] \label{thm:discriminant_theorem}
Let $\Sigma_A$ be a topologically mixing and suppose $F : \Sigma_A \to \mathbb{R}$ be some potential with summable variations satisfying $P_G(F) < \infty$. For a fixed state $a \in S$,
\begin{enumerate}
   \item The equation $P_G((F + p)^{\ind}) = 0$ has a unique solution $p(F)$ if $\Delta_a[F] \geq 0$ and no solution if $\Delta_a[F] < 0$. The Gurevich pressure of $F$ is given by 
   $$P_G(F) =\begin{cases} -p(F)\quad \text{if}\; \Delta_a[F]\geq 0\\
   -p_a^*(F)\quad \text{if}\; \Delta_a[F]<0
   \end{cases}
    $$

\item  $\phi$ is positive recurrent if $\Delta_a[F] > 0$ and transient if $\Delta_a[F] < 0$. In the case
$\Delta_a[F] = 0$, $F$ is either positive recurrent or null recurrent.
\end{enumerate}
\end{theorem}

\begin{remark} For $\Sigma_A$ topologically mixing and a weakly H\"older continuous potential $F$, we have the following:
\begin{itemize}
    \item if $\Delta_a[F] > 0$ for some $a \in \mathbb{N}$, then $\Delta_b[F] > 0$ for every $b \in \mathbb{N}$;
    \item if $\Delta_a[F] = 0$ for some $a \in \mathbb{N}$, then $\Delta_b[F] = 0$ for every $b \in \mathbb{N}$;
    \item if $\Delta_a[F] < 0$ for some $a \in \mathbb{N}$, then $\Delta_b[F] < 0$ for every $b \in \mathbb{N}$.
\end{itemize}
The last statement is straightforward, since $\Delta_a[F] < 0$ if and only if $F$ is transient. The remaining cases are consequences of the potential $F$ having or not the spectral gap property (see \cite{Sarig2015} for further details): $\Delta_a[F] > 0$ if and only if the potential is positive recurrent and it has the spectral gap property; and $\Delta_a[F] = 0$ is equivalent to state that $F$ is recurrent and it has not the spectral gap property.
\end{remark}

In particular, for the renewal shift we have the following result.

\begin{theorem}[Theorem 5 of \cite{Sarig2001}]\label{thm:phase_transition_renewal_shift_Sarig_Gurevich_pressure} Let $\Sigma_A$ be the renewal shift and consider a potential $F : \Sigma_A \to \mathbb{R}$ with summable variations satisfying $\sup F < \infty$ and such that $F^{\ind}$ is locally H\"{o}lder continuous. Then
there exists $0 < \beta_c \leq \infty$ such that:
\begin{enumerate}
    \item $\beta F$ is strongly positive recurrent\footnote{For a potential $F$, we say $F$ is strongly positive recurrent when $\Delta_a[F] > 0$ for some $a \in S$.} for $0 < \beta < \beta_c$ and transient for $\beta > \beta_c$;
    \item $P_G(\beta F)$ is real analytic in $(0, \beta_c)$ and linear in $(\beta_c, \infty)$. It is continuous but not
analytic at $\beta_c$ (in case $\beta_c < \infty)$.
    \item Set $A_n := e^{\sup\{F_n(x):x\in [0,n-1,... ,0]\}}$ and let $R(\beta)$ be the radius of convergence of
    \begin{equation*}
        F_\beta(t) := \sum_{n \in \mathbb{N}} A_n^\beta t^n.
    \end{equation*}
    If $F_\beta(R(\beta))$ is infinite for every $\beta$, then $\beta_c = \infty$. If there exists $\beta > 0$ such that $F_\beta(R(\beta)) < 1$, then $\beta_c < \infty$.
\end{enumerate}
\end{theorem}

Next we present a potential on the renewal Markov shift space, which is one of the essential examples of this thesis. This potential was suggested by Elmer R. Beltr\'an.

\begin{example}\label{exa:renewal_potential_log}

Let $\Sigma_A$ be the renewal shift and consider the potential
\begin{equation}\label{eq:conservativity_phase_transition_potential}
    F(x) = \log (x_0) -\log(x_0+1).
\end{equation} %+c na versão comentada
In order to study the thermodynamic properties in this model, we actually consider the potential $\beta F$, where $\beta >0$ is the inverse of the temperature. Such potential depends on the first coordinate only, and by the remark \ref{remark:properties_of_discriminant}, the sum of its variations is zero. Moreover, as it is proved in Example \ref{exa:renewal_shift}, $\Sigma_A$ is topologically mixing. Then, by the inequality \eqref{eq:absol_discriminant_power_series} in Proposition \ref{prop:properties_of_discriminant} and the identity \eqref{eq:Bernoulli_potentials_variation_zero} for $a = 1$, we get
\begin{equation}\label{eq:discriminant_1_equals_series}
    \Delta_1[\beta F] = \log\left(\sum_{k=1}^\infty R^k Z_k^*(\beta F,[1])\right),
\end{equation}
and we can calculate the $1$-discriminant by calculating the series above. For every $n$-periodic sequence $x \in \Sigma_A$ we have
\begin{align*}
    \mathbbm{1}_{[\varphi_1 = n]}(x) = 1 &\iff \varphi_1(x) = n \iff x \in [1] \text{ and } \inf\{m\geq 1:\sigma^mx \in [1]\} = n\\
    &\iff x = \overline{1,n,n-1,...,2}.
\end{align*}
Note that $\sigma^n(\overline{1,n,n-1,...,2}) = \overline{1,n,n-1,...,2}$. Then,
\begin{align}\label{eq:Z_star_renewal}
    Z_n^*(\beta F,[1]) = e^{\beta F_n(\overline{1,n,n-1,...,2})}.
\end{align}
The equality above can be used to calculate the convergence radius $R$ of the series in the RHS of \eqref{eq:discriminant_1_equals_series}. Indeed, we get
\begin{align*}
    R = \lim_{n \to \infty} \frac{Z_n^*(\beta F,[1])}{Z_{n+1}^*(\beta F,[1])} = \lim_{n \to \infty} \frac{e^{\beta F_n(\overline{1,n,n-1,...,2})}}{e^{\beta F_{n+1}(\overline{1,n+1,n,...,2})}},
\end{align*}
and since 
\begin{align*}
        F_n(\overline{1,n,n-1,...,2}) = \sum_{k=0}^{n-1} F(\sigma^k(\overline{1,n,...,2})) = \sum_{k = 1}^n\left[\log(k) - \log(k+1)\right] = - \log(n+1), \quad n \in \mathbb{N},
\end{align*}
it follows that 
\begin{align*}
    Z_n^*(\beta F,[1]) = \frac{1}{(n+1)^\beta} \quad \text{and} \quad R = \lim_{n \to \infty} \frac{e^{ -\beta \log(n+1)}}{e^{-\beta  \log(n+2)}} = \lim_{n \to \infty} e^{-\beta \log \left(\frac{n+1}{n+2}\right)}  = 1.
\end{align*}
Now, we calculate $\Delta_1[\beta F]$:
\begin{equation}\label{eq:discriminant_identity}
    \Delta_1[\beta F] = \log\left(\sum_{k=1}^\infty R^k Z_k^*(\beta F,[1])\right) =  \log\left(\sum_{k=1}^\infty \frac{1}{(k+1)^\beta}\right).
\end{equation}
So we obtain
\begin{align}\label{eq:discriminant_final}
    \Delta_1[\beta F] &= \log\left(\sum_{k=1}^\infty \frac{1}{k^\beta}-1\right).
\end{align}
Let $\beta_c >0$ be the unique solution\footnote{$\beta_c \approx 1.72865$.} of $\zeta(\beta_c) = 2$, where $\zeta$ is the Riemann Zeta function. For $0 < \beta \leq 1$, the series in \eqref{eq:discriminant_final} diverges and therefore $\Delta_1[F] = \infty > 0$. Now, for $\beta > 1$, we may write
\begin{equation*}
    \Delta_1[\beta F] = \log\left(\zeta(\beta)-1\right).
\end{equation*}
Since the $\zeta$ is strictly decreasing for $\beta >1$, we have that $\zeta(\beta) - 1 > 1$ for $1 < \beta < \beta_c$ and in this case we obtain $0 < \Delta_1 [\beta F] < \infty$. Also,
\begin{equation*}
    \Delta_1 [\beta_c F] =  0.
\end{equation*}
Furthermore, observe that $\zeta(\beta) > 1$ for every $\beta > 1$ and it has $1$ as its horizontal asymptote. Then, for $\beta > \beta_c$ we have that $0 < \zeta(\beta) - 1 < 1$, and therefore $\Delta_1[\beta F] < 0$ for $\beta > \beta_c$. Therefore, By the Discriminant Theorem, $\beta F$ is positive recurrent for $\beta < \beta_c$ and it is transient for $\beta > \beta_c$. Consequently, by the Generalized RPF Theorem, there exists a conservative eigenmeasure of the Ruelle's operator $L_{\beta F}$ which is finite on cylinders for $\beta \in (0,\beta_c)$ and we have the absence of such measures for $\beta \in (\beta_c,\infty)$.
%By the Discriminant theorem, $\beta F$ is positive recurrent for $\beta = 1$. Now, if we take $\beta = 2$, then 
%\begin{align*}
%    \sum_{k=1}^\infty \frac{1}{(k+1)^\beta} = \zeta(2)-1 =  \frac{\pi^2}{6}-1 < \frac{10}{6}-1 = \frac{2}{3}<1,
%\end{align*}
%and therefore $\Delta_1[\beta F]<0$ for $\beta = 2$, which implies that in this case $\beta F$ is transient, also due to the Discriminant theorem. 
Now note that since $F$ depends only on the first coordinate, we may consider its extension on the real strictly positive numbers on the first symbol, that is, 
\begin{equation*}
    F(y) = \log (y) - \log(y+1), \quad y \in \mathbb{R}_+^*.
\end{equation*}
Moreover,
\begin{equation*}
    \dfrac{dF}{dy}(y) = \frac{1}{y} - \frac{1}{y+1}>0, \quad \text{for all } y \in \mathbb{R}_+^*.
\end{equation*}
Then $F$ is a strictly increasing function on $y$, and then
\begin{equation}\label{eq:sup_potential_phase_transition_bounded}
    \sup_{x \in \Sigma_A} F(x) = \sup_{y \in \mathbb{R}_+^*} F(y) = \lim_{y \to \infty} F(y) = 0 < \infty,
\end{equation}
and hence $\sup F < \infty$. On the other hand we claim that $F^{\ind}$ is locally H\"{o}lder. In fact, for every $x \in \Sigma_A^{\ind}(1)$ we have that $x = [w^0],[w^1][w^2]...$, where, for each $n \in \mathbb{N}_0$, $w^n = w^n(x)$ is a non-empty admissible word starting with $1$ such that does not have the symbol $1$ in any other position which is not the first letter, satisfying $A(w^n_{|w^n|-1},1) = 1$. Observe that $\varphi_1\left(\pi^{ind}(x)\right) = |w^0(x)|$. Then,
\begin{equation}\label{eq:F_ind_renewal_potential_phase_transition}
    F^{\ind}(x) = \sum_{k = 0}^{|w^0(x)|-1}F \circ \sigma^k(w^0(x)w^1(x)w^2(x)\cdots).
\end{equation}
Given $x,y \in \Sigma_A^{\ind}(1)$, for each $m \in \mathbb{N}_0$, the identity $x_m = y_m$ in terms of induced system means that $w^m(x) = w^m(y)$, and hence
\begin{equation*}
    \Var_k F^{\ind} = \sup\left\{\left|F^{\ind}(x)-F^{\ind}(y)\right|: w^m(x) = w^m(y), m \in \{0,...,k-1\}\right\}.
\end{equation*}
For the potential \eqref{eq:conservativity_phase_transition_potential} and every $x, y \in \Sigma_A^{\ind}(1)$ such that $w^0(x) = w^0(y) = w^0$, it follows that $\varphi_1\left(\pi^{ind}(x)\right) = \varphi_1\left(\pi^{ind}(y)\right) = |w^0|$ and then 
\begin{align*}
    \left|F^{\ind}(x)-F^{\ind}(y)\right| &= \left|\sum_{k = 0}^{|w^0|-1}\left[F \circ \sigma^k(w^0w^1(x)w^2(x)\cdots)-F^{\ind}\circ\sigma^k(w^0w^1(y)w^2(y)\cdots)\right]\right|\\
    &= \left|\sum_{k = 0}^{|w^0|-1}\left[\log\left(\frac{w^0_k}{w^0_k+1}\right) - \log\left(\frac{w^1_0}{w^0_k+1}\right) \right]\right| = 0,
\end{align*}
and therefore
\begin{equation*}
    \Var_k F^{\ind} = 0,
\end{equation*}
for every $k \geq 2$, and so it is straightforward that $F^{\ind}$ is locally H\"{o}lder. This regularity and the inequality \eqref{eq:sup_potential_phase_transition_bounded} shows that the potential $F$ satisfies the hypotheses of Theorem \ref{thm:phase_transition_renewal_shift_Sarig_Gurevich_pressure}. Then $P_G(\beta F)$ is linear on the variable $\beta$ for $\beta \geq \beta_c$. In other words, $P_G(\beta F) = \beta p_1^*[F]$. By item 2 of Discriminant Theorem and equation \eqref{eq:p_a_star_is_limsup_partition_Z_star} from Proposition \ref{prop:properties_of_discriminant}, we have that
\begin{equation} \label{eq:coefficient_pressure_zero_for_beta_greater_than_critical}
    p_1^*[F] = \limsup_n \frac{1}{n} \log Z_n^*(F,1) = \limsup_n \frac{1}{n} \log \left(\frac{1}{n+1}\right) = 0.
\end{equation}
\end{example}

\chapter{C\texorpdfstring{$^*$}{TEXT}-algebras}
\label{ch:C_star_algebras}
%\epigraph{Bla}{}

In this chapter we introduce some important concepts about $C^*$-algebras that are crucial in this work. We introduce the general algebraic setting such as definition of C$^*$-algebras, morphisms, universal algebras. In addition, we also present the basics of the dynamical setting on these algebras, such as the notions of C$^*$-dynamical systems and the construction of the Kubo-Martin-Schwinger (KMS) states and some of its properties.

\section{Algebras, Banach Algebras and C$^*$-Algebras}
\label{sec:c-star-basics}

From now, we assume that the reader has some familiarity with some topics on functional analysis, specially on theorems about normed and Banach vector spaces. The basic notions of the C$^*$-algebras are based on \cite{Murphy1990,Davidson1996}, while the construction of universal algebras are mostly based on \cite{Blackadar1985,Blackadar2006,Tasca2015}. For the KMS theory we used the references \cite{Bratteli1987vol1,Bratteli1996vol2}. 

For any vector over a field space we will denote the usual conventions for addition $+$ and scalar multiplication $\cdot$.

\begin{definition}[Algebra]\label{def:algebra} The $4$-tuple $(A,+,\cdot, \circ)$ is said to be an algebra over a field $\mathbb{K}$ ($\mathbb{K} =\mathbb{C}$ or $\mathbb{R}$) if $(A,+,\cdot)$ is a vector space over $\mathbb{K}$ and $\circ$ is an operation, namely
\begin{align*}
   \circ: A \times A \to A, \\
    (a,b) \mapsto a \circ b =: ab,
\end{align*}
called the (algebra) product and it satisfies the following properties.

\begin{itemize}
    \item \emph{Associativity}\footnote{Some texts define algebras without associativity. In the context of operator algebras, the algebras are associative by default.}: for every $a,b,c \in A$,
    \begin{equation*}
        a(bc) = (ab)c;
    \end{equation*}
    \item \emph{distribution over vector addition}: for any $a,b,c \in A$,
    \begin{align*}
        a(b+c) = ab + ac, \\
        (a+b)c = ac + bc;
    \end{align*}
    \item \emph{commutativity with relation to the scalar multiplication}: for all $a,b \in A$ and $\lambda \in \mathbb{K}$,
    \begin{equation*}
        \lambda(ab) = (\lambda a)b = a(\lambda b).
    \end{equation*}
\end{itemize}

An algebra is said to be \emph{commutative} if for every $a,b \in A$,
\begin{equation*}
    ab = ba.
\end{equation*}

An algebra is said to be \emph{unital} if there exists $1 \in A$ s.t. for all $a \in A$ 
\begin{equation*}
    1a = a1 = a.
\end{equation*}
In this case $1$ is called \emph{unity} of $A$.
\end{definition}

From now we will refer the algebra $(A,+,\cdot,\circ)$ simply by $A$. Also, the field will be always $\mathbb{C}$ except if we are specifying the field we are working with it. Most of the facts we prove or mention here are easily particularized to $\mathbb{R}$.

\begin{definition} A subspace $B$ of an algebra $A$ is a \emph{subalgebra} if it is algebraically closed with relation to the algebra product, i.e., $ab \in B$ for every $a,b \in B$.
\end{definition}

\begin{example}\label{exa:algebras} We present some examples of algebras:
\begin{itemize}
    \item[$(a)$] consider a topological space $X$. The set $C(X)$ of all complex continuous functions on $X$ is a commutative unital algebra over $\mathbb{C}$ with the algebra product being the pointwise product
    \begin{equation*}
        (fg)(x):= f(x)g(x), \quad f,g \in C(X), \text{ } x \in X;
    \end{equation*}
    \item[$(b)$] let $\mathcal{H}$ be a Hilbert space. The set $\mathfrak{B}(\mathcal{H})$ of the bounded linear operators on $\mathcal{H}$ is an unital algebra with the matrix product being the composition of operators. In particular, given $n \in \mathbb{N}$, the vector space $M_n(\mathbb{C})$ of the $n \times n$ matrices with complex entries endowed with the matrix product as the algebra product is a non-commutative unital algebra over $\mathbb{C}$; 
\end{itemize}

\end{example}

\begin{definition}[Involution] Let $A$ be an algebra. An involution on $A$ is a unary operation $*:A \to A$ satisfying the following: given $a,b,c \in A$ and $\lambda \in \mathbb{C}$, then
\begin{itemize}
    \item[$(i)$] $(\lambda a+b)^* = \overline{\lambda} a^* + b^*$;
    \item[$(ii)$] $(ab)^* = b^*a^*$;
    \item[$(iii)$] $(a^*)^* = a$.
\end{itemize} 
\end{definition}

\begin{remark}\label{remark:involution_0_1} It is straightforward that $a^* = 0$ if and only if $a = 0$. Also, on a unital $*$-algebra we necessarily have $1^* = 1$. 
\end{remark}

\begin{definition} A $*$-algebra is an algebra endowed with an involution. Given a $*$-algebra $A$, a $*$-subalgebra $B$ of $A$ is a subalgebra of $A$ which is a $*$-algebra with respect to the involution on $A$ restricted to $B$. 
\end{definition}

\begin{example}\label{exa:star_algebras} We introduce involutions on the algebras of the Example \ref{exa:algebras}, turning them into $*$-algebras:

\begin{itemize}
    \item[$(a)$] for a topological space $X$, the algebra $C(X)$ endowed with the involution assinged by
    \begin{equation*}
        (f^*)(x):= \overline{f(x)}, \quad f\in C(X), \text{ } x \in X,
    \end{equation*}
    is a $*$-algebra;
    \item[$(b)$] for a given Hilbert space $\mathcal{H}$, the algebra $\mathfrak{B}(\mathcal{H})$ can be endowed with the involution that maps each operator to its adjoint operator, that is, for $T \in \mathfrak{B}(\mathcal{H})$, $T^* \in \mathfrak{B}(\mathcal{H})$ is the unique operator such that
    \begin{equation*}
        (Ax,y) = (x,A^*y)
    \end{equation*}
    for every $x,y \in \mathcal{H}$, where $(\cdot, \cdot)$ is the inner product. In particular, the complex algebra $M_n(\mathbb{C})$, $n \in \mathbb{N}$, admits an involution as follows. For each $a = (a_{ij}) \in M_n(\mathbb{C})$, let $a^* \in M_n(\mathbb{C})$ defined by $a^*_{ij}:= (\overline{a}_{ji})$. The assignment $a \mapsto a^*$ defines an involution on $M_n(\mathbb{C})$, turning this algebra into a $*$-algebra;
\end{itemize}
\end{example}

Not every subalgebra is a $*$-subalgebra, as we show in the next example.

\begin{example} Consider the $*$-algebra $M_2(\mathbb{C})$ and take the subalgebra
\begin{equation*}
    B=\left\{\begin{pmatrix}
                a & b\\
                0 & 0
           \end{pmatrix} : a,b \in \mathbb{C}
    \right\}.
\end{equation*}
Suppose that there exists an involution $\varphi:M_2(\mathbb{C}) \to M_2(\mathbb{C})$ which its restriction to $B$ is also an involution. Then,
\begin{align*}
    \varphi \begin{pmatrix}
                0 & 1\\
                0 & 0
           \end{pmatrix}
    = \begin{pmatrix}
          c & d\\
          0 & 0
       \end{pmatrix},
\end{align*}
for some $c,d \in \mathbb{C}$. So,
\begin{align*}
    \left[\varphi \begin{pmatrix}
                0 & 1\\
                0 & 0
           \end{pmatrix}\right]^2
    = \begin{pmatrix}
          c^2 & cd\\
          0 & 0
       \end{pmatrix}.
\end{align*}
However, since $\varphi$ is an involution, we necessarily have
\begin{align}\label{eq:inv_squared_zero}
    \left[\varphi \begin{pmatrix}
                0 & 1\\
                0 & 0
           \end{pmatrix}\right]^2
    = \left[\varphi \begin{pmatrix}
                0 & 1\\
                0 & 0
           \end{pmatrix}\right]
    \left[\varphi \begin{pmatrix}
                0 & 1\\
                0 & 0
           \end{pmatrix}\right]
    = \varphi \left[ \begin{pmatrix}
                        0 & 1\\
                        0 & 0
                     \end{pmatrix}
              \begin{pmatrix}
                0 & 1\\
                0 & 0
              \end{pmatrix}\right]
    = \varphi \begin{pmatrix}
                  0 & 0\\
                  0 & 0
              \end{pmatrix} = 0,
\end{align}
and then $c = 0$ and hence,
\begin{align*}
    \varphi \begin{pmatrix}
                0 & 1\\
                0 & 0
           \end{pmatrix}
    = \begin{pmatrix}
          0 & d\\
          0 & 0
       \end{pmatrix}
    = d\begin{pmatrix}
          0 & 1\\
          0 & 0
       \end{pmatrix}.
\end{align*}
By \eqref{eq:inv_squared_zero} we have
\begin{align*}
    0 = \left[\varphi \begin{pmatrix}
                0 & 1\\
                0 & 0
           \end{pmatrix}\right]^2
    = \overline{d}\varphi \begin{pmatrix}
                0 & 1\\
                0 & 0
           \end{pmatrix}
    = |d|^2 \begin{pmatrix}
                0 & 1\\
                0 & 0
           \end{pmatrix},
\end{align*}
and then $d = 0$ and therefore
\begin{align*}
    \varphi \begin{pmatrix}
                0 & 1\\
                0 & 0
           \end{pmatrix} = 0,
\end{align*}
which is a contradiction due to Remark \ref{remark:involution_0_1}. Then $B$ cannot admit an involution and therefore it is not a $*$-subalgebra of $M_2(\mathbb{C})$.
\end{example}

\begin{definition} Given a $*$-algebra $A$, an element $a \in A$ is said to be
\begin{itemize}
    \item[$(i)$] self-adjoint when $a^* = a$;
    \item[$(ii)$] idempotent when $a^2 = a$;
    \item[$(iii)$] a projection when it is self-adjoint and idempotent;
    \item[$(iv)$] an isometry when $A$ is unital and $a^*a = 1$;
\end{itemize}
In addition, given a set $Y \subseteq A$, we define $Y^* = \{y^*: y \in Y\}$, and we say that $Y$ is self-adjoint when $Y = Y^*$. 
\end{definition}

\begin{proposition} For every element $a$ in a $*$-algebra $A$, there exist unique self-adjoint elements $b,c \in A$ such that $a = b +ic$.
\end{proposition}

\begin{proof} The existence is straightforward by taking
\begin{equation*}
    b = \frac{a + a^*}{2} \quad \text{and} c = \frac{a - a^*}{2}.
\end{equation*}
Now, for the uniqueness, suppose that there exist $b',c' \in A$ self-adjoint elements satisfying $a = b' + ic'$. Then
\begin{equation}\label{eq:decomposition_self_adjoint}
  (b - b') + i(c - c') = 0.
\end{equation}
By applying the involution in the equation above and the self-adjoint property of $b,b',c$ and $c'$, we obtain
\begin{equation}\label{eq:decomposition_self_adjoint_inv}
  (b - b') - i(c - c') = 0.
\end{equation}
By summing the equation \eqref{eq:decomposition_self_adjoint_inv} in \eqref{eq:decomposition_self_adjoint} we obtain $b = b'$, and by subtracting \eqref{eq:decomposition_self_adjoint_inv} from \eqref{eq:decomposition_self_adjoint} we get $c = c'$.
\end{proof}

\begin{definition}[Normed algebra] Given an algebra $A$ such that the vector space $(A,+,\cdot)$ has a norm $\|.\|$, we say that $(A,+,\cdot, \circ, \|\cdot\|)$ is a normed algebra if its submultiplicave with relation to the algebra product: for every $a,b \in A$,
\begin{equation*}
    \|ab\| \leq \|a\| \|b\|, \quad a,b \in A.
\end{equation*}
We also will denote the normed algebra $(A,+,\cdot, \circ, \|\cdot\|)$ simply by $A$. A subalgebra of a normed algebra $A$ that is closed in norm is said to be a normed subalgebra of $A$.
\end{definition}

\begin{example}\label{exa:normed_algebras} Using the same list of examples as in Example \ref{exa:algebras} and its respective enumeration, we present norms that turn those algebras into normed algebras.

\begin{itemize}
    \item[$(a)$] Both vector space norms
    \begin{equation*}
        \|a\|:= \sup\{\|av\|: v \in \mathbb{C}^n,\text{ } \|v\|\leq 1\}, \quad a \in M_n(\mathbb{C}),
    \end{equation*}
    and 
    \begin{equation*}
        \|a\|_1:= \sum_{i,j = 1}^n|a_{ij}|, \quad a \in M_n(\mathbb{C}),
    \end{equation*}
    make $M_n(\mathbb{C})$ a normed algebra\footnote{Note that both norms generate the same topology, since $M_n(\mathbb{C})$ is a finite dimension vector space.};
    \item[$(b)$] here we consider $C_b(X)$ instead of $C(X)$, the set of all bounded complex continuous functions on $X$. The vector space norm
    \begin{equation*}
        \|f\| := \sup_{x \in X} |f(x)|, \quad f \in C(X),
    \end{equation*}
    makes $C_b(X)$ a normed algebra;
    \item[$(c)$] the vector space norm
    \begin{equation*}
        \|T\| := \sup_{x \in \mathcal{H}} \frac{\|Tx\|}{\|x\|}, \quad T \in \mathfrak{B}(\mathcal{H}),
    \end{equation*}
    makes $\mathfrak{B}(\mathcal{H})$ a normed algebra.
\end{itemize}

\end{example}

Given a set $X$ and a metric $d: X \times X \to [0,\infty)$, we recall that the metric space $(X,d)$ is said to be complete if every Cauchy sequence in $X$ converges with respect to $d$. Also, we recall that any norm $\|\cdot\|$ on a vector space $X$ induces a metric, given by
\begin{equation*}
    d_{\|\cdot\|}(x,y):= \|x-y\|, \quad x,y \in X.
\end{equation*}
The completeness structure on the Banach vector spaces is directly transfered to normed algebras as in definition below.

\begin{definition}[Banach algebra] A normed algebra $A$ is said to be a Banach algebra if it is complete with respect to its norm, i.e. $(A,+,\cdot, ||\cdot||)$ is a Banach space. If $A$ is a Banach algebra endowed with an involution $*$ such that $\|a^*\| = \|a\|$ for every $a \in A$, then we call it a Banach $*$-algebra.
\end{definition}

\begin{example}\label{exa:Banach_algebras} We recall the normed algebras of Example \ref{exa:normed_algebras} and once again we keep the listing order.

\begin{itemize}
    \item[$(a)$] Both norms for the normed algebra $M_n(\mathbb{C})$ in the item $(a)$ of Example \ref{exa:normed_algebras} are Banach algebras;
    \item[$(b)$] here we consider $X$ a locally compact Hausdorff topological space and $C_0(X)$, the set of all complex continuous functions on $X$ which vanishes at the infinity, i.e. 
    \begin{equation*}
        C_0(X):= \left\{f \in C(X): \mathfrak{C}(f,\epsilon) \text{ is compact for all }\epsilon>0  \right\},
    \end{equation*}
    where 
    \begin{equation*}
        \mathfrak{C}(f,\epsilon):=\{x \in X: |f(x)|\geq \epsilon\}.
    \end{equation*}
    $C_0(X)$ is a Banach algebra;
    \item[$(c)$] $\mathfrak{B}(\mathcal{H})$ is a Banach algebra.
\end{itemize}
Moreover, if we endow these Banach algebras with the respective involutions of Example \ref{exa:star_algebras}, they become Banach $*$-algebras.
\end{example}

\begin{definition}[C$^*$-algebra] A C$^*$-algebra $A$ is a $*$-Banach algebra that satisfies the C$^*$-property, namely
\begin{equation*}
    \|a^*a\| = \|a\|^2,
\end{equation*}
for every $a \in A$. 
\end{definition}

\begin{definition}[Quotient vector space] Let $A$ be a vector space and $B \subseteq A$ a vector subspace. We define the equivalence relation $\sim_B$ on $A$ as follows: given $x,y \in A$, we say that $x$ is equivalent to $y$, denoted by $x \sim_B y$ when $x-y \in B$. It is straightforward that the equivalence classes of $\sim_B$ are
\begin{equation*}
    [x] := x + B = \{x + b:b \in B \}.
\end{equation*}
The quotient space $A/B$ is the set of all equivalence classes as above and it has the vector space structure as next. Given $x,y \in A$ and $\lambda \in \mathbb{C}$ we define the addition on $A/B$ as
\begin{equation*}
    (x+B) + (y+B) := (x+y)+B,
\end{equation*}
and its product by scalar as
\begin{equation*}
    \lambda (x+B):= (\lambda x) + B. 
\end{equation*}
\end{definition}

\begin{definition}[Ideals] Given an algebra $A$, let $I \subseteq A$ be a subspace. $I$ is said to be a left ideal of $A$ if 
\begin{equation*}
    a \in A\text{, } b \in I \implies ab \in I. 
\end{equation*}
Analogously, $I$ is said to be a right ideal of $A$ if 
\begin{equation*}
    a \in A\text{, } b \in I \implies ba \in I. 
\end{equation*}
Also, we say that $I$ $I$ is a two-sided ideal of $A$ if $I$ is a left and a right ideal of $A$. The two-sided ideals $\{0\}$ and $A$ are said to be trivial ideals. A two-sided ideal is said to be maximal if it is a proper ideal of $A$ that is not contained in any other proper two-sided ideal of $A$.
\end{definition}

\begin{example}\label{exa:ideal} Let $X$ be a compact topological space and consider the Banach algebra $C(X)$ with the usual operations and the supremum norm and let $y \in X$. The set
\begin{equation*}
    M_y := \{f \in C(X): f(y) = 0\}
\end{equation*}
is a two-sided closed\footnote{in the norm topology.} ideal of $C(X)$ of codimension\footnote{Given a vector space $V$ and $S$ a subspace of $V$, the codimension of $S$ is the dimension of the quotient space $V/S$.} $1$ and therefore it is a maximal ideal of $C(X)$.
\end{example}

\begin{theorem} \label{thm:ideal_quotient_star_algebra} Let $A$ be a $*$-algebra and $I \subseteq A$ be a two-sided self-adjoint ideal. Define for every $(a+I),(b+I) \in A/I$ the assignments
\begin{equation}\label{eq:prod_quotient}
    (a+I)(b+I):= ab +I
\end{equation}
and
\begin{equation}\label{eq:inv_quotient}
    (a+I)^*:= a^* +I.
\end{equation}
Then, \eqref{eq:prod_quotient} and \eqref{eq:inv_quotient} define operations of product and involution on $A/I$ respectively, and therefore it turns $A/I$ into a $*$-algebra. 
\end{theorem}

\begin{proof} Since $A/I$ is a vector space, it remains to prove that the operations \eqref{eq:prod_quotient} and \eqref{eq:inv_quotient} are well-defined and satisfy the axioms of product and involution, respectively. Since the axioms are straightforward, we only prove that such operations are well defined. For the product let $x,y \in A$ such that $x+I = a+I$ and $y+I = b+I$. These equalities are true if and only if $x = a + h_x$ and $y = b + h_y$, $h_x, h_y \in I$. For every $c \in A$ we have the following equivalences:
\begin{align*}
    c \in xy + I \iff c \in (a + h_x)(b + h_y) +I \iff c \in ab + \underbrace{ah_y + bh_x}_{\in I} +I \iff c \in ab+I,
\end{align*}
and therefore $xy + I = ab+I$, that is, the product is well defined. Now, for the involution, for any $x \in A$ with $x+I = a+I$, we have the following equivalences for any $c \in A$.
\begin{align*}
    c \in x^* + I \iff c \in (a + h_x)^* +I \iff c \in a^* + \underbrace{x^*}_{\in I} +I \iff c \in a^*+I,
\end{align*}
and therefore $x^* + I = a^* + I$ and then the involution is also well defined. Consequently, $A/I$ is a $*$-algebra.
\end{proof}

\begin{definition}[Partial isometry] Let $A$ be a C$^*$-algebra and $a \in A$. We say $a$ is a partial isometry when $a^*a$ is a projection. 
\end{definition}

\begin{proposition} For a given C$^*$-algebra $A$ and $a \in A$, the following are equivalent:
\begin{itemize}
    \item[$(i)$] $a$ is a partial isometry;
    \item[$(ii)$] $a = aa^*a$;
    \item[$(iii)$] $aa^*$ is a projection.
\end{itemize}
\end{proposition}

\begin{proof}  We prove the chain $(i) \implies (ii) \implies (iii) \implies (i)$.

\textbf{Proof of $\mathbf{(i) \implies (ii)}$:} let $p = a^*a$ and $z = aa^*a - a$. Then,
\begin{equation*}
    z^*z = (a^*aa^* - a^*)(aa^*a - a) = p^3 - p^2 - p^2 + p = 0,
\end{equation*}
where we used in the last equality above that $p$ is a projection. By the C$^*$-property, we have that $\|z\|^2 = \|z^*z\| = 0$ and therefore $z = 0$, that is, $a = aa^*a$.

\textbf{Proof of $\mathbf{(ii) \implies (iii)}$:} it is straightforward by multiplying the identity in $(ii)$ by $a^*$ from the right.

\textbf{Proof of $\mathbf{(iii) \implies (i)}$:} by similar proof done for $(i) \implies (ii)$ we get $a^* = a^*aa^*$ and by multiplying this result by $a$ from the right we get that $a^*a$ is a projection, that is, $a$ is a partial isometry.
\end{proof}

\begin{definition}[C$^*$-seminorm and norm] A C$^*$-seminorm on a $*$-algebra $A$ is a function $\|\cdot\|:A \to \mathbb{R}_+$ such that, for every $a,b \in A$ and $\lambda \in \mathbb{C}$, we have
\begin{itemize}
    \item[$(i)$] $\|\cdot\|$ is a seminorm, that is, it satisfies
    \begin{itemize}
        \item[$\bullet$] $\|\lambda a\| = |\lambda| \|a\|$,
        \item[$\bullet$] $\|a+b\| \leq \|a\| + \|b\|$;
    \end{itemize}
    \item[$(ii)$] $\|\cdot\|$ is submultiplicative, that is, $\|ab\|\leq \|a\|\|b\|$;
    \item[$(iii)$] $\|a^*\| = \|a\|$;
    \item[$(iv)$] $\|a^*a\| = \|a\|^2$.
\end{itemize}
If $\|\cdot\|$ is a norm instead of a seminorm, we say that such function is a C$^*$-norm.
\end{definition}

\begin{proposition}\label{prop:C_star_seminorm_two_sided_ideal} Let $A$ be a $*$-algebra and $\|\cdot\| : A \to \mathbb{R}_+$ be a C$^*$-seminorm on $A$. Let $N:= \{a \in A: \|a\| = 0\}$. Then, $N$ is a two-sided self-adjoint ideal. 
\end{proposition}

\begin{proof} Let $x \in A$ and $a,b \in N$, then
\begin{equation*}
    0 \leq \|xa\| \leq \|x\|\|a\| = \|x\|\cdot 0 = 0,
\end{equation*}
hence $\|xa\| = 0$ and then $xa \in N$. By similar calculations we also obtain $ax \in N$. On the other hand, for every $\lambda \in \mathbb{C}$ we also have
\begin{equation*}
    0 \leq \|a + \lambda b\| \leq \|a\| + |\lambda| \|b\| = 0 + |\lambda| 0 = 0,
\end{equation*}
and then $a + \lambda b \in N$, that is, $N$ is a vector subspace of $A$. So far we have that $N$ is a two-sided ideal. Now, it also holds that
\begin{equation*}
    0 \leq \|a^*\| = \|a^*\| = 0,
\end{equation*}
and therefore $N$ is self-adjoint.
\end{proof}

\begin{theorem} \label{thm:quotient_star_normed_algebra} Let $A$ be a $*$-algebra and $\|\cdot\| : A \to \mathbb{R}_+$ be a C$^*$-seminorm on $A$. Let $N:= \{a \in A: \|a\| = 0\}$. Then the quotient $A/N$ is a normed $*$-algebra for the C$^*$-norm $\tnorm{ \cdot} : A/N \to \mathbb{R}_+$ defined by
\begin{equation*}
    \tnorm{b + N} := \|b\|, \quad b \in A.
\end{equation*}
\end{theorem}

\begin{proof} By Proposition \ref{prop:C_star_seminorm_two_sided_ideal} we have that $N$ is a two-sided self-adjoint ideal, and Theorem \ref{thm:ideal_quotient_star_algebra} grants that $A/N$ is in fact a $*$-algebra. We claim that $\tnorm{\cdot}$ is a well-defined function. In fact, given $a,b \in A$ such that $a + N = b + N$ we have that $a-b \in N$ and hence $\|a-b\| = 0$. Then,
\begin{equation*}
    0 \leq |\|a\|-\|b\|| \leq \|a-b\| = 0,
\end{equation*}
and therefore $\|a\| = \|b\|$. In particular, $\tnorm{a+N} = 0$ if and only if $\|a\| = 0$, that is, $a \in N$. This proves that $\tnorm{\cdot}$ is in fact a norm, since the remaining requirements for such function be a norm are satisfied as follows:
\begin{align*}
    \tnorm{\lambda a + N} &= \|\lambda a\| = |\lambda| \|a\| = |\lambda| \tnorm{a + N}, \quad \text{for every } \lambda \in \mathbb{C};\\
    \tnorm{\lambda (a+b) + N} &= \|a + b\| \leq \|a\| + \|b\| = \tnorm{a+N} + \tnorm{b+N}.
\end{align*}
The submultiplicativity is also straightforward:
\begin{equation*}
    \tnorm{ab +N} = \|ab\| \leq \|a\|\|b\| = \tnorm{a+N} \tnorm{b+N}.
\end{equation*}
Moreover,
\begin{equation*}
    \tnorm{a^* +N} = \|a^*\| = \|a\| = \tnorm{a+N}
\end{equation*}
and
\begin{equation*}
    \tnorm{a^*a +N} = \|a^*a\| = \|a\|^2 = \tnorm{a+N}^2.
\end{equation*}
Therefore $\tnorm{\cdot}$ is a C$^*$-norm.
\end{proof}

\begin{proposition}\label{prop:completion_star_algebra_C_star_norm} Let $A$ be a $*$-algebra and $\|\cdot\| : A \to \mathbb{R}_+$ be a C$^*$-norm on $A$. The completion $A^{\|\cdot\|}$ of $A$ under $\|\cdot\|$ is a C$^*$-algebra. 
\end{proposition}

\begin{proof} Let $a$ be an element of the completion of $A$ under $\|\cdot\|$. In particular, for every normed algebra, its completion is a Banach algebra. We claim that the involution map is uniformly continuous on $A$. Indeed, let $a,b \in A$ and $\epsilon >0$. If $\|a-b\|< \epsilon$, then
\begin{equation*}
    \|a^*-b^*\| = \|(a-b)^*\| = \|a-b\| <\epsilon,
\end{equation*}
which proves the claim. Then, since $A$ is dense on its completion, there exists a unique extension of $*$ on $A^{\|\cdot\|}$, which is an involution as well. Indeed, let $a,b \in A^{\|\cdot\|}$ and $\lambda \in \mathbb{C}$. Then there are sequences $(a_n)_{\mathbb{N}}$ and $(b_n)_{\mathbb{N}}$ on $A$ such that $a = \lim_n a_n$ and $b = \lim_n b_n$, then
\begin{align*}
    (a^*)^* &= ((\lim_n a_n)^*)^* = \lim_n ((a_n)^*)^* = \lim_n a_n) = a,\\
    (a+\lambda b)^* &= (\lim_n a_n+\lambda \lim_n b_n)^* = \lim_n (a_n+\lambda b_n)^* = \lim_n (a_n^*+\overline{\lambda} b_n^*) = a^*+\overline{\lambda} b^*,\\
    (ab)^* &= ((\lim_n a_n) (\lim_n b_n))^* = \lim_n (a_n b_n)^* = \lim_n (b_n^*a_n^*) = b^*a^*.
\end{align*}
The properties between the norm and the involution can be proved similarly. 
\end{proof}

\begin{definition}[Morphisms]\label{def:morphisms} Given two algebras $A$ and $B$ and a linear operator $\varphi:A \to B$, we say that
\begin{itemize}
    \item $\varphi$ is a homomorphism if it is a multiplicative map, i.e. for all $a,b \in A$ we have
    \begin{align*}
        \varphi(ab) = \varphi(a)\varphi(b);
    \end{align*}
    \item $\varphi$ is an isomorphism if it is a bijective homomorphism;
    \item $\varphi$ is an endomorphism if it is homomorphism and $A = B$;
    \item $\varphi$ is an automorphism if it is an endomorphism and an isomorphism (i.e. if it is a bijective endomorphism).
\end{itemize}

We say that a homomorphism $\varphi:A \to B$ is unital if both $A$ and $B$ are unital and $\varphi(1) = 1$.
\end{definition}

In this part we will follow Murphy's book \cite{Murphy1990}.

We will denote the algebra of all polynomials on the variable $z$ and complex coefficients by $\mathbb{C}[z]$ and observe that such algebra is normed with the norm
\begin{equation*}
    \|p\|:= \sup_{|\lambda| \leq 1}|p(\lambda)|.
\end{equation*}
It is important to notice that such normed algebra is not complete.

Let $a$ be an element of a unital algebra $A$ and $p \in \mathbb{C}[z]$ given by

\begin{equation*}
    p(z):= \sum_{k=0}^n \lambda_k z^k, \quad \lambda_k \in \mathbb{C}
\end{equation*}
for $k = 0,...,n$. We define 
\begin{equation*}
    p(a):= \sum_{k=0}^n \lambda_k a^k, \quad \lambda_k \in \mathbb{C}.
\end{equation*}
It is straightforward to see that the map
\begin{align*}
    \mathbb{C}[z] &\to A,\\
    p &\mapsto p(a),
\end{align*}
is a unital homomorphism.

\begin{definition}[Invertible elements] Given a unital algebra $A$, we say that $a \in A$ is invertible if there exista $b\in A$ s.t.
\begin{equation*}
    ab=ba=1.
\end{equation*}
$b$ is called an inverse of $a$.
\end{definition}

We define the set
\begin{equation*}
    \Inv A := \{a \in A: a \text{ is invertible}\} 
\end{equation*}
of the invertible elements of the unital algebra $A$.

Given a unital algebra $A$, prove that $\Inv A$ is a group under the algebra multiplication.

Now we are ready to introduce the spectral theory and we start with the fundamental definition of spectrum of an element of an algebra.

\begin{definition}[Spectrum of an element] Let $A$ be a unital algebra. The spectrum of an element $a \in A$ is defined as the set
\begin{equation*}
    \sigma(a) \equiv \sigma_A(a):= \{\lambda \in \mathbb{C}: \lambda 1 - a \notin \Inv A\}.
\end{equation*}
\end{definition}

\begin{example} We recall the recurrent main examples of these notes.
\begin{itemize}
    \item[$(a)$] For the algebra $M_n(\mathbb{C})$ and $a \in M_n(\mathbb{C})$, $\sigma(a)$ is the set of eigenvalues of $a$;
    \item[$(b)$] consider $X$ a compact Hausdorff topological space and the algebra $C(X)$. For given $f \in C(X)$, we have that $\sigma(f) = f(X)$.
\end{itemize}
\end{example}

\begin{remark} For any $a$ and $b$ elements of a unital algebra $A$, then $1-ab$ is invertible if and only if $1-ba$ is invertible. Indeed, if $c = (1-ab)^{-1}$ then $(1-ba)^{-1} = 1+bca$, obtained by the following calculations:
\begin{align*}
    (1+bca)(1-ba) &= 1 - ba + bca - bcaba = 1 - ba + bc(1-ab)a = 1 - ba + ba = 1,\\
    (1-ba)(1+bca) &= 1 + bca - ba - babca = 1 - ba + b(1-ab)ca = 1 - ba + ba = 1.
\end{align*}
The converse is analogous and we omit the proof. Consequently, we have $\sigma(ab)\setminus\{0\} = \sigma(ba)\setminus\{0\}$. Indeed,
\begin{align*}
    \lambda \in \sigma(ab)\setminus\{0\} &\iff \lambda 1 - ab \notin \Inv A, \quad \lambda \neq 0 \iff \lambda\left(1-\frac{ab}{\lambda}\right) \notin \Inv A, \quad \lambda \neq 0 \\
    &\iff \lambda\left(1-\frac{ba}{\lambda}\right) \notin \Inv A , \quad \lambda \neq 0  \iff \lambda 1 - ba \notin \Inv A, \quad \lambda \neq 0 \\
    &\iff \lambda \in \sigma(ba)\setminus\{0\}.
\end{align*}
\end{remark}

From now on, we will omit `$1$' in $\lambda 1$.

\begin{theorem}\label{teo:polynomials_spectrum} Let $a$ be an element of a unital algebra $A$. If $\sigma(a) \neq \emptyset$ and $p \in \mathbb{C}[z]$, then
\begin{equation*}
    \sigma(p(a)) = p(\sigma(a)).
\end{equation*}
\end{theorem}

\begin{proof} We may suppose that $p$ is not constant. If $\mu \in \mathbb{C}$, there is $n \in \mathbb{N}$ and elements $\lambda_0,..., \lambda_n \in \mathbb{C}$, $\lambda_0 \neq 0$, s.t.

\begin{equation*}
    p(z) - \mu = \lambda_0(z-\lambda_1)\cdots(z-\lambda_n),
\end{equation*}
and hence 

\begin{equation*}
    p(a) - \mu = \lambda_0(a-\lambda_1)\cdots(a-\lambda_n).
\end{equation*}
It is straightforward to notice that $p(a)-\mu \in \Inv A$ if and only if $(a-\lambda_k)$ is invertible for each $k=1,...,n$. It follows that $\mu \in \sigma(p(a))$ if and only if $\mu \in p(\lambda)$ for some $\lambda \in \sigma(a)$. The proof is complete.
\end{proof}

The theorem above is also valid for analytic functions instead of only polynomials in the case of analytical functions. Moreover, it is also valid for continuous functions on the case of $C^*$-algebras.

\begin{theorem}\label{teo:Neumann_series_Banach_algebra} Let $A$ be a unital Banach algebra and $a \in A$ with $\|a\|<1$. Then, $1-a \in \Inv A$ and 
\begin{equation*}
    (1-a)^{-1} = \sum_{n=0}^\infty a^n.
\end{equation*}
\end{theorem}

\begin{proof} Since $\|a\|<1$, one can use the submultiplicativity of the norm and obtain that
\begin{equation*}
    \sum_{n=0}^\infty \|a^n\| \leq \sum_{n=0}^\infty \|a\|^n = (1-\|a\|)^{-1}< \infty,
\end{equation*}
hence $\sum_{n=0}^\infty \|a^n\|$ is convergent and therefore $\sum_{n=0}^\infty a^n$ converges, we say to $b \in A$. Since $(1-a)(1+\cdots+a^n) = 1-a^{n+1}$ converges to $(1-a)b = b(1-a)$ and to $1$ we conclude that $b = (1-a)^{-1}$.
\end{proof}

The series above is called Neumann series for $(1-a)^{-1}$. We will only state the next theorem, but its proof is found in Murphy's book \cite{Murphy1990}.

\begin{theorem}\label{teo:Inv_A_open_inversion_map_differentiable} If $A$ is a unital Banach algebra, then $\Inv A$ is an open set and the inversion map
\begin{align*}
    \Inv A &\to A,\\
    a &\mapsto a^{-1},
\end{align*}
is differentiable.
\end{theorem}

\begin{lemma}\label{lema:spectrum_closed_differentiability} Let $A$ be a unital Banach algebra and $a \in A$. The spectrum $\sigma(a)$ of $a$ is a closed subset of 
\begin{equation*}
    D_{\|a\|}(0) :=\{z \in \mathbb{C}:|z|<\|a\|\}.
\end{equation*}
Furthermore, the map
\begin{align*}
    \mathbb{C}\setminus \sigma(a) &\to A,\\
    \lambda &\mapsto (a-\lambda)^{-1},
\end{align*}
is differentiable.
\end{lemma}

\begin{proof} If $|\lambda|>\|a\|$, then $\|\lambda^{-1}a\|<1$, and hence $1-\lambda^{-1}a \in \Inv A$, so is $\lambda-a$ and therefore $\lambda \notin \sigma(a)$. Thus $\lambda \in \sigma(a)$ necessarily implies that $|\lambda| \leq \|a\|$ and therefore $\sigma(a) \subseteq D_{\|a\|}(0)$. Now, let $F:\mathbb{C} \to A$ given by
\begin{equation*}
    F(\lambda) = \lambda - a,
\end{equation*}
which is a continuous function. Hence $\varphi^{-1}(\Inv A) = \mathbb{C}\setminus \sigma(a)$ is an open set\footnote{It is an equivalent definition of continuity: $f:X \to Y$ is continuous if and only if $f^{-1}(A)$ is an open set of $X$ for every $A \subseteq Y$ open.}, and therefore $\sigma(a)$ is closed. The differentiability of the map $\lambda \to (a-\lambda)^{-1}$ comes from the previous theorem.
\end{proof}

Note that the previous lemma actually says that $\sigma(a)$ is a compact set, since $\sigma(a)$ is closed and it is contained in the closed disc $D_{\|a\|}(0)$ which is a bounded set. For the next important result we recall the Liouville's Theorem from the complex analysis below.

\begin{theorem}[Liouville] Every entire bounded function must be constant.
\end{theorem}

We recall the reader that one of the equivalent definitions of an entire function is a function that it is holomorphic everywhere, that is, it is a complex valued function $f$ of one or more complex variables that is complex differentiable in a neighborhood for each point of its domain. By `bounded' we mean simply that there exists $M \geq 0$ s.t. $|f(z)|\leq M$ for every $z \in \mathbb{C}^n$.

\begin{theorem}[Gelfand] If $a$ is an element of a unital Banach algebra $A$, then the spectrum of $a$ is non-empty.
\end{theorem}

\begin{proof} Suppose that $\sigma(a) \neq \emptyset$. If $|\lambda|> 2 \|a\|$, then $\|\lambda^{-1}a\|<1/2$ and therefore $1 - \|\lambda^{-1}a\|>1/2$. Hence by Theorem \ref{teo:Neumann_series_Banach_algebra} we have that
\begin{align*}
    \left\|(1-\lambda^{-1}a)-1\right\| &= \left\|\sum_{n=1}^\infty(\lambda^{-1}a)^n\right\| \leq \sum_{n=1}^\infty\left\|\lambda^{-1}a\right\|^n = \frac{\left\|\lambda^{-1}a\right\|}{1-\left\|\lambda^{-1}a\right\|} \leq 2 \left\|\lambda^{-1}a\right\| <1.
\end{align*}
Consequently, $\left\|(1-\lambda^{-1}a)-1\right\|<2$, and hence
\begin{equation*}
    \left\|(a - \lambda)^{-1}\right\| = \left\|\lambda^{-1}(1-\lambda^{-1}a)\right\| < \frac{2}{\lambda} < \|a\|^{-1}.
\end{equation*}
Note that $\sigma(a) = \emptyset$ implies necessarily that $a \neq 0$ and then the inequality above makes sense. By Lemma \ref{lema:spectrum_closed_differentiability} we know that in particular the map
\begin{equation*}
    \lambda \mapsto (a-\lambda)^{-1}
\end{equation*}
is continuous and therefore bounded on the compact disc $2\|a\| D_1(0)$. Therefore the same map is bounded in all $\mathbb{C}$, and hence there is positive number $M$ s.t. $\left|(a-\lambda)^{-1}\right\|\leq M$.

Let $\tau \in A^*$. Hence the function $\lambda \mapsto \tau((a-\lambda)^{-1})$ is entire and bounded by $M\|\tau\|$, and by the Liouville's Theorem we conclude that such function is constant. In particular $\tau(a^{-1}) = \tau((a-1)^{-1})$ and hence $a^{-1}=(a-1)^{-1}$, since the elements of $A^*$ separates points on $A$. We conclude that $a = a-1$, a contradiction. The proof is complete.
\end{proof}

%As a consequence of the theorem above we have the Gelfand-Mazur theorem, as stated below.

%\begin{theorem}[Gelfand-Mazur] If $A$ is a unital Banach algebra in which every non-zero element is invertible, then $A = \mathbb{C}1$.
%\end{theorem}

\begin{definition}[Spectral radius] For $a$ an element of a unital Banach algebra $A$, the spectral radius of $a$ is the number
\begin{equation*}
    r(a) = \sup_{\lambda \in \sigma(a)} |\lambda|.
\end{equation*}
\end{definition}
Observe that by compactness of $\sigma(a)$ makes the $r(a)$ well defined as we call it a number. Moreover, since $\sigma(ab)\setminus\{0\} = \sigma(ba)\setminus\{0\}$ for every $a,b \in A$, we have that $r(ab) = r(ba)$.

\begin{example} For a compact Hausdorff space $X$ and $A = C(X)$, we have that $r(f) = \|f\|_\infty$ for all $f \in A$.
\end{example}

\begin{example} If $A = M_2(\mathbb{C})$ and
\begin{equation*}
    a = \begin{pmatrix}
    0&1 \\ 0&0
    \end{pmatrix},
\end{equation*}
we have that $r(a) = 0$ and that $\|a\|=1$ on any of the two possible norms presented in Example \ref{exa:normed_algebras} $(a)$.
\end{example}

The next theorem we state relates the spectral radius with the norm of a unital Banach algebra.

\begin{theorem}[Beurling]\label{thm:Beurling} For a unital Banach algebra we have that
\begin{equation*}
    r(a) = \inf_{n \geq 1} \|a^n\|^{1/n} = \lim_{n \to \infty} \|a^n\|^{1/n}, \quad \text{for all } a \in A.
\end{equation*}
\end{theorem}

\begin{example} $A = C^1([0,1])$ is a Banach algebra under the usual pointwise operations and with a norm given by

\begin{equation*}
    \|f\|:= \|f\|_\infty+\|f'\|_\infty, \quad f \in C^1([0,1]).
\end{equation*}
On this context, consider the inclusion $x:[0,1] \to \mathbb{C}$. Then $x \in A$. We have that $\|x^n\| = 1 + n$ for every $n \in \mathbb{N}$ and therefore $r(x)=\lim_{n \to \infty}(1+n)^{1/n} = 1 < 2 = \|x\|$.
\end{example}

In general we do need to impose that our Banach algebra a priori be unital. There exists a canonical process of including an unity element on a non-unital algebra called \textbf{unitization} which we will explain here. Let $A$ be an algebra and set $\widetilde{A}:= A \oplus \mathbb{C}$ as a vector space. Define the algebra multiplication on $\widetilde{A}$ as it follows.

\begin{equation*}
    (a,\lambda)(b,\mu) = (ab+\lambda b + \mu a, \lambda \mu).
\end{equation*}

The unit on $\widetilde{A}$ is $(0,1)$ and $\widetilde{A}$ is called the \textbf{unitization} of $A$. The map

\begin{align*}
    A \to \widetilde{A},\\
    a \mapsto (a,0),
\end{align*}
is an injective homomorphism, which we use to identify $A$ as an ideal of $\widetilde{A}$.

We now use the notation $ a + \lambda := (a,\lambda)$. The map
\begin{align*}
    \widetilde{A} \to \mathbb{C},\\
    a+\lambda \mapsto \lambda,
\end{align*}
is a unital homomorphism with kernel $A$, called canonical homomorphism.

If $A$ is a normed algebra, we define a norm on $\widetilde{A}$ by

\begin{equation*}
    \|a + \lambda\|:= \|a\| + |\lambda|.
\end{equation*}
if $A$ is a Banach algebra, so is $\widetilde{A}$.

Now it is easy to set the spectral theory for a non-unital Banach algebra. We simply set that 

\begin{equation*}
    \sigma_A(a):= \sigma_{\widetilde{A}}(a), \quad a \in A
\end{equation*}
where $\sigma_A(a)$ is the spectrum of $a$ as an element of $A$ and $\sigma_{\widetilde{A}}(a)$ is the spectrum of $a$ as an element of $\widetilde{A}$. Now we also can set the spectral radius of $a \in A$ in analogous way of the unital case, namely

\begin{equation*}
    r(a) = \sup_{\lambda \in \sigma_A(a)}|\lambda|.
\end{equation*}
Note that in this case $0 \in \sigma(a)$ for every element in $A$. Now all the results we studied have applications for non-unital Banach algebras.

Now we will briefly study the relationship between the spectrum of elements of an unital Banach algebra and the spectrum of the algebra itself. It is important the reader has in mind that the word `spectrum' is overloaded of different meanings here.

Let $A$ and $B$ be two algebras where $A$ is not unital and $B$ is unital. Given $\varphi:A \to B$, then there exists a unique unital homomorphism $\widetilde{\varphi}:\widetilde{A} \to B$ which extends $\varphi$.

\begin{proposition}\label{prop:homomorphism_spectrum} If $\varphi:A \to B$ is a unital homomorphism between the unital algebras $A$ and $B$, then $\sigma(\varphi(a)) \subseteq \sigma(a)$ for every $a \in A$. Consequently, $\varphi(\Inv A) \subseteq \Inv B$.
\end{proposition}

\begin{proof} Suppose that there exists $\lambda \in \sigma(\varphi(a))$ s.t. $\lambda \notin \sigma(a)$. We have that

\begin{equation*}
    a-\lambda \in \Inv A \quad \text{and} \quad \varphi(a)-\lambda \notin \Inv B.
\end{equation*}
Let $b = (a-\lambda)^{-1}$, we have that

\begin{equation*}
    1 = \varphi(1) = \varphi((a-\lambda)b) = (\varphi(a)-\lambda)\varphi(b)
\end{equation*}
and similarly we also obtain $\varphi(b)(\varphi(a)-\lambda)=1$. Therefore we conclude that $\varphi(b)=(\varphi(a)-\lambda)^{-1}$, leading to a contradiction. The last part of the proposition is straightforward.
\end{proof}

\begin{theorem}\label{thm:self_adjoint_norm_is_spectral_radius} For every self-adjoint element $a$ in a C$^*$-algebra we have $r(a) = \|a\|$. 
\end{theorem}

\begin{proof} In this case we have $\|a^2\| = \|a^*a\| = \|a\|^2$, by induction one can obtain
\begin{equation*}
    \|a^{2^n}\| = \|a\|^{2^n}.
\end{equation*}
By Theorem \ref{thm:Beurling} we have
\begin{equation*}
    r(a) = \lim_n \|a^n\|^{1/n} = \lim_n \|a^{2^n}\|^{2^{-n}} = \lim_n \|a\| = \|a\|.  \tag*{\qedhere}
\end{equation*}
\end{proof}

\begin{theorem}\label{thm:star_homomorphism_is_norm_decreasing} Let $\varphi:A\to B$ be a $*$-homomorphism from a $*$-Banach algebra to a C$^*$-algebra. Then, $\varphi$ is norm-decreasing. 
\end{theorem}

\begin{proof} W.l.o.g. we may suppose $A$ and $B$ unital. By Proposition \ref{prop:homomorphism_spectrum} we have $\sigma(\varphi(a)) \subseteq \sigma(a)$ for every $a \in A$, and then
\begin{align*}
    \|\varphi(a)\|^2 &= \|\varphi(a)^*\varphi(a)\| = \|\varphi(a^*a)\| \stackrel{(\dagger)}{=} r(\varphi(a^*a)) \leq r(a^*a) \leq \|a^*a\| \leq \|a\|^2,
\end{align*}
where in $(\dagger)$ we used Theorem \ref{thm:self_adjoint_norm_is_spectral_radius}.
\end{proof}

\begin{definition}[Characters] A character on an abelian algebra $A$ is a non-zero homomorphism $\tau:A \to \mathbb{C}$. The set of all characters on $A$ is said to be the spectrum of $A$ and it will be denoted by $\widehat{A}$.
\end{definition}

The next theorem links the characters of an abelian algebra to the maximal ideals of the same algebra.

\begin{theorem}[Theorem 1.3.3. of \cite{Murphy1990}] Let $A$ be a unital abelian Banach algebra.
\begin{itemize}
    \item if $\tau \in \widehat{A}$, then $\|\tau\|=1$;
    \item let $\mathfrak{I}(A)$ be the set of the maximal ideals on $A$. The map
    \begin{align*}
        \widehat{A} &\to \mathfrak{I}(A),\\
        \tau &\mapsto \ker \tau,
    \end{align*}
    is bijective.
\end{itemize}
\end{theorem}
The theorem above says that every maximal ideal in $A$ is a kernel of some character on $A$. However, the maximal ideals are not the unique relation on which the characters are participating, as we can see in the theorem below.

\begin{theorem}[Theorem 1.3.4. of \cite{Murphy1990}] Let $A$ be an abelian Banach algebra.
\begin{itemize}
    \item if $A$ is unital, then
    \begin{equation*}
        \sigma(a)=\{\tau(a):\tau \in \widehat{A}\}, \quad a \in A;
    \end{equation*}
    \item if $A$ is not unital, then
    \begin{equation*}
        \sigma(a)=\{\tau(a):\tau \in \widehat{A}\}\cup\{0\}, \quad a \in A.
    \end{equation*}
\end{itemize}
\end{theorem}

Naturally we may endow $\widehat{A}$ with the weak$^*$ topology.

Now we present the most important result of this section for this thesis. For a given abelian Banach algebra $A$ with $\widehat{A}\neq \emptyset$ and $a \in A$, we define the Gelfand transform of $a$, denoted by $\widehat{a}$, the function

\begin{align*}
    \widehat{a}: \widehat{A} &\to \mathbb{C},\\
    \tau &\mapsto \tau(a).
\end{align*}
The weak$^*$ topology on $\widehat{A}$ is the smallest topology that makes $\widehat{a}$ continuous for every $a \in A$. We present the Gelfand's representation Theorem.

\begin{theorem}[Gelfand representation] Suppose that $A$ is an abelian Banach algebra s.t. $\widehat{A} \neq \emptyset$. The map
\begin{align*}
    A &\to C_0(\widehat{A}),\\
    a &\mapsto \widehat{a},
\end{align*}
is a norm-decreasing homomorphism, and
\begin{equation*}
    r(a) = \|a\|_\infty, \quad a \in A.
\end{equation*}
If $A$ is unital, then $\sigma(a) = \widehat{a}(\widehat{A})$. If $A$ is non-unital, then $\sigma(a) = \widehat{a}(\widehat{A})\cup\{0\}$.
\end{theorem}

\begin{corollary}[Gelfand representation theorem for C$^*$-algebras]\label{corollary:Gelfand_C_star} Suppose that $A$ is a non-zero abelian C$^*$-algebra, then the Gelfand representation
\begin{align*}
    A &\to C_0(\widehat{A}),\\
    a &\mapsto \widehat{a},
\end{align*}
is an isometric $*$-isomorphism.
\end{corollary}

\begin{lemma}[Lemma 2.2.2 of \cite{Murphy1990}]\label{lemma:positivity_equivalence} Let $A$ be a unital C$^*$-algebra, $a \in A$ a self-ajoint element and $t \in \mathbb{R}$. Then $a \geq 0$ if $\|a-t1\|\leq t$. In the inverse direction, if $\|a\| \leq t$ and $a \geq 0$, then $\|a-t1\| \leq t$. 
\end{lemma}

\begin{corollary}\label{cor:positivity_t_equals_to_one} Let $A$ be a unital C$^*$-algebra and $a \in A$ a self-ajoint element satisfying $\|a\|\leq 1$. The element $a$ is positive if and only if $\|a-1\|\leq 1$. 
\end{corollary}

Now we present the Uryshon's Lemma in the version for locally compact Hausdorff spaces. This lemma and the Gelfand's Theorem will be used to prove Theorem \ref{thm:dense_character_general}, which states that sets of separating point characters are dense in the spectrum of a commutative C$^*$-algebra.

\begin{lemma}[Urysohn's Lemma for LCH spaces]\label{lemma:Urysohn_LCH} Let $X$ be a locally compact Hausdorff space. If $K,F \subseteq X$ are disjoint sets s.t. $K$ is compact and $F$ is closed, then there exists a continuous function $f:X \to [0,1]$ s.t. $f\vert_K \equiv 1$ and $f\vert_F \equiv 0$.
\end{lemma}

\begin{theorem}\label{thm:dense_character_general} Given a commutative $C^*$ algebra $B$, let $Y \subseteq \widehat{B}$ such that for any $a \in B$ it follows that
\begin{equation}\label{eq:Y_separates_spectrum_version}
    \varphi(a) = 0 \quad \forall \varphi \in Y \implies a=0,
\end{equation}
i.e., $Y$ separates points in $B$. Then $Y$ is dense in $\widehat{B}$ (weak$^*$ topology).
\end{theorem}

\begin{proof} By the Gelfand representation theorem for commutative $C^*$-algebras, it follows that the map 
\begin{align*}
    \Upsilon : B &\to C_0(\widehat{B}) \\
     b &\mapsto \widehat{b},
\end{align*}
where $\widehat{b}$ is the evaluation map on the spectrum, is an isometric $*$-isomorphism. Seeing $B$ as $C_0(X)$, $X = \widehat{B}$, which is locally compact\footnote{We recall that the spectrum of a commutative Banach algebra in general is locally compact, and it is compact if such algebra is unital.}, and $Y \subseteq X$, we may write the property the condition \eqref{eq:Y_separates_spectrum_version} as

\begin{equation}\label{eq:Y_separates_function_version}
    f(y) = 0 \quad \forall y \in Y \implies f\equiv 0.
\end{equation}
Observe that every point in $X$ has a relatively compact neighborhood\footnote{Fixed a topological space, s set is said to be relatively compact if its closure is compact.} since $X$ is locally compact and Hausdorff.

Now, suppose that $Y$ is not dense in $\widehat{B}$. Then there exists a non-empty open set $U \subseteq X$ such that $U \cap Y = \emptyset$. By the last paragraph, w.l.o.g. we may assume that $U$ is relatively compact. Take $x_0 \in U$. Since $\{x_0\}$ is compact, by the Urysohn's Lemma for LCH spaces (Lemma \ref{lemma:Urysohn_LCH}), there exists a continuous function $F:X \to [0,1]$ such that $F(x_0) = 1$ and $F\vert_{U^c} = 0$. Note that supp$F = \overline{U}$, which is compact since $U$ is relatively compact. We conclude that $F \in C_c(X)\subseteq C_0(X)$. Since $Y \subseteq U^c$, we have that $F(y) = 0$ for all $y \in Y$ and $F \not\equiv 0$, a contradiction due to the hypothesis of the validity of \eqref{eq:Y_separates_function_version}.
\end{proof}

\begin{definition} Let $A$ be a C$^*$-algebra $A$. An element $a \in A$ is said to be positive if it is self-adjoint and
\begin{equation*}
    \sigma(a) \subseteq \mathbb{R}_+.
\end{equation*}
The set of positive elements of $A$ will be denoted by $A^+$.
\end{definition}

\begin{remark}\label{positive_C_star_poset} For every C$^*$-algebra $A$, there is a natural partial order on $A^+$: we say that $a \geq b$, when $a-b \geq 0$. 
\end{remark}

%Now we present the notion of approximate unit, which will be useful when we introduce KMS states in section \ref{section:kms}.

%\begin{definition}[approximate units] Given a C$^*$-algebra $A$, consider an increasing net $\{ u_\lambda \}_{\lambda \in \Lambda}$, where $u_\lambda \in A^+ \cap \overline{B_1}$ for every $\lambda \in \Lambda$, where $B_1$ is the unit ball of $A$. We say that $\{ u_\lambda \}_{\lambda \in \Lambda}$ is an approximate unit for $A$ when
%\begin{equation*}
%    \lim_\lambda a u_\lambda = a,
%\end{equation*}
%for every $a \in A$, or equivalently, 
%\begin{equation*}
%    \lim_\lambda u_\lambda a = a,
%\end{equation*}
%\end{definition}

%In particular, every C$^*$-algebra admits an approximate unit, by theorem 3.1.1. of \cite{Murphy1990}. Moreover, by remark 3.1.1. of \cite{Murphy1990}, for separable C$^*$-algebras, an approximate unit can be taken as a sequence. Now, we define the notion of positive linear functionals on a C$^*$-algebra and relate it with the approximate units.

\begin{definition} A linear map $\varphi:A \to B$ between two C$^*$-algebras $A$ and $B$ is said to be positive when $\varphi(A^+) \subseteq \varphi(B^+)$. In particular, when $B = \mathbb{C}$, $\varphi$ is said to be a positive linear functional.
\end{definition}

\begin{remark} Equivalently, $\varphi$, as in definition above, is a positive linear functional if and only if $\varphi(a^*a) \geq 0$. This is straightforward consequence of Theorem 2.2.5 (1) of \cite{Murphy1990}, which gives the following characterization of positive elements of a C$^*$-algebra $A$:
\begin{equation*}
    A^+ = \{a^*a: a \in A\}.
\end{equation*}
\end{remark}

\begin{remark} In particular, every $*$-homomorphism is positive. Observe that there are positive functionals that are not $*$-homomorphisms. For instance, take $A = C(\mathbb{T})$ and consider $m$ as being the normalized arc length measure on $\mathbb{T}$. The linear functional $\varphi : C(\mathbb{T}) \to \mathbb{C}$ given by
\begin{equation*}
    \varphi(g) := \int_{\mathbb{T}} g dm
\end{equation*}
is positive and it is not a $*$-homomorphism.
\end{remark}

The next result is the Theorem 3.3.1. of \cite{Murphy1990}.

\begin{theorem} %Murphy T 3.3.1 pg 88
\label{thm:positivefunctionalbounded}
If $\varphi$ is a positive linear functional on a C$^*$-algebra $A$, then it is bounded.
\end{theorem}

%\begin{theorem} %Murphy T 3.3.3 pg 88
%\label{thm:stateapproximateunit}
%Let $\varphi$ be a bounded linear functional on a C$^*$-algebra $A$. The following conditions are equivalent:
%\begin{itemize}
%\item[$(1)$] $\varphi$ is positive.
%\item[$(2)$] For each approximate unit $\{ u_\lambda \}_{\lambda \in \Lambda}$, $\Vert \varphi \Vert = \lim_{\lambda} \varphi(u_\lambda)$.
%\item[$(3)$] For some approximate unit $\{ u_\lambda \}_{\lambda \in \Lambda}$, $\Vert \varphi \Vert = \lim_{\lambda} \varphi(u_\lambda)$.
%\end{itemize}
%\end{theorem}

\section{Universal algebras generated by relations}

One of the main structures of this work is the Exel-Laca algebra, which is a universal algebra generated by partial isometries under certain relations. In order to present these algebras properly, we introduce the notion of universal algebras. We will see further that the relations of the Exel-Laca algebras are polynomials and then it is sufficient to study these universal structures under the condition where the relations are polynomials, as it is in \cite{Blackadar1985,Blackadar2006,Tasca2015}. However we inform the reader that these universal structures can be realized for a more general nature of the relations, we mention for example the work \cite{Phillips1988}, where the relations can be any statement on the generators that makes sense in a C$^*$-algebra.

First, we construct the free associative complex $*$-algebra generated by a set.

\begin{definition}[Free semigroup] Let $\mathcal{D}$ be a non-empty set. The free semigroup $\mathcal{F}_\mathcal{D}$ generated by $\mathcal{D}$ is the semigroup of the finite non-empty sequences of elements of $\mathcal{D}$, endowed by the juxtaposition product of these sequences: given $x_1\cdots x_n, y_1\cdots y_m \in \mathcal{F}_\mathcal{D}$, $n,m\in \mathbb{N}$, then
\begin{equation*}%\label{eq:juxtaposition_prod}
    (x_1\cdots x_n) \cdot (y_1\cdots y_m) := x_1\cdots x_ny_1\cdots y_m.
\end{equation*}
%The free semigroup with identity $\mathcal{F}_\mathcal{D}^e$ generated by $\mathcal{D}$ is the semigroup of the finite sequences, inclunding the empty one, denoted by $e$, of elements of $\mathcal{D}$, where the product is the same one as in \eqref{eq:juxtaposition_prod}, with the additional identity rule
%\begin{equation*}
%    g \cdot e = e \cdot g = g,
%\end{equation*}
%for every $g \in \mathcal{F}_\mathcal{D}^e$. In both semigroups the set $\mathcal{D}$ is said to be the generator set of the semigroup.
The set $\mathcal{D}$ is said to be the generator set of $\mathcal{F}_\mathcal{D}$.
\end{definition}

It is straightforward that $\mathcal{F}_\mathcal{D}$ is in fact a semigroup, since the juxtaposition product is associative. %Also, $\mathcal{F}_\mathcal{D}^e$ has a unique identity.

\begin{definition}[Free algebra] Let $\mathcal{D}$ be a non-empty set. The free algebra $\mathcal{A}_\mathcal{D}$ generated by $\mathcal{D}$ is the algebra over $\mathbb{C}$ spanned by the elements of the free semigroup $\mathcal{F}_\mathcal{D}$, that is, the set of the complex polinomials on the non-commutative variables of $\mathcal{D}$:
\begin{equation*}
    \mathcal{A}_\mathcal{D} := \spann\left\{\mathcal{F}_\mathcal{D}\right\} = \left\{\sum_{i = 1}^n\lambda_i r_i: n \in \mathbb{N}; \lambda_i \in \mathbb{C}, r_i \in \mathcal{F}_\mathcal{D}, 1 \leq i \leq n\right\}.
\end{equation*}
where the operations of addition, product by scalar are the usual ones for polynomials. The algebra product is the one inherited one from the semigroup $\mathcal{F}_\mathcal{D}$. We say that $\mathcal{D}$ is the generator set of $\mathcal{A}_D$.
\end{definition}

Now we introduce an involution on the free algebra.

\begin{definition}[Free $*$-algebra] Let $D$ be a non-empty set. Define a copy of $D$ given by
\begin{equation*}
  D^*=\{d^*: d \in D\}.
\end{equation*}
Consider the set $\mathcal{D}= D\sqcup D^*$. The free associative complex $*$-algebra generated by the set $D$ is the free algebra $\mathcal{A}_D$ endowed with the following involution $*$. Given $x\in \mathcal{D}$, define
\begin{equation*}
    x^*:= \begin{cases}
            h, \quad \text{if } x = h^* \in D^*,\\
            h^*,\quad \text{if } x = h \in D.
          \end{cases}
\end{equation*}
Now, for every $r = x_1 \cdots x_n \in \mathcal{F}_\mathcal{D}$, $x_i \in \mathcal{D}$ for every $i$, set
\begin{equation*}
    r_i^* = x_n^* \cdots x_1^* \in \mathcal{F}_\mathcal{D}.
\end{equation*}
Finally, extend the operation above by taking
\begin{equation*}
    \left(\sum_{i = 1}^n\lambda_i r_i\right)^* = \sum_{i = 1}^n\lambda_i r_i^*,
\end{equation*}
for every $\sum_{i = 1}^n\lambda_i r_i \in \mathcal{A}_D$, where $r_i \in \mathcal{F}_\mathcal{D}$ and $\lambda_i \in \mathbb{C}$ for each $ 1 \leq i \leq n$. The set $D$ is said to be the generator set of the $*$-algebra $\mathcal{A}_D$.
\end{definition}

\begin{remark} Observe that the free $*$-algebras are not unital, however, as we see further in this chapter, this does not interfer on the construction of unital universal C$^*$-algebras. 
\end{remark}

From now, we fix $D$ a generator set of a free $*$-algebra $\mathcal{A}_D$, and we refer to $D$ simply by `generator set'.

\begin{proposition}\label{prop:homomorphism_extension_free_algebra} Let $D$ be the generator and a C$^*$-algebra $B$. Also and consider a function $\Theta_0:D \to  B$. There exists a unique $*$-homomorphism $\Theta:\mathcal{A}_D \to B$ which is the extension of $\Theta_0$.
\end{proposition}

\begin{proof} For every $d \in D$, $g_1,...,g_n \in \mathcal{F}_{\mathcal{D}}$, and $a = \sum_{i=1}^m \lambda_i x_i \in \mathcal{A}_D$, $\lambda_i \in \mathbb{C}$ and $x_i \in \mathcal{F}_{\mathcal{D}}$ for all $i$, define $\Theta$ as follows:
\begin{align}
    \Theta(d) &:= \Theta_0(d);\label{eq:theta_extension}\\
    \Theta(d^*) &:= \Theta(d)^* = \Theta_0(d)^*;\label{eq:theta_involution}\\
    \Theta(g_1\cdots g_n) &:= \Theta(g_1)\cdots \Theta(g_n)\label{eq:theta_prod};\\
    \Theta\left(\sum_{i=1}^m \lambda_i x_i\right) &:= \sum_{i=1}^m \lambda_i \Theta(x_i). \label{eq:theta_linearity}
\end{align}
By \eqref{eq:theta_extension}, it is clear that $\Theta$ is an extension of $\Theta_0$. The identity \eqref{eq:theta_linearity} imposes the linearity of $\Theta$. Consider now $b = \sum_{j=1}^{\ell} \lambda'_j y_j \in \mathcal{A}_D$, $\lambda'_j \in \mathbb{C}$ and $y_j \in \mathcal{F}_{\mathcal{D}}$ for all $j$. By \eqref{eq:theta_linearity} and \eqref{eq:theta_prod} we have
\begin{align*}
    \Theta(ab) &= \Theta\left(\left[\sum_{i=1}^m \lambda_i x_i\right] \left[ \sum_{j=1}^{\ell} \lambda'_j y_j\right]\right) = \Theta\left(\sum_{i=1}^m \sum_{j=1}^{\ell} (\lambda_i \lambda'_j) (x_i y_j) \right) = \sum_{i=1}^m \sum_{j=1}^{\ell} (\lambda_i \lambda'_j) \Theta(x_i y_j)\\
    &= \sum_{i=1}^m \sum_{j=1}^{\ell} (\lambda_i \lambda'_j) \Theta(x_i)\Theta (y_j) = \left[\sum_{i=1}^m \lambda_i \Theta(x_i)\right] \left[\sum_{j=1}^{\ell} \lambda'_j \Theta(y_j)\right]\\
    &= \Theta\left(\sum_{i=1}^m \lambda_i x_i\right) \Theta\left(\sum_{j=1}^{\ell} \lambda'_j y_j\right) = \Theta(a)\Theta(b),
\end{align*}
and then $\Theta$ is a homomorphism. Now, write $x_i = a_{1,i} \cdots a_{\phi(i),i}$  for each $i = 1,...,m$, where $\phi:\{1,...,m\}\to \mathbb{N}$ is a function and $a_{p,q}\in \mathcal{D}$ for every $1\leq p \leq \phi(q)$ and each $1 \leq q \leq m$. Then, by \eqref{eq:theta_involution} and the fact that $\Theta$ is a homomorphism, we have
\begin{align*}
    \Theta(a^*) &= \Theta\left(\left[\sum_{i=1}^m \lambda_i x_i\right]^*\right) = \Theta\left(\sum_{i=1}^m \overline{\lambda_i} a_{\phi(i),i}^* \cdots a_{1,i}^*\right) = \sum_{i=1}^m \overline{\lambda_i} \Theta(a_{\phi(i),i}^* \cdots a_{1,i}^*)\\
    &= \sum_{i=1}^m \overline{\lambda_i} \Theta(a_{\phi(i),i}^*) \cdots \Theta(a_{1,i}^*) = \sum_{i=1}^m \overline{\lambda_i} \Theta(a_{\phi(i),i})^* \cdots \Theta(a_{1,i})^*\\
    &= \sum_{i=1}^m \overline{\lambda_i} (\Theta(a_{1,i}) \cdots \Theta(a_{\phi(i),i}))^* = \sum_{i=1}^m \overline{\lambda_i} (\Theta(a_{1,i} \cdots a_{\phi(i),i}))^* = \Theta\left(\sum_{i=1}^m \lambda_i x_i\right)^* = \Theta(a)^*.
\end{align*}
Therefore $\Theta$ is a $*$-homomorphism. Given $\tilde{\Theta}:\mathcal{A}_D \to B$ a $*$-homomorphism which extends $\Theta_0$, we have $\tilde{\Theta}(d) = \Theta_0(d) = \Theta(d)$ for every $d \in D$, and since $\tilde{\Theta}$ is a homomorphism, we also have $\tilde{\Theta}(d^*) = \tilde{\Theta}(d)^* = \Theta(d)^*$. Then, for every $a = \sum_{i=1}^m \lambda_i a_{1,i} \cdots a_{\phi(i),i} \in \mathcal{A}_D$, we obtain
\begin{align*}
    \tilde{\Theta}(a) &= \tilde{\Theta}\left(\sum_{i=1}^m \lambda_i a_{1,i} \cdots a_{\phi(i),i}\right) = \sum_{i=1}^m \lambda_i\tilde{\Theta}( a_{1,i}) \cdots \tilde{\Theta}(a_{\phi(i),i}) = \sum_{i=1}^m \lambda_i\Theta( a_{1,i}) \cdots \Theta(a_{\phi(i),i})\\
    &= \sum_{i=1}^m \lambda_i \Theta(a_{1,i} \cdots a_{\phi(i),i}) = \Theta\left(\sum_{i=1}^m \lambda_i a_{1,i} \cdots a_{\phi(i),i}\right) = \Theta(a).
\end{align*}
Therefore, $\tilde{\Theta} = \Theta$, that is, such homomorphism is the unique one such that extends $\Theta_0$.
\end{proof}

\begin{remark} Observe that the function $\Theta_0$ can also be extended uniquely to the unitization $\tilde{\mathcal{A}}_D$ by taking $\Theta(1) = 1$. 
\end{remark}

Now that we constructed the free $*$-algebra, the next important step on the construction of universal C$^*$-algebras it is to define relations on the generator elements, and then we may construct representations for the free $*$-algebra, which allows us to construct a C$^*$-seminorm that can be turned into a C$^*$-norm. During this construction, it arises naturally the fact that the relations between the generator elements must satisfy an admissibility condition for the universal C$^*$-algebra does exist. 

\begin{definition}[Relations] A relation on a non-unital $*$-algebra $B$ is a pair $(r,t) \in \tilde{B} \times \mathbb{R}_+$, where $\tilde{B}$ is the unitization of $B$.
\end{definition}

\begin{definition}[Representations of generators] Let $D$ be the set of generators of the $*$-algebra $\mathcal{A}_D$ and $\mathscr{R}\subset \tilde{\mathcal{A}}_D \times \mathbb{R}_+$ a family of relations on $\mathcal{A}_D$. Given $B$ a C$^*$-algebra and a function $\Theta_0: D \to B$, we say that $\Theta_0$ is a representation for the pair $(D,\mathscr{R})$ if
\begin{equation*}
    \|\Theta(r)\| \leq t,
\end{equation*}
for every $(r,t) \in \mathscr{R}$, where $\Theta:\mathfrak{A}_D \to \mathfrak{B}$ is the unique $*$-homomorphism extending $\Theta_0$ such that 
\begin{equation*}
    \mathfrak{A}_D = \begin{cases}
                        \mathcal{A}_D, \quad \text{if }\mathscr{R} \subset \mathcal{A}_D \times \mathbb{R}_+,\\
                        \tilde{\mathcal{A}}_D, \quad \text{otherwise};
                     \end{cases}
    \quad \text{and} \quad
    \mathfrak{B} = \begin{cases}
                        B, \quad \text{if }\mathscr{R} \subset \mathcal{A}_D \times \mathbb{R}_+,\\
                        B, \quad \text{if }\mathscr{R} \not\subset \mathcal{A}_D \times \mathbb{R}_+ \text{ and $B$ is unital},\\
                        \tilde{B}, \quad \text{otherwise}.
                   \end{cases}
\end{equation*}
\end{definition}

\begin{definition}[Admissibility] Let $D$ be the set of generators of the $*$-algebra $\mathcal{A}_D$ and $\mathscr{R}\subset \tilde{\mathcal{A}}_D \times \mathbb{R}_+$ a family of relations on $\mathcal{A}_D$. We say the pair $(D,\mathscr{R})$ is admissible if for every $d \in D$ there exists $K_d \in \mathbb{R}_+$ such that $\|\Theta(d)\| \leq K_d$ for every $\Theta$ representation for $(D,\mathscr{R})$.
\end{definition}

\begin{example}\label{exa:pairs} Here are some examples of admissible and non-admissible pairs:
\begin{itemize}
    \item[$(i)$] by taking $D = \{x\}$ and $\mathscr{R} = \emptyset$, we have that the functions in the form
    \begin{equation*}
        \Theta_{0,\lambda}(x) = \begin{pmatrix}
                                 \lambda & 0\\
                                 0 & \lambda
                                \end{pmatrix} \in M_2(\mathbb{C}),
        \quad \lambda \in \mathbb{C},
    \end{equation*}
    are representations of $(D,\mathscr{R})$, by vacuosity. However, this pair is not admissible, since for each $K \in \mathbb{R}_+$, we can take $\lambda \in \mathbb{C}$ such that $|\lambda| > K$ and then
    \begin{equation*}
        \|\Theta_\lambda(x)\| = \left\|\begin{pmatrix}
                                 \lambda & 0\\
                                 0 & \lambda
                                \end{pmatrix}\right\| = |\lambda| > K;
    \end{equation*}
    \item[$(ii)$] by taking $D = \{x\}$ and $\mathscr{R} = \{(x-x^*,0)\}$, we have that the functions in the form $\Theta_{0,\lambda}(x) = \lambda$, $\lambda \in \mathbb{R}$, are representations of $(D,\mathscr{R})$. Indeed, their respective extensions $\Theta_\lambda$ give
    \begin{equation*}
        \|\Theta(x-x^*)\| = \|\Theta(x) - \Theta(x)^*\| = \|\lambda - \lambda\| = 0.
    \end{equation*}
    However, such pair is not admissible, since for every $K \in \mathbb{R}_+$, we may take $\lambda > K$ and then
    \begin{equation*}
        \|\Theta(x)\| = \lambda > K;
    \end{equation*}
    \item[$(iii)$] by taking $D = \{x\}$ and $\mathscr{R} = \{(x-x^*,0),(x,1),(1-x^2,1)\}$, we have that the functions in the form $\Theta_{0,\lambda}(x) = \lambda$, $\lambda \in [0,1]$, are representations of $(D,\mathscr{R})$. Indeed, their respective extensions, $\Theta_\lambda$ give
    \begin{align*}
        \|\Theta(x-x^*)\| &= \|\Theta(x) - \Theta(x)^*\| = \|\lambda - \lambda\| = 0,\\
        \|\Theta(x)\| &= \lambda \leq 1,\\
        \|1-\Theta(x)^2\| &= \|1-\lambda^2\| \leq 1,
    \end{align*}
    Moreover, such pair is admissible, since for every $\Theta$ representation for $(D,\mathscr{R})$ we have
    \begin{equation*}
        \|\Theta(x)\| \leq 1;
    \end{equation*}
    \item[$(iv)$] by taking $D = \{x\}$ and $\mathscr{R} = \{(1-x^*x,0),(1-xx^*,0)\}$, we have that the functions in the form $\Theta_{0,\lambda}(x) = f_\lambda$, $\lambda \in \mathbb{R}$, where $f_\lambda:[0,2\pi] \to \mathbb{C}$, is the function $f(y) = e^{-i\lambda y}$, are representations of $(D,\mathscr{R})$. Indeed, their respective extensions, $\Theta_\lambda$ give
    \begin{align*}
        \|\Theta(1-x^*x)\| &= \|\Theta(1)-\Theta(x^*)\Theta(x)\| = \|1 - e^{i\lambda y}e^{-i\lambda y}\| = \|1-1\| = 0,\\
        \|\Theta(1-xx^*)\| &= \|\Theta(1)-\Theta(x)\Theta(x^*)\| = \|1 - e^{-i\lambda y}e^{i\lambda y}\| = \|1-1\| = 0,
    \end{align*}
    Also, such pair is admissible, since for every $\Theta$ representation for $(D,\mathscr{R})$ we have
    \begin{equation*}
        \|\Theta(x)\|^2 = \|\Theta(x)^*\Theta(x)\| = \|1\| = 1,
    \end{equation*}
    that is $\|x\| = 1$;
    \item[$(v)$] take $D = \{x_{ij}:1 \leq i,j \leq n\}$ and $\mathscr{R} = \{(x_{ij}-x_{ji}^*,0),(x_{ij}x_{k\ell}-\delta_{jk}x_{i\ell},0): 1\leq i,j,k,\ell\leq n\}$, where $\delta_{ij}$ is the Kronecker delta. Consider the function $\Theta_0:D \to M_n(\mathbb{C})$, given by
    \begin{equation*}
        \Theta_0(x_{ij}) = U_{ij},\quad 1 \leq i,j \leq n,
    \end{equation*}
    where $U_{ij}$ is the matrix unit of $M_n(\mathbb{C})$ given by
    \begin{equation}
        (U_{ij})_{pq} = \delta_{ip}\delta_{jq},
    \end{equation}
    that is, it assigns $1$ for the entry in the $i$-th row and $j$-th column of $U_{ij}$ zero for its remaining entries. $\Theta_0$ is a representation for $(D,\mathscr{R})$ since the following is true:
    \begin{align*}
        \|\Theta(x_{ij}-x_{ji}^*)\| &= \|U_{ij}-U_{ji}^*\| = \|U_{ij}-U_{ij}\| = 0,\\
        \|\Theta(x_{ij}x_{k\ell}-\delta_{jk}x_{i\ell})\| &= \|U_{ij}U_{k\ell}-\delta_{jk}U_{i\ell}\| = 0.
    \end{align*}
    Furthermore, the pair $(D,\mathscr{R})$ is admissible. In fact, for every $\Theta_0$ representation of $(D,\mathscr{R})$ we have for every $1 \leq i \leq n$ that
    \begin{align*}
        \|\Theta(x_{ii})\|^2 &= \|\Theta(x_{ii})^*\Theta(x_{ii})\| = \|\Theta(x_{ii})\Theta(x_{ii})\| = \|\Theta(x_{ii})\|, 
    \end{align*}
    that is, $\|\Theta(x_{ii})\| \in \{0,1\}$ and then $\|\Theta(x_{ii})\| \leq 1$. On the other hand, for every $1 \leq i,j \leq n$
    \begin{align*}
        \|\Theta(x_{ij})\|^2 &= \|\Theta(x_{ij})^*\Theta(x_{ij})\| = \|\Theta(x_{ij})\Theta(x_{ij})\| = \|\Theta(x_{jj})\|^2 \leq 1. 
    \end{align*}
\end{itemize}
\end{example}

\begin{proposition}\label{prop:representation_C_star_seminorm} Let $D$ a set of generators and $\mathscr{R} \subset \tilde{\mathcal{A}}_D \times \mathbb{R}_+$ a family of relations such that the pair $(D,\mathscr{R})$ is admissible. Define $\Gamma$ to be the set of all representations of the pair $(D,\mathscr{R})$ for all possible C$^*$-algebras. Let $\tnorm{\cdot}:\mathcal{A}_D \to \mathbb{R}_+$ the function defined by
\begin{equation}\label{eq:representation_C_star_seminorm}
    \tnorm{a} := \sup_{\Theta \in \Gamma}\|\Theta(a)\|, \quad a \in \mathcal{A}_D,
\end{equation}
is a C$^*$-seminorm on $\mathcal{A}_D$.
\end{proposition}

\begin{proof} Firstly we show that $\tnorm{\cdot}$ is well-defined. In fact, let $a = \sum_{i=1}^m \lambda_m x_m \in \mathcal{A}_D$, where $\lambda_i \in \mathbb{C}$ and $x_i \in \mathcal{F}_{\mathcal{D}}$ for all $i$. We can write $x_i = a_{1,i} \cdots a_{\phi(i),i}$  for each $i = 1,...,m$, where $\phi:\{1,...,m\}\to \mathbb{N}$ is a function and $a_{p,q}\in \mathcal{D}$ for every $1\leq p \leq \phi(q)$ and each $1 \leq q \leq m$. Since $(D,\mathscr{R})$ is admissible, we have that for each $d \in D$ that there exists $K_d \in \mathbb{R}_+$ such that $\|\Theta(d)\| \leq K_d$ for every representation $\Theta$ and then $\|\Theta(d^*)\| \leq K_d$, and consequently for every $i$ and $\Theta$ we have
\begin{equation*}
    \|\Theta(x_i)\| \leq \prod_{j=1}^{\phi(i)}\|\Theta(a_{j,i})\| \leq K_{x_i} < \infty
\end{equation*}
where $K_{x_i}:=\prod_{j=1}^{\phi(i)}K_{a_{j,i}}$. Then,
\begin{equation*}
    \|\Theta(a)\| \leq \sum_{i=1}^m K_{x_i} < \infty,
\end{equation*}
and since it does not depend on $\Theta$, we get
\begin{equation*}
    \tnorm{a} \leq \infty,
\end{equation*}
and hence $\tnorm{\cdot}$ is well-defined. Now, we prove that this function is in fact a C$^*$-seminorm: for every $a,b \in \mathcal{A}_D$ and $\lambda \in \mathbb{C}$ we have
\begin{align*}
    \tnorm{\lambda a} &= \sup_{\Theta \in \Gamma}\|\Theta(\lambda a)\| = \sup_{\Theta \in \Gamma}\|\lambda\Theta(a)\| = \sup_{\Theta \in \Gamma}\{|\lambda|\|\Theta(a)\|\} = |\lambda|\sup_{\Theta \in \Gamma}\|\Theta(a)\| = |\lambda|\tnorm{a};\\
    \tnorm{a+b} &= \sup_{\Theta \in \Gamma}\|\Theta(a+b)\| = \sup_{\Theta \in \Gamma}\|\Theta(a)+\Theta(b)\| \leq \sup_{\Theta \in \Gamma}\{\|\Theta(a)\|+\|\Theta(b)\|\} \leq \sup_{\Theta \in \Gamma}\|\Theta(a)\| + \sup_{\Theta \in \Gamma}\|\Theta(b)\|.
\end{align*}
Then $\tnorm{\cdot}$ is a seminorm. Moreover,
\begin{align*}
    \tnorm{ab} &= \sup_{\Theta \in \Gamma}\|\Theta(ab)\| = \sup_{\Theta \in \Gamma}\|\Theta(a)\Theta(b)\| \leq \sup_{\Theta \in \Gamma}\{\|\Theta(a)\|\|\Theta(b)\|\} \leq \left(\sup_{\Theta \in \Gamma}\|\Theta(a)\|\right) \left(\sup_{\Theta \in \Gamma}\|\Theta(b)\|\right);\\
    \tnorm{a^*} &= \sup_{\Theta \in \Gamma}\|\Theta(a^*)\| = \sup_{\Theta \in \Gamma}\|\Theta(a)^*\| = \sup_{\Theta \in \Gamma}\|\Theta(a)\| = \tnorm{a};\\
    \tnorm{a^*a} &= \sup_{\Theta \in \Gamma}\|\Theta(a^*a)\| = \sup_{\Theta \in \Gamma}\|\Theta(a)^*\Theta(a)\| = \sup_{\Theta \in \Gamma}\|\Theta(a)\|^2 = \left(\sup_{\Theta \in \Gamma}\|\Theta(a)\|\right)^2 = \tnorm{a}^2.
\end{align*}
Therefore, $\tnorm{\cdot}$ is a C$^*$-seminorm. 
\end{proof}

By Proposition \ref{prop:representation_C_star_seminorm}, the map $\tnorm{\cdot}$, as in \eqref{eq:representation_C_star_seminorm}, is a C$^*$-seminorm for and $\mathcal{A}_D$ and then we may conclude by the proposition \ref{prop:C_star_seminorm_two_sided_ideal} that $N:=\{a\in \mathcal{A}_D: \tnorm{a} = 0\}$ is two-sided self-adjoint ideal. Furthermore, Theorem \ref{thm:quotient_star_normed_algebra} ensures that $\mathcal{A}_D/N$ is a normed $*$-algebra for the C$^*$-norm
\begin{equation}\label{eq:universal_norm}
    \|a + N\| = \tnorm{a}, \quad a \in \mathcal{A}_D.
\end{equation}
And by Proposition \ref{prop:completion_star_algebra_C_star_norm}, the norm completion $(\mathcal{A}_D/N)^{\|\cdot\|}$ is a C$^*$-algebra and it is our central definition of this chapter. 

\begin{definition}[Universal C$^*$-algebra] Let $D$ be a generator set and $\mathscr{R}$ be a family of relations on $\mathcal{A}_D$, such that the pair $(D,\mathscr{R})$ is admissible. The universal C$^*$-algebra generated by the set $D$ satifying the relations $\mathscr{R}$ is the C$^*$-algebra $(\mathcal{A}_D/N)^{\|\cdot\|}$, where $\|\cdot\|$ is the norm in \eqref{eq:universal_norm}, and we denote it by $C^*(D,\mathscr{R})$. 
\end{definition}

\begin{remark} Observe that $\mathcal{A}_D/N$ is dense on $C^*(D,\mathscr{R})$ by construction. This fact will be used further when we present examples of universal C$^*$-algebras and specially when we study the Cuntz-Krieger algebras and their generalizations, the Exel-Laca algebras. 
\end{remark}

Universal C$^*$-algebras generated by sets and relations have two main aspects. The first one is that their generators in fact satisfy the relations that their free algebra is imposed to obey. The second one is the universal property, which consists in the existence of a unique $*$-homomorphism between the universal algebra and any other C$^*$-algebra that admits a representation for the generating pair. The next proposition proves the first aspect and, after we present some important examples, we prove the universal property. From now on, given a universal C$^*$-algebra $C^*(D,\mathscr{R})$, we set $\Psi:\mathcal{A}_D \to C^*(D,\mathscr{R})$ as being the canonical projection map.

\begin{proposition}\label{prop:universal_C_star_algebra_satisfies_its_relations} Let $(D,\mathscr{R})$ be an admissible pair. Then the elements $\Psi(a)$, $a \in \mathcal{A}_D$, including the elements of $\tilde{\mathcal{A}}_D$ when $\mathscr{R} \not\subset \mathcal{A}_D \times \mathbb{R}_+$, satisfy the respective relations in $\mathscr{R}$ which they are imposed to obey before applying $\Psi$. That is, if $(a,t) \in \mathscr{R}$, then
\begin{equation*}
    \|\Psi(a)\| \leq t.
\end{equation*}
\end{proposition}

\begin{proof} Given $(a,t) \in \mathscr{R}$, then
\begin{equation*}
    \|\Psi(a)\| = \tnorm{a} = \sup_{\Theta \in \Gamma}\|\Theta(a)\| \leq \sup_{\Theta \in \Gamma} t = t.
\end{equation*} 
\end{proof}

\begin{example} We present next some well-known examples of universal C$^*$-algebras.
\begin{itemize}
    \item[$(a)$] Let $\mathfrak{A}$ a C$^*$-algebra. Take $D = A$ and $\mathscr{R}$ as the set of the identities from the axioms of C$*$-algebras, for instance, $((ab)^*-b^*a^*,0)$ for every $a.b \in A$. Then $C^*(A,\mathscr{R}) \simeq A$.
    \item[$(b)$] Set $D = \{x\}$ and $\mathscr{R} = \{(x-x^*,0),(x,1),(1-x^2,1)\}$. Then $C^*(D,\mathscr{R}) \simeq C_0((0,1])$. In fact, $\mathcal{A}_D$ is generated by only one element and then it is abelian. Consequently $\mathcal{A}_D/N$ is also abelian, and so it is $C^*(D,\mathscr{R})$. We indentify its spectrum $\Omega[C^*(D,\mathscr{R})]$. For every $\varphi \in \Omega[C^*(D,\mathscr{R})]$, since this C$^*$-algebra is generated by only one element, $\varphi$ is completely determined simply by evaluating $\varphi(\Psi(x))$, because its extension to the whole C$^*$-algebra is unique. In other words, the elements of $\Omega[C^*(D,\mathscr{R})]$ are determined by the equation
    \begin{equation*}
        \varphi_\lambda(\Psi(x)) = \lambda, \quad \varphi_\lambda \in \Omega[C^*(D,\mathscr{R})], \quad \lambda \in \mathbb{C}.
    \end{equation*}
    Now we determine the possible values for $\lambda$ by using Proposition \ref{prop:universal_C_star_algebra_satisfies_its_relations}. The relation $(x-x^*,0)$ gives
    \begin{equation*}
        \lambda = \varphi_\lambda(\Psi(x)) = \varphi_\lambda(\Psi(x^*)) = \varphi_\lambda(\Psi(x)^*) = \overline{\lambda},
    \end{equation*}
    and then $\lambda \in \mathbb{R}$. Now, since every $*$-homomorphism from a Banach $*$-algebra to a C$^*$-algebra is norm decreasing, the relation $(x,1)$ gives
    \begin{equation*}
        |\lambda| = |\varphi_\lambda(\Psi(x))| \leq \|\Psi(x)\| \leq 1.
    \end{equation*}
    On the other hand, by Corollary \ref{cor:positivity_t_equals_to_one} and $(1-x^2,1)$ we have that $\Psi(x) \geq 0$ and then $\lambda \geq 0$, since $\varphi_\lambda \in \sigma(\Psi(x))\cup \{0\}$ (see Theorem 1.3.4 of \cite{Murphy1990}). Then, $\lambda \in [0,1]$. However, for $\lambda = 0$ we necessarily have $\varphi_0(C^*(D,\mathscr{R})) = \{0\}$, and therefore $\varphi_0$ is not a character. We obtained a bijection between $\Omega[C^*(D,\mathscr{R})]$ and $(0,1]$, defined by the map%verificar se o teorema 1.3.4 realmente justifica
    \begin{equation}\label{eq:bijection_characters_semi_interval}
        \varphi_\lambda \mapsto \lambda.
    \end{equation}
    Let $(\varphi_{\lambda_n})_\mathbb{N}$ be a sequence on $\Omega[C^*(D,\mathscr{R})]$ and an element $\varphi_\lambda \in \Omega[C^*(D,\mathscr{R})]$. We have that $\varphi_{\lambda_n}\rightharpoonup \varphi_\lambda$ (weak$^*$ topology) if and only if $\varphi_{\lambda_n}(\Psi(x)) \to \varphi_\lambda(\Psi(x))$, which is equivalent to the convergence $\lambda_n \to \lambda$ and therefore, the map \eqref{eq:bijection_characters_semi_interval} is a homeomorphism. By the Gelfand representation Theorem (Corollary \ref{corollary:Gelfand_C_star}) we have that $C^*(D,\mathscr{R}) \simeq C_0((0,1])$.
    \item[$(c)$] Let $D = \{x,1\}$ and $\mathscr{R} = \{(x-x^*,0),(x,1),(1-x^2,1),(1-1^*,0),(1-1^2,0),(1x-x,0),(x1-x,0)\}$. Then $C^*(D,\mathscr{R}) \simeq C([0,1])$ by similar proof as done in the previous item.
    \item[$(d)$] Consider $D = \{x_{ij}:1 \leq i,j \leq n\}$ and $\mathscr{R} = \{(x_{ij}-x_{ji}^*,0),(x_{ij}x_{k\ell}-\delta_{jk}x_{i\ell},0): 1\leq i,j,k,\ell\leq n\}$, where $\delta_{ij}$ is the Kronecker delta. Then, $C^*(D,\mathscr{R}) \simeq M_n(\mathbb{C})$. Indeed, the $*$-homomorphism $\Theta: C^*(D,\mathscr{R}) \to M_n(\mathbb{C})$ maps each generator $x_{ij}$ to its respective matrix unit $U_{ij}$, which is the matrix that assigns $1$ to the entry on the $i$-th row at the $j$-column and zero at the remaining ones, is surjective since $\mathcal{U}:=\{U_{ij}: 1 \leq i,j \leq n\}$ is a basis for $M_n(\mathbb{C})$. We prove that $\Theta$ is a injective as well. We observe that the generator set is algebraically closed under involutions and products, and then any element $x \in C^*(D,\mathscr{R})$ is in the form
    \begin{equation*}
        x = \sum_{k = 1}^n \sum_{\ell = 1}^n \lambda_{k \ell} \Psi(x_{k \ell}),
    \end{equation*}
    where $\lambda_{k\ell} \in \mathbb{C}$ for every $k$ and $\ell$. Now, suppose that $\Theta(x) = 0$. Then,
    \begin{equation*}
        0=\sum_{k = 1}^n \sum_{\ell = 1}^n \lambda_{k \ell} U_{k \ell},
    \end{equation*}
    and since $\mathcal{U}$ is a basis, we have that $\lambda_{k \ell} = 0$ for all $k$ and $\ell$ and then $x=0$. Therefore, $\Theta$ is injective.
\end{itemize}
\end{example}

The next result presents and proves the universal property, one of the main properties that characterizes universal C$^*$-algebras.

\begin{theorem}[Universal property]\label{thm:universal_property} Let $(D,\mathscr{R})$ be an admissible pair, $B$ be a C$^*$-algebra and $\Theta_0:D \to B$ be a representation of $(D,\mathscr{R})$ on $B$. Then, there exists a unique $*$-homomorphism $\Upsilon: C^*(D,\mathscr{R}) \to B$ such that $\Upsilon(\Psi(d)) = \Theta_0(d)$ for every $d \in D$, that is, the diagram
\[
\begin{tikzcd}[column sep = huge, row sep = huge]
  D \arrow[d,"\Psi"] \arrow[r,"\Theta_0"] & B\\
  C^*(D,\mathscr{R}) \arrow[ur, "\Upsilon"] &
\end{tikzcd}
\]
commutes.
\end{theorem}

\begin{proof} Let $\Upsilon_0 : \mathcal{A}_D/N \to B$ be the mapping given by
\begin{equation*}
    \Upsilon_0(a+N) := \Theta(a),
\end{equation*}
where $\Theta$ is the unique $*$-homomorphism that extends $\Theta_0$ to $\mathcal{A}_D$, as granted by Proposition \ref{prop:homomorphism_extension_free_algebra}. We claim that $\Upsilon_0$ is well-defined, since if for given $a, b \in \mathcal{A}_D$ satisfying $a+N=b+N$, we have $a-b+N=0+N$ and then
\begin{equation*}
    0 = \tnorm{a-b+N} = \sup_{\Xi \in \Gamma}\|\Xi(a-b)\| \geq \|\Theta(a-b)\| \geq 0,
\end{equation*}
and hence $\Theta(a) = \Theta(b)$, that is, $\Upsilon_0(a+N) = \Upsilon_0(b+N)$, and the claim is proved. It is straightforward that $\Theta$ is a $*$-homomorphism between $\mathcal{A}_D$ and $B$, and consequently $\Upsilon_0$ is a $*$-homomorphism between $\mathcal{A}_D/N$ and $B$. Besides that, $\Upsilon_0$ is norm-decreasing. Indeed, for every $a \in \mathcal{A}_D/N$ we have
\begin{equation*}
    \|\Upsilon_0(a+N)\| = \|\Theta(a)\| \leq \sup_{\Theta_0 \in \Gamma}\|\Theta(a)\| = \|a+N\| = \tnorm{a+N}.
\end{equation*}
By density, $\Upsilon_0$ is extended to $C^*(D,\mathscr{R})$ to a norm-decreasing $*$-homomorphism $\Upsilon : C^*(D,\mathscr{R}) \to B$. Now we prove that the diagram of the statement of this theorem in fact commutes. Let $d \in D$. We have
\begin{equation*}
    \Upsilon(\Psi(d)) = \Upsilon(d+N) = \Theta(d) = \Theta_0(d),
\end{equation*}
and we conclude that $\Upsilon \circ \Psi = \Theta_0$, and therefore the diagram of the statement commutes. Now, for the uniqueness, suppose that there exists a $*$-homomorphism $\Phi:C^*(D,\mathscr{R}) \to B$ such that $\Phi(\Psi(d)) = \Theta_0(d)$ for every $d \in D$. Then, $\Phi(\Psi(d)) = \Theta_0(d) = \Upsilon(\Psi(d))$ for every $d \in D$, and since $D$ generates $C^*(D,\mathscr{R})$ we conclude that $\Phi = \Upsilon$.
\end{proof}

The next result shows the uniqueness of the universal C$^*$-algebra.

\begin{theorem}[Uniquenes of the Universal C$^*$-algebra]\label{thm:uniqueness_universal_algebra} Let $B$ be a C$^*$-algebra and $\pi: D \to B$ be a representation for the admissible pair $(D,\mathscr{R})$. If for every representation, $\Phi:D \to C$, of $(D,\mathscr{R})$ on any C$^*$-algebra $C$ there exists a unique $*$-homomorphism $\rho: B \to C$ satisfying $\Phi = \rho \circ \pi$, that is, the diagram
\[
\begin{tikzcd}[column sep = huge, row sep = huge]
  D \arrow[d,"\pi"] \arrow[r,"\Phi"] & C\\
  B \arrow[ur, "\rho"] &
\end{tikzcd}
\]
commutes, then $B \simeq C^*(D,\mathscr{R})$.
\end{theorem}

\begin{proof} Given a C$^*$-algebra $C$, the Theorem \ref{thm:universal_property} and the hypothesis gives that the following diagram commutes
\[
\begin{tikzcd}[column sep = huge, row sep = huge]
    & B \arrow[dl,"\Upsilon_2"] &\\
    B \arrow[dr,"\Upsilon_4"] & D \arrow[d,"\Psi_1"] \arrow[r,"\Psi_1"] \arrow[u,"\Psi_2"] \arrow[l,"\Psi_2"] & C^*(D,\mathscr{R}) \arrow[ul,"\Upsilon_3"] \\
    & C^*(D,\mathscr{R}) \arrow[ur,"\Upsilon_1"] &
\end{tikzcd}
\]
%\[
%\begin{tikzcd}[column sep = huge, row sep = huge]
%  D \arrow[d,"\Psi_1"] \arrow[r,"\Psi_1"] & C^*(D,\mathscr{R}) & D \arrow[d,"\Psi_2"] \arrow[r,"\Psi_2"] & B\\
%  C^*(D,\mathscr{R}) \arrow[ur, "\Upsilon_1"] & &  B \arrow[ur, "\Upsilon_2"]
%\end{tikzcd}
%\]
%\[
%\begin{tikzcd}[column sep = huge, row sep = huge]
%  D \arrow[d,"\Psi_1"] \arrow[r,"\Psi_2"] & B & D \arrow[d,"\Psi_2"] \arrow[r,"\Psi_1"] & C^*(D,\mathscr{R})\\
%  C^*(D,\mathscr{R}) \arrow[ur, "\Upsilon_3"] & &  B \arrow[ur, "\Upsilon_4"]
%\end{tikzcd}
%\]
where $\Upsilon_1 : C^*(D,\mathscr{R}) \to C^*(D,\mathscr{R})$, $\Upsilon_2 : B \to B$, $\Upsilon_3 : C^*(D,\mathscr{R}) \to B$ and $\Upsilon_4 : B \to C^*(D,\mathscr{R})$ are the unique $*$-homomorphisms satisfying respectively $\Upsilon_1 \circ \Psi_1 = \Psi_1$, $\Upsilon_2 \circ \Psi_2 = \Psi_2$, $\Upsilon_3 \circ \Psi_1 = \Psi_2$ and $\Upsilon_4 \circ \Psi_2 = \Psi_1$. Immediately we obtain $\Upsilon_1 = \Id_{C^*(D,\mathscr{R})}$ and $\Upsilon_2 = \Id_B$. Moreover, we have that $\Upsilon_3 \circ \Upsilon_4:B \to B$ and $\Upsilon_4 \circ \Upsilon_3:C^*(D,\mathscr{R}) \to C^*(D,\mathscr{R})$ are $*$-homomorphisms satisfying
\begin{equation*}
    (\Upsilon_3 \circ \Upsilon_4)\circ \Psi_2 = \left(\Upsilon_3 \circ \Id_{C^*(D,\mathscr{R})} \circ \Upsilon_4\right)\circ \Psi_2 = \Psi_2 \quad \text{and} \quad (\Upsilon_4 \circ \Upsilon_3)\circ \Psi_1 = \left(\Upsilon_4 \circ \Id_B \circ \Upsilon_3\right)\circ \Psi_1 = \Psi_1,
\end{equation*}
%$(\Upsilon_3 \circ \Upsilon_4)\circ \Psi_2 = \Psi_2$ and $(\Upsilon_4 \circ \Upsilon_3)\circ \Psi_1 = \Psi_1$ respectively.
By the uniqueness of $\Upsilon_1$ and $\Upsilon_2$, we conclude that
\begin{equation*}
    \Upsilon_3 \circ \Upsilon_4 = \Id_B \quad \text{and} \quad \Upsilon_4 \circ \Upsilon_3 = \Id_{C^*(D,\mathscr{R})}
\end{equation*}
and therefore $B \simeq C^*(D,\mathscr{R})$.
\end{proof}

\section{KMS states}
\label{section:kms}

As the last topic of this chapter, we introduce the notion of Kubo-Martin-Schwinger (KMS) state \cite{Kubo1957,MartinSchwinger1959}. This section is based mostly on O. Bratelli's books \cite{Bratteli1987vol1,Bratteli1996vol2}. Before we start with the Mathematical Background in its generality, we briefly present a heuristic motivation from Physics on this subject.

Consider a physical system with $n$ possible configurations and let $\{E_1,...,E_n\}$ be the possible distinct energies for each configuration. Also set $p=\{p_1,...,p_n\}$ as the respective probabilities associated to each state. The expected energy is given by
\begin{equation*}
    \overline{E}(p) = \sum_{j=1}^n p_j E_j.
\end{equation*}
If we simply consider as a physical principle only the minimal expected energy, the weight choice which minimizes the energy would be $1$ for the smallest $E_j$ and zero for the rest. However, only with that principle, the system gets too much rigid, being a deterministic system. In order to deal non-deterministic systems we must consider the `lack of information' of the system and such value is determined by the Shannon's information entropy, given by
\begin{equation*}
    S(p):= - \sum_{j=1}^np_j \ln p_j.
\end{equation*}
The principle of maximum entropy states that the probability distribution which best represents the physical system is the one which maximizes $S$. In order to conceal at same time the minimum energy and the maximum entropy one may consider the minimization of the quantity
\begin{equation*}
    F(p) = \overline{E}(p) - TS(p),
\end{equation*}
where $T$ is usually seen as the temperature times the Boltzmann constant. $F$ is said to be the free energy. In order minimizate it, we recall that $g(p):= \sum_{j=1}^np_j$ is the constant function equals to $1$, since it is a probability. We shall minimize $F$ by using the Lagrange multipliers' method. The minimum is obtained where $\nabla F$ is parallel to $\nabla g$, with the constraint $g(p)-1 = 0$. We have that
\begin{equation*}
    \partiald{F}{p_j} = \lambda \partiald{g}{p_j}, \quad j=1,...,n.
\end{equation*}
Notice that
\begin{equation*}
    \partiald{F}{p_j} = \lambda \partiald{}{p_j}\left(\sum_{k=1}^np_kE_k + T\sum_{k=1}^np_k \ln p_k\right) = E_j + T\left(\ln p_j + 1\right).
\end{equation*}
Considering the contraint $\partiald{g}{p_j} = 1$, we obtain
\begin{equation*}
    E_j + T \ln p_j + T = \lambda.
\end{equation*}
By taking $\beta = \frac{1}{T}$ we hae
\begin{equation*}
    \ln p_j = \frac{\lambda - T - E_j}{T} = c - \beta E_j,
\end{equation*}
where $c = \frac{\lambda - T}{T}$, and so
\begin{equation*}
    p_j = e^c e^{-\beta E_j}.
\end{equation*}
The condition $g \equiv 1$ gives 
\begin{equation*}
    e^c \sum_{k=1}^n e^{-\beta E_k} = 1 \implies e^c \frac{1}{\sum_{k=1}^n e^{-\beta E_k}},
\end{equation*}
and the final distribution obtained is given by
\begin{equation*}
    p_j = \frac{e^{-\beta E_j}}{\sum_{k=1}^n e^{-\beta E_k}}.
\end{equation*}
The probability distribution above is the so-called Gibbs state. Roughly speaking, the idea of quantization consists in to see the observables of the physical system as a non-commutative algebra. In our example of $n$ states we may take $M_n(\mathbb{C})$ and set
\begin{equation*}
    E = \begin{pmatrix}
            E_1 &0  &\cdots& 0 \\
            0 & E_2 &\cdots & 0 \\
            \vdots & \vdots & \ddots &\vdots \\
            0 &0 & \cdots & E_n
        \end{pmatrix}, \quad
    P = \begin{pmatrix}
            p_1 &0  &\cdots& 0 \\
            0 & p_2 &\cdots & 0 \\
            \vdots & \vdots & \ddots &\vdots \\
            0 &0 & \cdots & p_n
        \end{pmatrix}, \quad
    a = \begin{pmatrix}
            a_1 &0  &\cdots& 0 \\
            0 & a_2 &\cdots & 0 \\
            \vdots & \vdots & \ddots &\vdots \\
            0 &0 & \cdots & a_n
        \end{pmatrix},
\end{equation*}
where $a$ is a physical observable. Note that 
\begin{equation*}
    \overline{E}(p) = \text{tr}(EP),
\end{equation*}
which motivates to consider a linear functional $\omega$ s.t.
\begin{equation*}
    \omega(a) = \text{tr}(aP)
\end{equation*}
for physical observables as $a$. Observe that $\omega(1)=1$ and 
\begin{align*}
    \omega(a^*a) &= \text{tr}(a^*aP) = \text{tr}(a^*aP^{1/2}P^{1/2}) = \text{tr}(P^{1/2}a^*aP^{1/2}) \\
    &= \text{tr}((P^{1/2})^*a^*aP^{1/2}) =  \text{tr}((aP^{1/2})^*aP^{1/2}) \geq 0.
\end{align*}
The condition above suggests some positivity condition. Also,
\begin{equation*}
    P = \frac{e^{-\beta E}}{\text{tr}(e^{-\beta E})}
\end{equation*}
and $Z(\beta):= \text{tr}(e^{-\beta E})$ is the partition function. In this algebra, one can introduce some dynamics, since the Quantum Mechanics operators have a time evolution. In this case, it is the Heisenberg dynamics (see chapter 2 of \cite{Sakurai2017} for further details), which is a $1$-parameter group of automorphisms $\{\theta_t\}_{t \in \mathbb{R}}$ where the parameter is the time and it is described as follows: given $a \in M_n(\mathbb{C)}$ we set 
\begin{equation}\label{eq:Heisenberg_dynamics}
    \theta_t(a) = e^{itE}ae^{-itE}.
\end{equation}
Through some analyticity conditions we may extend the parameter for the entire plane $\mathbb{C}$. Also, the Gibbs state $\omega$ satisfies the condition
\begin{equation}\label{eq:KMS_intro}
    \omega(ba) = \omega(a \theta_{i\beta}(b)), \quad a,b \in M_n(\mathbb{C}),
\end{equation}
the so-called the Kubo-Martin-Schwinger condition. In 1967, R. Haag, N. M. Hugenholtz and M. Winnink proposed the KMS condition as a criterion for equilibrium states for the quantum setting in Statistical Mechanics in the paper \cite{HaagHugWin1967}.

In this section, given a normed vector space $X$, we denote the dual space of $X$ by $X'$. Also, $\sigma(X, X')$ denotes the weak topology on $X$. A function $f: \mathbb{R} \rightarrow X$ is said to be $\sigma(X,X')$-continuous when 
\begin{equation*}
    \eta \circ f: \mathbb{R} \rightarrow \mathbb{C}
\end{equation*}
is continuous for every $\eta \in X'$.

\begin{definition} A state on a C$^*$-algebra $A$ is a positive linear functional $\omega:A \to \mathbb{C}$ s.t. $\|\omega\| = 1$. 
\end{definition}

Our objective now is to define dynamical systems on a C$^*$-algebra, and the notion of one-parameter group of automorphisms is crucial for it.

\begin{definition} %pg 11 frausino
\label{def:groupofautomorphisms}
Given a C$^*$-algebra $A$, an {one-parameter group of $\ast$-automorphisms} is a family $\tau = \{ \tau_t \}_{t \in \mathbb{R}}$ where
\begin{itemize}
\item[$(i)$] $\tau_t$ is a $*$-automorphism on $A$ for every $t \in \mathbb{R}$; 
\item[$(ii)$] $\tau_{t+s} = \tau_t \circ \tau_s$ for every $t, s \in \mathbb{R}$;
\item[$(iii)$] $\tau_0 = id$.
\end{itemize}

A {C$^*$-dynamical system} is a pair $(A, \tau)$ consisting in a C$^*$-algebra $A$ and an one-parameter group of $\ast$-automorphisms $\tau = \{ \tau_t \}_{t \in \mathbb{R}}$ which is {strongly continuous}, that is, the mapping
\begin{align}
    \mathbb{R} &\to A, \nonumber\\
    t &\mapsto \tau_t(a),\label{eq:strong_continuity}
\end{align}
is norm continuous for every $a \in A$. In this case, $\tau$ is said to be the {dynamics} on $A$.
\end{definition}

\begin{remark}\label{remark:dynamicsextension} It is sufficient to define the dynamics $\tau$ on a dense $*$-subalgebra, since in this case it admits a unique extension to a dynamics on the whole C$^*$-algebra. For a proof for this fact, see Lemma 5.1.12 of \cite{Lima2019}. 
\end{remark}

\begin{example} The Heisenberg dynamics $\theta = \{\theta_t\}_{t \in \mathbb{R}}$, defined in \eqref{eq:Heisenberg_dynamics}, is a strongly continuous one-parameter group of $\ast$-automorphisms, and then $(M_n(\mathbb{C}),\theta)$ is a C$^*$-dynamical system.
\end{example}

%The next result is the lemma 5.1.12 of  and it gives conditions for which it is sufficient to define an one-parameter group of $\ast$-automorphisms on a dense $*$-subalgebra of a given C$^*$-algebra $A$, and still we have a unique extension which they are one-parameter groups of $\ast$-automorphisms on the whole C$^*$-algebra. Moreover, under these assumptions, such extension is a dynamics.

%\begin{lemma} %eu com ajuda do frausino

%Let $A_0$ be a dense $\ast$-subalgebra of a C$^*$-algebra $A$. Consider a family of $\ast$-automorphisms, $\tau = \{ \tau_t \}_{t \in \mathbb{R}}$, where $\tau_t : A_0 \rightarrow A_0$, satisfying
%\begin{itemize}
%    \item[$(i)$] $\tau_{t+s} = \tau_t \circ \tau_s$ for every $t,s \in \mathbb{R}$;
%    \item[$(ii)$] $\tau_0 = id$;
%    \item[$(iii)$] the mapping 
%        \begin{align*}
%            \mathbb{R} &\to A_0,\\
%            t &\mapsto \tau_t(a),
%        \end{align*}
%    is norm continuous for every $a \in A_0$;
%    \item[$(iv)$] for every $t \in \mathbb{R}$ and every $a \in A_0$ we have
%    \begin{align} \label{eqn:tauboundedonA0}
%        \| \tau_t(a) \| \leq \| a \|.
%    \end{align}
%\end{itemize}
%Then $\tau$ can be extended uniquely to a dynamics on $A$, also denoted by $\tau = \{ \tau_t \}_{t \in \mathbb{R}}$.
%\end{lemma}

In particular every $*$-automorphism on a C$^*$-algebra is an isometry and then $\| \tau_t \| = 1$ for every $t \in \mathbb{R}$. Moreover, by Theorem 3.10 of \cite{Brezis2011}, the norm continuity \eqref{eq:strong_continuity} in Definition \ref{def:groupofautomorphisms}, also called strong continuity, is equivalent to the $\sigma(X,X')$-continuity for the same map. Therefore, for every $(A,\tau)$ C$^*$-dynamical system, $\tau$ is an one-parameter $\sigma(A,A')$-continuous group of isometries, as in Definition 2.5.17 of \cite{Bratteli1987vol1}.

In other words, for every C$^*$-dynamical system $(A,\tau)$, the mapping
\begin{align*}
   \mathbb{R} &\to \mathbb{C},\\
   t &\mapsto \eta \circ \tau_t(a),
\end{align*}
is continuous for every $a \in A$ and every $\eta \in A'$. We shall see further in this section that there are special elements in $A$ such that the map above can be extended analytically for some strip of $\mathbb{C}$, in particular, some of them can be extended in the whole $\mathbb{C}$. Furthermore they are dense in $A$. These elements, called analytic elements, shall be used to define KMS states and they will be defined next.

For the definition below, given $\lambda \in (0,\infty]$, we define the set
\begin{equation*}
    I_\lambda = \{ z \in \mathbb{C} : \vert Im(z) \vert < \lambda \}.
\end{equation*}

\begin{definition} %D 13 pg 11 frausino
\label{def:analytic}
Let $X$ be a Banach space and $\tau$ be an one-parameter $\sigma(X, X')$-continuous group of isometries. We say an element $a \in X$ is {analytic} for $\tau$ when there exists $\lambda > 0$ and a function $f: I_\lambda \to X$, where the following holds
\begin{itemize}
    \item[$(a)$] $f(t) = \tau_t(a)$ for every $t \in \mathbb{R}$;
    \item[$(b)$] the function $\eta \circ f$ is analytic on $I_\lambda$ for every $\eta \in X'$.
\end{itemize}
In this setting, we write $$\tau_z(a) = f(z), \quad \text{$z \in I_\lambda$}.$$

When $\lambda = \infty$, the element $a$ is said to be {entire analytic} for $\tau$. Under this setting, we denote by $X_\tau$ the set of entire analytic elements.
\end{definition}

\begin{remark}
\label{remark:analyticvectorspace}
Observe that $X_\tau$ is a vector space. In fact, given $a_1$ and $a_2$ entire analytic elements and $\alpha \in \mathbb{C}$, let $f_1$ and $f_2$ be the corresponding functions as in Definition \ref{def:analytic} which witness the entire analyticity of $a_1$ and $a_2$ respectively. For every $t \in \mathbb{R}$ we have that
\begin{align}
\label{eqn:entirevectorspace}
f_1(t) + \alpha f_2(t) = \tau_t(a_1) + \alpha \tau_t(a_2) = \tau_t(a_1 + \alpha a_2).
\end{align}
Moreover, $\eta \circ (f_1 + \alpha f_2)$ is analytic for every $\eta \in X'$. Therefore $a_1 + \alpha a_2$ is entire analytic. 
\end{remark}

Further we shall see, when $X=A$ and $\tau$ is a dynamics in the definition above, that $A_\tau$ is a dense $*$-subalgebra of $A$.% The next lemma presents some well known facts about gaussian functions and we omit the proof.

%\begin{lemma}
%\label{lemma:gaussian}
%Given $n \in \mathbb{N}$ and $\delta > 0$, it follows that
%\begin{align}
%\sqrt{\frac{n}{\pi}} \int_{-\infty}^{\infty} e^{-nt^2} dt &= 1, \label{eqn:gaussian1}\\
%\mathfrak{b}_n := \sqrt{\frac{n}{\pi}} \int_{\vert t \vert \geq \delta} e^{-nt^2} dt &= \frac{1}{\sqrt{\pi}} \int_{\vert t \vert \geq \sqrt{n} \delta} e^{-t^2} dt \label{eqn:gaussian2}.
%\end{align}
%Moreover, $\lim_n \mathfrak{b}_n = 0$.
%\end{lemma}

The next result is Proposition 2.5.18 of \cite{Bratteli1987vol1}.

\begin{proposition} \label{prop:operator_on_integral} Let $X$ be a Banach space and $\{\tau_t\}_{t \in \mathbb{R}}$ be an one-parameter $\sigma(X,X')$-continuous group of isometries, and let $\mu$ be a Borel measure of bounded variation on $\mathbb{R}$. It follows that for each $a \in X$ there exists $b \in X$ such that
\begin{equation*}
    \eta(b) = \int \eta(\tau_t(a)) d\mu(t)
\end{equation*}
for any $\eta \in X'$. 
\end{proposition}

Next we present a result on existence and density of the entire analytic elements in a Banach space $X$: Proposition 2.5.22 of \cite{Bratteli1987vol1}. For a more detailed proof, see \cite{Lima2019}.

\begin{proposition} %P 1 pg 11 frausino
\label{prop:analyticdense}
If $\tau$ is a one-parameter $\sigma(X,X')$-continuous group of isometries, and $a \in X$, define for $n \in \mathbb{N}$,
\begin{align*}
a_n := \sqrt{\frac{n}{\pi}} \int_{-\infty}^{\infty} \tau_t(a)e^{-nt^2}dt.
\end{align*}
For each $n$, $a_n$ is an entire analytic element for $\tau$ and $\| a_n \| \leq \| a \|$ and $\lim_n a_n \to a$ on the $\sigma(X,X')$ topology. In particular, $X_\tau$, is a $\sigma(X,X')$-dense subspace of $X$.
\end{proposition}

\begin{remark} Proposition \ref{prop:operator_on_integral} is crucial for the proposition above, since it grants that
\begin{equation*}
    f_n(z) := \sqrt{\frac{n}{\pi}} \int_{-\infty}^{\infty} \tau_t(a)e^{-n(t-z)^2}dt
\end{equation*}
is well-defined for every $z \in \mathbb{C}$. Observe that, for $s \in \mathbb{R}$ we have $f_n(s) = \tau_s(a_n)$, for every $n \in \mathbb{N}$.
\end{remark}

As a consequence of Proposition \ref{prop:analyticdense} we have Corollary 2.5.22 of \cite{Bratteli1987vol1}, and this one is presented next.

\begin{corollary} Let $X$ be a Banach space and consider $\tau$ an one-parameter $\sigma(X,X')$-continuous group of isometries. Then, the map
\begin{align*}
    t \mapsto \tau_t
\end{align*}
is strongly continuous, i.e., the mapping
\begin{align*}
    \mathbb{R} &\to X\\
    t &\mapsto \tau_t(a),
\end{align*}
is norm continuous for every $a \in X$, and $X$ contains a norm-dense set of entire analytic elements. 
\end{corollary}

\begin{remark}\label{remark:equivalence_between_analytic_and_strongly_analytic} Given $X$ a Banach space and $\tau$ a one-parameter $\sigma(X,X')$-continuous group of isometries, for $\lambda > 0$, the element $a \in X$ is said to be {strongly analytic} on $I_\lambda$ if there exists $f: I_\lambda \to X$ satisfying (a) in Definition \ref{def:analytic} and such that the limit
\begin{align*}
\lim_{h \rightarrow 0} \frac{f(z + h) - f(z)}{h}
\end{align*}
exists for every $z \in I_\lambda$. It is true that $a \in X$ is analytic if and only if it is strongly analytic.
\end{remark}

%For the next lemma we state the uniqueness theorem for analytic functions, which is the theorem 6.9 of \cite{BakNewman2010}.

%\begin{theorem}[Uniqueness Theorem for Analytic Functions]\label{thm:uniqueness_thm_anal_functions} Let $D \subseteq \mathbb{C}$ be an open connected set and $f:D \to \mathbb{C}$ a function s.t. $f(z_n) = 0$, where $\{z_n\}_\mathbb{N}$ is a sequence of distinct points which converges to $z \in D$. Then, $f \equiv 0$ in $D$. 
%\end{theorem}

\begin{lemma} %O Lucas me ajudou
\label{lemma:tauzplusw}
Given $X$ a Banach space and $\tau$ a one-parameter $\sigma(X,X')$-continuous group of isometries, let $a \in X_\tau$. For every $w \in \mathbb{C}$, we have $\tau_w(a) \in X_\tau$. Also, for every $z \in \mathbb{C}$ it follows that
\begin{align*}
\tau_{z+w}(a) = \tau_z \circ \tau_w(a).
\end{align*}
\end{lemma}

\begin{proof}
Since $a \in X_\tau$, there exists a function $f: \mathbb{C} \rightarrow X$ satisfying
\begin{itemize}
 \item $f(t) = \tau_t(a)$ for every $t \in \mathbb{R}$;
 \item $\eta \circ f$ is entire analytic for every $\eta \in X'$.
\end{itemize}
Now, observe that for $s \in \mathbb{R}$ and $\eta \in X'$, the function $\eta \circ f_s$, where $f_s: \mathbb{C} \rightarrow X$ is defined by
\begin{equation*}
    f_s(w) := f(s+w),
\end{equation*}
is entire analytic.  Let $g: \mathbb{C} \rightarrow X$ be a complex function given by $g(w) = \tau_s \circ \tau_w(a)$. It is straightforward to notice that $\eta \circ g$ is entire analytic for every $\eta \in X'$ because
\begin{align*}
\eta \circ g(w)
= \eta \circ \tau_s \circ \tau_w(a)
= (\eta \circ \tau_s) \circ f(z),
\end{align*}
for every $w \in \mathbb{C}$, $\eta \circ \tau_s \in X'$ and $a \in X_\tau$. Moreover, for each $t \in \mathbb{R}$, we have
\begin{align*}
\eta \circ g(t) = \eta \circ \tau_s \circ \tau_t(a) = \eta \circ \tau_{s+t}(a) = \eta \circ f(t+s) = \eta \circ f_s(t).
\end{align*}
Then by the Uniqueness Theorem for Analytic Functions (Theorem 6.9 of \cite{BakNewman2010}) we necessarily have that $\eta \circ g = \eta \circ f_s$. Since $\eta$ is arbitrary and $X'$ separates poins in $X$, we have that $g$ and $f_s$ coincide. Consequently,
\begin{align}
\label{eqn:taustauw}
\tau_s \circ \tau_w(a) = \tau_{s+w}(a), \quad w \in \mathbb{C}.
\end{align}

Now we fix $w \in \mathbb{C}$ and assume $s$ is a real variable. Define $f_w: \mathbb{C} \rightarrow X$ by $f_w(z) = f(w+z)$. Since $a$ is entire analytic, the function $\eta \circ f_w$ is entire analytic for every $\eta \in X*$. By equation \eqref{eqn:taustauw}, we have for all $t \in \mathbb{R}$ the equality
\begin{align*}
f_w(t) = f(w+t) = \tau_{t+w}(a) = \tau_t \circ \tau_w(a).
\end{align*}
We conclude that $\tau_w(a)$ is entire analytic and
\begin{align*}
\tau_z\circ \tau_w(a) = f_w(z) = f(w+z) = \tau_{z+w}(a).
\end{align*}
\end{proof}

\begin{definition} Let $(A,\tau)$ be a C$^*$-dynamical system. A set $B \subseteq A$ is said to be $\tau$-invariant if $\tau_t(B) \subseteq B$ for every $t \in \mathbb{R}$. 
\end{definition}

\begin{lemma}
\label{lemma:atausubalgebra}
Given a C$^*$-dynamical system $(A, \tau)$, the set $A_\tau$ is a $\ast$-subalgebra of $A$ and it is $\tau$-invariant.
\end{lemma}

\begin{proof} Remark \ref{remark:analyticvectorspace} gives that $A_\tau$ is a vector space, and Lemma \ref{lemma:tauzplusw} grants that $A_\tau$ is $\tau$-invariant, so it remains to prove that the product and the involution are algebraically closed in $A_\tau$.

Let $a, b \in A_\tau$, and $f_1, f_2: \mathbb{C} \rightarrow A$ complex functions which witness that $a$ and $b$ are entire analytic respectively, that is,
\begin{equation*}
    f_1(t) = \tau_t(a) \quad \text{and} \quad f_2(t) = \tau_t(b), \quad t \in \mathbb{R},
\end{equation*}
and for every $\eta \in A'$, $\eta \circ f_1$ and $\eta \circ f_1$ are analytic on $\mathbb{C}$. It is straightforward that for the function $f = f_1 f_2$ holds that $\eta \circ f$ is analytic for every $\eta \in X'$ and that
\begin{equation*}
    f(t) = f_1(t)f_2(t) = \tau_t(a)\tau_t(b) = \tau_t(ab), \quad t \in \mathbb{R},
\end{equation*}
and then $ab$ is entire analytic. Now, let $f_1^*$ given by $f_1^*(z) = f_1(\overline{z})^*$ for every $z \in \mathbb{C}$. It follows that 
\begin{align*}
f_1^*(t) = f_1(t)^* = \tau_t(a)^* = \tau_t(a^*), \quad t \in \mathbb{R}.
\end{align*}
For $\eta \in A'$, consider $\widetilde{\eta} \in A'$ given by $\widetilde{\eta}(c) = \overline{\eta(c^*)}$ for every $c \in A$. It is straightforward to observe that $$\eta \circ f_1^*(z) = \overline{\widetilde{\eta} \circ f_1(\overline{z})}.$$

We claim that the map
$$z \mapsto \overline{\widetilde{\eta} \circ f_1(\overline{z})}$$
is analytic. Indeed, we have that
\begin{align*}
\lim_{h \rightarrow 0} \frac{\overline{\widetilde{\eta} \circ f_1(\overline{z} + \overline{h})} - \overline{\widetilde{\eta} \circ f_1(\overline{z})}}{h}
&= \overline{\left( \lim_{h \rightarrow 0} \frac{\widetilde{\eta} \circ f_1(\overline{z} + \overline{h}) - \widetilde{\eta} \circ f_1(\overline{z})}{\overline{h}} \right)}.
\end{align*}
By the RHS of the equation above, the limit exists beacuse $\widetilde{\eta} \circ f_1$ is analytic. By Remark \ref{remark:equivalence_between_analytic_and_strongly_analytic}, we conclude that $\eta \circ f_1^*$ is analytic, then $a^* \in A_\tau$.
\end{proof}

\begin{definition}[KMS states] %Frausino D14 pg 14 (22 do pdf)
For a given $(A,\tau)$ C$^*$-dynamical system and $\beta \in \mathbb{R}$, a state $\omega$ on $A$ is said to be a $\tau$-$\text{KMS}_\beta$-state when
\begin{align*}
\omega(a\tau_{i\beta}(b)) = \omega(ba)
\end{align*}
for every $a, b$ in a norm-dense $\ast$-subalgebra $A_0$ of $\tau$-invariant entire analytic elements.
\end{definition}

We simply say $\omega$ is a $\text{KMS}_\beta$-state, that is, we omit $\tau$, when this one is understood. In this case, for $\beta$ fixed, we simply say $\omega$ is a KMS state.

\begin{remark}
In particular, every $\omega$ KMS$_0$-state, that is, for $\beta = 0$, we have
\begin{align}\label{eq:KMS_tracial}
\omega(ab) = \omega(ba),
\end{align}
for every $a, b \in A_0$. The state $\omega$ is continuous due to Theorem \ref{thm:positivefunctionalbounded}, and then \eqref{eq:KMS_tracial} holds for every $a, b \in A$. In this case, $\omega$ is said to be a {tracial} state.
\end{remark}

\chapter{Groupoids and Groupoid C\texorpdfstring{$^*$}{TEXT}-algebras}
\label{ch:Groupoid_algebras}
The main objective of this chaper is to provide an introduction to the theory of groupoid C$^*$-algebras. Our main obejective is to present the definitions of full and reduced groupoid C$^*$-algebras. Further, when we present the Cuntz-Krieger and Exel-Laca algebras, we will use the groupoid approach on these algebras, which in these particular cases the groupoid used is the Renault-Deaconu groupoid, and the Markov shifts spaces play a crucial role when we define such structures. This chapter is based on the J. Renault's book \cite{Renault1980}, the lecture notes of I. Putnam and A. Sims \cite{Putnam2016,Sims2017}, and the master theses of R. Frausino and R. Lima \cite{Frausino2018,Lima2019}.  
.

\section{Groupoids, Topological Groupoids and Etalicity}

We can understand a groupoid as a generalization of the concept of group, in the sense that there there is defined a product and there is a identity and inverses, however the product is not defined for every pair and in general the identity is not unique, since its product is not defined for every element. Moreover, the product of an element by its inverse can result in different identities, depending on which order the product is realized.

\begin{definition}[Groupoids]\label{def:groupoid}

A groupoid consists in a $4$-tuple $(G,G^{(2)},\cdot,^{-1})$, where $G$ is a set, $G^{(2)} \subseteq G \times G$ is named set of the composable parts, $\cdot:G^{(2)} \to G$ is the product (or composition) operation, and $^{-1}:G \to G$, satisfying the following conditions:
\begin{itemize}
    \item[$(G1)$] $(g^{-1})^{-1} = g$, for every $g \in G$;
    \item[$(G2)$] if $(g_1,g_2),(g_2,g_3) \in G^{(2)}$, then $(g_1g_2,g_3),(g_1,g_2g_3) \in G^{(2)}$, and in this case $(g_1g_2)g_3 = g_1(g_2g_3):= g_1g_2g_3$;
    \item[$(G3)$] $(g,g^{-1}) \in G^{(2)}$ for all $g \in G$, and given $(g_1,g_2) \in G^{(2)}$ it is true that
    \begin{equation*}
        (g_1^{-1}g_1)g_2 = g_1^{-1}(g_1g_2) = g_2 \quad \text{and} \quad g_1(g_2g_2^{-1}) = (g_1g_2)g_2^{-1} = g_1.
    \end{equation*}
\end{itemize}
\end{definition}
Unless we say the opposite, we always refer to a groupoid $(G,G^{(2)},\cdot,^{-1})$ by its set $G$. 

As we commented before, the groupoid can be seen as a generalization of the concept of group, and the definition above presents some properties which are similar to a group, namely the existence of inverse elements and an associative product. However, it remains to present the units of the groupoid. We define the unit space of a groupoid $G$ as being the set
\begin{equation*}
    G^{(0)}:= \{g^{-1}g:g \in G\} \subseteq G.
\end{equation*}
Note that $G^{(0)}= \{gg^{-1}:g \in G\}$ because $g^{-1}g = hh^{-1}$ by taking $h = g^{-1}$. Also we define two surjective maps
\begin{align*}
    r:G \to G^{(0)} \quad \text{and} \quad s:G \to G^{(0)},
\end{align*}
defined by
\begin{equation*}
    r(g):= gg^{-1} \quad \text{and} \quad s(g):= g^{-1}g,
\end{equation*}
and named ranged and source maps, respectively. The next lemma shows that the elements of the unit space can be interpreted as a generalization of the group notion of identity.

\begin{lemma}\label{lemma:properties_r_s} Given a groupoid $G$ and $g \in G$ we have that
\begin{itemize}
    \item[$(i)$] $(r(g),g),(g,s(g)) \in G^{(2)}$;
    \item[$(ii)$] $r(g)g = gs(g) = g$;
    \item[$(iii)$] $r(g^{-1}) = s(g)$;
    \item[$(iv)$] $g^{-1}$ is the unique element satisfying simultaneously $(g,g^{-1}) \in G^{(2)}$ and $gg^{-1} = r(g)$;
    \item[$(v)$] $g^{-1}$ is the unique element satisfying simultaneously $(g^{-1},g) \in G^{(2)}$ and $g^{-1}g = s(g)$.
\end{itemize}
\end{lemma}

\begin{proof} \textbf{($\mathbf{i}$):} by $(G3)$ we have that $(g,g^{-1}), (g^{-1},g)\in G$, where we used $(G1)$ for the pair $(g^{-1},g)$. By $(G2)$ we have $(gg^{-1},g),(g,g^{-1}g) \in G^{(2)}$, that is, $(r(g),g),(g,s(g)) \in G^{(2)}$.  

\textbf{($\mathbf{ii}$):} by $(i)$ and $(G3)$ for $g_2 = g_1 = g$ we have $r(g)g = gs(g) = gg^{-1}g = g$.

\textbf{($\mathbf{iii}$):} by $(G1)$ we have that $r(g^{-1}) = g^{-1}(g^{-1})^{-1} = g^{-1}g = s(g)$.

\textbf{($\mathbf{iv}$):} let $(g,h) \in G^{(2)}$ such that $gh = r(g) = gg^{-1}$. By $(G3)$ we have that $(g^{-1},g) \in G^{(2)}$ and by $(G2)$ it follows that $(g^{-1}g,h) \in G^{(2)}$, and by $(G3)$, $(ii)$ and $(iii)$ we get $h = g^{-1}gh = g^{-1}r(g) = g^{-1}s(g^{-1}) = g^{-1}$.

\textbf{($\mathbf{v}$):} similar steps as in $(iv)$.
\end{proof}

\begin{proposition}[Cancellation laws]\label{prop:cancellation_groupoids} Given a groupoid $G$, the following properties hold:
\begin{itemize}
    \item if $(g_1,h),(g_2,h) \in G^{(2)}$ such that $g_1 h = g_2 h$, then $g_1 = g_2$;
    \item if $(h,g_1),(h,g_2) \in G^{(2)}$ such that $h g_1 = h g_2$, then $g_1 = g_2$.
\end{itemize}
\end{proposition}

\begin{proof} We prove the first one, since the second one is proved by similar steps. By $(G3)$ we have that $(h,h^{-1}) \in G^{(2)}$, $g_1 = g_1 h h^{-1}$ and $g_2 = g_2 h h^{-1}$, and by $(G2)$ we get these products are in fact well defined and they are associative, then
\begin{equation*}
    g_1 = (g_1 h) h^{-1} = (g_2 h) h^{-1} = g_2. 
\end{equation*}
\end{proof}

The next result, among other consequences, it gives us another characterization of the unit space, in terms of terms of the range and source maps.

\begin{proposition}\label{prop:arrow_approach_groupoids} Let $G$ be a groupoid. The following holds:
\begin{itemize}
    \item[$(a)$] $(g,h) \in G^{(2)} \iff s(g) = r(h)$;
    \item[$(b)$] $r(gh) = r(g)$ and $s(gh) = s(h)$ for every $(g,h) \in G^{(2)}$;
    \item[$(c)$] $(gh)^{-1} = h^{-1} g^{-1}$ for every $(g,h) \in G^{(2)}$;
    \item[$(d)$] $r(x) = x = s(x)$ for every $x \in G^{(0)}$.
\end{itemize}
\end{proposition}

\begin{proof} \textbf{($\mathbf{a}$):} if $s(g) = r(h)$, then $g^{-1}g=hh^{-1}$. By Lemma \ref{lemma:properties_r_s} $(i)$ we have that
\begin{equation*}
    (g,s(g)) = (g,g^{-1}g) = (g,hh^{-1}) \in G^{(2)}
\end{equation*}
and
\begin{equation*}
    (r(h),h) = (hh^{-1},h) \in G^{(2)}.
\end{equation*}
Hence, by $(G2)$ and $(G3)$, it holds that $(g,hh^{-1}h) = (g,h) \in G^{(2)}$. Conversely, let $(g,h) \in G^{(2)}$. By $(G1)$ and $(G3)$ we have that $(g^{-1},g)\in G^{(2)}$ and then $(G2)$ implies that $(g^{-1},gh) \in G^{(2)}$. Again by $(G3)$, we obtain $s(g)h = g^{-1}gh = h = r(h)h$. By Proposition \ref{prop:cancellation_groupoids}, we conclude that $s(g) = g^{-1}g = r(h)$.

\textbf{($\mathbf{b}$):} by $(G1)$ and $(G3)$ we have that $(g,g^{-1}),(g^{-1},g) \in G^{(2)}$. Combining this result with $(g,h) \in G^{(2)}$ we obtain by $(G2)$ the following:
\begin{align*}
    (g^{-1},g),(g,h) \in G^{(2)} &\implies (g^{-1},gh) \in G^{(2)},\\
    (g,g^{-1}),(g^{-1},gh) \in G^{(2)} &\implies (gg^{-1},gh) = (r(g),gh) \in G^{(2)};
\end{align*}
and Lemma \ref{lemma:properties_r_s} $(ii)$ gives $r(g)gh = (r(g)g)h = gh = r(gh)gh$, and Proposition \ref{prop:cancellation_groupoids} let us obtain $r(g) = r(gh)$. The remaining identity is proved with similar steps.

\textbf{($\mathbf{c}$):} let $(g,h) \in G^{(2)}$. We have that
\begin{equation}\label{eq:h_inv_g_inv_G_2}
    s(h^{-1}) \stackrel{(\bullet)}{=} r(h) = s(g) \stackrel{(\bullet)}{=} r(g^{-1}) \implies (h^{-1},g^{-1}) \in G^{(2)},
\end{equation}
where in $(\bullet)$ we used the lemma \ref{lemma:properties_r_s} $(iii)$. Also,
\begin{equation}\label{eq:gh_comma_h_inv_G_2}
    s(gh) \stackrel{(\dagger)}{=} s(h) \stackrel{(\bullet)}{=} r(h^{-1}) \implies (gh,h^{-1}) \in G^{(2)},
\end{equation}
where in $(\dagger)$ we used the last item. The axiom $(G2)$ and the relations \eqref{eq:h_inv_g_inv_G_2} and \eqref{eq:gh_comma_h_inv_G_2} together imply $(gh,h^{-1}g^{-1}),(ghh^{-1},g^{-1}) \in G^{(2)}$ and then
\begin{equation*}
    (gh)(h^{-1}g^{-1}) = (ghh^{-1})g^{-1} \stackrel{(G3)}{=} gg^{-1} = r(g) = r(gh),
\end{equation*}
that is, $(gh)(h^{-1}g^{-1}) = r(gh)$. This fact, together with $(gh,h^{-1}g^{-1}) \in G^{(2)}$, implies by the uniqueness in Lemma \ref{lemma:properties_r_s} $(iv)$ that $(gh)^{-1} = h^{-1}g^{-1}$.

\textbf{($\mathbf{d}$):} given $x \in G^{(0)}$, we have $x = g^{-1}g$ for some $g \in G$. By $(b)$, we get
\begin{equation*}
    r(x) = r(g^{-1}g) = r(g^{-1}) = s(g) = g^{-1}g = x,
\end{equation*}
and
\begin{equation*}
    s(x) = s(g^{-1}g) = s(g) = g^{-1}g = x.
\end{equation*}
\end{proof}

The proposition above leads to a more graphical idea about what is a groupoid: a groupoid $G$ is a set of arrows from $G^{(0)}$ to itself, such that each arrow comes from its source to its range. The set $G^{(2)}$ represents the pair of arrows (i.e. orientend paths) that can be connected, that is, for $(g,h) \in G^{(2)}$, since the source of $g$ coincide with the range of $h$ we are able to define a resultant arrow connecting $s(h)$ to $r(g)$ as an definitive path between these points of $G^{(0)}$. Even the elements of $G^{(0)}$ are also arrows from elements of $G^{(0)}$ to themselves, and in this particular case these arrows are cyclic, since $r(x) = x = s(x)$ as presented in Proposition \ref{prop:arrow_approach_groupoids}. The figure \ref{fig:groupoid_arrow_approach} illustrates such picturesque descricption presented in this paragraph.

\begin{figure}[h]
\caption{A groupoid seen as a set of arrows from $G^{(0)}$ to itself. In $I$ we se two non-composable elements $g$ and $h$ of the groupoid. In $II$ we see an element of $G^{(0)}$ represented as an arrow, however it is important to understand the case illustrated here is a particular case, since in general there are elements on the groupoid that the range coincides with the source that do not belong to $G^{(0)}$ (see Example \ref{exa:groups}). The picture $III$ represents the groupoid product of two elements $g_2$ and $g_1$, and in $IV$ there is a representation of the inverse operation of an element $h_1$.}
\centering

\tikzset{every picture/.style={line width=0.75pt}} %set default line width to 0.75pt       
\resizebox{1\textwidth}{!}{

\begin{tikzpicture}[x=0.75pt,y=0.75pt,yscale=-1,xscale=1]
%uncomment if require: \path (0,761); %set diagram left start at 0, and has height of 761

%Curve Lines [id:da6180295398856929] 
\draw [line width=0.75]    (676.52,247) .. controls (621.5,248) and (602.5,282) .. (683.5,287) .. controls (778.54,287.99) and (758.91,245.86) .. (692.22,247.6) ;
\draw [shift={(690.19,247.67)}, rotate = 357.76] [fill={rgb, 255:red, 0; green, 0; blue, 0 }  ][line width=0.08]  [draw opacity=0] (10.72,-5.15) -- (0,0) -- (10.72,5.15) -- (7.12,0) -- cycle    ;

%Shape: Polygon Curved [id:ds4581958776196102] 
\draw  [line width=0.75]  (470.5,75) .. controls (604.5,70) and (849.5,89) .. (904.5,157) .. controls (959.5,225) and (789.5,273) .. (800.5,335) .. controls (811.5,397) and (995.5,524) .. (901.5,601) .. controls (807.5,678) and (558.5,669) .. (475.5,667) .. controls (392.5,665) and (170.5,656) .. (86.5,580) .. controls (2.5,504) and (186.5,465) .. (157.5,398) .. controls (128.5,331) and (-24.5,294) .. (30.5,216) .. controls (85.5,138) and (336.5,80) .. (470.5,75) -- cycle ;
%Curve Lines [id:da42344962886304205] 
\draw [line width=0.75]    (329,288.75) .. controls (348.3,235.67) and (323.83,224.97) .. (301.43,215.3) ;
\draw [shift={(299,214.25)}, rotate = 383.5] [fill={rgb, 255:red, 0; green, 0; blue, 0 }  ][line width=0.08]  [draw opacity=0] (10.72,-5.15) -- (0,0) -- (10.72,5.15) -- (7.12,0) -- cycle    ;

%Curve Lines [id:da1407115575067851] 
\draw [line width=0.75]    (409,200.25) .. controls (398.22,230.14) and (402.81,246.1) .. (431.07,280.22) ;
\draw [shift={(432.83,282.33)}, rotate = 230.02] [fill={rgb, 255:red, 0; green, 0; blue, 0 }  ][line width=0.08]  [draw opacity=0] (10.72,-5.15) -- (0,0) -- (10.72,5.15) -- (7.12,0) -- cycle    ;

%Curve Lines [id:da16455474347106724] 
\draw [line width=0.75]    (262.3,522.67) .. controls (279.83,463.89) and (326.89,457.68) .. (356.97,504.65) ;
\draw [shift={(358.33,506.83)}, rotate = 238.7] [fill={rgb, 255:red, 0; green, 0; blue, 0 }  ][line width=0.08]  [draw opacity=0] (10.72,-5.15) -- (0,0) -- (10.72,5.15) -- (7.12,0) -- cycle    ;

%Curve Lines [id:da6416629715292622] 
\draw [line width=0.75]    (367,507.5) .. controls (395.91,466.32) and (442.69,456.85) .. (459.79,525.88) ;
\draw [shift={(460.3,528)}, rotate = 256.85] [fill={rgb, 255:red, 0; green, 0; blue, 0 }  ][line width=0.08]  [draw opacity=0] (10.72,-5.15) -- (0,0) -- (10.72,5.15) -- (7.12,0) -- cycle    ;

%Curve Lines [id:da2248421219283553] 
\draw [line width=0.75]    (261.63,537.33) .. controls (288.83,616.93) and (406.58,633.7) .. (458.19,542.06) ;
\draw [shift={(458.97,540.67)}, rotate = 478.78] [fill={rgb, 255:red, 0; green, 0; blue, 0 }  ][line width=0.08]  [draw opacity=0] (10.72,-5.15) -- (0,0) -- (10.72,5.15) -- (7.12,0) -- cycle    ;

%Curve Lines [id:da8076496173986822] 
\draw [line width=0.75]    (609.5,532.5) .. controls (636.7,612.1) and (752.6,628.7) .. (804.19,537.06) ;
\draw [shift={(804.97,535.67)}, rotate = 478.78] [fill={rgb, 255:red, 0; green, 0; blue, 0 }  ][line width=0.08]  [draw opacity=0] (10.72,-5.15) -- (0,0) -- (10.72,5.15) -- (7.12,0) -- cycle    ;

%Curve Lines [id:da5379585064955144] 
\draw [line width=0.75]    (804.5,523) .. controls (774.28,444.5) and (658.27,424.94) .. (610.22,518.42) ;
\draw [shift={(609.5,519.83)}, rotate = 296.59000000000003] [fill={rgb, 255:red, 0; green, 0; blue, 0 }  ][line width=0.08]  [draw opacity=0] (10.72,-5.15) -- (0,0) -- (10.72,5.15) -- (7.12,0) -- cycle    ;

%Shape: Circle [id:dp7004030581274747] 
\draw  [fill={rgb, 255:red, 0; green, 0; blue, 0 }  ,fill opacity=1 ] (803,529.5) .. controls (803,527.57) and (804.57,526) .. (806.5,526) .. controls (808.43,526) and (810,527.57) .. (810,529.5) .. controls (810,531.43) and (808.43,533) .. (806.5,533) .. controls (804.57,533) and (803,531.43) .. (803,529.5) -- cycle ;
%Shape: Circle [id:dp523628621812534] 
\draw  [fill={rgb, 255:red, 0; green, 0; blue, 0 }  ,fill opacity=1 ] (290.67,212.5) .. controls (290.67,210.57) and (292.23,209) .. (294.17,209) .. controls (296.1,209) and (297.67,210.57) .. (297.67,212.5) .. controls (297.67,214.43) and (296.1,216) .. (294.17,216) .. controls (292.23,216) and (290.67,214.43) .. (290.67,212.5) -- cycle ;
%Shape: Circle [id:dp06023356249035838] 
\draw  [fill={rgb, 255:red, 0; green, 0; blue, 0 }  ,fill opacity=1 ] (323,294.83) .. controls (323,292.9) and (324.57,291.33) .. (326.5,291.33) .. controls (328.43,291.33) and (330,292.9) .. (330,294.83) .. controls (330,296.77) and (328.43,298.33) .. (326.5,298.33) .. controls (324.57,298.33) and (323,296.77) .. (323,294.83) -- cycle ;
%Shape: Circle [id:dp17005283026453577] 
\draw  [fill={rgb, 255:red, 0; green, 0; blue, 0 }  ,fill opacity=1 ] (433.67,286.5) .. controls (433.67,284.57) and (435.23,283) .. (437.17,283) .. controls (439.1,283) and (440.67,284.57) .. (440.67,286.5) .. controls (440.67,288.43) and (439.1,290) .. (437.17,290) .. controls (435.23,290) and (433.67,288.43) .. (433.67,286.5) -- cycle ;
%Shape: Circle [id:dp1695040007325206] 
\draw  [fill={rgb, 255:red, 0; green, 0; blue, 0 }  ,fill opacity=1 ] (408.67,193.83) .. controls (408.67,191.9) and (410.23,190.33) .. (412.17,190.33) .. controls (414.1,190.33) and (415.67,191.9) .. (415.67,193.83) .. controls (415.67,195.77) and (414.1,197.33) .. (412.17,197.33) .. controls (410.23,197.33) and (408.67,195.77) .. (408.67,193.83) -- cycle ;
%Shape: Circle [id:dp624549862798362] 
\draw  [fill={rgb, 255:red, 0; green, 0; blue, 0 }  ,fill opacity=1 ] (680.67,247.5) .. controls (680.67,245.57) and (682.23,244) .. (684.17,244) .. controls (686.1,244) and (687.67,245.57) .. (687.67,247.5) .. controls (687.67,249.43) and (686.1,251) .. (684.17,251) .. controls (682.23,251) and (680.67,249.43) .. (680.67,247.5) -- cycle ;
%Shape: Circle [id:dp019311789180822547] 
\draw  [fill={rgb, 255:red, 0; green, 0; blue, 0 }  ,fill opacity=1 ] (258,529.5) .. controls (258,527.57) and (259.57,526) .. (261.5,526) .. controls (263.43,526) and (265,527.57) .. (265,529.5) .. controls (265,531.43) and (263.43,533) .. (261.5,533) .. controls (259.57,533) and (258,531.43) .. (258,529.5) -- cycle ;
%Shape: Circle [id:dp8715083651564467] 
\draw  [fill={rgb, 255:red, 0; green, 0; blue, 0 }  ,fill opacity=1 ] (358,512.5) .. controls (358,510.57) and (359.57,509) .. (361.5,509) .. controls (363.43,509) and (365,510.57) .. (365,512.5) .. controls (365,514.43) and (363.43,516) .. (361.5,516) .. controls (359.57,516) and (358,514.43) .. (358,512.5) -- cycle ;
%Shape: Circle [id:dp8127684194488981] 
\draw  [fill={rgb, 255:red, 0; green, 0; blue, 0 }  ,fill opacity=1 ] (457.33,534.83) .. controls (457.33,532.9) and (458.9,531.33) .. (460.83,531.33) .. controls (462.77,531.33) and (464.33,532.9) .. (464.33,534.83) .. controls (464.33,536.77) and (462.77,538.33) .. (460.83,538.33) .. controls (458.9,538.33) and (457.33,536.77) .. (457.33,534.83) -- cycle ;
%Shape: Circle [id:dp5397782644652955] 
\draw  [fill={rgb, 255:red, 0; green, 0; blue, 0 }  ,fill opacity=1 ] (605.67,525.5) .. controls (605.67,523.57) and (607.23,522) .. (609.17,522) .. controls (611.1,522) and (612.67,523.57) .. (612.67,525.5) .. controls (612.67,527.43) and (611.1,529) .. (609.17,529) .. controls (607.23,529) and (605.67,527.43) .. (605.67,525.5) -- cycle ;

% Text Node
\draw (685,225) node  [font=\Large]  {$s( x) =x=r( x)$};
% Text Node
\draw (391.83,237) node  [font=\Large]  {$g$};
% Text Node
\draw (415,172) node  [font=\Large]  {$s( g)$};
% Text Node
\draw (435,301) node  [font=\Large]  {$r( g)$};
% Text Node
\draw (350,241) node  [font=\Large]  {$h$};
% Text Node
\draw (324,312) node  [font=\Large]  {$s( h)$};
% Text Node
\draw (294,191) node  [font=\Large]  {$r( h)$};
% Text Node
\draw (113,232) node  [font=\Huge]  {$G^{( 0)}$};
% Text Node
\draw (683,302) node  [font=\Large]  {$x$};
% Text Node
\draw (422,457) node  [font=\Large]  {$g_{2}$};
% Text Node
\draw (361,623) node  [font=\Large]  {$g_{2} g_{1}$};
% Text Node
\draw (362,526) node  [font=\large]  {$r( g_{1}) =s( g_{2})$};
% Text Node
\draw (312,456) node  [font=\Large]  {$g_{1}$};
% Text Node
\draw (232,528) node  [font=\Large]  {$s( g_{1})$};
% Text Node
\draw (491,534) node  [font=\Large]  {$r( g_{2})$};
% Text Node
\draw (247,188) node  [font=\LARGE] [align=left] {I.};
% Text Node
\draw (600,186) node  [font=\LARGE] [align=left] {II.};
% Text Node
\draw (245,441) node  [font=\LARGE] [align=left] {III.};
% Text Node
\draw (600,443) node  [font=\LARGE] [align=left] {IV.};
% Text Node
\draw (838,527) node  [font=\Large]  {$r( h_{1})$};
% Text Node
\draw (578,524) node  [font=\Large]  {$s( h_{1})$};
% Text Node
\draw (699,618) node  [font=\Large]  {$h_{1}$};
% Text Node
\draw (702,437) node  [font=\Large]  {$h^{-1}_{1}$};

\end{tikzpicture}

}

\label{fig:groupoid_arrow_approach}
\end{figure}
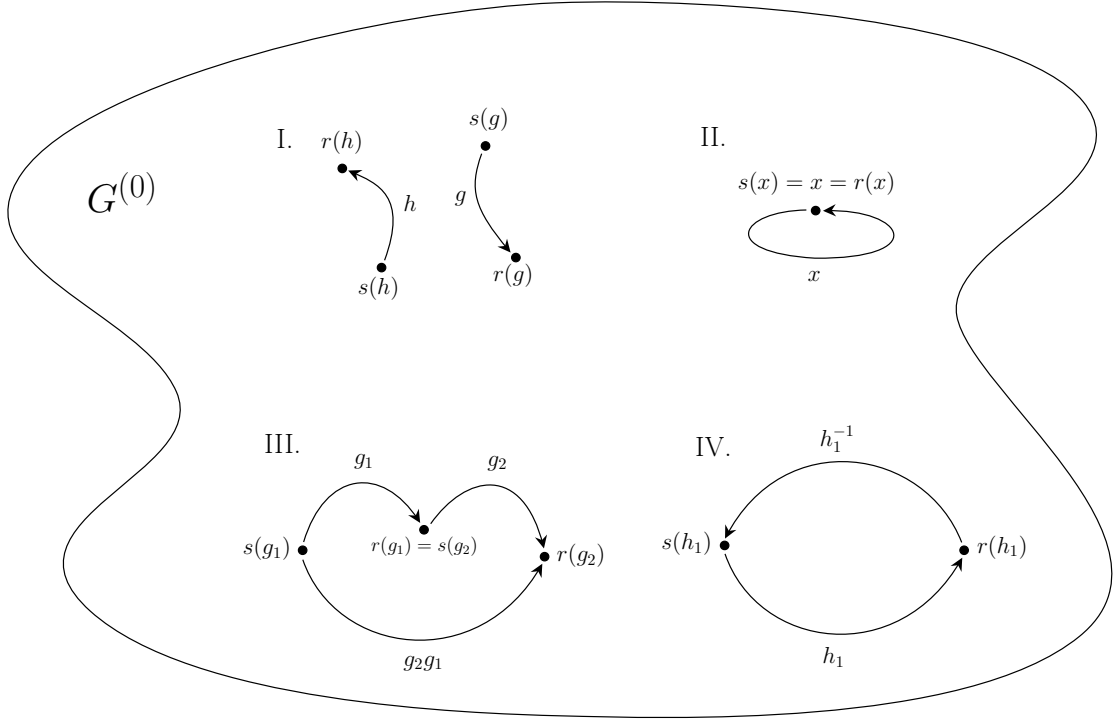

Now we present some examples of groupoids. As we mentioned previously, a groupoid is a generalization of the notion of group, and therefore groups are groupoids.

\begin{example}[Groups]\label{exa:groups} Consider a group $G$ and its unit $e$. By setting $G^{(0)} := \{ e \}$ and $G^{(2)} = G \times G$, there is a unique way to define the range and source maps, namely $r(g) := e =: s(g)$. The groupoid axioms in this case follows directly from the group axioms. The figure \ref{fig:group_groupoid_arrows} illustrates the group by the arrow approach of the groupoid structure.

\begin{figure}[h]
\caption{A group seen as a groupoid and represented by arrows. The unit $e$ is the unique element of $G^{(0)}$ and the elements $g_1$, $g_2$ and $g_3$ are elements of the group.}
\centering

\tikzset{every picture/.style={line width=0.75pt}} %set default line width to 0.75pt        

\begin{tikzpicture}[x=0.75pt,y=0.75pt,yscale=-1,xscale=1]
%uncomment if require: \path (0,258); %set diagram left start at 0, and has height of 258

%Curve Lines [id:da19593665071534316] 
\draw [line width=0.75]    (122.95,118.75) .. controls (120.43,124.66) and (68.2,219) .. (128.2,224.5) .. controls (184.96,219.51) and (143.71,142.17) .. (135.03,121.94) ;
\draw [shift={(133.95,119.25)}, rotate = 430.56] [fill={rgb, 255:red, 0; green, 0; blue, 0 }  ][line width=0.08]  [draw opacity=0] (10.72,-5.15) -- (0,0) -- (10.72,5.15) -- (7.12,0) -- cycle    ;

%Shape: Circle [id:dp03228645156280974] 
\draw  [color={rgb, 255:red, 208; green, 2; blue, 27 }  ,draw opacity=1 ][fill={rgb, 255:red, 208; green, 2; blue, 27 }  ,fill opacity=1 ] (125.67,109.5) .. controls (125.67,107.57) and (127.23,106) .. (129.17,106) .. controls (131.1,106) and (132.67,107.57) .. (132.67,109.5) .. controls (132.67,111.43) and (131.1,113) .. (129.17,113) .. controls (127.23,113) and (125.67,111.43) .. (125.67,109.5) -- cycle ;
%Shape: Boxed Bezier Curve [id:dp09512899133726838] 
\draw [line width=0.75]    (125.06,99.47) .. controls (122.17,93.73) and (83.84,-7.06) .. (40.91,35.21) .. controls (8.07,81.78) and (93.77,100.21) .. (114.81,106.64) ;
\draw [shift={(117.57,107.55)}, rotate = 200.77] [fill={rgb, 255:red, 0; green, 0; blue, 0 }  ][line width=0.08]  [draw opacity=0] (10.72,-5.15) -- (0,0) -- (10.72,5.15) -- (7.12,0) -- cycle    ;

%Shape: Boxed Bezier Curve [id:dp4221286228115523] 
\draw [line width=0.75]    (138.87,107.71) .. controls (145.01,105.81) and (250.73,84.58) .. (216.1,35.28) .. controls (175.57,-4.77) and (143.29,76.73) .. (133.49,96.43) ;
\draw [shift={(132.13,99)}, rotate = 300.24] [fill={rgb, 255:red, 0; green, 0; blue, 0 }  ][line width=0.08]  [draw opacity=0] (10.72,-5.15) -- (0,0) -- (10.72,5.15) -- (7.12,0) -- cycle    ;

% Text Node
\draw (147.6,115.4) node  [font=\Large,color={rgb, 255:red, 208; green, 2; blue, 27 }  ,opacity=1 ]  {$e$};
% Text Node
\draw (25.2,22) node  [font=\Large,color={rgb, 255:red, 0; green, 0; blue, 0 }  ,opacity=1 ]  {$g_{1}$};
% Text Node
\draw (235.2,18.4) node  [font=\Large,color={rgb, 255:red, 0; green, 0; blue, 0 }  ,opacity=1 ]  {$g_{2}$};
% Text Node
\draw (129.6,238.4) node  [font=\Large,color={rgb, 255:red, 0; green, 0; blue, 0 }  ,opacity=1 ]  {$g_{3}$};

\end{tikzpicture}

\label{fig:group_groupoid_arrows}
\end{figure}
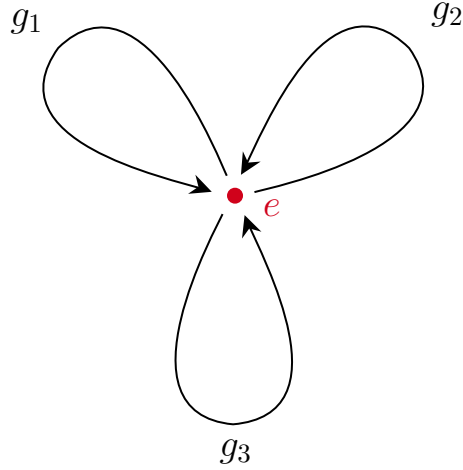
\end{example}

A groupoid that in general is not a group is the one that arises from equivalence relations on a set. In particular, the Renault-Deaconu groupoid, which we will present later, is the most important groupoid of this thesis and it comes from an equivalence relation. For now, we present the general setting for an arbitrary equivalence relation.

\begin{example}[Groupoids from equivalence relations] Let $X$ be a set and consider an equivalence relation $\sim$ on it. Define
\begin{equation*}
    G := \{ (x, y) \in X \times X: x \sim y \} \quad \text{ and } \quad G^{(2)} := \{((x,y),(y,z)): x \sim y, y \sim z \},
\end{equation*}
with the product and inversion given respectively by
\begin{equation*}
    (x,y) \cdot (y,z) := (x,z) \quad \text{and} \quad (x,y)^{-1} := (y,x).
\end{equation*}
The reflexivity and transitivity axioms of equivalence relation we have that the product is well-defined. On the other hand, the asymmetry axiom makes the inverse map well-defined as well. Therefore $G$ is a groupoid. Consequently, we have
\begin{align*}
    G^{(0)} &= \{ (x, x): x \in X \},
\end{align*}
with the range and source maps being chosen\footnote{Alternatively, we may exchange the definitions of $r$ and $s$.} as $r(x,y) = (x,x)$ and $s(x,y) = (y,y)$.
\end{example}

A more concrete example is shown next.

\begin{example}[groupoid from the general linear groups] Take $G = \bigsqcup_{n \in \mathbb{N}} GL_n(\mathbb{C})$ where $GL_n(\mathbb{C})$ is the general linear group of degree $n$, that is, the set of all invertible complex $n \times n$ matrices. In this case, $G^{(2)} = \bigsqcup_{n \in \mathbb{N}}GL_n(\mathbb{C}) \times GL_n(\mathbb{C})$, that is, $G^{(2)}$ is the set of pairs of matrices of same dimension. The product an inverse maps are the usual ones for matrices. Also, $G^{(0)}$ is the set of all usual indentity matrices.
\end{example}

The example above can be extended for $G$ being a disjoint union of groups, where the composable parts are the pairs of elements in a same group. The groupoid product is the usual group operations on each group, and the set of units consists in all group indentities.

From now on, for any groupoid $G$ we use the notations $r^{-1}(x):= r^{-1}(\{x\})$ and $s^{-1}(x):= s^{-1}(\{x\})$ for every $x\in G^{(0)}$. We introduce next some special subsets on a groupoid, such as fibers related to the range and source maps and isotropy groups.

\begin{definition}[$r/s$-fibers] Let $G$ be a groupoid and $x,y \in G^{(0)}$. Define the sets
\begin{equation*}
    G^x := r^{-1}(x) \quad \text{and} \quad G_x:= s^{-1}(x),
\end{equation*}
respectively named $r$-fiber of $G$ over $x$ and $s$-fiber of $G$ over $x$. Also define their intersection $G^x_y:= G^x\cap G_y = r^{-1}(x) \cap s^{-1}(y)$.
\end{definition}

\begin{proposition} Given a groupoid $G$ and $x \in G^{(0)}$. The set $G^x_x \subseteq G$, endowed with the product and inverse operations inherited from the groupoid, is a group. 
\end{proposition}

\begin{proof} First we prove that the product and inverse maps restricted to $G_x^x$ are well-defined. Indeed, for every $g,h \in G_x^x$ the pair $(g,h)$ is composable by Proposition \ref{prop:arrow_approach_groupoids} since $s(g) = x = r(h)$. Moreover, by the same proposition, item $(b)$ it follows that
\begin{equation*}
    r(gh) = r(g) = x = s(h) = s(gh),
\end{equation*}
and then $gh \in G_x^x$. On the other hand, by Lemma \ref{lemma:properties_r_s}, we have that
\begin{equation*}
    r(g^{-1}) = s(g) = x = r(g) = s(g^{-1}),
\end{equation*}
and hence $g^{-1} \in G^x_x$. Now, we prove that $G_x^x$ satisfies the axioms of a group. The associativity follows straightforward from $(G2)$. We claim that $x$ is the group identity. Indeed, we have that $x \in G^x_x$ by Proposition \ref{prop:arrow_approach_groupoids} $(d)$. In addition, by definition of range and source maps, for every $g \in G^x_x$ one gets
\begin{equation}\label{eq:isotropy_inverse}
    g^{-1}g = s(g) = x = r(g) = gg^{-1}.
\end{equation}
The group inverse element of $g$ is its groupoid inverse element $g^{-1}$, as it is also proved by the identities of \eqref{eq:isotropy_inverse}. 
\end{proof}

The last proposition allows us to define the isotropy group.

\begin{definition}[Isotropy group] Let $G$ be a groupoid and $x \in G^{(0)}$. The group $G^x_x$ is called the isotropy group at $x$. 
\end{definition}

From this point we introduce the following notation: given $x \in G^{(0)}$ we use subscript (respect. superscript) $x$ to denote elements of $G_x$ (respect. $G^x$), for example $h_x \in G_x$ (respect. $h^x \in G^x$. We use both notations for elements in the isotropy group at $x$, for instance, $h_x^x \in G_x^x$. 

In groups, vector spaces and algebras, we have the notion of topological groups, topological vector spaces and topological algebras, consisting into endowing these structures which make their operations continuous. The same can be done for groupoids and, under some assumptions on the chosen topology, we may define a $*$-algebra of compactly supported continuous complex functions on the groupoid, which essentially allows us to construct the groupoid C$^*$-algebras.

\begin{definition}[Topological groupoid] Given a groupoid $G$, we say that $G$ is a topological groupoid when it is endowed with a topology such that the inverse and product operations are continuous, where for the last one we endowed $G^{(2)}$ with the subspace topology of the product topology on $G \times G$.
\end{definition}

\begin{remark}\label{remark:r_and_s_continuous} It is an immediate consequence of the definitions of topological groupoid, range and source maps that $r$ and $s$ are continuous.
\end{remark}

We define next the notion of etalicity, the main assumption to construct the groupoid C$^*$-algebras.

\begin{definition}[Etalicity] 
A topological groupoid is said to be \'etale when the range and source maps are local homeomorphisms.
\end{definition}

\begin{example}
\label{ex:discreteetale}
Every discrete groupoid $G$ is \'etale, because the subsets $\{ g \}, \{ r(g) \}$ and $\{ s(g) \}$ are open in $G$ and then the maps $r\vert_{\lbrace g \rbrace}: \lbrace g \rbrace \rightarrow \lbrace r(g) \rbrace$, $s\vert_{\lbrace g \rbrace}: \lbrace g \rbrace \rightarrow \lbrace s(g) \rbrace$ are homeomorphisms.
\end{example}

\begin{definition} Let $\mathcal{U}$ be an open subset of an \'etale groupoid $G$. We say $\mathcal{U}$ is an open bisection of $G$ whenever $r(\mathcal{U}), s(\mathcal{U})$ are open in $G^{(0)}$, and the restricted maps $r\vert_{\mathcal{U}}: \mathcal{U} \rightarrow r(\mathcal{U})$ and $s\vert_{\mathcal{U}}: \mathcal{U} \rightarrow s(\mathcal{U})$ are homeomorphisms.
\end{definition}

The importance of etalicity lies on the description of topological groupoids: the basis of the topology can be described by open bisections. As we see next, not only these sets forms a topological basis for the groupoid but also it is crucial to describe the compactly supported complex continuous functions on the groupoid.

\begin{proposition} On an \'etale groupoid $G$, the family of open bisections of $G$ is an open basis for its topology.
\label{prop:openbisectionbase}
\end{proposition}

\begin{proof}
Given $U$ an open set of $G$, we claim that every $g \in U$ is contained in an open bisection $\mathcal{U}_g$ s.t. $g \in \mathcal{U}_g \subset U$. Indeed, by etalicity, $r$ and $s$ are local homeomorphisms and that grants the existence of two open neighborhoods of $g$, $R_g$ and $S_g$, satisfying that
\begin{equation*}
    r\vert_{R_g}: R_g \to r(R_g) \quad \text{and} \quad s\vert_{S_g}: S_g \to s(S_g)
\end{equation*}
are homeomorphisms. Since every homeomorphism is an open map we notice that $r(R_g)$ and $s(S_g)$ are open neighborhoods of $G^{(0)}$, By taking the non-empty set $\mathcal{U}_g = R_g \cap S_g \cap U$ we have that $r(\mathcal{U}_g)$ is open in $r(R_g)$ and $s(\mathcal{U}_g)$ is open in $s(S_g)$, and hence both $r(\mathcal{U}_g)$ and $s(\mathcal{U}_g)$ are open in $G^{(0)}$ and so
\begin{equation*}
    r\vert_{\mathcal{U}_g}: \mathcal{U}_g \to r(\mathcal{U}_g) \quad \text{and} \quad s\vert_{\mathcal{U}_g}: \mathcal{U}_g \to s(\mathcal{U}_g)
\end{equation*}
are homeomorphisms. We conclude that for every open set $U$ it follows that 
\begin{equation*}
    U = \bigcup_{g \in U}\mathcal{U}_g,
\end{equation*}
that is, the family of open bisections is a basis for the topology.
\end{proof}

\begin{remark} Note that in the proposition above we proved that the open bisections are an analytic basis for the topology on $G$, and then it is also a synthetic basis that generates that topology. 
\end{remark} %%%%%%TALVEZ PROVAR NUM APÊNDICE

\begin{proposition}\label{prop:GxSecondCountable} Let $G$ be an \'etale groupoid. For every $x \in G^{(0)}$, the subspace topology of $G^x$ and $G_x$ are equivalent to the discrete topology. In addition, if $G$ is second countable, then $G^x$ and $G_x$ are countable sets.
\end{proposition}
\begin{proof}
Given $g \in G^x$, by Proposition \ref{prop:openbisectionbase} there is an open bisection $\mathcal{U}_g$ containing $g$. We claim that $\mathcal{U}_g \cap G^x = \{ g \}$. In fact, for any $h \in \mathcal{U}_g \cap G^x$ we have $r(h) = x$, and by the injectivity of $r$ on $\mathcal{U}_g$ we conclude that $h = g$, and therefore $\{ g \}$ is open in $G^x$, that is, the subspace topology on $G^x$ is the discrete topology. Similarly, for $g \in G^x$, by exchaning $r$ by $s$ in the proof above, we have $\mathcal{U}_g \cap G_x = \{ g \}$.

Now suppose that $G$ is second countable. Then, $G^x$ and $G_x$ are also second countable. Since both fibers are discrete, the families of pairwise disjoint open sets $\{\{ g \}:g \in G^x\}$ and $\{\{ g \}:g \in G_x\}$ are countable, and therefore both fibers are also countable.
\end{proof}

\begin{proposition} \label{prop:G0clopen} For every Hausdorff \'etale groupoid $G$, the set $G^{(0)}$, endowed with the subspace topology, is a clopen subset of $G$.
\end{proposition}
\begin{proof}
First we prove $G^{(0)}$ is closed. In fact, let $(x_d)_D$ be a in $G^{(0)}$ such that $\lim_{d \in D} x_d = x \in G$. By Proposition \ref{prop:arrow_approach_groupoids} $(d)$ we get $x_d = r(x_d)$ for every $d \in D$ and by the continuity of $r$ we obtain 
\begin{equation*}
    x = \lim_{d \in D} x_d = \lim_{d \in D} r(x_d) = r(x) \in G^{(0)},
\end{equation*}
and we conclude that $G^{(0)}$ is closed. Now we prove that $G^{(0)}$ is open as well. Indeed, take $x \in G^{(0)}$ and let $\mathcal{U}_x \subset G$ be an open bisection that contains $x$. The restriction $r\vert_{\mathcal{U}}:\mathcal{U} \to r(\mathcal{U})$ is a homeomorphism and then $r(\mathcal{U}) \subseteq G^{(0)}$ is an open set containing $x$ due to Proposition \ref{prop:arrow_approach_groupoids} $(d)$. Then,
\begin{equation*}
    G^{(0)} = \bigcup_{x \in G^{(0)}} r(\mathcal{U}_x),
\end{equation*}
a union of open sets, and therefore that $G^{(0)}$ is open.
\end{proof}

\begin{definition}[Isotropy bundle]\label{def:isotropy_bundle} Let $G$ be a groupoid. The isotropy bundle of $G$ is the set
\begin{equation*}
    \Iso(G):= \bigcup_{x \in G^{(0)}} G_x^x = \{g \in G: r(g) = s(g)\}.
\end{equation*}
\end{definition}

\begin{remark}\label{remark:isotropy_bundle_is_subgroupoid} In the notations of the definition above, $\Iso(G)$ is a subgroupoid of $G$. 
\end{remark}

\begin{lemma}\label{lemma:isotropyclosed}
If $G$ is a locally compact Hausdorff (LCH) second-countable \'etale groupoid, then for every $g \in G$ such that $r(g) \neq s(g)$, there exists an open bisection $\mathcal{U}$ containing $g$ separating $r(g)$ from $s(g)$, in the sense that $r(\mathcal{U}) \cap s(\mathcal{U}) = \emptyset$. Furthermore, $\Iso(G) \cap \mathcal{U} = \emptyset$ and, in particular, its isotropy bundle is closed.
\end{lemma}
\begin{proof} For the first claim, assume that there is $g \in G \setminus \Iso(G)$, that is, $g \in G$ and $r(g) \neq s(g)$, satisfying
\begin{equation}\label{eq:rU_cap_sU_non_empty}
    r(\mathcal{U}) \cap s(\mathcal{U}) \neq \emptyset
\end{equation}
for all open bisections where $g \in \mathcal{U}$. Because of Proposition \ref{prop:openbisectionbase}, the etalicity of $G$ states that the open bisections forms a basis for the topology. Furthermore, since $G$ is second-countable there is a countable subfamily $\{\mathcal{U}_n\}_{\mathbb{N}}$ of the open bisections, which is also a basis for $G$, and hence every open neighborhood of $g$ contains at least one $\mathcal{U}_m$ for some $m \in \mathbb{N}$. In particular, the equality \eqref{eq:rU_cap_sU_non_empty} ensures the existence of $g_n,h_n \in \mathcal{U}_n$ satisfying $r(g_n) = s(h_n)$ for every $n \in \mathbb{N}$. Since every locally compact Hausdorff second-countable space is metrizable, we conclude by we proved so far that both sequences $(g_n)_\mathbb{N}$ and $(h_n)_\mathbb{N}$ converge to $g$. Consequently, by the continuity of $r$ and $s$, we obtain
\begin{equation*}
    r(g_n) \to r(g) \quad \text{and} \quad r(g_n) = s(h_n) \to s(h),
\end{equation*}
that is, $r(g) = s(g)$, a contradiction because $g \notin \Iso(G)$. Therefore, for any $g \in G \setminus \Iso(G)$ there exists an open bisection $\mathcal{U}_g$ containing $g$ and satisfying $r(\mathcal{U}_g) \cap s(\mathcal{U}_g) = \emptyset$ and then $\mathcal{U}_g \cap \Iso(G) = \emptyset$. Moreover,
\begin{equation*}
    G\setminus \Iso(G) = \bigcup_{g \in G\setminus \Iso(G)} \mathcal{U}_g
\end{equation*}
is open and therefore $\Iso(G)$ is closed.
\end{proof}

The next lemma is an essential auxiliary resuts that allow us to grant that involution and convolution product defined to construct the groupoid C$^*$-algebreas are well-defined and they work as a sort of closure of the groupoid operations on the family of the open bisections: its first statement grants that the groupoid inversion operation applied on a open bisection returns another open bisection, while the second one shows that by taking the allowed products between two open bisections we obtain another open bisection.

\begin{lemma} \label{lemma:inverse_product_maps_on_open_bisections_is_open_bisection}
Consider a LCH second countable \'etale groupoid $G$. Then,
\begin{itemize}
    \item[$(i)$] for every open bisection $\mathcal{U} \subset G$, its image under the inverse operation, given by 
    \begin{equation*}
        \mathcal{U}^{-1} = \{ g^{-1}: g \in \mathcal{U}\},
    \end{equation*}
    is an open bisection;
    \item[$(ii)$] for every pair of open bisections $(\mathcal{U},\mathcal{V}) \subset G \times G$, their image under the product operation restricted to the composable ordered pairs, given by
    \begin{align*}
        \mathcal{U}\mathcal{V} = \{ gh: g \in \mathcal{U}, h \in \mathcal{V}, (g,h) \in G^{(2)}\},
    \end{align*}
    is an open bisection.
\end{itemize}
\end{lemma}

\begin{proof} For $(i)$, we recall that the inverse map is continuous and it is its own inverse, that is, it is bicontinuous. In particular, $\mathcal{U}^{-1}$ is open. We claim that $r\vert_{\mathcal{U}^{-1}}:\mathcal{U}^{-1} \to r(\mathcal{U}^{-1})$ is an homeomorphism. Since this map is continuous and the inverse map is bicontinuous, we just need to prove the injectivity. In particular, the fact that the inverse map is a bijection gives that, for any $g_1, g_2 \in \mathcal{U}^{-1}$ s.t. $r(g_1) = r(g_2)$, there are $h_1, h_2 \in \mathcal{U}$ satisfying
\begin{equation*}
   g_1 = h_1^{-1} \quad \text{and} \quad g_2 = h_2^{-1}.
\end{equation*}
We have
\begin{align*}
s(h_1) = r(g_1) = r(g_2) = s(h_2),
\end{align*}
and then $h_1 = h_2$ because $\mathcal{U}$ is an open bisection, and therefore $g_1 = g_2$, proving the claim. The proof of this claim can be made into that for the statement that $s\vert_{\mathcal{U}^{-1}}:\mathcal{U}^{-1} \to s(\mathcal{U}^{-1})$ by mutatis mutandis, and therefore $\mathcal{U}^{-1}$ is an open bisection.

Before we prove $(ii)$, we show that we can take $s(\mathcal{U}) = r(\mathcal{V})$ w.l.o.g. In fact, $\mathcal{U}$ and $\mathcal{V}$ are open bisections and hence the set $W= s(\mathcal{U}) \cap r(\mathcal{V})$ is open. Then, the sets $\mathcal{U}_0 \subseteq \mathcal{U}$ and $\mathcal{V}_0 \subseteq \mathcal{V}$ defined as
\begin{equation*}
    \mathcal{U}_0 := s\vert_{\mathcal{U}}^{-1}(W) \quad \text{and} \quad \mathcal{V}_0 = r\vert_{\mathcal{V}}^{-1}(W)
\end{equation*}
are also open bisections, since they are pre-images of an open set under continuous functions and they are contained in open bisections, and they satisfy $s(\mathcal{U}_0) = W = r(\mathcal{V}_0)$. By construction we have $\mathcal{U} \times \mathcal{V} \cap G^{(2)} \supseteq \mathcal{U}_0 \times \mathcal{V}_0 \cap G^{(2)}$. We assert this inclusion is actually an equality. Indeed, let $(g,h) \in \mathcal{U} \times \mathcal{V} \cap G^{(2)}$ and denote $x = s(g)$. By the equivalence \ref{prop:arrow_approach_groupoids} $(a)$, one gets $x= s(g) = r(h)$, and then $x \in W$. Also, $g = s\vert_{\mathcal{U}}^{-1}(x)$ and $h = r\vert_{\mathcal{V}}^{-1}(x)$, then $g \in \mathcal{U}_0$ and $h \in \mathcal{V}_0$, proving the assertion. Consequently, we have the following:
\begin{align*}
\mathcal{UV} &= \lbrace gh : g \in \mathcal{U}, h \in \mathcal{V}, (g,h) \in G^{(2)} \rbrace = \lbrace gh : (g,h) \in \mathcal{U} \times \mathcal{V} \cap G^{(2)} \rbrace\\
&= \lbrace gh : (g,h) \in \mathcal{U}_0 \times \mathcal{V}_0 \cap G^{(2)}\rbrace = \mathcal{U}_0 \mathcal{V}_0.
\end{align*}
We conclude that $\mathcal{U}$ and $\mathcal{V}$ can be chosen satisfying $s(\mathcal{U}) = r(\mathcal{V})$ without loss of generality. 

The proof of $(ii)$ follows next. Suppose $s(\mathcal{U}) = r(\mathcal{V})$. Define the homeomorphism $\Upsilon: \mathcal{U} \to \mathcal{V}$, $\Upsilon(g):= r\vert_{\mathcal{V}}^{-1} \circ s\vert_{\mathcal{U}}(g)$, and the map $\psi:\mathcal{U} \to \mathcal{U} \times \mathcal{V}$, $\psi(g):= (g, \Upsilon(g))$. Observe that for every $g \in \mathcal{U}$ it holds that
\begin{align*}
r(\Upsilon(g)) = r(r\vert_{\mathcal{V}}^{-1} \circ s(g)) = s(g),
\end{align*}
and then $(g,\Upsilon(g)) \in \mathcal{U}\times \mathcal{V}\cap G^{(2)}$, that is, $\psi(\mathcal{U}) \subseteq \mathcal{U}\times \mathcal{V}\cap G^{(2)}$. On the other hand, for any $(g,h) \in \mathcal{U} \times \mathcal{V} \cap G^{(2)}$, that is, $g \in \mathcal{U}$, $h\in \mathcal{V}$ and $s(g) = r(h)$ and hence
\begin{align*}
h = r\vert_{\mathcal{V}}^{-1} \circ s\vert_{\mathcal{U}}(g) = \Upsilon(g),
\end{align*}
and therefore $(g,h) = (g, \Upsilon(g)) = \psi(g)$, i.e., $\psi(\mathcal{U}) = \mathcal{U} \times \mathcal{V} \cap G^{(2)}$ and then $\psi$ is surjective. Also, it is injective because $r\vert_{\mathcal{V}}^{-1}$ and $s\vert_{\mathcal{U}}$ are injective as well, therefore $\psi$ is a bijection. Furthermore, this map is continuous. We prove that projection on the first coordinate $\pi:\mathcal{U} \times \mathcal{V} \cap G^{(2)} \to \mathcal{U}$, $\pi(g,h):=g$, which is a continuous map, is the inverse of $\psi$. Indeed, for every $g \in \mathcal{U}$,
\begin{align*}
\pi \circ \psi (g) &= \pi(g,\Upsilon(g)) = g.
\end{align*}
At same time, for any $(g,h) \in \mathcal{U} \times \mathcal{V} \cap G^{(2)}$, the bijectivity of $\psi$ means that $(g,h) = (g, \Upsilon(g)) = \psi(g)$ and hence
\begin{align*}
(\psi \circ \pi)(g,h) = \psi(g) = (g, \Upsilon(g)) = (g,h).
\end{align*}
Therefore $\pi$ is the inverse of $\psi$, and consequently $\psi$ is a homeomorphism. Furthermore, this implies that $\mathcal{U} \times \mathcal{V} \cap G^{(2)}$ an open subset of $\mathcal{U} \times \mathcal{V}$, because $\mathcal{U}$ is open.

Now let $p:\mathcal{U} \times \mathcal{V} \cap G^{(2)} \rightarrow \mathcal{UV}$ be the product operation on the composable pairs of $\mathcal{U} \times \mathcal{V}$. It holds that
\begin{align*}
r\vert_{\mathcal{UV}}\circ p(g,h) = r(gh) = r(g) = r\vert_{\mathcal{U}} \circ \pi(g,h),
\end{align*}
for all $(g,h) \in \mathcal{U} \times \mathcal{V} \cap G^{(2)}$. In other words, it is true that
\begin{align}
\label{eq:rUVp}
r\vert_{\mathcal{UV}} \circ p = r\vert_{\mathcal{U}} \circ \pi.
\end{align}
The composition of homeomorphisms is also a homeomorphism, so it is $r\vert_{\mathcal{U}} \circ \pi$ and by, $\eqref{eq:rUVp}$, $r\vert_{\mathcal{UV}}\circ p$ is a homeomorphism as well. Consequently, $p$ is surjective and $r\vert_{\mathcal{UV}}$ is injective.

We state that $p$ is also injective. Indeed, given $(g_1,h_1),(g_2,h_2) \in \mathcal{U} \times \mathcal{V} \cap G^{(2)}$ such that $p(g_1, h_1) = p(g_2, h_2)$, we have the sequence of implications
\begin{equation*}
    p(g_1, h_1) = p(g_2, h_2) \stackrel{(\bullet)}{\implies}r\vert_{\mathcal{U}} \circ \pi(g_1, h_1) = r\vert_{\mathcal{U}} \circ \pi(g_2, h_2) \implies r\vert_{\mathcal{U}}(g_1) = r\vert_{\mathcal{U}}(g_2),
\end{equation*}
where in $(\bullet)$ we used \eqref{eq:rUVp}. Hence $g_1 = g_2$ because $\mathcal{U}$ is an open bisection, and then
\begin{equation*}
    r(h_1) = s(g_1) = s(g_2) = r(h_2)
\end{equation*}
by the hypothesis $(g_1,h_1),(g_2,h_2) \in \mathcal{U} \times \mathcal{V} \cap G^{(2)}$. However, $\mathcal{V}$ is an open bisection and consequently $h_1 = h_2$. Therefore $(g_1,h_1)=(g_2,h_2)$, i.e., $p$ is injective and then it is a continuous bijection, as well it is $r\vert_{\mathcal{UV}}$\footnote{$r\vert_{\mathcal{UV}}$ is surjective by construction.}, and the composition $r\vert_{\mathcal{UV}} \circ p$ is a homeomorphism. It follows that both functions are homeomorphisms. The proof that $s\vert_{\mathcal{UV}}$ follows analogously.

Summarizing, we showed that $\mathcal{UV}$ is an open set s.t. $r\vert_{\mathcal{UV}}$ and $s\vert_{\mathcal{UV}}$ are homeomorphisms and therefore it is an open bisection.
\end{proof}

\section{Groupoid C$^*$-algebras: the full groupoid C$^*$-algebra}

In the previous section, we presented and studied groupoids and some important topological features we can introduce on their structure. We also mentioned that the groupoid C$^*$-algebras are constructed from a $*$-algebra of compactly supported continuous complex functions on the respective groupoids. Continuity is a topological property and that justifies the necessity of introducing a topology on a groupoid. However we may beforehand mention that $*$-algebra aforementioned is in general non-commutative, and therefore the product between these functions will not be the usual one, but it is a convolution. For instance, we will see here that the etalicity allows us, under some conditions, to decompose those continuous functions on sums of functions of same properties but with support contained in open bisections. 

This section is based on the works \cite{Frausino2018, Lima2019, Sims2017}.

\begin{definition}\label{def:C_c_G} Let $G$ be a LCH second countable \'etale groupoid. Given a function $f:G \to \mathbb{C}$, the support of $f$ is the set
 \begin{equation*}
    \supp f := \overline{\{g \in G: f(g) \neq 0\}}.
 \end{equation*}
We say that $f$ is compactly supported or it has a compact support when $\supp f$ is compact. In addition, we define the set 
 \begin{equation*}
    C_c(G) = \{ f: G \rightarrow \mathbb{C} : \text{ $f$ is continuous and compactly supported} \}
 \end{equation*}
of all compactly supported continuous complex functions on $G$.
\end{definition}

We endow $C_c(G)$ with the usual addition and product by scalar operations, turning it into a vector space. What we do next is to construct the convolution product and the involution, in order to endow a $*$-algebra structure to $C_c(G)$.

\begin{lemma}\label{lemma:function_sum_supported_open_bisections} Let $G$ be a LCH second countable \'etale groupoid. For every $f \in C_c(G)$ there is a finite family $\{\mathcal{U}_i\}_{i=1}^n$ of open bisections and a finite family $\{f_i\}_{i=1}^n$ of functions in $C_c(G)$, where $\supp f_i \subseteq \mathcal{U}_i$ satisfying
\begin{equation}\label{eq:function_sum_supported_open_bisections}
    f = \sum_{i = 1}^n f_i.
\end{equation}
Furthermore, if $f$ is non-negative, then we can choose each $f_i$ as non-negative.
\end{lemma}

\begin{proof} Given $f \in C_c(G)$ with $K = \supp f$. By Proposition \ref{prop:openbisectionbase} the open bisections form a basis and then there exists an open cover of $K$ by open bisections. Since $K$ is compact, that cover a subcover $\{\mathcal{U}_i\}_{i=1}^n$ of open bisections. Now, define $\mathcal{U}_{n+1} := G\setminus K$, which is open because $K$ is closed due to the Hausdorff property, and hence $\{\mathcal{U}_i\}_{i=1}^{n+1}$ is an open cover of $G$. Let $\{P_i\}_{i=1}^{n+1}$ be a partition of the unit subordinate to $\{\mathcal{U}_i\}_{i=1}^{n+1}$. We get
\begin{equation*}
    f = \sum_{i=1}^{n+1} fP_i = \sum_{i=1}^n fP_i,
\end{equation*}
where the second equality holds because $fP_{n+1} = 0$, since $\supp f \subseteq \mathcal{U}_{n+1}^c$ and $\supp P_{n+1} \subseteq \mathcal{U}_{n+1}$. For each $i = 1,...,n$, the function $f_i := fP_i$ is continuous and $\supp f_i \subseteq \mathcal{U}_i$. Therefore the family of functions $\{f_i\}_{i=1}^n$ is supported on open bisections and it satisfies \eqref{eq:function_sum_supported_open_bisections}. Moreover, since every $P_i$ assumes values in $[0,1]$, if $f$ is non-negative then each $f_i$ is non-negative as well. 
\end{proof}

\begin{remark} It is straightforward to notice that in Lemma \ref{lemma:function_sum_supported_open_bisections} we have $f_i \in C_c(\mathcal{U}_i)$ for each $i$. For any open bisection $\mathcal{U}$, $C_c(\mathcal{U})$ is a vector subspace of $C_c(G)$.
\end{remark}

\begin{lemma} \label{lemma:inv_conv_functions_open_bisections} Let $G$ be a LCH second countable \'etale groupoid.
\begin{itemize}
    \item[$(i)$] Let $\mathcal{U}$ be an open bisection and $f \in C_c(\mathcal{U})$. Define the complex function $f^*$ on $G$ by
    \begin{equation*}
        f^*(g):= \overline{f(g^{-1})}.
    \end{equation*}
    Then $f^* \in C_c(\mathcal{U}^{-1})$ and hence $f^* \in C_c(G)$.
    \item[$(ii)$] For any two open bisections $\mathcal{U}_1$ and $\mathcal{U}_2$, let $f_1 \in C_c(\mathcal{U}_1)$ and $f_2 \in C_c(\mathcal{U}_2)$. Define the function $f_1 \cdot f_2$ on $G$ by
    \begin{equation*}
        (f_1 \cdot f_2) (g):= \sum_{g_1g_2 = g} f_1(g_1)f_2(g_2)
    \end{equation*}
    Then $f_1 \cdot f_2 \in C_c(\mathcal{U}_1\mathcal{U}_2)$ and hence $f_1 \cdot f_2 \in C_c(G)$.
\end{itemize}
\end{lemma}

\begin{proof} In order to prove $(i)$, we recall that the inverse map is a continuous because $G$ is a topological groupoid, and it is clear that the complex conjugation is also a continuous. Since $f \in C_c(\mathcal{U})$, the composition of these maps $f^*$ is continuous. We claim that $\supp f^* = (\supp f)^{-1}$. In fact,
\begin{align*}
\supp f^* &= \overline{\left\{ g \in G : f^*(g) \neq 0 \right\}} = \overline{\left\{ g \in G : \overline{f(g^{-1})} \neq 0 \right\}} = \overline{\left\{ g \in G : f(g^{-1}) \neq 0 \right\}}\\
&= \overline{\left(\left\{ g \in G : f(g) \neq 0 \right\}\right)^{-1}} \stackrel{(\bullet)}{=} \left(\overline{\left\{ g \in G : f(g) \neq 0 \right\}}\right)^{-1} = (\supp f)^{-1},
\end{align*}
where in $(\bullet)$ is justified by the fact that the inverse map is continuous. Also by the continuity of the inverse map, $(\supp f)^{-1}$ is compact. Since $\supp f \subseteq \mathcal{U}$, we have $\supp f^* = (\supp f)^{-1} \subseteq \mathcal{U}^{-1}$ and hence $f^* \in C_c(\mathcal{U}^{-1})$. By Lemma \ref{lemma:inverse_product_maps_on_open_bisections_is_open_bisection} $(i)$, $\mathcal{U}^{-1}$ is an open bisection and therefore $f^* \in C_c(G)$.

Now we prove $(ii)$. Given $g \notin \mathcal{U}_1 \mathcal{U}_2$, there is not a pair $(g_1,g_2) \in \mathcal{U}_1 \times \mathcal{U}_2 \cap G^{(0)}$ s.t. $g = g_1g_2$ and hence
\begin{equation*}
    (f_1 \cdot f_2 )(g) = \sum_{g_1g_2 = g} f_1(g_1)f_2(g_2) = 0,
\end{equation*}
implying that $\supp (f_1 \cdot f_2) \subseteq \mathcal{U}_1 \mathcal{U}_2$. By Lemma \ref{lemma:inverse_product_maps_on_open_bisections_is_open_bisection} $(ii)$, the set $\mathcal{U}_1 \mathcal{U}_2$ is an open bisection and then the maps
\begin{equation*}
    u_1: \mathcal{U}_1 \mathcal{U}_2 \to \mathcal{U}_1, \quad \text{and} \quad u_2: \mathcal{U}_1 \mathcal{U}_2 \to \mathcal{U}_2,
\end{equation*}
given by
\begin{equation*}
    u_1:= r\vert_{\mathcal{U}_1}^{-1} \circ r, \quad \text{and} \quad u_2:= s\vert_{\mathcal{U}_2}^{-1} \circ s,
\end{equation*}
are homeomorphisms. Now we state that for every $g \in \mathcal{U}_1 \mathcal{U}_2$ there exist unique elements $g_1 \in \mathcal{U}_1$ and $g_2 \in \mathcal{U_2}$ where $g = g_1g_2$, that is, $u_1(g) = g_1$ and $u_2(g) = g_2$. Indeed, for $h_1 \in \mathcal{U}_1$ and $h_2 \in \mathcal{U}_2$ with $(h_1,h_2)$ composable s.t. $g = h_1h_2$, we have
\begin{equation*}
    r(g) = r(h_1 h_2) = r(h_1) = r(g_1) \quad \text{and} \quad s(g) = s(h_1 h_2) = s(h_2) = s(g_2).
\end{equation*}
However, $\mathcal{U}_1$ and $\mathcal{U}_2$ are open bisections and hence
\begin{equation*}
    h_1 = r\vert_{\mathcal{U}_1}^{-1} \circ r(g) = g_1 \quad \text{and} \quad h_2 = s\vert_{\mathcal{U}_2}^{-1} \circ s(g) = g_2,
\end{equation*}
proving the statement. We conclude that for each $g \in \mathcal{U}_1 \mathcal{U}_2$ it holds that
\begin{equation*}
    (f_1\cdot f_2)(g) = f_1(u_1(g)) f_2(u_2(g)).
\end{equation*}
The continuity of $u_1$, $u_2$, $f_1$ and $f_2$, it follows that $f_1 \cdot f_2$ is continuous. Moreover,
\begin{align*}
    \supp(f_1 \cdot f_2) &= \overline{\{g \in \mathcal{U}_1 \mathcal{U}_2:(f_1 \cdot f_2)(g) \neq 0 \}} = \overline{\{g \in \mathcal{U}_1 \mathcal{U}_2:f_1(u_1(g)) f_2(u_2(g)) \neq 0 \}}\\
    &= \overline{\{g \in \mathcal{U}_1 \mathcal{U}_2: f_1(u_1(g)) \neq 0 \}\cap \{g \in \mathcal{U}_1 \mathcal{U}_2: f_2(u_2(g)) \neq 0 \}}\\
    &= \overline{\{g \in \mathcal{U}_1: f_1(g) \neq 0 \}\cap \{g \in \mathcal{U}_2: f_2(g) \neq 0 \}} \subseteq (\supp f_1) \cap (\supp f_2), 
\end{align*}
hence $\supp(f_1 \cdot f_2)$ is a closed subset of a compact set, and then it is compact. Therefore we have that $f_1 \cdot f_2 \in C_c(\mathcal{U}_1 \mathcal{U}_2)$, since $\supp(f_1 \cdot f_2) \subseteq \mathcal{U}_1 \mathcal{U}_2$. We conclude that $f_1 \cdot f_2 \in C_c(G)$ because $\mathcal{U}_1 \mathcal{U}_2$ is an open bisection.
\end{proof}
 
The previous lemma makes well-defined the functions $f^*$ and $f_1 \cdot f_2$ for $f,f_1$ and $f_2$ complex compactly supported functions on $G$ with support contained in some open bisection. We extend these results for the whole space $C_c(G)$.

\begin{lemma} Let $G$ be a LCH second countable \'etale groupoid. For every $f_1, f_2 \in C_c(G)$ we have
\begin{equation*}
    (f_1 \cdot f_2)(g) := \sum_{g_1g_2 = g} f_1(g_1)f_2(g_2) = \sum_{h \in G_{s(g)}} f_1(gh^{-1})f_2(h) = \sum_{h \in G^{r(g)}} f_1(h)f_2(h^{-1}g).
\end{equation*}
Moreover, the sums above are finite.
\end{lemma}

\begin{proof} Given $g_1, g_2 \in G$ such that $g_1 g_2 = g$. By $(G3)$ in Definition \ref{def:groupoid} we have that $(g_1^{-1},(g_1^{-1})^{-1})$ and $(g_2,g_2^{-1})$ belong to $G^{(2)}$, and by $(G1)$ in the same definition it holds that $(g_1^{-1},(g_1^{-1})^{-1})=(g_1^{-1},g_1)$. Still from same definition, $(G2)$ gives $(g_1^{-1},g_1g_2) = (g_1^{-1},g), (g_1g_2,g_2^{-1}) = (g,g_2^{-1}) \in G^{(2)}$ and we obtain the following equivalences:
\begin{equation*}
 g_1 = gg_2^{-1}\iff g_1g_2 = g \iff g_2 = g_1^{-1}g.
\end{equation*}
Since $(g_1^{-1},g), (g,g_2^{-1}) \in G^{(2)}$ we also get by Lemma \ref{lemma:properties_r_s} $(iii)$ and Proposition \ref{prop:arrow_approach_groupoids} $(a)$ that
\begin{equation*}
    r(g) = s(g_1^{-1}) = r(g_1) \quad \text{and} \quad s(g) = r(g_2^{-1}) = s(g_2),
\end{equation*}
that is $g_1 \in G^{r(g)}$ and $g_2 \in G_{s(g)}$. Then, every $h \in G^{r(g)}$ we may take $g_1 = h$ and then $g_2 = h^{-1}g$, implying to
\begin{equation*}
    (f_1 \cdot f_2)(g) = \sum_{h \in G^{r(g)}} f_1(h)f_2(h^{-1}g).
\end{equation*}
Analogously, for each $h \in G_{s(g)}$ we may take $g_2 = h$ and then $g_1 = gh^{-1}$, from which we obtain
\begin{equation*}
    (f_1 \cdot f_2)(g) = \sum_{h \in G_{s(g)}} f_1(gh^{-1})f_2(h).
\end{equation*}
The sums above are finite for every $g$ because both $G^{r(g)}$ and $G_{s(g)}$ are closed\footnote{Due to the continuity of $r$ and $s$.} and the functions $f_1$ and $f_2$ are compactly supported. In fact,
\begin{equation*}
    \{h \in G_{s(g)}:f_1(h) \neq 0\} = G_{s(g)}\cap \supp f_1 \quad \text{and} \quad \{h \in G^{r(g)}:f_2(h) \neq 0\} = G_{r(g)}\cap \supp f_2
\end{equation*}
are intersecions closed sets with compact sets, and therefore they are compact sets. By Proposition \ref{prop:GxSecondCountable} we have that both subspace topologies of $G_{s(g)}$ and $G^{s(g)}$ are discrete spaces, and then both $G_{s(g)}\cap \supp f_1$ and $G_{r(g)}\cap \supp f_2$ are also discrete. Since every discrete space is compact if and only if it is finite, they are finite sets.  
\end{proof}

\begin{lemma}\label{lemma:linearity_prod_inv} Let $G$ be a LCH second countable \'etale groupoid. Given $f,f_1,f_2 \in C_c(G)$ and $\lambda \in \mathbb{C}$, the following is true:
\begin{itemize}
 \item[$(a)$] $(f_1 + \lambda f_2) \cdot f = (f_1 \cdot f) + \lambda (f_2 \cdot f)$ and $f \cdot (f_1 + \lambda f_2) = (f \cdot f_1) + \lambda (f \cdot f_2)$;
 \item[$(b)$] $(f_1 + \lambda f_2)^* = f_1^* + \overline{\lambda} f_2^*$.
\end{itemize}
\end{lemma}

\begin{proof}
In $(a)$ we prove the first equality only, the second one follows by similar steps. Given $g \in G$, we have
\begin{align*}
((f_1 + \lambda f_2)\cdot f)(g) &= \sum_{g_1g_2 = g} (f_1 + \lambda f_2)(g_1)f(g_2) = \sum_{g_1g_2 = g} f_1(g_1)f(g_2) + \lambda\sum_{g_1g_2 = g} f_2(g_1)f(g_2)\\
&= (f_1 \cdot f)(g) + \lambda (f_2 \cdot f)(g).
\end{align*}

Now we prove $(b)$. For every $g \in G$, it holds that
\begin{align*}
(f_1 + \lambda f_2)^*(g) &= \overline{(f_1 + \lambda f_2)(g^{-1})} = \overline{f_1(g^{-1})} + \overline{\lambda f_2(g^{-1})} = f_1^*(g) + \overline{\lambda} f_2^*(g).
\end{align*}
\end{proof}

\begin{theorem}\label{thm:well-defined_product_involution_G} Let $G$ be a locally compact Hausdorff second countable \'etale groupoid. Given $f,f_1,f_2 \in C_c(G)$ then $f^*,f_1 \cdot f_2 \in C_c(G)$. 
\end{theorem}

\begin{proof} By Lemma \ref{lemma:function_sum_supported_open_bisections}, we may write
\begin{equation*}
    f = \sum_{i=1}^n \phi_i, \quad f_1 = \sum_{j=1}^{n_1} \phi^1_j \quad \text{and} \quad f_2 = \sum_{k=1}^{n_2} \phi^2_k,
\end{equation*}
where $n,n_1,n_2 \in \mathbb{N}$, and the functions $\phi_i$, $\phi^1_j$ and $\phi^2_k$ belong to $C_c(G)$ and are supported in open bisections for every $i = 1,...,n$, $j=1,...,n_1$ and $k = 1,...,n_2$. Lemma \ref{lemma:linearity_prod_inv} gives that
\begin{equation*}
    f^* = \sum_{k=1}^n \phi_k^* \quad \text{and} \quad (f_1 \cdot f_2)(g) = \sum_{j=1}^{n_1} \sum_{k=1}^{n_2} \phi^1_j \cdot \phi^2_k,
\end{equation*}
and since $C_c(G)$ is a vector space, we conclude that it is sufficient to prove that the statement of the theorem holds for functions supported in open bisections, which is precisely the result in Lemma \ref{lemma:inv_conv_functions_open_bisections}. 
\end{proof}

The well-definition of $f^*$ and $f_1 \cdot f_2$ for arbitrary functions in $C_c(G)$ allows us to define the maps
\begin{equation*}
 f \mapsto f^* \quad \text{and} \quad (f_1,f_2) \mapsto f_1 \cdots f_2
\end{equation*}
as operations in $C_c(G)$ as follows.

\begin{definition} Let $G$ be a locally compact Hausdorff second countable \'etale groupoid. We define the involution $*:C_c(G) \to C_c(G)$ and convolution product, or simply convolution, $\cdot :C_c(G) \times C_c(G) \to C_c(G)$, defined by
\begin{equation*}
    f^*(g):= \overline{f(g^{-1})} \quad \text{and} \quad (f_1 \cdot f_2 )(g) = \sum_{g_1g_2 = g} f_1(g_1)f_2(g_2),
\end{equation*}
for every $f, f_1, f_2 \in C_g(G)$ and $g \in G$.
\end{definition}

\begin{theorem} Let $G$ be a locally compact Hausdorff second countable \'etale groupoid. The vector space $C_c(G)$ endowed with the operations $\cdot$ and $*$ is a $*$-algebra.
\label{thm:CcGast-algebra}
\end{theorem}
\begin{proof}
As we observed previously in this section, $C_c(G)$ is a vector space. Moreover, we also proved in Theorem \ref{thm:well-defined_product_involution_G} that the involution and convolution product are well-defined. Furthermore, in Lemma \ref{lemma:linearity_prod_inv} we showed that the involution is conjugate-linear and that the convolution is linear. We prove now the remaining axioms of $*$-algebra.

\begin{itemize}
\item \textbf{Associativity of the convolution:} given $f_1, f_2, f_3 \in C_c(G)$ and $g \in G$, we have 
\begin{align*}
(f_1\cdot (f_2 \cdot f_3))(g) &= \sum_{g_1 h = g} f_1(g_1) (f_2 \cdot f_3)(h)= \sum_{g_1 h = g} \sum_{g_2 g_3 = h} f_1(g_1) f_2(g_2) f_3(g_3)\\
&= \sum_{g_1 g_2 g_3 = g} f_1(g_1) f_2(g_2) f_3(g_3) = \sum_{hg_3 = g}\sum_{g_1 g_2 = h} f_1(g_1) f_2(g_2) f_3(g_3)\\
&= \sum_{hg_3 = g} \left( \sum_{g_1 g_2 = h} f_1(g_1) f_2(g_2) \right)f_3(g_3) = \sum_{hg_3 = g} (f_1 \cdot f_2)(h) f_3(g_3)\\
&= ((f_1 \cdot f_2) \cdot f_3)(g).
\end{align*}

\item \textbf{Involution is its own inverse:} let $f \in C_c(G)$ and $g \in G$. We have
\begin{equation*}
    (f^*)^*(g) = \overline{f^*(g^{-1})} = \overline{\overline{f((g^{-1})^{-1})}} = f(g).
\end{equation*}

\item \textbf{Distribution of the involution over the convolution:} let $f_1, f_2 \in C_c(G)$ and $g \in G$. It holds that
\begin{align*}
(f_1 \cdot f_2)^*(g) &= \overline{(f_1 \cdot f_2)(g^{-1})} = \sum_{g_1g_2 = g^{-1}} \overline{f_1(g_1)f_2(g_2)} \\
&= \sum_{g_1g_2 = g^{-1}} f_2^*(g_2^{-1})f_1^*(g_1^{-1}) = \sum_{g_2^{-1}g_1^{-1} = g} f_2*(g_2^{-1})f_1*(g_1^{-1}).
\end{align*}
By setting $h_1 = g_2^{-1}$ and $h_2 = g_1^{-1}$ one gets
\begin{align*}
\sum_{g_2^{-1}g_1^{-1} = g} f_2^*(g_2^{-1})f_1^*(g_1^{-1})&= \sum_{h_1h_2 = g} f_2^*(h_1)f_1^*(h_2) = (f_2^* \cdot f_1^*)(g),
\end{align*}
that is,
\begin{equation*}
    (f_1 \cdot f_2)^* = f_2^* \cdot f_1^*.
\end{equation*}
\end{itemize}
\end{proof}

The construction of the full groupoid C$^*$-algebra of $G$ emerges from the construction of the $*$-algebra $C_c(G)$, and it is fulfilled by endowing $C_c(G)$ with a norm that satisfies the C$^*$ norm property, followed by its norm completion. These will be our next steps from now. In order to construct the aformentioned norm, we use the commutative $*$-subalgebra $C_c(G^{(0)})$ of $C_c(G)$.

\begin{lemma} \label{lemma:CcG0astalgebra} Given a LCH second countable \'etale groupoid $G$, the set $C_c(G^{(0)})$ endowed with the ihnerited operations of $C_c(G)$ is a $*$-subalgebra of $C_c(G)$. Moreover, $C_c(G^{(0)})$ is commutative, the restriction of the convolution product this $*$-subalgebra coincides with the pointwise multiplication, and the restriction of involution is given by $f^*(x) = \overline{f(x)}$, $f \in C_c(G^{(0)})$.
\end{lemma}

\begin{proof}
$G^{(0)}$ is open due to Proposition \ref{prop:G0clopen} and hence $C_c(G^{(0)})$ is a subspace of $C_c(G)$. 

For the involution, let $f \in C_c(G^{(0)})$. Theorem \ref{thm:CcGast-algebra} gives that $f^* \in C_c(G)$. Let $g \in G$ satisfying $f^*(g) \neq 0$, that is, $f(g^{-1}) \neq 0$ and then $g^{-1} \in G^{(0)}$. By Lemma \ref{lemma:properties_r_s} $(iii)$ and Proposition \ref{prop:arrow_approach_groupoids} $(d)$, we have
\begin{equation*}
    r(g) = s(g^{-1}) = g^{-1} = r(g^{-1}) = s(g),
\end{equation*}
then $(g,g) \in G^{(2)}$ due to Proposition \ref{prop:arrow_approach_groupoids} $(i)$. Then we have
\begin{equation*}
    g = gg^{-1} g = g s(g) = gg^{-1} = r(g) = g^{-1},
\end{equation*}
that is $g = g^{-1}$ and therefore $g \in G^{(0)}$. Hence,
\begin{equation*}
    \{g \in G: f^*(g) = 0\} \subseteq G^{(0)},
\end{equation*}
and we obtain $\supp f^* \subseteq G^{(0)}$ because $G^{(0)}$ is closed by Proposition \ref{prop:G0clopen}. Therefore, $f^* \in C_c(G^{(0)})$, that is, $*$ is algebrically closed in $C_c(G^{(0)}$. In addition, note that we proved $x \in G^{(0)}$ implies $x = x^{-1}$ and therefore $f^*(x) = \overline{f(x)}$ for every $f \in C_c(G^{(0)}$.

For the convolution, let $f_1, f_2 \in C_c(G^{(0)})$. Theorem \ref{thm:CcGast-algebra} gives that $f_1 \cdot f_2 \in C_c(G)$. By definition of convolution, for every $g \in G$ satisfying $(f_1 \cdot f_2)(g) \neq 0$, there exists $g_1,g_2 \in G$ such that $g = g_1 g_2$, $f_1(g_1) \neq 0$ and $f_2(g_2) \neq 0$. Since $f_1, f_2 \in C_c(G^{(0)})$, we necessarily have $g_1,g_2 \in G^{(0)}$. Note that by construction $(g_1,g_2) \in G^{(2)}$ and then $s(g_1) = r(g_2)$ due to Proposition \ref{prop:arrow_approach_groupoids} $(a)$. Then,
\begin{equation*}
    g_1 = s(g_1) = r(g_2) = g_2.
\end{equation*}
Hence,
\begin{equation*}
    g = g_1 g_2 = g_1 g_1 = r(g_1) g_1 = g_1,
\end{equation*}
and therefore $g = g_1 = g_2$, that is,
\begin{equation*}
    \{g \in G: (f_1\cdot f_2)(g) = 0\} \subseteq G^{(0)},
\end{equation*}
and recalling that $G^{(0)}$ is closed we conclude that $\supp (f_1\cdot f_2) \subseteq G^{(0)}$, i.e., $f_1\cdot f_2 \in C_c(G^{(0)})$. Moreover,
\begin{equation*}
    (f_1 \cdot f_2) (g) = \sum_{g_1 g_2 = g} f_1(g_1) f_2(g_2) = f_1(g) f_2(g).
\end{equation*}
Consequently, $C_c(G^{(0)})$ is commutative.
\end{proof}

\begin{lemma}\label{lemma:C_cG0_union_C_star_algebras} Let $G$ be a LCH second countable \'etale groupoid. The set $C_c(G^{(0)})$ is a union of C$^*$-algebras. 
\end{lemma}

\begin{proof} Given $K \subseteq G^{(0)}$ compact, define $\mathscr{C}(K)$ as the set of all complex continuous functions with support contained in $K$. We endow $\mathscr{C}(K)$ with the usual vector space structure, the point-wise multiplication and the complex conjugation as an involution. Then, $\mathscr{C}(K)$ has the $*$-algebra structure, and we also endow this space with the supremum norm, that is, $\| f\|_\infty:=\sup_{x\in K}|f(x)|$, for every $f \in \mathscr{C}(K)$. Since every element has support on $K$, it is straightforward to conclude that $\mathscr{C}(K)$ is a C$^*$-algebra. By Lemma \ref{lemma:CcG0astalgebra} the unit space $C_c(G^{(0)})$ is a $*$-subalgebra of $C_c(G^{(0)})$. We claim that
\begin{equation*}
    C_c(G^{(0)})=\bigcup_{f\in C_c(G^{(0)})}\mathscr{C}(\supp f).
\end{equation*}
In fact, for every $f\in C_c(G^{(0)})$, we have $f\in \mathscr{C}(\supp f)$ and hence $C_c(G^{(0)})\subseteq \bigcup_{f\in C_c(G^{(0)})}\mathscr{C}(\supp f)$. Conversely, if $f'\in \bigcup_{f\in C_c(G^{(0)})}\mathscr{C}(\supp f)$ then $h\in \mathscr{C}(\supp f)$ for some $f\in C_c(G^{(0)})$, and then $\supp(f')\subset supp(f)$. By the compactness of $\supp(f)$, it follows that $\supp(f')$ is compact and therefore $C_c(G^{(0)})\supseteq \bigcup_{f\in C_c(G^{(0)})}\mathscr{C}(\supp f)$. 
\end{proof}

%P 3.2.1 pg 17 \cite{sims2017hausdorff} 
\begin{proposition} \label{prop:reprnormbound}
Let $G$ be a locally compact Hausdorff second countable \'etale groupoid. For every $f \in C_c(G)$, there exists a constant $K_f \geq 0$ such that $\| \pi(f) \| \leq K_f$ for every $*$-representation $\pi: C_c(G) \rightarrow \mathfrak{B} (\mathcal{H})$ of $C_c(G)$ on a Hilbert space $\mathcal{H}$. In particular, if $f$ is supported on an open bisection, then we can take $K_f = \| f \|_\infty$.
\end{proposition}

\begin{proof}
For any $*$-representation $\pi$ of $C_c(G)$, the restriction $\pi\vert_{C_c(G^{(0)})}$ is a $*$-representation of $C_c(G^{(0)})$. The lemma \ref{lemma:C_cG0_union_C_star_algebras} gives that every $f \in C_c(G^{(0)})$ belongs to some $\mathscr{C}(K)$, $K$ compact, and then $\pi\vert_{\mathscr{C}(K)}$ is a C$^*$-representation of $\mathscr{C}(K)$ and therefore\footnote{Every homomorphism between C$^*$-algebras is bounded by the norm.} $\|\pi(f)\| \leq \| f \|_\infty$. 

Lemma \ref{lemma:function_sum_supported_open_bisections} gives that any $f \in C_c(G)$ can be written as a sum of functions $f_1,...,f_n \in C_c(G)$ supported on the open bisections $\mathcal{U}_1,...,\mathcal{U}_n$, respectively. By Lemma \ref{lemma:inv_conv_functions_open_bisections} $(i)$, we have $f_k* \in C_c(\mathcal{U}_k^{-1})$, and then the item $(ii)$ of the same lemma implies $f_k* \cdot f_k \in C_c(\mathcal{U}^{-1}\mathcal{U})$ for every $k$. Nevertheless, $\mathcal{U}\mathcal{U}^{-1} = s(\mathcal{U})$ for every open bisection $\mathcal{U}$. Indeed,
\begin{align*}
\mathcal{U}^{-1}\mathcal{U} &= \{ gh : g\in \mathcal{U}^{-1}, h \in \mathcal{U}, s(g) = r(h) \} = \lbrace g_1^{-1}g_2 : g_1, g_2 \in \mathcal{U}, r(g_1) = r(g_2) \rbrace\\
&= \lbrace g^{-1}g : g \in \mathcal{U}\} = s(\mathcal{U}),
\end{align*}
where in the last equality we used the fact that $\mathcal{U}$ is an open bisection. Consequently, $f_k^*f_k \in C_c(s(\mathcal{U_k}))$ for every $k$ and then
\begin{align*}
\| \pi(f_i) \|^2 = \| \pi(f_i* \cdot f_i) \| \leq \| f_i* \cdot f_i \|_\infty = \Vert f_i \Vert^2.
\end{align*}
Now, by taking $K_f = \sum_{i=1}^n \| f_i \|$ and the triangle inequality, we have $\| \pi(f) \| \leq K_f$.
\end{proof}

\begin{proposition}
\label{prop:pifneqzero}
Let $G$ be a locally compact Hausdorff second countable \'etale groupoid and $f \in C_c(G)$ satisfying $f \neq 0$. There exists a $*$-representation $\pi$ of $C_c(G)$ such that $\pi(f) \neq 0$.
\end{proposition}

\begin{proof} If $f \neq 0$, there exists $h' \in G$ satisfying $f(h') \neq 0$. Note that $h' \in G_x$, where $x = s^{-1}(h')$. By Proposition \ref{prop:GxSecondCountable}, $G_x$ is countable and then the vector space
\begin{align*}
\ell^2(G_x) := \left\lbrace \lbrace z_g \rbrace_{g \in G_x} \in \mathbb{C}^{G_x} : \sum_{g \in G_x} \vert z_g \vert^2 < \infty \right\rbrace
\end{align*}
is a Hilbert space, where the inner product is defined by
\begin{align*}
\langle z, w \rangle := \sum_{g \in G_x} z_g \overline{w_g}.
\end{align*}
There exists a natural representation $\pi_x$ of $C_c(G)$ on $\mathfrak{B}(\ell^2(G_x))$ defined by
\begin{align*}
(\pi_x(f') z)_g  := \sum_{h_1 h_2 = g} f'(h_1) z_{h_2}, \quad f' \in C_c(G).
\end{align*}
Equivalently, we may write 
\begin{align}\label{eq:representation_pifneqzero}
(\pi_x(f') z)_g  = \sum_{h \in G_{s(g)}} f'(gh^{-1})z_h = \sum_{h \in G^{r(g)}} f'(h) z_{h^{-1}g},
\end{align}
where in the first identity we set $h_2 = h$, $h_1 = gh^{-1}$, and in the second one we put $h_1 = h$, $h_2 = h^{-1}g$. It is straightforward that $\pi_x$ is linear. We prove now that $\pi_x$ is well-defined. By linearity of $\pi_x$ and Lemma \ref{lemma:function_sum_supported_open_bisections} it is sufficient to show that the representation is well-defined for functions supported on an open bisection. So given $f' \in C_c(\mathcal{U})$, $\mathcal{U}$ open bisection, define
\begin{equation}\label{eq:set_L}
    \mathcal{L}:= \{g \in G_x: \exists h \in G^{r(g)}, f'(h)\neq 0\}.
\end{equation}
Given $g \in \mathcal{L}$, and $h$ as in \eqref{eq:set_L}, we have that $h \in G^{r(g)} \cap \mathcal{U}$, and since $\mathcal{U}$ is an open bisection, we have that $h$ is unique, so without ambiguity we may write $h = h^{r(g)}$, and by the last equality of \eqref{eq:representation_pifneqzero} we obtain
\begin{align*}
(\pi_x(f') z)_g = f'(h^{r(g)})z_{(h^{r(g)})^{-1}g}
\end{align*}
for every $g \in G_x$. Moreover, $g \in G_x \setminus \mathcal{L}$ implies $(\pi_x(f')z)_g = 0$. On the other hand, for $g_1, g_2 \in \mathcal{L}$ satisfying $(h^{r(g_1)})^{-1}g_1 = (h^{r(g_2)})^{-1}g_2$ , then $g_1 = g_2$. Indeed, the hypothesis of this statement implies $s(h^{r(g_1)}) = s(h^{r(g_2)})$. Since $h^{r(g_1)}$ and $h^{r(g_2)}$ belong to the open bisection $\mathcal{U}$, we obtain $h^{r(g_1)} = h^{r(g_2)}$, and then $g_1 = g_2$. Furthermore, $(h^{r(g)})^{-1}g \in G_x$ for every $g \in \mathcal{L}$ and hence the family $\{(h^{r(g)})^{-1}g\}_{g \in \mathcal{L}}$  does not contain equal elements for different $g$'s. We have
\begin{align*}
\| \pi_x(f') z \|^2 &= \sum_{g \in G_x} | (\pi_x(f') z)_g |^2 = \sum_{g \in \mathcal{L}} | (\pi_x(f') z)_g |^2 = \sum_{g \in \mathcal{L}} | f'(h^{r(g)}) z_{(h^{r(g)})^{-1}g} |^2\\
&\leq \Vert f' \Vert_\infty^2 \sum_{g \in \mathcal{L}} | z_{(h^{r(g)})^{-1}g} |^2 \leq \Vert f' \Vert_\infty^2 \sum_{h \in G_x} | z_h |^2
\leq \Vert f' \Vert_\infty^2 \Vert z \Vert^2 < \infty,
\end{align*}
and then $\pi_x(f') \in \mathfrak{B}(\ell^2(G_x))$, implying that the map $\pi_x$ is well defined. Now we prove that $\pi_x(f)\neq 0$. In fact, by taking $z \in \ell^2(G_x)$ given by $z_{h'} = \delta_{x,h'}$, where $\delta_{a,b}$ is the Kronecker delta, we have
\begin{equation*}
    (\pi_x(f)z)_{h'} = \sum_{h \in G^{r(h')}} f(h) z_{h^{-1}h'} = f(h') z_{h'^{-1}h'} = f(h')z_x = f(h') \neq 0.
\end{equation*}
It remains to prove that $\pi_x$ is a $*$-representation. Since, $\pi_x$ is linear, we just need to prove preservation of $*$ and $\cdot$ under $\pi_x$. Let $f' \in C_c(G)$ and $z \in \ell^2(G_x)$, by definitions of $\pi_x$ and inner product in $\ell^2(G_x)$ we have
\begin{align*}
\langle z, \pi_x(f'^*)z \rangle &= \sum_{g \in G_x} z_g \overline{\left( \pi_x(f'^*)z \right)_g} = \sum_{g \in G_x} z_g \sum_{h \in G_x} \overline{f'^*(gh^{-1})z_h} = \sum_{g \in G_x} z_g \sum_{h \in G_x} f'(hg^{-1})\overline{z_h}\\
&= \sum_{h \in G_x} \left(\sum_{g \in G_x} f'(hg^{-1}) z_g \right) \overline{z_h} = \sum_{h \in G_x} \left( \pi_x(f')z \right)_h \overline{z_h} = \langle \pi_x(f')z, z \rangle.
\end{align*}
Therefore, $\pi_x(f'^*) = (\pi_x(f'))^*$. Now take $f_1, f_2 \in C_c(G)$ and $z \in \ell^2(G_x)$. We have
\begin{align*}
\Big( \pi_x(f_1) (\pi_x(f_2) z )\Big)_g &= \sum_{g_1 h = g} f_1(g_1) (\pi_x(f_2)z)_h = \sum_{g_1 h = g} f_1(g_1) \sum_{g_2g_3 = h} f_2(g_2)z_{g_3}\\
&= \sum_{g_1 g_2 g_3 = g} f_1(g_1) f_2(g_2)z_{g_3} = \sum_{hg_3 = g} \left(\sum_{g_1 g_2 = h} f_1(g_1) f_2(g_2) \right)z_{g_3}\\
&= \sum_{hg_3 = g} (f_1 \cdot f_2)(h) z_{g_3} = \Big(\pi_x(f_1 \cdot f_2)z\Big)_g,
\end{align*}
and hence $\pi_x(f_1 \cdot f_2) = \pi_x(f_1) \cdot \pi_x(f_2)$. We conclude that $\pi_x$ is in fact a $*$-representation of $C_c(G)$ satisfying $\pi_x(f) \neq 0$ for $f \neq 0$.
\end{proof}

The next result introduces the full groupoid C$^*$-algebra, by introducing a norm on $C_c(G)$ that consists in the supremum over all $*$-representations of $C_c(G)$. We show that this supremum is well-defined and that it is an actual norm, and it makes $C_c(G)$ a dense subset.

\begin{theorem}
\label{thm:CcGdense}
Let $G$ be a locally compact Hausdorff second countable \'etale groupoid. There exists a C$^*$-algebra $C^*(G)$ containing $C_c(G)$ such that its norm satisfies the following:
\begin{align*}
\Vert f \Vert = \sup \lbrace \Vert \pi(f) \Vert \text{$:$ } \pi: C_c(G) \rightarrow B(H_\pi) \text{ is a $\ast$-representation of $C_c(G)$} \rbrace, \quad f \in C_c(G).
\end{align*}
Moreover, $C_c(G)$ is dense in $C^*(G)$.
\end{theorem}

\begin{proof}
By Proposition \ref{prop:reprnormbound}, we have that the set
\begin{align*}
\lbrace \| \pi(f) \|: \pi \text{ is a $*$-representation of $C_c(G)$} \rbrace
\end{align*}
has a upper bound for every $f \in C_c(G)$. Also, the trivial representation $\pi = 0$ is a $*$-representation of $C_c(G)$ and then the set above is non-empty. Hence, the map $\rho: C_c(G) \rightarrow \left[ 0, \infty \right)$ defined as
\begin{align*}
\rho(f) = \sup_{\pi \in \mathcal{R}}  \| \pi(f) \| ,
\end{align*}
where $\mathcal{R}$ is the set of all $*$-representations of $C_c(G)$, is well defined. We prove that $\rho$ is a sub-multiplicative norm. Indeed, let $\lambda \in \mathbb{C}$ and $f \in C_c(G)$. We have
\begin{align*}
\rho(\lambda f) = \sup_{\pi \in \mathcal{R}} \| \pi(\lambda f) \| = \vert \lambda \vert \sup_{\pi \in \mathcal{R}} \| \pi(f) \| = \vert \lambda \vert \rho(f).
\end{align*}
Now, let $f_1, f_2 \in C_c(G)$. We obtain
\begin{align*}
\rho(f_1 + f_2) = \sup_{\pi \in \mathcal{R}} \| \pi(f_1 + f_2) \| \leq \sup_{\pi \in \mathcal{R}} \| \pi(f_1) \| + \sup_{\pi \in \mathcal{R}} \| \pi(f_2) \| = \rho(f_1) + \rho(f_2).
\end{align*}
and
\begin{align*}
\rho(f_1 \cdot f_2) = \sup_{\pi \in \mathcal{R}} \| \pi(f_1 \cdot f_2) \| = \sup_{\pi \in \mathcal{R}} \| \pi(f_1) \pi(f_2) \| \leq \sup_{\pi \in \mathcal{R}} \| \pi(f_1) \| \sup_{\pi \in \mathcal{R}} \| \pi(f_2) \| = \rho(f_1)\rho(f_2).
\end{align*}
Let $f \in C_c(G)$ satisfying $f \neq 0$. By Proposition \ref{prop:pifneqzero}, there exists a $*$-representation of $C_c(G)$ $\pi'$ satisfying $\pi'(f) \neq 0$ and therefore $\rho(f) > 0$. We conclude so far that $\rho$ is a submultiplicative norm on $C_c(G)$. Also, for every $f \in C_c(G)$, we have
\begin{align*}
\rho(f^*) = \sup_{\pi \in \mathcal{R}} \| \pi(f^*) \| = \sup_{\pi \in \mathcal{R}} \| \pi(f)^* \| = \sup_{\pi \in \mathcal{R}} \| \pi(f) \|
= \rho(f).
\end{align*}
Now we prove that $\rho$ satisfies the C$^*$-identity for the norm, that is, $\rho(f^*f) = \rho(f)^2$. Let $f \in C_c(G)$, it follows that
\begin{align*}
\rho(f^*f) = \sup_{\pi \in \mathcal{R}} \| \pi(f^*f) \| = \sup_{\pi \in \mathcal{R}} \| \pi(f)^* \pi(f) \| = \sup_{\pi \in \mathcal{R}} \| \pi(f)\|^2 = \left(\sup_{\pi \in \mathcal{R}} \| \pi(f)\|\right)^2 = \rho(f)^2.
\end{align*}
The existence of C$^*$ algebra $C^*(G)$ is granted by defining it as the completion of $C_c(G)$ with respect to the norm $\rho$. It is straightforward that $C_c(G)$ is dense on $C^*(G)$
\end{proof}

\begin{definition}[Full groupoid C$^*$-algebra] Let $G$ be a locally compact Hausdorff second countable \'etale groupoid, the full groupoid C$^*$-algebra of $G$, denoted by $C^*(G)$, is the norm completion of $C_c(G)$ as in Theorem \ref{thm:CcGdense}.
\end{definition}

We mention here that there exists another groupoid C$^*$-algebra, called reduced groupoid C$^*$-algebra, that can be seen as a C$^*$-subalgebra of $C^*(G)$. Under a condition called amenability, the full and reduced groupoid algebras coincide. The present thesis do not uses the notion of reduced groupoid algebra, since the Generalized Renault-Deaconu groupoid, the unique groupoid used in our research, is amenable under our conditions. If the reader is interested on a more deep approach in this topic, see \cite{Frausino2018,Renault1980,Sims2017}.

The final results of this chapter are basically some topological properties of the full groupoid C$^*$-algebra.

\begin{lemma}
\label{lemma:C0G0inCast}
Given $G$ a LCH second countable \'etale groupoid, we have that $C_0(G^{(0)})$ is a C$^*$-subalgebra of $C^*(G)$. The norm in $C_0(G^{(0)})$ coincides with the uniform norm and $C_c(G^{(0)})$ is a dense subset of $C_0(G^{(0)})$.
%Then $C_0(G^{(0)}) \subset C*(G)$. Moreover, $C_c(G^{(0)})$ is dense in $C_0(G^{(0)})$.
\end{lemma}

\begin{proof} This lemma is essentially Lemma 3.3.16 of \cite{Lima2019} and we omit the proof here. 
\end{proof}

The last aim on this chapter is to prove that $C^*(G)$, under our usual assumptions for $G$, is separable. In order to prove such result we use a version of the Stone-Weierstrass Theorem, and the next definition is used to state this theorem.

\begin{definition}[Separating sets] Let $C,D$ be two sets, a family $\mathcal{F}$ of functions from $C$ to $D$ is said to be a separating set for $C$ if for every $x,y \in C$ s.t. $x \neq y$ there exists $f \in \mathcal{F}$ s.t. $f(x) \neq f(y)$.
\end{definition}

Now we present the statement of the Stone-Weierstrass Theorem for complex valued functions.

\begin{theorem}[Stone-Weierstrass Theorem] Consider a locally compact space $X$ and a subalgebra $\mathcal{A}$ of $C_0(X)$ satisfying the following:
\begin{itemize}
    \item[$(i)$] $\mathcal{A}$ is closed under the usual complex conjugation involution: if $f \in \mathcal{A}$, then $\overline{f} \in \mathcal{A}$, where $\overline{f}(x) := \overline{f(x)}$, for every $x \in X$;
    \item[$(ii)$] $\mathcal{A}$ is a separating set for $X$;
    \item[$(iii)$] for every $x \in X$, there exists $f \in \mathcal{A}$ s.t. $f(x) \neq 0$.
\end{itemize}
Then $\mathcal{A}$ is a dense subset of $C_0(X)$.
\end{theorem}

\begin{lemma}\label{lemma:LCH_has_precompact_basis} Let $X$ be a LCH topological space. For every basis $\mathcal{B}$, the family
\begin{equation*}
    \mathcal{B}_c:= \{B \in \mathcal{B}: \overline{B} \text{ is compact}\}
\end{equation*}
is a basis for $X$.
\end{lemma}

\begin{proof} Observe that $\mathcal{B}_c \subseteq \mathcal{B}$ and therefore it is a family of open sets, then topology generated by $\mathcal{B}_c$ is contained in the topology of $X$. So it is sufficient to prove the inverse inclusion. Given an open set $V$ and $x \in V$, since $X$ is locally compact, there exists a compact neighborhood $K_x$ of $x$ contained in $V$. Since $\mathcal{B}$ is a basis, we also have that there exists $B_x \in \mathcal{B}$ such that $x \in B_x \subseteq K_x$. By the Hausdorff property, we have that every compact is closed and hence $\overline{B_x} \subseteq K_x$. Therefore $B_x$ has compact closure and then $B_x \in \mathcal{B}_c$. Consequently we have
\begin{equation*}
    V = \bigcup_{B\in \mathcal{B}_c: B \subseteq V} B,
\end{equation*}
and therefore $\mathcal{B}_c$ is a basis for the topology on $X$. 
\end{proof}

\begin{lemma}\label{lemma:CcXseparable}
Given a non-empty LCH second countable topological space $X$, then $C_c(X)$ is separable on the uniform norm.
\end{lemma}

\begin{proof} If $X$ is a singleton, then there is nothing to be proven. So assume $X$ is not a singleton. By Lemma \ref{lemma:LCH_has_precompact_basis}, every basis of $X$ contains a basis $\mathcal{B}_c$ of relatively compact open sets. In particular, since $X$ is second countable, $\mathcal{B}_c$ can be chosen countable. In this case, define 
\begin{equation*}
    \mathcal{B}^{(2)}_c = \lbrace (U, V)  \in \mathcal{B}_c \times \mathcal{B}_c : \overline{U} \cap \overline{V} = \emptyset \rbrace.
\end{equation*}
Since $\mathcal{B}_c$, we have that $\mathcal{B}^{(2)}_c$ is countable as well. Also, $\mathcal{B}^{(2)}_c\neq \emptyset$ because $X$ is not a singleton. By Urysohn's Lemma \ref{lemma:Urysohn_LCH}, for each pair $(U, V) \in \mathcal{B}^{(2)}_c$, there exists a real continuous function $f_{U, V}$ on $X$, such that
\begin{align*}
0 \leq f_{U, V} \leq 1, \quad (f_{U, V})\vert_{U} = 1, \quad \text{and} \quad (f_{U, V})\vert_{V} = 0.
\end{align*}
For each pair $(U,V) \in \mathcal{B}^{(2)}$, choose a function as above, and let $\mathfrak{F}$ be the set of the chosen functions. It is straightforward that $\mathfrak{F}$ is countable. Now, consider the complex algebra $\mathcal{A}$ generated by $\mathfrak{F}$, endowed with usual sum and products, and the subalgebra $\mathcal{A}_0$, also generated by $\mathfrak{F}$, and with scalars in $\mathbb{Q} + i\mathbb{Q}$. Observe that $\mathcal{A}_0$ is countable and dense in $\mathcal{A}$. Observe that $\overline{f} \in \mathcal{A}$ for every $f \in \mathcal{A}$, because the generators of $f \in \mathcal{A}$ are real-valued functions. We claim that $\mathcal{A}$ is a separating set for $X$. In fact, given $x_1, x_2 \in X$, we have by Hausdorff property that there exist two disjoint open sets $U_1$ and $U_2$ s.t. $x_1 \in U_1$ and $x_2 \in U_2$. By the local compactness of $X$, there are two compact sets $K_1 \subseteq U_1$ and $K_2 \subseteq U_2$ compact neighborhoods of $x_1$ and $x_2$, respectively. Since $\mathcal{B}_c$ is a basis, there are $B_1, B_2 \in \mathcal{B}_c$ satisfying
\begin{equation*}
    x_1 \in B_1 \subseteq K_1 \quad \text{and} \quad x_2 \in B_2 \subseteq K_2.
\end{equation*}
It is clear that $\overline{B_1} \cap \overline{B_2} = \emptyset$, and then $(B_1,B_2) \in \mathcal{B}^{(2)}_c$. We have that
\begin{equation*}
    f_{B_1,B_2}(x_1) = 1 \quad \text{and} \quad f_{B_1,B_2}(x_2) = 0,
\end{equation*}
and therefore $\mathcal{A}$ is a separating set for $X$. Since $x_1$ is arbitrary, there exists a generator function $f$ such that $f(x_1) = 1$. Hence, by the Stone-Weierstrass Theorem, $\mathcal{A}$ is dense in $C_0(X)$. In particular $\mathcal{A}_0$ is dense in $C_c(X)$.
\end{proof}

\begin{corollary}\label{cor:CcUseparable} Let $X$ be a LCH second countable space. For every open subset $U$ of $X$, $C_c(U)$ is separable on the uniform norm.
\end{corollary}

\begin{proof} $C_c(X)$ is a separable metric space by Lemma \ref{lemma:CcXseparable}. Then the statement holds because $C_c(U)$ can be seen as a metric subspace of $C_c(X)$ and every metric subspace of a separable metric space is also separable.
\end{proof}

\begin{proposition}
If $G$ is a LCH second countable \'etale groupoid, then $C^*(G)$ is separable.
\end{proposition}

\begin{proof} Lemma \ref{lemma:LCH_has_precompact_basis} grants the existence of a countable family $\mathcal{I}$ of open bisections with compact support that covers $G$. Corollary \ref{cor:CcUseparable} gives that $C_c(\mathcal{U})$ contains countable dense subset $\mathcal{A}_{\mathcal{U}}$ with respect to the supremum norm. Let $\mathcal{A}_0$ be the set of finite sums of elements in $\cup_{\mathcal{U} \in I} \mathcal{A}_{\mathcal{U}}$. Observe that $\mathcal{A}_0$ is countable.

Now, for given $f \in C_c(G)$, there exists a family $\{\mathcal{U}_k\}_{k=1}^n \subseteq \mathcal{I}$, $n \in \mathbb{N}$, covering $\supp(f)$. So consider $\{P_k\}_{k=1}^n$ a partition of unit subordinate to $\{\mathcal{U}_k\}_{k=1}^n$.

For every $\epsilon > 0$, there exists $\varphi_k \in \mathcal{A}_{\mathcal{U}_k}$ s.t.
\begin{equation*}
    \|\varphi_i - P_i f\|_\infty < \frac{\epsilon}{n}.
\end{equation*}
Proposition \ref{prop:reprnormbound} gives that
\begin{equation*}
    \| \pi(\varphi_i) - \pi(P_i f) \| < \frac{\epsilon}{n}.
\end{equation*}
for every $*$-representation of $C_c(G)$. Now, take $\varphi = \sum_{k=1}^n \varphi_k \in \mathcal{A}_0$. We obtain
\begin{align*}
\| \varphi - f \| = \left\| \sum_{k=1}^n \varphi_k - \sum_{k=1}^n P_k f_k \right\| \leq  \sum_{k=1}^n \| \varphi_k - P_k f_k \| = \sum_{k=1}^n \sup_{\pi \in \mathcal{R}}\|\pi(\varphi_k) - \pi(P_k f_k)\|
< n \frac{\epsilon}{n} = \epsilon.
\end{align*}
We conclude that $\mathcal{A}_0$ is dense in $C_c(G)$. By density of $C_c(G)$ in $C^*(G)$, we obtain that $C^*(G)$ is separable.
\end{proof}

\section{Haar systems and quasi-invariant measures}

Fixed a LCH second countable groupoid $G$, there is a way to create a Borel measure on $G$ starting from another Borel measure on $G^{(0)}$. This is possible when we have a family of mesures $\lbrace \lambda^x \rbrace_{x \in G^{(0)}}$, with $\lambda^x$ supported on $G^x$. This construction, under some hipotheses, defines a Haar system, which is a generalization of the concept of Haar measure, studied in Group Theory. In this section we construct the notion of Haar system and present some of its properties. In particular, we are interested in the case when the a priori measure on $G^{(0)}$ is quasi-invariant, since in this case it is intrinsically related to the KMS states on groupoid C$^*$-algebras.

First, we define the notion of Radon measure, that is necessary to define Haar systems.

\begin{definition}[Radon measures] A Borel measure $\mu$ on a locally compact Hausdorff
space $X$ will be called a \textit{Radon measure} when it is non-negative and satisfies the following
\begin{enumerate}
    \item  $\mu(K)<\infty$  for all compact sets $K \subseteq X$.
    \item $\mu(E) = \inf{\{\mu (V ) : E \subseteq V, V \text{ is open}\}}$ for all Borel sets $E$
    \item $\mu (E) = \sup{\{\mu (K) : K \subseteq E, K\text{ is compact}\}}$ for all open sets $E$ and all Borel sets $E$ such that $\mu(E) < \infty$.
\end{enumerate}
\end{definition}

\begin{remark} In \cite{Cohn2013}, Radon measures are called regular measures and they are defined for every $\sigma$-algebra that contains the Borel $\sigma$-algebra. In this thesis we work with Borel measures only. 
\end{remark}

In particular, we will use the following result.

\begin{lemma}\label{lemma:measure_finite_on_compacts_is_Radon}
Let $X$ be a locally compact second countable Hausdorff space
and let $\mu$ be a Borel measure on $X$. If $\mu$ is finite on compact sets of $X$ then $\mu$ is Radon measure. Furthermore, any Radon measure $\mu$ on $X$
satisfies:
$$\mu(E) = \sup{\{\mu(K) : K \subseteq E, K\text{ is compact}\}}$$
for all Borel sets $E$.
\end{lemma}
\begin{proof}
The first statement comes from Proposition 7.2.3 and the second one comes from Proposition 7.2.6 of \cite{Cohn2013}.
\end{proof}

Now, we define the concept of Haar system.

\begin{definition}
\label{def:Haar_system}
Consider a LCH groupoid $G$. A (left) Haar system is a family of Radon measures $\lbrace \lambda^x \rbrace_{x \in G^{(0)}}$ satisfying the following properties.

\begin{itemize}
\item[$(i)$] For every $x \in \mathcal{G}^{(0)}$, $\supp \lambda^x = G^x$;

\item[$(ii)$] given $f \in C_c(G)$, the map
\begin{align*}
x \mapsto \int_G f(g)d\lambda^x(g)
\end{align*}
belongs to $C_c(G^{(0)})$;

\item[$(iii)$] given $f \in C_c(G)$ and $h \in G$, we have
\begin{align*}
\int_G f(hg)d\lambda^{s(h)}(g) = \int_G f(g)d\lambda^{r(h)}(g).
\end{align*}
\end{itemize}
\end{definition}

\begin{remark} The item $(ii)$ in definition above is stated as it is in Definition 2.2.2 of \cite{Paterson1999}. In Definition 2.2. of \cite{Renault1980}, the map
\begin{align*}
x \mapsto \int_G f(g)d\lambda^x(g)
\end{align*}
just need to be continuous. However, they are equivalent statements. In fact, for $f \in C_c(G)$, let $K = \supp f$ and $F:G^{(0)} \to \mathbb{C}$ be the continuous function defined by
\begin{equation*}
    F(x) := \int_G f d\lambda^x.
\end{equation*}
Since $K$ is compact, we have that $r(K)$ is compact as well. If $x \notin r(K)$, that is, $\{x\} \cap r(K) = \emptyset$, then $G^x \cap K \subseteq G^x \cap r^{-1}(r(K)) = \emptyset$, hence $G^x \cap K = \emptyset$, and then $F(x) = 0$. Therefore $F \in C_c(G^{(0)})$.
\end{remark}

\begin{remark}\label{remark:Haar_system_generalizes_Haar_measure} The notion of Haar system above generalizes the concept of Haar measure. As it is in \cite{Cohn2013}, given a LCH topological group\footnote{A topological group is a group endowed with a topology such that the group operations are continuous.} $G$, we recall that $G$ is a topological groupoid as well, with $G^{(0)} = \{e\}$. A (left) Haar measure is a non-zero Radon measure $\mu$ on $G$ such that
\begin{equation*}
    \mu(g B) = \mu (B),
\end{equation*}
for every $g \in G$ and every $B$ measurable subset of $G$, where $g B = \{gb : b \in B\}$. It is known that a Radon measure $\mu$ on a Haar measure if and only if $\{\lambda_e\}$, $\lambda^e = \mu$, is a Haar system. A proof for this fact can be found in \cite{Lima2019}.
\end{remark}

Accordingly to Lemma \ref{prop:G0clopen}, for a Hausdorff \'etale groupoid $G$, we necessarily have that $G^{(0)}$ is open. In this case, by Lemma 2.7 of \cite{Renault1980}, if there exists the Haar system, it is unique and it is the set of counting measures on the $r$-fibers. In this thesis we only work with Hausdorff \'etale groupoids, then we will focus on the Haar system of these counting measures from now.

\begin{proposition} \label{prop:counting_measures_on_bundles_are_Haar_systems} For a LCH second countable \'etale groupoid $G$, consider the family of measures $\lbrace \lambda^x \rbrace_{x \in G^{(0)}}$, where $\lambda^x$ is the counting measure on $r$-fiber $G^x$. Such family is a left Haar system.
\end{proposition}

\begin{proof}
We prove that each property of Definition \ref{def:Haar_system} holds. By Proposition \ref{prop:GxSecondCountable} we have that $G^x$ closed discrete subset of $G$, then for every $x \in G^{(0)}$, the measure $\lambda^x$ is well-defined for the Borel $\sigma$-algebra on $G$.

\begin{itemize}
\item[$(i)$] Fix $x \in G^{(0)}$. Since $G^x$ is closed, and by definition the counting measure $\lambda^x$ is non-zero only for non-empty subsets of $G^x$, we have that $\supp \lambda^x = G^x$.

Now we prove that $\lambda^x$ is a Radon measure. Again by Proposition $\ref{prop:GxSecondCountable}$, we may take $\mathcal{U}_g$ open bisection satisfying $\mathcal{U}_g \cap G^x = \lbrace g \rbrace$. Given $K \subset G$ compact set, we have that $K' = K \cap G^x$ is compact as well and $\lambda^x(K) = \lambda^x(K')$. By compactness, there are $g_1, \hdots, g_n \in G^x$ such that $K' \subseteq \bigcup_{\ell = 1}^n\mathcal{U}_{g_\ell}$. Since $G$ is second countable, Proposition \ref{prop:GxSecondCountable} gives that $G^x$ is countable and then,
\begin{align*}
\lambda^x(K) &= \lambda^x(K') \leq \lambda^x\left(\bigcup_{\ell = 1}^n\mathcal{U}_{g_\ell}\right) \leq \sum_{\ell=1}^n \lambda^x(\mathcal{U}_{g_\ell}) = \sum_{\ell=1}^n \lambda^x(g_\ell) = n,
\end{align*}
and we conclude that the measure $\lambda^x$ is finite on compact subsets, and by Lemma \ref{lemma:measure_finite_on_compacts_is_Radon} $\lambda^x$ is Radon measure.

\item[$(ii)$] Consider $\mathcal{U}$ an open bisection and take $f \in C_c(\mathcal{U})$. Also, set $K = \supp f$, $V = r(\mathcal{U})$ and $W = r(K)$. Note that $V$ is open and $W$ is compact. So we may define $\tilde{f} \in C_c(r(V))$ by
\begin{align*}
\tilde{f}(y) = \begin{cases}
                    f(r\vert_{\mathcal{U}}^{-1}(y)), \quad \text{if $y \in W$;}\\
                    0, \quad \text{otherwise.}
               \end{cases}
\end{align*}
It is straightforward that
\begin{align*}
\tilde{f}(y) = \int_G f(g) d\lambda^y(g),
\end{align*}
for every $y \in G^{(0)}$. The general case, for $f \in C_c(G)$, is straightforward from the previous one, due to Lemma \ref{lemma:function_sum_supported_open_bisections} and the linearity of the integral.

\item[$(iii)$] For fixed $f \in C_c(G)$ and $h \in G$, we have
\begin{align*}
\int_G f(hg) d\lambda^{s(h)}(g) &= \sum_{g \in G^{s(h)}} f(hg),
\end{align*}
and since the map $\psi: G^{s(h)} \to G^{r(h)}$, $\psi(g):= hg$, is bijective\footnote{Its inverse map $\psi^{-1}: G^{r(h)} \to G^{s(h)}$ is given by $\psi^{-1}(g):= h^{-1}g$.}, one gets
\begin{align*}
\sum_{g \in G^{s(h)}} f(hg) = \sum_{g \in G^{s(h)}} f(\psi(g)) = \sum_{\tilde{g} \in G^{r(h)}} f(\tilde{g}) = \int_G f(\tilde{g}) d\lambda^{r(h)} (\tilde{g}),
\end{align*}
that is,
\begin{align*}
\int_G f(hg) d\lambda^{s(h)}(g) &= \int_G f(g) d\lambda^{r(h)}(g). \tag*{\qedhere}
\end{align*}
\end{itemize}
\end{proof}

Now we define measures associated to the left Haar system.

\begin{definition}\label{def:right_Haar_system} Suppose that there exists a left Haar system $\lbrace \lambda^x \rbrace_{x \in G^{(0)}}$. For each $x \in G^{(0)}$ we define the measure $\lambda_x$ by setting
\begin{equation*}
   \lambda_x(B) = \lambda^x(B^{-1}), 
\end{equation*}
for every $B$ measurable subset of $G$, where $B^{-1}:= \{g^{-1}: g \in B\}$.
\end{definition}

Now we show the system of measures $\{\lambda_x\}$ in the definition above are counting measures on each $s$-fiber $G_x$, $x \in G^{(0)}$, when the left Haar system is the set of counting measures on each respective $r$-fiber.

\begin{lemma}\label{lemma:left_Haar_system_counting_r_fiber_implies_right_Haar_system_counting_s_fiber} Let $G$ be LCH second countable \'etale, $x \in G^{(0)}$, and let $\lambda^x$ be the counting measure on the $r$-fiber $G^x$. Then the measure $\lambda_x$ is the counting measure on the $s$-fiber $G_x$. 
\end{lemma}

\begin{proof}
Given $B$ be a measurable set and denote by $\vert B \vert$ its number of elements. Then,
\begin{align*}
\lambda_x(B) &= \lambda^x(B^{-1}) = \vert \{ g : g \in B^{-1} \cap G^x \rbrace \} = \vert \{ h^{-1} : h \in B \cap G_x \} \vert,
\end{align*}
where in the last we used that the inversion map is a bijection from $B^{-1}$ onto $B$ and the same holds from $G^x$ onto $G_x$. The bijectivity of the inverse map also implies that
\begin{align*}
\vert \{ h^{-1} : h \in B \cap G_x \} \vert &= \vert \lbrace h : h \in B \cap G_x \rbrace \vert,
\end{align*}
where we used the change of variables $h \mapsto h^{-1}$. Therefore, $\lambda_x$ is the counting measure on $G_x$.
\end{proof}

Now, we define two induced measures from the Haar systems.

\begin{definition}
\label{def:auxiliary_measure_quasi_invariance}
Suppose that $G$ is LCH and consider a Haar system $\{ \lambda^x \}_{x \in G^{(0)}}$ and a Radon measure $\mu$ on $\mathcal{B}_{G^{(0)}}$. We define the induced measures $\mu_r$ and $\mu_s$, given by 
\begin{align*}
\mu_r(B) := \int_{G^{(0)}} \lambda^x(B)d\mu(x) \quad \text{and} \quad \mu_s(B) = \int_{G^{(0)}} \lambda_x(B)d\mu(x),
\end{align*}
for every $B \in \mathcal{B}_G$.

Notation: $\nu = \int_{G^{(0)}} \lambda^x d\mu(x) $, $\nu^{-1} = \int_{G^{(0)}} \lambda_x d\mu(x)$. %Notacao Goehle R 3.71 pg 124
\end{definition}

\begin{remark}\label{remark:equivalent_definition_mu_r_and_mu_s} The definition above is equivalent to the following: for every $f \in C_c(G)$, we define the measuress $\mu_r$ and $\mu_s$ by the identities
\begin{align*}
    \int_G f(g) d\nu(g) := \int_{G^{(0)}} \int_{G^{x}} f(g) d\lambda^x(g) d\mu(x) \quad \text{and} \quad \int_G f(g) d\nu^{-1}(g) := \int_{G^{(0)}} \int_{G_{x}} f(g) d\lambda_x(g) d\mu(x).
\end{align*}
\end{remark}

\begin{remark} The measures $\mu_r$ and $\mu_s$ as presented in Definition \ref{def:auxiliary_measure_quasi_invariance} when the Haar system is the counting measure are also used out of the context of groupoids. In \cite{Kimura2015}, there is a similar construction used to define conformal measures. 
\end{remark}

In the paricular case when $\lambda^x$ is the counting measure on $G^x$ for each $x \in G^{(0)}$, we have
\begin{align*}
\int f(g) d\mu_r(g) = \int_{G^{(0)}} \sum_{g \in G^{x}} f(g) d\mu(x)  \quad \text{and} \quad 
\int f(g) d\mu_s(g) = \int_{G^{(0)}} \sum_{g \in G_{x}} f(g) d\mu(x),
\end{align*}
for every $f \in C_c(G)$.

Now we present the notion of quasi-invariant measure.

\begin{definition}[quasi-invariant measure] Let $G$ be a LCH \'etale groupoid, $\{\lambda^x\}_{x \in G^{(0)}}$ be a Haar system and consider a Radon measure $\mu$ on $G^{(0)}$. $\mu$ is said to be quasi-invariant when $\mu_r \sim \mu_s$.
\end{definition}

\begin{remark} Consider the group $(0, \infty)$ endowed with the usual product. Item 3 of Corollary 3.14 of \cite{Hahn1978} allows us to choose the Radon-Nikodym derivative $\frac{d \mu_r}{d \mu_s}$ being a homomorphism from $G$ to $(0, \infty)$.
\end{remark}

\begin{proposition}
\label{prop:RN_derivative_mu_r_mu_s_onisotropy} Consider a LCH second countable \'etale groupoid $G$ and a Haar system $\lbrace \lambda^x \rbrace_{x \in G^{(0)}}$ of counting measures $\lambda^x$ on $G^x$. Let $\mu$ be a quasi-invariant measure on $\mathcal{B}_{G^{(0)}}$. For $\mu$-a.e. $x$ and all $g \in G_x^x$, we have $\frac{d \mu_r}{d \mu_s} = 1$.
\end{proposition}

\begin{proof} Consider the isotropy bundle $\Iso(G) = \cup_{x \in G^{(0)}} G_x^x$. By Lemma \ref{lemma:isotropyclosed} we have that $\Iso(G)$ is closed and hence it is a Borel set. For every positive measurable function $f$ with support cointained in $\Iso(G)$, we have
\begin{align}
\label{eqn:quasiinvariantfpos}
\int_G f(g) d\mu_r(g) = \int_G f(g) \frac{d \mu_r}{d \mu_s}(g) d\mu_s(g),
\end{align}
and by definition we also have
\begin{align}
\int_G f(g) d\mu_r(g) &= \int_{G^{(0)}} \sum_{g \in G^x} f(g) d\mu(x) = \int_{G^{(0)}} \sum_{g \in G_x^x} f(g) d\mu(x), \label{eqn:fdmur}
\end{align}
where in the last equality we used the fact that $\supp f \subseteq \Iso(G)$. Since $\supp \left(f \frac{d \mu_r}{d \mu_s} \right) \subseteq \Iso(G)$ as well, one gets
\begin{align}
\int_G f(g) \frac{d \mu_r}{d \mu_s}(g) d\mu_s(g) &= \int_{G^{(0)}} \sum_{g \in G_x} f(g) \frac{d \mu_r}{d \mu_s}(g) d\mu(x) = \int_{G^{(0)}} \sum_{g \in G_x^x} f(g) \frac{d \mu_r}{d \mu_s}(g) d\mu(x). \label{eqn:fDeltadmur}
\end{align}
By inserting the identities $\eqref{eqn:fdmur}$ and $\eqref{eqn:fDeltadmur}$ in $\eqref{eqn:quasiinvariantfpos}$, one obtains
\begin{align}\label{eq:integral_sum_1_minus_delta}
\int_{G^{(0)}} \sum_{g \in G_x^x} \left(\frac{d \mu_r}{d \mu_s}(g) - 1\right) f(g) d\mu(x) = 0
\end{align}
Now, define the sets
\begin{equation*}
    P_+ := \left\{ g \in \Iso(G) : \frac{d \mu_r}{d \mu_s}(g) \geq 1 \right\} \quad \text{and} \quad P_- := \left\{ g \in \Iso(G) : \frac{d \mu_r}{d \mu_s}(g) \geq 1 \right\}.
\end{equation*}
For $f$
If we choose $f = \mathbbm{1}_{P_+}$, equation \eqref{eq:integral_sum_1_minus_delta} becomes
\begin{align*}
\int_{G^{(0)}} \sum_{g \in G_x^x \cap P_+} \left(\frac{d \mu_r}{d \mu_s}(g) - 1\right)d\mu(x) = 0,
\end{align*}
and then $\frac{d \mu_r}{d \mu_s} = 1$ $\mu$-a.e. on $P_+$ because
\begin{equation*}
    \sum_{g \in G_x^x \cap P_+} \left(\frac{d \mu_r}{d \mu_s}(g) - 1\right) \geq 0
\end{equation*}
for every $x \in G^{(0)}$. Similar proof holds for $P_-$.
\end{proof}

In this thesis, we are interested to relate quasi-invariant measures to KMS states via continuous $1$-cocycles, defined as follows.

\begin{definition}\label{def:cocycle} Let $G$ be a groupoid. A $1$-cocycle is a function $c:G \to \mathbb{R}$ satisfying
\begin{equation*}
    c(gh) = c(g) + c(h), \quad (g,h) \in G^{(2)}.
\end{equation*}
\end{definition}

For a LCH \'etale groupoid $G$ a continuous 1-cocycle defines a C$^*$-dynamical system $(C^*(G),\tau)$, where $\tau = \{\tau_t\}_{t \in \mathbb{R}}$ is the one-parameter group of automorphisms given by
\begin{equation}\label{eq:KMS_cocycle}
    \tau_t(f)(g) = e^{itc(g)} f(g),
\end{equation}
for every $f \in C_c(G)$ and $g \in G$, and extended (uniquely) to $C^*(G)$. For $f \in C_c(G)$, we can extend \eqref{eq:KMS_cocycle} in $t$ to the whole complex plane, and for $\beta>0$, a KMS$_\beta$ state $\varphi$ on $C^*(G)$ is precisely a state that satisfies the KMS condition on $C_c(G)$, that is,
\begin{equation*}
    \varphi(f_1 \tau_{i\beta}(f_2)) = \varphi(f_2 f_1)
\end{equation*}
for every $f_1,f_2 \in C_c(G)$. The quasi-invariant probability measures s.t.
\begin{equation*}
    \frac{d \mu_r}{d \mu_s} = e^{-\beta c},
\end{equation*}
where $c$ is a continuous $1$-cocycle, are strictly related to KMS$_\beta$ states. For the generalized Renault-Deaconu groupoid (see section \ref{section:GRD_groupoid}) and a 1-cocycle associated to a potential (see chapter \ref{ch:TF_on_Generalized_Countable_Markov_shifts}), we discuss this relation in Remark \ref{remark:KMS_quasi_invariant}.

\chapter{Cuntz-Krieger/Exel-Laca algebras and the generalized Markov shifts}
\label{ch:CK_EL}

This chapter focuses on two particular universal C$^*$-algebras, namely the Cuntz-Krieger algebras \cite{CK1980} and, specially, their generalization for the infinite alphabet, the Exel-laca algebras \cite{EL1999}. We will explain how these algebras are related to the Markov shift spaces and how they are represented in terms of groupoid C$^*$-algebras. 

\section{Cuntz-Krieger algebras}

In 1963, J. Dixmier proved the existence of a separable simple\footnote{We say a C$^*$-algebra is simple if it does not contains non-trivial closed two sided ideals.} infinite\footnote{A simple unital C$^*$-algebra is said to be infinite if it contains an element $a$ such that $a^*a = 1$ and $aa^* \neq 1$.} C$^*$-algebra \cite{Dixmier1963}. However, the first concrete examples of this type of algebra was constructed almost fifteen years later, in 1977, by J. Cuntz \cite{Cuntz1977}, the Cuntz algebras. In 1980, J. Cuntz and W. Krieger \cite{CK1980} generalized the Cuntz algebras to a bigger class of C$^*$-algebras, whose structure encodes the Markov shift spaces in its generators, and these algebras became famous, being called Cuntz-Krieger algebras. Before we formally introduce the Cuntz-Krieger algebras in terms of universal C$^*$-algebra generated by a set under some relations, we construct them in a less abstract approach.

Given separable infinite dimensional Hilbert space $\mathcal{H}$ and $n \in \mathbb{N}$, we split $\mathcal{H}$ as
\begin{equation*}
    \mathcal{H} = \mathcal{H}_1 \oplus \mathcal{H}_2 \oplus \cdots \oplus \mathcal{H}_n,
\end{equation*}
where each $\mathcal{H}_i$ is also separable and infinite dimensional. Now consider a $n \times n$ transition matrix $A$ that every row and every column is non-zero. For each $i \leq n$, we consider the infinite dimensional separable subspace 
\begin{equation*}
    \bigoplus_{j: A_{i,j}=1}\mathcal{H}_j
\end{equation*}
and we may choose for each $i$ an isometric isomorphism
\begin{equation*}
    S_i:\bigoplus_{j: A_{i,j}=1}\mathcal{H}_j \to \mathcal{H}_i.
\end{equation*}

\begin{example} If we take $n = 4$ and the transition matrix
\begin{equation*}
    A = \begin{pmatrix}
            1 & 1 & 1 & 0 \\
            1 & 0 & 0 & 1 \\
            0 & 1 & 1 & 1 \\
            1 & 1 & 0 & 1 
        \end{pmatrix},
\end{equation*}
then the isometric isomorphisms are
\begin{align*}
    S_1&: \mathcal{H}_1 \oplus \mathcal{H}_2 \oplus \mathcal{H}_3 \to \mathcal{H}_1, \\ 
    S_2&: \mathcal{H}_1 \oplus \mathcal{H}_4 \to \mathcal{H}_2, \\
    S_3&: \mathcal{H}_2 \oplus \mathcal{H}_3 \oplus \mathcal{H}_4 \to \mathcal{H}_3, \\ 
    S_4&: \mathcal{H}_1 \oplus \mathcal{H}_2 \oplus \mathcal{H}_4 \to \mathcal{H}_4.
\end{align*}
\end{example}

By extending each $S_i$ to $\mathcal{H}$ as $S_i a = 0$ for $a \notin \bigoplus_{j: A_{i,j}=1}\mathcal{H}_j$, we obtain a family $\{S_i\}_{i=1}^n$ of partial isometries on $\mathcal{H}$, i.e.
\begin{equation*}
    S_iS_i^*S_i = S_i, \quad i= 1,..., n,
\end{equation*}
which satisfies the relations
\begin{equation} \label{eq:CK}
    \sum_{j=1}^n S_jS_j^* = 1 \quad \text{and} \quad S_i^*S_i = \sum_{j=1}^nA(i,j) S_jS_j^*.
\end{equation} 
The relations in \eqref{eq:CK} are called Cuntz-Krieger relations, and the next result shows that these relations are suitable to create a universal algebra. From now on we refer the left equation in \eqref{eq:CK} as (CK1) and the right one as (CK2).

\begin{theorem}\label{thm:CK_are_universal_algebras} Consider the set $D = \{S_i\}_{i=1}^n$, $n \in \mathbb{N}$, and $\mathscr{R}$ the collection of the following relations on $D$:
\begin{itemize}
    \item $(S_iS_i^*S_i - S_i,0)$, $i= 1,..., n$;
    \item $\left(1-\sum_{j=1}^n S_jS_j^*,0\right)$;
    \item and $\left(S_i^*S_i - \sum_{j=1}^nA(i,j) S_jS_j^*,0\right)$, $i = 1,...,n$;
\end{itemize}
where $A$ is a $n \times n$ matrix with entries in $\{0,1\}$. The pair $(D,\mathscr{R})$ is admissible. 
\end{theorem}
%provar/mencionar os resultados utilizado abaixo
\begin{proof} Let $\mathcal{A}_D$ be the free associative complex $*$-algebra generated by the set $D$ and $\Theta: D \to B$ be a representation for the pair $(D,\mathscr{R})$, where $B$ is a C$^*$-algebra. By the GNS construction, there is always a Hilbert space $\mathcal{H}$ that admits a faithful representation $\Phi:B \to \mathscr{B}(\mathcal{H})$. Since $\mathscr{B}(\mathcal{H})$ is also a C$^*$-algebra, then $\Phi$ is an isometric map. Then,
\begin{equation*}
    \|\Theta(S_i)\| = \|\Phi \circ \Theta(S_i)\| = \sup_{\substack{h \in \mathcal{H},\\ \|h\|=1}} \frac{\|\Phi \circ \Theta(S_i)h\|}{\|h\|}.
\end{equation*}
On the other hand, for $h \in \mathcal{H}$ satisfying $\|h\| =1$, we have that
\begin{align*}
    \|\Phi \circ \Theta(S_i)h\|^2 &= \left(\Phi \circ \Theta(S_i)h,\Phi \circ \Theta(S_i)h\right) = \left(\Phi \circ \Theta(S_i)^*\Phi \circ \Theta(S_i)h,h\right) \\
    &= \left(\Phi \circ \Theta(S_i)^*\Phi \circ \Theta(S_i)h,h\right) \leq \left(h,h\right) = \|h\|^2,
\end{align*}
where the inequality above becomes an equality when $h \notin \ker\left(\Theta(S_i)\right)$. Since $\|h\| > 0$, it follows that
\begin{align*}
    \frac{\|\Phi \circ \Theta(S_i)h\|}{\|h\|} &\leq 1,
\end{align*}
and then
\begin{equation*}
    \|\Theta(S_i)\| \leq 1.
\end{equation*}
Since the representation $\Theta$ is arbitrary, we conclude that the pair $(D,\mathscr{R})$ is admissible.
\end{proof}

The theorem above grants that the Cuntz-Krieger algebras are in fact well defined and now we define them formally as next.

\begin{definition}\label{def:CK_algebras} Consider $n \in \mathbb{N}$ and a $n \times n$ transition matrix $A$. The Cuntz-Krieger algebra $\mathcal{O}_A$ is the universal C$^*$-algebra generated by a family of partial isometries $\{S_i\}_{i=1}^n$ satisfying the Cuntz-Krieger relations \eqref{eq:CK} for $A$.
\end{definition}

\begin{remark}\label{remark:non_zero_row_column} Observe that Theorem \ref{thm:CK_are_universal_algebras} does not depend on the matrix $A$ and hence Definition \ref{def:CK_algebras} has the same independency as well. However, supposing that the $j$-th row of $A$ has only zeros, we necessarily have that $S_j = 0$. On the other hand, if the $j$-th column of $A$ is zero, we have that $S_iS_j = 0$ for every $i \in S$, this case is proven in Lemma 2.1 of \cite{CK1980}. 
\end{remark}

For the rest of this section we assume the following standing hypothesis.

\begin{mdframed} \textbf{Standing hypothesis:} any transition matrix for \emph{finite} alphabets has only non-zero rows and only non-zero columns.
\end{mdframed}

In addition, we consider the projections $P_i = S_i S_i^*$ and $Q_i = S_i^* S_i$, $i = 1,...n$. We present next some properties of these projections and the elements $S_i$.

\begin{lemma}\label{lemma:CKprop} Consider a Cuntz-Krieger algebra $\mathcal{O}_A$, where $A$ is a $n\times n$ matrix. The following assertions are true:
    \begin{itemize}
        \item[$(i)$] $i \neq j \implies P_iP_j = 0$;
        \item[$(ii)$] $S_i^*S_j = \delta_{ij}S_i^*S_j$;
        \item[$(iii)$] $S_i^*S_iS_j = A(i,j)S_j$;
        \item[$(iv)$] $A(i,j) = 1 \implies \ran S_j \subseteq (\ker S_i)^{\perp}$;
        \item[$(v)$] $P_i = \sum_{j=1}^nS_i P_jS_i^*$.
    \end{itemize}
\end{lemma}

\begin{proof} Indeed,
    \begin{itemize}
        \item[$(i)$] accordingly to (CK1), we have that $\sum_{k = 1}^n P_k = 1$. Fix $i,j \{1,...,n\}$ with $i \neq j$. Then,
            \begin{equation*}
                P_j + P_jP_iP_j = P_j^3 +P_jP_iP_j = P_j(P_j + P_i)P_j \leq P_j \left(\sum_{k = 1}^n P_k\right) P_j \leq P_j 1 P_j = P_j \implies P_j P_i P_j = 0.
            \end{equation*}
            On the other hand, we have that
            \begin{equation*}
                P_j P_i P_j = P_j(P_i)^2P_j = (P_i^*P_j^*)^* P_i P_j = (P_i P_j)^* P_i P_j = (P_i P_j)^2,
            \end{equation*}
        and we conclude that $P_i P_j = 0$;    
        \item[$(ii)$] here we can use an equivalent definition\footnote{Remember that $a$ is a partial isometry iff $aa^*$ is a projection iff $a = aa^*a$.} for partial isometries and get $S_i^*S_j = (S_i^*S_iS_i^*)(S_jS_j^*S_j) = S_i^*P_iP_jS_j = \delta_{ij}S_i^*S_j$;
        \item[$(iii)$] $S_i^*S_iS_j \stackrel{\text{(CK2)}}{=} \sum_{k=1}^n A(i,k) S_k S_k^* S_j \stackrel{(ii)}{=} A(i,j)S_j$;
        \item[$(iv)$] from (CK2) we have that
            \begin{equation*}
                S_i^* S_i = \sum_{k=1,k\neq j}^n A(i,k) S_k S_k^* + S_j S_j^*,
            \end{equation*}
        hence $\ran (S_j S_j^*) \subseteq \ran (S_i^* S_i)$ because the projections $\{P_k\}_{k=1}^n$ and are orthogonal each other. Then,
        \begin{equation*}
            (\ker(S_i))^\perp = \ran(S_i^*S_i) \supseteq \ran (S_j S_j^*) = \ran(S_j). %explicar
        \end{equation*}
        Therefore $\ran S_j \subseteq (\ker S_i)^{\perp}$;
        \item[$(v)$] note that $\sum_{j=1}^n P_j = 1$, hence
            \begin{equation*}
                P_i = S_iS_i^* = S_i1S_i^* = S_i\left(\sum_{j=1}^n P_j\right)S_i^* = \sum_{j=1}^nS_i P_jS_i^*.
            \end{equation*}
    \end{itemize}
\end{proof}

\begin{remark} On the item $(iv)$ in the lemma above we used that $\ker (S_i)^{\perp} = \ran (S_i^*S_i)$. We justify that claim as follows. Let $x \neq 0$ element of a Hilbert space $\mathcal{H}$ which has a representation of $\mathcal{O}_A$ s.t.\footnote{If $S_i = 0$, then the result is immediate.} $S_i \neq 0$. Take $z \in \ker  S_i$, and suppose $x \in \ran (S_i^*S_i)$. We prove that $(z,x) = 0$. Indeed, by hypothesis there exists $y \in \mathcal{H}$ s.t. $S_i^*S_i y = x$, then 
\begin{align*}
    (z,x) =  (z,S_i^*S_iy) = (S_i z, S_i x) = (0,S_ix) = 0,
\end{align*}
and hence $\ran (S_i^*S_i) \subseteq \ker(S_i)^\perp$. Conversely, if $x \in \ker(S_i)^\perp$, that is $S_ix \neq 0$, then
\begin{align*}
     0<(S_i x, S_i x)=  (S_i^*S_i x,  x),
\end{align*}
so $S_i^*S_i x \neq 0$ and therefore $x \in \ran(S_i^*S_i)$. We conclude that $(\ker (S_i))^{\perp} = \ran (S_i^*S_i)$.
\end{remark}

An important question answered by Cuntz and Krieger in \cite{CK1980} is related to the uniqueness of the Cuntz-Krieger algebras. By `uniqueness' we mean that for every two families we say $\{S_i\}_{i=1}^n$ and $\{\widehat{S}_i\}_{i=1}^n$ of non-zero partial isometries that satisfy the Cuntz-Krieger relations \eqref{eq:CK}, then the mapping
\begin{equation*}
    S_i \mapsto \widehat{S}_i, \quad i \in \{1,...,n\},
\end{equation*}
extends to an $*$-isomorphism from the C$^*$-algebra generated by $\{S_i\}_{i=1}^n$ to the C$^*$-algebra generated by $\{\widehat{S}_i\}_{i=1}^n$. In other words, all the faithful surjective representations of $\mathcal{O}_A$ on any C$^*$-algebra are $*$-isomorphic. For the finite alphabet case, there is an important condition to ensure such uniqueness: the condition (I) in \cite{CK1980}, which we explain now. First, let $\mathcal{S} \subset S$ be the subset of symbols defined as follows: $i \in \mathcal{S}$ if and only if there exist at least two distinct admissible words $i_0,...,i_{r-1}$ and $j_0,...,j_{s-1}$, where $r, s \geq 2$, satisfying $i_0 = i_{r-1} = j_0 = j_{s-1} = i$ and $i_k,j_\ell \neq i$ for $0 < k < r-1$ and $0 < \ell < s-1$. The condition (I) is the hypothesis for the transition matrix follows:
\begin{itemize}
    \item[(I)] for every $i \in S$ there exists an admissible word $i_0 \cdots i_{r-1}$, $r\geq 1$, such that $i_0 = i$ and $i_{r-1} \in \mathcal{S}$.
\end{itemize}
The figure \ref{fig:symbolic_graphs_condition_I} illustrates examples of the absence and occurrence of the condition (I).

\begin{figure}[H]
\centering

\caption{Two examples of symbolic graphs. The figure I represents a Markov shift space that satisfies $\mathcal{S} = \emptyset$, while in the figure II the graph represents a non-transitive Markov shift space such that $\mathcal{S} = \{1,2,3,4,5\}$.\label{fig:symbolic_graphs_condition_I}}

\scalebox{0.7}{
\tikzset{every picture/.style={line width=0.75pt}} %set default line width to 0.75pt        

\begin{tikzpicture}[x=0.75pt,y=0.75pt,yscale=-1,xscale=1]
%uncomment if require: \path (0,358); %set diagram left start at 0, and has height of 358

%Shape: Ellipse [id:dp36611504499846437] 
\draw  [fill={rgb, 255:red, 0; green, 0; blue, 0 }  ,fill opacity=1 ] (149.35,287.75) .. controls (152.42,283.61) and (151.55,277.76) .. (147.41,274.69) .. controls (143.27,271.62) and (137.42,272.49) .. (134.35,276.63) .. controls (131.28,280.78) and (132.15,286.62) .. (136.29,289.69) .. controls (140.43,292.76) and (146.28,291.89) .. (149.35,287.75) -- cycle ;
%Shape: Ellipse [id:dp8432593380196095] 
\draw  [fill={rgb, 255:red, 0; green, 0; blue, 0 }  ,fill opacity=1 ] (212.42,70.34) .. controls (217.58,70.32) and (221.74,66.13) .. (221.73,60.98) .. controls (221.72,55.82) and (217.53,51.65) .. (212.37,51.67) .. controls (207.22,51.68) and (203.05,55.87) .. (203.06,61.03) .. controls (203.08,66.18) and (207.27,70.35) .. (212.42,70.34) -- cycle ;
%Shape: Ellipse [id:dp19857578605348258] 
\draw  [fill={rgb, 255:red, 0; green, 0; blue, 0 }  ,fill opacity=1 ] (103.15,152.78) .. controls (106.91,149.26) and (107.11,143.36) .. (103.59,139.59) .. controls (100.07,135.83) and (94.16,135.63) .. (90.4,139.15) .. controls (86.63,142.67) and (86.43,148.58) .. (89.96,152.34) .. controls (93.48,156.11) and (99.38,156.31) .. (103.15,152.78) -- cycle ;
%Shape: Ellipse [id:dp32725252560207907] 
\draw  [fill={rgb, 255:red, 0; green, 0; blue, 0 }  ,fill opacity=1 ] (337.75,141.65) .. controls (336.06,136.78) and (330.74,134.2) .. (325.87,135.9) .. controls (321,137.59) and (318.43,142.91) .. (320.12,147.78) .. controls (321.81,152.65) and (327.13,155.22) .. (332,153.53) .. controls (336.87,151.84) and (339.45,146.52) .. (337.75,141.65) -- cycle ;
%Shape: Ellipse [id:dp8075997587530144] 
\draw  [fill={rgb, 255:red, 0; green, 0; blue, 0 }  ,fill opacity=1 ] (291.02,288.82) .. controls (295.11,285.68) and (295.88,279.83) .. (292.74,275.74) .. controls (289.6,271.65) and (283.75,270.87) .. (279.66,274.01) .. controls (275.57,277.15) and (274.79,283.01) .. (277.93,287.1) .. controls (281.07,291.19) and (286.93,291.96) .. (291.02,288.82) -- cycle ;
%Curve Lines [id:da9427422476907837] 
\draw    (96.77,136.97) .. controls (82.06,74.62) and (159.45,25.17) .. (202.19,55.05) ;
\draw [shift={(204.12,56.47)}, rotate = 217.8] [fill={rgb, 255:red, 0; green, 0; blue, 0 }  ][line width=0.08]  [draw opacity=0] (10.72,-5.15) -- (0,0) -- (10.72,5.15) -- (7.12,0) -- cycle    ;
%Curve Lines [id:da9286549644125466] 
\draw    (219.4,56) .. controls (259.23,24.15) and (350.06,71.21) .. (332.02,132.65) ;
\draw [shift={(331.12,135.47)}, rotate = 289.08] [fill={rgb, 255:red, 0; green, 0; blue, 0 }  ][line width=0.08]  [draw opacity=0] (10.72,-5.15) -- (0,0) -- (10.72,5.15) -- (7.12,0) -- cycle    ;
%Curve Lines [id:da20396969556270372] 
\draw    (337.94,145.71) .. controls (378.22,183.1) and (359.42,267.1) .. (297.86,281.23) ;
\draw [shift={(295.02,281.82)}, rotate = 349.22] [fill={rgb, 255:red, 0; green, 0; blue, 0 }  ][line width=0.08]  [draw opacity=0] (10.72,-5.15) -- (0,0) -- (10.72,5.15) -- (7.12,0) -- cycle    ;
%Curve Lines [id:da790016655138811] 
\draw    (278.83,288.67) .. controls (250.12,324.31) and (194.95,340.34) .. (149.22,291.18) ;
\draw [shift={(147.83,289.67)}, rotate = 407.95] [fill={rgb, 255:red, 0; green, 0; blue, 0 }  ][line width=0.08]  [draw opacity=0] (10.72,-5.15) -- (0,0) -- (10.72,5.15) -- (7.12,0) -- cycle    ;
%Curve Lines [id:da7647733269577114] 
\draw    (131.85,283.19) .. controls (62.61,268.05) and (59.93,181.17) .. (87.76,154.29) ;
\draw [shift={(89.96,152.34)}, rotate = 501.08] [fill={rgb, 255:red, 0; green, 0; blue, 0 }  ][line width=0.08]  [draw opacity=0] (10.72,-5.15) -- (0,0) -- (10.72,5.15) -- (7.12,0) -- cycle    ;
%Shape: Ellipse [id:dp12942967606373545] 
\draw  [fill={rgb, 255:red, 0; green, 0; blue, 0 }  ,fill opacity=1 ] (535.35,285.93) .. controls (538.42,281.79) and (537.55,275.94) .. (533.41,272.87) .. controls (529.27,269.8) and (523.42,270.67) .. (520.35,274.81) .. controls (517.28,278.95) and (518.15,284.8) .. (522.29,287.87) .. controls (526.43,290.94) and (532.28,290.07) .. (535.35,285.93) -- cycle ;
%Shape: Ellipse [id:dp15784083138151916] 
\draw  [fill={rgb, 255:red, 0; green, 0; blue, 0 }  ,fill opacity=1 ] (598.42,68.51) .. controls (603.58,68.5) and (607.74,64.31) .. (607.73,59.16) .. controls (607.72,54) and (603.53,49.83) .. (598.37,49.85) .. controls (593.22,49.86) and (589.05,54.05) .. (589.06,59.2) .. controls (589.08,64.36) and (593.27,68.53) .. (598.42,68.51) -- cycle ;
%Shape: Ellipse [id:dp9602705527273904] 
\draw  [fill={rgb, 255:red, 0; green, 0; blue, 0 }  ,fill opacity=1 ] (489.15,150.96) .. controls (492.91,147.44) and (493.11,141.54) .. (489.59,137.77) .. controls (486.07,134.01) and (480.16,133.81) .. (476.4,137.33) .. controls (472.63,140.85) and (472.43,146.76) .. (475.96,150.52) .. controls (479.48,154.29) and (485.38,154.48) .. (489.15,150.96) -- cycle ;
%Shape: Ellipse [id:dp652653999956564] 
\draw  [fill={rgb, 255:red, 0; green, 0; blue, 0 }  ,fill opacity=1 ] (723.75,139.83) .. controls (722.06,134.96) and (716.74,132.38) .. (711.87,134.07) .. controls (707,135.77) and (704.43,141.09) .. (706.12,145.95) .. controls (707.81,150.82) and (713.13,153.4) .. (718,151.71) .. controls (722.87,150.02) and (725.45,144.7) .. (723.75,139.83) -- cycle ;
%Shape: Ellipse [id:dp9495341174875994] 
\draw  [fill={rgb, 255:red, 0; green, 0; blue, 0 }  ,fill opacity=1 ] (677.02,287) .. controls (681.11,283.86) and (681.88,278) .. (678.74,273.91) .. controls (675.6,269.82) and (669.75,269.05) .. (665.66,272.19) .. controls (661.57,275.33) and (660.79,281.18) .. (663.93,285.27) .. controls (667.07,289.36) and (672.93,290.14) .. (677.02,287) -- cycle ;
%Curve Lines [id:da7470158033057197] 
\draw    (482.77,135.15) .. controls (468.06,72.79) and (545.45,23.35) .. (588.19,53.22) ;
\draw [shift={(590.12,54.65)}, rotate = 217.8] [fill={rgb, 255:red, 0; green, 0; blue, 0 }  ][line width=0.08]  [draw opacity=0] (10.72,-5.15) -- (0,0) -- (10.72,5.15) -- (7.12,0) -- cycle    ;
%Curve Lines [id:da28942633517920524] 
\draw    (605.4,54.18) .. controls (645.23,22.33) and (736.06,69.39) .. (718.02,130.83) ;
\draw [shift={(717.12,133.65)}, rotate = 289.08] [fill={rgb, 255:red, 0; green, 0; blue, 0 }  ][line width=0.08]  [draw opacity=0] (10.72,-5.15) -- (0,0) -- (10.72,5.15) -- (7.12,0) -- cycle    ;
%Curve Lines [id:da2954787723737984] 
\draw    (723.94,143.89) .. controls (764.22,181.28) and (745.42,265.28) .. (683.86,279.4) ;
\draw [shift={(681.02,280)}, rotate = 349.22] [fill={rgb, 255:red, 0; green, 0; blue, 0 }  ][line width=0.08]  [draw opacity=0] (10.72,-5.15) -- (0,0) -- (10.72,5.15) -- (7.12,0) -- cycle    ;
%Curve Lines [id:da9506740104525201] 
\draw    (664.83,286.85) .. controls (636.12,322.49) and (580.95,338.52) .. (535.22,289.36) ;
\draw [shift={(533.83,287.85)}, rotate = 407.95] [fill={rgb, 255:red, 0; green, 0; blue, 0 }  ][line width=0.08]  [draw opacity=0] (10.72,-5.15) -- (0,0) -- (10.72,5.15) -- (7.12,0) -- cycle    ;
%Curve Lines [id:da30233986603479923] 
\draw    (517.85,281.37) .. controls (448.61,266.23) and (445.93,179.35) .. (473.76,152.47) ;
\draw [shift={(475.96,150.52)}, rotate = 501.08] [fill={rgb, 255:red, 0; green, 0; blue, 0 }  ][line width=0.08]  [draw opacity=0] (10.72,-5.15) -- (0,0) -- (10.72,5.15) -- (7.12,0) -- cycle    ;
%Shape: Ellipse [id:dp7472674889200467] 
\draw  [fill={rgb, 255:red, 0; green, 0; blue, 0 }  ,fill opacity=1 ] (870.42,115.51) .. controls (875.58,115.5) and (879.74,111.31) .. (879.73,106.16) .. controls (879.72,101) and (875.53,96.83) .. (870.37,96.85) .. controls (865.22,96.86) and (861.05,101.05) .. (861.06,106.2) .. controls (861.08,111.36) and (865.27,115.53) .. (870.42,115.51) -- cycle ;
%Curve Lines [id:da8070306215549123] 
\draw    (491.59,141.77) .. controls (622.84,107.35) and (516.57,287.36) .. (484.97,155.19) ;
\draw [shift={(484.5,153.18)}, rotate = 437.16] [fill={rgb, 255:red, 0; green, 0; blue, 0 }  ][line width=0.08]  [draw opacity=0] (10.72,-5.15) -- (0,0) -- (10.72,5.15) -- (7.12,0) -- cycle    ;
%Curve Lines [id:da13754114279875307] 
\draw    (877.73,106.16) .. controls (911.98,124.89) and (924.88,166.92) .. (882.48,196.82) ;
\draw [shift={(880.5,198.18)}, rotate = 326.19] [fill={rgb, 255:red, 0; green, 0; blue, 0 }  ][line width=0.08]  [draw opacity=0] (10.72,-5.15) -- (0,0) -- (10.72,5.15) -- (7.12,0) -- cycle    ;
%Shape: Ellipse [id:dp7399025149179852] 
\draw  [fill={rgb, 255:red, 0; green, 0; blue, 0 }  ,fill opacity=1 ] (873.42,213.51) .. controls (878.58,213.5) and (882.74,209.31) .. (882.73,204.16) .. controls (882.72,199) and (878.53,194.83) .. (873.37,194.85) .. controls (868.22,194.86) and (864.05,199.05) .. (864.06,204.2) .. controls (864.08,209.36) and (868.27,213.53) .. (873.42,213.51) -- cycle ;
%Curve Lines [id:da3306259306954583] 
\draw    (864.06,199.2) .. controls (824.31,187.42) and (815.83,129.56) .. (858.39,107.51) ;
\draw [shift={(861.06,106.2)}, rotate = 515.28] [fill={rgb, 255:red, 0; green, 0; blue, 0 }  ][line width=0.08]  [draw opacity=0] (10.72,-5.15) -- (0,0) -- (10.72,5.15) -- (7.12,0) -- cycle    ;
%Curve Lines [id:da7506214674714833] 
\draw    (873.4,204.18) .. controls (841.66,255.92) and (809.82,334.39) .. (679,287.71) ;
\draw [shift={(677.02,287)}, rotate = 379.98] [fill={rgb, 255:red, 0; green, 0; blue, 0 }  ][line width=0.08]  [draw opacity=0] (10.72,-5.15) -- (0,0) -- (10.72,5.15) -- (7.12,0) -- cycle    ;

% Text Node
\draw (449,30) node [anchor=north west][inner sep=0.75pt]  [font=\Huge] [align=left] {{\fontfamily{ptm}\selectfont II.}};
% Text Node
\draw (71,31) node [anchor=north west][inner sep=0.75pt]  [font=\Huge] [align=left] {{\fontfamily{ptm}\selectfont I.}};
% Text Node
\draw (74,123) node [anchor=north west][inner sep=0.75pt]  [font=\Large]  {$1$};
% Text Node
\draw (206,23) node [anchor=north west][inner sep=0.75pt]  [font=\Large]  {$2$};
% Text Node
\draw (342,128) node [anchor=north west][inner sep=0.75pt]  [font=\Large]  {$3$};
% Text Node
\draw (287.02,291.82) node [anchor=north west][inner sep=0.75pt]  [font=\Large]  {$4$};
% Text Node
\draw (124,292) node [anchor=north west][inner sep=0.75pt]  [font=\Large]  {$5$};
% Text Node
\draw (868,219.18) node [anchor=north west][inner sep=0.75pt]  [font=\Large]  {$7$};
% Text Node
\draw (865,75.18) node [anchor=north west][inner sep=0.75pt]  [font=\Large]  {$6$};
% Text Node
\draw (517,294.18) node [anchor=north west][inner sep=0.75pt]  [font=\Large]  {$5$};
% Text Node
\draw (665.02,294) node [anchor=north west][inner sep=0.75pt]  [font=\Large]  {$4$};
% Text Node
\draw (728,124.18) node [anchor=north west][inner sep=0.75pt]  [font=\Large]  {$3$};
% Text Node
\draw (593,23) node [anchor=north west][inner sep=0.75pt]  [font=\Large]  {$2$};
% Text Node
\draw (459,122.18) node [anchor=north west][inner sep=0.75pt]  [font=\Large]  {$1$};

\end{tikzpicture}
}
\end{figure}
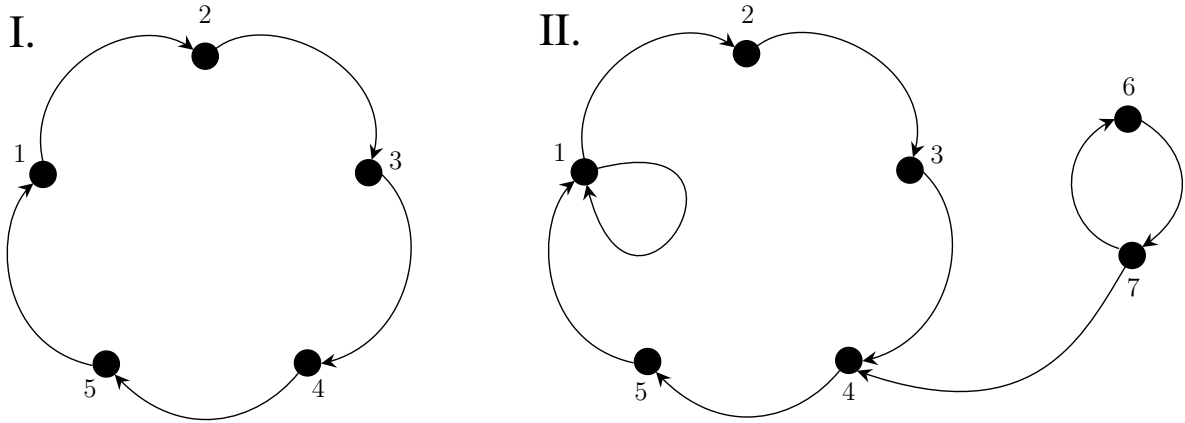

The next lemma shows a sufficient condition for $A$ to satisfy the condition (I).
\begin{lemma}\label{lemma:transitive_not_permutation_condition_I} Let $A$ be a $n \times n$ transition matrix. If $A$ is transitive and it is not a permutation matrix, then it satisfies the condition (I).
\end{lemma}

\begin{proof} Let $i \in S$. Since $A$ is not a permutation matrix, there exists $j \in S$ such that $A(j,k_1) = A(j,k_2) = 1$ for $k_1,k_2 \in S$ such that $k_1 \neq k_2$. By transitivity of the matrix, there exist the admissible words $w$ and $v$ of shortest length such that $w_0 = k_1$, $w_{|w|-1} = j$, $v_0 = k_2$ and $v_{|v|-1} = j$. Note that the unique letters of $w$ and $v$ equal to $j$ are the last ones. Then, the words $jw$ and $jv$ are admissible and distinct satisfying $(jw)_p \neq j$ for $0 < p < |w|$, and $(jv)_q \neq j$ for $0 < q < |v|$, and therefore $j \in \mathcal{S}$. Again by transitivity of $A$, there exists an admissible word $i_0...i_{r-1}$ such that $i_0 = i$ and $i_{r-1} = j \in \mathcal{S}$ and we conclude that $A$ satisfies the condition (I). 
\end{proof}

\begin{example} By Lemma \ref{lemma:transitive_not_permutation_condition_I} the following matrices satisfy the condition (I):
\begin{equation*}
 \begin{pmatrix}
    1 & 1 & 0 & 0\\
    0 & 0 & 1 & 0\\
    0 & 0 & 0 & 1\\
    1 & 0 & 0 & 0
 \end{pmatrix}, \quad
 \begin{pmatrix}
    1 & 1 & 1 & 1\\
    1 & 1 & 1 & 1\\
    1 & 1 & 1 & 1\\
    1 & 1 & 1 & 1
 \end{pmatrix}, \quad
  \begin{pmatrix}
    1 & 0 & 1 & 1\\
    1 & 1 & 0 & 1\\
    1 & 1 & 1 & 1\\
    1 & 0 & 1 & 0
 \end{pmatrix}.
\end{equation*}
\end{example}

The next example shows that the condition for the matrix in the previous lemma is not necessary to satisfy the condition (I).

\begin{example} The matrix
\begin{equation*}
\begin{pmatrix}
    1 & 1 & 0 & 0\\
    1 & 1 & 0 & 0\\
    0 & 0 & 1 & 1\\
    0 & 0 & 1 & 1
 \end{pmatrix}
\end{equation*}
is not transitive since there is not admissible word connecting the symbols $1$ and $3$. However, we have that $S = \mathcal{S}$. In fact, the table \ref{tab:condition_I_non_transitive_matrix} shows, for each $i \in S$, two admissible distinct words starting and ending in $i$, and that have not $i$ in the remaining positions.

\begin{table}[H]
\centering
\begin{tabular}{c|c|c|c|c}
  $i$ & $1$ & $2$ & $3$ & $4$\\ \hline
  words & $11$ & $22$ & $33$ & $44$\\
    & $121$ & $212$ & $343$ & $434$
\end{tabular}
\caption{Examples of distinct pairs of admissible words such that, for each symbol $i \in S$, that starts and ends with $i$ and such that their remaining letters are not $i$.\label{tab:condition_I_non_transitive_matrix}}
\end{table}
Then $\mathcal{S} = S$ and the validity of the condition (I) is straightforward.
\end{example}

Under the point of view of the symbolic dynamics, a transition matrix that every row and every column is non-zero satisfies the condition (I) if and only if the shift space $\Sigma_A$ has not isolated points. Also, $A$ does not satisfies the condition (I) if and only if there exist an admissible word $w$, $|w| \geq 1$ such that $P_w = Q_w$, where
\begin{equation*}
    P_w := S_w S_w^* \quad \text{and} \quad Q_w := S_w^* S_w,
\end{equation*}
with $S_w := S_{w_0} \cdots S_{w_{|w|-1}}$. The condition (I) is the classification criteria in order to verify if the uniqueness. In fact, the Uniqueness Theorem 2.13 of \cite{CK1980} states that, if the transition matrix satisfies the condition (I) and there two families of partial isometries, we say $\{S_i\}_{i=1}^n$ and $\{S_i'\}_{i=1}^n$, satisfying the relation Cuntz-Krieger relations \eqref{eq:CK}, then the mapping
\begin{equation*}
    S_i \mapsto S_i', \quad i = 1,...,n;
\end{equation*}
extends to an isomorphism between the algebras generated by those families. The next section presents the generalization of the Cuntz-Krieger algebras, the Exel-Laca algebras \cite{EL1999}. In this general context there is also a uniqueness theorem in the sense here presented, and we prove that these algebras are in fact generalizations of the Cuntz-Krieger algebras. Besides that, we show that the general uniqueness theorem is also a an result that extends the original theorem for the finite alphabet, and in this case we present now a definition and an equivalence for the condition (I) that we recall further, in the next section.  

\begin{definition}[Circuits in the symbolic graph] Given a transition matrix $A$ for a countable alphabet, a circuit is any finite admissible word on the symbolic graph of $A$, $x_0 x_1 \cdots x_{n-1}$, $n \in \mathbb{N}$, such that $A(x_{n-1},x_0) = 1$. Given a circuit $x_0 x_1 \cdots x_{n-1}$, we define the following:
\begin{itemize}
    \item we say $x_0 x_1 \cdots x_{n-1}$ has an exit when there exists a symbol $y \in S$, which we call exit symbol, such that $A(x_k,y) = 1$ for some $k \in \{0,...,n-1\}$ and $y \neq x_{k+1}$, where we set $x_n = x_0$. In the case of absence of an exit symbol, we say that $x_0 x_1 \cdots x_{n-1}$ is a terminal circuit;
    \item the reduced form of $x_0 x_1 \cdots x_{n-1}$ is its shortest subword $x_0 x_1 \cdots x_p$, $p \leq n-1$, such that $x_{p+1} \cdots x_{n-1} = x_0 \cdots x_{n-p-2}$.
    %\item we say that $x_0 x_1 \cdots x_{n-1}$ is simple when its reduced form $x_0 \cdots x_p$ satisfies, for every $i,j \in \{0,...,p\}$, that $x_i \neq x_j$ when $i \neq j$;
    %\item we say $x_0 x_1 \cdots x_{n-1}$ is totally simple when it is simple and its reduced form $x_0 x_1 \cdots x_p$ satisfies, for every $i,j \in \{0,...,p\}$, that
    %\begin{equation*}
    %    A(x_i,x_j) = 1 \iff j = (i+1)\mod (p+1).
    %\end{equation*}
\end{itemize}
\end{definition}

In the definition above, a reduced form of a circuit consists in the same path of the a priori circuit but `cutting off' its redundancies. For the rest of this section, we identify a circuit by its closed path in the symbolic graph, and this is realized by a natural equivalence relation: two circuits are equivalent if they have the same reduced form, up to a cyclic index permutation. Observe the reduced forms are also circuits and then they can be choosen as representatives of the equivalence classes. It is straightforward there exists a bijection between each one of these equivalence classes and the finite closed paths in the symbolic graph. 

Our last objective in this section is to show, for finite alphabet, that the condition (I) is equivalent to the condition (L) as follows:
\begin{itemize}
    \item[(L)] every circuit has an exit.
\end{itemize}
This result is Lemma 3.3 of \cite{KumjPaskRaeb1998} and we repeat the proof here. Part of the proof is the following lemma.

\begin{lemma}\label{lemma:A_no_zero_rows_no_zero_columns_implies_symbol_in_circuit_or_connecting_circuits} Suppose $A$ is a $n\times n$ transition matrix whitout zero rows and without zero columns, then for every $i \in S$ we have that $i$ is contained in a circuit or there are two circuits (not necessarily distincts) such that there exists an admissible word containing $i$ connecting them in the symbolic graph of $A$. 
\end{lemma}

\begin{proof} Since $A$ does not have zero rows neither zero columns, for every $i \in S$ there exists an bi-infinite admissible path in the graph of $A$ containing $i$:
\begin{equation*}
    \cdots \to b_3 \to b_2 \to b_1 \to i \to a_1 \to a_2 \to a_3 \to \cdots
\end{equation*}
Note that $|S| < \infty$, and hence there exists at least two symbols $p$ and $q$ such that
\begin{itemize}
    \item there are admissible paths connecting $q$ to $i$ and $i$ to $p$, that is, we may take the bi-infinite word above satisfying $a_r = p$ and $b_s = q$ for some $r,s \in \mathbb{N}$;
    \item $p$ repeats in forward the direction and $q$ repeats in the backward direction, that is, $r',s' \in \mathbb{N}$ s.t. $r'>r$, $s'>s$,  $a_{r'} = p$ and $b_{s'} = q$.
\end{itemize}
Now, we have two possibilities: there are $p$ and $q$ as above such that $p = q$, and then $i$ belongs to the circuit $b_s \cdots i \cdots a_r$; or for every $p$ and $q$ as above, we have $p \neq q$, and then, after choosing $p$ and $q$, the word $w^1 i w^2$ given by
\begin{align*}
    w^1= \begin{cases}
            \emptyset, \quad \text{if } b_1 = q,\\
            b_{s-1}\cdots b_1, \quad \text{otherwise};
          \end{cases}
            \quad \text{and} \quad
    w^2= \begin{cases}
            \emptyset, \quad \text{if } a_1 = p,\\
            a_1\cdots a_{r-1}, \quad \text{otherwise};
          \end{cases}
\end{align*}
where $r$ and $s$ are the smallest values as possible, is an admissible word connecting a circuit containing $q$ to a circuit containing $a$. 
\end{proof}

\begin{remark} Equivalently, the lemma above says that for every symbol $i \in S$, there exists a (possibly empty) admissible word $w$ s.t. $iw$ ends in a circuit.
\end{remark}

Now we prove the equivalence between (I) and (L).

\begin{proposition}\label{prop:equivalence_condition_I} Let $A$ be a $n \times n$ transition matrix. $A$ satisfies the condition (I) if and only if every circuit in the symbolic graph has an exit. 
\end{proposition}

\begin{proof} Suppose that $A$ satisfies the condition (I) and let $x_0 x_1 \cdots x_{n-1}$ be a circuit. Given $x_k$, $k \in \{0,...,n-1\}$, by condition (I), there exists an admissible path $i_0 \cdots i_{r-1}$, $r \geq 1$, s.t. $i_0 = x_k$ and $i_{r-1} \in \mathcal{S}$. And by definition of $\mathcal{S}$ there are two distinct circuits containing $i_{r-1}$, and then at least one of them is not a subpath of $x_0 x_1 \cdots x_{n-1}$. Therefore (L) holds. 

Conversely, suppose (L) and let $i \in S$. By Lemma \ref{lemma:A_no_zero_rows_no_zero_columns_implies_symbol_in_circuit_or_connecting_circuits} there exists an admissible path $w$, which can be empty, such that $(iw)_{|w|}$ belongs to a circuit. By hypothesis, there is a symbol $j$ in this circuit that connects to an exit. Observe that the circuit grants the existence of a path  $u$, $|u| \geq 22$, with $u_0 = u_{|u|-1} = j$ and $u_m \neq j$ for $0<m<|u|-1$, and s.t. it is a subpath in the circuit. Now, let $k \in S$ be the exit symbol. We have the following cases:
\begin{itemize}
    \item[$1.$] if there exists a word connecting $k$ to $iw$ or to any symbol of the circuit, then we have a path $v$, $|v| > 2$, and s.t. $v_0 = v_{|v|-1} = j$, $v_1 = k$ and $v_n \neq j$ for $0<n<|v|-1$. Certainly $u \neq v$, because the path $jk$ is $v$ and but it is not in $u$. Hence, $j \in \mathcal{S}$;
    \item[$2.$] if there is not a word connecting $k$ to $iw$ or to any symbol of the circuit, then consider the path $iwt$, where $t$ is a path that connects the path $w$ to the symbol $k$, then analyse the cases again at previous item and then this one.
\end{itemize}
Note that after each repetition of the items above we have less options of symbols that can satisfy the item $2.$ in the place of $k$. Since $|S| < \infty$, the process must terminate. Therefore $A$ satisifies (I). 
\end{proof}

\begin{remark} Observe that, for a general matrix with $S = \mathbb{N}$, the condition (L) is weaker than (I), which is also proved in Lemma 3.3 of \cite{KumjPaskRaeb1998}. In fact, the proof that (I) implies (L) does not need the hypothesis $|S| < \infty$. Moreover, consider the matrix given by
\begin{equation*}
    A(1,2) = A(2n,2n+2) = A(2n+1,2n-1) = 1, \quad n \in \mathbb{N},
\end{equation*}
and zero in the remaining entries. This matrix has not zero rows neither zero columns. Moreover, it satisfies condition (L), but condition (I) does not hold. Its symbolic graph is given by
\[
\begin{tikzcd}
\cdots\arrow[r]& \circled{7}\arrow[r]&\circled{5}\arrow[r]&\circled{3}\arrow[r]&\circled{1}\arrow[r]&\circled{2}\arrow[r]&\circled{4}\arrow[r]&\circled{6}\arrow[r]&\cdots
\end{tikzcd}
\]

\end{remark}

\section{Exel-Laca algebras}

One of the most natural questions about the Cuntz-Krieger algebras was the possibility to extend these algebras to an infinite alphabet. Note that for an infinite matrix, the series in \eqref{eq:CK} do not converge in general. However, at this point, we will ignore this problem in order to show a brief heuristic approach to the infinite countable alphabet case for the Cuntz-Krieger algebra as it was done for this one, and after that we present its formal definition, which is the Exel-Laca algebra. Later, we prove that the Exel-Laca algebras are in fact a generalization of the Cuntz-Krieger algebras, showing that their respective relations are equivalent for finite alphabets.

Given separable infinite dimensional Hilbert space $\mathcal{H}$ and $n \in \mathbb{N}$, $\mathcal{H}$ as
\begin{equation*}
    \mathcal{H} = \bigoplus_{i=1}^\infty \mathcal{H}_i,
\end{equation*}
where each $\mathcal{H}_i$ is also separable and infinite dimensional. Now consider a infinite transition matrix $A$ such that every row and every column is non-zero. For each $i \in \mathbb{N}$, we consider the infinite dimensional separable subspace 
\begin{equation*}
    \bigoplus_{j: A_{i,j}=1}\mathcal{H}_j
\end{equation*}
and we may choose for each $i$ an isometric isomorphism
\begin{equation*}
    S_i:\bigoplus_{j: A_{i,j}=1}\mathcal{H}_j \to \mathcal{H}_i.
\end{equation*}

\begin{example} If we take transition matrix of the renewal shift
\begin{equation*}
    A = \begin{pmatrix}
            1 & 1 & 1 & 1 & 1 & \cdots \\
            1 & 0 & 0 & 0 & 0 & \cdots \\
            0 & 1 & 0 & 0 & 0 & \cdots \\
            0 & 0 & 1 & 0 & 0 & \cdots \\
            0 & 0 & 0 & 1 & 0 & \cdots \\
            \vdots & \vdots & \vdots & \vdots & \vdots & \ddots \\
        \end{pmatrix},
\end{equation*}
then the isometric isomorphisms are
\begin{align*}
    S_1&: \bigoplus_{i=1}^\infty \mathcal{H}_i \to \mathcal{H}_1, \\ 
    S_p&: \mathcal{H}_{p-1} \to \mathcal{H}_p, \quad p \neq 1.
\end{align*}
\end{example}

By extending each $S_i$ to $\mathcal{H}$ as $S_i a = 0$ for $a \notin \bigoplus_{j: A_{i,j}=1}\mathcal{H}_j$, we obtain a family $\{S_i\}_{i \in \mathbb{N}}$ of partial isometries on $\mathcal{H}$, i.e.
\begin{equation*}
    S_iS_i^*S_i = S_i, \quad i= 1,..., n.
\end{equation*}

As we observed in the beginning of this section, it is not possible to write the relations \ref{eq:CK}. However, Exel and Laca \cite{EL1999} constructed the version of the Cuntz-Krieger relations for this general setting. In order to construct the Exel-Laca algebra, we need to introduce a unital universal C$^*$-algebra, which eventually coincides with the Exel-Laca algebra, and then we define the Exel-Laca algebra as it C$^*$-subalgebra. First, we define the following.
\begin{definition} Let $A$ be a transition matrix on an countable alphabet $S$. For every $X,Y \subset \mathbb{N}$ finite sets and $j \in \mathbb{N}$ we set
\begin{equation*}
    A(X,Y,j):= \prod_{x \in X} A(x,j) \prod_{y \in Y} (1-A(y,j)).
\end{equation*} 
\end{definition}
Now prove the following.

\begin{theorem}\label{thm:EL_are_universal_algebras} Consider a set of generators $D=\{1\}\sqcup\{S_i\}_{i \in S}$, $S$ countable satisfying the following family $\mathscr{R}$ of relations on $D$:
\begin{itemize}
    \item $(1-1^*,0)$, $(1-1^2,0)$, $(1S_i-S_i1,0)$ and $(S_i-S_iS_i^*S_i,0)$ for every $i \in S$;
    \item $(S_i^*S_iS_j^*S_j - S_j^*S_jS_i^*S_i,0)$ for every $i,j \in S$;
    \item $(S_i^*S_j, 0)$ whenever $i \neq j$;
    \item $\left(\left(\prod_{x \in X} S_x^*S_x\right) \left(\prod_{y \in Y} (1-S_y^*S_y)\right) - \sum_{j \in \mathbb{N}} A(X,Y,j)S_j S_j^*,0\right)$ for every pair $X,Y$ of finite subsets of $S$ such that $A(X,Y,j)$ is non-zero only for a finite number of $j$'s.
\end{itemize}
Then, the pair $(D,\mathscr{R})$ is admissible and therefore the unversal C$^*$-algebra C$^*(D,\mathscr{R})$ exists.
\end{theorem}

\begin{proof} The proof for $S_i$ is exactly the same as in Theorem \ref{thm:CK_are_universal_algebras}, and for every representation $\Theta_0$ for the pair $(D,\mathscr{R})$, we have that $\Theta_0(1) = \Theta_0(1)^* = \Theta_0(1)^2$ and then $\Theta(1)$ is a a projection, so $\|\Theta_0(1)\| \leq 1$. We conclude that $(D,\mathscr{R})$ is an admissible pair and therefore the C$^*$-algebra C$^*(D,\mathscr{R})$ exists.
\end{proof}

\begin{remark} It is important to observe that, given an admissible pair $(D,\mathscr{R})$, and two elements $F,G \in \widetilde{\mathcal{A}}_D$ such that $(F-G,0) \in \mathscr{R}$ means that $F=G$. Also, to affirm that there exists $1 \in D$ such that the relations $(1-1^*,0)$, $(1-1^2,0)$ and $(1d-d1,0)$ for every $\mathfrak{e} \in \mathscr{E}$ are in $\mathscr{R}$, is equivalent to say that the universal C$^*$-algebra C$^*(D,\mathscr{R})$ is unital.
\end{remark}

\begin{definition}[Exel-Laca algebra]\label{def:EL_algebra} Given a transition matrix $A$, we define the universal unital C$^*$-algebra $\widetilde{\mathcal{O}}_A$ generated by a family of partial isometries $\{S_j:j \in S\}$, and that satisfies the relations of $\mathscr{R}$ in Theorem \ref{thm:EL_are_universal_algebras}, that is,
\begin{itemize}
    \item[(EL1)] $S_i^*S_i$ and $S_j^*S_j$ commute for every $i,j \in S$;
    \item[(EL2)] $S_i^*S_j = 0$ whenever $i \neq j$;
    \item[(EL3)] $(S_i^*S_i)S_j = A(i,j)S_j$ for all $i,j \in S$;
    \item[(EL4)] for every pair $X,Y$ of finite subsets of $\mathbb{N}$ such that the quantity $A(X,Y,j)$ is non-zero only for a finite number of $j$'s we have
        \begin{equation*}
            \left(\prod_{x \in X} S_x^*S_x\right) \left(\prod_{y \in Y} (1-S_y^*S_y)\right) = \sum_{j \in \mathbb{N}} A(X,Y,j)S_j S_j^*.
        \end{equation*}
\end{itemize}
Also, the relations (EL1)-(EL4) are called the Exel-Laca relations. The Exel-Laca algebra $\mathcal{O}_A$ is the universal C$^*$-subalgebra of $\widetilde{\mathcal{O}}_A$ generated by the family $\{S_j:j \in S\}$.
\end{definition} 

\begin{remark} By the definition above, we observe that there are only two possibilities: $\widetilde{\mathcal{O}}_A \simeq \mathcal{O}_A$ or $\widetilde{\mathcal{O}}_A$ is the canonical unitization of $\mathcal{O}_A$. Further in this section, we present precise conditions that makes these algebras $*$-isomorphic. 
\end{remark}

For the particular case $|S| < \infty$, the universal algebra $\widetilde{\mathcal{O}}_A$ is the Cuntz-Krieger algebra, as we show next.

\begin{proposition}\label{prop:equivalence_CK_EL} Let $A$ be a transition $n \times n$ matrix for the (finite) alphabet $S$. Then, $\widetilde{\mathcal{O}}_A$ is isomorphic to the Cuntz-Krieger algebra $\mathcal{O}_A$. 
\end{proposition}

\begin{proof} We claim that the Exel-Laca relations are equivalent to the Cuntz-Krieger relations for the finite alphabet case. Indeed, let $\mathscr{S} = \{S_k\}_{k=1}^n$ be a family of partial isometries. Suppose $\mathscr{S}$ satisfies the Exel-Laca relations. By the finiteness of the alphabet we have that $A(X,Y,j)$ is always non-zero for a finite quantity of $j$'s. Also note that $A(\emptyset,\emptyset,j)=1$ for every $j$, so (EL4) gives
\begin{equation*}
    \sum_{i = 1}^n S_i S_i^* = \sum_{i = 1}^n A(\emptyset,\emptyset,i) S_i S_i^* = \left(\prod_{x \in \emptyset} S_x^*S_x\right)\left(\prod_{k \in \emptyset}(1- S_y^*S_y)\right) = 1 \cdot 1 = 1,
\end{equation*}
and the validity of (CK1) is proved. Now, observe that for every $i,j \in S$ we have $A(\{i\},\emptyset,j)=A(i,j)$. Again from (EL4), it follows that
\begin{equation*}
    \sum_{j = 1}^n A(i,j)S_j S_j^* = \sum_{j = 1}^n A(\{i\},\emptyset,j)S_j S_j^*  = \left(\prod_{x \in \{i\}} S_x^*S_x\right)\left(\prod_{y \in \emptyset}(1- S_y^*S_y)\right) = S_i^*S_i,
\end{equation*}
and (CK2) is also proved. Conversely, suppose now that $\mathscr{S}$ satisfies the Cuntz-Krieger relations. The items $(ii)$ and $(iii)$ of Lemma \ref{lemma:CKprop} are precisely the relations (EL2) and (EL3), and then these relations are automatically valid. For (EL1), we use (CK2), Lemma \ref{lemma:CKprop} $(i)$ and the fact that $S_kS_k^*$ is a projection for every $k \in S$ to obtain
\begin{align*}
   S_i^*S_iS_j^*S_j &= \left(\sum_{k=1}^n A(i,k) S_k S_k^*\right) \left(\sum_{\ell=1}^n A(j,\ell) S_\ell S_\ell^* \right) = \sum_{k,\ell=1}^n A(i,k)A(j,\ell) S_k S_k^*  S_\ell S_\ell^* \\
   &= \sum_{k=1}^n A(i,k)A(j,k) S_k S_k^*  S_k S_k^* = \sum_{k=1}^n A(i,k)A(j,k) S_k S_k^*,
\end{align*}
that is
\begin{equation}\label{eq:prod_Q_i_Q_j}
   S_i^*S_iS_j^*S_j = \sum_{k=1}^n A(i,k)A(j,k) S_k S_k^*,
\end{equation}
and, since the RHS of \eqref{eq:prod_Q_i_Q_j} does not change by exchanging the positions of $i$ and $j$, we conclude that $S_i^*S_iS_j^*S_j = S_j^*S_j S_i^*S_i$, that is, (EL1) is satisfied. By applying the result above for a product of projections $S_x^*S_x$, $x \in X \subseteq \{1,...,n\}$, we have
\begin{equation}\label{eq:prod_sisi}
    \prod_{x \in X} S_x^*S_x = \sum_{k=1}^n \left(\prod_{x \in X} A(x,k)\right)S_kS_k^*.
\end{equation}
On other hand, if we subtract (CK1) from (CK2) we obtain
\begin{equation*}
    1-S_i^*S_i = \sum_{j=1}^n(1-A(i,j))S_jS_j^*, \quad i \in S,
\end{equation*}
and hence
\begin{align*}
    \prod_{y \in Y} (1-S_y^*S_y) &= \prod_{y \in Y} \left( \sum_{l=1}^n(1-A(y,\ell))S_lS_l^*\right)\stackrel{(\ddagger)}{=}  \sum_{l=1}^n \left( \prod_{y \in Y}(1-A(y,\ell))S_\ell S_\ell^* \right)\\
    &\stackrel{(\bullet)}{=}  \sum_{l=1}^n \left( \prod_{y \in Y}(1-A(y,\ell))\right)S_\ell S_\ell^*, \quad \text{for every } Y \subseteq \{1,...,n\},
\end{align*}
where $(\ddagger)$ and $(\bullet)$ are exactly the same arguments as used in \eqref{eq:prod_Q_i_Q_j}. Then,
\begin{equation}\label{eq:prod_1_minus_sisi}
    \prod_{y \in Y} (1-S_y^*S_y) = \sum_{l=1}^n \left( \prod_{y \in Y}(1-A(y,\ell))\right)S_\ell S_\ell^*.
\end{equation}
We get
\begin{align*}
    \left(\prod_{i \in X} S_i^*S_i\right) \left(\prod_{j \in Y} (1-S_j^*S_j)\right) &\stackrel{(\bullet \bullet)}{=} \left(\sum_{k=1}^n \left(\prod_{i \in X} A(i,k)\right)S_kS_k^*\right) \left(\sum_{l=1}^n \left( \prod_{j \in Y}(1-A(j,l))\right)S_lS_l^*\right)\\
    &=\sum_{k,l=1}^n \left(\prod_{i \in X} A(i,k)\right)  \left( \prod_{j \in Y}(1-A(j,l))\right)S_kS_k^*S_lS_l^* \\
    &\stackrel{(\ddagger)}{=}\sum_{k=1}^N \left(\prod_{i \in X} A(i,k)\right)  \left( \prod_{j \in Y}(1-A(j,k))\right)S_kS_k^*S_kS_k^* \\
    &\stackrel{(\bullet)}{=}\sum_{k=1}^N \left(\prod_{i \in X} A(i,k)\right)  \left( \prod_{j \in Y}(1-A(j,k))\right)S_kS_k^* \\
    &=\sum_{k=1}^N A(X,Y,k)S_kS_k^*,
\end{align*}
where in $(\bullet \bullet)$ we used the identities \eqref{eq:prod_sisi} and \eqref{eq:prod_1_minus_sisi}, in $(\ddagger)$ we used Lemma \ref{lemma:CKprop} $(i)$ as previously in the proof. Also again, in $(\bullet)$ we used that $S_jS_j^*$ is a projection. Then,
\begin{equation*}
    \left(\prod_{i \in X} S_i^*S_i\right) \left(\prod_{j \in Y} (1-S_j^*S_j)\right) = \sum_{k=1}^N A(X,Y,k)S_kS_k^*,
\end{equation*}
which is precisely the relation (EL4). By the uniqueness of the universal C$^*$-algebras, Theorem \ref{thm:uniqueness_universal_algebra}, we conclude that $\widetilde{\mathcal{O}}_A$ and the Cuntz-Krieger algebras are isomorphic.
\end{proof}

\begin{corollary} Let $A$ be a transition $n \times n$ matrix for the (finite) alphabet $S$. Then, the Exel-Laca algebra is isomorphic to the Cuntz-Krieger algebra. Moreover, in this case, the Exel-Laca algebra is unital.
\end{corollary}

\begin{proof} By Proposition \ref{prop:equivalence_CK_EL} we have that the Cuntz-Krieger algebra is isomorphic to $\widetilde{\mathcal{O}}_A$. In particular, the Cuntz-Krieger algebra is generated by the family of partial isometries, and this family also generates the Exel-Laca algebra by definition. We conclude that these three C$^*$-algebras are isomorphic. Consequently, the Exel-Laca algebra is unital for finite alphabet.
\end{proof}

Now we state the version of the Uniqueness Theorem of the Cuntz-Krieger algebra, in the sense of the equivalence between faithful representations, for the Exel-Laca algebras, which is Corollary 13.2 of \cite{EL1999}.

\begin{theorem}[Uniqueness Theorem for Exel-Laca algebras]\label{thm:uniqueness_thm_EL_algebras} Let $A$ be a transition matrix on an alphabet $S$ with no identically zero rows and suppose that its symbolic graph has no terminal circuits. Let $\{S_i\}_{i\in S}$ and $\{T_i\}_{i\in S}$ be two families of non-zero partial isometries on Hilbert spaces, both of them satisfying the Exel-Laca relations. Then the C$^*$-algebras generated by $\{S_i\}_{i\in S}$ and $\{T_i\}_{i\in S}$ are isomorphic to each other under an isomorphism $\Psi$ such that $\Psi(S_i) = T_i$ for all $i \in S$.
\end{theorem}

The next result grants that transitivity is a sufficient condition for a transition matrix on a infininte alphabet to satisfy the hipotheses of Theorem \ref{thm:uniqueness_thm_EL_algebras}.

\begin{proposition}Suppose that $A$ is a transitive transition matrix on a countably infinite alphabet. Then $A$ has not indentically zero rows and every circuit is not terminal.
\end{proposition} 

\begin{proof} Let $x_0\cdots x_{n-1}$, $n \in \mathbb{N}$, be a circuit. Since the alphabet is infininte, there exists $z \in \mathbb{N}$ such that $z \neq x_p$ for every $p \in \{0,...,n-1\}$. By the transitivity there exists an admissible word $w$, $|w| \geq 2$, satisfying $w_0 = x_0$ and $w_{|w|-1} = z$. Since $z$ is not a symbol in the word $x_0\cdots x_{n-1}$, there exists the smallest $k \in \{0,...,|w|-1\}$ such that $w_k$ is not a symbol in the word $x_0\cdots x_{n-1}$ and therefore $w_k$ is an exit for the circuit $x_0\cdots x_{n-1}$. Therefore the symbolic graph of $A$ has not terminal circuits. 
\end{proof}

We assume for the rest of the thesis the following standing hypothesis.

\begin{mdframed} \textbf{Standing hypothesis:} any transition matrix $A$ has only non-zero rows and only non-zero columns, and it is transitive.
\end{mdframed}

\subsection{Representation of the EL algebras in $\mathfrak{B}(\ell^2(\Sigma_A))$}

We present now a very special faithful representation of $\widetilde{\mathcal{O}}_A$ that connects the Exel-Laca algebras with the countable Markov shifts.

\begin{definition}\label{def:representation_EL} Given a transition matrix $A$ its Markov shift space $\Sigma_A$, we consider the (non-separable) Hilbert space $\mathfrak{B}(\ell^2(\Sigma_A))$ and its canonical basis $\{\delta_x\}_{x \in \Sigma_A}$, given by
\begin{equation*}
    (\delta_x)_y = \begin{cases}
                        1 \text{ if }x=y,\\
                        0 \text{ otherwise}.
                    \end{cases}
\end{equation*}
Let $\pi: \widetilde{\mathcal{O}}_A \to \mathfrak{B}(l^2(\Sigma_A))$ be a representation, defined by $\pi(S_j):= T_j$, where
\begin{equation*}
    T_s(\delta_x) = \begin{cases}
                        \delta_{sx} \text{ if } A(s,x_0)=1,\\
                        0 \text{ otherwise};
                    \end{cases} \text{with} \quad
    T_s^*(\delta_x)=\begin{cases}
            \delta_{\sigma(x)} \text{ if } x \in [s],\\
            0 \text{ otherwise}.
        \end{cases}.
\end{equation*}
We also define the projections $P_s:=T_sT_s^*$ and $Q_s:=T_s^*T_s$, given by
\begin{equation*}
    P_s(\delta_\omega)=\begin{cases}
                            \delta_\omega \text{ if } \omega \in [s], \\
                            0 \text{ otherwise;} 
                        \end{cases}  \text{and} \quad
    Q_s(\delta_\omega)=\begin{cases}
                        \delta_\omega \text{ if } \omega \in \sigma([s]),\\
                            0 \text{ otherwise.}
                        \end{cases}
\end{equation*}
\end{definition}

By Proposition $9.1$ of \cite{EL1999}, the representation above is the unique one such that $\pi(S_j):= T_j$. Also, by the standing hypothesis of transitivity, we have that $A$ does not have terminal circuits and then $\pi$ is faithful (see Proposition 12.2 in \cite{EL1999}). The next proposition characterizes $\widetilde{\mathcal{O}}_A$ in terms of the representation of Definition \ref{def:representation_EL}.

\begin{proposition}\label{prop:O_A_closure_span} $\widetilde{\mathcal{O}}_A$ is isomorphic to the closure of the linear span of the terms $T_\alpha \left(\prod_{i \in F}Q_i\right)T_\beta^*$, where $\alpha$ and $\beta$ are admissible finite words or the empty word and $F \subseteq S$ is finite.
\end{proposition}

\begin{proof} We recall that $\widetilde{\mathcal{O}}_A \simeq C^*(\{T_i: i \in S\}\cup\{1\})$. First, we will prove that 
\begin{equation}\label{eq:spanO_A}
    \spann\left\{T_\alpha \left(\prod_{i \in F}Q_i\right)T_\beta^*:F \text{ finite}; \text{ }\alpha, \beta \text{ finite admissible words, including empty words} \right\}
\end{equation}
is a $*$-algebra. Indeed, the vector space properties are trivially satisfied, as well as the closeness of the involution. For the algebra product, take two generators in $\widetilde{\mathcal{O}}_A$, $T_\alpha \left(\prod_{i \in F}Q_i\right)T_\beta^*$ and $T_{\alpha'} \left(\prod_{j \in F'}Q_i\right)T_{\beta'}^*$ like in \eqref{eq:spanO_A}, with $\beta = \beta_1 \cdots \beta_n$ and $\alpha' = \alpha_1'\cdots \alpha_m'$; $n,m \in \mathbb{N}$. We wish that the product
\begin{equation}\label{eq:prod_span}
    T_\alpha \left(\prod_{i \in F}Q_i\right)T_\beta^*T_{\alpha'} \left(\prod_{j \in F'}Q_i\right)T_{\beta'}^*
\end{equation}
can be written as a linear combination of terms like the generators of \eqref{eq:spanO_A} and hence we need to study the term $T_\beta^*T_{\alpha'}$. From the axiom $(EL3)$ for the Cuntz-Krieger algebra for infinite matrices
we have that
\begin{equation} \label{eq:Q_iT_j}
    Q_i T_j = A(i,j) T_j,
\end{equation}
and consequently
\begin{equation} \label{eq:Q_iT_j_star}
    T_j^* Q_i = A(i,j) T_j^*.
\end{equation}
We have three cases to analyze as follows.

\begin{itemize}
    \item[$(a)$] If $n=m$, then by the axiom $(EL2)$ and \eqref{eq:Q_iT_j} we get
        \begin{equation*}
            T_\beta^*T_{\alpha'} = T_{\beta_n}^* \cdots T_{\beta_2}^* \delta_{\beta_1,\alpha'_1}Q_{\beta_1} T_{\alpha'_2} \cdots T_{\alpha'_n}
            = \delta_{\beta_1,\alpha'_1} T_{\beta_n}^* \cdots T_{\beta_2}^*  T_{\alpha'_2} \cdots T_{\alpha'_n} = \cdots = \delta_{\beta,\alpha'} Q_{\beta_n},
        \end{equation*}
    where $\delta_{\beta,\alpha'}$ is the Kronecker delta. So,
    \begin{equation*}
    T_\alpha \left(\prod_{i \in F}Q_i\right)T_\beta^*T_{\alpha'} \left(\prod_{j \in F'}Q_j\right)T_{\beta'}^*= 
                            \delta_{\beta,\alpha'} T_\alpha \left(\prod_{i \in F}Q_i\right)Q \left(\prod_{j \in F'}Q_j\right)T_{\beta'}^*
    \end{equation*}
    where $Q = Q_{\beta_n}$ if $n>0$ and $Q = 1$ otherwise. We conclude that the product above belongs to \eqref{eq:spanO_A} in this case;
    \item[$(b)$] if $n>m$, by similar calculations done in the earlier case using \eqref{eq:Q_iT_j_star} instead of \eqref{eq:Q_iT_j} and defining $\overline{\beta}:=\beta_1\cdots \beta_m$ we obtain $T_\beta^*T_{\alpha'} = \delta_{\overline{\beta},\alpha'} T_{\beta_n}^* \cdots T_{\beta_{m+1}}^*$. By using \eqref{eq:Q_iT_j_star} several but finite times on the term $T_{\beta_{m+1}}^*\left(\prod_{j \in F'}Q_j\right)$ we have that
    \begin{equation*}
    T_\alpha \left(\prod_{i \in F}Q_i\right)T_\beta^*T_{\alpha'} \left(\prod_{j \in F'}Q_j\right)T_{\beta'}^*= \delta_{\overline{\beta},\alpha'}\left(\prod_{j \in F'}A(j,\beta_{m+1})\right) T_\alpha \left(\prod_{i \in F}Q_i\right) T_{\beta'\beta_{m+1}\cdots \beta_n}^*.
    \end{equation*}
    We conclude that the product above also belongs to \eqref{eq:spanO_A};
    \item[$(c)$] for $n<m$ the proof is similar to the previous item by using the \eqref{eq:Q_iT_j} instead of \eqref{eq:Q_iT_j_star}.
\end{itemize}
We conclude that \eqref{eq:spanO_A} is a $*$-subalgebra of the $C^*$-algebra $\widetilde{\mathcal{O}}_A$, and hence 
\begin{equation}\label{eq:spanO_A_closed}
    B=\overline{\spann\left\{T_\alpha \left(\prod_{i \in F}Q_i\right)T_\beta^*:F \text{ finite}; \text{ }\alpha, \beta \text{ finite admissible words} \right\}}
\end{equation}
is a $C^*$-subalgebra of $\widetilde{\mathcal{O}}_A$. On other hand, if we take $F= \emptyset$, $\alpha = s$, $s\in S$ and $\beta$ the empty sequence, then we conclude that $T_s \in B$ for all $s \in S$. Also, if we take $F= \emptyset$ and $\alpha = \beta$ empty sequence, it follows that $1$ belongs to \eqref{eq:spanO_A_closed}. Since $B$ is a $C^*$-subalgebra of $\widetilde{\mathcal{O}}_A$ which contains its generators, we have that $\widetilde{\mathcal{O}}_A \subseteq B$ and therefore the result follows.
\end{proof}    

\begin{remark}\label{remark:O_A_non_unital} By similar proof as in Proposition \ref{prop:O_A_closure_span}, it is straightforward to verify that 
\begin{equation*}
    \mathcal{O}_A \simeq \overline{\spann\left\{ \begin{array}{l l}
         & F \text{ finite}; \text{ }\alpha, \beta \text{ finite admissible words};\\
         T_\alpha \left(\prod_{i \in F}Q_i\right)T_\beta^*: &F \neq \emptyset \text{ or } \alpha \text{ is not the empty word} \\
         &\text{or }  \beta \text{ is not an empty word}
    \end{array}\right\} }.
\end{equation*}
\end{remark}

\begin{remark}\label{remark:TQT_on_H} For an element $T_\alpha \left(\prod_{i \in F}Q_i\right)T_\beta^* \in \widetilde{\mathcal{O}}_A$ as in the Proposition \ref{prop:O_A_closure_span}, we have that
\begin{equation*}
    T_\alpha \left(\prod_{i \in F}Q_i\right)T_\beta^*(\delta_x) = \begin{cases}
                                                                    \delta_{\alpha \sigma^{|\beta|}x}, \quad \text{if } x \in [\beta] \text{ and } A(i,x_{|\beta|})=1 \text{ for every } i \in F,\\
                                                                    0, \quad \text{otherwise};
                                                                  \end{cases}
\end{equation*}
for every $x \in \Sigma_A$.
\end{remark}

Now we define the commutative C$^*$-subalgebras of $\widetilde{\mathcal{O}}_A$ such that their spectra are the generalized Markov shifts. 

\begin{definition} Let $\widetilde{\mathcal{D}}_A$ be the commutative unital $C^*$-subalgebra of $\widetilde{\mathcal{O}}_A$ given by
\begin{equation*}
    \widetilde{\mathcal{D}}_A:= \overline{\spann\left\{T_\alpha \prod_{i \in F}Q_i T_\alpha^*: F \text{ finite}; \alpha \text{ finite word} \right\}}
\end{equation*}
and denote by $\mathcal{D}_A$ its C$^*$-subalgebra defined by
\begin{equation*}
    \mathcal{D}_A:= \overline{\spann\left\{T_\alpha \prod_{i \in F}Q_i T_\alpha^*: F \text{ finite}; \alpha \text{ finite word}; F \neq \emptyset \text{ or } \alpha \text{ is not the empty word}\right\}}.
\end{equation*}
\end{definition}

The proof of Proposition \ref{prop:O_A_closure_span} for $\alpha = \beta$ and $\alpha' = \beta'$ shows that $\widetilde{\mathcal{D}}_A$ is a unital $C^*$-subalgebra of $\widetilde{\mathcal{O}}_A$. Moreover, by easy calculations we note that $\widetilde{\mathcal{D}}_A$ is in fact commutative.

\begin{remark}\label{remark:D_A_subalgebra_on_diagonal_operators} As a particular case of Remark \ref{remark:TQT_on_H} we have that 
\begin{equation*}
    T_\alpha \left(\prod_{i \in F}Q_i\right)T_\alpha^*(\delta_x) = \begin{cases}
                                                                    \delta_{x}, \quad \text{if } x \in [\alpha] \text{ and } A(i,x_{|\alpha|})=1 \text{ for every } i \in F,\\
                                                                    0, \quad \text{otherwise};
                                                                  \end{cases}
\end{equation*}
for every $x \in \Sigma_A$. Then $T_\alpha \left(\prod_{i \in F}Q_i\right)T_\alpha^*$ is a diagonal operator and therefore both $\widetilde{\mathcal{D}}_A$ and $\mathcal{D}_A$ are C$^*$-subalgebras of diagonal operators of $\ell^2(\Sigma_A)$.
\end{remark}

Now, we will obtain a more suitable set of generators for $\widetilde{\mathcal{D}}_A$ which will allow to see its spectrum as a set of configurations on the Cayley tree. We present the notion of partial representation of a group.

\begin{definition}[Partial representations] Given a group $G$, denote its identity element by $e$, and consider a Hilbert space $\mathcal{H}$. A partial representation of $G$ on $\mathcal{H}$ is a map $u:G \to \mathfrak{B}(\mathcal{H})$ such that, for every $g,h \in G$, it satisfies
\begin{itemize}
    \item[(PR1)] $u(g)u(h)u(h^{-1}) = u(gh)u(h^{-1})$;
    \item[(PR2)] $u(g^{-1}) = u(g)^*$;
    \item[(PR3)] $u(e) = 1$,
\end{itemize}
\end{definition}

\begin{remark} For every partial representation $u:G \to \mathfrak{B}(\mathcal{H})$ defined as above, it follows that 
\begin{equation*}
    u(g^{-1})u(g)u(h) = u(g^{-1})u(gh).
\end{equation*}
Indeed, by exchanging $g$ and $h^{-1}$ in (PR1), we get 
\begin{equation*}
    u(h^{-1})u(g^{-1})u(g) = u(h^{-1}g^{-1})u(g).
\end{equation*}
And by applying the involution in both sides of the equality above, it follows that
\begin{align*}
    u(g)^*u(g^{-1})^*u(h^{-1})^* = u(g)^*u(h^{-1}g^{-1})^*
\end{align*}
and by (PR2) we obtain
\begin{equation*}
    u(g^{-1})u(g)u(h) = u(g^{-1})u(gh).
\end{equation*}
\end{remark}

\begin{proposition}\label{prop:e_g_is_projection} Given a partial representation $u$ of a group $G$ on a Hilbert space $\mathcal{H}$, we have that $u(g)$ is a partial isometry for every $g \in G$. The elements in $\mathfrak{B}(\mathcal{H})$ defined by
\begin{equation*}
    e(g):= u(g)u(g)^*, \quad g \in G
\end{equation*}
are projections.
\end{proposition}

\begin{proof} For every $g \in G$, we have that
\begin{equation*}
    u(g)u(g)^*u(g) \stackrel{(PR2)}{=} u(g)u(g^{-1})u(g) \stackrel{(PR1)}{=} u(gg^{-1})u(g) = u(e)u(g) \stackrel{(PR3)}{=} u(g),
\end{equation*}
where in first two equalities above we used, respectively, the properties (PR2) and (PR1). In the last equality above we used (PR3). Therefore, $u(g)$ is a partial isometry, and therefore $e(g)$ is a projection. \end{proof}

%talvez definir grupo livre

\begin{definition} Let $\mathbb{F}$ be a free group. We say that a partial representation of $\mathbb{F}$ on any Hilbert space is
\begin{itemize}
    \item[$(a)$] semi-satured if $u(gh) = u(g)u(h)$ whenever $|gh|=|g|+|h|$, for $gh$ reduced;
    \item[$(b)$] orthogonal if $u(i)^*u(j) = 0$ when $i,j \in S$ and $i \neq j$, where $S$ is the generator set of $\mathbb{F}$.
\end{itemize}
\end{definition}

Consider the free group $\mathbb{F}_S$ generated by the alphabet $S$ and let the map

\begin{align*}
    T:\mathbb{F}_S &\to \widetilde{\mathcal{O}}_A\\
    s &\mapsto T_s \\
    s^{-1} &\mapsto T_{s^{-1}}:= T_s^*.
\end{align*}
Also, for any word $g$ in $\mathbb{F}_S$, take its reduced form $g=x_1...x_n$ and define that $T$ realizes the mapping
\begin{equation*}
    g \mapsto T_g:= T_{x_1} \cdots T_{x_n},
\end{equation*}
and that $T_e=1$. We are imposing conditions only on the reduced words in order to make $T$ well-defined. For example, note that $e=i^{-1}i$ for any $i \in S$. If we would include non reduced words we would get $1 = T_e = Q_i$ for all $i \in S$, which is not true. We prove that $T$ defines a partial action that is semi-satured and orthogonal, and the next lemma will be used for it. From now, we will denote $\mathbb{F}_S$ by $\mathbb{F}$. In addition, denote by $\mathbb{F}_+$ the positive cone of $\mathbb{F}$, i.e., the unital sub-semigroup of $\mathbb{F}$ generated by $S$. The set $\mathbb{F}_+^{-1}$ is the collection of the inverse elements of $\mathbb{F}_+$, that is, the elements $g \in \mathbb{F}$ such that $g^{-1} \in \mathbb{F}_+$.  

\begin{lemma}\label{lemma:g_non_zero} For any $g \in \mathbb{F}$ reduced which is not in the form $\alpha\beta^{-1}$, with $\alpha, \beta \in \mathbb{F}_+$, (including the cases $g=\mu$, $g=\nu^{-1}$ and $g \neq e$), it follows that $T_g = 0$.
\end{lemma}

\begin{proof} For a given $g$ as in the statement of the lemma, its reduced form is $g=g_0\cdots g_p$, $p > 1$, and there exists $i \in \{0,...,p-1\}$ such that $g_i = s^{-1}$ and $g_{i+1}=t$, $s,t \in S$. Hence,
\begin{equation*}
    T_g = T_{g_0} \cdots T_{s^{-1}} T_t \cdots T_{g_p} = T_{g_0} \cdots T_{s}^* T_t \cdots T_{g_p} \stackrel{(\clubsuit)}{=} T_{g_0} \cdots 0 \cdots T_{g_p}=0,
\end{equation*}
where in $(\clubsuit)$ we used (EL2).
\end{proof}

\begin{proposition}[Proposition 3.2 of \cite{EL1999}] \label{prop:T_partial_rep} The map $T$ constructed as above is a partial representation of $\mathbb{F}$ on $\mathcal{O}_A$ that is semi-satured and orthogonal.
\end{proposition}

\begin{proof} (PR2) and (PR3) are straightforward as well as the semi-saturation property. The orthogonality comes directly from (EL2). We divide the proof of (PR1) in claims.

\textbf{Claim 1:} For $n \in \mathbb{N}$, $x_1,...,x_n \in S$ and $\alpha = x_1 \cdots x_n$, it follows that $T_\alpha^*T_\alpha = \Lambda T_{x_n}^*T_{x_n}$, where
\begin{equation*}
    \Lambda = \prod_{k=1}^{n-1} A(x_k,x_{k+1}).
\end{equation*}

\textbf{Proof of the claim 1:} the result is obvious for $n=1$. By induction, assume that the result follows for $n-1$, $n \geq 2$, i.e., for $\beta = x_1 \cdots x_{n-1}$ we have $T_\beta^*T_\beta = \Lambda' T_{x_{n-1}}^*T_{x_{n-1}}$, where 

\begin{equation*}
    \Lambda' = \prod_{k=1}^{n-2} A(x_k,x_{k+1}).
\end{equation*}
We have that 

\begin{align*}
    T_\alpha^* T_\alpha &= T_{\beta x_n}^* T_{\beta x_n} = T_{x_n}^*T_{\beta}^* T_{\beta}T_{x_n} \stackrel{(\spadesuit)}{=} \Lambda' T_{x_n}^* T_{x_{n-1}}^*T_{x_{n-1}}T_{x_n} \\
    &\stackrel{(\diamondsuit)}{=} \Lambda' A(x_{n-1},x_n) T_{x_n}^* T_{x_n} = \Lambda T_{x_n}^* T_{x_n},
\end{align*}
where in $(\spadesuit)$ we used the induction hypothesis and in $(\diamondsuit)$ we used (EL3). This claim is proved. 

Note that the claim above also shows that for all $\alpha \in \mathbb{F}_+$ it follows that $T_\alpha^*T_\alpha$ is idempotent because $\Lambda \in \{0,1\}$, and hence it is a partial isometry.

\textbf{Claim 2:} If $\alpha, \beta \in \mathbb{F}_+$ with $|\alpha| = |\beta|$ but $\alpha \neq \beta$, then $T_\alpha^*T_\beta = 0$.

\textbf{Proof of the claim 2:} let $m= |\alpha| = |\beta|$. For $m=1$, we have that $(EL2)$ implies $T_\alpha^*T_\beta =0$. For $m>1$, write $\alpha = \alpha'x$ and $\beta = \beta'y$, with $\alpha',\beta' \in \mathbb{F}_+$ and $x,y \in S$. It follows that 

\begin{equation*}
    T_\alpha^*T_\beta = T_{\alpha'x}^*T_{\beta'y} = T_x^* T_{\alpha'}^*T_{\beta'}T_y.
\end{equation*}
Assume by induction that the result follows, i.e., $T_\mu^*T_\nu = 0$ for $\mu, \nu \in \mathbb{F}_+$ such that $|\mu|=|\nu|=m-1$ and $\mu \neq \nu$. Now, suppose that for $m$ we have $T_\alpha^*T_\beta \neq 0$. By the induction hypothesis, we conclude that $\alpha' = \beta'$, and the claim 1 implies that
\begin{equation*}
    T_{\alpha'}^*T_{\beta'} = T_{\alpha'}^*T_{\alpha'} = \Lambda T_{z}^*T_{z}
\end{equation*}
for any $z \in S$ and $\Lambda \in \{0,1\}$. We conclude that
\begin{equation*}
    0 \neq T_\alpha^*T_\beta = \Lambda T_xT_{z}^*T_{z}T_y^* = \Lambda A(z,y) T_x T_y^*,
\end{equation*}
and by the case for $m=1$ we conclude that $x=y$ and therefore $\alpha=\beta$, leading to a contradiction. The claim 2 is proved.

\textbf{Claim 3:} For all $\alpha, \beta \in \mathbb{F}_+$, if $T_\alpha^*T_\beta \neq 0$, then $\alpha^{-1} \beta \in \mathbb{F}_+\cup \mathbb{F}_+^{-1}$.

\textbf{Proof of the claim 3:} w.l.o.g. assume $|\alpha|\leq |\beta|$. Then we can write $\beta = \beta' \gamma$ such that $|\beta'| = |\alpha|$, and $\beta', \gamma \in \mathbb{F}_+$, then
\begin{equation*}
    0 \neq T_\alpha^*T_\beta = T_\alpha^* T_{\beta'} T_\gamma \implies T_\alpha^* T_{\beta'} \neq 0.
\end{equation*}
By the claim 2 we have that $\alpha = \beta'$, hence $\alpha^{-1} \beta = \gamma \in \mathbb{F}_+ \subseteq \mathbb{F}_+\cup \mathbb{F}_+^{-1}$. Note that for $|\alpha| \geq |\beta|$ we obtain $\alpha^{-1} \beta \in \mathbb{F}_+^{-1}$. The claim 3 is proved.

Let $g \in \mathbb{F}$ and consider $e(g) = T_g T_g^*$. Observe that $T_\alpha$ is a partial isometry for $\alpha \in \mathbb{F}_+$. Indeed,
\begin{equation*}
    T_\alpha T_\alpha^* T_\alpha \stackrel{(\dagger)}{=} \Lambda T_\alpha T_{\alpha_{|\alpha|-1}}^* T_{\alpha_{|\alpha|-1}} = \Lambda T_\alpha,
\end{equation*}
where in $(\dagger)$ we used the claim 1 and in the last equality is justified by the fact that $T_{\alpha_{|\alpha|-1}}$ is a partial isometry, since $\Lambda \in \{0,1\}$. Consequently, $e(\alpha)$ is a projection and hence it is self-adjoint and idempotent.

\textbf{Claim 4:} For all $\alpha, \beta \in \mathbb{F}_+$ the operators $e(\alpha)$ and $e(\beta)$ commute.

\textbf{Proof of the claim 4:} the proof is divided in two cases:

\begin{itemize}
    \item[$(i)$] $\alpha^{-1}\beta \notin \mathbb{F}_+\cup \mathbb{F}_+^{-1}$. In this case we have by the claim 3 that
    \begin{equation*}
        e(\alpha)e(\beta) = T_{\alpha}T_{\alpha}^*T_{\beta}T_{\beta}^* = 0,
    \end{equation*}
    and $e(\beta)e(\alpha)=0$ by the similar argument.
    \item[$(ii)$] $\alpha^{-1}\beta \in \mathbb{F}_+\cup \mathbb{F}_+^{-1}$. W.l.o.g. write $\alpha^{-1}\beta = \gamma \in \mathbb{F}_+$. We have that
    \begin{align*}
        e(\alpha)e(\beta) &= T_{\alpha}T_{\alpha}^*T_{\alpha\gamma}T_{\alpha\gamma}^* = T_{\alpha}T_{\alpha}^*T_{\alpha}T_{\gamma}T_{\gamma}^*T_\alpha^* \stackrel{(*)}{=} T_{\alpha}T_{\gamma}T_{\gamma}^*T_\alpha^* \\
        &\stackrel{(**)}{=} T_{\alpha}T_{\gamma}T_{\gamma}^*T_\alpha^*T_{\alpha}T_{\alpha}^* = T_{\beta}T_{\beta}^*T_{\alpha}T_\alpha^* = e(\beta)e(\alpha).
    \end{align*}
    For the equalities $(*)$ and $(**)$ we used the fact that $T_\alpha$ is a partial isometry. The proof is analoguous when $\gamma \in \mathbb{F}_+^{-1}$. 
\end{itemize}
The claim 4 is proved.

\textbf{Claim 5:} For every $x \in S$ and $\alpha \in \mathbb{F}_+$, the operators $Q_x$ and $T_\alpha QT_\alpha^*$ commute, where $Q$ is either the identity operator or the initial projection $Q_i$ of one of the isometries $T_i$.

\textbf{Proof of the claim 5:} the case $|\alpha|=0$ is a direct consequence of (EL1). For $|\alpha|>0$, write $\alpha = y \alpha'$ with $y \in S$ and $\alpha' \in \mathbb{F}_+$ and note that (EL3) gives
\begin{align*}
    Q_x T_\alpha = Q_x T_y T_{\alpha'} = A(x,y) T_{\alpha'},
\end{align*}
and by applying the involution on the identity above we get $T_\alpha^* Q_x = A(x,y) T_{\alpha'}^*$. Therefore we obtain
\begin{equation*}
    Q_x T_\alpha QT_\alpha^* = A(x,y) T_{\alpha'} QT_\alpha^*= T_\alpha QT_\alpha^* Q_x,
\end{equation*}
and the claim is proved. 

\textbf{Claim 6:} For every $g,h \in \mathbb{F}$ the operators $e(g)$ and $e(h)$ commute.

\textbf{Proof of the claim 6:} by Lemma \ref{lemma:g_non_zero}, we may assume that $g = \mu \nu^{-1}$ and $h=\alpha\beta^{-1}$, with $\mu, \nu, \alpha, \beta$ being positive. Also, there are no problems if we assume that $|g| = |\mu|+|\nu|$ and $|h| = |\alpha|+|\beta|$. It follows that
\begin{equation*}
    e(g) = T_\mu T_\nu^*T_\nu T_\mu^* \quad \text{and} \quad e(h) = T_\alpha T_\beta^*T_\beta T_\alpha^*.
\end{equation*}
Observe that $e(g)e(h) = e(h)e(g)=0$ if $T_\mu^* T_\alpha = 0$, so suppose from now that $T_\mu^* T_\alpha \neq 0$, which implies that $\mu^{-1}\alpha \in \mathbb{F}_+\cup \mathbb{F}_+^{-1}$, by the claim 3. So w.l.o.g. we may assume that $\mu^{-1}\alpha = \gamma \in \mathbb{F}_+$.

Now, we use the notation $Q=T_\nu^*T_\nu$. By the claim 1, if $|\nu|>0$ then $Q$ is zero or equal to some $Q_x$, $x \in S$. On the other hand, if $\nu=e$ then $Q=1$. The same is valid for $Q' = T_\beta^* T_\beta$ and $Q''=T_\mu^*T_\mu$. We have that
\begin{align*}
    e(g)e(h) &= T_\mu T_\nu^*T_\nu T_\mu^*T_{\mu\gamma}T_\beta^*T_\beta T_{\mu\gamma}^* = T_\mu T_\nu^*T_\nu T_\mu^*T_{\mu}T_\gamma T_\beta^*T_\beta T_\gamma^*T_{\mu}^* = T_\mu Q Q'' T_\gamma Q' T_\gamma^*T_{\mu}^*\\
    &\stackrel{(\heartsuit)}{=} T_\mu T_\gamma Q' T_\gamma^*Q Q''T_{\mu}^* \stackrel{(\ddagger)}{=} T_\mu T_\gamma Q' T_\gamma^*Q''QT_{\mu}^* = T_\mu T_\gamma T_\beta^* T_\beta T_\gamma^*T_\mu^*T_\mu T_\nu^*T_\nu T_{\mu}^*\\ 
    &= T_{\mu\gamma} T_\beta^* T_\beta T_{\mu\gamma}^*T_\mu T_\nu^*T_\nu T_{\mu}^* = e(h)e(g),
\end{align*}
where in $(\heartsuit)$ we used the claim 5 and in $(\ddagger)$ we used (EL1). The claim 6 is proved. 

Finally, we prove (PR1), i.e.,
\begin{equation*}
    T_g T_h T_{h^{-1}} = T_{gh}T_{h^{-1}}, \quad \forall g,h \in \mathbb{F},
\end{equation*}
by induction in $|g|+|h|$. The result is obvious for $|g| =0$ or $|h| =0$. Otherwise, write $g = g'x$ and $h=yh'$, where $x,y \in S\cup S^{-1}$, $|g|=|g'|+1$ and $|h|=|h'|+1$. Suppose as induction hypothesis that the result follows for all $|g|+|h| \leq n-1$, $n>2$. If $|g|+|h|=n$, we analyze the only two following possible cases:

\begin{itemize}
    \item[$(a)$] $x \neq y^{-1}$. In this case we have that $|gh|=|g|+|h|$, and by the orthogonality it follows that $T_{gh}=T_gT_h$, and hence
    \begin{equation*}
        T_g T_h T_{h^{-1}} = T_{gh}T_{h^{-1}};
    \end{equation*}
    \item[$(b)$] $x = y^{-1}$. In this case we have that
    \begin{align*}
        T_g T_h T_{h^{-1}} &= T_{g'x} T_{x^{-1}h'} T_{(h')^{-1}x} = T_{g'}T_x T_{x}^*T_{h'} T_{h'}^* T_{x} = T_{g'}e(x) e(h') T_{x} \stackrel{(\clubsuit)}{=} T_{g'} e(h') e(x) T_{x} \\
        &= T_{g'} T_{h'} T_{h'}^*T_x T_{x}^* T_{x} \stackrel{(\star)}{=} T_{g'} T_{h'} T_{h'}^*T_x \stackrel{(\star \star)}{=} T_{g'h'} T_{h'}^*T_x = T_{gh} T_{h^{-1}},
    \end{align*}
    where in $(\clubsuit)$ we used the claim 6, in $(\star)$ the fact that $T_x$ is a partial isometry, and in $(\star \star)$ the induction hypothesis.
\end{itemize}
\end{proof}

From now on, we denote the projections $e(g)$ by $e_g$.

\begin{proposition}\label{prop:D_A_isomorphic_e_g} $\widetilde{\mathcal{D}}_A \simeq C^*(\{e_g:g \in \mathbb{F}\})$.
\end{proposition}

\begin{proof} The main idea of the proof is to show that the faithful representation of the $C^*$-algebra $\widetilde{\mathcal{D}}_A$ in $\mathfrak{B}(\ell^2(\Sigma_A))$ coincides with the $C^*$-subalgebra $\mathfrak{U} = C^*(\{e_g:g \in \mathbb{F}\})$ contained in $\mathfrak{B}(\ell^2(\Sigma_A))$, which implies that they are isomorphic. We will show that the terms $T_\alpha \left(\prod_{i \in F} Q_i\right) T_\alpha^*$ can be written as terms in $\mathfrak{U}$ and conversely that the terms $e_g$ can be written as terms in $\widetilde{\mathcal{D}}_A$. 

Let $g \in \mathbb{F}$. W.l.o.g. we may assume that $T_g \neq 0$. By Lemma \ref{lemma:g_non_zero} we have that $g = \alpha \beta^{-1}$ such that $\alpha, \beta \in \mathbb{F}_+$, with $\alpha = \alpha_0 \cdots \alpha_t$ and $\beta = \beta_0 \cdots \beta_u$ for the respective cases that $\alpha$ and $\beta$ are not $e$. Assume that $g$ is already its reduced form, i.e., $g = \alpha \beta^{-1}, \alpha, \beta^{-1}$ or $e$. By the axiom $(EL3)$ we have
\begin{align*}
    e_g &= T_{\alpha} T_{\beta}^* T_{\beta} T_{\alpha}^* = T_{\alpha} T_{\beta_u}^*\cdots T_{\beta_1}^* Q_{\beta_0} T_{\beta_1} \cdots T_{\beta_{u}} T_{\alpha}^* 
    = T_{\alpha} T_{\beta_{u}}^*\cdots T_{\beta_1}^* T_{\beta_1} T_{\beta_{u}} T_{\alpha}^*
    = \cdots \\
    &= T_{\alpha} T_{\beta_{u}}^* T_{\beta_{u}} T_{\alpha}^* = T_{\alpha} Q_{\beta_u} T_{\alpha}^* \in \widetilde{\mathcal{D}}_A,
\end{align*}
and we conclude that $\mathfrak{U} \subseteq \widetilde{\mathcal{D}}_A$. The result above is similar for $\alpha =e$ or $\beta = e$. For the opposite inclusion, let $\alpha \in \mathbb{F}_+$ admissible or $\alpha = e$ in its reduced form, and $F \subseteq \mathbb{N}$ finite. If $\alpha = e$, we have that 
\begin{equation*}
    T_\alpha \left(\prod_{i \in F} Q_i \right) T_\alpha^* = \prod_{i \in F} Q_i = \prod_{i \in F} e_{i^{-1}} \in  \mathfrak{U}.
\end{equation*}
On other hand, if $F = \emptyset$ and $\alpha \neq e$ is an admissible word, we have:
\begin{equation*}
    T_\alpha \left(\prod_{i \in F} Q_i \right) T_\alpha^* = T_\alpha  T_\alpha^* =  e_{\alpha} \in \mathfrak{U}.
\end{equation*}
Now, suppose that $\alpha = \alpha_0 \cdots \alpha_{t} \neq e$ reduced and $F \neq \emptyset$. We will prove that 
\begin{equation}\label{eq:prod_e_alpha}
    T_\alpha \left(\prod_{j \in F} Q_j \right) T_\alpha^* = e_\alpha \prod_{\substack{j \in F \\ j \neq \alpha_{t}}} e_{\alpha j^{-1}}
\end{equation}
by induction in $|F|$. If $|F| = 1$ we have that
\begin{align*}
    T_\alpha \left(\prod_{j \in F} Q_j \right) T_\alpha^* = T_\alpha T_{i^{-1}} T_i T_{\alpha^{-1}},
\end{align*}
where $i \in F$. If $\alpha_t = i$, since $e_i$ is a projection we get
\begin{equation*}
    T_\alpha T_{i^{-1}} T_i T_{\alpha^{-1}} = T_{\alpha'} T_i T_{i^{-1}} T_i T_{i^{-1}} T_{(\alpha')^{-1}} = T_{\alpha'}  T_i T_{i^{-1}} T_{(\alpha')^{-1}} = e_{\alpha},
\end{equation*}
where $\alpha' = e$ if $t=0$ and $\alpha' = \alpha_0 \cdots \alpha_{t - 1}$ if $t>0$. On other hand, if $\alpha_t \neq i$, it follows that
\begin{equation*}
    T_\alpha T_{i^{-1}} T_i T_{\alpha^{-1}} = T_{\alpha i^{-1}}  T_{(\alpha i^{-1})^{-1}} = e_{\alpha i^{-1}},
\end{equation*}
and it is easy to use (PR1) to verify that $e_\alpha e_{\alpha i^{-1}} = e_{\alpha i^{-1}}$. So, anyway we have that \eqref{eq:prod_e_alpha} is true for $|F|=1$. Now, suppose the validity of \eqref{eq:prod_e_alpha} for $|F| = n-1$, $n>1$. For $|F| = n$, fix $k \in F$. One can use (PR1) and the claim 1 of the proof of Proposition \ref{prop:T_partial_rep} in order to obtain
\begin{align*}
    T_\alpha \left(\prod_{i \in F} Q_i \right) T_\alpha^* &= T_\alpha T_\alpha^* T_\alpha \left(\prod_{i \in F} Q_i \right) T_\alpha^* =   T_\alpha Q_{\alpha_t} \left(\prod_{i \in F} Q_i \right) T_\alpha^* = T_\alpha Q_k Q_{\alpha_t} \left(\prod_{i \in F\setminus\{k\}} Q_i \right) T_\alpha^*\\ 
    &= T_\alpha Q_k T_\alpha^* T_\alpha \left(\prod_{i \in F\setminus\{k\}} Q_i \right) T_\alpha^* =  e_\alpha \left(\prod_{\substack{j \in \{k\}\\ j \neq \alpha_t}} e_{\alpha j^{-1}}\right) e_\alpha \left(\prod_{\substack{j \in F\setminus\{k\}\\ j \neq \alpha_t}} e_{\alpha j^{-1}}\right), 
\end{align*}
where in the last equality we used \eqref{eq:prod_e_alpha} for $|F|=1$ and the induction hypothesis. Since the $e_g$'s commute and they are projections, we conclude that
\begin{align*}
    T_\alpha \left(\prod_{i \in F} Q_i \right) T_\alpha^* &= e_\alpha \left(\prod_{\substack{j \in F\\ j \neq \alpha_{|\alpha|-1}}} e_{\alpha j^{-1}}\right),
\end{align*}
as we wished to prove. The direct consequence of the results above is that $T_\alpha \left(\prod_{i \in F} Q_i \right) T_\alpha^* \in \mathfrak{U}$, for all $\alpha$ admissible finite word and for every $F \subseteq S$ finite. Then,
\begin{equation*}
    \spann \left\{T_\alpha\left(\prod_{i \in F}Q_i\right)T_\alpha^*: \alpha\text{ admissible }, 0\leq|\alpha|< \infty ,0\leq |F|< \infty \right\} \subseteq \mathfrak{U},
\end{equation*}
and since $\mathfrak{U}$ is a C$^*$-algebra we conclude that the closure of the left hand side of the relation above is still contained in $\mathfrak{U}$, i.e., $\widetilde{\mathcal{D}}_A \subseteq \mathfrak{U}$. The proof is complete. 
\end{proof}

Now we have all the necessary background to introduce and study the space $X_A$, and this is the main object studied in the next section.

\section{Generalized Countable Markov shifts}

The construction of the commutative C$^*$-algebras $\mathcal{D}_A$ and $\widetilde{\mathcal{D}}_A$ leads to the construction of the Generalized Markov shift space through the Gelfand Representation Theorem for C$^*$-algebras presented in Corollary \ref{corollary:Gelfand_C_star}, as the spectrum of these algebras, and it is defined next.

\begin{definition}[Generalized Markov shift space]\label{def:X_A} Given an irreducible transition matrix $A$ on the alphabet $\mathbb{N}$, the generalized Markov shift spaces are the sets
\begin{equation*}
    X_A := \text{spec}\,\mathcal{D}_A \quad \text{and} \quad \widetilde{X}_A := \text{spec}\,\widetilde{\mathcal{D}}_A,
\end{equation*}
both endowed by the weak$^*$ topology.
\end{definition}

\begin{remark}\label{remark:unital_vs_full_row_of_ones} We remind the reader that when when $\mathcal{O}_A$ is unital, then $\mathcal{D}_A$ is also unital. Consequently, $\widetilde{\mathcal{O}}_A = \mathcal{O}_A$ and then $\widetilde{\mathcal{D}}_A = \mathcal{D}_A$ and then $\widetilde{X}_A = X_A$. Besides that, $X_A$ is locally compact and $\widetilde{X}_A$ is always compact. In particular, for every matrix with a full row of $1$'s gives $X_A$ compact. In fact, if there exists a symbol $j \in \mathbb{N}$ such that $A(j,n) = 1$ for every $n \in \mathbb{N}$, then $Q_j = T_j^* T_j = 1 \in \mathcal{D}_A$ and therefore $X_A$ is compact. 
\end{remark}

\begin{remark}\label{remark:subalgebra_character} Given a commutative algebra $B$ and $J$ a closed self-adjoint two-sided ideal of $B$, then the set of characters of $J$ is $\widehat{J} = \{\varphi \in \widehat{B}: \varphi\restriction_J \neq 0\}$. Therefore,
\begin{equation*}
    \widehat{\mathcal{D}}_A = \{\varphi\restriction_{\mathcal{D}_A}:\varphi \in \widehat{\widetilde{D}}_A, \quad \varphi\restriction_{\mathcal{D}_A} \neq 0\} = X_A.
\end{equation*}
It is straightforward to conclude that $X_A \subseteq \widetilde{X}_A$.
\end{remark}

\begin{proposition}\label{prop:X_A_tilde_X_A} $X_A = \widetilde{X}_A \setminus \{\varphi_0\}$, where $\varphi_0$ is the character in $\widetilde{X}_A$ given by
\begin{equation*}
    \varphi_0(e_g) := \begin{cases}
                        1, \quad \text{if } g=e;\\
                        0, \quad \text{otherwise.}
                      \end{cases}
\end{equation*}
\end{proposition}

\begin{proof} Suppose that $\varphi_0 \in X_A$. Then for $g \neq e$ we have
\begin{equation*}
    \varphi_0(T_gT_g^*) = \varphi_0(e_g) = 0 \implies \varphi_0\restriction_{\mathcal{D}_A} = 0,
\end{equation*}
which is not a character for the algebra $\mathcal{D}_A$. By Remark \ref{remark:subalgebra_character} we have that $X_A \subseteq \widetilde{X}_A\setminus \{\varphi_0\}$. Conversely, for given $\varphi \in \widetilde{X}_A$ such that $\varphi \neq \varphi_0$, by Proposition \ref{prop:D_A_isomorphic_e_g} there exists $g \neq e$ that we have $\varphi(e_g)=1$. Then $\varphi\restriction_{\mathcal{D}_A} \neq 0$. Then  $\varphi\restriction_{\mathcal{D}_A}$ is a character for $\varphi\restriction_{\mathcal{D}_A}$ by Remark \ref{remark:subalgebra_character}, hence $\widetilde{X}_A\setminus \{\varphi_0\} \subseteq X_A$.
\end{proof}

\begin{remark} Observe that if $\mathcal{D}_A$ is not unital, then $\widetilde{X}_A$ is the Alexandrov compactification of $X_A$, where the compactification point of $X_A$ is the character $\varphi_0$. 
\end{remark}

The next result connects equivalent conditions to $\mathcal{O}_A$ be unital.

\begin{theorem}[Theorem 8.5 of \cite{EL1999}]\label{thm:O_A_unital} The following are equivalent:
\begin{itemize}
    \item[$(i)$] $\mathcal{O}_A=\mathcal{\widetilde{O}}_A$;
    \item[$(ii)$] $\mathcal{O_A}$ is unital;
    \item[$(iii)$] $\varphi_0 \notin \widetilde{X}_A$;
    \item[$(iv)$] On the space $\{0,1\}^S$ (column space of the matrix $A$, endowed with the product topology), the null vector is not an accumulation point\footnote{We recall that null vector is an accumulation point of the column space if for every neighborhood $V$ of $\{0\}_{s \in S}$ there exists an infinite set of $j$'s such that the column $c_j$ is in $V$.} of the columns of $A$
    \item[$(v)$] There is $Y\subseteq S$ such that $A(\emptyset,Y,j)$ has finite support on $j$
\end{itemize}
\end{theorem}

At this point, the name \emph{Generalized Markov shift space} must be justified: the Markov shift space is not only included in $X_A$ but also is dense subset, and moreover, $\Sigma_A$ coincides with $X_A$ when it is locally compact, that is, when $A$ is row-finite. All these facts are presented and proven next.

\begin{definition}\label{def:i_1} Given $A$ an irreducible matrix and its corresponding Markov shift space $\Sigma_A$, we define the map $i_1: \Sigma_A \xhookrightarrow{} X_A$, $\Sigma_A \ni \omega \mapsto \varphi_{\omega}\restriction_{\mathcal{D}_A} \in X_A,$ where $\varphi_{\omega}$ is the evaluation map
\begin{equation}\label{eq:varphi_omega_definition}
    \varphi_\omega(R)= (R\delta_\omega,\delta_\omega), \quad R \in \mathfrak{B}(l^2(\Sigma_A)).
\end{equation}
will be denoted by $i_1$ and the context will let evident if the codomain is $X_A$ or $\widetilde{X}_A$.
\end{definition}

\begin{lemma}\label{lemma:i_1_inj_cont} The map $i_1$ is an injective continuous function.
\end{lemma}

\begin{proof} The injectivity is straightforward: let $\omega,\eta \in \Sigma_A$ such that $\varphi_\omega = \varphi_\eta$. If $\omega \neq \eta$, then there exists $n \in \mathbb{N}$ such that $\omega_0...\omega_{n-1} \neq \eta_0...\eta_{n-1}$. By taking $R = T_{\omega_0...\omega_{n-1}}T_{\omega_0...\omega_{n-1}}^* \in \mathcal{D}_A$ we have that
\begin{align*}
    \varphi_\omega(T_{\omega_0...\omega_{n-1}}T_{\omega_0...\omega_{n-1}}^*) &= (T_{\omega_0...\omega_{n-1}}T_{\omega_0...\omega_{n-1}}^* \delta_\omega, \delta_\omega) = 1 \text{ and}\\
    \varphi_\eta(T_{\omega_0...\omega_{n-1}}T_{\omega_0...\omega_{n-1}}^*) &= (T_{\omega_0...\omega_{n-1}}T_{\omega_0...\omega_{n-1}}^* \delta_\eta, \delta_\eta) = 0,
\end{align*}
and hence we have that $\varphi_\omega \neq \varphi_\eta$, a contradiction, and we conclude that $i_1$ is injective. Now we prove its continuity. It is sufficient to prove that $\varphi_{\omega^n}(R)\to \varphi_\omega(R)$ for every $R$ in the form $R=T_\alpha(\prod_{i\in F}Q_i)T_\alpha^*$, since these are generators of $\mathcal{D}_A$ and the elements of the spectrum ar $*$-homomorphisms\footnote{Notice that the characters are continuous functions because they are $*$-homomorphisms between $C^*$-algebras.}. Let $(\omega^n)_{n \in \mathbb{N}}$ be a sequence in $\Sigma_A$ converging to $\omega \in \Sigma_A$. Take $R=T_\alpha(\prod_{i\in F}Q_i)T_\alpha^*$ an arbitrary generator of $\mathcal{D}_A$. We have that
\begin{align*}
    \varphi_{\omega_n}(R)&=
    \begin{cases}
        1,\quad \text{if $\omega^n\in [\alpha]$ and $A(i,\omega^n_{|\alpha|})=1 \forall i\in F$}, \\
        0,\quad \text{otherwise};
    \end{cases}\\
    \varphi_{\omega}(R)&=
    \begin{cases}
        1,\quad \text{if $\omega\in [\alpha]$ and $A(i,\omega_{|\alpha|})=1\forall i\in F$}, \\
        0,\quad \text{otherwise}.
    \end{cases}
\end{align*}
For large enough $m \in \mathbb{N}$, we have that $\omega_0^n...\omega^n_{|\alpha|}=\omega_0...\omega_{|\alpha|}$ for every $n > m$, and then we have $\varphi_{\omega^n}(R)=\varphi_{\omega}(R)$. Since $F$ and $\alpha$ are arbitrary, we have for every generator $R \in \mathcal{D}_A$ that
\begin{equation*}
    \varphi_{\omega^n}(R)\to \varphi_\omega (R),
\end{equation*}
that is,  $\varphi_{\omega^n}\stackrel{w^*}{\to} \varphi_\omega$. Therefore $i_1$ is continuous.
\end{proof}

The lemma above shows that in fact $\Sigma_A$ is included in $X_A$. Besides that, the inclusion is continuous. From now on, we omit the notations of restriction to $\widetilde{\mathcal{D}}_A$ and $\mathcal{D}_A$ on $\varphi_\omega$. The next result we prove that $\Sigma_A$, seen as a subset of $X_A$ (or $\widetilde{X}_A$), is weak$^*$-dense.

\begin{proposition}\label{prop:i_1_Sigma_A_dense} $i_1(\Sigma_A)$ is dense in $X_A$ and in $\widetilde{X}_A$ (when considered).
\end{proposition}

\begin{proof} We claim that the elements of $i_1(\Sigma_A)$ separate points in $\mathcal{D}_A$. Suppose that there exists $R \in \mathcal{D}_A$ such that $\varphi_\omega (R) = 0$ for every $\omega \in \Sigma_A$. By Remark \ref{remark:D_A_subalgebra_on_diagonal_operators}, $R$ is a diagonal operator. Suppose that $\varphi_\omega(R) = 0$ for every $\omega \in \Sigma_A$, then
\begin{equation*}
    \varphi_\omega(R) = (R \delta_\omega, \delta_\omega) = 0,
\end{equation*}
then $\spann\{\delta_\omega:\omega \in \Sigma_A\} \subseteq \ker R$, and therefore $\overline{\spann\{\delta_\omega:\omega \in \Sigma_A\}} \subseteq \ker R$ since $\ker R$ is closed. We conclude that $R =0$ and the claim is proved. By Theorem \ref{thm:dense_character_general} it follows that $i_1(\Sigma_A) = \{\varphi_\omega\}_{\omega \in \Sigma_A}$ is dense in $X_A$. Since $X_A$ is dense in $\widetilde{X}_A$, we have that $i_1(\Sigma_A)$ is also dense on $\widetilde{X}_A$. 
\end{proof}

Now, we prove that if $A$ is row-finite, then $\Sigma_A = X_A$, that is, $i_1$ is a surjection. 

\begin{proposition}\label{prop:Sigma_A_locally_compact_coincides_with_X_A} If $\Sigma_A$ is locally compact, then $i_1$ is surjective. 
\end{proposition}

\begin{proof} Let $\varphi \in X_A$. By Proposition \ref{prop:i_1_Sigma_A_dense}, there exists a sequence $(\varphi_{\omega^n})_{n\in \mathbb{N}}$ in $i_1(\Sigma_A)$, where $\omega^n \in \Sigma_A$ for every $n \in \mathbb{N}$, such that $\varphi_{\omega^n}\stackrel{w^*}{\to} \varphi$. Since $\varphi$ is a character, we have necessarily that $\varphi \neq 0$, and by Proposition \ref{prop:D_A_isomorphic_e_g}, there exists $g \in \mathbb{F}$ such that $\varphi(e_g) \neq 0$. W.l.o.g. we may assume $g = \alpha$ positive admissible word\footnote{It is straightforward from the proof of Proposition \ref{prop:D_A_isomorphic_e_g}.}, and Proposition \ref{prop:e_g_is_projection} implies that $\varphi(e_\alpha) = \varphi(e_\alpha^2)$ and hence $\varphi(e_\alpha) = 1$. Then there exists $N \in \mathbb{N}$ such that for every $n > N$ we have
\begin{equation*}
    \varphi_{\omega^n}(e_\alpha) = 1,
\end{equation*}
that is,
\begin{equation*}
    (T_\alpha T_\alpha^* \delta_{\omega^n},\delta_{\omega^n} ) = 1,
\end{equation*}
and hence for every $n >N$ we have that $\omega^n \in [\alpha]$. On the other hand, since $\Sigma_A$ is locally compact, Proposition \ref{prop:local_compactness_Markov} grants that $[\alpha]$ is compact, and so the sequence $(\varphi_{\omega^n})_{n\in \mathbb{N}}$ has a convergent subsequence $(\varphi_{\omega^{n_p}})_{p\in \mathbb{N}}$ in $[\alpha]$, converging to some $\omega \in \Sigma_A$. By the continuity of $i_1$ proved in Lemma \ref{lemma:i_1_inj_cont} we have that $\lim_p \varphi_{\omega^{n_p}} = \varphi_\omega$
\begin{equation*}
    \varphi = \lim_n \varphi_{\omega^n} = \lim_p \varphi_{\omega^{n_p}} = \varphi_\omega
\end{equation*}
and therefore $i_1$ is surjective.
\end{proof}

\begin{remark} At this point due to the density of $i_1(\Sigma_A)$ on $X_A$, one could ask if the inclusion $i_1$ has some familiar topological feature. When $\Sigma_A$ is compact, then $i_1$ is simply a homeomorphism between compact metric spaces. If $\Sigma_A$ is non-compact and locally compact, one could ask if $i_1$ would be the compactification of $\Sigma_A$ with a Martin Boundary similarly as in O. Shwartz' paper \cite{Shwartz2019}, which is a construction for locally compact spaces. In this case the answer is `no', since the Martin Boundary adds extra points to the space. Futhermore, it cannot be any compactification, since in this case $X_A = \Sigma_A$ and therefore it is not compact. Maybe the most interesting case is $\Sigma_A$ not locally compact and $X_A$ compact, as it happens, for instance, when $A$ has a full row of $1$'s, as presented in Remark \ref{remark:unital_vs_full_row_of_ones}. In this case we have that $i_1$ is in fact a compactification of $\Sigma_A$, and certainly it is not the Stone-C\v{e}ch compactification \cite{Engelking1989} since both $\Sigma_A$ and $X_A$ are metric spaces and $\Sigma_A$ is not compact.
\end{remark}

Proposition \ref{prop:D_A_isomorphic_e_g} gives us a easier way to see $X_A$ (repect. $\widetilde{X}_A$). Given $\varphi \in X_A$ or $\widetilde{X}_A$, we can determine its image completely simply by taking its values on the generators $(e_g)_{g \in \mathbb{F}}$. Since $e_g$ is idempotent for any $g$, it follows that $\varphi(e_g) \in \{0,1\}$. By endowing $\{0,1\}^{\mathbb{F}}$ is endowed the product topology of the discrete topology in $\{0,1\}$, we introduce now the mapping that realizes the characters of the generalized Markov shift spaces as configurations on the Cayley tree generated by $\mathbb{F}$.

\begin{definition} Let $A$ be a transition matrix and consider its respective generalized Markov shift space $X_A$. Define the map $i_2: X_A \to \{0,1\}^{\mathbb{F}}$ (repect. for $\widetilde{X}_A$) given by $i_2(\varphi) := \xi$, where
\begin{align*}
    \xi_g = \pi_g(\xi) := \varphi(e_g), \quad g \in \mathbb{F}.
\end{align*}
An element $\xi \in \{0,1\}^\mathbb{F}$ is called a configuration. We say that a configuration is filled in $g \in \mathbb{F}$ when $\xi_g = 1$.  
\end{definition}

\begin{remark} It is important to notice the difference between how the configurations of $\{0,1\}^\mathbb{F}$ are presented here and in \cite{EL1999}. Here, for a given configuration $\xi$ and a word $g \in \mathbb{F}$ we will use $\xi_g = 1$ instead of $g \in \xi$ as used in \cite{EL1999}. Our choice of notation is motivated by the Markov shift notation for sequences in $\Sigma_A$.
\end{remark}

The next proposition shows that the topological properties of $X_A$ are preserved under $i_2$.

\begin{proposition}\label{prop:i_2_top_embedding} The inclusion $i_2$ is a topological embedding.
\end{proposition}

\begin{proof} First we show that $i_2$ is injective and continuous. The injectivity is straightforward: given $\varphi, \psi \in \widetilde{X}_A$ s.t. $i_2(\varphi) = i_2(\psi)$, it follows that $\varphi(e_g)=\psi(e_g)$ for all $g \in \mathbb{F}$. Since $\{e_g:g \in \mathbb{F}\}$ generates $\widetilde{\mathcal{D}}_A$, there exists an unique *-homomorphism which extends the function
\begin{equation*}
    e_g \mapsto \varphi(e_g), \quad g \in \mathbb{F}
\end{equation*}
and such uniqueness implies that $\varphi = \psi$, i.e., $i_2$ is injective. For the continuity, let $(\varphi_n)_{n \in \mathbb{N}}$ be a sequence in $\widetilde{X}_A$ such that $\varphi_n \rightharpoonup \varphi \in \widetilde{X}_A$. The topology in $\widetilde{X}_A$ is the weak$^*$ topology and it is metrizable. We have the following equivalences:
\begin{equation*}
    \varphi_n \rightharpoonup \varphi \iff \varphi_n(e_g) \rightharpoonup \varphi(e_g), \quad \forall g \in \mathbb{F} \iff \{\varphi_n(e_g)\}_{g \in \mathbb{F}} \to \{\varphi(e_g)\}_{g \in \mathbb{F}}, 
\end{equation*}
where the last convergence above is the precisely the one in the product topology. We observe that every injective continuous map from a compact space to a Hausdorff space is a topological embedding. And this is exactly the case we are dealing: $\widetilde{X}_A$ is compact, $\{0,1\}^\mathbb{F}$ is Hausdorff and $i_2$ is continuous and injective. We conclude that $i_2$ is a topological embedding. The proof is the same for $X_A$ when it is compact, and for $X_A$ non-compact, it follows straightforward by taking the restriction on $i_2$ from the $\widetilde{X}_A$ case. %talvez botar esse resultado no apendice
\end{proof}

Now we can see the characters in $X_A$ and $\widetilde{X}_A$ as configurations in the Cayley graph generated by $\mathbb{F}$, where the words $g$ are the vertices and the oriented edges multiply by the right the word in the source vertex by a letter $a$, leading to the range vertex. Of course, the inverse way of the edge represents a multiplication by the inverse of the correspondent letter $a$.
\begin{figure}[H]
\begin{center}
		\begin{tikzpicture}[scale=1.5,decoration={markings, mark=at position 0.5 with {\arrow{>}}}]
		\node[circle, draw=black, fill=black, inner sep=1pt,minimum size=1pt] (0) at (0,0) {};
		\node[circle, draw=black, fill=black, inner sep=1pt,minimum size=1pt] (1) at (3,0) {};
    	\draw[postaction={decorate}, >=stealth] (0)  to (1);
   	    \node[above] at (0,0) {$g$};
   	    \node[above] at (1.5,0) {$a$};
        \node[above] at (3,0) {$ga$};
		\end{tikzpicture}
	\end{center}
\end{figure}
 
The next corollary is straightforward.

\begin{corollary}\label{cor:i_2_i_1_Sigma_A_dense} $i_2 \circ i_1(\Sigma_A)$ is dense in $i_2(X_A)$. Moreover, if $\mathcal{O}_A$ is not unital, then $i_2 \circ i_1(\Sigma_A)$ is dense in $i_2(\widetilde{X}_A)$.
\end{corollary}

From now we will describe $X_A$ (respect. $\widetilde{X}_A$) by its copy $i_2(X_A)$ (respect. $i_2(\widetilde{X}_A)$) contained in $\{0,1\}^\mathbb{F}$, except when the map $i_2$ is explicitly needed.

\begin{definition} \label{def:convex_configurations} A configuration $\xi$ is said to be convex if for any two $a,b \in \mathbb{F}$ filled in $\xi$, the whole shortest path in the Cayley tree between $a$ and $b$ is also filled in $\xi$. 
\end{definition}

\begin{remark} Note that a configuration $\xi$ is convex and it is filled in $e$ if and only if, for every $g \in \mathbb{F}$ filled in $\xi$, the subwords of $g$ are also filled.
\end{remark}

Now, we present some properties of the configurations in $X_A$ (respect $\widetilde{X}_A$), and for such objective we define the following set, as it is done in \cite{EL1999}.

\begin{definition}\label{def:set_Omega_A_tau} For a given transition matrix $A$, we define the set
\begin{equation} \label{eq:Omega_A_tau}
    \Omega_A^{\tau} = \left\{ \begin{array}{l}
         \xi \in \{0,1\}^\mathbb{F}: \xi_e = 1, \text{ } \xi \text{ convex}, \\
         \text{if $\xi_\omega=1$, then there exists at most one $y \in \mathbb{N}$ s.t. $\xi_{\omega y}=1$}, \\
         \text{if $\xi_\omega= \xi_{\omega y}=1$, $y \in \mathbb{N}$, then for all $x \in \mathbb{N}$ } (\xi_{\omega x^{-1}}=1 \iff A(x,y)=1)
    \end{array}\right\}.
\end{equation} 
\end{definition}

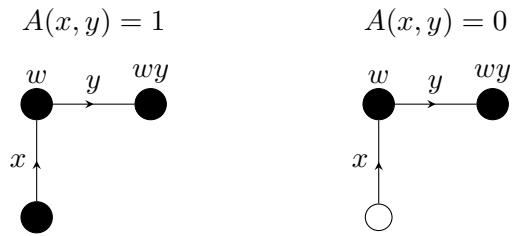
\begin{figure}[H]
\begin{center}
		\begin{tikzpicture}[scale=1.5,decoration={markings, mark=at position 0.5 with {\arrow{>}}}]
		\node[circle, draw=black, fill=black, inner sep=1pt,minimum size=5pt] (0) at (0,0) {0};
		\node[circle, draw=black, fill=black, inner sep=1pt,minimum size=5pt] (1) at (0,1) {1};
		\node[circle, draw=black, fill=black, inner sep=1pt,minimum size=5pt] (2) at (1,1) {2};
    	\draw[postaction={decorate}, >=stealth] (0)  to (1);
    	\draw[postaction={decorate}, >=stealth] (1)  to (2);
   		\node[left] at (0,0.5) {$x$};
   	    \node[above] at (0.5,1) {$y$};
   	    \node[above] at (0,1.1) {$w$};
        \node[above] at (1,1.1) {$wy$};
        \node[above] at (0.5,1.5) {$A(x,y)=1$};
        %%%%%%%%%%%%%%%%%%%%%%%%%%%%%%%%%%%%%%%%%%%%%%%%
       \node[circle, draw=black, fill=white, inner sep=1pt,minimum size=10pt] (3) at (3,0) {};
		\node[circle, draw=black, fill=black, inner sep=1pt,minimum size=5pt] (4) at (3,1) {4};
		\node[circle, draw=black, fill=black, inner sep=1pt,minimum size=5pt] (5) at (4,1) {5};
    	\draw[postaction={decorate},>=stealth] (3)  to (4);
    	\draw[postaction={decorate}, >=stealth] (4)  to (5);
   		\node[left] at (3,0.5) {$x$};
   	    \node[above] at (3.5,1) {$y$};
   	    \node[above] at (3,1.1) {$w$};
        \node[above] at (4,1.1) {$wy$};
        \node[above] at (3.5,1.5) {$A(x,y)=0$};
		\end{tikzpicture}
	\end{center}
	\caption{ Representation of the last condition of $\Omega_A^\tau$. The black dots represents that the configuration $\xi$ is filled. }
\end{figure}

\begin{remark} The set $\Omega_A^\tau$ generates the Toeplitz-Cuntz-Krieger algebra $\mathcal{TO}_A$. Theorem 4.6 of \cite{EL1999} presents a isomorphic correspondence between $\mathcal{TO}_A$ and $C(\Omega_A^\tau) \rtimes \mathbb{F}$.
\end{remark}

The set above is compact and it contains $X_A$ (respect. $\widetilde{X}_A$), and these two properties are proved next.

\begin{lemma} \label{lemma:Omega_A_tau_closed} $\Omega_A^{\tau}$ is a closed set.
\end{lemma}

\begin{proof} Let $(\xi^n)_\mathbb{N}$ be a sequence in $\Omega_A^{\tau}$ that converges to $\xi \in \{0,1\}^\mathbb{F}$, i.e.,
\begin{equation} \label{eq:convergence_xi_g}
    \forall g \in \mathbb{F}, \exists N_g \in \mathbb{N}:n > N_g \implies \xi^n_g = \xi_g.
\end{equation}
Now, let us prove that $\xi \in \Omega_A^{\tau}$.
\begin{enumerate}
    \item $\xi_e = 1$. It follows by applying \eqref{eq:convergence_xi_g} for $g=e$;
    \item $\xi$\textbf{ is convex.} By the previous condition we simply need to prove that if $\xi_g =1$, then $\xi_h=1$ for every $h$ subword of $g$. Indeed, let $\xi_g =1$ and $h$ be a subword of $g$. By \eqref{eq:convergence_xi_g} we have that for $n > N_g$ we have $\xi^n_g = \xi_g = 1$, hence $\xi^n_h = 1$ for all $n > N_g$, and therefore $\xi_h = 1$;
    \item \textbf{if $\xi_\omega=1$, then there exists at most one $y \in \mathbb{N}$ s.t. $\xi_{\omega y}=1$.} If $\xi_\omega = 1$, $\xi_{\omega x} = 1$ and $\xi_{\omega y} = 1$, then by \eqref{eq:convergence_xi_g} there exist $N_{\omega}, N_{\omega x}, N_{\omega y} \in \mathbb{N}$ such that $\mathbb{N}:n > N_g \implies \xi^n_g = \xi_g$, $g = \omega, \omega x, \omega y$. By taking $n > \max\{N_{\omega}, N_{\omega x}, N_{\omega y}\}$ we get that $\xi^n_{\omega y} = \xi^n_{\omega x} = 1$ and by definition of $\Omega_A^{\tau}$, we get that $x=y$;
    \item \textbf{if $\xi_\omega= \xi_{\omega y}=1$, $y \in \mathbb{N}$, then for all $x \in \mathbb{N}$ }$(\xi_{\omega x^{-1}}=1 \iff A(x,y)=1)$. Similarly as the previous condition, we can take $N = \max\{N_\omega,N_{\omega y},N_{\omega x^{-1}}\}$ and for $n >N$ we have that $\xi^n_g=\xi_g = 1$, for $g = \omega, \omega x^{-1}, \omega y$. We conclude that $\xi_{\omega x^{-1}} = 1 = \xi^n_{\omega x^{-1}}$ and the second equality happens if and only if $A(x,y) = 1$.
\end{enumerate}
\end{proof}

\begin{corollary}$\Omega_A^{\tau}$ is a compact subset of $\{0,1\}^\mathbb{F}$.
\end{corollary}    

\begin{proof} From Lemma \ref{lemma:Omega_A_tau_closed}, $\Omega_A^{\tau}$ is a closed subset of $\{0,1\}^\mathbb{F}$. On the other hand, $\{0,1\}^\mathbb{F}$ is the product space of compact space and then by the Tychonoff's Theorem it is compact. Since every closed subset of a compact space is also compact, and therefore $\Omega_A^{\tau}$ is compact. 
\end{proof}

\begin{proposition}\label{prop:X_A_in_Omega_tau} $i_2(\widetilde{X}_A) \subseteq \Omega_A^{\tau}$.
\end{proposition}

\begin{proof} It is sufficient to prove that $i_2(i_1(\Sigma_A)) \subseteq \Omega_A^{\tau}$ since $i_2(i_1(\Sigma_A))$ is dense in $i_2(\widetilde{X}_A)$, due to Corollary \ref{cor:i_2_i_1_Sigma_A_dense}, and $\Omega_A^{\tau}$ is closed, by Lemma\ref{lemma:Omega_A_tau_closed}, and therefore $i_2(\widetilde{X}_A) = \overline{i_2(i_1(\Sigma_A))} \subseteq \Omega_A^{\tau}$. By Lemma \ref{lemma:g_non_zero}, for any reduced word $g \in \mathbb{F}$ which is not in the form $g = \alpha \beta^{-1}$ for $\alpha, \beta \in \mathbb{F}_+$ (the case $g = e$ is included), we have that $e_g=0$. Given the character $\xi \in i_2(i_1(\Sigma_A))$, we have that $\xi = i_2(\varphi_\omega)$, $\omega \in \Sigma_A$, where
\begin{equation*}
    \varphi_\omega(R) = (R \delta_\omega,\delta_\omega), \quad \forall R \in \widetilde{\mathcal{D}}_A.
\end{equation*}
By direct checking we have for any $\nu \in \Sigma_A$ that
\begin{equation*}     e_{\alpha \beta^{-1}} \delta_\nu = \begin{cases}                                             \delta_\nu,\text{ if } \nu \in [\alpha]\text{ and } A(\beta_{|\beta|-1},\nu_{|\alpha|}) = 1;\\                                            0,\text{ otherwise;}                                       \end{cases}
\end{equation*}
where $|\alpha|$ is the length of $\alpha$ and $\alpha = \alpha_0\cdots\alpha_{|\alpha|}$ (with analogue conventions for $\beta$). Also, we conventioned that $[\alpha]=\Sigma_A$ for $|\alpha|=0$ and `$A(\beta_{|\beta|-1},\alpha_{|\alpha|}) = 1$' simply vanishes for $|\beta| = 0$. We conclude that 
\begin{equation}\label{eq:evaluation_character_generator}
    \xi_{\alpha \beta^{-1}} = i_2(\varphi_{\omega})_{\alpha\beta^{-1}} = \varphi_\omega(e_{\alpha \beta^{-1}}) =  (e_{\alpha \beta^{-1}} \delta_\omega,\delta_\omega) =                                                 \begin{cases}                                            1,\text{ if } \omega \in [\alpha]\text{ and } A(\beta_{|\beta|-1},\omega_{|\alpha|}) = 1;\\                                            0,\text{ otherwise;}
                          \end{cases}
\end{equation}
Now we are able to prove the proposition.
\begin{enumerate}
    \item[$(a)$] $\xi_e = 1$. It is obvious since $\xi_e = \varphi_\omega(e_e) = \varphi_\omega(1) = 1$;
    \item[$(b)$] $\xi_e$ \textbf{is convex.} Since $\xi_e = 1$, it is sufficient to prove that for given $g \in \mathbb{F}$ s.t. $\xi_g = 1$ we have that $\xi_h = 1$ for every subword $h$ of $g$. Indeed, let $g = \alpha \beta^{-1}$ be a finite reduced word such that $\xi_{\alpha\beta^{-1}} = 1$, $\alpha, \beta \in \mathbb{F}_+$. From \eqref{eq:evaluation_character_generator} we obtain that $\omega \in [\alpha]$ and $A(\beta_{|\beta|-1}, \omega_{|\alpha|})=1$. We have two possibilities for any $h$ reduced subword of $g$, namely $h=\alpha \beta_{|\beta|-1}^{-1}\cdots \beta_{p}^{-1}$, $0 \leq p \leq |\beta|-1$, and $h=\alpha_0 \cdots \alpha_p$, $0 \leq p \leq |\alpha|-1$. Both of the possibilities give us that $\xi_h=1$, since for all $\omega \in [\alpha]$ we have that $\omega \in [\alpha_0 \cdots \alpha_p]$, $0 \leq p \leq |\alpha|-1$. On other hand the condition $A(\beta_{|\beta|-1},\omega_{|\alpha|}) = 1$ remains unchanged for $h=\alpha \beta_{|\beta|-1}^{-1}\cdots \beta_{p}^{-1}$, $0 \leq p \leq |\beta|-1$ and it is not necessary when $h=\alpha_0 \cdots \alpha_p$, $0 \leq p \leq |\alpha|-1$, when you take $\omega = h$. Therefore we can use \eqref{eq:evaluation_character_generator} again to conclude that $\xi_h=1$;
    \item[$(c)$] \textbf{if $\xi_g=1$, then there exists at most one $y \in \mathbb{N}$ s.t. $\xi_{g y}=1$}. Suppose $\xi_g = \xi_{g y} = \xi_{g x}=1$, with $x,y \in \mathbb{N}$. Since $e_g = 0$ unless $g = \alpha \beta^{-1}$ with $\alpha,\beta \in \mathbb{F}_+$, we have two possibilities: $|\beta| \neq 0$ and $|\beta| = 0$. If $|\beta| \neq 0$ we have necessarily that $x=y=\beta_0$, otherwise we get $\xi_{g y} = \xi_{g x}=0$, a contradiction. If $|\beta| = 0$, then we have that $g = \alpha$ and by \eqref{eq:evaluation_character_generator} we get $\omega \in [\alpha x]\cap [\alpha y]$. However, $[\alpha x]\cap [\alpha y] \neq \emptyset$ if and only if $x=y$;
    \item[$(d)$] \textbf{if $\xi_g= \xi_{g y}=1$, $y \in \mathbb{N}$, then for all $x \in \mathbb{N}$ }$(\xi_{g x^{-1}}=1 \iff A(x,y)=1)$\textbf{.} Once again we recall that $e_g = 0$ unless $g = \alpha \beta^{-1}$ with $\alpha,\beta \in \mathbb{F}_+$. Also again we divide the proof in the same two possibilities as in $(c)$, namely $|\beta| \neq 0$ and $|\beta| = 0$. For $|\beta| \neq 0$, again the unique non contraditory possibility is that $y=\beta_0$ and $\xi_{gx^{-1}} = 1$ if and only if $A(x,\beta_0) =1$, i.e., $A(x,y) =1$. Now, if $|\beta| = 0$ we have by \eqref{eq:evaluation_character_generator} that $\omega \in [\alpha y]$, so We have the equivalences
    \begin{equation*}
        \xi_{gx^{-1}} = 1 \iff \xi_{\alpha x^{-1}} = 1 \stackrel{\text{\eqref{eq:evaluation_character_generator}}}{\iff} \omega \in [\alpha] \land A(x,\omega_{|\alpha|}) = 1
    \end{equation*}
    Hence,  $\xi_{gx^{-1}} = 1$ if and only if $A(x,y) = 1$ because $\omega_{|\alpha|} = y$.
    \end{enumerate}
\end{proof}

The proposition above shows that the elements of $\widetilde{X}_A$ satisfies the rules that define $\Omega_A^\tau$. Now, we introduce the definition of stem and root of a configuration as in \cite{EL1999}.

\begin{definition} By a \textit{positive word} in $\mathbb{F}$ we mean any finite or infinite sequence $\omega = \omega_0 \omega_1 \cdots$ satisfying $\omega_j \in \mathbb{F}_+$ for every $j$, including the empty word $e$. As well as it is defined for the classical Markov shift spaces, a positive word $\omega$ in $\mathbb{F}$ is said to be admissible when $A(\omega_j,\omega_{j+1}) = 1$ for every $j$. Given an either finite or infinite positive word $\omega = \omega_0 \omega_1 \cdots$, define the set
\begin{equation*}
    \llbracket \omega \rrbracket := \{e,\omega_0,\omega_0 \omega_1,\omega_0\omega_1\omega_2,\cdots\}
\end{equation*}
of the subwords of $\omega$. 
\end{definition}

\begin{remark} Observe that if $\omega$ is an infinite positive word, then $\omega \notin \llbracket \omega \rrbracket$.
\end{remark}

The next lemma shows the existence well-definition of a special positive admissible word: the stem of a configuration. In order to prove it we name the properties that define $\Omega_A^\tau$ as the rules below:
\begin{itemize}
    \item[$(R1)$] $\xi_e = 1$;
    \item[$(R2)$] $\xi$ is convex;
    \item[$(R3)$] if $\xi_g=1$, then there exists at most one $y \in \mathbb{N}$ s.t. $\xi_{g y}=1$;
    \item[$(R4)$] if $\xi_g= \xi_{g y}=1$, $y \in \mathbb{N}$, then for all $x \in \mathbb{N}$ it follows that
    \begin{equation*}
        \xi_{g x^{-1}}=1 \iff A(x,y)=1.
    \end{equation*}
\end{itemize}

\begin{lemma}\label{lemma:stem} Let $\xi \in \Omega_A^{\tau}$. There exists a unique positive admissible word $\omega$ such that
\begin{equation}\label{eq:identity_stem}
    \{g \in \mathbb{F}: \xi_g = 1\} \cap \mathbb{F}_+ = \llbracket \omega \rrbracket.
\end{equation}
\end{lemma}

\begin{proof} Let $\xi \in \Omega_A^{\tau}$ and define
\begin{equation*}
    G(\xi):=\{g \in \mathbb{F}: \xi_g = 1\} \cap \mathbb{F}_+.
\end{equation*}
We observe that $(R4)$ implies that every element in $G(\xi)$ is admissible. Due to $(R1)$ and $(R2)$, we have that if $\omega \in G(\xi)$ then every subword of $\omega$ is also contained in $G(\xi)$. We claim that given $\omega,\omega' \in G(\xi)$ we necessarily have either $\omega \in \llbracket \omega' \rrbracket$ or $\omega' \in \llbracket \omega \rrbracket$. In fact, w.l.o.g. we may assume $|\omega'| \leq |\omega|$. If $\omega'= e$ the claim is straightforward, so suppose $\omega' \neq e$. If $\omega' \notin \llbracket \omega \rrbracket$, then there exists $k \in \{0,...,|\omega'|-1\}$ s.t. $\omega_k \neq \omega'_k$  and we may take the smallest $k$ with such property. Then, the element
\begin{equation*}
    g = \begin{cases}
            e, \quad \text{if } k = 0;\\
            \omega'_0 \cdots \omega'_{k-1}, \quad \text{otherwise};
        \end{cases}
\end{equation*}
is such that $\xi_g = 1$, $\xi_{g\omega_k} = 1$ and $\xi_{g\omega'_k} = 1$, a contradiction, because of $(R3)$. So the claim holds. Consequently, for each $n \in \mathbb{N}$, $G(\xi)$ has at most one element of length\footnote{We set that $e$ has length zero.} $n$. We have two cases:
\begin{itemize}
    \item $G(\xi)$ is finite. As we noticed in the beginning of the proof, every subword of $\omega$ belongs to $G(\xi)$ and then $\llbracket \omega \rrbracket \subseteq G(\xi)$. The inverse inclusion is a consequence of the claim above and the fact that $\omega$ is the largest word in $G(\xi)$;
       
    \item $G(\xi)$ is (countably) infinite. In this case, take the infinite sequence $\omega$ that coincides with the sequence of $G(\xi)$ with length $n \in \mathbb{N}_0$ when on its first $n+1$ coordinates. By construction, $\llbracket \omega \rrbracket \subseteq G(\xi)$. The inverse inclusion follows by the same argument as in the previous case.
\end{itemize}
\end{proof}

The existence and uniqueness of the word in the lemma above for each configuration motivates part of the next definition.

\begin{definition}[stems and roots] Let $\xi \in \Omega_A^\tau$. The stem of $\xi$, denoted by $\kappa(\xi)$, is the positive admissible word such that
\begin{equation*}
    \{g \in \mathbb{F}: \xi_g = 1\} \cap \mathbb{F}_+ = \llbracket \omega \rrbracket.
\end{equation*}
We say that a configuration $\xi \in \Omega_A^\tau$ is a \textit{bounded element} if its stem has finite length. If $\xi$ is not bounded we call it \textit{unbounded}.

Given $g \in \mathbb{F}$ s.t. $\xi_g = 1$, the root of $g$ relative to $\xi$, denoted by $R_\xi(g)$, is defined by
\begin{equation*}
    R_\xi(g) := \{x \in S: \xi_{g x^{-1}} = 1\}.
\end{equation*}
\end{definition}

\begin{remark} Observe that if $\kappa(\xi) \neq e$, then $\kappa_{|\kappa|-1} \in R_\xi(\kappa(\xi))$. 
\end{remark}

Roughly speaking, we may compare the notions of stem and root as a hydrographic basin as follows. The stem of a configuration correponds to the `positive main river' of a configuration, that is, the longest path of positive finite words which are filled in the configuration. On the other hand, the root of an element of $g \in \mathbb{F}$ on a configuration corresponds to the `affluent edges that disembogue to the vertex $g$'. Observe that the inclusion $i_2 \circ i_1 : \Sigma_A \to \{0,1\}^{\mathbb{F}}$ necessarily maps the elements of $\Sigma_A$ to unbounded configurations. Moreover, every unbounded configuration in $\Omega_A^\tau$ comes from an element of $\Sigma_A$. This relation is presented and proved next.

\begin{proposition}\label{prop:bijection_sigma_unbounded_elements} The inclusion $i_2 \circ i_1 : \Sigma_A \to \Omega_A^\tau$ is a bijection between $\Sigma_A$ and the unbounded elements of $\Omega_A^\tau$. 
\end{proposition}

\begin{proof} The fact that every $\omega \in \Sigma_A$ is mapped to an unbounded configuration $\xi(\omega) \in \Omega_A^\tau$ is a straightforward consequence of Definition \ref{def:i_1}, by evaluating $\xi(\omega)_\alpha = \varphi_\omega(e_\alpha) = 1$ for every $\alpha \in \llbracket \omega \rrbracket$, and hence it is a unbounded configuration. Conversely, given an unbounded configuration $\xi \in \Omega_A^\tau$, we observe that $(R4)$ imposes that $\xi$ is the unique configuration of $\Omega_A^\tau$ having a stem $\kappa(\xi)$. On the other hand $\kappa(\xi) \in \Sigma_A$ since it is an infinite positive admissible sequence. Since the stem of $i_2 \circ i_1(\kappa(\xi))$ is $\kappa(\xi)$, we conclude that every unbounded element of $\Omega_A^\tau$ comes from $\Sigma_A$, concluding the proof. 
\end{proof}

\begin{corollary}\label{cor:unbounded_dense_in_X_A} $X_A$ ($\widetilde{X}_A$, when it is considered) is the closure of the unbounded elements of $\Omega_A^\tau$.
\end{corollary}

\begin{proof} It is a straightforward consequence of Corollary \ref{cor:i_2_i_1_Sigma_A_dense} and Proposition \ref{prop:bijection_sigma_unbounded_elements}. 
\end{proof}

Now we can characterize the bounded elements of $X_A$, denoted by the set $Y_A$, that is,
\begin{equation*}
    Y_A:= X_A \setminus \Sigma_A,
\end{equation*}
where we ommited the inclusions $i_1$ and $i_2$. What we do next is to characterize the bounded elements of $\Omega_A^\tau$ and $Y_A$. In order to realize that, we consider the most important root for every bounded configuration, namely $R_{\xi}(\kappa(\xi))$, that is, the root of the stem of a configuration relative to this one. Observe that $Y_A = \emptyset$ if and only if $\Sigma_A$ is locally compact, and then the existence of bounded configurations requires that the alphabet be infinitely countable, and hence from now and until the end of this thesis, we adopt the following standing hypothesis.

\begin{mdframed} \textbf{Standing hypothesis:} the alphabet on the Exel-Laca algebras considered is $\mathbb{N}$. Moreover, the transition matrix is not row-finite, that is, there exists at least one symbol in $\mathbb{N}$ that is an infinite emmiter in the symbolic graph.
\end{mdframed}

The hypothesis above combined with the transitivity of the matrix implies that the corresponding Markov shifts $\Sigma_A$ are not locally compact.

The next proposition characterizes the bounded elements of $\Omega_A^\tau$.

\begin{proposition}\label{prop:stem_root_pair_for_bounded_configurations_of_Omega_A_tau_uniqueness} Let $\xi \in \Omega_A^\tau$ a bounded configuration. Then $\xi$ is uniquely determined by its stem and $R_\xi(\kappa(\xi))$. Conversely, every pair $(\kappa,R)$, where $\kappa$ is a finite positive admissible word and $R \subseteq \mathbb{N}$ satisfying\footnote{Observe that if $\kappa = e$, then $\kappa_{|\kappa|-1} \in R$ is a vacuous condition.} $\kappa_{|\kappa|-1} \in R$, determines a configuration $\xi\in \Omega_A^\tau$ s.t. $\kappa = \kappa(\xi)$ and $R = R_\xi(\kappa(\xi)) = R_\xi(\kappa)$. 
\end{proposition}

\begin{proof} Suppose that there exists $\xi,\eta \in \Omega_A^\tau$ s.t. $\kappa(\xi) = \kappa(\eta) = \kappa$ and $R_\xi(\kappa) = R_\eta(\kappa)$. Given $g = \alpha \beta^{-1}$ irreducible we have by $(R4)$ that, if $\beta \neq e$, then
\begin{equation*}
    \xi_g = 1 \iff \xi_{\alpha \beta_{|\beta|-1}^{-1}} = 1,
\end{equation*}
and the same is valid for $\eta$. We have two possibilities:
\begin{itemize}
    \item $\alpha \neq \kappa$. By Lemma \ref{lemma:stem} and $(R4)$\footnote{If $\beta = e$, then $(R4)$ is not necessary.} we have that
    \begin{equation*}
        \xi_g = 1 \iff \alpha \in \llbracket \kappa \rrbracket \setminus \{\kappa\} \text{ and } A(\beta_{|\beta|-1}^{-1},\kappa_{|\alpha|}) = 1\text{\footnote{Also, if $\beta = e$, then the condition on the matrix is not necessary.}} \iff \eta_g = 1.
    \end{equation*}
    \item $\alpha = \kappa$. It is straightforward that $\xi_\kappa = \eta_\kappa = 1$, so suppose $\beta \neq e$, we have by the hypothesis $R_\xi(\kappa) = R_\eta(\kappa)$ that
    \begin{equation*}
        \xi_g = 1 \iff \eta_g = 1.
    \end{equation*}
\end{itemize}
We conclude that $\xi = \eta$. Now, take a pair $(\kappa,R)$ as in the statement of this proposition. We construct the following configuration of $\xi \in \{0,1\}^\mathbb{F}$ by setting the following
\begin{itemize}
    \item[$(i)$] $\xi_\omega = 1$ for every $\omega \in \llbracket \kappa \rrbracket$;
    \item[$(ii)$] $\xi_{\kappa j^{-1}} = 1$ for every $j \in R$;
    \item[$(iii)$] $\xi_{\alpha \beta^{-1}} = 1$ if and only if $\alpha \in  \llbracket \kappa \rrbracket \setminus \{\kappa\}$, $\beta$ is positive admissible and $A(\beta_{|\beta|-1},\kappa_{|\alpha|}) = 1$;
    \item[$(iv)$] $\xi_{\kappa \beta^{-1}} = 1$ for every $\beta$ positive admissible such that $\beta_{|\beta|-1} \in R$;
    \item[$(v)$] $\xi_g = 0$ for every $g \in \mathbb{F}$ in the remaining possibilities.
\end{itemize}
We have that $\xi \in \Omega_A^\tau$. Indeed, $(i)$ implies $(R1)$; $(R2)$ is a consequence from the fact that every path between two filled words is a set of elements of $g \in \mathbb{F}$ described in the conditions $(i)-(iv)$; $(R4)$ comes from the conditions $(iii)$ and $(iv)$, by the fact that every subword of $\kappa$ is also admissible, and that the lack of admissibility leads to the case $(v)$. Let us prove $(R3)$. For $(R3)$ suppose that there exists $g \in \mathbb{F}$, that $\xi_{g} = \xi_{gi} = \xi_{gj} = 1$ for two distinct elements $i,j \in \mathbb{N}$. Since $\xi_g = 1$, we have necessarily that $g = \alpha \beta^{-1}$, where $\alpha$ and $\beta$ are admissible. Moreover, we must have $\beta \neq e$, otherwise we would have two different positive admissible words of same length which implies that one of them does not belong to $\llbracket \kappa \rrbracket$ and hence $\xi_{gi} = 0$ or $\xi_{gj} = 0$, a contradiction. However, if $\beta \neq e$ then we have $g = \alpha \beta^{-1} i$ and by $(v)$ we have that $\xi_g = 0$. We conclude that $(R3)$ holds. Therefore $\xi \in \Omega_A^\tau$.
\end{proof}

For the next proposition, we see the root relative to the stem of a configuration of this configuration itself as an infinite matrix column in $\{0,1\}^\mathbb{N}$, endowed with the product topology.

\begin{proposition}\label{prop:characterization_of_elements_of_Y_A} Let $\xi \in \Omega_A^\tau$. Then $\xi \in Y_A$ if and only if it satisfies simultaneously the two following items:
\begin{itemize}
    \item[$(a)$] $\kappa(\xi) = e$ or $\kappa(\xi)$ ends in an infinite emmiter letter; 
    \item[$(b)$] If $\kappa(\xi)\neq e$, $R_\xi(\kappa(\xi))$ is an accumulation point of the sequence $(\mathfrak{c}(i))_{A(\kappa(\xi)_{|\kappa(\xi)|-1},i)=1}$ of columns of $A$, with these columns seen as elements of $\{0,1\}^\mathbb{N}$. If $\kappa(\xi)= e$, $R_\xi(\kappa(\xi))$ is an accumulation point of the sequence $(\mathfrak{c}(i))_{i \in \mathbb{N}}$.
\end{itemize}
\end{proposition} 

\begin{proof} Let $\xi \in Y_A$. By Corollary \ref{cor:unbounded_dense_in_X_A}, there exists a sequence $\{\xi^n\}_{\mathbb{N}}$ in $\Sigma_A$ converging to $\xi$. Then, we have that
\begin{equation*}
    (\xi^n)_{\kappa(\xi)} \to 1 \quad \text{and} \quad (\xi^n)_{\kappa(\xi) a} \to 0,
\end{equation*}
where the second convergence holds for every $a \in \mathbb{N}$. Since the convergence is inherited from the product topology of $\{0,1\}^\mathbb{F}$, for each $a \in \mathbb{N}$, the sequence  $\{(\xi^n)_{\kappa(\xi) a}\}_\mathbb{N}$ is eventually constant. Hence, we necessarily have that $\kappa(\xi) = e$ or it ends in an infinite emmiter, otherwise it would exist a symbol $b \in \mathbb{N}$ such that $(\xi^n)_{\kappa(\xi) b} = 1$ for infinitely many values of $n$, leading to a contradiction, because this fact would imply that $\xi_{\kappa(\xi)b}=1$. On the other hand, suppose that $\kappa(\xi) \neq e$, we have
\begin{equation}
    R_{\xi^n}(\kappa(\xi)) = \{j \in \mathbb{N}: A(j,a_n) = 1\} = c(a_n)
\end{equation}
where $a_n \in \mathbb{N}$ is the unique symbol due to $(R3)$ s.t. $\xi^n_{\kappa(\xi)a_n} = 1$ and $c(a_n)$ is the $a_n$-th column of $A$ when we see $R_{\xi^n}(\kappa(\xi))$ as an element of $\{0,1\}^{\mathbb{N}}$. Since $\xi^n \to \xi$ we have that
\begin{equation*}
    R_{\xi^n}(\kappa(\xi))_j = \xi^n_{\kappa(\xi) j^{-1}} \to \xi_{\kappa(\xi) j^{-1}} = R_\xi(\kappa(\xi)_j),
\end{equation*}
for every $j \in \mathbb{N}$, that is, $c(a_n) \to R_\xi(\kappa(\xi)$, and therefore $R_\xi(\kappa(\xi))$ is an accumulation point of the set $\{c(i): A(\kappa(\xi)_{|\kappa(\xi)|-1},i)\}$ of the columns of $A$, and therefore it is an accumulation point since this set is countable. The proof is similar for $\kappa(\xi) = e$.

Conversely, suppose that $\xi \in \Omega_A^\tau$ satisfies $(a)$ and $(b)$. We construct a sequence $\{\xi^n\}_{\mathbb{N}}$ of elements $\Sigma_A$ converging to $Y_A$, by setting
\begin{equation*}
    \xi^n_{\kappa(\xi)} = \xi^n_{\kappa(\xi)i(n)} = 1,
\end{equation*}
where $\{i(n)\}_{\mathbb{N}}$ is an enumeration of distinct elements of an infinite subset of $\{k \in \mathbb{N}:A(\kappa(\xi)_{|\kappa(\xi)|},k)=1\}$ if $|\kappa(\xi)| > 0$, and an infinit subset of $\mathbb{N}$ otherwise, in both cases satisfying $\mathfrak{c}(i(n)) \to R_\xi(\kappa(\xi)$. In addition, for each $n \in \mathbb{N}$ we choose $\omega_n \in [\kappa(\xi)i(n)]$ and set $\xi^n_{u}:= 1$ for every $u \in \llbracket \omega_n \rrbracket$. This choice determines uniquely each $\xi^n$. It is straightforward that $\xi^n_{\kappa(\xi)} = 1$ for every $n \in \mathbb{N}$,
\begin{equation*}
    \xi^n_{\kappa(\xi)a} \to 0 \quad \text{and} \quad R_{\xi^n}(\kappa(\xi)) = \mathfrak{c}(i(n)) \to R_\xi(\kappa(\xi)).
\end{equation*}
We constructed a sequence in $\Sigma_A$ that converges to an element $\xi \in \Omega_A^\tau$, by density, we necessarily have that $\xi \in Y_A$.
\end{proof}

In order to provide more intuition to the reader, we discuss how the convergence of elements of elements of $\Sigma_A$ to elements of $Y_A$ occurs. Let $\xi \in Y_A$. Due to Corollary \ref{cor:unbounded_dense_in_X_A}, there must exist a sequence $\{\xi^n\}_{\mathbb{N}}$ in $\Sigma_A$ that converges to $\xi$. Let $\omega$ be the stem of $\xi$. We must have necessarily that $\xi^n_{\omega j} \to 0$ for every $j \in \mathbb{N}$. In particular, only need to consider those $j$ s.t. $\omega j$ is admissible. Then we must have for each one of these particular $j$'s that $\xi^n_{\omega j} = 1$ for a finite quantity of $n$'s. So w.l.o.g. we may suppose that $\xi^n_{\omega j_n}$ for each $n$ and that $j_n \neq j_m$ if $n \neq m$. That make us understand that must be an infinite different possible $j_k$'s s.t. $\omega j_k$ is admissible, that is, $\omega = e$ or $\omega$ ends in an infinite emmiter. The figure \ref{fig:sequence_of_Sigma_A_converging_to_Y_A} illustrates this discussion.

\begin{figure}[H]
\centering
\caption{A sequence of unbounded elements of $X_A$ converging to an element of $Y_A$ viewed close to the stem $\omega$ of the limit configuration. The black dots represents the filled vertices and the white dots represents the non-filled vertices. The oriented edge from the vertex $\omega$ to $\omega j_k$, $k \in \mathbb{N}$ represents the multiplication of $\omega$ by $j_k$. \label{fig:sequence_of_Sigma_A_converging_to_Y_A}}
\begin{tikzpicture}

\node[circle,fill=black,inner sep=0pt,minimum size=8pt,label=below:{$\omega$}]      (maintopic)                              {};
\node[circle,fill=black,minimum size=8pt, label=right:{$\omega j_1$}]        (j1)       [above right= 1.5cm and 1cm of maintopic] {};
\node[circle,fill=white,draw,minimum size=8pt,label=right:{$\omega j_2$}]        (j2)       [above right= 1cm and 1.1cm of maintopic] {};
\node[circle,fill=white,draw,minimum size=8pt,label=right:{$\omega j_3$}]        (j3)       [above right= 0.5cm and 1.2cm of maintopic] {};
\node[circle,fill=white,draw,minimum size=8pt,label=right:{$\omega j_4$}]        (j4)       [above right= 0cm and 1.3cm of maintopic] {};

\draw[->,shorten >=0.1cm](maintopic.north) edge[bend left=45] (j1.west);
\draw[->,shorten >=0.1cm](maintopic.north) edge[bend left=45] (j2.west);
\draw[->,shorten >=0.1cm](maintopic.north) edge[bend left=45] (j3.west);
\draw[->,shorten <=0.7cm,ultra thick](j3.east) -- ($(j3.west)+(0:2.2)$);
\draw[->,shorten >=0.1cm](maintopic.north) edge[bend left=45] (j4.west);

\node[circle,fill=black,inner sep=0pt,minimum size=8pt,label=below:{$\omega$}]      (maintopic2) [right= 3.5cm of maintopic]                              {};
\node[circle,fill=white,draw,minimum size=8pt, label=right:{$\omega j_1$}]        (j12)       [above right= 1.5cm and 1cm of maintopic2] {};
\node[circle,fill=black,draw,minimum size=8pt,label=right:{$\omega j_2$}]        (j22)       [above right= 1cm and 1.1cm of maintopic2] {};
\node[circle,fill=white,draw,minimum size=8pt,label=right:{$\omega j_3$}]        (j32)       [above right= 0.5cm and 1.2cm of maintopic2] {};
\node[circle,fill=white,draw,minimum size=8pt,label=right:{$\omega j_4$}]        (j42)       [above right= 0cm and 1.3cm of maintopic2] {};

\draw[->,shorten >=0.1cm](maintopic2.north) edge[bend left=45] (j12.west);
\draw[->,shorten >=0.1cm](maintopic2.north) edge[bend left=45] (j22.west);
\draw[->,shorten >=0.1cm](maintopic2.north) edge[bend left=45] (j32.west);
\draw[->,shorten <=0.7cm,ultra thick](j32.east) -- ($(j32.west)+(0:2.2)$);
\draw[->,shorten >=0.1cm](maintopic2.north) edge[bend left=45] (j42.west);

\node[circle,fill=black,inner sep=0pt,minimum size=8pt,label=below:{$\omega$}]      (maintopic3) [right= 3.5cm of maintopic2]                              {};
\node[circle,fill=white,draw,minimum size=8pt, label=right:{$\omega j_1$}]        (j13)       [above right= 1.5cm and 1cm of maintopic3] {};
\node[circle,fill=white,draw,minimum size=8pt,label=right:{$\omega j_2$}]        (j23)       [above right= 1cm and 1.1cm of maintopic3] {};
\node[circle,fill=black,draw,minimum size=8pt,label=right:{$\omega j_3$}]        (j33)       [above right= 0.5cm and 1.2cm of maintopic3] {};
\node[circle,fill=white,draw,minimum size=8pt,label=right:{$\omega j_4$}]        (j43)       [above right= 0cm and 1.3cm of maintopic3] {};

\draw[->,shorten >=0.1cm](maintopic3.north) edge[bend left=45] (j13.west);
\draw[->,shorten >=0.1cm](maintopic3.north) edge[bend left=45] (j23.west);
\draw[->,shorten >=0.1cm](maintopic3.north) edge[bend left=45] (j33.west);
\draw[->,shorten <=0.7cm,ultra thick](j33.east) -- ($(j33.west)+(0:2.2)$);

\draw[->,shorten >=0.1cm](maintopic3.north) edge[bend left=45] (j43.west);

%asfdasf

\node[circle,fill=black,inner sep=0pt,minimum size=8pt,label=below:{$\omega$}]      (maintopic4) [right= 3.5cm of maintopic3]                              {};
\node[circle,fill=white,draw,minimum size=8pt, label=right:{$\omega j_1$}]        (j14)       [above right= 1.5cm and 1cm of maintopic4] {};
\node[circle,fill=white,draw,minimum size=8pt,label=right:{$\omega j_2$}]        (j24)       [above right= 1cm and 1.1cm of maintopic4] {};
\node[circle,fill=white,draw,minimum size=8pt,label=right:{$\omega j_3$}]        (j34)       [above right= 0.5cm and 1.2cm of maintopic4] {};
\node[circle,fill=black,draw,minimum size=8pt,label=right:{$\omega j_4$}]        (j44)       [above right= 0cm and 1.3cm of maintopic4] {};
\node     (dots)       [right= 2.0cm of j34] {$\bullet\bullet\bullet$};

\draw[->,shorten >=0.1cm](maintopic4.north) edge[bend left=45] (j14.west);
\draw[->,shorten >=0.1cm](maintopic4.north) edge[bend left=45] (j24.west);
\draw[->,shorten >=0.1cm](maintopic4.north) edge[bend left=45] (j34.west);
\draw[->,shorten <=0.7cm,ultra thick](j34.east) -- ($(j34.west)+(0:2.2)$);
\draw[->,shorten >=0.1cm](maintopic4.north) edge[bend left=45] (j44.west);

\end{tikzpicture}
\end{figure}
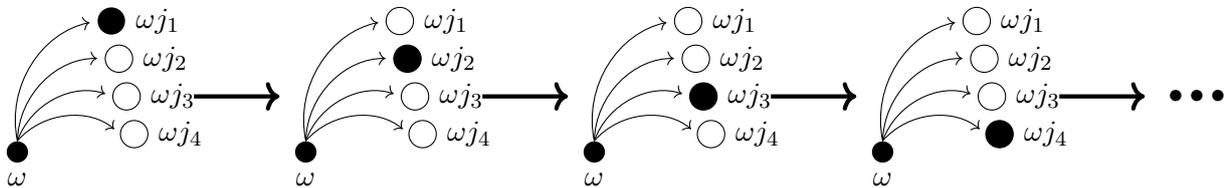

Also, we have the converse density, that is, the density of $Y_A$ on $X_A$, as proved next.

\begin{proposition}\label{prop:Y_A_dense} $Y_A$ is dense in $X_A$.
\end{proposition}

\begin{proof} We recall that  there exists a symbol $i \in \mathbb{N}$ such that $|\{j \in \mathbb{N}: A(i,j) = 1\}| = \infty$. Let $\xi \in \Sigma_A$, that is, $\kappa(\xi) = x = x_0 x_1 x_2 \cdots \in \Sigma_A$. By transitivity of $\Sigma_A$, there exists an admissible word $w_k$ such that $x_k w_k i$ is also admissible. Since $A(i,j) = 1$ is satisfied for an infinite values of $j$, by \ref{prop:bijection_sigma_unbounded_elements} and Corollary \ref{cor:unbounded_dense_in_X_A} we may construct for each $n \in \mathbb{N}_0$ at least one $\xi^n \in Y_A$ satisfying $\kappa(\xi^n) = x_0 \cdots x_n w^n i$. For each $n$ choose one of the possible $\xi^n$ as previously. We claim that $\xi^n \to \xi$. Indeed, for every $p \in \mathbb{N}_0$ the sequences $\{(\xi^n)_{x_0\cdots x_p}\}_{n \in \mathbb{N}_0}$ are constant for $n>p$ and therefore convergent, and moreover
\begin{equation*}
    (\xi^n)_{x_0\cdots x_p} \to 1 = \xi_{x_0\cdots x_p}.
\end{equation*}
Consequently for every $g \in \mathbb{F}$ in the form $g = \alpha \gamma^{-1}$ or $\gamma^{-1}$ irreducible, where $\alpha$ and $\gamma$ are admissible words, we have that $(\xi^n)_g$ is constant for $n > |\alpha|$ (we consider $|\alpha| = 0$ for $g = \gamma^{-1}$) due to $(R4)$. So  
\begin{equation*}
    (\xi^n)_{g} \to \xi_g,
\end{equation*}
and therefore $\xi^n \to \xi$.
\end{proof}

We warn the reader that, although in many cases $Y_A$ is countable, this is not granted, even under the transitivity hypothesis, we can have $Y_A$ uncountable, as we show in the next example. Futhermore, this example shows a transition matrix $A$ s.t. both $\Sigma_A$ and $X_A$ are not compact. 

\begin{example}[Uncountable $Y_A$] \label{exa:uncountable_Y_A} Consider the transition matrix $A$ as follows: each of its columns is a periodic sequences in $\{0,1\}^\mathbb{N}$, excluding the zero sequence. Also, we order the columns by increasing minimal period. Set
\begin{align*}
    A(n,1)= \begin{pmatrix}
             1\\
             1\\
             1\\
             1\\
             1\\
             1\\
             \vdots
            \end{pmatrix}, \quad
    A(n,2)= \begin{pmatrix}
             1\\
             0\\
             1\\
             0\\
             1\\
             0\\
             \vdots
            \end{pmatrix}
    \quad \text{ and } \quad
    A(n,2)= \begin{pmatrix}
             0\\
             1\\
             0\\
             1\\
             0\\
             1\\
             \vdots
            \end{pmatrix}.
\end{align*}
And for the remaining columns with same minimal period, take any ordering\footnote{It does not matter the choice of ordering among sequences of same minimal period, since it is a finite set of columns.}. Denote by $p(n)$, $n \in \mathbb{N}$, the minimal period of the sequence of the $n$-th column. It is straightforward that $p(n) < n$ for ever $n > 2$. We prove the following claims.

\textbf{Claim: for $b \in \mathbb{N}$, $b > 2$, there exists $c \in \mathbb{N}$ satisfying $A(c,b) = 1$ and $c \leq p(b)$.} In fact, suppose that for every $c \leq p(b)$ we have $A(c,b) = 0$. Then we have that the $b$-th column has only zeros in its first $p(b)$-entries as it is shown in figure \ref{fig:uncountable_Y_A}, and since $p(b)$ is the period of such sequence, it follows that $A(n,b) = 0$ for every $n \in \mathbb{N}$, a contradiction because every column of $A$ is not the zero column.

\begin{figure}[H]
 \centering
 \caption{A contradiction on the matrix constructed in this example if, for some $b >2$, we would have $A(c,b) = 0$ for every $c \leq p(b)$. In this case, the first $p(b)$ terms of $A(n,b)$ must be zero, and by periodicity, we conclude that $A(n,b) = 0$, for every $n \in \mathbb{N}$.\label{fig:uncountable_Y_A}}
 \includegraphics[scale = .35]{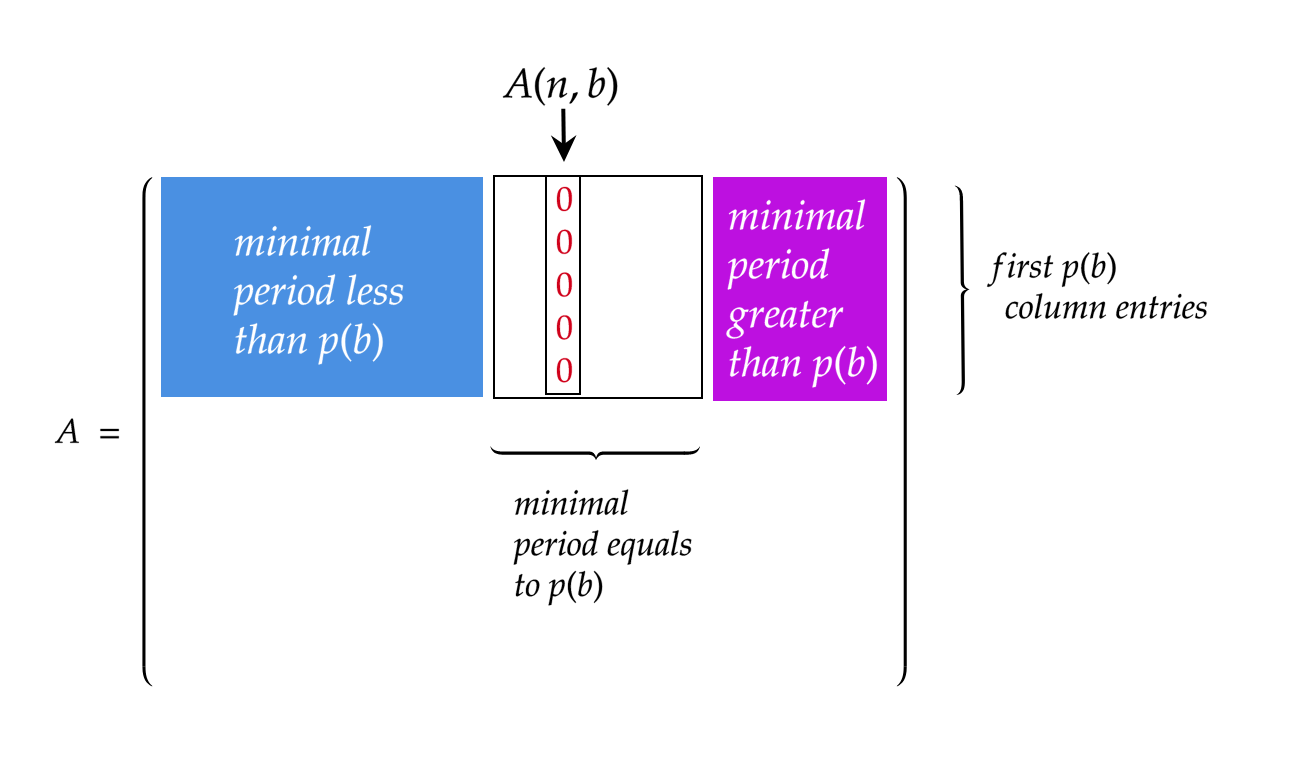}
\end{figure}

\textbf{Claim: $A$ is transitive.} Indeed, let $a,b\in \mathbb{N}$. The case $b=1$ and $b=2$ are clear: since for $b=1$, we have $A(a,b)=1$ for every $a$ because $A(n,1)$ is the column of the constant sequence of $1$'s. On the other hand if $b=2$, then $A(1,b)=1$ and $A(a,1)=1$, and so $a1b$ is admissible. 

Now, suppose $b>2$. By the previous claim, there exists $c_1 \in \mathbb{N}$ $c_1 \leq p(b) < b$ s.t. $A(c_1,b) = 1$. If $c_1\leq 2$, then $a1c_1b$ is admissible. Otherwise, then we restart the process, by taking $c_2<p(c_1)<c_1$ such that $A(c_2,c_1)=1$ and verifying if $c_2 \leq 2$ or not. Since the process must terminate, after $m \in \mathbb{N}$ iterations, we find $c_1,...,c_m \in \mathbb{N}$, with $c_m\leq 2$, and s.t. $a1c_nc_{n-1}\cdots c_1b$ is admissible.

\textbf{Claim: $A$ is topologically mixing.} It is straightforward from the last claim and the fact that $A(1,1) = 1$. In fact, by the proof above, for every $a,b \in \mathbb{N}$, if $b = 1$, then, for every $n \in \mathbb{N}$ we have the admissible word $a1^n 1$, where
\begin{equation*}
    1^n = \underbrace{1\cdots 1}_{n \text{ times}}.
\end{equation*}
Now, if $b = 2$, then $a1b$ is admissible. Again, we have that $a1^nb$ is admissible for every $n \in \mathbb{N}$. If $b >2$, by the proof of last claim, there exists $m \in \mathbb{N}$ s.t. $a1c_m \cdots c_1 b$ is admissible and therefore $a1^nc_m \cdots c_1 b$ for every $n \in \mathbb{N}$, and we conclude that $A$ is topologically mixing.

\textbf{Claim: the accumulation points of the sequence $(\mathfrak{c}(n))_{\mathbb{N}}$, where $\mathfrak{c}(n)$ is the $n$-th column of $A$ is a uncountable set.} This is straightforward from the fact that periodic sequences are dense in the uncountable set $\{0,1\}^{\mathbb{N}}$. In fact, for every $w = a_1 a_2 \cdots \in \{0,1\}^\mathbb{N}$, consider as the first element of the subsequence take $v^1 = A(n,1)$. Now, for $k>1$, take $v^k = (a_1,...,a_k 1)^\infty$. For each $m \in \mathbb{N}$, $(v^k)_m$ is eventually constant and equals to $a_m$, and then $v^k \to w$. Then the set of accumulation points of the sequence is $(\mathfrak{c}(n))_{\mathbb{N}}$.

\textbf{Claim: $\Sigma_A$ and $X_A$ are not compact.} For $\Sigma_A$ is straightforward, observe for instance, that $A(n,1) = 1$ for infinite quantity\footnote{Actually every row has infinite quantity of $1$'s, that is, every symbol is an infinite emitter.} of $n$'s. By the last claim, every element in $\{0,1\}^\mathbb{N}$ is an accumulation point of $(\mathfrak{c}(n))_{\mathbb{N}}$, so in particular the null vector is it, and by Theorem \ref{thm:O_A_unital} $(iv)$, we have that $\mathcal{O}_A$ is not unital, then $\mathcal{D}_A$ is not unital, and therefore $X_A$ is not compact.
\end{example}

From now we shall use the following notation: we denote by $\mathfrak{C}(A)$ the set of columns of $A$, also the elements of $\mathfrak{C}$ are denoted as follows: given $j \in \mathbb{N}$ we denote by $\mathfrak{c}(j)$ the $j$-th column of the matrix $A$, that is,
\begin{equation*}
    \mathfrak{c}(j)_i = \begin{cases}
                1, \quad \text{if } A(i,j) = 1;\\
                0, \quad \text{otherwise.}
             \end{cases}
\end{equation*}

%\begin{remark}\label{remark:accumulation_columns equivalence} Observe that the set $\mathfrak{C}$ of columns is countable and then for every element $c \in \{0,1\}^\mathbb{N}$ is an accumulation point of $\mathfrak{C}$ if and only if there exists a subsequence of $\{c(j)\}_\mathbb{N}$ which converges to $c$. 
%\end{remark}

Now, we prove the compatibility between the classical and generalized Markov shift spaces in terms of Measure Theory.

\begin{proposition}\label{prop:Sigma_A_Borel_subset_in_X_A}
$i_2\circ i_1(\Sigma_A)$ is a measurable set in the Borel $\sigma$-algebra of $i_2(X_A)$.
\end{proposition}
\begin{proof} Using the fact that $i_2$ is a homeomorphism,
\begin{equation*}
  i_2(X_A\setminus i_1(\Sigma_A)) = i_2(X_A)\setminus i_2\circ i_1(\Sigma_A) =\{\xi\in i_2(X_A):|\kappa(\xi)|<\infty\}=\bigcup_{\alpha\in \mathcal{L}}\{\xi\in i_2(X_A):\kappa(\xi)=\alpha\},  
\end{equation*}
where $\mathcal{L}$ is the countable set of all admissible finite words. Note that
\begin{align*}
    H(\alpha):&=\{\xi\in i_2(X_A):\kappa(\xi) =\alpha\} = \{\xi\in i_2(X_A):\xi_\alpha=1 \text{ and } \xi_{\alpha s}=0,\text{ }\forall s\in \mathbb{N}\} \\
    &= \{\xi\in i_2(X_A):\xi_\alpha=1\} \cap \{\xi\in i_2(X_A): \xi_{\alpha s}=0,\text{ }\forall s\in \mathbb{N} \}.
\end{align*}
Since $H(\alpha)$ is an intersection of two closed sets, this means that $H(\alpha)$ is closed in $i_2(X_A)$.  As $i_2(X_A)\setminus i_2\circ i_1(\Sigma_A)$ is an countable union of those sets, we conclude $i_2(X_A)\setminus i_2\circ i_1(\Sigma_A)$ is a $F_\sigma$, hence $i_2\circ i_1(\Sigma_A)$ is a $G_\delta$, a Borel set. 
\end{proof}

\begin{proposition}\label{prop:Borel_sets_preserved_from_Sigma_A_to_X_A}For every Borel set $B \subseteq \Sigma_A$, $i_1(B)$ is a Borel set in $X_A$.
\end{proposition} 

\begin{proof} It is equivalent to prove that $i_2 \circ i_1(B)$ is a Borel set. Also, it is sufficient to prove the result for the cylinders in $\Sigma_A$ because they do form an countable basis of the topology. Given a cylinder set $[\alpha]\subseteq \Sigma_A$, we have that $i_2 \circ i_1([\alpha]) = \{\xi \in i_2(X_A): \kappa(\xi) \text{ is infinite and $\xi_\alpha=1$}\}$.

Denoting by $\pi_g$ the projection of a word $g$ in the Cayley tree it follows that
\begin{equation*}
    i_2 \circ i_1([\alpha]) = \left(\bigcap_{\nu \in \llbracket\alpha\rrbracket}\pi_{\nu}^{-1}(\{1\})\right)\cap i_2\circ i_1(\Sigma_A)
\end{equation*}
is a Borel set in $i_2(X_A)$. 
\end{proof}

Since the Borel sets of $\Sigma_A$ are preserved in the sense that they are also Borel sets of $X_A$ we are able to see Borel measures of $\Sigma_A$ as measures of $X_A$ that lives on $\Sigma_A$. Conversely, when we restrict Borel measures on $X_A$ to $\Sigma_A$ we obtain a Borel measure on $\Sigma_A$. This compatibility of Borel $\sigma$-algebras and measures is crucial for a stronger result: the compatibility of conformal measures on classical and generalized Markov shift spaces. This will be proved in chapter \ref{ch:TF_on_Generalized_Countable_Markov_shifts}.

We now introduce the generalized shift map for $X_A$, inherited from $\Sigma_A$.

\begin{definition}[Generalized shift map] Consider the generalized Markov shift space. The generalized shift map is the function $\sigma :X_A \setminus \{\xi \in X_A:\kappa(\xi) = e\} \to X_A$ defined by
\begin{equation*}
    (\sigma(\xi))_g = \xi_{x_0^{-1} g}, \quad g \in \mathbb{F},
\end{equation*}
where $x_0 = \kappa(\xi)_0$, that is, $x_0$ is the first letter of the stem of $\xi$. 
\end{definition}

Observe that $\sigma$ is simply a translation on the configurations of the Cayley tree of generated by $\mathbb{N}$. It is straightforward that such map is well-defined surjective continuous map. More than that, $\sigma$ is a local homeomorphism. Next, we define the $Y_A$-families for $X_A$, a notion that will be used later to characterize extremal conformal measures on $X_A$ that vanish on $\Sigma_A$, in the next chapter.

\begin{definition}[$Y_A$-family] Let $\{\xi^{0,\mathfrak{e}}\}_{\mathfrak{e} \in \mathscr{E}}$, the collection of all configurations on $X_A$ (or $\widetilde{X}_A$) that have empty stem. We define the $Y_A$-family of $\xi^{(0,\mathfrak{e})}$ as the set
\begin{equation}
    Y_A(\xi^{0,\mathfrak{e}}):= \bigsqcup_{n \in \mathbb{N}_0} \sigma^{-n}(\xi^{0,\mathfrak{e}}).
\end{equation}
\end{definition}

\begin{remark} When $\widetilde{X}_A$ is considered, $Y_A(\varphi_0)$ is a singleton. 
\end{remark}

We prove some properties of the $Y_A$-families.

\begin{proposition}[Properties of the $Y_A$-families] \label{prop:properties_of_Y_A_families} Consider the family $\{\xi^{0,\mathfrak{e}}\}_{\mathfrak{e} \in \mathscr{E}}$ of all distinct configurations in $Y_A$ with empty stem. The following properties hold:
\begin{itemize}
    \item[$(i)$] $Y_A(\xi^{0,\mathfrak{e}}) \cap Y_A(\xi^{0,\mathfrak{e}'}) = \emptyset$ whenever $\mathfrak{e} \neq \mathfrak{e}'$;
    \item[$(ii)$] $Y_A = \bigsqcup_{\mathfrak{e} \in \mathscr{E}}Y_A(\xi^{0,\mathfrak{e}})$;
    \item[$(iii)$] $Y_A(\xi^{0,\mathfrak{e}})$ is $\sigma$-invariant for all $\mathfrak{e} \in \mathscr{E}$, in the sense that
    \begin{equation*}
        \sigma(Y_A(\xi^{0,\mathfrak{e}}) \cap \Dom \sigma) = Y_A(\xi^{0,\mathfrak{e}});
    \end{equation*}
    \item[$(iv)$] fixed $\mathfrak{e} \in \mathscr{E}$, $R_\xi(\kappa(\xi)) = R_{\xi^{0,\mathfrak{e}}}(\kappa(\xi^{0,\mathfrak{e}}))$ for every $\xi \in Y_A(\xi^{0,\mathfrak{e}})$;
    \item[$(v)$] $Y_A(\xi^{0,\mathfrak{e}})$ is countable for every $\mathfrak{e} \in \mathscr{E}$;
    \item[$(vi)$] $Y_A(\xi^{0,\mathfrak{e}})$ is a Borel subset of $X_A$ for every $\mathfrak{e} \in \mathscr{E}$.
\end{itemize}
\end{proposition}

\begin{proof} We divide the proof accordingly to the properties labeled in the statement.
\begin{itemize}
    \item[$(i)$] Let $\xi \in Y_A(\xi^{0,\mathfrak{e}}) \cap Y_A(\xi^{0,\mathfrak{e}'})$, where $\mathfrak{e} \neq \mathfrak{e}'$. Then $\xi^{0,\mathfrak{e}} = \sigma^n(\xi) = \xi^{0,\mathfrak{e}'} $, where $n = |\kappa(\xi)|$, a contradiction because $\xi^{0,\mathfrak{e}} \neq \xi^{0,\mathfrak{e}'}$.
    \item[$(ii)$] The inclusion
    \begin{equation*}
        Y_A \supseteq \bigsqcup_{\mathfrak{e} \in \mathscr{E}}Y_A(\xi^{0,\mathfrak{e}})
    \end{equation*}
    is straightforward. For the converse, take $\xi \in Y_A$. If $\kappa(\xi) = e$, there is nothing to be proven, so suppose $\kappa(\xi) = \kappa \neq e$. Then $\sigma^{\kappa}(\xi) = \xi^{0,\mathfrak{e}}$ for some $\mathfrak{e} \in \mathscr{E}$ and hence $\xi \in Y_A(\xi^{0,\mathfrak{e}})$, proving the inclusion
    \begin{equation*}
        Y_A \subseteq \bigsqcup_{\mathfrak{e} \in \mathscr{E}}Y_A(\xi^{0,\mathfrak{e}}).
    \end{equation*}
    \item[$(iii)$] Given $\xi \in Y_A(\xi^{0,\mathfrak{e}}) \cap \Dom \sigma$ and since the translation $\sigma$ preserves the root of the stem, we have that $\sigma(\xi) \in Y_A(\xi^{0,\mathfrak{e}})$.
    \item[$(iv)$] it is straightforward from definition of $Y_A(\xi^{0,\mathfrak{e}})$.
    \item[$(v)$] $Y_A(\xi^{0,\mathfrak{e}})$ is at most a countable union of countable sets and therefore it is countable.
    \item[$(vi)$] Since $X_A$ is Hausdorff, every singleton is closed and then it is a Borel subset of $X_A$. By $(v)$, each $Y_A(\xi^{0,\mathfrak{e}})$ is a countable union of Borel sets, and therefore the $Y_A$-families are Borel subsets of $X_A$.
\end{itemize}
\end{proof}

By conditions $(i)$ and $(ii)$ we have that the collection of $Y_A$-families forms a partition of $Y_A$. By $(iv)$, two distinct configurations in a same $Y_A$-family necessarily have distinct stems. Then, for fixed empty stem configuration $\xi^{0,\mathfrak{e}}$, there exists a bijection between the all possible stems in configurations of $Y_A(\xi^{0,\mathfrak{e}})$ and their respective configurations. Considering this fact, we define the following.

\begin{definition} Given a transition matrix s.t. its respective set of empty stem configurations $\{\xi^{0,\mathfrak{e}}\}_{\mathfrak{e} \in \mathscr{E}}$ in $Y_A$. We define the sets
\begin{equation*}
    \mathfrak{R}_\mathfrak{e}:= \{\omega: \kappa(\xi) = \omega \text{ for some } \xi \in Y_A(\xi^{0,\mathfrak{e}})\},
\end{equation*}
for each $\mathfrak{e} \in \mathscr{E}$. If $\mathscr{E} = \{\mathfrak{e}\}$, that is, $\mathscr{E}$ is a singleton, we simply write $\mathfrak{R}_\mathfrak{e} = \mathfrak{R}$. 
\end{definition}

We now focus on some particularly special examples. They will be recalled further when we construct the Thermodynamic Formalism on generalized Markov shift spaces, since in these cases we explore concrete examples of a new type of phase transition that is not detected in the classical theory.

\subsection{The Generalized Renewal shift}
\label{subsec:Generalized_Renewal_shift}

We first characterize the generalized renewal shift space. By taking the alphabet $\mathbb{N}$, we recall the renewal transition matrix, presented in Example \ref{exa:renewal_shift} of chapter \ref{ch:Markov_shift_space}: $A(1,n) = A(n+1,n) = 1$ for every $n \in \mathbb{N}$ and zero in the remaining entries. From the same example, we also recall its symbolic graph:
\[
\begin{tikzcd}
\circled{1}\arrow[loop left]\arrow[r,bend left]\arrow[rr,bend left]\arrow[rrr,bend left]\arrow[rrrr, bend left]&\circled{2}\arrow[l]&\circled{3}\arrow[l]&\circled{4}\arrow[l]&\arrow[l]\cdots
\end{tikzcd}
\]
First we observe that the C$^*$-algebra $\mathcal{O}_A$ for the renewal matrix is unital. In fact, in the representation of $\mathcal{O}_A$ on $\mathfrak{B}(\ell^2(\Sigma_A))$, Remark \ref{remark:O_A_non_unital} gives
\begin{equation*}
    1 = Q_1 = T_1^* T_1 \in \mathcal{O}_A
\end{equation*}
Then we have that $\mathcal{O}_A$ is unital and by Theorem \ref{thm:O_A_unital} we conclude that $X_A$ is compact\footnote{Alternatively we also can prove that $X_A$ by noticing that $A(\emptyset,\{1\},j) = 0$ for every $j \in \mathbb{N}$ and again by using Theorem \ref{thm:O_A_unital}.}. On the other hand, we observe that $1$ is the unique infinite emitter in the symbolic graph. Observe that
\begin{equation*}
    \mathfrak{c}(j)_i = \begin{cases}
                1, \quad \text{if } i \in \{1,j+1\};\\
                0, \quad \text{otherwise.}
             \end{cases}
\end{equation*}
Then, the unique accumulation point of the set of matrix columns of $A$ is
\begin{equation*}
    c = % [inline block 0: 2 envs, 58332 chars -> data_tex | \begin{pmatrix} 1 \\ 0 \\0 \\ 0 \\\vdots     \end{pmatrix}....]


}
\caption{The empty stem configuration of the renewal shift. Only the filled vertices are shown. The blue vertices are some explicit examples of filled elements of $\mathbb{F}$.\label{figure.emptyconfigurationrenewal}}
\end{figure}

The configuration of the figure \ref{fig:321_renewal} presents the unique configuration relative to the stem $321$. Its uniqueness is a straightforward consequence of Proposition \ref{prop:properties_of_Y_A_families}.

\begin{figure}[H]
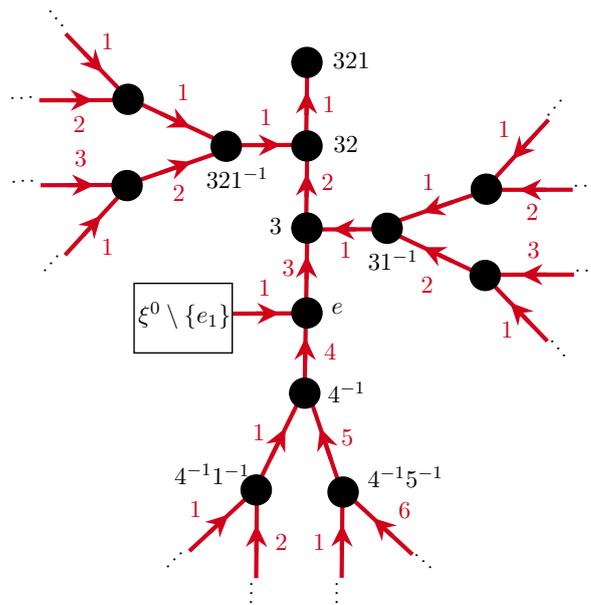

    \centering

\scalebox{0.7}{
\tikzset{every picture/.style={line width=0.75pt}} %set default line width to 0.75pt        

% [inline block 1: 1 envs, 23648 chars -> data_tex | \begin{tikzpicture}[x=0.75pt,y=0.75pt,yscale=-1,xscale=1] %uncomment if require: \path (0,452); %set diagram left start ...]


}
\caption{321 stem configuration for the renewal shift. The directed graph denoted by $\xi^0\setminus \{e_1\}$ is 
obtained removing the root $e$ and the edge $e_1$, labeled with $1$, from the directed graph associated to the 
empty configuration $\xi^0$ in the figure \ref{figure.emptyconfigurationrenewal}. \label{fig:321_renewal}}
\end{figure}

Since this shift space is a simple concrete and explicit example, we show explicitly how the shift map acts on it. The figure \ref{fig:321_renewal_shift_action} shows the shift action on the configuration of the figure \ref{fig:321_renewal}, and the figure \ref{fig:321_renewal_shift_action_detailed} presents the same shift action in the same configuration with focus on where each of branch disemboguing to the stem is translated.

\begin{figure}[H]
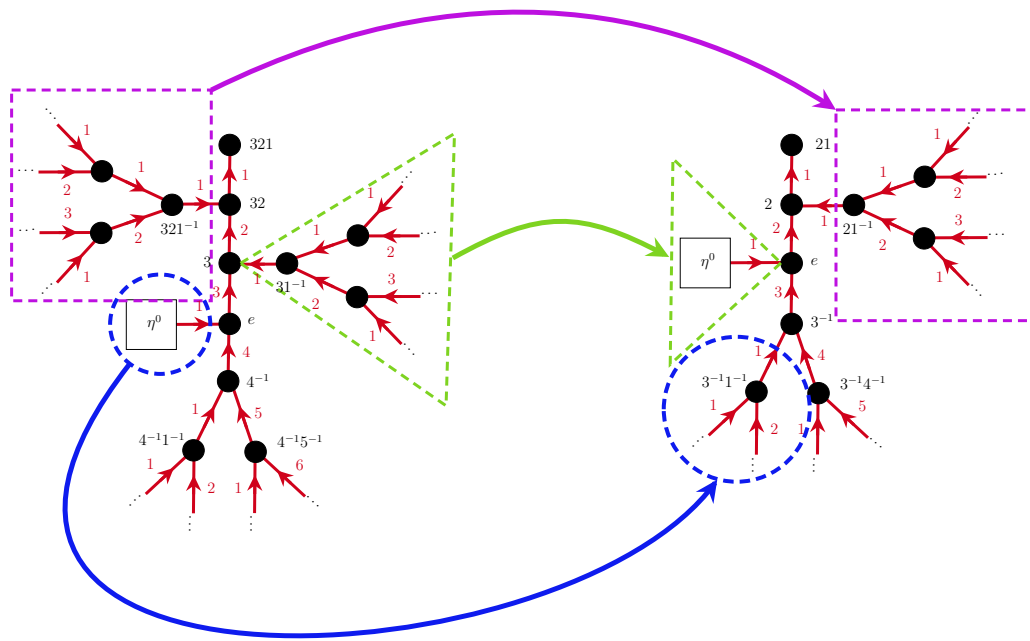

    \centering

\scalebox{0.6}{

\tikzset{every picture/.style={line width=0.75pt}} %set default line width to 0.75pt        

% [inline block 2: 2 envs, 84298 chars -> data_tex | \begin{tikzpicture}[x=0.75pt,y=0.75pt,yscale=-1,xscale=1] %uncomment if require: \path (0,468); %set diagram left start ...]


}
\caption{Shift action on the 321 stem configuration for the renewal shift. The directed graph denoted by $\xi^0\setminus \{e_1\}$ is obtained removing the root $e$ and the edge $e_1$, labeled with $1$, from the directed graph associated to the 
empty configuration $\xi^0$ in the figure \ref{figure.emptyconfigurationrenewal}. Each displayed branch is translated under $\sigma$ accordingly to the shape that sorrounds it, i.e., the circled in the left configuration goes to the circled one in the right configuration, and the analogous transportation occurs for the squared and triangled branches. \label{fig:321_renewal_shift_action_detailed}.}

\end{figure}

The next proposition is the precise counting number of elements in $\mathfrak{R}$ with length $n$, that is, the number of elements of $Y_A$ with stem of length $n$, for each $n \in \mathbb{N}_0$.

\begin{proposition}\label{prop:cardinality_words_length} There is exactly one element in $Y_A$ whose stem has
length zero, namely $\xi^0$, and for each $n \geq 1$, there are exactly
$2^{n-1}$ elements in $Y_A$ whose stem has length $n$.
\end{proposition}

\begin{proof} Every stem of a configuration in $Y_A$ is admissible and it ends with $1$. Hence, for any $n \in \mathbb{N}$ we have that
\begin{equation*}
    \{\omega\xi^0 \in Y_A: |\omega| = n\} = \sigma^{-(n-1)}(1),
\end{equation*}
where $\sigma$ is the shift map restricted to $Y_A$. Since the transition matrix is given by $A(s+1,s) = A(1,s) = 1$ for all $s \in \mathbb{N}$ and zero in the rest of entries we conclude that $|\sigma^{-1}(\eta)| = 2$ for all $\eta \in Y_A$. It follows that
\begin{equation*}
    |\{\omega\xi^0 \in Y_A: |\omega| = n\}| = |\sigma^{-(n-1)}(1)| = 2^{n-1}.
\end{equation*}
Indeed, it is obvious for $n=1$. Then suppose that the result above follows for $k \in \mathbb{N}$, i.e.,
\begin{equation*}
    |\sigma^{-(k-1)}(1)| = 2^{k-1},
\end{equation*}
then for a word $\eta$ in $\sigma^{-(k-1)}(1)$ we have that $|\sigma^{-1}(\eta)| = 2$ and then
\begin{equation*}
    |\sigma^{-k}(1)| = |\sigma^{-1}(\sigma^{-(k-1)}(1))| = 2 |\sigma^{-(k-1)}(1)| = 2^{k}.
\end{equation*}
The proof is complete. 
\end{proof}

\subsection{The Pair Renewal shift}
\label{subsec:Pair_Renewal}

This shift space corresponds to a slightly modification on the renewal shift, by adding another infinite emmiter to the symbolic graph, and this is described now. Consider the transition matrix satisfying
\begin{equation*}
    A(1,n)=A(2,2n)=A(n+1,n) = 1, \forall n\in \mathbb{N}
\end{equation*}
and zero otherwise. This shift will be called from now by \emph{Pair Renewal Shift} and its explicit associated transition matrix is
\begin{equation*}
    A = \begin{pmatrix}
    1 & 1 & 1 & 1 & 1 & 1 & 1 & 1 & 1 & \cdots\\
    1 & 1 & 0 & 1 & 0 & 1 & 0 & 1 & 0 & \cdots\\
    0 & 1 & 0 & 0 & 0 & 0 & 0 & 0 & 0 & \cdots\\
    0 & 0 & 1 & 0 & 0 & 0 & 0 & 0 & 0 & \cdots\\
    0 & 0 & 0 & 1 & 0 & 0 & 0 & 0 & 0 &\cdots\\
    0 & 0 & 0 & 0 & 1 & 0 & 0 & 0 & 0 &\cdots\\
    0 & 0 & 0 & 0 & 0 & 1 & 0 & 0 & 0 &\cdots\\
    0 & 0 & 0 & 0 & 0 & 0 & 1 & 0 & 0 &\cdots\\
    0 & 0 & 0 & 0 & 0 & 0 & 0 & 1 & 0 &\cdots\\
    \vdots & \vdots & \vdots & \vdots & \vdots & \vdots & \vdots & \vdots & \vdots & \ddots
    \end{pmatrix}.
\end{equation*}
Its symbolic graph is the following:
\[
\begin{tikzcd}
\circled{1}\arrow[loop left]\arrow[r,bend left]\arrow[rr,bend left]\arrow[rrr,bend left]\arrow[rrrr, bend left]\arrow[rrrrr, bend left]\arrow[rrrrrr, bend left]&\circled{2}\arrow[loop below]\arrow[l]\arrow[rr,bend right]\arrow[rrrr,bend right]\arrow[rrrrr,bend right]&\circled{3}\arrow[l]&\circled{4}\arrow[l]&\circled{5}\arrow[l]&\circled{6}\arrow[l]&\arrow[l]\cdots
\end{tikzcd}
\]
The current transition matrix $A$ has a full line of $1$`s in the first row and then $\mathcal{O}_A$ is unital as the renewal shift and therefore $X_A$ is compact. The only two possibilities for accumulation points on the sequence of columns of $A$ are
\begin{equation}\label{eq:limit_columns_pair_renewal_shift}
    c_1 = \begin{pmatrix} 1 \\ 1 \\0 \\ 0 \\ 0 \\\vdots \end{pmatrix} \quad \text{and} \quad c_2 = \begin{pmatrix} 1 \\ 0 \\0 \\ 0 \\ 0 \\\vdots \end{pmatrix}.
\end{equation}
By Proposition \ref{prop:characterization_of_elements_of_Y_A}, there are only two distinct emtpty stem configurations, as it is shown in the figure \ref{fig:emptyconfigurationpairrenewal}, and then there are only two $Y_A$-families. Observe that the unique infinite emmiters the pair renewal shift symbolic graph are the symbols $1$ and $2$. Moreover, note that $(\mathfrak{c}(j))_{A(1,j)=1} = (\mathfrak{c}(j))_{\mathbb{N}}$, and then that sequence has $c_1$ and $c_2$ as accumulation points, on the other hand $(\mathfrak{c}(j))_{A(2,j)=1}$ converges to $c_1$, so again by Proposition \ref{prop:characterization_of_elements_of_Y_A} we determine the non-empty stem configurations in $Y_A$: for $\xi \in Y_A$, with $\kappa(\xi) = \omega \neq e$ we have that one of the following holds:
\begin{enumerate}
    \item $\omega$ ends in $1$ and $R_\xi(\omega) = c_1$;
    \item $\omega$ ends in $2$ and $R_\xi(\omega) = c_1$;
    \item $\omega$ ends in $1$ and $R_\xi(\omega) = c_2$.
\end{enumerate}
Alternatively, one could prove the above by applying Proposition \ref{prop:properties_of_Y_A_families} (i)-(ii) instead. Indeed, each empty stem generates a unique $Y_A$-family and it is straightforward that every configuration with no empty stem of $Y_A$ belongs to one of these families. The two first items correspond precisely to the elements of $Y_A(\xi^{0,1})$, where $\xi^{0,1}$ is the empty stem configuration with $\kappa_{\xi^{0,1}}(e) = c_1$; the last condition corresponds exactly to the elements of $Y_A(\xi^{0,2})$ with non-empty stem, where $\xi^{0,2}$ is the remaining empty-stem configuration. We have that
\begin{align*}
    \mathfrak{R}_1 &= \{e\} \sqcup \{\omega \text{ positive admissible word}, |\omega| \geq 1, \omega_{|\omega|-1} \in \{1,2\}\},\\
    \mathfrak{R}_2 &= \{e\} \sqcup \{\omega \text{ positive admissible word}, |\omega| \geq 1, \omega_{|\omega|-1} = 1\}.
\end{align*}

\begin{figure}[H]
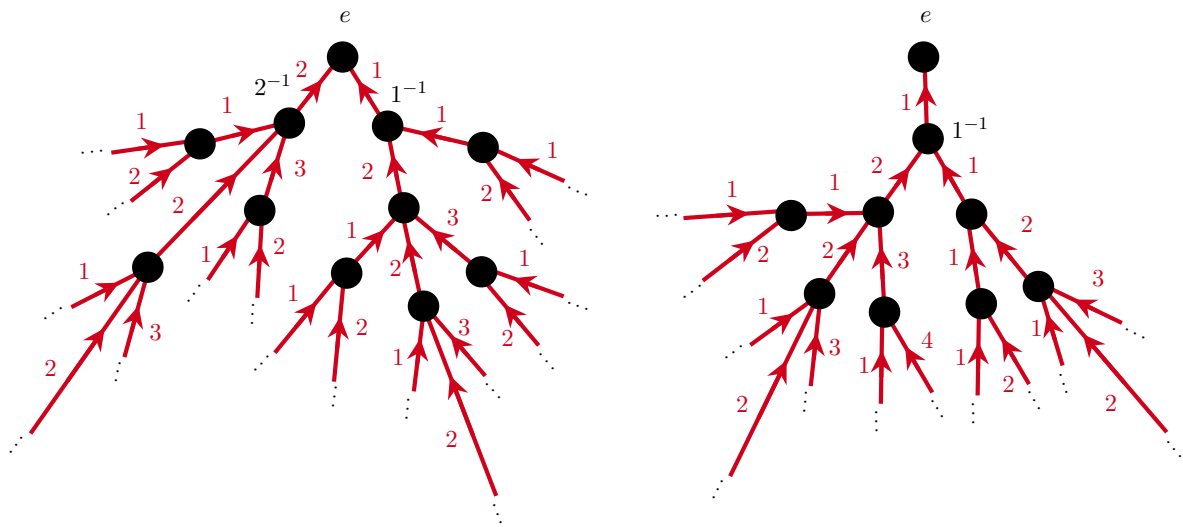

 \caption{The two empty stem configurations of the pair renewal shift. The left configuration $\xi^{0,1}$ satisfies $R_{\xi^{0,1}}(e) = c_1$, while the right one $\xi^{0,2}$ corresponds to $R_{\xi^{0,2}}(e) = c_2$, where $c_1$ and $c_2$ are the column vectors of \eqref{eq:limit_columns_pair_renewal_shift}. \label{fig:emptyconfigurationpairrenewal}}
 \scalebox{.7}{

\tikzset{every picture/.style={line width=0.75pt}} %set default line width to 0.75pt        

% [inline block 3: 1 envs, 41516 chars -> data_tex | \begin{tikzpicture}[x=0.75pt,y=0.75pt,yscale=-1,xscale=1] %uncomment if require: \path (0,397); %set diagram left start ...]


}
\end{figure}

The next result theorem gives, for each $k\in\{1,2\}$, the number of configurations in $Y_A(\xi^{0,k})$ with stem of same fixed length. Such theorem will be crucial in the next chapter, when we study phase transition phenomena for the pair renewal shift. In order to prove the next theorem, we say that an edge $c$ that conects two filled vertices of any empty stem configuration is in the $n$-th generation, $n \in \mathbb{N}$ if the unique admissible path that connects this edge to the vertex $e$ has $n$ edges including $c$.

\begin{theorem}\label{thm:counting_configurations_pair_renewal_shift}
 
Consider the pair renewal shift and $n \in \mathbb{N}_0$. Then,
\begin{align*}
    |\sigma^{-n}(\xi^{0,1})| = \frac{1}{4}\left[(1-\sqrt{2})^{n} + (1-\sqrt{2})^{n+1} + (1+\sqrt{2})^{n} + (1+\sqrt{2})^{n+1}\right]
\end{align*}
and
\begin{align*}
|\sigma^{-n}(\xi^{0,2})| = \begin{cases}
                                 1, \quad \text{if $n=0$};\\
                                 \frac{1}{4}\left[(1-\sqrt{2})^{n-1} + (1-\sqrt{2})^{n} + (1+\sqrt{2})^{n-1} + (1+\sqrt{2})^{n}\right], \quad \text{otherwise}.
                              \end{cases}
\end{align*}
\end{theorem}
% fazer desenhos com exemplo pra justificar a contagem
\begin{proof} For $n = 0$ the result is straightforward in both cases, as well as in the case $n=1$ for $\xi^{0,2}$. Observe that the configuration $\xi^{0,2}$ has the same graph tree in terms of labeled edges than $\xi^{0,1}$ if we remove the vertex $e$ and the edge from $1^{-1}$ to $e$. Then, for $n \geq 1$, we have $|\sigma^{-n}(\xi^{0,2})| = |\sigma^{-(n-1)}(\xi^{0,1})|$, and therefore it is sufficient to prove for the configuration $\xi^{0,1}$. Moreover, the number of configurations in $\sigma^{-n}(\xi^{0,1})$ is the number of edges of $n$-th generation. This counting problem depends of the number of edges that can connect to a labeled edge. By the transition matrix, we have the following:
\begin{itemize}
    \item[$(i)$] if an edge is labeled with an odd number, then there are two edges that connect directly to him: one with an odd label, and another one with an even label;
    \item[$(ii)$] if an edge is labeled with an even number, then there are three edges that connect directly to him: two with odd label, and another one with even label.
\end{itemize}
We can define an iterated map to count the number of edges as follows. Let $G_k$ be the set of edges of generation $k \in \mathbb{N}$, and we make no distinction among all of its odd labeled edges, and we do the same for all their even labeled edges. We consider $\mathbb{R}^2$ endowed with the $\ell^1$-norm, that is, $\|(x_1,x_2)\| = |x_1| + |x_2|$ for every $(x_1,x_2) \in \mathbb{R}^2$. We associate the odd and even labeled edges to the respective canonical basis vectors
\begin{align*}
    e_1 = \begin{pmatrix}
           1\\
           0
          \end{pmatrix}
\quad \text{and} \quad
    e_2 = \begin{pmatrix}
           1\\
           0
          \end{pmatrix}.
\end{align*}
In this setting, the number of elements of $G_k$ is
\begin{equation*}
    \left\|\begin{pmatrix}
           r\\
           s
          \end{pmatrix}\right\|
    = r\left\|\begin{pmatrix}
           1\\
           0
          \end{pmatrix}\right\|
    + s \left\|\begin{pmatrix}
           0\\
           1
          \end{pmatrix}\right\| = r + s,
\end{equation*}
where $r,s\in \mathbb{N}_0$ are respectively the number of odd and even edges in $G_k$. By the characterization done in items $(i)$ and $(ii)$ above, we have that the number of elements in $G_{k+1}$ depends on the number of edges in $G_k$, and how many of these edges are odd and how many are even, since they generate a different number of edges. Furthermore the number of odd and even edges in $G_{k+1}$ is given respectively by $r+2s$ and $r+s$. In terms of vectors in $\mathbb{R}^2$, we have that the total of edges is given by
\begin{equation*}
    \left\|\begin{pmatrix}
           r+2s\\
           r+s
          \end{pmatrix}\right\|
    = \left\|\begin{pmatrix}
           1 & 2\\
           1 & 1
             \end{pmatrix}
            \begin{pmatrix}
            r\\
            s
            \end{pmatrix}\right\|.
\end{equation*}
Note that the matrix 
\begin{equation*}
    M = \begin{pmatrix}
            1 & 2\\
            1 & 1
        \end{pmatrix}
\end{equation*}
does not depend of $k$, hence we have that we can count the number of edges by iterating $M$, that is, the number of elements in $G_k$ is
\begin{equation}\label{eq:cardinality_of_G_k}
    \left\|M^{k-1}\begin{pmatrix}
                   r_0\\
                   s_0
                  \end{pmatrix}\right\|,
\end{equation}
where $r_0, s_0 \in \mathbb{N}_0$ are respectively the number of odd and even edges in $G_1$, which in our case we have $r_0 = s_0 = 1$. It is straightforward that $M$ can be written in diagonal form, since its determinant is $-1$, so we diagonalize to make the calculations easier to obtain the quantity in \ref{eq:cardinality_of_G_k}. The characteristic polynomial of $M$ is
\begin{equation*}
    p(\lambda) = \det (M-\lambda \mathbbm{1}) = \lambda^2 - 2 \lambda - 1,
\end{equation*}
then the eigenvalues of $M$ are
\begin{equation*}
    \lambda_1 = 1-\sqrt{2} \quad \text{and} \quad \lambda_1 = 1+\sqrt{2}
\end{equation*}
and their respective associate eigenvectors are
\begin{align*}
    v_1 = \begin{pmatrix}
            -\sqrt{2}\\
            1
          \end{pmatrix}
\quad \text{and} \quad
    v_2 = \begin{pmatrix}
            \sqrt{2}\\
            1
          \end{pmatrix}.
\end{align*}
Now, observe that
\begin{align*}
    v_1 = -\sqrt{2} e_1 + e_2 \quad \text{and} \quad v_2 = \sqrt{2} e_1 + e_2,
\end{align*}
and conversely
\begin{align*}
    e_1 = -\frac{1}{2\sqrt{2}} v_1 + \frac{1}{2\sqrt{2}} v_2 \quad \text{and} \quad e_2 = \frac{1}{2} v_1 + \frac{1}{2} v_2.
\end{align*}
By representing $\mathbb{R}^2_{\mathcal{C}}$ and $\mathbb{R}^2_{\mathcal{E}}$ for the vector space $\mathbb{R}^2$ on the canonical and eigenvector basis respectively, we have the following change of basis matrices:
\begin{align*}
    P = \begin{pmatrix}
            -\frac{1}{2\sqrt{2}} & \frac{1}{2}\\
            \frac{1}{2\sqrt{2}} & \frac{1}{2}
        \end{pmatrix}
\quad \text{and} \quad
    P^{-1} = \begin{pmatrix}
                -\sqrt{2} & \sqrt{2}\\
                1 & 1
             \end{pmatrix},
\end{align*}
where $P: \mathbb{R}^2_{\mathcal{C}} \to \mathbb{R}^2_{\mathcal{E}}$ and $P^{-1}: \mathbb{R}^2_{\mathcal{E}} \to \mathbb{R}^2_{\mathcal{C}}$. We denote the diagonal form of $M$ by $D$, hence
\begin{equation*}
    M = P^{-1} D P
\end{equation*}
and then for $n \in \mathbb{N}$ we have
\begin{align*}
    |\sigma^{-n}(\xi^{0,1})| = |G_n| = \left\|M^{n-1}\begin{pmatrix}
                   1\\
                   1
                  \end{pmatrix}\right\|
    = \left\|P^{-1} D^{n-1} P \begin{pmatrix}
                            1\\
                            1
                        \end{pmatrix}\right\|
    = \frac{1}{2\sqrt{2}}\left\|\begin{pmatrix}
                \sqrt{2}(R_1+R_2)\\
                -R_1+R_2
             \end{pmatrix}\right\|,
\end{align*}
where
\begin{equation*}
    R_1 = (1-\sqrt{2})^{n} \quad \text{and} \quad R_2 = (1+\sqrt{2})^{n}.
\end{equation*}
It is straightforward to observe that $R_2 \geq |R_1|$, and then
\begin{equation}\label{eq:counting_formula_xi_1_pair_renewal_shift}
    |\sigma^{-n}(\xi^{0,1})| = \frac{1}{4}\left[(1-\sqrt{2})^{n} + (1-\sqrt{2})^{n+1} + (1+\sqrt{2})^{n} + (1+\sqrt{2})^{n+1}\right].
\end{equation}
Observe that the formula \ref{eq:counting_formula_xi_1_pair_renewal_shift} also holds for $n=0$. Consequently we have that
\begin{align*}
|\sigma^{-n}(\xi^{0,2})| = \frac{1}{4}\left[(1-\sqrt{2})^{n-1} + (1-\sqrt{2})^{n} + (1+\sqrt{2})^{n-1} + (1+\sqrt{2})^{n}\right],
\end{align*}
for $n \in \mathbb{N}$.
\end{proof}

\subsection{Prime Renewal shift}
\label{subsec:Prime_Renewal_shift}

This example is one step furhter on generalizating the renewal shift. Take matrix $A$ as follows: for each $p$ prime number and  $n \in \mathbb{N}$ we have
\begin{equation*}
A(n+1,n)=A(1,n) = A(p,p^n) = 1
\end{equation*}
and zero for the other entries of $A$. From now on, we will call this shift \emph{Prime Renewal Shift}.
The transition matrix is
\begin{equation*}
    A = \begin{pmatrix}
    1 & 1 & 1 & 1 & 1 & 1 & 1 & 1 & 1 & \cdots\\
    1 & 1 & 0 & 1 & 0 & 0 & 0 & 1 & 0 & \cdots\\
    0 & 1 & 1 & 0 & 0 & 0 & 0 & 0 & 1 & \cdots\\
    0 & 0 & 1 & 0 & 0 & 0 & 0 & 0 & 0 & \cdots\\
    0 & 0 & 0 & 1 & 1 & 0 & 0 & 0 & 0 &\cdots\\
    0 & 0 & 0 & 0 & 1 & 0 & 0 & 0 & 0 &\cdots\\
    0 & 0 & 0 & 0 & 0 & 1 & 1 & 0 & 0 &\cdots\\
    0 & 0 & 0 & 0 & 0 & 0 & 1 & 0 & 0 &\cdots\\
    0 & 0 & 0 & 0 & 0 & 0 & 0 & 1 & 0 &\cdots\\
    \vdots & \vdots & \vdots & \vdots & \vdots & \vdots & \vdots & \vdots & \vdots & \ddots
    \end{pmatrix}.
\end{equation*}
Now, consider the space of the columns of $A$ endowed with the product topology. The possible accumulation points of sequences of elements of $\mathfrak{C}(A)$ are
\begin{align*}
    c(p) = \begin{pmatrix}
               1\\
               0\\
               0\\
               \vdots\\
               0\\
               1 \quad(p\text{-th coordinate})\\
               0\\
                  \vdots
           \end{pmatrix}
\quad \text{or} \quad
    c(1)  = \begin{pmatrix}
               1\\
               0\\
               0\\
               0\\
               \vdots
             \end{pmatrix}.
\end{align*}
where $p$ is a prime number. For each $p$, one may take the sequence $\{\mathfrak{c}(p^n)\}_\mathbb{N}$ that converges to $c(p)$, and the sequence $\{\mathfrak{c}(p_n)\}_{p_n \text{is the $n$-th prime}}$ converges to $1$.%That is, for every prime number $p$, the column vector $\mathfrak{c}(p)$ which assigns $1$ for the first and the $p$-th row and zero elsewhere is a limit point of the columns of $A$. Indeed, this can be shown by taking the sequence $\{\mathfrak{c}^n(p)\}_{n \in \mathbb{N}}$, where $\mathfrak{c}^n(p)$ is the $p^n$-th column of $A$. For every coordinate which is not $1$ or $p$ we have a finite number of $1$`s and therefore the row converges to zero. On the other hand, we have for all elements in the sequence that for both first and $p^n$-th row the column assigns always the value 1, proving the convergence. The unique remaining limit of this topological space is the same one of the renewal shift, by taking the sequence $\{\mathfrak{c}^n(n)\}_{n \text{ prime}}$, with similar proof as the one shown above. This limit is
The prime renewal shift has a transition matrix with a full line of $1$'s in the first row and then $\mathcal{O}_A$ is unital as the renewal shift and therefore $X_A$ is compact. The next proposition is a straightforward consequence of Proposition \ref{prop:characterization_of_elements_of_Y_A}.

%We introduced in the renewal shift a family of infinite emitters in the graph associated to the matrix $A$ and now we present the consequences of such modification. Similarly to what we did for the renewal shift, we can take limits of elements from $\Sigma_A$ in order to obtain the elements of $Y_A$. We characterize these elements in the next proposition, in particular, we show the existence of infinitely many empty configurations on this space. By consequence, we will have infinitely many extremal conformal measures living on $Y_A$.

\begin{proposition}\label{prop:characterization_Y_A_prime_renewal_shift} Consider the prime renewal shift. The elements of $Y_A$ has as stems the finite words ending with $1$ or any prime number $p$, otherwise they have empty stem. Moreover,
\begin{itemize}
    \item[$(i)$] for each finite positive admissible word $\omega$ ending in a prime number $p$, there exists exactly one configuration in $Y_A$ with stem $\omega$;
    \item[$(ii)$] for each $k \in \{1\}\cup\{p \in \mathbb{N}:p\text{ is a prime number}\}$ and each $\omega$ positive admissible word ending in $1$, there exists exactly one configuration with stem $\omega$ s.t. $\omega j^{-1}$ is filled only for $j \in \{1,k\}$;
    \item[$(iii)$] for each $k \in \{1\}\cup\{p \in \mathbb{N}:p\text{ is a prime number}\}$, there exists exactly one configuration with empty stem such that $j^{-1}$ is filled only for $j \in \{1,k\}$.
\end{itemize}
\end{proposition}

We denote by $\xi^{0,p}$ the configuration with empty stem such that $\xi^{0,p}_{1^{-1}}=\xi^{0,p}_{p^{-1}}=1$, where $p$ is prime or $1$. By Proposition \ref{prop:characterization_Y_A_prime_renewal_shift}, it is straightforward that the $Y_A$-families $Y_A(\xi^{0,p})$ and are characterized by the sets of words
\begin{align*}
    \mathfrak{R}_1 &= \{e\} \sqcup \{\omega \text{ positive admissible word}, |\omega| \geq 1, \omega_{|\omega|-1} = 1\}, \text{ if }p =1;\\
    \mathfrak{R}_p &= \{e\} \sqcup \{\omega \text{ positive admissible word}, |\omega| \geq 1, \omega_{|\omega|-1} \in \{1,p\}\}, \text{ if }  p \text{ is a prime number}.
\end{align*}
Now we estimate, for every $Y_A$-family above, the number of elements of stem with length $n$, for each $n \in \mathbb{N}$.

\begin{proposition}\label{prop:control_number_configurations} For the prime renewal shift we have that
\begin{equation*}
    2^{n-1}\leq |\{\xi \in Y_A(\xi^{0,p}): |\kappa(\xi)|=n\}|\leq 3^n, \quad \forall n \in \mathbb{N},
\end{equation*}
for every $p \in \{1\}\cup\{q \in \mathbb{N}: q \text{ is a prime number}\}$.
\end{proposition}

\begin{proof} Given $j \in \mathbb{N}$, let us estimate $|\{i \in \mathbb{N}:A(i,j) = 1\}|$. We have two possibilities:
\begin{enumerate}
    \item $j = p^m$, with $m \in \mathbb{N}$ and $p$ prime number, and in this case $A(i,j) = 1$ if and only if $i = p^m+1$, $i = p$ or $i = 1$;
    \item Otherwise, $A(i,j) = 1$ if and only if $i = j+1$ or $i = 1$.
\end{enumerate}
If $p = 1$, then there exists only one $\xi$ in $Y_A(\xi^0(1))$ such that $|\kappa(\xi)| = 1$. Hence, by the possibilities shown above, we have
\begin{equation*}
    2^{n-1}\leq|\sigma^{-n}(\xi^0(1))| \leq 3^{n-1}, \quad \forall n \in \mathbb{N}.
\end{equation*}
For $p$ prime number, we have a similar calculation but with the difference that there exist exactly two different $\xi$ in $Y_A(\xi^{0,p})$ such that $|\kappa(\xi)| = 1$, therefore $
    2^{n}\leq|\sigma^{-n}(\xi^{0,p})| \leq 3^{n}, \ \forall  n \in \mathbb{N}$.
Then,
\begin{equation*}
    2^{n-1}\leq |\{\xi \in Y_A(\xi^{0,p}): |\kappa(\xi)|=n\}|\leq 3^n, \quad \forall n \in \mathbb{N},
\end{equation*}
for every $p \in \{1\}\cup\{q \in \mathbb{N}: q \text{ is a prime number}\}$.
\end{proof}

\subsection{Alternating Renewal shift}

We finish this section with an example of a matrix $A$ generating an unital Exel-Laca algebra $\mathcal{O}_A$ that has no full line of $1$'s, in contrast with the previous examples of the renewal class. Consider the matrix $A$ defined by the entries
\begin{equation*}
    A(1,2n) = A(2,2n-1) = A(n+1,n) = 1, \quad \text{ for all } n \in \mathbb{N},
\end{equation*}
and zero for the remaining entries. The matrix $A$ is the following:
\begin{equation*}
    A = \begin{pmatrix}
    0 & 1 & 0 & 1 & 0 & 1 & 0 & \cdots\\
    1 & 0 & 1 & 0 & 1 & 0 & 1 & \cdots\\
    0 & 1 & 0 & 0 & 0 & 0 & 0 & \cdots\\
    0 & 0 & 1 & 0 & 0 & 0 & 0 & \cdots\\
    0 & 0 & 0 & 1 & 0 & 0 & 0 &\cdots\\
    0 & 0 & 0 & 0 & 1 & 0 & 0 &\cdots\\
    0 & 0 & 0 & 0 & 0 & 1 & 0 &\cdots\\
    \vdots & \vdots & \vdots & \vdots & \vdots & \vdots & \vdots & \ddots
    \end{pmatrix}.
\end{equation*}
Although $A$ has not a full line of 1's, the corresponding Exel-Laca algebra is unital. Indeed, we have that $Q_1 + Q_2 = 1$, and therefore
\begin{equation*}
    1 \in \overline{\spann\left\{ \begin{array}{l l}
         & F \text{ finite}; \text{ }\alpha, \beta \text{ finite admissible words};\\
         T_\alpha \left(\prod_{i \in F}Q_i\right)T_\beta^*: &F \neq \emptyset \text{ or } \alpha \text{ is not the empty word} \\
         &\text{or }  \beta \text{ is not an empty word}
    \end{array}\right\} } \simeq \mathcal{O}_A,
\end{equation*}
that is, $\mathcal{O}_A$ is unital, and hence $X_A$ is compact. The graph associated to the matrix $A$ has 2 infinite emitters and therefore $\Sigma_A$ is not locally compact.  The only two possibilities for accumulation point on the space of columns of $A$ are
\begin{equation*}
    c_1 = \begin{pmatrix} 1 \\ 0 \\0 \\ 0 \\ 0 \\\vdots \end{pmatrix} \quad \text{and} \quad c_2 = \begin{pmatrix} 0 \\ 1 \\0 \\ 0 \\ 0 \\\vdots \end{pmatrix}.
\end{equation*}
By Proposition \ref{prop:characterization_of_elements_of_Y_A}, the possible configurations in $\xi \in Y_A$ are precisely following ones:
\begin{enumerate}
    \item $\kappa(\xi)$ ends in `$1$' or $e$ and $R_\xi(\kappa(\xi)) = c_1$;
    \item $\kappa(\xi)$ ends in `$2$' or $e$ and $R_\xi(\kappa(\xi)) = c_2$.
\end{enumerate}
The empty stem configurations are displayed in figure \ref{fig:empty_stem_alternating_renewal}.

\begin{figure}[H]
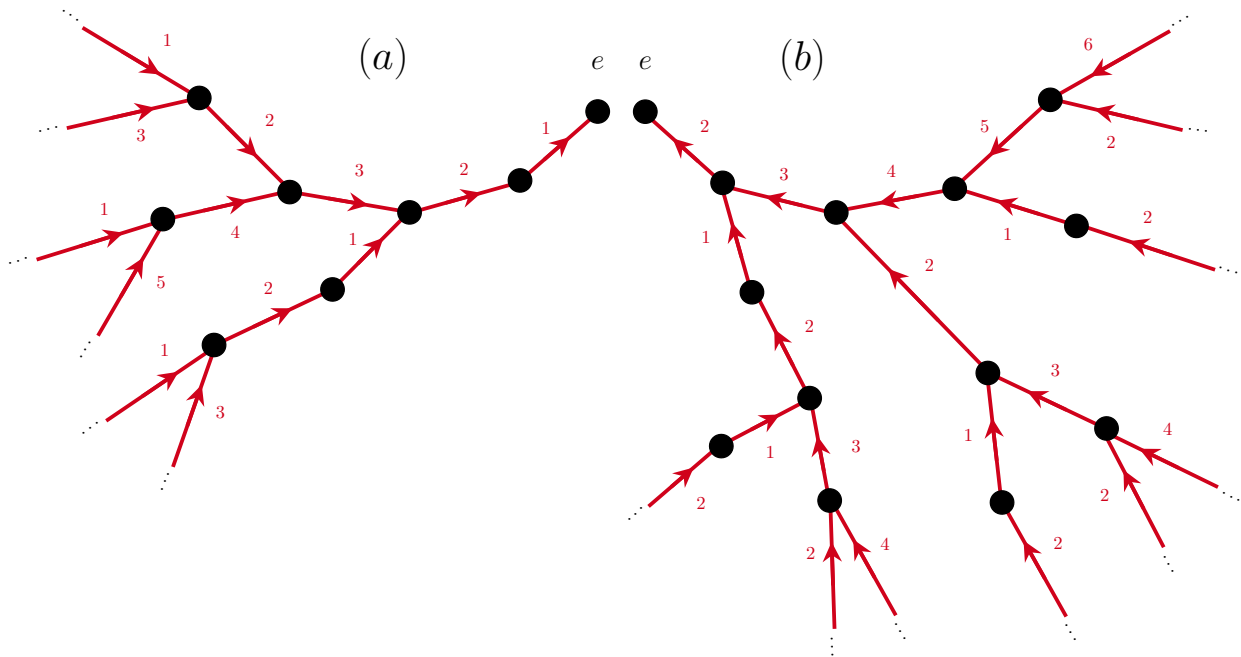

    \centering
\scalebox{0.63}{

\tikzset{every picture/.style={line width=0.75pt}} %set default line width to 0.75pt        

% [inline block 4: 1 envs, 32582 chars -> data_tex | \begin{tikzpicture}[x=0.75pt,y=0.75pt,yscale=-1,xscale=1] %uncomment if require: \path (0,543); %set diagram left start ...]


}
\caption{The empty stem configurations of the alternating renewal and only the filled vertices are shown. The configuration $(a)$ is the one such that the root of $e$ relative to this configuration is $c_1$, and the remaining one $(b)$ it satisfies that the root of $e$ relative to this configuration is $c_2$. \label{fig:empty_stem_alternating_renewal}}
\end{figure}

\section{Topological Aspects of Generalized Markov shifts}

Now we study the generalized Markov shift space $X_A$ more deeply under the topological point of view. We recall that the topology on $X_A$ is the subspace topology of the product topology on $\{0,1\}^\mathbb{F}$. In other words: the topology of $X_A$ is generated by the subbasis $\{\pi_\omega^{-1}(B)\cap X_A: \omega \in \mathbb{F}, B \subseteq \{0,1\}\}$. It is straightforward to notice that the topology of $X_A$ is second countable. Consider the \textit{generalized cylinder sets} defined by $C_\omega = \{\xi \in X_A: \xi_\omega =1\}$, where $\omega \in \mathbb{F}$. We have that
\begin{equation*}
    \pi_\omega^{-1}(B)\cap X_A = \begin{cases}
                                    \emptyset, \quad \text{if }B = \emptyset, \\
                                    C_\omega, \quad \text{if }B = \{1\}, \\
                                    C_\omega^c, \quad \text{if }B = \{0\}, \\
                                    X_A, \quad \text{if }B = \{0,1\}.
                                 \end{cases}
\end{equation*}
If we consider the topological basis $\mathcal{B}$ of the finite intersections of the elements of the family $$\{C_g:g=\alpha \gamma^{-1}\}\cup\{C_g^c:g=\alpha \gamma^{-1}\},$$then we already have a topological basis under the hypotheses of Theorem 8.2.17 of \cite{Bogachev2007}. However, many of the elements of $\mathcal{B}$ are redundant and we do not need to test the convergence for every $C_g$, $g \in \mathbb{F}$, as well as the are not needed to be part of the basis. Indeed, since for every $g \in \mathbb{F}$ which is not in the reduced form $\alpha \gamma^{-1}$ for all $\alpha, \gamma$ finite positive admissible words, we have that $\xi_g = 0$ for all $\xi \in X_A$, then $\pi_g^{-1}(B) \in \{\emptyset,X_A\}$ and such terms may be excluded of the basis without changes in the topology. In addition, we can simplify the basis even further by the following results. The main idea is to simplify the topological basis and prove the convergence of the measures on it. In the classical case, the convergence of measures on cylinders is sufficient to grant the weak$^*$ convergence and the same idea will be used here.

\begin{theorem}\label{thm:cylinders_are_compact} Let $\alpha$ be a positive admissible word and suppose that $\mathcal{O}_A$ is non-unital. Then the set $C_\alpha$ is compact on $X_A$.
\end{theorem}

\begin{proof} It is equivalent to prove that $C_\alpha \subseteq X_A$ is closed on $\widetilde{X}_A$. Since the topology on $\widetilde{X}_A$ is the product topology on $\{0,1\}^\mathbb{F}$ we have that any sequence $(\xi^n)_\mathbb{N}$ in $\widetilde{X}_A$ converges to an element $\xi \in \widetilde{X}_A$ if and only if it converges coordinate-wise. Let $(\xi^n)_\mathbb{N}$ be a sequence in $C_\alpha$ converging to $\xi \in \widetilde{X}_A$. Then $\xi_\alpha = \lim_{n \to \infty} (\xi^n)_\alpha = \lim_{n \to \infty} 1 = 1$. Therefore $\xi \in C_\alpha$, and then $C_\alpha$ is closed in $\widetilde{X}_A$, that is, $C_\alpha$ is compact in $X_A$.
\end{proof}

\begin{proposition}\label{prop:cylinder_reducing_inverse} For any $\alpha, \gamma$ positive admissible words with $\gamma= \gamma_0 \cdots \gamma_{|\gamma|-1}$, $|\gamma|>1$, we have that $C_{\alpha \gamma^{-1}}=C_{\alpha \gamma_{|\gamma|-1}^{-1}}$.
\end{proposition}

\begin{proof} The inclusion $C_{\alpha \gamma^{-1}}\subseteq C_{\alpha \gamma_{|\gamma|-1}^{-1}}$ is obvious from the convexity property and the fact that $\xi_e = 1$ for all configurations in $X_A$ give us the equivalence
\begin{equation} \label{eq:equiv_convex}
    \xi_\omega = 1 \iff \xi_\nu = 1, \forall \nu \in \llbracket \omega \rrbracket,
\end{equation}
which is valid for the particular case when $\xi \in C_{\alpha\gamma^{-1}}$ $\omega = \alpha \gamma^{-1}$ with $\nu = \alpha \gamma_{|\gamma|-1}^{-1}$. For the opposite inclusion, we recall that for any configuration $\xi \in X_A$ the following holds: if $\omega \in \mathbb{F}$ and $y \in \mathbb{N}$ satisfy $\xi_{\omega} = \xi_{\omega y}=1$, then for any $x \in \mathbb{N}$
\begin{equation}\label{eq:equiv_past_filled}
    \xi_{\omega x^{-1}} = 1 \iff A(x,y) = 1.
\end{equation}
In particular, we may apply the equivalence above for $\xi \in C_{\alpha \gamma_{|\gamma|-1}^{-1}}$, $\omega = \alpha \gamma_{|\gamma|-1}^{-1}$, $y = \gamma_{|\gamma|-1}$ and $x = \gamma_{|\gamma|-2}$. The rest of the proof follows by repeating the process for $\omega = \alpha (\gamma_{p} \cdots \gamma_{|\gamma|-1})^{-1}$, $y = \gamma_{p}$ and $x = \gamma_{p-1}$ for $p \in \{1,..., \gamma_{|\gamma|-2}\}$.
\end{proof}

In order to proceed with more simplifications of the basis, we define the following. For given positive admissible words $\alpha$ and $\gamma$ we define the finite sets $F_\alpha := \{\xi \in Y_A: \kappa(\xi) \in  \llbracket \alpha \rrbracket\setminus\{\alpha\}\}$, $F_\alpha^* := \{\xi \in Y_A: \kappa(\xi) \in  \llbracket \alpha \rrbracket\}$ and $F_\gamma^\alpha := \{\xi \in F_\gamma: \alpha \in \llbracket \kappa(\xi) \rrbracket\}$. Note that $F_\gamma^\gamma = \emptyset$ and that $F_\gamma^\alpha = \emptyset$ if $\alpha \notin \llbracket\gamma\rrbracket$. Also, define for any $H,I \subseteq \mathbb{N}$ the sets $G(\alpha,H):=\{\xi \in X_A: \kappa(\xi)=\alpha, \xi_{\alpha j^{-1}}=1, \forall j \in H\}$, with $G(\alpha,j):=G(\alpha,\{j\})$, for $j \in \mathbb{N}$; $K(\alpha,H):=\{\xi \in X_A: \kappa(\xi)=\alpha, \xi_{\alpha j^{-1}}=0, \forall j \in H\}$, with $K(\alpha,j):= K(\alpha,\{j\})$, for $j \in \mathbb{N}$; and $GK(\alpha,H,I):= G(\alpha,H) \cap H(\alpha,I)$, with $GK(\alpha,j,k) := GK(\alpha,\{j\},\{k\})$, for $j,k \in \mathbb{N}$. Finally, for $m \in \{0,1,..., |\alpha|-1\}$, define the words $\delta^0(\alpha) = e$ and $\delta^m(\alpha) = \alpha_0 \cdots \alpha_{m-1}$, for $m \neq 0$. 

\begin{proposition}\label{prop:general_C_alpha_inverse_j} For every $\alpha$ positive admissible word and for every $j \in \mathbb{N}$ s.t. $j \neq \alpha_{|\alpha|-1}$ we have that
\begin{equation*}
    C_{\alpha j^{-1}} = G(\alpha,j)\sqcup \bigsqcup_{k:A(j,k)=1}C_{\alpha k}.
\end{equation*}
\end{proposition}

\begin{proof} The inclusion
\begin{equation*}
    C_{\alpha j^{-1}} \supseteq G(\alpha,j)\sqcup \bigsqcup_{k:A(j,k)=1}C_{\alpha k}
\end{equation*}
is straightforward. For the opposite inclusion, let $\xi \in C_{\alpha j^{-1}}$. If $\kappa(\xi)= \alpha$, then $\xi \in G(\alpha,j)$. Suppose then $\kappa(\xi)\neq \alpha$. By convexity and $\xi_e=1$ we get $\alpha \in \llbracket\kappa(\xi)\rrbracket\setminus\{\kappa(\xi)\}$, and hence $\alpha k \in \llbracket\kappa(\xi)\rrbracket$ for some $k \in \mathbb{N}$. Since $\xi_{\alpha j^{-1}}=1$, we have necessarily that $A(j,k)=1$. 
\end{proof}
\begin{remark} On the notations of the proposition above, it is straightforward to notice that
\begin{equation*}
    G(\alpha,j) = \overline{\bigsqcup_{k:A(j,k)=1}C_{\alpha k}}\mathbin{\Big\backslash}\left\{\bigsqcup_{k:A(j,k)=1}C_{\alpha k}\right\} = \partial\left(\bigsqcup_{k:A(j,k)=1}C_{\alpha k}\right),
\end{equation*}
in other words,
\begin{equation*}
    C_{\alpha j^{-1}} = \overline{\bigsqcup_{k:A(j,k)=1}C_{\alpha k}}. 
\end{equation*}
\end{remark}

\begin{proposition}\label{prop:general_C_alpha_complement} Let $\alpha$ be a non-empty positive admissible word. Then,
\begin{equation*}
    C_\alpha^c = F_\alpha\sqcup\bigsqcup_{m=0}^{|\alpha|-1}\bigsqcup_{k\neq \alpha_m} C_{\delta^m(\alpha)k}.
\end{equation*}
\end{proposition}
\begin{proof} The inclusion 
\begin{equation*}
    C_\alpha^c \supseteq F_\alpha\sqcup\bigsqcup_{m=0}^{|\alpha|-1}\bigsqcup_{k\neq \alpha_m} C_{\delta^m(\alpha)k}
\end{equation*}
is straightforward. Let $\xi \in C_\alpha^c$. If $\kappa(\xi) \in \llbracket \alpha \rrbracket$ then we necessarily have that $\xi \in F_\alpha$. Suppose that $\kappa(\xi) \notin \llbracket \alpha \rrbracket$. Then there exists $m \in \{0,...,|\alpha|-1\}$ and $k \in \mathbb{N}\setminus\{\alpha_m\}$ such that $\delta^m(\alpha)k \in \llbracket\kappa(\xi)\rrbracket$ and therefore $\xi \in C_{\delta^m(\alpha)k}$. 
\end{proof}

\begin{proposition}\label{prop:general_C_alpha_j_inverse_complement} Let $\alpha$ be a non-empty positive admissible word and $j \neq \alpha_{|\alpha|-1}$. Then,
\begin{equation*}
    C_{\alpha j^{-1}}^c = K(\alpha,j) \sqcup F_\alpha \sqcup \left( \bigsqcup_{m=0}^{|\alpha|-1}\bigsqcup_{p\neq \alpha_m}  C_{\delta^m(\alpha )p}\right)\sqcup \bigsqcup_{\substack{p: A(j,p)=0}}  C_{\alpha p}.
\end{equation*}
\end{proposition}
\begin{proof} By Proposition \ref{prop:general_C_alpha_inverse_j} and Proposition \ref{prop:general_C_alpha_complement} we obtain
\begin{align*}
    C_{\alpha j^{-1}}^c =  \bigcap_{k:A(j,k)=1} \left(\left(G(\alpha,j)^c\cap F_{\alpha k}\right)\sqcup\bigsqcup_{m=0}^{|\alpha|}\bigsqcup_{p\neq (\alpha k)_m} G(\alpha,j)^c\cap C_{\delta^m(\alpha k)p}\right).
\end{align*}
In order to characterize the sets $G(\alpha,j)^c\cap F_{\alpha k}$ we use the following identity,
\begin{align*}
    G(\alpha,j)^c &= \Sigma_A \sqcup G_1\sqcup G_2 \sqcup K(\alpha,j),\text{ where }G_1 = C_\alpha^c \cap Y_A \quad \text{and}\\
    G_2 &= \left\{\xi \in Y_A:\alpha \in \llbracket\kappa(\xi)\rrbracket\setminus\{\kappa(\xi)\}\right\}.
\end{align*}
It is straightforward to verify that
\begin{equation*}
    F_{\alpha k} \cap \Sigma_A = F_{\alpha k} \cap G_2 = \emptyset, \quad  F_{\alpha k} \cap G_1 = F_\alpha, \quad F_{\alpha k} \cap K(\alpha,j) = K(\alpha,j).
\end{equation*}
Now we use the decomposition
\begin{equation*}
    G(\alpha,j)^c = G_3\sqcup K(\alpha,j),\text{ where  }  G_3 =\left\{\xi \in X_A:\kappa(\xi) \neq \alpha \right\},
\end{equation*}
in order to obtain for $m \in \{0,...,|\alpha|\}$ and $p \neq (\alpha k)_m$ the identities
\begin{equation*}
    C_{\delta^m(\alpha k)p}\cap G_3 = C_{\delta^m(\alpha k)p} \quad \text{and} \quad C_{\delta^m(\alpha k)p}\cap K(\alpha,j) = \emptyset.
\end{equation*}
Thus we have \vspace{-3mm}
\begin{align*}
    C_{\alpha j^{-1}}^c &= K(\alpha,j) \sqcup F_\alpha \sqcup \bigcap_{k:A(j,k)=1}\left( \bigsqcup_{m=0}^{|\alpha|-1}\bigsqcup_{p\neq \alpha_m}  C_{\delta^m(\alpha )p}\right)\sqcup \bigcap_{k:A(j,k)=1}\bigsqcup_{p\neq k}  C_{\alpha p}\\
    &= K(\alpha,j) \sqcup F_\alpha \sqcup \left( \bigsqcup_{m=0}^{|\alpha|-1}\bigsqcup_{p\neq \alpha_m}  C_{\delta^m(\alpha )p}\right)\sqcup \bigsqcup_{\substack{p: A(j,p)=0}}  C_{\alpha p}.
\end{align*} 
\end{proof}

\begin{remark} Proposition \ref{prop:general_C_alpha_inverse_j} and Proposition \ref{prop:general_C_alpha_j_inverse_complement} are also valid for $\alpha = e$, removing the hypothesis $j \neq \alpha_{|\alpha|-1}$. The proof is the same as presented above.
\end{remark}

We moved some auxiliary lemmas to the appendix to avoid technicalities. The lemmas are used to prove the next theorem and identities involving intersections of generalized cylinders, the most obvious is the following:
\begin{equation}\label{eq:C_cap_C}
    C_\alpha \cap C_\gamma = \begin{cases}
                                C_\alpha, \quad \text{if } \gamma \in \llbracket \alpha \rrbracket,\\
                                C_\gamma, \quad \text{if } \alpha \in \llbracket \gamma \rrbracket,\\
                                \emptyset, \quad \text{otherwise};
                            \end{cases} 
\end{equation}
which holds for every $\alpha$ and $\gamma$ positive admissible words.

Summarizing the results, Proposition \ref{prop:cylinder_reducing_inverse} grants that we only need to consider the basis of the finite intersections of generalized cylinders in the form $C_\alpha$ and $C_{\alpha j^{-1}}$, $\alpha$ positive admissible word and $j \in \mathbb{N}$, and their complements. However, Proposition \ref{prop:general_C_alpha_inverse_j}, \ref{prop:general_C_alpha_complement} and \ref{prop:general_C_alpha_j_inverse_complement} show that $C_{\alpha j^{-1}}$ as much as $C_\alpha$ and $C_{\alpha j^{-1}}$ can be written as a disjoint union of generalized cylinders disjointly united with a subset of $Y_A$. It is remarkable that for any finite intersection of these elements we obtain again a disjoint union between a subset of $Y_A$ and a disjoint union of cylinders. It is important to emphasize that all these open sets are actually clopen sets. %tirei frase do paper aqui

The next result presents all the possible intersections between two elements of the subbasis.

\begin{theorem}\label{thm:huge_generators_intersections} Given $\alpha,\gamma$ positive admissible words and $j,k \in \mathbb{N}$ we have that
\begin{align}
    C_\alpha \cap C_\gamma^c &= 
    \begin{cases}
        F_{\gamma}^{\alpha}\sqcup \bigsqcup_{n=|\alpha|}^{|\gamma|-1} \bigsqcup_{j\neq \gamma_n} C_{\delta^{n}(\gamma)j},\quad \alpha \in \llbracket \gamma \rrbracket,\\
        C_\alpha,\quad \alpha \notin \llbracket \gamma \rrbracket \text{ and } \gamma \notin \llbracket \alpha \rrbracket, \\
        \emptyset, \quad \text{otherwise};
    \end{cases} \label{eq:C_cap_C_comp}\\
    C_\alpha^c \cap C_\gamma^c &= \begin{cases}
                                    C_\alpha^c, \quad \alpha \in \llbracket \gamma \rrbracket,\\
                                    C_\gamma^c, \quad \gamma \in \llbracket \alpha \rrbracket,\\
                                    F_{\alpha'}^* \sqcup F_\alpha^{\alpha'\alpha_{|\alpha'|}} \sqcup F_\gamma^{\alpha'\gamma_{|\alpha'|}} \sqcup  \left(\bigsqcup_{m = 0}^{|\alpha|-1} \bigsqcup_{k\neq \alpha_m} C_{\delta^m(\alpha)k}\right) \\
                                    \qquad \qquad \qquad \sqcup \left(\bigsqcup_{|\alpha'| < n < |\gamma|-1} \bigsqcup_{p\neq \gamma_n} C_{\delta^n(\gamma )p}\right), \quad \text{otherwise};
                                  \end{cases}\label{eq:C_comp_cap_C_comp}\\
    C_\alpha \cap C_{\gamma j^{-1}} &= \begin{cases}
                                    C_{\gamma j^{-1}},\quad \alpha \in \llbracket \gamma \rrbracket,\\
                                    C_\alpha,\quad \gamma \in \llbracket \alpha \rrbracket \setminus \{\alpha\} \text{ and } A(j,\alpha_{|\gamma|})=1, \\
                                    \emptyset, \quad \text{otherwise};
                                    \end{cases}\label{eq:C_cap_C_inverse}\\
    C_\alpha \cap C_{\gamma j^{-1}}^c &= \begin{cases}
                                    K(\gamma,j) \sqcup F_\gamma^\alpha \sqcup \left( \bigsqcup_{n=|\alpha|}^{|\gamma|-1}\bigsqcup_{p\neq \gamma_m} C_{\delta^n(\gamma)p}\right)\sqcup \bigsqcup_{\substack{p: A(j,p)=0}} C_{\gamma p},\quad \alpha \in \llbracket \gamma \rrbracket,\\
                                    C_\alpha,\quad \gamma \in \llbracket \alpha \rrbracket \setminus \{\alpha\} \text{ and } A(j,\alpha_{|\gamma|})=0, \text{ or } \alpha \notin \llbracket \gamma \rrbracket \text{ and }\alpha \notin \llbracket \gamma \rrbracket,\\
                                    \emptyset, \quad \text{otherwise};
                                    \end{cases}\label{eq:C_cap_C_inverse_comp}\\
    C_\alpha^c \cap C_{\gamma j^{-1}} &= \begin{cases}
                                        \emptyset,\quad \alpha \in \llbracket \gamma \rrbracket,\\
                                        G(\gamma,j) \sqcup F_\alpha^{\gamma \alpha_{|\gamma|}} \sqcup  \mathfrak{C}[\alpha,\gamma,j], \quad \gamma \in \llbracket \alpha \rrbracket \setminus \{\alpha\} \text{ and }A(j,\alpha_{|\gamma|}) = 1,\\
                                        G(\gamma,j) \sqcup  \mathfrak{C}[\alpha,\gamma,j], \quad \gamma \in \llbracket \alpha \rrbracket \setminus \{\alpha\} \text{ and }A(j,\alpha_{|\gamma|}) = 0,\\
                                        C_{\gamma j^{-1}}, \quad \text{otherwise};
                                        \end{cases}\label{eq:C_comp_cap_C_inverse}\\
    C_\alpha^c \cap C_{\gamma j^{-1}}^c &= \begin{cases}
                                C_\alpha^c,\quad \alpha \in \llbracket \gamma \rrbracket,\\
                                K(\gamma,j) \sqcup F_\gamma \sqcup F_\alpha^{\gamma \alpha_{|\gamma|}} \sqcup \left(\bigsqcup_{n=0}^{|\gamma|-1}\bigsqcup_{p\neq \gamma_n}  C_{\delta^n(\gamma )p}\right) \\
                                \qquad \qquad \qquad \sqcup \mathfrak{D}[\alpha,\gamma,j], \quad \gamma \in \llbracket \alpha \rrbracket \setminus \{\alpha\} \text{ and }A(j,\alpha_{|\gamma|}) = 0,\\
                                K(\gamma,j) \sqcup F_\gamma \sqcup \left(\bigsqcup_{n=0}^{|\gamma|-1}\bigsqcup_{p\neq \gamma_n}  C_{\delta^n(\gamma )p} \right)\\
                                \qquad \qquad \qquad \sqcup \mathfrak{D}[\alpha,\gamma,j], \quad \gamma \in \llbracket \alpha \rrbracket \setminus \{\alpha\} \text{ and }A(j,\alpha_{|\gamma|}) = 1,\\
                                K(\gamma,j) \sqcup F_\alpha^{\alpha' \alpha_{|\gamma|}} \sqcup F_\gamma^{\alpha'\gamma_{|\alpha'|}} \sqcup \left(\bigsqcup_{m = 0}^{|\alpha|-1} \bigsqcup_{k\neq \alpha_m} C_{\delta^m(\alpha)k}\right) \\\qquad \qquad \qquad \sqcup \left(\bigsqcup_{|\alpha'| < n < |\gamma|-1} \bigsqcup_{p\neq \gamma_n} C_{\delta^n(\gamma )p}\right)\\
                                \qquad \qquad \qquad \sqcup \bigsqcup_{\substack{p: A(j,p)=0}} C_{\gamma p} , \quad \text{otherwise}; \end{cases}\label{eq:C_comp_cap_C_inverse_comp}
\end{align}

\begin{align}
    C_{\alpha j^{-1}} \cap C_{\gamma l^{-1}} &= \begin{cases}
                                G(\alpha,\{j,l\}) \sqcup \bigsqcup_{k:A(j,k)A(l,k)=1}C_{\alpha k},\quad \alpha = \gamma,\\
                                G(\gamma,l) \sqcup  \bigsqcup_{m:A(l,m)=1}  C_{\gamma m}, \quad \alpha \in \llbracket \gamma \rrbracket \setminus \{\gamma\} \text{ and }A(j,\gamma_{|\alpha|}) = 1,\\
                                 G(\alpha,j) \sqcup  \bigsqcup_{k:A(j,k)=1}  C_{\alpha k}, \quad \gamma \in \llbracket \alpha \rrbracket \setminus \{\alpha\} \text{ and }A(l,\alpha_{|\gamma|}) = 1,\\
                            \emptyset, \quad \text{otherwise}; \end{cases}\label{eq:C_inverse_cap_C_inverse}\\
    C_{\alpha j^{-1}} \cap C_{\gamma l^{-1}}^c &= \begin{cases}
                                GK(\alpha,j,l) \sqcup \bigsqcup_{k:A(j,k) = 1,A(l,k)=0}C_{\alpha k},\quad \alpha = \gamma,\\
                                K(\gamma,l) \sqcup G(\alpha,j) \sqcup F_{\gamma}^{\alpha \gamma_{|\alpha|}} \sqcup \mathfrak{C}[\gamma,\alpha,j]\\
                                \qquad \qquad \qquad \sqcup  \bigsqcup_{m:A(l,m)=0}  C_{\gamma m}, \quad \alpha \in \llbracket \gamma \rrbracket \setminus \{\gamma\} \text{ and }A(j,\gamma_{|\alpha|}) = 1,\\
                                G(\alpha,j) \sqcup \mathfrak{C} (\gamma,\alpha,j), \quad \alpha \in \llbracket \gamma \rrbracket \setminus \{\gamma\} \text{ and }A(j,\gamma_{|\alpha|}) = 0,\\    
                                G(\alpha,j) \sqcup  \bigsqcup_{k:A(j,k)=1}  C_{\alpha k}, \quad \gamma \in \llbracket \alpha \rrbracket \setminus \{\alpha\} \text{ and }A(l,\alpha_{|\gamma|}) = 0, \\
                                \qquad \qquad \qquad \text{ or if } \alpha \notin \llbracket \gamma \rrbracket \text{ and } \gamma \notin \llbracket  \alpha \rrbracket\\
                                \emptyset, \quad \text{otherwise}; \end{cases}\label{eq:C_inverse_cap_C_inverse_comp}\\
    C_{\alpha j^{-1}}^c \cap C_{\gamma l^{-1}}^c &= \begin{cases}
                                K(\alpha,\{j,l\}) \sqcup F_\alpha \sqcup \left(\bigsqcup_{k: A(j,k) = A(l,k) = 0} C_{\alpha k}\right)\\
                                \qquad \qquad \qquad \sqcup \left(\bigsqcup_{m = 0}^{|\alpha| - 1} \bigsqcup_{k\neq \alpha_m} C_{\delta^m(\alpha)k}\right),\quad \alpha = \gamma,\\
                                K(\alpha,j) \sqcup K(\gamma,l) \sqcup F_\alpha \sqcup F_\gamma^{\alpha \gamma_{|\alpha|}} \sqcup \mathfrak{D}[\gamma,\alpha,l] \sqcup \left( \bigsqcup_{m = 0}^{|\alpha| - 1} \bigsqcup_{k\neq \alpha_m} C_{\delta^m(\alpha)k} \right) \\
                                \qquad \qquad \qquad \sqcup  \left(\bigsqcup_{p:A(A(l,p)=0)}  C_{\gamma p}\right), \quad \alpha \in \llbracket \gamma \rrbracket \setminus \{\gamma\} \text{ and }A(j,\gamma_{|\alpha|}) = 0,\\
                                K(\alpha,j) \sqcup F_\alpha \sqcup \mathfrak{D}[\gamma,\alpha,l]\\
                                \qquad \qquad \qquad \sqcup \left( \bigsqcup_{m = 0}^{|\alpha| - 1} \bigsqcup_{k\neq \alpha_m} C_{\delta^m(\alpha)k} \right), \quad \alpha \in \llbracket \gamma \rrbracket \setminus \{\gamma\} \text{ and }A(j,\gamma_{|\alpha|}) = 1,\\
                                K(\alpha,j) \sqcup K(\gamma,l) \sqcup F_\gamma \sqcup F_\alpha^{\gamma \alpha_{|\gamma|}} \sqcup \mathfrak{D}[\alpha,\gamma,j] \sqcup \left( \bigsqcup_{n = 0}^{|\gamma| - 1} \bigsqcup_{p\neq \gamma_n} C_{\delta^n(\gamma)p} \right)\\
                                \qquad \qquad \qquad \sqcup  \left(\bigsqcup_{k:A(A(j,k)=0)}  C_{\alpha k} \right), \quad \gamma \in \llbracket \alpha \rrbracket \setminus \{\alpha\} \text{ and }A(l,\alpha_{|\gamma|}) = 0,\\
                                K(\gamma,l) \sqcup F_\gamma \sqcup \mathfrak{D}[\alpha,\gamma,j]\\
                                \qquad \qquad \qquad \sqcup \left( \bigsqcup_{n = 0}^{|\gamma| - 1} \bigsqcup_{p\neq \gamma_n} C_{\delta^n(\gamma)p} \right), \quad \gamma \in \llbracket \alpha \rrbracket \setminus \{\alpha\} \text{ and }A(l,\alpha_{|\gamma|}) = 1,\\
                                K(\alpha,j) \sqcup K(\gamma,l) \sqcup F_{\alpha'} \sqcup F_\alpha^{\alpha'\alpha_{|\alpha'|}} \sqcup F_\gamma^{\alpha'\gamma_{|\alpha'|}} \sqcup  \left(\bigsqcup_{k:A(A(j,k)=0)}  C_{\alpha k} \right) \\\qquad \qquad \qquad \sqcup \left( \bigsqcup_{p:A(A(l,p)=0)}  C_{\gamma p}\right)
                                \sqcup \left(\bigsqcup_{m = 0}^{|\alpha|-1} \bigsqcup_{k\neq \alpha_m} C_{\delta^m(\alpha)k}\right)\\
                                \qquad \qquad \qquad \sqcup \left(\bigsqcup_{|\alpha'| < n < |\gamma|-1} \bigsqcup_{p\neq \gamma_n} C_{\delta^n(\gamma )p}\right), \quad \text{otherwise}; \end{cases}\label{eq:C_inverse_comp_cap_C_inverse_comp}
\end{align}
where
\begin{equation*}
    \mathfrak{C}[\alpha,\gamma,j] := \begin{cases}
    \bigsqcup_{\substack{p:A(j,p)=1,\\p \neq \alpha_{|\gamma|}}} C_{\gamma p}, \quad |\alpha| = |\gamma| + 1,\\
    \left(\bigsqcup_{m=|\gamma|+1}^{|\alpha|-1} \bigsqcup_{k \neq \alpha_m} C_{\delta^m(\alpha)k}\right) \sqcup \bigsqcup_{\substack{p:A(j,p)=1,\\p \neq \alpha_{|\gamma|}}} C_{\gamma p}, \quad |\alpha| > |\gamma| + 1,
    \end{cases}
\end{equation*}
and
\begin{equation*}
    \mathfrak{D}[\alpha,\gamma,j] := \begin{cases}
    \bigsqcup_{\substack{p:A(j,p)=0,\\p \neq \alpha_{|\gamma|}}} C_{\gamma p}, \quad |\alpha| = |\gamma| + 1,\\
    \left(\bigsqcup_{m=|\gamma|+1}^{|\alpha|-1} \bigsqcup_{k \neq \alpha_m} C_{\delta^m(\alpha)k}\right) \sqcup \bigsqcup_{\substack{p:A(j,p)=0,\\p \neq \alpha_{|\gamma|}}} C_{\gamma p}, \quad |\alpha| > |\gamma| + 1.
    \end{cases}
\end{equation*}
Also, $\alpha'$ is the longest word in $\llbracket \alpha \rrbracket \cap \llbracket \gamma \rrbracket$.
\end{theorem}

\begin{proof} See appendix \ref{ape:proof_giant_theorem_generalized_cylinders}. 
\end{proof}

\begin{proposition}\label{prop:X_A_decomposition} Consider the set $X_A$. Then,
\begin{equation*}
    X_A = \left(\bigsqcup_{j \in \mathbb{N}}G(e,j)\right) \sqcup \left(\bigsqcup_{j \in \mathbb{N}}C_j\right).
\end{equation*}
\end{proposition}

\begin{proof} The inclusion
\begin{equation*}
    X_A \supseteq \left(\bigsqcup_{j \in \mathbb{N}}G(e,j)\right) \sqcup \left(\bigsqcup_{j \in \mathbb{N}}C_j\right)
\end{equation*}
is straightforward. For the opposite inclusion, let $\xi \in X_A$. We have two possibilities, namely $\kappa(\xi) \neq e$ or $\kappa(\xi) = e$. In the first case, we have necessarily that $\xi \in C_j$ for some $j \in \mathbb{N}$. Now if $\kappa(\xi) = e$, we have necessarily that $\xi \in G(e,j)$ for some $j \in \mathbb{N}$, otherwise $\xi = \phi$, the configuration filled only in $e$. This is not possible because $\phi \notin X_A$ if $X_A$ $\mathcal{O}_A$ is unital and $\phi \in \widetilde{X}_A \setminus X_A$ otherwise. 
\end{proof}

\begin{corollary} It is true that
\begin{equation*}
    X_A = \bigcup_{j \in \mathbb{N}}\sigma(C_j).
\end{equation*}
\end{corollary}
\begin{proof} Let $\xi \in X_A$. By Proposition \ref{prop:X_A_decomposition} we have two possibilities: $\xi \in C_j$, for some $j \in \mathbb{N}$, or $\xi \in G(e,j)$ for some $j \in \mathbb{N}$. In the first case, we have that $\xi_j = 1$ for some $j \in \mathbb{N}$ and hence $R_\xi(j) = \{i \in \mathbb{N}: \xi_{ji^{-1}}\} = \{i \in \mathbb{N}: A(i,j) = 1\}$, which is not empty because $A$ is transitive. Then we have at least one\footnote{Actually $\xi \in \sigma(C_{ij})$ for every $i$ such that $A(i,j) = 1$.} $i \in \mathbb{N}$ such that
\begin{equation*}
    \xi \in \sigma(C_{ij}) \subseteq \sigma(C_i),
\end{equation*}
where $A(i,j) = 1$. Now, if $\xi \in G(e,j)$ for some $j \in \mathbb{N}$, then $\xi_{j^{-1}} = 1$. It is straightforward to notice that $j\xi \in C_j$ and $\sigma(j\xi) = \xi$. Therefore, $\xi \in \sigma (C_j)$. 
\end{proof}

\subsection{Cylinder topology: Renewal shift}
\label{subsec:cylinder_topology_Renewal}

In the particular case of the renewal shift, we have a simple characterization and we present it now. Observe that, for every generalized cylinder $C_\alpha$ on a positive word $\alpha$, if it does not end with $1$, we have
\begin{equation*}
    C_{\alpha} = C_{\gamma},
\end{equation*}
where $\gamma$ is the unique shortest word that it starts with $\alpha$ and ends in $1$.

\begin{lemma}\label{lemma:G_K_renewal}
For the renewal shift we have
\begin{equation*}
    G(\alpha,F) = \begin{cases}
        \{\alpha \xi^0\}, \quad \alpha_{|\alpha| - 1} = 1 \text{ and } F = \{1\},\\
        \emptyset, \quad \text{otherwise};
    \end{cases}
\end{equation*}
and
\begin{equation*}
    K(\alpha,F) = \begin{cases}
        \{\alpha \xi^0\}, \quad \alpha_{|\alpha| - 1} = 1 \text{ and } F \neq \{1\},\\
        \emptyset, \quad \text{otherwise};
    \end{cases}
\end{equation*}
for $\alpha$ non-empty finite admissible word.
\end{lemma}

\begin{proof} From the renewal shift we have that there exists $\xi \in Y_A$ such that $\kappa(\xi) = \alpha$, that is, $\xi = \alpha \xi^0$, if and only if $\alpha_{|\alpha|-1} = 1$. We complete the proof by recalling that $R_{\alpha\xi^0}(\alpha) = \{1\}$ for $\alpha$ ending with $1$.
\end{proof}

\begin{remark} The lemma above is also valid for $\alpha = e$, just by removing the condition $\alpha_{|\alpha| - 1} = 1$.
\end{remark}

\begin{corollary}\label{cor:C_alpha_j_inv_renewal} Consider the set $X_A$ of the renewal Markov shift. For any $\alpha$ positive admissible word, $|\alpha|\neq 0$, and any $j \in \mathbb{N}$, we have that $C_{\alpha j^{-1}}$ is the empty set or $C_{\alpha j^{-1}} = C_\gamma$, where $\gamma$ is a positive admissible word.
\end{corollary}

\begin{proof} If $j = \alpha_{|\alpha|-1}$ then the statement is obvious. So consider $j \neq \alpha_{|\alpha|-1}$. By Lemma \ref{lemma:G_K_renewal} we have $G(\alpha,F) = \emptyset$. Also,
\begin{equation*}
    \bigsqcup_{k:A(j,k)=1} C_{\alpha k} = \begin{cases}
        C_{\alpha (\alpha_{|\alpha|-1} -1)}, \quad \alpha_{|\alpha|-1} > 1 \text{ and } j = 1,\\
        C_{\alpha (j-1)}, \quad \alpha_{|\alpha|-1} = 1,\\
        \emptyset, \quad \text{otherwise};
    \end{cases}
\end{equation*}
and therefore, by Proposition \ref{prop:general_C_alpha_inverse_j} we conclude that
\begin{equation*}
    C_{\alpha j^{-1}} = \begin{cases}
        C_{\alpha (\alpha_{|\alpha|-1} -1)}, \quad \alpha_{|\alpha|-1} > 1 \text{ and } j = 1,\\
        C_{\alpha (j-1)}, \quad \alpha_{|\alpha|-1} = 1,\\
        \emptyset, \quad \text{otherwise}.
    \end{cases}
\end{equation*}
\end{proof}
\begin{remark} If $\alpha = e$, then
\begin{equation*}
    C_{j^{-1}} = \begin{cases}
        X_A, \quad j = 1,\\
        \emptyset, \quad \text{otherwise};
    \end{cases}
\end{equation*}
which is straightforward from the renewal shift.
\end{remark}

\begin{remark} In the renewal shift, for given positive admissible words $\alpha$ and $\gamma$, we have that
\begin{align*}
    F_\alpha &= \{\eta \xi^0 \in Y_A: \eta_{|\eta|-1} = 1, \eta \in \llbracket \alpha \rrbracket \setminus \{\alpha\}\}, \quad F_\alpha^* = \{\eta \xi^0 \in Y_A: \eta_{|\eta|-1} = 1, \eta \in \llbracket \alpha \rrbracket\},\\
    F_\gamma^\alpha &= \{\eta \xi^0 \in F_\gamma: \alpha \in \llbracket \eta \rrbracket\},
\end{align*}
all finite sets.
\end{remark}

\begin{corollary}\label{cor:C_alpha_j_inv_comp_renewal} Let $\alpha$ be a non-empty positive admissible word and $j \neq \alpha_{|\alpha|-1}$. Then,
\begin{equation*}
    C_{\alpha j^{-1}}^c = \begin{cases}
        C_{\alpha (\alpha_{|\alpha|-1} -1)}^c, \quad \alpha_{|\alpha|-1} > 1 \text{ and } j = 1,\\
        C_{\alpha (j-1)}^c, \quad \alpha_{|\alpha|-1} = 1,\\
        X_A, \quad \text{otherwise}
    \end{cases}
\end{equation*}
\end{corollary}

\begin{proof} It is straightforward from the proof of Corollary \ref{cor:C_alpha_j_inv_renewal}. 
\end{proof}

\begin{remark} Equivalently, by Proposition \ref{prop:general_C_alpha_complement}, we have
\begin{equation}\label{eq:c_alpha_inverse_comp_renewal}
    C_{\alpha j^{-1}}^c = \begin{cases}
        F_{\alpha (\alpha_{|\alpha|-1} -1)}\sqcup\bigsqcup_{m=0}^{|\alpha|}\bigsqcup_{k\neq ({\alpha (\alpha_{|\alpha|-1} -1)})_m} C_{\delta^m({\alpha (\alpha_{|\alpha|-1} -1)})k}, \quad \alpha_{|\alpha|-1} > 1 \text{ and } j = 1,\\
        F_{\alpha(j-1)}\sqcup\bigsqcup_{m=0}^{|\alpha|} \bigsqcup_{k\neq (\alpha (j-1))_m} C_{\delta^m(\alpha (j-1))k}, \quad \alpha_{|\alpha|-1} = 1,\\
        X_A, \quad \text{otherwise}.
    \end{cases}
\end{equation}
\end{remark}

The previous results in this subsection allow us to reduce the subbasis of the renewal shift to generalized cylinders on positive words and their complements. Moreover, for these cylinders, we only need to consider the words ending in `$1$'. We present the pairwise intersections of the elements of the subbasis for the generalized renewal shift in the next corollary.

\begin{corollary} For any positive admissible words $\alpha$ and $\gamma$ we have that
\begin{align*}
    C_\alpha \cap C_\gamma &= \begin{cases}
                                C_\alpha, \quad \text{if } \gamma \in \llbracket \alpha \rrbracket,\\
                                C_\gamma, \quad \text{if } \alpha \in \llbracket \gamma \rrbracket,\\
                                \emptyset, \quad \text{otherwise};
                            \end{cases} \\
    C_\alpha \cap C_\gamma^c &= \begin{cases}
                                F_{\gamma}^{\alpha}\sqcup \bigsqcup_{n=|\alpha|}^{|\gamma|-1} \bigsqcup_{j\neq \gamma_n} C_{\delta^{n}(\gamma)j},\quad \alpha \in \llbracket \gamma \rrbracket,\\
                                C_\alpha,\quad \alpha \notin \llbracket \gamma \rrbracket \text{ and } \gamma \notin \llbracket \alpha \rrbracket, \\
                                \emptyset, \quad \text{otherwise};
                               \end{cases}\\
    C_\alpha^c \cap C_\gamma^c &= \begin{cases}
                                    C_\alpha^c, \quad \alpha \in \llbracket \gamma \rrbracket,\\
                                    C_\gamma^c, \quad \gamma \in \llbracket \alpha \rrbracket,\\
                                    F_{\alpha'}^* \sqcup F_\alpha^{\alpha'\alpha_{|\alpha'|}} \sqcup F_\gamma^{\alpha'\gamma_{|\alpha'|}} \sqcup  \left(\bigsqcup_{m = 0}^{|\alpha|-1} \bigsqcup_{k\neq \alpha_m} C_{\delta^m(\alpha)k}\right)\\
                                    \qquad \qquad \qquad \sqcup \left(\bigsqcup_{|\alpha'| < n < |\gamma|-1} \bigsqcup_{p\neq \gamma_n} C_{\delta^n(\gamma )p}\right), \quad \text{otherwise}. \end{cases}
\end{align*}
\end{corollary}

\begin{proof} The first identity is straightforward, while the rest of them are precisely the equations \eqref{eq:C_cap_C_comp} and \eqref{eq:C_comp_cap_C_comp} from Theorem \ref{thm:huge_generators_intersections}.
\end{proof}

From the corollary above, it is straightforward that every finite intersection among the elements of the subbasis is in the form
\begin{equation}\label{eq:basis_renewal_is_finite_set_union_countable_cylinders}
    F \sqcup \bigsqcup_{n \in \mathbb{N}} C_{w(n)},
\end{equation}
where $F \subseteq Y_A$ is finite and $w(n)$ is a positive admissible word for each every $n \in \mathbb{N}$.

\subsection{Cylinder topology: Pair Renwewal shift}
\label{subsec:cylinder_topology_Pair}

As well as it was done in subsection \ref{subsec:cylinder_topology_Renewal}, we also characterize the cylinders and intersections for the pair renewal shift. Particularly for this section, we use the following notation: for $\alpha$ positive admissible word, we denote by $\alpha \xi^{0,k}$ the configuration of stem $\alpha$ that belongs to $Y_A(\xi^{0,k})$,$k \in \{1,2\}$.

First, we notice that, for every generalized cylinder $C_\alpha$ on a positive word $\alpha$, if it does not end with $1$ or $2$, then
\begin{equation*}
    C_{\alpha} = C_{\gamma},
\end{equation*}
where $\gamma$ is the unique shortest word that it starts with $\alpha$ and ends in $2$.

\begin{lemma}\label{lemma:G_K_Pair_renewal}
For the pair renewal shift we have
\begin{equation*}
    G(\alpha,F) = \begin{cases}
        \{\alpha \xi^{0,1},\alpha \xi^{0,2}\}, \quad \alpha_{|\alpha| - 1} = 1 \text{ and } F = \{1\};\\
        \{\alpha \xi^{0,1}\}, \quad \alpha_{|\alpha| - 1} = 1 \text{ and } F \in \{\{1,2\}, \{2\}\};\\
        \{\alpha \xi^{0,1}\}, \quad \alpha_{|\alpha| - 1} = 2 \text{ and } F \in \{\{1\},\{1,2\},\{2\}\} ;\\
        \emptyset, \quad \text{otherwise},
    \end{cases}
\end{equation*}
and
\begin{equation*}
    K(\alpha,F) = \begin{cases}
        \{\alpha \xi^{0,1},\alpha \xi^{0,2}\}, \quad \alpha_{|\alpha| - 1} = 1 \text{ and } \{1,2\} \not\subseteq F;\\
        \{\alpha \xi^{0,2}\}, \quad \alpha_{|\alpha| - 1} = 1,\text{ }  1 \notin F \text{ and }  2 \in F;\\
        \emptyset, \quad \text{otherwise},
    \end{cases}
\end{equation*}
for $\alpha$ non-empty finite admissible word.
\end{lemma}

\begin{proof} From the pair renewal shift we have that there exists $\xi \in Y_A$ such that $\kappa(\xi) = \alpha$, if and only if $\xi = \alpha \xi^{0,1}$ or $\alpha \xi^{0,2}$. We complete the proof by using that
\begin{itemize}
    \item $\xi = \alpha \xi^{0,1}$ if and only if $\alpha_{|\alpha|-1} \in \{1,2\}$ and $R_\xi(\alpha) = \{1,2\}$;
    \item $\xi = \alpha \xi^{0,2}$ if and only if $\alpha_{|\alpha|-1} = 1$ and $R_\xi(\alpha) = \{1\}$.
\end{itemize}
\end{proof}

\begin{remark} The lemma above is also valid for $\alpha = e$, just by removing the conditions for $\alpha_{|\alpha| - 1}$.
\end{remark}

\begin{corollary}\label{cor:C_alpha_j_inv_Pair_renewal} Consider the set $X_A$ of the pair renewal Markov shift. For any $\alpha$ positive admissible word, $|\alpha|\neq 0$, and any $j \in \mathbb{N}$, we have for $j \neq \alpha_{|\alpha|-1}$ that
\begin{equation*}
    C_{\alpha j^{-1}} = \begin{cases}
                            C_{\alpha (\alpha_{|\alpha|-1} - 1)}, \quad \text{if } \alpha_{|\alpha|-1}>2 \text{ and } j=1;\\
                            C_{\alpha (\alpha_{|\alpha|-1} - 1)}, \quad \text{if } \alpha_{|\alpha|-1}>2 \text{ and }, \text{ } j=2 \text{ and } \alpha_{|\alpha|-1} \text{ is even};\\
                            \{\alpha \xi^{0,1}\}\sqcup C_{\alpha 1} \sqcup \bigsqcup_{k \in \mathbb{N}} C_{\alpha (2k)}, \quad  \text{if } \alpha_{|\alpha|-1}=2 \text{ and } j=1;\\
                            \{\alpha \xi^{0,1}\}\sqcup C_{\alpha 1} \sqcup \bigsqcup_{k \in \mathbb{N}} C_{\alpha (2k)}, \quad  \text{if } \alpha_{|\alpha|-1}=1 \text{ and } j=2;\\
                            \emptyset, \quad \text{otherwise};
                        \end{cases}
\end{equation*}
\end{corollary}

\begin{proof} By Lemma \ref{lemma:G_K_Pair_renewal} we have $G(\alpha,j) = \emptyset$ if $\alpha_{|\alpha|-1} > 2$ and $\{\alpha \xi^{0,1}\}$ if when $\alpha_{|\alpha|-1} = 2$ and $j=1$, and when $\alpha_{|\alpha|-1} = 1$ and $j=2$. The remaining cases are empty. By Proposition \ref{prop:general_C_alpha_inverse_j}, it remains to analyze the unions in the form
\begin{equation*}
    \bigsqcup_{k:A(j,k)=1} C_{\alpha k}.
\end{equation*}
It is straightforward from the pair renewal transition matrix that
\begin{equation*}
    \bigsqcup_{k:A(j,k)=1} C_{\alpha k} = \begin{cases}
                                            C_{\alpha (\alpha_{|\alpha|-1} - 1)}, \quad \text{if } \alpha_{|\alpha|-1}>2 \text{ and } j=1;\\
                                            C_{\alpha (\alpha_{|\alpha|-1} - 1)}, \quad \text{if } \alpha_{|\alpha|-1}>2 \text{ and }, \text{ } j=2 \text{ and } \alpha_{|\alpha|-1} \text{ is even};\\
                                            C_{\alpha 1} \sqcup \bigsqcup_{k \in \mathbb{N}} C_{\alpha (2k)}, \quad  \text{if } \alpha_{|\alpha|-1}=2 \text{ and } j=1;\\
                                            C_{\alpha 1} \sqcup \bigsqcup_{k \in \mathbb{N}} C_{\alpha (2k)}, \quad  \text{if } \alpha_{|\alpha|-1}=1 \text{ and } j=2;\\
                                            \emptyset, \quad \text{otherwise};
                                          \end{cases}.  \tag*{\qedhere}
\end{equation*}
\end{proof}

\begin{remark} When $j = \alpha_{|\alpha|-1}$, then $C_{\alpha j^{-1}}$ is a cylinder or $X_A$. If $\alpha = e$, then 
\begin{equation*}
    C_{j^{-1}} = \begin{cases}
                    X_A, \quad j = 1,\\
                    \{\xi^{0,1}\}\sqcup C_1 \sqcup \bigsqcup_{k \in \mathbb{N}} C_{(2k)},\quad j = 2,\\        
                    \emptyset, \quad \text{otherwise};
                 \end{cases}
\end{equation*}
which is straightforward from the pair renewal shift.
\end{remark}

\begin{remark} In the pair renewal shift, for given positive admissible words $\alpha$ and $\gamma$, we have that
\begin{align*}
    F_\alpha &= \{\eta \xi^0 \in Y_A: \eta_{|\eta|-1} \in \{1,2\}, \eta \in \llbracket \alpha \rrbracket \setminus \{\alpha\}\}, \quad F_\alpha^* = \{\eta \xi^0 \in Y_A: \eta_{|\eta|-1} = 1, \eta \in \llbracket \alpha \rrbracket\},\\
    F_\gamma^\alpha &= \{\eta \xi^0 \in F_\gamma: \alpha \in \llbracket \eta \rrbracket\},
\end{align*}
all finite sets.
\end{remark}

\begin{corollary}\label{cor:C_alpha_j_inv_comp_Pair_renewal} Let $\alpha$ be a non-empty positive admissible word and $j \neq \alpha_{|\alpha|-1}$. Then,
\begin{equation*}
    C_{\alpha j^{-1}}^c = \begin{cases}
                            C_{\alpha (\alpha_{|\alpha|-1} - 1)}^c, \quad \text{if } \alpha_{|\alpha|-1}>2 \text{ and } j=1;\\
                            C_{\alpha (\alpha_{|\alpha|-1} - 1)}^c, \quad \text{if } \alpha_{|\alpha|-1}>2 \text{ and }, \text{ } j=2 \text{ and } \alpha_{|\alpha|-1} \text{ is even};\\
                            F_\alpha \sqcup \left( \bigsqcup_{m=0}^{|\alpha|-1}\bigsqcup_{p\neq \alpha_m}  C_{\delta^m(\alpha )p}\right), \quad  \text{if } \alpha_{|\alpha|-1}=2 \text{ and } j=1;\\
                            \{\alpha \xi^{0,2}\} \sqcup F_\alpha \sqcup \left( \bigsqcup_{m=0}^{|\alpha|-1}\bigsqcup_{p\neq \alpha_m}  C_{\delta^m(\alpha )p}\right)\sqcup \bigsqcup_{\substack{p\in \mathbb{N}}}  C_{\alpha (2p+1)}, \quad  \text{if } \alpha_{|\alpha|-1}=1 \text{ and } j=2;\\
                            X_A, \quad \text{otherwise};
                        \end{cases}
\end{equation*}
\end{corollary}

\begin{proof} It is straightforward from Proposition \ref{prop:general_C_alpha_j_inverse_complement}, Lemma \ref{lemma:G_K_Pair_renewal} and Corollary \ref{cor:C_alpha_j_inv_Pair_renewal}. 
\end{proof}

Unlike the renewal shift case in subsection \ref{subsec:cylinder_topology_Renewal}, the cylinders in the form $C_{\alpha j^{-1}}$ cannot be written as another cylinder. However, they are pairwise disjoint unions of positive generalized cylinders, jointly with finite subsets of $Y_A$. And the same occurs with their complements. On the other hand, we only need to consider when $\alpha$ ends in a symbol of $\{1,2\}$. Similarly to the renewal shift case, from Theorem \ref{thm:huge_generators_intersections}, it is straightforward that every finite intersection among the elements of the subbasis is in the form
\begin{equation}\label{eq:basis_pair_renewal_is_finite_set_union_countable_cylinders}
    F \sqcup \bigsqcup_{n \in \mathbb{N}} C_{w(n)},
\end{equation}
where $F \subseteq Y_A$ is finite and $w(n)$ is a positive admissible word for each every $n \in \mathbb{N}$.

\section{The Groupoid approach of the Exel-Laca algebras} \label{section:GRD_groupoid}

In this section the groupoid C$^*$-algebras theory meets the Exel-Laca algebras, as it is in Renault's work \cite{Renault1999}. Our main goal is to present the isomorphism between the Exel-Laca algebras and the full groupoid C$^*$-algebra of the generalized Renault-Deaconu groupoid, proved in the aforementioned paper. Consider a locally compact topological space $X$ and a local homeomorphism $T: \Dom T \to \ran T$, where $\Dom T$ and $\ran T$ are open subsets of $X$. The pair $(X,T)$ is called \emph{Singly Generated Dynamical System} (SGDS). Since we often use the maps $T^p$, $p \in \mathbb{N}$, defined on domains where these maps make sense, we make it clear what are the domains of these respective iterated maps and that these maps are local homeomorphisms, as it is stated and proved in the next lemma.

\begin{lemma}\label{lemma:Tp_local_homeomorphism} Let $(X,T)$ be a SGDS. For every $p \in \mathbb{N}$, consider the maps
\begin{equation*}
    T^p: \Dom (T^p) \to X,
\end{equation*}
where, the domain above is the largest one in $\Dom T$. Then,
\begin{equation}\label{eq:domain_Tp}
    \Dom (T^p) = T^{-(p-1)}(\Dom T),
\end{equation}
where $T^0$ is the identity map. Moreover, $T^p$ is a local homeomorphism.
\end{lemma}

\begin{proof} We have that $x \in \Dom (T^p)$ if and only if $T^p(x)$ is well-defined, that is $T^{p-1}(x) \in \Dom(T)$, i.e., $x \in T^{-(p-1)}(\Dom(T))$ and then the validity of the equation \eqref{eq:domain_Tp} is straightforward. Observe that it is immediate that $\Dom(T^{k+1}) \subseteq \Dom(T^k)$ for every $k \in \mathbb{N}_0$. Now fix $x \in \Dom(T^p)$. Since $T$ is a local homeomorphism, for each $k = 0,...,p-1$, there exists an open set $U_{T^k(x)}$ containing $T^k(x)$ such that the map $T$ restricted to $U_{T^k(x)}$ is a homeomorphism onto its image. Then the set
\begin{equation*}
    U = \bigcap_{k=0}^{p-1}T^{-k}(U_{T^k(x)})
\end{equation*}
is an open set containing $x$ such that $T^p\vert_U$ is a homeomorphism onto its image. 
\end{proof}

\begin{remark}\label{remark:Tp_homeomorphism_implies_T^k_homeomorphism_for_k_leq_p} Note that if $U$ is an open set such that $T^p\vert_U$ is a homeomorphism onto its image, then $T^k\vert_U$ is also a homeomorphism onto its image for every $k \leq p$. 
\end{remark}

The main object of this section is the generalized Renault-Deaconu groupoid\footnote{The term `generalized' is used to indicate that the local homeomorphsim $T$ is not necessarily defined in the whole space $X$.}, and it is presented next.

\begin{definition}[generalized Renault-Deaconu groupoid]\label{def:Renault_Deaconu_groupoid} Let $(X,T)$ be a SGDS. The generalized Renault-Deaconu groupoid is given by
\begin{equation}\label{eq:ADR_groupoid_generalized}
    \mathcal{G}(X,T) = \left\{  (x,k,y) \in X \times \mathbb{Z} \times X \text{ }\text{ }:\begin{array}{l l}
                            & \exists n,m \in \mathbb{N}_0 \text{ s.t. } k=n-m, \\
                            &  x \in \Dom (T^n), y \in \Dom (T^m), T^n(x) = T^m(y)\\
                          \end{array}\right\}
\end{equation}
where $T^0$ is the identity map defined on the whole space $X$, and set of composable parts is defined as
\begin{equation*}
    \mathcal{G}^{(2)}:= \left\{\big((x,k,z),(w,l,y)\big) \in \mathcal{G}(X,T) \times \mathcal{G}(X,T): z = w \right\},
\end{equation*}
on which in it we define the product
\begin{equation*}
    (x,k,z)(z,l,y) := (x,k+l,y).
\end{equation*}
The inverse map is given by
\begin{equation*}
    (x,k,y)^{-1} := (y,-k,x).
\end{equation*} 
\end{definition}

\begin{remark} Observe that in fact the product and the inverse maps are well-defined. Indeed, if $\big((x,k,z),(z,l,y)\big) \in \mathcal{G}^{(2)}$, then there exist $n,m,p,q\in \mathbb{N}$ such that $k = n-m$, $l = p-q$, $x \in \Dom (T^n)$, $z \in \Dom (T^m)\cap \Dom (T^p)$, $ y \in \Dom (T^q)$, and satisfying the indentities
\begin{equation*}
    T^n(x) = T^m(z) \quad \text{and} \quad T^p(z) = T^q(y).
\end{equation*}
W.l.o.g. we may suppose that $m \geq p$, and then we may write $m = p+d$, $d \in \mathbb{N}$. In the proof of Lemma \ref{lemma:Tp_local_homeomorphism}, we observe that $\Dom(T^{k+1}) \subseteq \Dom(T^k)$ for every $k \in \mathbb{N}_0$, that is, if $w \in \Dom T^t$ for some $t \in \mathbb{N}_0$ then $T^s(w) \in \Dom (T^{t-s})$ for every $s \leq t$. Then, $T^q(y) = T^p(z) \in \Dom (T^d)$, and therefore $y \in \Dom(T^{q+d})$. Moreover,
\begin{equation}
    T^n(x) = T^m(z) = T^d(T^p(z)) = T^d(T^q(y)) = T^{d+q}(y),
\end{equation}
and observe that $k+l = n-m +p-q = n-(d+q)$ and therefore $(x,k+l,y) \in \mathcal{G}(X,T)$. The well-definition of the inverse map is straightforward.
\end{remark}

A simple calculation let us determine the set of units $\mathcal{G}^{(0)}$:
\begin{equation*}
    (x,k,y)(x,k,y)^{-1} = (x,k,y)(y,-k,x)= (x,0,x),
\end{equation*}
that is
\begin{equation*}
    \mathcal{G}^{(0)} = \{(x,0,x) : x \in X\}.
\end{equation*}
And the range and source maps, denoted respectively $r$ and $s$, as
\begin{equation*}
    r(x,k,y) = (x,0,x) \quad \text{and} \quad s(x,k,y) = (y,0,y).
\end{equation*}

In order to construct the groupoid C$^*$-algebra of $\mathcal{G}(X,T)$, we construct a topology that, under some circunstances which we describe further, it fits to the theory presented in chapter \ref{ch:Groupoid_algebras}.

\begin{definition}[Topology generators for $\mathcal{G}(X,T)$]\label{def:basis_RD_groupoid} For the groupoid $\mathcal{G}(X,T)$, define the sets
\begin{equation*}
    W(n,m,U,V) := \left\{(x,n-m,y): x \in U, y \in  V, T^n(x) = T^m(y) \right\},
\end{equation*}
where $n,m \in \mathbb{N}_0$ and the sets $U$ and $V$ are respective open subsets of $\Dom(T^n)$ and $\Dom(T^m)$, such that $T^n\vert_U$ and $T^m\vert_V$ are injective.
\end{definition}

The generators above form a topological basis for $\mathcal{G}(X,T)$, as we prove next.

\begin{proposition}\label{prop:generators_form_a_basis_for_RD_groupoid} The family of all sets $W(n,m,U,V)$ as in Definition \ref{def:basis_RD_groupoid} forms a topological basis for $\mathcal{G}(X,T)$. 
\end{proposition}

\begin{proof} It is straightforward that such family covers $\mathcal{G}(X,T)$. Let two elements of that family, we say $W(n_1,m_1,U_1,V_1)$ and $W(n_2,m_2,U_2,V_2)$, with non-empty intersection. We claim that such intersection is actually an element of the same family. Indeed, if the intersection is non-empty, then $n_1 - m_1 = n_2 - m_2$ and w.l.o.g. we may assume $n_1 \geq n_2$ and consequently $m_1 \geq m_2$. Let
\begin{equation*}
    U = U_1 \cap U_2 \quad \text{and} \quad  V = V_1 \cap V_2.
\end{equation*}
Note that $U$ is such that $T^{r}\vert_U$ and $T^{s}\vert_V$ are homeomorphisms onto their respective images for every $r \leq n_1$ and $s \leq m_1$. We prove that
\begin{equation}\label{eq:W_cap_W}
    W(n_1,m_1,U_1,V_1)\cap W(n_2,m_2,U_2,V_2) = W(n_2,m_2,U,V).
\end{equation}
In fact, suppose that $(x,k,y) \in W(n_1,m_1,U_1,V_1)\cap W(n_2,m_2,U_2,V_2)$ then $x \in U_1 \cap U_2 = U$, $y \in V_1 \cap V_2 = V$ and $T^{n_2}(x) = T^{m_2}(y)$ and therefore $(x,k,y) \in W(n_2,m_2,U,V)$. Conversely let $(x,k,y) \in W(n_2,m_2,U,V)$. Then, $x \in U_1 \cap U_2$ and $y \in V_1 \cap V_2$ and $T^{n_2}(x) = T^{m_2}(y)$. Note that $x \in \Dom T^{n_1}$ and $y \in \Dom T^{m_1}$, and hence $T^{n_2+k}(x) = T^{m_2+k}(y)$ for every $k \in \{0,...,n_1 - n_2\}$ (note that $n_1-n_2 = m_1 - m_2$). In particular, $T^{n_1}(x) = T^{m_1}(y)$. Therefore, $(x,k,y) \in W(n_1,m_1,U_1,V_1)\cap W(n_2,m_2,U_2,V_2)$. It is straightforward that the family of the sets in the form as in Definition \ref{def:basis_RD_groupoid} is a topological basis. 
\end{proof}

Now we show that the topology generated by the basic open sets of Definition \ref{def:basis_RD_groupoid} makes $\mathcal{G}(X,T)$ a topological groupoid. From now on we endow $\mathcal{G}(X,T)$ with such topology. We need the following two auxiliary lemmas.

\begin{lemma} %lema do Rafael generalizado
\label{lemma:RenaultDeaconuCoordinatesConverge}
Consider a net $\{ (x_d, k_d, y_d) \}_{d \in D}$ in $\mathcal{G}(X,T)$ converging to $(x,k,y)$. Then $x_d \to x$, $y_d \to y$ and there exists $d_0$ such that $k_d = k$ for every $d \succ d_0$. Consequently, w.l.o.g. we may take $k_d$ constant. 
\end{lemma}

\begin{proof}
Let $n, m \in \mathbb{N}_0$ such that $T^n(x) = T^m(y)$ and $k = n - m$. Let $U, V$ be open neighborhoods of $x, y$ such that $T^n\vert_U$ and $T^m\vert_V$ are injective, respectively. Then there exists $d_0$ such that for every $d \succ d_0$, $(x_d, k_d, y_d) \in W(n,m,U,V)$, hence
\begin{align*}
x_d \in U, \quad y_d \in V, \quad \text{and} \quad k_d = n -m = k.
\end{align*}
Therefore $k_d$ is eventually constant, $x_d \rightarrow x$ and $y_d \rightarrow y$.
\end{proof}

For the next lemma, given $n,m \in \mathbb{N}$ and $x,y \in X$, we define $\mathfrak{P}^{n,m}(x,y)$ being the set of pairs $(U,V)$ of subsets of $X$ satisfying the following:
\begin{itemize}
 \item $U$ and $V$ are open neighborhoods of $x$ and $y$, respectively;
 \item $T\vert_{U}^n$ and $T\vert_{V}^m$ are injective\footnote{And then $x \in \Dom(T^{n})$ and $y \in \Dom(T^{m})$.}.
\end{itemize}

\begin{lemma} %outro lema do Rafael generalizado ESCREVER MELHOR
\label{lemma:RenaultDeaconuConvergeceSufficient}
Let $n_0, m_0 \in \mathbb{N}$ and $x,y \in X$. Consider a net $\{ (x_d, k, y_d) \}_{d \in D}$ in $\mathcal{G}(X,T)$, and suppose that for every $(U,V) \in \mathfrak{P}^{n_0,m_0}(x,y)$, there exists $d_0$ satisfying $(x_d, k, y_d) \in W(n_0,m_0,U,V)$ for $d \succ d_0$.

Then, for every $(U,V) \in \mathfrak{P}^{n,m}(x,y)$, where $n, m \in \mathbb{N}$, and $(x, k, y) \in W(n,m,U,V)$, there exists $d_0$ such that
\begin{align*}
(x_d, k, y_d) \in W(n,m,U,V) \quad \text{for $d \succ d_0$}.
\end{align*}
Consequently, $(x_d, k, y_d) \to (x,k,y)$.
\end{lemma}

\begin{proof}
Note that $k = n_0 - m_0$. Let $n,m \in \mathbb{N}$ such that $T^n(x) = T^m(y)$ (hence $x \in \Dom(T^{n})$ and $y \in \Dom(T^{m})$) and $k = n - m$ and consider $U$ and $V$ open neighborhoods of $x$ and $y$, respectively, s.t. $T\vert_U^{n}$ and $T\vert_V^{m}$ are injective. Observe that for every $U'$ and $V'$ respective open neighborhoods of $x$ and $y$, the open sets
\begin{equation*}
    \widetilde{U} = U'\cap U \cap U'' \quad \text{and} \quad \widetilde{V} = V'\cap V \cap V'',
\end{equation*}
satisfy $(U'',V'') \in \mathfrak{P}^{n_0,m_0}(x,y)$. By hypothesis, there exists $d_0 \in D$ such that $d \succ d_0$ implies $(x_d,k,y_d) \in W(n_0,m_0,\widetilde{U},\widetilde{V})$. We have two possibilities:
\begin{itemize}
    \item $n \geq n_0$. In this case, let $p = n-n_0 = m-m_0\geq 0$, then
    \begin{equation*}
        T^n(x_d) = T^p(T^{n_0}(x_d)) = T^p(T^{m_0}(y_d)) = T^m(y_d).
    \end{equation*}
    Hence, $(x_d,k,y_d) \in W(n,m,\widetilde{U},\widetilde{V}) \subseteq W(n,m,U,V)$ for $d \succ d_0$;
    \item $n < n_0$. We necessarily have that $T^p\vert_{T^{n_0-p}(\widetilde{U})}$ and $T^q\vert_{T^{m_0-q}(\widetilde{V})}$ are homeomorphisms onto their respective images for every $p \in \{0,...,n_0\}$ and $q \in \{0,...,m_0\}$, then by taking $p = n-n_0 = m-m_0<0$ we have 
    \begin{align*}
        T^n(x_d) &= T^n\vert_{\widetilde{U}}(x_d) = T^p\vert_{T^{n_0}(\widetilde{U})}(T^{n_0}\vert_{\widetilde{U}}(x_d)) = T^p\vert_{T^{n_0}(\widetilde{U})}(T^{m_0}\vert_{\widetilde{V}}(y_d)) \\
        &= T^p\vert_{T^{m_0}(\widetilde{V})}(T^{m_0}\vert_{\widetilde{V}}(y_d)) = T^m(y_d).
    \end{align*}
    And then we also have $(x_d,k,y_d) \in W(n,m,\widetilde{U},\widetilde{V}) \subseteq W(n,m,U,V)$.
\end{itemize}
It is straightforward that $(x_d, k, y_d) \to (x,k,y)$.
\end{proof}

\begin{corollary} %Corolário do Rafa
\label{corollary:RenaultDeaconuConvergenceSufficient}
Let $(x, n - m, y) \in \mathcal{G}(X,T)$. Let $\{ x_d \}_{d \in D}$, $\{ y_d \}_{d \in D}$ be sequences in $X$ such that $x_d \to x$ and $y_d \to y$. If $T^n(x_d) = T^m(y_d)$ for each $d$, then $(x_d, n - m, y_d) \to (x, n-m, y)$ in $\mathcal{G}(X,T)$.
\end{corollary}

\begin{proof} Let $U$ and $V$ open neighborhoods of $x$ and $y$, respectively, and s.t. $T\vert_{U}^n$ and $T\vert_V^m$ are injective. There exists $d_0 \in D$ such that $d \succ d_0$ implies $x_d \in U$ and $y_d \in V$. Since $T^n(x_d) = T^m(y_d)$, it follows that $(x_d, n-m, y_d) \in W(n,m,U,V)$, and Lemma \ref{lemma:RenaultDeaconuConvergeceSufficient} gives that $(x_d, n-m, y_d) \to (x,n-m,y)$ in $\mathcal{G}(X,T)$.
\end{proof}

Now we show that the Hausdorff property on $X$ implies that $\mathcal{G}(X,T)$ is not only an \'etale topological groupoid, but also it is Hausdorff topological space.

\begin{theorem}\label{thm:RD_groupoid_is_topological_and_Hausdorff_for_X_Hausdorff} Let $(X,T)$ be a SGDS such that $X$ is Hausdorff and consider its generalized Renault-Deaconu groupoid $\mathcal{G}(X,T)$ endowed with the topology generated by the basic open sets in Definition \ref{def:basis_RD_groupoid} is an \'etale topological groupoid. Moreover, $\mathcal{G}(X,T)$ is Hausdorff.
\end{theorem}

\begin{proof} Let $\theta : \mathcal{G}^{(2)} \to \mathcal{G}(X,T)$ and $\psi : \mathcal{G}(X,T) \to \mathcal{G}(X,T)$ be the product and inverse operations of $\mathcal{G}(X,T)$, respectively. We divide the proof that $\mathcal{G}(X,T)$ is a topological groupoid in three steps as follows.

\begin{itemize}
    \item[\textbf{Step 1.}] $\mathcal{G}^{(2)}$ is closed in $\mathcal{G} \times \mathcal{G}$.

    Consider a net $\{ (g_d,h_d) \}_{d \in D}$ in $\mathcal{G}^{(2)}$ converging to $(g,h) \in \mathcal{G} \times \mathcal{G}$. Then $g_d = (x_d, k_d, y_d)$, $h_d = (y_d, \ell_d, z_d)$ for each $d \in D$. Assume $g = (x,k,y)$ and $h = (y', l ,z)$. By Lemma \ref{lemma:RenaultDeaconuCoordinatesConverge}, we obtain $y_d \to y$ and $y_d \to y'$ in $X$. Because $X$ is Hausdorff, we have $y = y'$, and therefore $(g,h) \in \mathcal{G}^{(2)}$.

    \item[\textbf{Step 2.}] $\psi$ is continuous.

    Consider  a basic open set $W(m,n,V,U)$ of $\mathcal{G}$. Then,
    \begin{align*}
    \psi^{-1}(W(m,n,V,U)) &= \left\{ (x,k,y) \in \mathcal{G}(X,T): (y,-k, x) \in W(m,n,V,U) \right\} \\
    &= \left\{ (x,k,y) \in \mathcal{G}(X,T): y \in V, x \in U, T^m(y) = T^n(x), k = n - m \right\} \\
    &= W(n,m,U,V).
    \end{align*}
    We conclude that the inverse map is continuous.

    \item[\textbf{Step 3.}] $\theta$ is continuous.

    Let $\{(g_d,h_d)\}_{d \in D}$ be a net in $\mathcal{G}^{(2)}$ converging to $(g,h)$. By step 1, it follows that $(g,h) \in \mathcal{G}^{(2)}$. So we may set $g = (x,k,y), h = (y,\ell,z)$, and by Lemma \ref{lemma:RenaultDeaconuCoordinatesConverge} we may also write $g_i = (x_i,k,y_i)$ and $h_i = (y_i,\ell,z_i)$ for each $d \in D$.

    Consider $n_1, m_1, n_2, m_2 \in \mathbb{N}_0$ such that $x \in \Dom(T^{n_1})$, $y \in \Dom(T^{n_2}) \cap \Dom(T^{m_1})$, $z \in \Dom(T^{m_2})$, and satisfying
    \begin{align*}
    k = n_1 - m_1, \quad T^{n_1}(x) = T^{m_1}(y) \quad \text{and} \quad \ell = n_2 - m_2, \quad T^{n_2}(y) = T^{m_2}(z).
    \end{align*}
    Since $(x_d, k, y_d) \to (x, k, y)$ and $(y_d, \ell, z_d) \to (y, \ell, z)$, it follows that for $U$, $V$ and $W$ open neighborhoods of $x$, $y$ and $z$ s.t. $T^{n_1}\vert_U$, $T^{\max\{m_1,n_2\}}\vert_V$ and $T^{m_2}\vert_W$ are homeomorphisms, there exists $d_0\in D$ such that
    \begin{align*}
    (x_d, k, y_d) \in W(n_1,m_1,U,V) \quad \text{and} \quad (y_d, k, z_d) \in W(n_2,m_2,V,W), \quad \text{for} \quad d \succ d_0.
    \end{align*}
    Hence, for $d \succ d_0$, we have two possibilities:
    \begin{itemize}
     \item[$\bullet$] $m_1 \geq n_2$. In this case we have,
     \begin{align*}
        T^{n_1}(x_d) &= T^{m_1}(y_d) = T^{m_1}\vert_V(y_d) = T^{m_1-n_2}\vert_{T^{n_2}(V)}(T^{n_2}\vert_V(y_d)) \\
        &= T^{m_1-n_2}\vert_{T^{n_2}(V)}(\underbrace{T^{m_2}\vert_W(z_d)}_{\in T^{n_2}(V)}) = T^{m_1+m_2-n_2}(z_d) = T^{m_1+m_2-n_2}\vert_W(z_d).
     \end{align*}
     Observe that $k = n_1 - (m_1+m_2-n_2) = k + \ell$, and then $(x_d,k+\ell,z_d) \in W(n_1,m_1+m_2-n_2,U,W)$ for every $d \succ d_0$, and by Lemma \ref{lemma:RenaultDeaconuConvergeceSufficient} we conclude that $(x_d,k+\ell,z_d) \to (x,k+\ell,z)$;
     \item[$\bullet$] $m_1 < n_2$. In this case we have,
     \begin{align*}
        T^{m_2}(z_d) &= T^{n_2}(y_d) = T^{n_2}\vert_V(y_d) = T^{n_2-m_1}\vert_{T^{m_1}(V)}(T^{m_1}\vert_V(y_d)) \\
        &= T^{n_2-m_1}\vert_{T^{m_1}(V)}(\underbrace{T^{n_1}\vert_U(x_d)}_{\in T^{m_1}(V)}) = T^{n_2-m_1+n_1}(x_d) = T^{n_2-m_1+n_1}\vert_U(x_d).
     \end{align*}
     Similarly to the previous case, we observe that $k = (n_2 - m_1 +n_1) - m_2 = k + \ell$, and then $(x_d,k+\ell,z_d) \in W(n_1+n_2-m_1,m_2,U,W)$ for every $d \succ d_0$, and again by Lemma \ref{lemma:RenaultDeaconuConvergeceSufficient} we obtaain that $(x_d,k+\ell,z_d) \to (x,k+\ell,z)$.
    \end{itemize}
    Therefore $\theta$ is continuous.
\end{itemize}
Since the inverse and product maps are continuous, we conclude that $\mathcal{G}(X,T)$ is a topological groupoid. Now we show its etalicity. Indeed, take $(x, n - m, y) \in \mathcal{G}(X,T)$ satisfying
\begin{equation*}
    x \in \Dom(T^n), \quad y \in \Dom(T^m) \quad \text{and} \quad T^n(x) = T^m(y).
\end{equation*}
Since $T$ is a local homeomorphism, there are $U \subseteq \Dom(T^n)$ and $V \subseteq \Dom(T^m)$ open neighborhoods of $x$ and $y$, respectively, s.t.
\begin{align*}
\text{$T^n(U)$ is open and } &\text{$T^n\vert_U: U \to T^n(U)$ is a homeomorphism},\\
\text{$T^m(V)$ is open and } &\text{$T^m\vert_V: B \to T^m(B)$ is a homeomorphism}.
\end{align*}
Hence, $(x, n-m, y) \in W(n,m,U,V)$. Since $\mathcal{G}$ is a topological groupoid, $r$ is continuous (Remark \ref{remark:r_and_s_continuous}). We prove that $r$ is a local homeomorphism by showing that the following claims are true.

\begin{itemize}
\item[\textbf{Claim 1.}] $r\vert_{W(n,m,U,V)}$ is injective.

Suppose there exist $x_1, x_2 \in U$, $y_1, y_2 \in V$ such that $r(x_1, n-m, y_1) = r(x_2, n-m, y_2)$. Then $y_1 = y_2$. In addition,
\begin{align*}
x_1 = T^{-n}\vert_U(T^m(y_1)) = T^{-n}\vert_U(T^m(y_2)) = x_2.
\end{align*}
Therefore $(x_1, n-m, y_1) = (x_2, n-m, y_2)$ and the claim is proved.

\item[\textbf{Claim 2.}] $r(W(n,m,U,V))$ is open.
\begin{align*}
r(W(n,m,U,V))
&= \{ (y,0,y) \in \mathcal{G}(X,T): (x, n-m, y) \in W(n,m,U,V) \}\\
&= \{ (y,0,y) \in \mathcal{G}(X,T): x \in U, y \in V, T^n(x) =T^m(y) \}\\
&= \{ (y,0,y) \in \mathcal{G}(X,T): y \in V, x = T\vert_U^{-n}(T^m(y)) \in U \}\\
&= \{ (y,0,y) \in \mathcal{G}(X,T): y \in V, T^m(y) \in T^n(U) \} \\
&= \{ (y,0,y) \in \mathcal{G}(X,T): y \in V, y \in T\vert_V^{-m}(T^n(U))\}\\
&= W(0,0,C,C),
\end{align*}
where $C = V \cap T\vert_V^{-m}(T^n(U))$, an open set.

\item[\textbf{Claim 3.}] $r\vert_{W(n,m,U,V)}^{-1}$ is continuous.

Let $\{ (y_d, 0, y_d) \}_{d \in D}$ be a net in $r(W(n,m,U,V))$ converging to some $(y,0,y)$ in $r(W(n,m,U,V))$. Then $y_d \to y \in V$. Define the net $x_d = T\vert_U^{-n}(T^m(y_d))$. Then $x_d \to x = T\vert_U^{-n}(T^m(y))$. Observe that $x_d$ is the unique element in $U$ s.t. $T^n(x_d) = T^m(y_d)$, then
\begin{align*}
(x_d, n - m, y_d) = r\vert_{W(n,m,U,V)}^{-1}(y_d, 0, y_d).
\end{align*}
Analogously, $(x, n - m, y) = r\vert_{W(n,m,U,V)}^{-1}(y, 0, y)$. Then, it follows from Corollary \ref{corollary:RenaultDeaconuConvergenceSufficient} that $(x_d, n-m, y_d) \to (x,n-m,y)$.
\end{itemize}
We conclude that $r\vert_{W(n,m,U,V)}$ is a homeomorphism onto its image. Since $W(n,m,U,V)$ is an arbitrary basic set, it follows that $r$ is a local homeomorphism. The proof that $s$ is a local homeomorphism holds by similar steps, and therefore $\mathcal{G}(X,T)$ is \'etale.

Finally, we prove that the groupoid topology is Hausdorff. Indeed, let $(x_1,k_1,y_1),(x_2,k_2,y_2) \in \mathcal{G}(X,T)$, where $(x_1,k_1,y_1) \neq (x_2,k_2,y_2)$. Let $n_1,n_2,m_1,m_2 \in \mathbb{N}$ s.t. $k_i = n_i - m_i$, $x_i \in \Dom(T^{n_i})$, $y_i \in \Dom(T^{m_i})$ and $T^{n_i}(x_i) = T^{m_i}(y_i)$, for $i = 1,2$. We have two possibilities:
\begin{itemize}
\item $n_1 - m_1 \neq n_2 - m_2$. Then, $(x_1,k_1,y_1) \in W(n_1,m_1,\Dom(T^{n_1}),\Dom(T^{m_1}))$, $(x_2,k_2,y_2) \in W(n_2,m_2,\Dom(T^{n_2}),\Dom(T^{m_2}))$ and
\begin{equation*}
    W(n_1,m_1,\Dom(T^{n_1}),\Dom(T^{m_1})) \cap W(n_2,m_2,\Dom(T^{n_2}),\Dom(T^{m_2})) = \emptyset;
\end{equation*}

\item $n_1 - m_1 = n_2 - m_2$. In this case, we have $x_1 \neq x_2$ or $y_1 \neq y_2$. W.l.o.g. suppose $x_1 \neq x_2$. Since $X$ is Hausdorff, we can choose $U_i \subseteq \Dom(T^{n_i})$ open neighborhood of $x_i$ and such that $T^{n_i}\vert_{U_i}$ is injective, respectively for $i = 1,2$, and satisfying $U_1 \cap U_2 = \emptyset$. Then $(x_i,k_i,y_i) \in W(n_1,m_1,U_1,V_1)$, where $V_i \subseteq \Dom(T^{m_i})$ is any open neighborhood of $y_i$ s.t. $T^{m_i}\vert_{V_i}$ is injective, $i=1,2$. Also, $W(n_1,m_1,U_1,V_1) \cap W(n_2,m_2,U_2,V_2) = \emptyset$.
\end{itemize}
Therefore, $\mathcal{G}(X,T)$ is Hausdorff.
\end{proof}

\begin{corollary} If $X$ is Hausdorff, then the family of open basic sets of Definition \ref{def:basis_RD_groupoid} are open bisections. 
\end{corollary}

\begin{proof} It is straightforward from the proof of Theorem \ref{thm:RD_groupoid_is_topological_and_Hausdorff_for_X_Hausdorff} from the part where it is proved that $G(X,T)$ is \'etale. 
\end{proof}

The next result shows how the other topological properties on $X$ reflect on the properties of the groupoid topology.

\begin{lemma}\label{lemma:injective_basis} Let $(X,T)$ be a SGDS and $p \in \mathbb{N}_0$. If $X$ second countable, then there exists a countable basis for $\Dom(T^p)$ such that the restriction of $T^p$ to its basic open sets is injective. 
\end{lemma}

\begin{proof} Every subspace of a second countable of a topological space is also second countable, then $\Dom(T^p)$ is second countable. Let $\mathcal{B} = \{B_i\}_{i \in \mathbb{N}}$ be a basis for $\Dom(T^p)$ and $x$ an element in such space. Since $T^p$ is a local homeomorphism, there exists $U_x$ an open neighborhood of $x$ such that $T^p\vert_{U_x}$ is injective. Since $\mathcal{B}$ is a basis, there exists $\iota(x) \in \mathbb{N}$ such that $x \in B_{\iota(x)} \subseteq U_x$, and hence $T^p\vert_{B_{\iota(x)}}$ is also injective. Then, there exists there exists a function $\iota: B_i \to \mathbb{N}$ such that
\begin{equation*}
    B_i = \bigcup_{x \in B_i} B_{\iota(x)} \stackrel{(\bullet)}{=} \bigcup_{j \in J_i} B_j^i,
\end{equation*}
where in $(\bullet)$ we used the fact that $\mathcal{B}$ is countable, $J_i := \iota(B_i) \subseteq \mathbb{N}$ for each $i \in \mathbb{N}$, and $B_j^i \in \mathcal{B}$ is such that $T^p\vert_{B_j^i}$ is injective. We have that the family
\begin{equation*}
    \left\{B_j^i: i \in \mathbb{N}, j \in J_i\right\}
\end{equation*}
is a sub-basis for the topology of $\Dom(T^p)$ such that $T^p$ restricted to each one of these elements is injective. Consequently, the same occurs to the basis generated by that sub-basis. 
\end{proof}

\begin{theorem}\label{thm:X_second_countable_implies_RDG_second_countable_and_same_for_local_comapactness} Given a $(X,T)$ a SGDS and let $\mathcal{G}(X,T)$ be its generalized Renault-Deaconu groupoid. The following holds:
\begin{itemize}
    \item[$(a)$] if $X$ is second countable, then then $\mathcal{G}(X,T)$ is second countable;
    \item[$(b)$] if $X$ is locally compact and Hausdorff, then $\mathcal{G}(X,T)$ is locally compact;
\end{itemize} 
\end{theorem}

\begin{proof} Item $(a)$: let $n,m \in \mathbb{N}_0$. By Lemma \ref{lemma:injective_basis}, both $\Dom(T^n)$ and $\Dom(T^m)$ have the respective countable bases
\begin{equation*}
    \mathcal{B}_n:=\{B_i(n):i \in \mathbb{N}\} \quad \text{and} \quad \mathcal{B}_m:=\{B_j(m):j \in \mathbb{N}\}
\end{equation*}
such that $T^n\vert_{B_i(n)}$ and $T^n\vert_{B_j(m)}$ are injective for every $i,j \in \mathbb{N}$. We claim the countable family
\begin{equation}\label{eq:basis_G_countable}
    \left\{W(n,m,B_i(n),B_j(m)): n,m \in \mathbb{N}_0; i,j \in \mathbb{N}\right\}
\end{equation}
is a basis for $\mathcal{G}(X,T)$. In fact, given
\begin{align*}
    (x,k,y) &\in W(n_1,m_1,B_{i_1}(n_1),B_{j_1}(m_1)) \cap W(n_2,m_2,B_{i_2}(n_2),B_{j_2}(m_2))\\
    &\stackrel{(\dagger)}{=} W(n,m,B_{i_1}(n_1)\cap B_{i_2}(n_2),B_{j_1}(m_1)\cap B_{j_2}(m_2)),
\end{align*}
where in $(\dagger)$ we used the equation \eqref{eq:W_cap_W}, $n = \min\{n_1,n_2\}$ and $m = \min\{m_1,m_2\}$. Since $\mathcal{B}_n$ and $\mathcal{B}_m$ are basis, there are $B(n) \in \mathcal{B}_n$ and $B(m) \in \mathcal{B}_m$ such that $x \in B(n) \subseteq B_{i_1}(n_1)\cap B_{i_2}(n_2)$ and $y \in B(m) \subseteq B_{j_1}(m_1)\cap B_{j_2}(m_2)$. Hence,
\begin{equation*}
    (x,k,y) \in W(n,m,B(n),B(m)) \subseteq W(n_1,m_1,B_{i_1}(n_1),B_{j_1}(m_1)) \cap W(n_2,m_2,B_{i_2}(n_2),B_{j_2}(m_2)).
\end{equation*}
Since the family \eqref{eq:basis_G_countable} covers $\mathcal{G}(X,T)$ we conclude that such collection is a countable basis, and therefore $\mathcal{G}(X,T)$ is second countable.

Item $(b)$: given $n,m \in \mathbb{N}_0$, let $U \subseteq \Dom(T^n)$ and $V \subseteq \Dom(T^m)$ be open sets s.t. $\overline{U}$ and $\overline{V}$ are compact subsets of $\Dom(T^n)$ and $\Dom(T^m)$, respectively, and s.t. $T^n\vert_U$ and $T^m\vert_V$ are injective. Then, $W(n,m,U,V) \subseteq W(n,m,\overline{U},\overline{V})$, where
\begin{align*}
W(n,m,\overline{U},\overline{V}) = \{ (x, n-m, y) \in \mathcal{G}(X,T): T^n(x) = T^m(y), x \in \overline{U}, y \in \overline{V} \}.
\end{align*}
Let $\{ (x_d, n - m, y_d) \}_{d \in D}$ be a net in $W(n,m,\overline{U},\overline{V})$. Then $\{ (x_d, y_d) \}_{d \in D}$ is a net in the compact set $\overline{U \times V}$. Hence, there exists converging subnet $\lbrace (x'_{d'}, y'_{d'}) \rbrace_{d' \in D'}$ satisfying $x'_{d'} \rightarrow x$ for some $x \in \overline{U}$, and $y'_{d'} \rightarrow y$ for some $y \in \overline{V}$. By continuity of $T$, we have $T^n(x) = T^m(y)$, and therefore $W(n,m,\overline{U},\overline{V})$ is compact. We conclude that $\mathcal{G}(X,T)$ is locally compact. %conferir 
\end{proof}

The next result allows us to identify topologically $X$ with $\mathcal{G}^{(0)}$.

\begin{proposition} Given $(X,T)$ a SGDS, the unit space $\mathcal{G}^{(0)}$ of its generalized Renault-Deaconu groupoid, endowed with the subspace topology, is homeomorphic to $X$.
\end{proposition}

\begin{proof} It is straightforward that the map $\rho:\mathcal{G}^{(0)} \to X$, given by $\rho((x,0,x)) = x$ is a bijection. We prove that such map is a homeomorphism. Let $n, m \in \mathbb{N}_0$, $U \subseteq \Dom(T^n)$ and $V \subseteq \Dom(T^m)$ open sets. If $W(n,m,U,V) \cap \mathcal{G}^{(0)} \neq \emptyset$, then $n = m$. In this case,
\begin{align*}
\rho(W(n,m,U,V) \cap \mathcal{G}^{(0)}) = \rho(\lbrace (x,0,x): x \in U \cap V \rbrace) = U \cap V,
\end{align*}
which is open in $X$, and then $\rho^{-1}$ is continuous. On the other hand, for any open set $U \subset X$, we have
\begin{align*}
\rho^{-1}(U) = \lbrace (x,0,x): x \in U \rbrace = W(0,0,U,U),
\end{align*}
and then $\rho$ is continuous. We conclude that $\rho$ is a homeomorphism.
\end{proof}

By setting $X = X_A$ and $T = \sigma$, we have that $\mathcal{G}(X_A,\sigma)$ is a locally compact Hausdorff second countable \'etale topological groupoid, where $X_A \simeq \mathcal{G}^{(0)}$.

For a continuous function $F:U\to \mathbb{R}$, we think of $\mathbb{R}$ as an additive group, we can define a continuous homomorphism $c_F:\mathcal{G}(X,\sigma)\to \mathbb{R}$ as 
\begin{equation}\label{eq:cocycle}
    c_F(x,n-m,y)=\sum_{j=0}^{n-1}F(\sigma^j(x)) - \sum_{j=0}^{m-1}F(\sigma^j(y)).
\end{equation}

\begin{definition}[Markov partition]\label{def:Markov_partition} Let $X$ be a compact, Hausdorff and totally disconnected topological space and consider two open sets $U$, $V \in X$, and $T:U\to V$ a local homeomorphism from $U$ onto $V$. A Markov partition for $(X,T)$ is a partition of $U$ by a family $\{U_i: i \in I\}$ of non-empty pairwise disjoint compact open sets such that
\begin{itemize}
    \item[$i.$] the restriction $T_i := T\restriction_{U_i}$ is a homeomorphsim from $U_i$ onto a compact open set $V_i = T(U_i) \subseteq V$;
    \item[$ii.$] for all $(i,j) \in I \times I$, either $U_i \subseteq V_j$ or $U_i \cap V_j = \emptyset$;
    \item[$iii.$] the Boolean algebra $\mathcal{B}_0$ generated by $\{X,U_i,V_i,i \in I\}$, seen as a Boolean subalgebra of the power set of $X$, is a generator in the sense that
    \begin{equation*}
        \bigvee_{n=0}^\infty T^{-n}\mathcal{B}_0 %:= \left\{\bigcap_{n=0}^\infty T^{-n}C_n: C_k \in B_0 \text{ for all } k \in \mathbb{N}_0\right\}
    \end{equation*}
    is the family of all compact open subsets of $X$.
\end{itemize}
\end{definition}

In particular, for our context, $(X_A, \sigma)$ is a SGDS that admits a natural Markov partition, as it is proved and present next.

\begin{proposition} $(X_A,\sigma)$ and $(\widetilde{X}_A,\sigma)$ are SGDS that admit the dense Markov partition $\mathfrak{C}:=\{C_i\}_S$, where we consider $\widetilde{X}_A$ instead of $X_A$ when the last one is not compact. %Moreover, $\mathcal{O}_A \simeq C^*(\mathcal{G}(X_A,\sigma))$.
\end{proposition}

\begin{proof} Observe that $X_A$ and $\widetilde{X}_A$ are totally disconnected, since they admit a basis of clopen sets. In addition, it is straightforward that $X_A$ is locally compact, that $\Dom(\sigma) = \bigsqcup_{i \in S}C_i$ and $\ran(\sigma) = X_A$ are open, and $\sigma$ is a local homeomorphism, since every $x \in \Dom(\sigma)$ belongs to a $C_i$, for some $i \in S$, and $\sigma\vert_{C_i}:C_i \to \sigma(C_i)$ is a homeomorphism for every $i \in S$. It is straightforward that the sets of the family $\mathfrak{C}$ are non-empty and and pairwise disjoint. By therorem \ref{thm:cylinders_are_compact} we have that $C_i$ is compact for every $i \in S$ when $X_A$ is not compact (the remaining case also holds, by compactness of $X_A$). From now in this proof, we suppose that $X_A$ is compact, since the same proof holds for $\widetilde{X}_A$ when that is not true. We prove the conditions stated in Definition \ref{def:Markov_partition} hold:

\begin{itemize}
    \item[$(i)$] $\sigma$ is a local homeomorphism and hence it is a continuous open map. By continuity of $\sigma$, we have that $\sigma(C_i)$ is also compact, and by $\sigma$ being an open map, we have that $\sigma(C_i)$ is open, for every $i \in S$. Observe that $X_A$ compact implies that $\sigma(C_i)$ is also closed. As we mentioned before, the restriction $\sigma\vert_{C_i}$ is a homeomorphism onto its image;
    \item[$(ii)$] note that $A(i,j) = 1 \iff C_j \subseteq \sigma(C_i)$ and $A(i,j) = 0 \iff C_j \not\subseteq \sigma(C_i) \iff C_j \cap \sigma(C_i) = \emptyset$;
    \item[$(iii)$] let
    \begin{equation*}
        C \in \bigvee_{n=0}^\infty \sigma^{-n}\mathcal{B}_0,
    \end{equation*}
    where $\mathcal{B}_0$ is the Boolean algebra generated by $\{X_A, C_i,\sigma(C_i): i \in S\}$. Then, there exists $k \in \mathbb{N}_0$ s.t. $C \subseteq \Dom(\sigma^k)$ and $C \in \sigma^{-k}\mathcal{B}_0$, that is, $C = \sigma^{-k}(B)$ for some $B \in \mathcal{B}_0$. Since the generators of $\mathcal{B}_0$ are clopen, we have that every element in $\mathcal{B}_0$ is clopen as well. By continuity of $\sigma^k$, we have that $C$ is clopen, and therefore it is an open compact set. Conversely, let $C$ be a compact open set of $X_A$. By compactness,
    \begin{equation*}
        C = \bigcup_{i=1}^n O_i,
    \end{equation*}
    where $O_i$ is a basic set, that is, a finite intersection elements of the subbasis. Hence, it is sufficient to prove that the elements of the sub-basis are compact. Given $\alpha$ an admissible word and $j \in S$, we have that
    \begin{align*}
        C_\alpha &= \bigcap_{k = 0}^{|\alpha|-1} \sigma^{-k}(C_{\alpha_k}), \quad   C_\alpha^c = \bigcup_{k = 0}^{|\alpha|-1} \sigma^{-k}(C_{\alpha_k}^c),\\
        C_{\alpha j^{-1}} &= \sigma^{-|\alpha|}(\sigma(C_j))\cap \bigcap_{k = 0}^{|\alpha|-1} \sigma^{-k}(C_{\alpha_k}),\\
        C_{\alpha j^{-1}}^c &= \sigma^{-|\alpha|}(\sigma(C_j)^c)\cup \bigcup_{k = 0}^{|\alpha|-1} \sigma^{-k}(C_{\alpha_k}^c).
    \end{align*}
    Since
    \begin{equation*}
        \bigvee_{k=0}^\infty \sigma^{-k}\mathcal{B}_0
    \end{equation*}
    is a Boolean algebra, we have that each $O_i$ is clopen and hence $C$ is also clopen, and therefore $C$ is open and compact.
\end{itemize}
We conclude that $\mathfrak{C}$ is a Markov partition. Its density is straightforward.

\end{proof}

Now, define the set $\mathcal{J}_A$ of accumulation points of the sequence $\{\mathfrak{c}_j\}_{j \in \mathbb{N}}$ of columns of the transition matrix $A$.

For every $i \in I$, define the bisections of $\mathcal{G}(X,T)$ by $G_i = \{(x, 1, T (x)), x \in U_i\}$. Hence, it is straightforward that $r(G_i) = U_i$ and $s(G_i) = V_i$. We may identify these bisections with their rescpective characteristic functions $R_i:=\mathbbm{1}_{G_i}$ in the C$^*$-algebra $C^*(\mathcal{G}(X,T))$. Also, observe that
\begin{equation*}
    A(i,j) = \begin{cases}
                1, \quad \text{if } U_j \subseteq V_i,\\
                0, \quad \text{if } U_j \subseteq V_i^c.
             \end{cases}
\end{equation*}

In our context we have that $\{R_i\}_{i \in I}$ is a family of partial isometries that satisfies (EL1)-(EL4). In addition, every totally disconnected compact SGDS endowed with a dense and countable Markov partition can be codified by a matrix like it is above. In particular, if that matrix generated is such that every circuit has an exit, the Uniqueness Theorem \ref{thm:uniqueness_thm_EL_algebras} holds for similar partial isometries as constructed for the generalized Renault-Deaconu relative to this SGDS. Furthermore, this stabilishes an isomorphism between its groupoid C$^*$-algebra and the Exel-Laca algebras as it is in Proposition 4.8 of \cite{Renault1999} as it is presented below.

\begin{proposition}\label{prop:isomorphism_between_O_A_and_RDA} Given a SGDS $(X,T)$ endowed with a dense Markov partition that is generated by a transition matrix $A$. Suppose that its symbolic graph satisfies that every circuit has an exit. Then we have that
\begin{itemize}
    \item[$(a)$] if $0 \notin \mathcal{J}_\mathcal{A}$ then $\mathcal{O}_A \simeq C^*(\mathcal{G}(X,T))$;
    \item[$(b)$] if $0 \in \mathcal{J}_\mathcal{A}$ then $\mathcal{O}_A \simeq C^*(\mathcal{G}(X\setminus \{\varphi\},T))$.
\end{itemize}
In the statement above, $\varphi$ is the (unique) element of the complement of $\Dom T \cup \ran T$, when such element exists.
\end{proposition}

\begin{remark} $\varphi$ as above do exists if and only if $0 \in \mathcal{J}_A$ (see Theorem \ref{thm:O_A_unital}).
\end{remark}

For the generalized Markov shift space, we have that $\mathcal{O}_A \simeq C^*(\mathcal{G}(X_A,\sigma))$, since in the proposition above, $X = X_A$ if $0 \notin \mathcal{J}_A$ and $X = \widetilde{X}_A$ otherwise.

\chapter{Thermodynamic Formalism on Generalized Countable Markov Shifts}
\label{ch:TF_on_Generalized_Countable_Markov_shifts}
\label{ch:term_form_X_A}

In this chapter, we introduce the Thermodynamic Formalism for the generalized Markov shift space. Among the results, we emphasize the equivalences between the notions of conformality in this generalized context, the compatibility between the notion of conformal measure in the classical and the generalized settings, and the existence of new conformal measures which are not detected in the classical theory. In particular, these new measures actually let evident a new type of phase transition, which we present in concrete examples, as the renewal shift, the pair renewal shift and the prime renewal shift. Later in the chapter, we adapt the Denker-Yuri results for iterated function systems on the generalized setting.

\section{Weak$^*$ convergence of measures on $X_A$}

When we study the thermodynamic formalism on a dynamical system, we are interested in potentials which carry a factor $\beta>0$, corresponding to the inverse of the temperature, and the behavior of special measures (conformal measures, equilibrium measures, etc) is related to $\beta$, which generate nets of measures indexed by $\beta$. The control of the weak$^*$ convergence of probability measures over the configuration space (symbolic space) is one of the first facts which we want to know, so we present next the notion of convegence for net of measures in metric spaces. Given a metric space $X$, we denote by $C_b(X)$ the set of the bounded continous real functions on $X$.

\begin{definition}[weak$^*$ convergence of measures] Let $X$ be a metric space and denote by $M_{\text{fin}}(X)$ the set of finite Borel measures on $X$. Let $\{\mu_\beta\}_\beta$ be a net in $M_{\text{fin}}(X)$. Given a measure $\mu \in M_{\text{fin}}(X)$, we say that $\{\mu_\beta\}_\beta$ converges weakly$^*$ to $\mu$ when, for every $f \in C_b(X)$, we have
\begin{equation*}
    \lim_\beta \int_X f d\mu_\beta = \int_X f d\mu.
\end{equation*}
When this happens, we use the notation $\mu_\beta \rightharpoonup \mu$. 
\end{definition}

In particular, we restrict the definition above to the set of the Borel probabilities on $X$. 

\begin{remark} The nets we are interested here are those which the directed set associated to them is a semi-finite interval $[a, \infty)$ with the reverse order. In this case, the net convergence is equivalent to state that every sequence $\{\beta_k\}_{k \in \mathbb{N}}$ in $(a, \infty)$ converging to $a$ satisfies $\mu_{\beta_k} \rightharpoonup \mu$. 
\end{remark}

In the standard case of countable Markov shifts, it is known that that a sequence of probability measures $(\mu_n)_{n \in \mathbb{N}}$ converges in the weak$^*$ topology (weak topology for the probabilists) to a probability measure $\mu$ when, for each cylinder set $[x_0, x_1,..., x_{m-1}]$, we have $\lim_{n \to \infty} \mu_{n}([x_0, x_1,..., x_{m-1}]) = \mu([x_0, x_1,..., x_{m-1}])$. For a recent reference on this topic, we mention \cite{IommiVelozo2019}.

When we use the approach on taking limits of measures on generalized cylinders for the study of the weak$^*$ convergence, we find some differences and similarities with respect to the usual symbolic space. In fact, unlike in the standard Markov shift case, the intersections of generalized cylinders (associated to any $g \in \mathbb{F}$) does not necessarily gives another cylinder. Moreover, the complement of a cylinder is not a union of cylinders as it happens to the standard case. In general, both intersections and complements give a countable union of disjoint cylinders jointly with a subset of $Y_A$ which does not intersects the cylinder union. In other words, the cylinder subbasis it is not a basis, and a typical element of the basis generated by the subbasis has the form
\begin{equation}
    F \sqcup \bigsqcup_{n \in \mathbb{N}} C_{w(n)},
\end{equation}
where $w(n)$ is a positive admissible word for each $ n \in \mathbb{N}$, and $F \subseteq Y_A$ (see Theorem \ref{thm:huge_generators_intersections}). However, in many cases, as it happens to the renewal shift and the pair renewal shift, the extra disjoint set of finite words $F$ as above is finite, and, when we take the limit to the critical value where conformal measures living on $Y_A$ disappear, the measure on $F$ vanishes too. So $F$ does not give any contribution to the measure at the limit.

In this thesis, we describe conformal measures which are not detected by the standard theory and we also take limits on nets of these measures. This brief section presents a way how to take these limits and conditions of existence of a limit measure on the weak$^*$ topology. The main result we present below is a less general version of the Theorem 8.2.17 of \cite{Bogachev2007} as follows.

\begin{theorem}\label{thm:convergence_measures_Bogachev} Let $\{\mu_\beta\}$ be a net of Borel probability measures on a separable metric space $X$ and let $\mu$ a probability measure on $X$. Suppose that the equality
\begin{equation*}
    \lim_\beta \mu_\beta(U) = \mu(U)
\end{equation*}
is fulfilled for all elements $U$ of some base $\mathscr{B}$ of the topology of $X$ that is closed with respect to finite intersections. Then the net of measures $\{\mu_\beta\}$ converges to $\mu$ on the weak$^*$ topology.
\end{theorem}

We recall that $X_A$ is a metric subspace of $\{0,1\}^\mathbb{F}$ and its topology is generated by the generalized cylinder sets and their complements, which forms a subbasis. Consequently, the basis of the finite intersections of generalized cylinders is a topological basis for $X_A$, and it is clearly closed under intersections. Moreover, such basis is countable, that is, $X_A$ is second-countable, and therefore it is separable. We conclude that the generalized Markov shifts fulfills the topological requirements for Theorem \ref{thm:convergence_measures_Bogachev}. Observe that, by the theorem above, since for the standard Markov shift spaces the subbasis coincides with the basis and it is closed under finite intersections, it is sufficient to evaluate the limits on the standard cylinder sets.

\section{Conformal Measures on Generalized Markov Shifts}

In this section we present the generalized notions of Ruelle's operator, conformal measures and eigenmeasures when the dynamics is partially defined. In addition, we also present the notion o quasi-invariant measure, which is related to the groupoid structure from the Renault-Deaconu's theory. We prove that there is a similar result to corollary \ref{cor:equivalences_conformality_classical} about the equivalences between the aforementioned measures. Furthermore, we compare and study the connections among these generalized measures to the classical ones for the generalized space $X_A$ in terms of restriction of the generalized measures to $\Sigma_A$ and extension of the classical ones to $X_A$. Properties such as non-singularity and conservativity of the aforementioned measures are discussed in this new setting as well. The notions presented next could be considered in a much more general context, but we specify them to $X_A$ in further sections, when we will obtain new phase transition results.

The first definitions and results in this section are general: we fix a locally compact, Hausdorff and second countable topological space $X$ endowed with a local homeomorphism  $\sigma:U \to X$, where $U$ is an open subset of $X$. In addition, consider its respective Renault-Deaconu groupoid $\mathcal{G}(X,\sigma)$, and a continuous potential $F:U \to \mathbb{R}$. Since we are interested in phase transition phenomena, we also consider the inverse of the temperature $\beta >0$, which will be a factor multiplying $F$, and it can be absorved in the potential if one whishes to state the definitions here in the same way as was done in chapter \ref{ch:Markov_shift_space}. After Remark \ref{remark:KMS_quasi_invariant}, we restrict the study for the set $X_A$.

We define now the Ruelle's transformation, which is the generalized version of the Ruelle's operator.

\begin{definition}\label{def:Ruelle_transformation} The Ruelle's transformation is the linear transformation
\begin{align}
    L_{\beta F}:C_c(U) &\to C_c(X) \nonumber\\
    f  & \mapsto L_{\beta F}(f)(x):= \sum_{\sigma(y)=x}e^{\beta F(y)}f(y).\label{eq:general_Ruelle}
\end{align}
\end{definition}

\begin{proposition} $L_{\beta F}$ as in Definition \ref{def:Ruelle_transformation} is a well-defined linear transformation. 
\end{proposition}
    
\begin{proof} W.l.o.g. assume $\beta = 1$. Let $f \in C_c(U)$ and observe that, for every $x \in X$, the set
\begin{equation*}
    K_x = \supp(f) \cap \sigma^{-1}(x),
\end{equation*}
where $\sigma^{-1}(x) := \sigma^{-1}(\{x\})$, is a compact set, since it is a closed subset of a compact set. For every $x \in X$ and $y \in K_x$ we have the following:
\begin{itemize}
    \item by local compactness of $X$, there exists an open neighborhood $V_y^1(x)$ of $y$, s.t. $\overline{V_y^1(x)}$ is compact;
    \item since $\sigma$ is a local homeomorphism, there exists an open neighborhood $V_y^2(x)$ of $y$, s.t. $\sigma\vert_{V_y^2(x)}$ is a homeomorphism onto its image.
\end{itemize}
Define $V_y(x):= V_y^1(x)\cap V_y^2(x)$. It is straightforward that $\overline{V_y(x)}$ is compact and that $\sigma\vert_{V_y^2(x)}$ is a homeomorphism onto its image. Also,
\begin{equation*}
    \supp(f) \subseteq \bigcup_{x \in X} \bigcup_{y \in K_x} V_y(x).
\end{equation*}
By compactness\footnote{Observe that $\supp(f)$ is a compact subset of $U$ and then it is also a compact subset $X$.} of $\supp(f)$, there exists $n \in \mathbb{N}$ and a family
\begin{equation}\label{eq:family_V_k_L_F_well_defined}
    \{V_k\}_{k = 1}^n \subseteq \{V_y(x): x\in X, y \in K_x\}
\end{equation}
satisfying
\begin{equation}\label{eq:suppf_contained_V_k_union}
    \supp(f) \subseteq \bigcup_{k=1}^n V_k.
\end{equation}
In particular,
\begin{equation*}
    K_x \subseteq \bigcup_{k=1}^n V_k,
\end{equation*} 
then $\supp(f) \cap \sigma^{-1}(x)$ is a finite set for every $x$, and hence the sum
\begin{equation*}
    L_F(f) (x) = \sum_{\sigma(y)=x}e^{\beta F(y)}f(y)
\end{equation*}
is a finite sum for every $x \in X$, and therefore $L_F(f)$ is a well-defined function on $X$. Now we prove that $L_F(f) \in C_c(X)$. Define the set
\begin{equation*}
    H:= \{x \in X : \sigma^{-1}(x)\cap \supp f \neq \emptyset\},
\end{equation*}
and observe that, for every $z \in H^c$, we have $\sigma^{-1}(z) \cap \supp f = \emptyset$, and then
\begin{equation*}
    L_F(f)(z) = \sum_{\sigma(y) = z} e^{\beta F(y)}f(y) = 0,
\end{equation*}
hence $H^c \subseteq \{x\in X: L_F(f)(x) = 0\}$, that is,
\begin{equation}\label{eq:support_interior_contained_in_H}
    \{x\in X: L_F(f)(x)\neq 0\}\subseteq H.
\end{equation}
We claim that
\begin{equation*}
    H \subseteq \bigcup_{k=1}^n \sigma(\overline{V_k}),
\end{equation*}
where the family $\{V_k\}_{k = 1}^n$ is the same as in \eqref{eq:family_V_k_L_F_well_defined}. In fact, for every $x \in H$, we have
\begin{equation*}
    \sigma^{-1}(x) \cap \supp(f) \subseteq \supp(f) \subseteq \bigcup_{k=1}^n V_k \subseteq \bigcup_{k=1}^n \overline{V_k}.
\end{equation*}
Then\footnote{For every function $f:A \to B$ and, it holds that $f(f^{-1}(C)) = C$, if $C$ is a sigleton contained in the image of $f$.},
\begin{equation*}
    \{x\} = \sigma\left(\sigma^{-1}(x) \cap \supp(f)\right) \subseteq \sigma\left(\bigcup_{k=1}^n \overline{V_k}\right) = \bigcup_{k=1}^n \sigma(\overline{V_k}),
\end{equation*}
and the claim is proved. Since each $\overline{V_k}$ is compact, we have that $\sigma(\overline{V_k})$ is compact, and so $\bigcup_{k=1}^n \sigma(\overline{V_k})$ is compact. By \eqref{eq:support_interior_contained_in_H} we have $\supp(L_F(f)) \subseteq \overline{H}$ and therefore $L_F(f) \in C_c(X)$. It remains to prove that $L_F(f)$ is continuous. By \eqref{eq:suppf_contained_V_k_union}, for every $x\in X$ we have that
\begin{equation*}
    |\sigma^{-1}(x)\cap supp(f)|= N_x\leq n.
\end{equation*}
W.l.o.g. assume $N_x > 0$ and write $\sigma^{-1}(x)\cap supp(f) = \{y_1,...,y_{N_x}\}$, with $y_k \in V_k$. For each $y_i$ there exists an open neighborhood $\Omega_i \subseteq V_i$ s.t. $\Omega_i \cap \Omega_j = \emptyset$ whenever $i \neq j$. It is straightforward that the restriction $\sigma\vert_{\Omega_i}$ is a homeomorphism onto its image. We may assume that $\sigma(\Omega_i)=\sigma(\Omega_j)$ for every $i,j\in\{1,...,N_x\}$. Now, consider a sequence $\{x_m\}_{m \in \mathbb{N}}$ converging to $x$. There exists $M \in \mathbb{N}$ s.t. $m \geq M$ implies $x_m \in \sigma(\Omega_i)$, and then we may take $\sigma\vert_{\Omega_i}^{-1}(x_m)$ for each $i$. Note that, for $i \neq j$,  $\sigma\vert_{\Omega_i}^{-1}(x_m) \neq \sigma\vert_{\Omega_j}^{-1}(x_m)$. We claim that
\begin{equation*}
    \sigma^{-1}(x_m)\cap supp(f)\subseteq \{\sigma\vert_{\Omega_i}^{-1}(x_m): i=1,...,N_x\}
\end{equation*}
for $m$ sufficiently large. Indeed, suppose that for infinitely many values of $m$ we have
\begin{equation}\label{eq:contradiction_hypothesis_continuity_L_F_f}
    \sigma^{-1}(x_m)\cap supp(f)\not\subseteq \{\sigma\vert_{\Omega_i}^{-1}(x_m): i=1,...,N_x\},
\end{equation}
so we may consider the subsequence $\{x_{m_k}\}_{k \in \mathbb{N}}$ be of the elements that satisfy \eqref{eq:contradiction_hypothesis_continuity_L_F_f} and $\{z_{m_k}\}\subseteq supp(f)$ those elements that $\sigma(z_{m_k})=x_{m_k}$ s.t. $z_{m_k} \notin \bigcup_{p=1}^{N_x}V_p$ (otherwise, they would be in the form $\sigma\vert_{\Omega_i}^{-1}(x_{n_k})$). Since
\begin{equation*}
    supp(f)\cap (\cup_{p=1}^{N_x} V_p)^c
\end{equation*}
is compact, $\{z_{m_k}\}$ has a convergent subsequence which we also will denote by $\{z_{m_k}\}$. Let $y$ be its limit. By continuity of $\sigma$ we have
\begin{equation*}
    x_{m_k} = \sigma(z_{m_k}) \to \sigma(y).
\end{equation*}
Since $x_m \to x$, we also obtain $x = \sigma(y)$. Since $y \notin V_p$ for every $p \in \{1,...,N_x\}$ we have that
\begin{equation*}
    |\sigma^{-1}(x)\cap supp(f)|>N_x, 
\end{equation*}
a contradiction. The claim is proved. Then,
\begin{align*}
\lim_m L_F(f)(x_m)&= \lim_m \sum_{\sigma(y)=x_m}e^{F(y)}f(y) = \lim_m \sum_{i=1}^{N_x}e^{F(\sigma\vert_{\Omega_i}^{-1}(x_m))}f(\sigma\vert_{\Omega_i}^{-1}(x_m))\\
&= \sum_{i=1}^{N_x}e^{F(\sigma\vert_{\Omega_i}^{-1}(x))}f(\sigma\vert_{\Omega_i}^{-1}(x)) 
=\sum_{\sigma(y)=x}e^{F(y)}f(y)=L_F(f)(x).
\end{align*}
\end{proof}

\begin{definition}[Eigenmeasure associated to the Ruelle Transformation] Consider the Borel $\sigma$-algebra $\mathcal{B}_X$. A measure $\mu$ on $\mathcal{B}_X$ is said to be an \textit{eigenmeasure} with eigenvalue $\lambda$ for the Ruelle transformation $L_{\beta F}$ when
\begin{equation}\label{eq:conformal_eigenmeasure_functions_X_A}
    \int_{X} L_{\beta F}(f)(x)d\mu(x) = \lambda \int_{U} f(x)d\mu(x),
\end{equation}
for all $f \in C_c(U)$.
\end{definition}
In other words, the equation \eqref{eq:conformal_eigenmeasure_functions_X_A} can be rewritten by using \eqref{eq:general_Ruelle} as
\begin{equation}\label{eq:conformal_eigenmeasure_functions_X_A_2}
    \int_{X} \sum_{\sigma(y) = x} e^{\beta F(y)}f(y) d\mu(x) = \lambda \int_{U} f(x)d\mu(x),
\end{equation}
for all $f \in C_c(U)$. As in the standard theory of countable Markov shifts, when a measure $m$ satisfies the equation \eqref{eq:conformal_eigenmeasure_functions_X_A_2} we write $L_{\beta F}^*\mu = \lambda\mu$.

Now we introduce the notions of conformal measure in the senses of Denker-Urba\'nski and Sarig in the generalized setting.

\begin{definition}[Conformal measure - Denker-Urba\'nski] Let $(X,\mathcal{F})$ be a measurable space and $D:U \to [0,\infty)$ also measurable. A measure $\mu$ in $X$ is said to be $D$-conformal in the sense of Denker-Urba\'nski if
\begin{equation}\label{eq:conformal_Ur_sets_potential}
    \mu(\sigma(B)) = \int_B D d\mu,
\end{equation}
for every special set $B \subseteq U$.
\end{definition}

Although the previous definition is very general, we will always consider $\mathcal{F} = \mathcal{B}_X$.

\begin{definition}
Given a Borel measure $\mu$ on $X$ we define the measure $\mu\odot\sigma$ on $U$ by
$$\mu\odot\sigma(E):=\sum_{i\in \mathbb{N}}\mu(\sigma(E_i)).$$
For every measurable set $E\subseteq U$, where the $\{E_i\}_{i \in \mathbb{N}}$ is a family of pairwise disjoint measurable sets such that $\sigma\vert E_i$ is injective for every $i$, and $E =\sqcup_i E_i$. 
\end{definition}

\begin{remark}
We show that $\mu\odot\sigma$ is well defined. First we prove the existence of at least one countable family $\{E_i\}$, as above. Indeed, if $E\subseteq U$, since $\sigma$ is a local homeomorphism for each $x\in$ E there is an open subset $H_x\ni x$ such that $\sigma$ is injective, we have $E\subseteq \cup_{x\in E} H_x$. For each of those $H_x$ there is an open basic set $U_x$ such that $x\in U_x$, but the topology basis is countable, so we can enumerate $\{U_x\}=\{U_1,U_2,...\}$ and we observe that $\sigma$ is injective on each $U_i$. Take $E_1:=E\cap U_1$, $E_n:=E\cap U_n\setminus\bigsqcup_{i=1}^{n-1}E_i$ and we have what we claimed.

Now we shall see that the definition does not depend on the decomposition of $E$. Let $E=\bigsqcup E_i=\bigsqcup F_j$, then $E=\bigsqcup_{i,j}E_i\cap F_j$. Therefore,
$$\sum_i\mu(\sigma(E_i))=\sum_i\mu(\sigma(\sqcup_j E_i\cap F_j))=\sum_i\mu(\sqcup_j\sigma( E_i\cap F_j))=\sum_{i,j}\mu(\sigma(E_i\cap F_j))$$
Doing analogously for $\{F_j\}$ instead of $\{E_i\}$ we conclude that
$$\sum_i\mu(\sigma(E_i))=\sum_j\mu(\sigma(F_j)) $$
We therefore have that the measure $\mu\odot\sigma$ is well defined.
\end{remark}

\begin{definition}[Conformal measure - Sarig]
A Borel measure $\mu$ in $X$ is called $(\beta F, \lambda)$-conformal in the sense of Sarig if there exists $\lambda >0$ such that
\begin{equation*}
\dfrac{d\mu\odot\sigma}{d\mu}(x)= \lambda e^{-\beta F(x)}\quad \mu\; a.e \; x\in U.
\end{equation*}
\end{definition}

\begin{remark} Both notions of conformal measures in the senses of Denker-Urba\'nski and Sarig do not require that the potential be continuous. 
\end{remark}

Now we define the $1$-cocycle associated to the potential for the generalized Renault-Deaconu groupoid.

\begin{definition} Consider the generalized Renault-Deaconu groupoid $\mathcal{G}(X,\sigma)$ and a continuous potential $F:U \to \mathbb{R}$, we define the continous $1$-cocycle $c_F:\mathcal{G}(X,\sigma)\to \mathbb{R}$, associated to the potential $F$, as follows. Let $(x,n-m,y) \in W(n,m,V_1,V_2)$, set
\begin{equation*}
    c_F(x,n-m,y) = \begin{cases}
                        \sum_{i=0}^{n-1}F(\sigma^i(x))-\sum_{i=0}^{m-1}F(\sigma^i(y)), \quad \text{if } x,y \in U;\\
                        \sum_{i=0}^{n-1}F(\sigma^i(x)), \quad \text{if } x \in U \text{ and } y \in U^c \text{ }(m=0);\\
                        -\sum_{i=0}^{m-1}F(\sigma^i(y)), \quad \text{if } x \in U^c \text{ and } y \in U\text{ }(n=0);\\
                        0, \quad \text{otherwise}.
                   \end{cases}
\end{equation*}
\end{definition}

%%%%%% TEOREMA PRINCIPAL 1

The next theorem is the generalized version of the Corollary \ref{cor:equivalences_conformality_classical}.  %lema radon

\begin{theorem}\label{thm:equivalences_conformal_measures_generalized_Markov_shift} Let $X$ be locally compact, Hausdorff and second countable space, $U \subseteq X$ open and $\sigma: U \to X$ a local homeomorphism. Let $\mu$ be a Borel measure that is finite on compact sets. For a given continuous potential $F:U\to \mathbb{R}$, the following are equivalent.

\begin{itemize}
    \item[$(i)$] $\mu$ is $e^{\beta F}$-conformal measure in the sense of Denker-Urba\'nski;
    \item[$(ii)$] $\mu$ is a eigenmeasure associated with the Ruelle Transformation $L_{-\beta F}$, that is
    \begin{equation*}
        \int_{X} \sum_{\sigma(y)=x} f(y) e^{-\beta F(y)} d\mu(x) = \int_{U} f(x) d\mu(x),
    \end{equation*}
    for all $f \in C_c(U)$;
    \item[$(iii)$] $\mu$ is $e^{-\beta c_F}$-quasi-invariant on $\mathcal{G}(X,\sigma)$, i.e
    \begin{equation}\label{eq:quasi-invariant-measure}
        \int_{X} \sum_{r(\gamma)=x} e^{\beta c_F(\gamma)} f(\gamma)  d\mu(x) = \int_{X}\sum_{s(\gamma)=x}f(\gamma)  d\mu(x).
    \end{equation}
    for all $f \in C_c(\mathcal{G}(X,\sigma))$;
    \item[$(iv)$] $\mu$ is $(-\beta F, 1)$-conformal in the sense of Sarig.
\end{itemize}
\end{theorem}

\begin{proof} 

$(iii)\implies (ii)$ is analogous to Proposition 4.2 in \cite{Renault2003}, but we repeat the proof. For any $f\in C_c(U)$, consider:
$$\int_X \sum_{\sigma(y)=x}f(y)e^{-\beta F(y)}d\mu(x)=\int_X\sum_{s(\gamma)=x}(fe^{-\beta F}\circ r)(\gamma)\mathbbm{1}_{S}(\gamma)d\mu(x). $$
where $S$ is the set $\{(x,1,\sigma(x)): x\in U\}$. Now, we use that $\mu$ is quasi-invariant to conclude that
$$\int_X\sum_{s(\gamma)=x}(fe^{-\beta F}\circ r)(\gamma)\mathbbm{1}_{S}(\gamma)d\mu(x)=\int_X\sum_{r(\gamma)=x}(fe^{-\beta F})\circ r(\gamma)\mathbbm{1}_{S}(\gamma)e^{\beta c_F(\gamma)}d\mu(x)=\int_U f d\mu. $$
Proving the implication we were interested.
 
For $(ii) \implies (i)$ let $V$ be an open subset of $U$ such that $\sigma\vert_V$ is injective, and let $W = \sigma(V)$. Also denote by $\tau : W \to V$ the inverse of the restriction of $\sigma$ to $V$. We then have two measures of interest on $V$, namely
\begin{equation*}
\tau^*(\mu\vert_W)\quad and \quad e^{\beta F}\mu\vert_V.    
\end{equation*}
We claim that the above measures on $V$ are equal. By the uniqueness part of the Riesz-Markov Theorem, it is enough to prove that
\begin{equation}\label{eq:urbanski_integrated}
    \int_V gd\tau^*(\mu\vert_W)=\int_V ge^{\beta F}d\mu\vert_V,
\end{equation}
for every $g$ in $C_c(V)$. Given such a $g$, we consider its extension to the whole of $U$ by setting it to be zero on $U \setminus V$ . The extended function is then in $C_c(U)$. Defining $f = ge^{\beta F}$ , we then have that

\begin{align*}
    \int_V g e^{\beta F} d\mu\vert_V&=\int_U f d\mu \stackrel{(\ref{thm:equivalences_conformal_measures_generalized_Markov_shift}.ii)}{=}\int_X \sum_{\sigma(y)=x}f(y)e^{-\beta F(y)}d\mu(y)\\
    &=\int_W f(\tau(x))e^{-\beta F(\tau(x))}d\mu(x)=\int_W g(\tau(x))d\mu(x)=\int_V g d\tau^*(\mu).
\end{align*}
This proves \eqref{eq:urbanski_integrated}, and hence also that $\tau^*(\mu\vert_W)=e^{\beta F}\mu\vert_V$. It follows that, for every measurable set $E\subseteq V$,
\begin{equation*}
    \mu(\sigma(E))=\mu(\tau^{-1}(E))=\tau^*(\mu\vert_W)(E)=\int_E e^{\beta F}d\mu.
\end{equation*}
Now, suppose $E\subseteq U$ is a special set, since $\sigma$ is a local homeomorphism and $X$ is second countable, there exists a countable collection of open sets $\{V_i\}_{i\in \mathbb{N}}$ such that $\sigma\vert_{V_i}$ is injective and $E\subseteq \bigcup_{i\in \mathbb{N}}V_i$. Then we have a countable collection of measurable sets $\{E_i\}_{i\in\mathbb{N}}$, pairwise disjoint, such that $E_i\subseteq V_i$ and $E=\sqcup_{i\in \mathbb{N}}E_i$. We conclude, using that $E$ is special, that
\begin{equation*}
    \mu(\sigma(E))=\sum_{i\in \mathbb{N}}\mu(\sigma(E_i))=\sum_{i\in \mathbb{N}}\int_{E_i}e^{\beta F} d\mu=\int_E e^{\beta F}.
\end{equation*}

$(i)\implies (iii)$. We consider the open bisections defined in the preliminaries $W(n,m,C,B)$. W.l.o.g we can consider $\sigma^n(C)=\sigma^m(B)$, since if not we could take open sets $C'\subseteq C$ and $B'\subseteq B$ such that $\sigma^n(C')=\sigma^n(C)\cap \sigma^m(B)=\sigma^m(B')$. Also, we can suppose that $\sigma^n$ is injective when restricted $C$, similarly for $\sigma^m$ and $B$. In this setting, we can define the map $\sigma^{n-m}_{CB}:=\sigma^{-m}_B\circ\sigma^n_C$ and similarly $\sigma^{m-n}_{BC}:=\sigma^{-n}_C\circ \sigma^{m}_B$.
\[
\begin{tikzcd}
 & \sigma^n(C)=\sigma^m(B)  \\
C \arrow{ur}{\sigma^n_C} \arrow[rr,dashed,"\sigma^{n-m}_{CB}"] && B \arrow[ul,"\sigma^m_B"'] \arrow[ll,dashed,"\sigma^{m-n}_{BC}", bend left]
\end{tikzcd}
\]
Let $f \in C_c(\mathcal{G}(X,\sigma)$ s.t. $supp(f)\subseteq W(n,m,C,B)$. Let us see how the equation \eqref{eq:quasi-invariant-measure} on item $(iii)$ simplifies in this case. Observe first the left hand side. If $x \notin C$, clearly there is no $\gamma\in W(n,m,C,B)$ such that $r(\gamma)=x$, so the integration can be done in $C$. Now, for $x\in C$, consider $\gamma_1,\gamma_2\in W(n,m,C,B)$ such that $r(\gamma_1)=r(\gamma_2)=x$. Since the range map is injective in such set we have $\gamma_1=\gamma_2$ and we conclude the summation on the left hand side of equation \eqref{eq:quasi-invariant-measure} have at most one non-zero term for each $x\in C$. Denoting this term by $\gamma_x$, we see this term is written as $\gamma_x=(x,n-m,\sigma_{CB}^{n-m}(x))$. So, the left hand side of equation \eqref{eq:quasi-invariant-measure} is
\begin{equation}\label{eq:theorem3,LHS}
\int_C e^{\beta c_F(\gamma_x)} f(\gamma_x)  d\mu(x)=\int_C e^{\beta c_F((x,n-m,\sigma_{CB}^{n-m}(x)))} f(x,n-m,\sigma_{CB}^{n-m}(x))  d\mu(x).    
\end{equation}
Calculation on the right hand side of equation \eqref{eq:quasi-invariant-measure} is done in a similar fashion, we have:
\begin{equation}\label{eq:theorem3,RHS}
    \int_B f(\sigma_{BC}^{m-n}(y),n-m,y)d\mu(y).
\end{equation}
Now let $g:C\to \mathbb{C}$ defined by $g(x)=f(x,n-m,\sigma_{CB}^{n-m}(x)).$ Observe that $g(\sigma^{m-n}_{BC}(y))=f(\sigma^{m-n}_{BC}(y),n-m,y)$, which is the function in the equation \eqref{eq:theorem3,RHS}. We rewrite the quasi-invariant condition with the considerations from above
\begin{equation}\label{eq:QIC}
    \int_C e^{\beta c_F(x,n-m,\sigma^{n-m}(x))}g(x)d\mu(x)=\int_B g(\sigma^{m-n}_{BC}(y))d\mu(y),
\end{equation}
for $g\in C_c(C)$. We just need to prove that item $(i)$ implies equation \eqref{eq:QIC}. First, observe that item $(i)$ implies for all $C$ open subset of $U$, $\sigma\vert_C$ injective that
\begin{equation}\label{eq:urbanski_the_not_so_good}
\int_C g(x) e^{\beta F(x)} d\mu(x) = \int_{\sigma(C)} g(\sigma^{-1}(x)) d\mu(x),    
\end{equation}
for all $g\in C_c(C)$. We observe as well that equation \eqref{eq:urbanski_the_not_so_good} is equation \eqref{eq:QIC} when $m=0$ and $n=1$. To prove \eqref{eq:QIC} we proceed by induction on $n+m$. If $n+m=0$ we have $C=B$ , $\sigma_{BC}^{n-m}=Id$ and $c_F(x,0,x)=0$, so equation (\ref{eq:QIC}) is clearly satisfied. Take $n\neq 0$.%% Figure CBsigma_g.png %%

\[
\begin{tikzcd}[row sep=5em]
 & \mathbb{C} &  \\[-2em]
C\arrow[ur,"g"]\arrow[rr,"\sigma"]\arrow[d,"\sigma^{n-m}_{CB}"']& & \sigma(C)=C'\arrow[ul,"g'"']\arrow[d,"\sigma^{n-1}"]\\
B\arrow[urr,"\sigma^{m-(n-1)}_{BC'}",dashed]\arrow[rr,"\sigma^m_B"']& & \sigma^n(C)=\sigma^m(B)
\end{tikzcd}
\]
 \noindent
Let $g':C'=\sigma(C)\to  \mathbb{C}$ defined by $g'(x)=g(\sigma^{-1}(x))$, $g'\in C_c(C')$. By induction hypothesis,
$$\int_B g'(\sigma_{BC'}^{m-(n-1)}(y))d\mu(y)=\int_{C'}e^{\beta c_F(x,n-1-m,\sigma^{(n-1)-m}(x))}g'(x)d\mu(x).$$
On the other hand, using as reference the figure above
$$\int_B g'(\sigma_{BC'}^{m-(n-1)}(y))d\mu(y)=\int_B g(\sigma^{-1}\sigma_{BC'}^{m-(n-1)}(y))d\mu(y)=\int_B g(\sigma_{BC}^{m-n}(y))d\mu(y)$$
Which is the right hand side of equation \eqref{eq:QIC}. Then,
\begin{equation}\label{eq:g_2}
\int_{C'}e^{\beta c_F(x,n-1-m,\sigma^{(n-1)-m}(x))}g'(x)d\mu=\int_{\sigma(C)}\underbrace{e^{\beta c_F(x,n-1-m,\sigma^{(n-1)-m}(x))}g(\sigma^{-1}(x))}_{g_2(x)}d\mu.    
\end{equation}
The equation \eqref{eq:urbanski_the_not_so_good}, using a change of variables, can be seen as well as
$$\int_C g_2(\sigma(x))e^{\beta F(x)} d\mu(x)=\int_{\sigma(C)}g_2(x)d\mu\quad \forall g_2\in C_c(\sigma(C)). $$
Applying it to \eqref{eq:g_2}, we obtain
\begin{align*}
\int_{C'}e^{\beta c_F(x,n-1-m,\sigma^{(n-1)-m}(x))}g'(x)d\mu(x)&=\int_C g_2(\sigma(x))e^{\beta F(x)} d\mu(x)\\
&= \int_C e^{\beta c_F(\sigma(x),n-1-m,\sigma^{(n-1)-m}(\sigma(x)))}g(x)e^{\beta F(x)}d\mu(x).
\end{align*}
It is left to verify that
$$c_F(\sigma(x),n-1-m,\sigma^{n-1-m}(\sigma(x)))+F(x)=c_F(x,n-m,\sigma^{n-m}(x)),$$
because that is the left hand side of equation \eqref{eq:QIC}.
It is true by the cocycle property of $c_F$ and the fact that $F(x)=c_F(x,1,\sigma(x))$ along with the observation that $$(x,1,\sigma(x))(\sigma(x),n-1-m,\sigma^{n-m}(x))=(x,n-m,\sigma(x)).$$
The implication $(i)\implies (iii)$ is proved for $f$ supported on the open bisection $W(n,m,C,B)$, therefore proved for every $f\in C_c(\mathcal{G}(X,\sigma)).$
 \noindent
Now, we prove $(i) \iff (iv)$. Suppose that $\dfrac{d\mu\odot\sigma}{d\mu}=e^{\beta F}\quad \mu\; a.e\; x\in U$. Take $E\subseteq U$ such that $\sigma|_E$ is injective. Then,
$$\mu(\sigma(E))=\mu\odot\sigma(E)=\int_U \mathbbm{1}_{E}\,d\mu\odot\sigma=\int_U \mathbbm{1}_E e^{\beta F(x)}d\mu(x)=\int_E e^{\beta F(x)} d\mu(x)$$
and we have proved item $(i)$.
 \noindent
Now the converse. Let $E\subseteq$ U and $\{E_i\}_{i\in \mathbb{N}}$ be its decomposition. Hence,
\begin{align*}
\mu\odot\sigma(E)= \sum_{i\in \mathbb{N}}\mu(\sigma(E_i))=\sum_{i\in \mathbb{N}}  \int_X \mathbbm{1}_{E_i} e^{\beta F(x)}\mu(x)=\int_E e^{\beta F(x)} d\mu(x). \nonumber
\end{align*}
Since this is true for every measurable set $E\subseteq U$, we have
$$\dfrac{d\mu\odot\sigma}{d\mu}(x)=e^{\beta F(x)}\quad \mu\; a.e \; x\in U.$$
This concludes the theorem.
\end{proof}

\begin{remark} Similarly to the standard Markov shift case, the theorem above also is generalized to general eigenmeasures, for any eigenvalue, through absortion of the eigenvalue to the potential like it is shown in Remark \ref{remark:equivalences_conformality_classical_general_eigenvalue}. 
\end{remark}

\begin{remark}\label{remark:KMS_quasi_invariant} The above theorem is of particular interest, since it is known that if a measure $\mu$ is $e^{-\beta c_F}$-quasi-invariant on $\mathcal{G}(X,\sigma)$, then the state defined by 
\begin{equation}
\varphi_\mu(f)=\int_X f(x,0,x)d\mu(x),\quad f\in C_c(\mathcal{G}(X,\sigma))
\end{equation}
is a $KMS_\beta$ state of the full groupoid $C^*$-algebra $C^*(\mathcal{G}(X,\sigma))$ for the one parameter group of automorphisms $(\eta_t)$ defined as
\begin{equation*}
    \eta_t(f)(\gamma)=e^{itc_F(\gamma)}f(\gamma) , \quad f\in C_c(\mathcal{G}(X,\sigma)), \gamma\in \mathcal{G}(X,\sigma).
\end{equation*}
More than that, if the subgroupoid $c_F^{-1}(0)$ is \textit{principal}, then every $KMS_\beta$ state of $C^*(\mathcal{G}(X,\sigma))$ for $\eta$ is of the form $\varphi_\mu$, for a $\mu$ $e^{-\beta c_F}$-quasi-invariant measure (see Theorem 3.3.12 of \cite{Renault2009} for these assertions). By principal we mean that the \textit{isotropy} subgroupoid  $\Iso(\mathcal{G}):=\{\gamma\in \mathcal{G}: r(\gamma)=s(\gamma) \}$ coincides with $\mathcal{G}^{(0)}$. Observe that the units of $c_F^{-1}(0)$ are $X$.

As an example, suppose $F>0$. Then for $\gamma=(x,n-m,x)\in \Iso(c_F^{-1}(0))$, we have
$$\sum_{i=0}^{n-1}F(\sigma^i(x))=\sum_{i=0}^{m-1}F(\sigma^i(x))$$
which can only happen if $n=m$, thus $\gamma=(x,0,x)\in X$ and we conclude that $c_F^{-1}(0)$ is principal.
\end{remark}

Some natural questions arise from this new setting, such as when these measures are non-singular and when they are conservative. Also one could ask if these measures when restricted to $\mathcal{B}_{\Sigma_A}$ are also eigenmeasures in the standard Markov shift space, or even if an eigenmeasure in the classical case when extended to $X_A$ is an eigenmeasure in this generalized setting. Our aim now is to answer all these questions. First, we prove that every eigenmeasure on $X_A$ is non-singular in $U$.

\begin{proposition}\label{prop:eigen_measure_nonsingular_Generalized_Markov_shift} Every eigenmeasure on $X_A$ is non-singular in $U$.
\end{proposition}

\begin{proof} For every eigenmeasure with eigenvalue $\lambda >0$, we have
\begin{equation}\label{eq:conformal_eigenmeasure_functions_general}
    \int_{X_A} L_{-\beta F}(f)(x)d\mu(x) = \lambda \int_{U} f(x)d\mu(x),
\end{equation}
and then
\begin{equation}\label{eq:conformal_eigenmeasure_functions_general_modified}
    \int_{X_A} L_{-\beta \left( F + \frac{\log \lambda}{\beta}\right)}(f)(x)d\mu(x) = \int_{U} f(x)d\mu(x),
\end{equation}
so by the equivalence of conformal measures of Theorem \ref{thm:equivalences_conformal_measures_generalized_Markov_shift}, we have that $\mu$ is a $\left(-\beta \left( F + \frac{\log \lambda}{\beta}\right),1\right)$-conformal measure in the sense of Sarig, that is
\begin{equation}\label{eq:conformality_RN_derivative}
    \dfrac{d \mu \odot \sigma}{d\mu}(x) = e^{\beta \left( F(x) + \frac{\log \lambda}{\beta}\right)} = \lambda e^{\beta F(x)}, \quad \mu-a.e. \text{ on } U.
\end{equation}
Suppose that there exists $B \subseteq U$ a Borel set s.t. $\mu(B) = 0$ and $\mu(\sigma^{-1}B) > 0$. Then,
\begin{equation*}
    \mu \odot \sigma (\sigma^{-1}B) = \sum_{i \in \mathbb{N}} \mu (\sigma(C_i \cap \sigma^{-1}B)) \leq \sum_{i \in \mathbb{N}} \mu (B) =0.
\end{equation*}
So, $\mu \odot \sigma (\sigma^{-1}B) = 0$, and combining this with the equation \eqref{eq:conformality_RN_derivative} we have that
\begin{equation*}
    \mu (\sigma^{-1}B) = \lambda^{-1} \int_{\sigma^{-1}B} e^{-\beta F(x)} d \mu \odot \sigma = 0,
\end{equation*}
a contradiction. Conversely, if $B \subseteq U$ is a Borel set satisfying $\mu(B) > 0$ and $\mu(\sigma^{-1}B) = 0$, then there exists $i \in \mathbb{N}$ s.t. $\mu(\sigma(C_i\cap \sigma^{-1}B))>0$ and then $\mu \odot \sigma(B) >0$. But
\begin{equation*}
    \mu (\sigma^{-1}B) = \lambda^{-1} \int_{\sigma^{-1}B} e^{-\beta F(x)} d \mu \odot \sigma = \lambda^{-1} \sum_{k \in \mathbb{N}}\int_{C_k \cap \sigma^{-1}B} e^{-\beta F(x)} d \mu \odot \sigma > 0,
\end{equation*}
and therefore we have a contradiction again. We conclude that $\mu$ is non-singular in $U$.
\end{proof}

\begin{theorem}\label{thm:restriction_eigenmeasures} For $X_A$, let $F:U \to \mathbb{R}$ be a potential and $\lambda > 0$ such that there exists a $(F,\lambda)$-conformal measure $\mu$ on $X_A$. 
\begin{itemize}
    \item[$(a)$] Let $\mu_{\Sigma_A}$ the restriction measure of $\mu$ to $\mathcal{B}_{\Sigma_A}$, defined by
    \begin{equation*}
        \mu_{\Sigma_A}(E) := \mu(E\cap \Sigma_A), \quad E \in \mathcal{B}_{X_A}.
    \end{equation*}
    Then, $\mu_{\Sigma_A}$ is also a $(F,\lambda)$-conformal measure. Moreover, its domain restriction\footnote{Observe that $\mu_{\Sigma_A}\vert_{\mathcal{B}_{\Sigma_A}} = \mu\vert_{\mathcal{B}_{\Sigma_A}}$.} to $\mathcal{B}_{\Sigma_A}$, $\mu_{\Sigma_A}\vert_{\mathcal{B}_{\Sigma_A}}$, is a $(F\vert_{\Sigma_A},\lambda)$-conformal measure.
    \item[$(b)$] Let $\mu_{Y_A}$ the restriction measure of $\mu$ to $\mathcal{B}_{Y_A}$, defined by
    \begin{equation*}
        \mu_{Y_A}(E) := \mu(E\cap Y_A), \quad E \in \mathcal{B}_{X_A}.
    \end{equation*}
    Then, $\mu_{Y_A}$ is also a $(F,\lambda)$-conformal measure. Moreover, its domain restriction\footnote{Observe that $\mu_{Y_A}\vert_{\mathcal{B}_{Y_A}} = \mu\vert_{\mathcal{B}_{Y_A}}$.} to $\mathcal{B}_{Y_A}$, $\mu_{Y_A}\vert_{\mathcal{B}_{Y_A}}$, is a $(F\vert_{Y_A},\lambda)$-conformal measure.
\end{itemize} 
\end{theorem}

\begin{proof} Without loss of generality, we may assume $\lambda = 1$.

\textbf{Proof of (a):} $\mu_{\Sigma_A}$ is well defined since $\Sigma_A$ is a Borel subset of $X_A$, because of Proposition \ref{prop:Sigma_A_Borel_subset_in_X_A}. For every $E \in \mathcal{B}_{X_A}$, $E \subseteq U$, we have
\begin{equation}\label{eq:mu_Sigma_A_odot_sigma}
    \mu_{\Sigma_A} \odot \sigma (E) = \sum_{i \in \mathbb{N}}\mu_{\Sigma_A}(\sigma (E_i)) = \sum_{i \in \mathbb{N}}\mu(\sigma (E_i)\cap \Sigma_A) \stackrel{(\bullet)}{=} \sum_{i \in \mathbb{N}}\mu(\sigma (E_i\cap \Sigma_A)) = \mu \odot \sigma(E \cap \Sigma_A),
\end{equation}
where $(E_i)$ is a decomposition of $E$ such that $\sigma\vert_{E_i}$ is injective. In $(\bullet)$ we used the fact that $\sigma$ is surjective and then $\sigma(\sigma^{-1}(\Sigma_A)) = \sigma(\Sigma_A) = \Sigma_A$. Due to $\mu \odot \sigma \sim \mu$ in $U$ and \eqref{eq:mu_Sigma_A_odot_sigma}, we have that $\mu_{\Sigma_A} \odot \sigma \sim \mu_{\Sigma_A}$ and then the Radon-Nikodym derivative
\begin{equation*}
    \frac{d\mu_{\Sigma_A} \odot \sigma}{d\mu_{\Sigma_A}}
\end{equation*}
is well defined $\mu_{\Sigma_A}$-a.e. Now, for every $E \in \mathcal{B}_{X_A}$, $E \subseteq U$, we have
\begin{align*}
    \mu_{\Sigma_A} \odot \sigma (E) &= \mu \odot \sigma (E \cap \Sigma_A) = \int_{E\cap \Sigma_A} e^{-F(x)} d\mu(x) = \int_{X_A}\mathbbm{1}_{E\cap \Sigma_A} e^{-F(x)} d\mu(x)\\ &= \sup\left\{\int_{X_A}\varphi d\mu: 0 \leq \varphi (x)\leq \mathbbm{1}_{E\cap \Sigma_A}(x) e^{-F(x)} \text{ and } \varphi \text{ is a simple function} \right\}\\
    &\stackrel{(\dagger)}{=} \sup\left\{\int_{X_A}\varphi d\mu_{\Sigma_A}: 0 \leq \varphi (x)\leq \mathbbm{1}_{E\cap \Sigma_A}(x) e^{-F(x)} \text{ and } \varphi \text{ is a simple function} \right\}\\
    &= \int_{X_A} \mathbbm{1}_{E \cap \Sigma_A} e^{-F} d\mu_{\Sigma_A} = \int_{E} e^{-F} d\mu_{\Sigma_A}. 
\end{align*}
The equality $(\dagger)$ is justified as follows: for every simple function
\begin{equation*}
    \varphi = \sum_{i=1}^na_i \mathbbm{1}_{B_i}
\end{equation*}
satisfying $0 \leq \varphi (x)\leq \mathbbm{1}_{E\cap \Sigma_A}(x) e^{-F(x)}$, we may choose $B_i \in \mathcal{B}_{\Sigma_A} \subseteq \mathcal{B}_{X_A}$ for every $i$. Since $\mu = \mu_{\Sigma_A}$ on $\mathcal{B}_{\Sigma_A}$, we have
\begin{align*}
    \int_{X_A}\varphi d\mu = \int_{X_A}\varphi d\mu_{\Sigma_A}.
\end{align*}
We conclude, by uniqueness of the Radon-Nikodym derivative, that 
\begin{equation*}
    \frac{d\mu_{\Sigma_A} \odot \sigma}{d\mu_{\Sigma_A}}(x) = e^{-F(x)} \quad \mu_{\Sigma_A}-\text{a.e. on }U,
\end{equation*}
and therefore $\mu_{\Sigma_A}$ is a $(F,1)$-conformal measure. When we restrict $\mu$ to $\mathcal{B}_{\Sigma_A}$, we have
\begin{equation*}
    \mu\vert_{\mathcal{B}_{\Sigma_A}}(E) = \mu_{\Sigma_A}(E), \quad \forall E \in \mathcal{B}_{\Sigma_A}.
\end{equation*}
It is straightforward that
\begin{equation}
    \mu \odot \sigma\vert_{\mathcal{B}_{\Sigma_A}}(E) = \mu\vert_{\mathcal{B}_{\Sigma_A}} \odot \sigma (E) = \mu_{\Sigma_A} \odot \sigma (E), \quad \forall E \in \mathcal{B}_{\Sigma_A}.
\end{equation}
And therefore $\mu\vert_{\mathcal{B}_{\Sigma_A}}$ is a $(F\vert_{\Sigma_A},1)$-conformal measure.

\textbf{Proof of (b):} it follows from similar steps done in (a), by observing that $\mathcal{B}_{Y_A} \subseteq \mathcal{B}_{X_A}$, since $Y_A = \Sigma_A^c$ is a Borel subset of $X_A$.
\end{proof}

\begin{theorem} \label{thm:extension_eigenmeasures} Let $A$ be a transition matrix.
\begin{itemize}
    \item[$(a)$] Let $F:\Sigma_A \to \mathbb{R}$ be a potential, $\widetilde{F}$ be an extension of $F$ to $U$, and $\lambda >0$. Also, suppose that there exists a $(F,\lambda)$-conformal measure $\mu$ on $\Sigma_A$. Consider its natural extension $\mu_{\text{ext},\Sigma_A}$ on $X_A$, defined by
    \begin{equation*}
        \mu_{\text{ext},\Sigma_A}(E) := \mu(E\cap \Sigma_A), \quad E \in \mathcal{B}_{X_A}.
    \end{equation*}
    Then, $\mu_{\text{ext},\Sigma_A}$ is a $(\widetilde{F},\lambda)$-conformal measure on $X_A$.
    
    \item[$(b)$] Let $F:Y_A \to \mathbb{R}$ be a potential, $\widetilde{F}$ be an extension of $F$ to $U$, and $\lambda >0$. Also, suppose that there exists a $(F,\lambda)$-conformal measure $\mu$ on $Y_A$. Consider its natural extension $\mu_{\text{ext},Y_A}$ on $X_A$, defined by
    \begin{equation*}
        \mu_{\text{ext},Y_A}(E) := \mu(E\cap Y_A), \quad E \in \mathcal{B}_{X_A}.
    \end{equation*}
    Then, $\mu_{\text{ext},Y_A}$ is a $(\widetilde{F},\lambda)$-conformal measure on $X_A$.
\end{itemize}
\end{theorem}

\begin{proof} \textbf{Proof of (a):} observe that $\mu_{\text{ext},\Sigma_A}$ is well defined because $\mathcal{B}_{\Sigma_A} \subseteq \mathcal{B}_{X_A}$. W.l.o.g. we consider the case that $\mu \neq 0$ and $\lambda = 1$. For every $E \in \mathcal{B}_{X_A}$, $E \subseteq U$, we have
\begin{equation*}
    \mu_{\text{ext},\Sigma_A} \odot \sigma (E) = \sum_{i \in \mathbb{N}}\mu_{\text{ext},\Sigma_A}(\sigma (E)) = \sum_{i \in \mathbb{N}}\mu(\sigma (E)\cap \Sigma_A) \stackrel{(\bullet)}{=} \sum_{i \in \mathbb{N}}\mu(\sigma (E\cap \Sigma_A)) = \mu \odot \sigma(E \cap \Sigma_A),
\end{equation*}
where in $(\bullet)$ we used the same arguments for the $(\bullet)$ in Theorem \ref{thm:restriction_eigenmeasures}. Once again, we have $\mu_{\text{ext},\Sigma_A} \odot \sigma \sim \mu_{\text{ext},\Sigma_A}$ on $U$, because $\mu \odot \sigma \sim \mu$ and that both measures $\mu_{\text{ext},\Sigma_A} \odot \sigma$ and $\mu_{\text{ext},\Sigma_A}$ are both zero on every borel set contained in $Y_A$. Then, the Radon-Nikodym derivative
\begin{equation*}
    \frac{d\mu_{\text{ext},\Sigma_A} \odot \sigma}{d \mu_{\text{ext},\Sigma_A}}
\end{equation*}
is well defined on $U$. On the other hand, for every $E \in \mathcal{B}_{X_A}$, $E \subseteq U$, it follows that
\begin{align*}
    \mu_{\text{ext},\Sigma_A} \odot \sigma (E) &= \mu \odot \sigma (E\cap \Sigma_A) = \int_{E\cap \Sigma_A} e^{-F} d\mu = \int_{\Sigma_A}\mathbbm{1}_{E\cap \Sigma_A} e^{-F} d\mu \\
    &= \sup\left\{\int_{\Sigma_A}\varphi d\mu: 0 \leq \varphi \leq \mathbbm{1}_{E\cap \Sigma_A} e^{-F} \text{ and } \varphi \text{ is a simple function} \right\}\\
    &\stackrel{(\dagger)}{=} \sup\left\{\int_{X_A}\varphi d\mu_{\text{ext},\Sigma_A}: 0 \leq \varphi \leq \mathbbm{1}_{E\cap \Sigma_A} e^{-\widetilde{F}} \text{ and } \varphi \text{ is a simple function} \right\}\\
    &= \int_{X_A} \mathbbm{1}_{E \cap \Sigma_A} e^{-\widetilde{F}} d\mu_{\text{ext},\Sigma_A} = \int_{E} e^{-\widetilde{F}} d\mu_{\text{ext},\Sigma_A}. 
\end{align*}
The equality $(\dagger)$ is justified as follows: for any simple function on $\Sigma_A$
\begin{equation*}
    \varphi = \sum_{i=1}a_i \mathbbm{1}_{B_i}
\end{equation*}
such that $0 \leq \varphi \leq \mathbbm{1}_{E\cap \Sigma_A} e^{-F}$, we have that $a_i > 0$ and $B_i \in \mathcal{B}_{\Sigma_A} \subseteq \mathcal{B}_{X_A}$ for all $i$. On the other hand, for any simple function $\widetilde{\varphi}$ on $\mathcal{B}_{X_A}$, its restriction is a simple function on $\mathcal{B}_{\Sigma_A}$. Moreover,   
\begin{align*}
    \int_{X_A}\widetilde{\varphi} d\mu_{\text{ext},\Sigma_A} = \int_{\Sigma_A}\widetilde{\varphi} d\mu_{\text{ext},\Sigma_A} + \int_{Y_A}\widetilde{\varphi} d\mu_{\text{ext},\Sigma_A} = \int_{\Sigma_A}\widetilde{\varphi} d\mu_{\text{ext},\Sigma_A} = \int_{\Sigma_A}\widetilde{\varphi}\vert_{\Sigma_A} d\mu,
\end{align*}
because $\mu_{\text{ext},\Sigma_A} = \mu$ on $\mathcal{B}_{\Sigma_A} = \mathcal{B}_{X_A} \cap \Sigma_A$ and $\mu_{\text{ext},\Sigma_A}(Y_A) = 0$. Also, for any simple function $\varphi$ on $\Sigma_A$ we always have an extension of $\varphi$ because $\Sigma_A$ is a Borel set of $X_A$. For instance, the function
\begin{equation*}
    \widetilde{\varphi}(x):= \begin{cases}
        \varphi(x), \quad \text{if } x \in \Sigma_A,\\
        0, \quad \text{otherwise};
    \end{cases}
\end{equation*}
is simple function which extends $\varphi$. For any extension $\widetilde{\varphi}$ of $\varphi$ it is true that
\begin{align*}
    \int_{X_A}\widetilde{\varphi} d\mu_{\text{ext},\Sigma_A} = \int_{\Sigma_A}\widetilde{\varphi}\vert_{\Sigma_A} d\mu_{\text{ext},\Sigma_A} + \int_{Y_A}\widetilde{\varphi}\vert_{Y_A} d\mu_{\text{ext},\Sigma_A} = \int_{\Sigma_A}\widetilde{\varphi} d\mu_{\text{ext},\Sigma_A} = \int_{\Sigma_A}\varphi d\mu.
\end{align*}
And therefore $(\dagger)$ is justified.
We conclude, by uniqueness of the Radon-Nikodym derivative, that 
\begin{equation*}
    \frac{d\mu_{\text{ext},\Sigma_A} \odot \sigma}{d\mu_{\text{ext},\Sigma_A}}(x) = e^{-\widetilde{F}(x)} \quad \mu_{\text{ext},\Sigma_A}-\text{a.e. on }U,
\end{equation*}
that is, $\mu_{\text{ext},\Sigma_A}$ is a $(\widetilde{F},1)$-conformal measure.

\textbf{Proof of (b):} it is similar to the previous proof. 
\end{proof}

The next corollary is a straightforward consequence of the Theorems \ref{thm:restriction_eigenmeasures} and \ref{thm:extension_eigenmeasures}.

\begin{corollary} \label{cor:decomposition_eigenmeasures} Consider the space $X_A$, a potential $F:U \to \mathbb{R}$ $\beta > 0$, and $\lambda > 0$. Then every $(F,\lambda)$-conformal measure can be decomposed into the linear combination of the two $(F,\lambda)$-conformal measures
\begin{equation*}
    \mu = \mu_{\Sigma_A} + \mu_{Y_A}
\end{equation*}
as defined in the Theorem \ref{thm:restriction_eigenmeasures}.

Conversely, let $F:\Sigma_A \to \mathbb{R}$ and $G: Y_A \cap U \to \mathbb{R}$ be potentials with a common extension $\widetilde{F}:U \to \mathbb{R}$, $\lambda > 0$. Also, let $\nu$ and $\eta$ be, respectively, a $(F,\lambda)$-conformal measure and a $(G,\lambda)$-conformal measure for same $\beta>0$ and associated eigenvalue $\lambda > 0$. Then, any positive linear combination of the measures 
\begin{equation}
    \mu = a_1\nu_{\text{ext},\Sigma_A}+a_2\eta_{\text{ext},Y_A}, \quad a_1, a_2 \in \mathbb{R}_+,
\end{equation}
with  $\nu_{\text{ext},\Sigma_A}$ and $\eta_{\text{ext},Y_A}$ as defined in Theorem \ref{thm:extension_eigenmeasures}, is a $(\widetilde{F},\lambda)$-conformal measure.
\end{corollary}

\begin{remark} In particular, every conformal probability, in the sense of Sarig on $X_A$, can be decomposed into a convex combination of finite conformal measures, with one vanishing on $\Sigma_A$, and another one vanishing on $Y_A$. In particular, if the decomposition is non-trivial in the sense that both $\Sigma_A$ and $Y_A$ are not null, then the decomposition can be taken as a convex conbination of probabilities.
\end{remark}

We define the notion of measure that lives on a set.

\begin{definition} Given $(X,\mathcal{F},\mu)$ a measure space, where $\mu$ is a positive measure, we say that $\mu$ lives on $B \in \mathcal{F}$ if $\mu(B) > 0$ and $\mu(B^c) = 0$.
\end{definition}

Now we characterize measures that live on $Y_A$-families. Every non-zero positive Borel measure $\mu$ on $X_A$ that lives on $Y_A(\xi^{0,\mathfrak{e}})$ for some $\mathfrak{e}$ is necessarily an atomic measure. In other words, these measures are written in the form  $\mu(E) = \sum_{\omega \in \mathfrak{R}_\mathfrak{e}} \mathbbm{1}_{E}(\omega) c_\omega$, where we identify each configuration in $Y_A(\xi^{0,\mathfrak{e}})$ with its stem, since this defines a bijection between $Y_A(\xi^{0,\mathfrak{e}})$ and $\mathfrak{R}_\mathfrak{e}$, and $E \subseteq X_A$ is a measurable set. We define
\begin{equation}\label{eq:coefficients_Y_A_confomal}
    c_\omega := \mu(\{\omega\}), \quad \omega \in \mathfrak{R}_\mathfrak{e},
\end{equation}
From now on, the idea is to consider the family of variables $\{c_\omega\}_{\omega \in \mathfrak{R}_\mathfrak{e}}$. The Denker-Urba\'nski conformality condition \eqref{eq:conformal_Ur_sets_potential} here is written with
\begin{equation*}
    D(\omega) = e^{F(\omega)}, \quad \omega \neq e,
\end{equation*}
and we get the general formulation for the conformal measures in $Y_A$-families in the theorem below.

\begin{theorem}\label{theorem.coeficientes} A measure $\mu$ which lives on a $Y_A$-family $Y_A(\xi^{0,\mathfrak{e}})$, where $\xi^{0,\mathfrak{e}}$ is an empty-stem configuration, satisfies the Denker-Urba\'nski conformality condition if and only if the coefficients $c_\omega$ in \eqref{eq:coefficients_Y_A_confomal} satisfies
\begin{equation*}
    c_\omega D(\omega) = c_{\sigma(\omega)}, \quad \omega \in \mathfrak{R}_\mathfrak{e}\setminus \{e\}.
\end{equation*}
\end{theorem}

\begin{proof}  It is straightforward from the Denker-Urba\'nski conformality condition for characteristic functions on the special set $\{\omega\}$, hence the condition above is necessary. The converse is clear because, for every special set $E$, we have that $e \notin E$ and $$\sum_{\omega\in E}D(\omega)c_\omega=\sum_{\omega\in E}c_{\sigma(\omega)}$$ implies the Denker-Urba\'nski corformality condition.  
\end{proof}

From the theorem above, every non-zero conformal measure living on an $Y_A$-family necessarily gives mass to every point of this family. Observe that the identity $c_{\omega} D(\omega) = c_{\sigma(\omega)}$, $\omega \neq e$, implies
\begin{equation*}
c_{\omega} \prod_{i=0}^{|\omega|-1}D(\sigma^i(\omega)) = c_{e}, \quad \omega \in \mathfrak{R}_\mathfrak{e}\setminus\{e\},
\end{equation*}
where $c_e := \mu(\{\xi^{0,\mathfrak{e}}\})$. The equation above can be rewritten as
\begin{equation}\label{eq:c_omega_new}
c_{\omega} e^{ F_{|\omega|}(\omega)} = c_{e}, \quad \omega \in \mathfrak{R}_\mathfrak{e} \setminus\{e\}.
\end{equation}
where $F_n$ is the Birkhoff's sum. 

In order to construct any potential which would give a $e^F$-conformal probability measure in a $Y_A$-family we must keep $c_e > 0$, otherwise all other $c_\omega$'s are zero by \eqref{eq:c_omega_new}. That is equivalent to impose $c_\omega > 0$ for all $\omega \in \mathfrak{R}_\mathfrak{e}$, since it is a necessary condition to obtain $c_e > 0$. At the same time we wish to have
\begin{equation}\label{eq:c_omega_probability}
    \sum_{\omega \in \mathfrak{R}_\mathfrak{e}}c_\omega = 1,
\end{equation}
which imposes that $\mu$ is in fact a probability measure. The following result is a consequence of this caracterization.

\begin{theorem}\label{thm:uniqueness_conformal_probabilities_Y_A_families} Fixed a potential $F:U \to \mathbb{R}$ and a $Y_A$-family $Y(\xi^{0,\mathfrak{e}})$, there exists at most one $e^F$-conformal probability living on such family. 
\end{theorem}

\begin{proof} Nothing is need to be proven if there are no $e^F$-conformal probabilities living on $Y_A$. So suppose there exists a two collection of strictly positive numbers $\{c_{\omega}\}_{\omega \in \mathfrak{R}_\mathfrak{e}}$ and $\{d_{\omega}\}_{\omega \in \mathfrak{R}_\mathfrak{e}}$ satisfying
\begin{align*}
    \begin{cases}
        c_{\omega} = e^{-F_{|\omega|}(\omega)} c_{e}, \quad \omega \in \mathfrak{R}_\mathfrak{e},\\
        \sum_{\omega \in \mathfrak{R}_\mathfrak{e}}c_\omega = 1;
    \end{cases}
    \quad \text{and} \quad 
    \begin{cases}
        d_{\omega} = e^{-F_{|\omega|}(\omega)} d_{e}, \quad \omega \in \mathfrak{R}_\mathfrak{e},\\
        \sum_{\omega \in \mathfrak{R}_\mathfrak{e}}d_\omega = 1.
    \end{cases}
\end{align*}
Since $c_e$ and $d_e$ are positive numbers, there exists $\lambda \in \mathbb{R}_{+}^*$ s.t. $d_e = \lambda c_e$. Then, for every $\omega \in \mathfrak{R}_\mathfrak{e}$,
\begin{equation*}
    d_{\omega} = e^{-F_{|\omega|}(\omega)} d_{e} = e^{-F_{|\omega|}(\omega)} \lambda c_e = \lambda d_{\omega}.
\end{equation*}
On the other hand,
\begin{equation*}
    1 = \sum_{\omega \in \mathfrak{R}_\mathfrak{e}}d_\omega = \sum_{\omega \in \mathfrak{R}_\mathfrak{e}} \lambda c_\omega = \lambda \sum_{\omega \in \mathfrak{R}_\mathfrak{e}} c_\omega = \lambda,
\end{equation*}
and hence $\lambda = 1$. Therefore, $d_\omega = c_\omega$ for every $\omega \in \mathfrak{R}_\mathfrak{e}$.
\end{proof}

\begin{corollary}\label{cor:eigenmeasures_dimension_1} Given a potential $F:U \to \mathbb{R}$, the space of eigenmeasures for an associated eigenvalue $\lambda$ living on an $Y_A$ family has dimension at most $1$.
\end{corollary}

\begin{proof} It is straightforward from Theorems \ref{thm:equivalences_conformal_measures_generalized_Markov_shift} and \ref{thm:uniqueness_conformal_probabilities_Y_A_families}.
\end{proof}

\begin{lemma}\label{lemma:conformal_Y_A_minus_Y_A_family} Given a potential $F:U \to \mathbb{R}$, suppose that there exists a $e^F$-conformal measure living on $Y_A$ and let $Y_A^1$ be a $Y_A$-family. Then the restriction of $\mu$ to $(Y_A^1)^c$, given by 
\begin{equation*}
    \nu(B)_{(Y_A^1)^c} := \nu(B \cap (Y_A^1)^c), \quad B \in \mathcal{B}_{X_A}
\end{equation*}
is a $e^F$-conformal measure as well. 
\end{lemma}

\begin{proof} Similar to Theorem \ref{thm:restriction_eigenmeasures}. 
\end{proof}

\begin{corollary}\label{cor:extremal_conforma_Y_A_characterization} Let $F:U \to \mathbb{R}$ be a potential. The extremal $e^{F}$-conformal probabilities living on $Y_A$ are precisely the ones living on each $Y_A$-family. 
\end{corollary}

\begin{proof} A $e^{F}$-conformal probability living on a $Y_A$-family is necessarily extremal due to Theorem \ref{thm:uniqueness_conformal_probabilities_Y_A_families}. Conversely, let $\mu$ be an $e^{F}$-conformal probability on $Y_A$ and suppose that $\mu$ gives mass on more than one $Y_A$-family. Let $Y_A^1$ be a $Y_A$-family satisfying $\mu(Y_A^1)>0$. Then $\mu((Y_A^1)^c) >0$. Then the measures given by
\begin{align*}
    \mu_1 (B):= \frac{\mu(B \cap Y_A^1)}{\mu(Y_A^1)} \quad \text{and} \quad \mu_2 (B):= \frac{\mu(B \cap (Y_A^1)^c)}{\mu((Y_A^1)^c)},
\end{align*}
defined for every $B \in \mathcal{B}_{X_A}$ are $e^F$-conformal probabilities on $X_A$ such that
\begin{equation*}
    \mu = \mu(Y_A^1) \mu_1 + \mu((Y_A^1)^c) \mu_2.
\end{equation*}
Since $\mu(Y_A^1) + \mu((Y_A^1)^c) = 1$ and $\mu(Y_A^1) \in (0,1)$, we written $\mu$ as a non-trivial convex combination of two distinct $e^F$-conformal probabilities, we conclude that $\mu$ is not extremal.
\end{proof}

Now we study phase transition phenomena on $X_A$.

\section{Phase Transition on $X_A$}

In this section we present phase transition results for conformal measures and eigenmeasures on $X_A$ and we connect these results to the standard theory on $\Sigma_A$. In particular, we present the length type phase transition, which consists in a change of space where the measure lives, from $Y_A$ to $\Sigma_A$, on decreasing $\beta$. The examples here presented are the renewal, pair renewal and prime renewal shift spaces. It is important to notice that in these examples, the set of empty stem configurations is countable, and therefore $Y_A$ is countable as well. In particular, every conformal probability that lives on a $Y_A$-family is an extremal measure, in the sense that it can only be represented as a convex combination of conformal measures by the trivial one.

\subsection{Phase transitions of conformal probabilities for the generalized renewal shift}

We recall that the generalized renewal shift has only one $Y_A$-family, which is $Y_A$ itself.

\begin{example}We first look the class of potentials that depends only on the length of the word, i.e, $F(\omega)=F(|\omega|)$. In this case, the coefficients $c_\omega$'s have the same property, i.e., $c_\omega = c_{|\omega|}$. This imposition affects directly \eqref{eq:c_omega_probability}, one may rewrite it as
\begin{equation*}
    c_e+\sum_{n \in \mathbb{N}}\sum_{\substack{\omega \in \mathfrak{R} \\ |\omega| = n}}c_\omega \stackrel{\text{prop. \ref{prop:cardinality_words_length}}}{=} c_0 + \sum_{n \in \mathbb{N}} 2^{n-1} c_{n}=1,
\end{equation*}
where $c_n$ is the coefficient $c_\omega$ when $|\omega|=n$. The equality above imposes $c_e \in (0,1)$. We summarize the conditions on the atomic probability $\mu$ which vanishes on $\Sigma_A$
\begin{equation*}
    \begin{cases}
        c_0 + \sum_{n \in \mathbb{N}} 2^{n-1} c_{n}=1,\\
        c_0 \in (0,1).
    \end{cases}
\end{equation*}
By \eqref{eq:c_omega_new} and $|\omega|=n$, we have $c_n e^{F_n(\omega)} = c_0$, then
\begin{equation}\label{eq:S_n_F}
    F_n(\omega) = \log \left(\frac{c_0}{c_n}\right), \quad n \in \mathbb{N}.
\end{equation}
The identity above allow us to determine $F$ in $Y_A\setminus\{\xi^0\}$,
\begin{align*}
    F(\omega) &= \sum_{i=0}^{n-1}F[\sigma^i(\omega)] - \sum_{i=1}^{n-1}F[\sigma^i(\omega)] = \log \left(\frac{c_{n-1}}{c_n}\right), \quad n \in \mathbb{N}.
\end{align*}
For $\alpha>2$, take $c_n = \frac{\alpha-2}{\alpha^n(\alpha-1)}$. The potential which makes $\mu$, defined by the coefficients $c_n$,  a $e^{F}$-conformal probability measure is given by the constant function $F = \log \alpha$ defined in $X_A \setminus \{\xi^0\}$. 
 \noindent
\end{example} 
\begin{remark}
In fact, because of the structure of our renewal shift, potentials that depends only on the length of the word, are the constant ones. To see this, take any continuous potential $F:X_A\setminus\{\xi^0\} \to \mathbb{R}$ such that, for any $\omega \in Y_A$ we have $F(\omega)= g(|\omega|)$ where $g: \mathbb{N} \rightarrow \mathbb{R}$. In this case, $F$ is a constant function. Let $x \neq y$ on $\Sigma_A$ and consider two sequences $(x_n)_{n \in \mathbb{N}}$ and $(y_n)_{n \in \mathbb{N}}$ on $Y_A$ such that $x_n \rightarrow x$ and $y_n \rightarrow y$, with $|x_n| = |y_n|=n$, this choice is possible because $A(1,n) = 1$ for every $n \in \mathbb{N}$. This implies that $F(x)=F(y)$, since $\Sigma_A$ is dense on $X_A$,  $F$ is constant.
\end{remark} 
\begin{theorem}\label{theorem.potentialwithonecoordinate} Consider a potential $F:X_A\setminus\{\xi^0\} \to \mathbb{R}$ and $\beta>0$, we have the following:
\begin{itemize}
    \item[$(i)$] If $\inf F>0$, for $\beta>\frac{\log 2}{\inf F}$, there exists a unique $e^{\beta F}$-conformal probability measure $\mu_\beta$ that vanishes in $\Sigma_A$.
    \item[$(ii)$] If $0\leq \sup F < + \infty$ and $\beta\leq \frac{\log 2}{\sup F}$, there are no $e^{\beta F}$-conformal probability measures that vanish in $\Sigma_A$.
\end{itemize}
\end{theorem}

\begin{proof} The equations \eqref{eq:c_omega_new} and \eqref{eq:c_omega_probability} give us

\begin{equation}\label{eq:conformal_Y_A_formula}
    1 + \sum_{\omega \in \mathfrak{R}\setminus \{e\}} e^{-\beta\sum_{j=0}^{|\omega|-1}F(\sigma^j(\omega))} = \frac{1}{c_e}>0.
\end{equation}
Since $F(\omega) \geq \inf F$ for all $\omega \in R$, by Proposition \ref{prop:cardinality_words_length} we obtain
\begin{align*}
    1 + \sum_{\omega \in \mathfrak{R}\setminus \{e\}} e^{-\beta\sum_{j=0}^{|\omega|-1}F(\sigma^j(\omega))}\leq  1 + \frac{1}{2}\sum_{n \in \mathbb{N}} \left(\frac{2}{e^{\beta \inf F}}\right)^{n}.
\end{align*}
The series $\sum_{n \in \mathbb{N}}  \left(\frac{2}{e^{\beta \inf F}}\right)^{n}$ converges if $\beta > \frac{\log 2}{\inf F}$, therefore the validity of the last inequality grants that the series $\sum_{\omega \in \mathfrak{R}\setminus \{e\}} e^{-\beta\sum_{j=0}^{|\omega|-1}F(\sigma^j(\omega))}$ converges and we obtain the existence of a $e^{\beta F}$-conformal probability measure $\mu_\beta$ that vanishes on $\Sigma_A$, given by the coefficients $c_\omega$ in equation \eqref{eq:c_omega_new}. The uniqueness is straightforward. This proves item (i) and a similar procedure proves item (ii). Indeed, it is clear that
\begin{equation}
-\beta \sum_{j=0}^{n-1}F(\sigma(\omega))\geq -\beta n \sup F.   
\end{equation}
Hence,
\begin{align*}
    \sum_{\omega \in \mathfrak{R}\setminus \{e\}} e^{-\beta\sum_{j=0}^{|\omega|-1}F(\sigma^j(\omega))}\geq 
    1/2\sum_{n\in \mathbb{N}}\left(\frac{2}{e^{\beta \sup F}}\right)^n.
\end{align*}
The last sum diverges if $\beta\leq \frac{\log 2}{\sup F}$, which means that no $e^{\beta F}$-conformal probability measure vanishing on $\Sigma_A$ can be obtained  in such interval.
\end{proof}

\begin{corollary}\label{cor:phase_transition_conformal_potential_1}
Let $F\equiv1$. Then, for the constant $\beta_c=\log 2$, the result follows:
\begin{itemize}
    \item[(i)] For $\beta>\beta_c$ we have a unique $e^\beta$-conformal probability measure that vanishes on $\Sigma_A$.
    \item[(ii)] For $\beta = \beta_c$ there is a unique $e^\beta$-conformal probability measure that vanishes on $Y_A$.
    \item[(iii)] For $\beta < \beta_c$ there are not $e^\beta$-conformal probability measures.
\end{itemize}
\end{corollary}
\begin{proof}
For this potential, we have $\inf F = \sup F = 1$ and we apply Theorem \ref{theorem.potentialwithonecoordinate} for the constant potential $F\equiv 1$. For $\beta_c$, it is a straightforward calculation that the series associated with it diverges. This characterizes the existence and the absence of the $e^\beta$-conformal probability living on $Y_A$. For the aforementioned absence region $\beta_c \leq \log 2$, if there exists a $e^\beta$-conformal probability $\nu$, then necessarily $\nu$ vanishes in $Y_A$. By Theorem \ref{thm:restriction_eigenmeasures} we may restrict the measure to the standard theory, and turn it into a question on finding eigenmeasures for the potential $-\beta F$ by corollary \ref{cor:equivalences_conformality_classical}. In particular, this potential is positive recurrent for every $\beta$. Indeed, we have that
\begin{align*}
    Z_n(-\beta ,[1]) = \sum_{\sigma^n x = x} e^{-\beta n} \mathbbm{1}_{[1]}(x) = 2^{n-1} e^{-\beta n},
\end{align*}
and then 
\begin{align*}
    P_G(-\beta) = \lim_n \frac{1}{n} \log Z_n(-\beta F,[1]) =  \log 2 - \beta.
\end{align*}
And setting $\lambda = e^{P_G(-\beta)} = 2 e^{-\beta}$ we have
\begin{equation*}
    \sum_{n \in \mathbb{N}} \lambda^{-n} Z_n(-\beta,[1]) = \sum_{n \in \mathbb{N}} 2^{-n} e^{\beta n} 2^{n-1} e^{-\beta n} = \infty,
\end{equation*}
and then the potential is recurrent for every $\beta >0$. Observe that
\begin{equation*}
    \mathbbm{1}_{[\varphi_1 = n]}(x) = 1 \iff x = \overline{1,n,n-1,...,2},
\end{equation*}
hence
\begin{equation*}
    Z_n^*(-\beta,[1]) =  \sum_{\sigma^n x = x} e^{-\beta n} \mathbbm{1}_{[\varphi_1 = n]}(x) = e^{-\beta n},
\end{equation*}
therefore
\begin{equation*}
    \sum_{n \in \mathbb{N}} n \lambda^{-n} Z_n^*(-\beta,[1]) = \sum_{n \in \mathbb{N}} 2^{-n} e^{\beta n} e^{-\beta n} = 1 < \infty,
\end{equation*}
and we conclude that in fact the potential is positive recurrent for every $\beta$. By the Generalized RPF Theorem \ref{thm:RPF_Generalized} and the recurrence, there exists an eigenmeasure for every $\beta$, and for every eigenmeasure, it follows that its associated eigenvalue is $\lambda$. So necessarily we only can have an $e^\beta$-conformal measure at $\beta = \log 2$. Moreover, the positive recurrence implies, via Proposition \ref{prop:eigenmeasures_for_positive_recurrent_potentials_dimension_1}, that we have at most one probability eigenmeasure for every $\beta > 0$. Now, we must find $\nu$ that solves the equation
\begin{equation*}
     \int f d\nu = \int L_{-\beta} f d\nu, \quad f \in L^1(\nu),
\end{equation*}
Observe that for every $n \in \mathbb{N}$ we obtain
\begin{equation}\label{eq:eigenmeasure_n_times_conformal}
    \int f d\nu = \int L_{-\beta}^n f d\nu, \quad f \in L^1(\nu),
\end{equation}
and 
\begin{equation}\label{eq:Ruelle_n_times_conformal}
    (L_{-\beta}^n f)(x) = \sum_{\sigma^n(y)=x}e^{-\beta n}f(y) \quad \nu-\text{a.e.}, \quad f\in L^1(\nu).
\end{equation}
For every admissible (positive) word $\alpha$ that ends in `$1$', we get
\begin{align*}
    \nu([\alpha]) \stackrel{\text{\eqref{eq:eigenmeasure_n_times_conformal}}}{=} \sum_{a \in \mathbb{N}}\int_{[a]} L_{-\beta}^{|\alpha|} \mathbbm{1}_{[\alpha]} d\nu \stackrel{\text{\eqref{eq:Ruelle_n_times_conformal}}}{=} \sum_{a \in \mathbb{N}}\int_{[a]} \sum_{\sigma^{|\alpha|}(y)=x}2^{-|\alpha|} \mathbbm{1}_{[\alpha]}(y) d\nu(x).
\end{align*}
Now we claim that for given $a \in \mathbb{N}$, it follows that
\begin{equation*}
    \sum_{\sigma^{|\alpha|}(y)=x} \mathbbm{1}_{[\alpha]}(y) = 1, \quad x \in [a].
\end{equation*}
In fact, the system of equations
\begin{equation}\label{eq:system_equations_determining_Sarig}
    \begin{cases}
        \mathbbm{1}_{[\alpha]}(y) = 1,\\
        \sigma^{|\alpha|}(y)=x;
    \end{cases}
\end{equation}
admits the unique solution $y=\alpha x$ for all $x \in \mathbb{N}^{\mathbb{N}_0}$. Since $\alpha$ is admissible, such solution belongs to $[a]$ if and only if $A(\alpha_{|\alpha|-1},a)=1$, which is satisfied because $\alpha_{|\alpha|-1}=1$. The claim is proved and consequently we get 
\begin{equation}\label{eq:Sarig_measure_potential_1}
    \nu([\alpha]) = 2^{-|\alpha|}.
\end{equation}
By observing that $[n] = [n,n-1,...,1]$ for every $n > 2$ we have
\begin{equation*}
    \sum_{n \in \mathbb{N}}\nu([n]) = \sum_{n \in \mathbb{N}}\nu([n])2^{-n} = 1,
\end{equation*}
and therefore $\nu$ is a probability.
\end{proof}

\begin{remark}\label{remark:eigenmeasures_renewal_Sigma_A} The probability in \eqref{eq:Sarig_measure_potential_1} is the unique eigenmeasure in $\Sigma_A$, for every $\beta > 0$. Indeed, for general $\beta$, the eigenmeasure satisfies
\begin{equation*}
    2e^{-\beta} \int f d\nu = \int L_{-\beta} f d\nu, \quad f \in L^1(\nu),
\end{equation*}
Similarly to the proof in corollary above, we have, for every $n \in \mathbb{N}$ the following:
\begin{equation}\label{eq:eigenmeasure_n_times}
    (2e^{-\beta})^n \int f d\nu = \int L_{-\beta}^n f d\nu, \quad f \in L^1(\nu).
\end{equation}
and in this case for every $\alpha$ admissible ending in `$1$', we obtain
\begin{align*}
    (2e^{-\beta})^{|\alpha|}\nu([\alpha]) \stackrel{\text{\eqref{eq:eigenmeasure_n_times}}}{=} \sum_{a \in \mathbb{N}}\int_{[a]} L_{-\beta}^{|\alpha|} \mathbbm{1}_{[\alpha]} d\nu = \sum_{a \in \mathbb{N}}\int_{[a]} \sum_{\sigma^{|\alpha|}(y)=x}e^{-\beta|\alpha|} \mathbbm{1}_{[\alpha]}(y) d\nu(x).
\end{align*}
By \eqref{eq:system_equations_determining_Sarig} we conclude that
\begin{equation}
    (2e^{-\beta})^{|\alpha|} \nu([\alpha]) = e^{-\beta|\alpha|},
\end{equation}
and therefore $\nu$ is the same as in \eqref{eq:Sarig_measure_potential_1}.
\end{remark}

\begin{remark} As we can see in the proof of corollary \ref{cor:phase_transition_conformal_potential_1}, the $e^\beta$-conformal probabilities are eigenmeasures of the Ruelle's transformation for the potential $F=-1$. One could ask about the existence of eigenmeasure probabilities associated to the eigenvalue $e^{P_G(-\beta)}$ which, unlike in Remark \ref{remark:eigenmeasures_renewal_Sigma_A}, vanishes in $\Sigma_A$. The answer is \textbf{no}. In fact, since in this case we have $P_G(-\beta) = \log 2 - \beta$, we have that $\mu$ is an eigenmeasure probability for $L_{-\beta}$ for the eingevalue $e^{P_G(-\beta)}$ if and only if
\begin{equation*}
    L_{-\beta - P(-\beta)}^* \mu = \mu,
\end{equation*}
that is,
\begin{equation*}
    L_{-\log 2}^* \mu = \mu,
\end{equation*}
and by Theorem \ref{thm:equivalences_conformal_measures_generalized_Markov_shift}, this is equivalent to state that $\mu$ is a $e^{\log 2}$-conformal probability, and by corollary \ref{cor:phase_transition_conformal_potential_1}, $\mu$ necessarily must vanish in $Y_A$.
\end{remark}

The picture \ref{fig:renewal_F_equals_1} compares the standard formalism with the generalized one for the renewal shift and potential $F=1$, as in corollary \ref{cor:phase_transition_conformal_potential_1}.

\begin{figure}[h!]
  \hspace{-.5cm}
 \includegraphics[scale=.4]{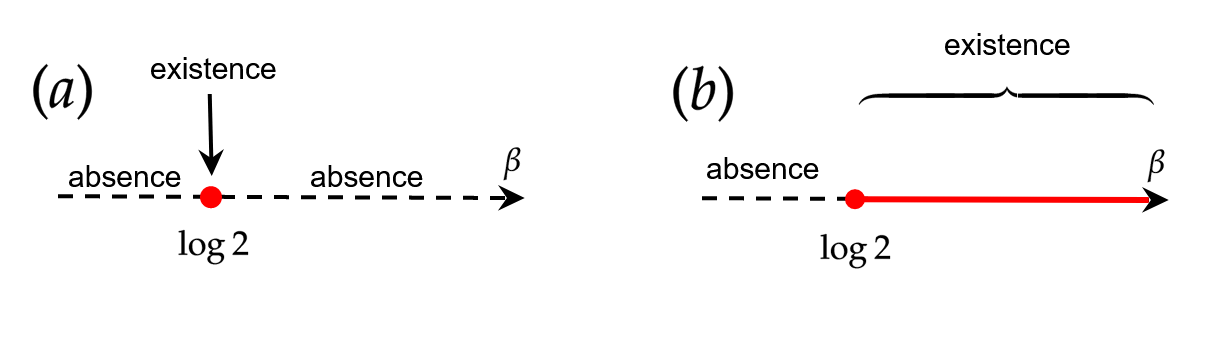}
 
 \caption{The phase transitions on different thermodynamic formalisms for conformal probabilities in the renewal shift space with potential $F=1$. The picture $(a)$ represents the standard formalism on $\Sigma_A$, where we have a unique $e^\beta$-conformal probability for a unique possible inverse of temperature, namely $\beta = \log 2$ (red dot). The picture $(b)$ represents the generalized formalism on $X_A$ and, unlike in $(a)$, we are able to detect not only the previous standard $e^\beta$-conformal probability, but also, for each $\beta > \log 2$, we have a unique $e^\beta$-conformal probability (red interval), in this case the measures live on $Y_A$.  \label{fig:renewal_F_equals_1}}
\end{figure}

Let $\mu_\beta$, $\beta > \log 2$, be the conformal measure of the corollary \ref{cor:phase_transition_conformal_potential_1} for the inverse of temperature $\beta$. This measure is explicitly obtained by the equation \eqref{eq:conformal_Y_A_formula} combined with \eqref{eq:c_omega_new}. By denoting $\mu_\beta(\{\alpha \xi^0\}):= c_{\alpha}^\beta$ we obtain
\begin{equation}\label{eq:conformal_measure_Y_A_explicit}
     c_\omega^\beta := \frac{e^{-\beta |\alpha|}}{1+\frac{1}{2}\sum_{n \in \mathbb{N}}e^{n(\log 2 - \beta)}} = \frac{e^{-|\alpha|\beta}(e^\beta-2)}{e^\beta-1}.
\end{equation}

Perhaps the corollary \ref{cor:phase_transition_conformal_potential_1} can be seen as a pathological fact in contrast with the Sarig's theorem in \cite{Sarig2001} about renewal shifts considering $\Sigma_A$ instead of $X_A$. He proved, for potentials regular enough, if we have a phase transition at some $\beta_c$, then there exist $(-\beta F, e^{P(-\beta F)})$-conformal measures at high temperatures ($\beta < \beta_c$) and these measures do not exist at low temperatures ($\beta > \beta_c$). Our theorem tells us the exact opposite behavior for the $e^{\beta F}$-conformal measures which vanish on $\Sigma_A$. Nevertheless, the Sarig's conformal measure is intrinsically connected with the ones of the corollary \ref{cor:phase_transition_conformal_potential_1}: there exists the weak$^*$-limit of the net of measures $\{\mu_\beta\}_{\beta > \beta_c}$ of the aforementioned corollary when $\beta \to \beta_c$. In order to verify this fact, we need to discuss some particularities of the topology of $X_A$ of the renewal shift to study the weak$^*$ convergence of measure on this space as it follows.

We recall from subsection \ref{subsec:cylinder_topology_Renewal} that every generalized cylinder on a positive admissible word is written by choosing this word ending in `$1$'. Hence,
\begin{equation*}
    \mu_\beta(C_\alpha) = \sum_{\omega \in Y_A: \alpha \in \llbracket\omega\rrbracket} c_\omega^\beta,
\end{equation*}
and then,
\begin{equation*}
    \mu_\beta(C_\alpha) = \sum_{n \in \mathbb{N}}\sum_{\substack{\omega \in Y_A: \alpha \in \llbracket\omega\rrbracket \\ |\omega| = n}} c_n^\beta,
\end{equation*}
where
\begin{equation*}
    c_n^\beta := \frac{e^\beta-2}{e^{n \beta }(e^\beta-1)}.
\end{equation*}
However, if $\alpha \in \llbracket \omega \rrbracket$, then $|\omega|\geq |\alpha|$ and hence
\begin{equation*}
    \mu_\beta(C_\alpha) = \sum_{n = |\alpha|}^\infty\sum_{\substack{\omega \in Y_A: \alpha \in \llbracket\omega\rrbracket \\ |\omega| = n}} c_n^\beta.
\end{equation*}
\begin{lemma} Let $\omega \in \mathfrak{R}$ s.t. $\alpha \in \llbracket \omega \rrbracket$. Write $n=|\omega| = |\alpha| + p$, $p \in \mathbb{N}_0$. If $p=0$, then there exists a unique word in $Y_A$ s.t. $\alpha \in \llbracket \omega \rrbracket$. If $p \in \mathbb{N}$, then there are $2^{p-1}$ words in $Y_A$ which satisfies $\alpha \in \llbracket \omega \rrbracket$.
\end{lemma}

\begin{proof} If $p=0$, the result is obvious, then suppose that $p \in \mathbb{N}$, that is, $|\omega|>|\alpha|$. Then, $\omega = \alpha \gamma$, where both $\alpha$ and $\gamma$ end with `$1$' and $|\gamma|=p$ and therefore the number of possibilities for $\gamma$, and consequently $\omega$, is $2^{p-1}$ due to Proposition \ref{prop:cardinality_words_length}.
\end{proof}

By the lemma above we have that
\begin{equation*}
    \mu_\beta(C_\alpha) = c_{|\alpha|}^\beta + \sum_{n=|\alpha|+1}^\infty 2^{n-|\alpha|-1}c_n^\beta = c_{|\alpha|}^\beta + \sum_{k=1}^\infty 2^{k-1}c_{k + |\alpha|}^\beta,
\end{equation*}
where in the last equality we used the change of index $k = n-|\alpha|$ in the sum. It follows that
\begin{align*}
    \mu_\beta(C_\alpha) &= \frac{e^\beta-2}{e^\beta-1}\left[\frac{1}{e^{|\alpha|\beta}} + \sum_{k=1}^\infty \frac{2^{k-1}}{e^{(k + |\alpha|)\beta}}\right] =  e^{-|\alpha|\beta}.
\end{align*}
By taking the limit, we get
\begin{align*}
    \lim_{\beta \to \log 2}\mu_\beta(C_\alpha) &= 2^{-|\alpha|}.
\end{align*}
Before we proceed the calculation of the limit above for the remaining elements of $\mathcal{B}$, we will show that $\lim_{\beta \to \log 2}\mu_\beta(F)=0$, for every finite $F\subset Y_A$, which implies that the limit for the basic elements can be evaluated just by ignoring the part which is not a union of generalized cylinders and that if there exists the limit measure, then in fact it lives only in $\Sigma_A$.

As we discussed in subsection \ref{subsec:cylinder_topology_Renewal}, any $\xi \in Y_A$ is in the form $\alpha \xi^0$, $\alpha$ positive admissible word ending with $1$. 
We calculate the following limit,
\begin{align*}
   \lim_{\beta \to \log 2}\mu_\beta(\{\alpha\xi^0\})=\lim_{\beta \to \log 2} \frac{e^\beta-2}{e^{|\alpha|\beta}(e^\beta-1)}=0.
\end{align*}
This implies that for finite $F\subset Y_A$, $\lim_{\beta\to \log 2}\mu(F)=0$ and since every element of the basis of $X_A$ of the renewal shift is in the form
\begin{equation*}
    F\sqcup \bigsqcup_{p \in \mathbb{N}} C_{w(p)}, 
\end{equation*}
where $F \subseteq Y_A$ is finite and $w(p)$ is some positive admissible word ending in `$1$' we have that
\begin{align}\label{eq:limit_mu_beta_complement}
    \lim_{\beta \to \log 2}\mu_\beta\left(F\sqcup \bigsqcup_{p \in \mathbb{N}} C_{w(p)}\right) &= \lim_{\beta \to \log 2}\mu_\beta \left(\bigsqcup_{p \in \mathbb{N}} C_{w(p)}\right)\nonumber \\
    &= \lim_{\beta \to \log 2}\sum_{p\in \mathbb{N}}\mu_\beta \left(C_{w(p)}\right) = \sum_{p\in \mathbb{N}}\lim_{\beta \to \log 2}\mu_\beta \left(C_{w(p)}\right).
\end{align}

Now we will compare this result with the measure $\nu$ of \eqref{eq:Sarig_measure_potential_1} from corollary \ref{cor:phase_transition_conformal_potential_1}. Given $\alpha$ positive admissible word ending in `$1$', we have
\begin{align*}
    2^{|\alpha|}\nu(C_\alpha) &= 2^{|\alpha|}\nu([\alpha]) =  \sum_{a \in \mathbb{N}}\int_{[a]} 1 d\nu(x) = \nu(\Sigma_A) = 1.
\end{align*}
Therefore,
\begin{equation*}
    \nu(C_\alpha) = 2^{-|\alpha|}
\end{equation*}
and finally
\begin{equation*}
    \nu(C_\alpha) = \lim_{\beta \to \log 2}\mu_\beta(C_\alpha)
\end{equation*}
By the last equality, the $\sigma$-additivity of measures and the identity \eqref{eq:limit_mu_beta_complement}, we obtain for every basic open set $B$, which is in the form $F\sqcup \bigsqcup_{p \in \mathbb{N}} C_{w(p)}$ (see subsection \ref{subsec:cylinder_topology_Renewal}) that
\begin{equation*}
    \nu\left(F\sqcup \bigsqcup_{p \in \mathbb{N}} C_{w(p)}\right) = \lim_{\beta \to \log 2}\mu_\beta\left(F\sqcup \bigsqcup_{p \in \mathbb{N}} C_{w(p)}\right),
\end{equation*}
that is,
\begin{equation*}
    \nu(B) = \lim_{\beta \to \log 2}\mu_\beta(B).
\end{equation*}
Now we prove the weak$^*$ convergence of the conformal measures of the corollary \ref{cor:phase_transition_conformal_potential_1}.

\begin{theorem} The net $\{\mu_\beta\}_{\beta> \log 2}$ converges to $\nu$ in the weak$^*$ topology for $\beta \to \log 2$.
\end{theorem}
\begin{proof} We know that the basis is closed under finite intersections, and that the net of the statement converges numerically to $\nu$ for every basic set. Since the measures $\mu_\beta$, $\beta > \beta_c$, and $\nu$ are defined on the Borel $\sigma$-algebra of a metric space, the hypotheses of Theorem \ref{thm:convergence_measures_Bogachev} are satisfied and therefore the weak$^*$ convergence holds.
\end{proof}

\section{Empty stems and extremal conformal measures on $Y_A$}

As in the previous section, now we shall study conformal measures which live in $Y_A$, but for shifts that have more than one configuration with empty stem. The reader will notice some similarities with the renewal shift example. Due to corollary \ref{thm:uniqueness_conformal_probabilities_Y_A_families}, we have at most a bijection between the number of extremal conformal probability measures living on $Y_A$ and configurations with empty stem. However, for generalized shifts as the renewal or with similar construction, as the pair and prime renewal shifts, we observe that, for lower temperatures and potentials bounded away from zero, we have in fact a bijection between the extremal conformal measures living on $Y_A$ and the family of empty stem configurations, see Theorems \ref{thm:pair_renewal_general_phase_transition_conditions} and \ref{theorem:infinite_conformal_measures}.

\subsection{Phase Transition on Pair Renewal shift}

For the pair renewal shift and general potential, we have the following result.

\begin{theorem}\label{thm:pair_renewal_general_phase_transition_conditions} For the generalized pair renewal shift $X_A$, let $F:U \to \mathbb{R}$ be a potential. We have the following:
\begin{itemize}
    \item[$(i)$] If $\inf F > 0$, for $\beta > \frac{\log(1+\sqrt{2})}{\inf F}$, there exist two extremal $e^{\beta F}$-conformal probability measures living on $Y_A$;
    \item[$(ii)$] If $0 \leq \sup F < \infty$, for $\beta \leq \frac{\log(1+\sqrt{2})}{\sup F}$, there are no $e^{\beta F}$-conformal probability measures living on $Y_A$.
\end{itemize}
\end{theorem}

\begin{proof} For each $k \in \{1,2\}$ we denote by $c_{\omega,k}$ the coefficient $c_\omega = \mu(\{\omega\})$, $\omega \in \mathfrak{R}_k$, for a measure living on $Y_A(\xi^{0,k})$. The Denker-Urba\'nski conformality condition for extremal measures is
\begin{equation}\label{eq:conformality_DU_PairRS_general}
    c_{\omega,k} = c_{e,k} e^{-\left(\beta F_{|\omega|}\right)(\omega)}, \quad \omega \in \mathfrak{R}_k. 
\end{equation}
In order to make the collection of numbers $\{c_{\omega,k}\}_{\omega \in \mathfrak{R}_k}$ be a probability we must have
\begin{equation}\label{eq:c_omega_probability_Pair_RS_general}
    1 = \sum_{\omega \in \mathfrak{R}_k}c_{\omega,k} = c_{e,k}\left(1+ \sum_{n \in \mathbb{N}}\sum_{\substack{\omega \in \mathfrak{R}_k \\ |\omega| = n}}e^{-\beta F_n(\omega)}\right).
\end{equation}
By freedom of choice on $c_{e,k}$, the condition above is equivalent to state that the right series above converges. If $\inf F > 0$, then
\begin{align}\label{eq:c_omega_probability_Pair_RS_general_upper_bound}
    \sum_{n \in \mathbb{N}} \sum_{\substack{\omega \in \mathfrak{R}_k \\ |\omega| = n}}e^{-\beta F_n(\omega)} \leq \sum_{n \in \mathbb{N}} |\sigma^{-n}(\xi^{0,k})|e^{- \beta n \inf F}
\end{align}
Theorem \ref{thm:counting_configurations_pair_renewal_shift} gives $|\sigma^{-n}(\xi^{0,2})| = |\sigma^{-(n-1)}(\xi^{0,1})|$ for $n \in \mathbb{N}$, and the upper bound in RHS of \eqref{eq:c_omega_probability_Pair_RS_general_upper_bound}, for any $k \in \{1,2\}$, converges if and only if the series
\begin{equation}\label{eq:L_upper_bound}
    \mathfrak{L}^{\text{upper}}(\beta) = \sum_{n \in \mathbb{N}}\left[(1-\sqrt{2})^{n} + (1-\sqrt{2})^{n+1} + (1+\sqrt{2})^{n} + (1+\sqrt{2})^{n+1}\right] e^{-\beta n \inf F}
\end{equation}
converges, where in the series above we constructed by using
\begin{equation*}
    4|\sigma^{-n}(\xi^{0,1})| = (1-\sqrt{2})^{n} + (1-\sqrt{2})^{n+1} + (1+\sqrt{2})^{n} + (1+\sqrt{2})^{n+1}, \quad n \in \mathbb{N}_0.
\end{equation*}
The ratio test for $\mathfrak{L}^{\text{upper}}(\beta)$ is calculated next:
\begin{align*}
    \lim_n \frac{e^{-\beta (n+1) \inf F}\left[(1-\sqrt{2})^{n+1} + (1-\sqrt{2})^{n+2} + (1+\sqrt{2})^{n+1} + (1+\sqrt{2})^{n+2}\right]}{e^{-\beta n \inf F}\left[(1-\sqrt{2})^{n} + (1-\sqrt{2})^{n+1} + (1+\sqrt{2})^{n} + (1+\sqrt{2})^{n+1}\right]} = e^{-\beta \inf F + \log(1+\sqrt{2})}.
\end{align*}
Therefore, the series \eqref{eq:L_upper_bound} converges if $\beta > \frac{\log(1+\sqrt{2})}{\inf F}$ and in this case there exist two extremal $e^{\beta F}$-conformal measures living on $Y_A$.

On the other hand, if $0 \leq \sup F < \infty$, we have
\begin{align}\label{eq:c_omega_probability_Pair_RS_general_lower_bound}
    \sum_{n \in \mathbb{N}} \sum_{\substack{\omega \in \mathfrak{R}_k \\ |\omega| = n}}e^{-\beta F_n(\omega)} \geq \sum_{n \in \mathbb{N}} |\sigma^{-n}(\xi^{0,k})|e^{- \beta n \sup F},
\end{align}
and analogously to the proof of $(i)$, the lower bound in RHS of \eqref{eq:c_omega_probability_Pair_RS_general_lower_bound} diverges if and only if the series
\begin{equation}\label{eq:L_lower_bound}
    \mathfrak{L}^{\text{lower}}(\beta) = \sum_{n \in \mathbb{N}}\left[(1-\sqrt{2})^{n} + (1-\sqrt{2})^{n+1} + (1+\sqrt{2})^{n} + (1+\sqrt{2})^{n+1}\right] e^{-\beta n \sup F}
\end{equation}
diverges. We calculate the ratio test of the series above:
\begin{align*}
    \lim_n \frac{e^{-\beta (n+1) \sup F}\left[(1-\sqrt{2})^{n+1} + (1-\sqrt{2})^{n+2} + (1+\sqrt{2})^{n+1} + (1+\sqrt{2})^{n+2}\right]}{e^{-\beta n \sup F}\left[(1-\sqrt{2})^{n} + (1-\sqrt{2})^{n+1} + (1+\sqrt{2})^{n} + (1+\sqrt{2})^{n+1}\right]} = e^{-\beta \sup F + \log(1+\sqrt{2})}.
\end{align*}
Therefore, the series diverges for $\beta > \frac{\log(1+\sqrt{2})}{\sup F}$. For $\beta = \frac{\log(1+\sqrt{2})}{\sup F}$ the series $\mathfrak{L}^{\text{lower}}(\beta)$ also diverges because in this case we have
\begin{align*}
    \mathfrak{L}^{\text{lower}}\left(\frac{\log(1+\sqrt{2})}{\sup F}\right) &= \sum_{n \in \mathbb{N}}\left[\frac{(1-\sqrt{2})^{n} + (1-\sqrt{2})^{n+1} + (1+\sqrt{2})^{n} + (1+\sqrt{2})^{n+1}}{(1+\sqrt{2})^n}\right]\\
    &= \sum_{n \in \mathbb{N}}\left[\frac{(-1)^n}{(1+\sqrt{2})^{2n}} + \frac{(-1)^{n+1}}{(1+\sqrt{2})^{2n+1}} +2 + \sqrt{2}\right] = \infty.
\end{align*}
Therefore, for $\beta \leq \frac{\log(1+\sqrt{2})}{\sup F}$, we have the absence of extremal $e^{\beta F}$-measures living on $Y_A$, and therefore there are not $e^{\beta F}$-measures living on $Y_A$.
\end{proof}

For the pair renewal shift and potential $F = 1$ we have the following result.

\begin{theorem}\label{thm:existence_conformal_measures_Pair_renewal} For the pair renewal shift and constant potential $F=1$, there exists a critical value $\beta_c = \log(1+\sqrt{2})$ s.t.
\begin{itemize}
    \item[$(i)$] for $\beta > \beta_c$ there exist two extremal $e^\beta$-conformal probabilities living on $Y_A$, each one living on a distinct $Y_A$-family;
    \item[$(ii)$] for $\beta = \beta_c$ there exists a unique $e^\beta$-conformal probability living on $\Sigma_A$;
    \item[$(iii)$] for $\beta < \beta_c$, there are no $e^\beta$-conformal probabilities.
\end{itemize}
\end{theorem}

\begin{proof} 
By Theorem \ref{thm:pair_renewal_general_phase_transition_conditions} for $F = 1$, we have $\sup F = \inf F = 1$ and therefore there exist two extremal $e^\beta$-conformal probabilities living on $Y_A$ for $\beta > \log(1+\sqrt{2})$, and for $\beta \leq \log(1+\sqrt{2})$ we have the absence of these measures. Now, we analyze the existence of these probabilities, but living on $\Sigma_A$. For every $n \in \mathbb{N}$, any $e^\beta$-conformal probability measure that lives on $\Sigma_A$ necessarily satisfies
\begin{equation}\label{eq:conformality_Pair_renewal}
 \mu_\beta(\sigma([n]))=\int_{[n]} e^\beta d\mu_\beta = e^\beta \mu_\beta([n]), \quad n \in \mathbb{N}.
\end{equation}
We also have that,
\begin{equation*}
    \sigma([1])=\Sigma_A, \quad \sigma([2])=[1]\sqcup \bigsqcup_{n \in \mathbb{N}} [2n] \quad \text{and} \quad \sigma([i])=[i-1], \quad i \neq 1.
\end{equation*}
By \eqref{eq:conformality_Pair_renewal}, we have
\begin{align}\label{eq:eigenmeasure_Sigma_A_Pair_Renewal_cylinder_1}
    1 = e^{\beta}\mu_\beta([1]) &\implies \mu_\beta([1])=e^{-\beta},
\end{align}
and we claim that
\begin{align} \label{eq:eigenmeasure_Sigma_A_Pair_Renewal_cylinder_greater_than_2}
    \mu_\beta([n])=e^{-\beta(n-2)}\mu_\beta([2]), \quad n > 2.
\end{align}
In fact, the result is straightforward by using the conformality equation \eqref{eq:conformality_Pair_renewal} for $n = 3$. Now, suppose that \eqref{eq:eigenmeasure_Sigma_A_Pair_Renewal_cylinder_greater_than_2} holds for some $n\in \mathbb{N}\setminus\{1,2\}$, again by \eqref{eq:conformality_Pair_renewal}, we have
\begin{align*}
    \mu_\beta([n]) = \mu_\beta(\sigma([n+1]))= e^{\beta} \mu_\beta([n+1])
\end{align*}
and then
\begin{align*}
    \mu_\beta([n+1]) = e^{-\beta} \mu_\beta([n]) = e^{-\beta} e^{-\beta(n-2)}\mu_\beta([2]) = e^{-\beta(n-1)}\mu_\beta([2])
\end{align*}
and the claim is proved. Now, for the cylinder $[2]$, we have by \eqref{eq:eigenmeasure_Sigma_A_Pair_Renewal_cylinder_1}, the claim above and once more equation \eqref{eq:conformality_Pair_renewal}, that
\begin{equation}\label{eq:eigenmeasure_Sigma_A_Pair_Renewal_cylinder_2}
    e^\beta \mu_\beta([2]) = e^{-\beta}+\mu_\beta([2])+\sum_{n=2}^\infty e^{-2\beta (n-1)}\mu_\beta([2]). 
\end{equation}
Then,
\begin{equation}\label{eq:eigenmeasure_Sigma_A_Renewal_cylinder_2_semifinal}
    \left(e^\beta-\sum_{n=0}^\infty e^{-2\beta n}\right)\mu_\beta([2]) = \frac{2 \sinh(\beta) - 1}{1-e^{-2\beta}} \mu_\beta([2])=  e^{-\beta}% \implies  \frac{1-e^{-2\beta}}{e^{2\beta} - e^{\beta} -1}.
\end{equation}
The number multiplying $\mu_\beta([2])$ in the identity above is zero if, and only if, $\sinh(\beta) = \frac{1}{2}$, and for such $\beta$ it is straightforward that there is not a probability conformal measure, and that $\beta \neq \log(1+\sqrt{2})$. Consider from now on the remaining case that $\sinh(\beta) \neq \frac{1}{2}$. By imposing that $\mu_\beta$ is a probability, that is,
\begin{equation*}
    e^{-\beta}+\mu_\beta([2])+ \mu_\beta([2]) \sum_{n=1}^\infty e^{-n\beta} = \frac{1-e^{-\beta}}{e^{\beta}+e^{-2\beta}-2} = 1,  %e^{-\beta}+\frac{H(\beta)}{1-e^{-\beta}},
\end{equation*}
%\begin{equation}\label{eq:eigenmeasure_Sigma_A_Renewal_cylinder_2_semifinal}
%    \mu_\beta([2]) = \left(e^\beta-\sum_{n=0}^\infty e^{-2\beta n}\right)^{-1} e^{-\beta},
%\end{equation}
%that is,
%\begin{equation*}
%    \mu_\beta([2])= H(\beta),
%\end{equation*}
%where
%\begin{equation*}
%    H(\beta):= \dfrac{e^{-\beta}}{e^\beta-\frac{1}{1-e^{-2\beta}}}.
%\end{equation*}
%Observe that
%\begin{equation*}
%    e^\beta-\sum_{n=0}^\infty e^{-2\beta n} = 0 \iff \sinh(\beta) = \frac{1}{2},
%\end{equation*}
%and then $\beta \neq \beta_c$. By excluding such possibility, we conclude that
%\begin{equation*}
%    \mu_\beta([n]) = e^{-\beta (n-2)} H(\beta), \quad n \in \mathbb{N}\setminus \{1,2\}.
%\end{equation*}
%By imposing tha $\mu_\beta$ is a probability, that is,
%\begin{equation*}
%    1=e^{-\beta}+H(\beta)+H(\beta)\sum_{n=1}^\infty e^{-n\beta}=e^{-\beta}+\frac{H(\beta)}{1-e^{-\beta}},
%\end{equation*}
we turn the equation above, by the substitution $y = e^{-\beta}$, into
\begin{equation*}
    1-y = \dfrac{1}{1-y} \dfrac{y}{y^{-1}-\dfrac{1}{1-y^2}},
\end{equation*}
for $y > 0$ and $y \neq 1$. The equation above becomes,
\begin{equation*}
    -(y^2+y-1)(1-y)^2 = y^2(1-y^2),
\end{equation*}
and since $y \neq 1$, we may divide both sides by $(1-y)$ and we obtain
\begin{equation*}
    y^2+2y-1 = 0,
\end{equation*}
which the roots are $-1 \pm \sqrt{2}$. Since $y > 0$, we must have
\begin{equation*}
    y = -1+\sqrt{2} = \dfrac{1}{1+\sqrt{2}},
\end{equation*}
and therefore the measure $\mu_\beta$ is a probability if and only if $\beta = \log(1+\sqrt{2}) = \beta_c$. In this case, the measure $\mu_{\beta_c}$ satisfies
\begin{align*}
   \mu_{\beta_c}([1]) &= e^{-\beta_c},\\
   \mu_{\beta_c}([2]) &=  e^{-\beta_c}\frac{1-e^{-2\beta_c}}{2 \sinh(\beta_c) - 1},\\
   \mu_{\beta_c}([n]) &= e^{-\beta_c(n-1)}\frac{1-e^{-2\beta_c}}{2 \sinh(\beta_c) - 1}, \quad n>2.
\end{align*}
%\begin{align*}
%   \mu_{\beta_c}([1]) &= e^{-\beta_c},\\
%   \mu_{\beta_c}([2]) &= H(\beta_c),\\
%   \mu_{\beta_c}([n]) &= e^{-\beta_c(n-2)}H(\beta_c), \quad n>2.
%\end{align*}
Its uniqueness is granted because, for every cylinder $[\alpha]$, $|\alpha|>1$, we may apply the conformality equation \eqref{eq:conformality_Pair_renewal} finite times, and therefore the measure for every cylinder $[\alpha]$ depends only on the cylinders of length one. 

Summarizing the results, there exist two conformal probability measures living on $Y_A$ if and only if $\beta > \beta_c$, and there exists a conformal probability measure living on $\Sigma_A$ if and only if $\beta = \beta_c$. By Corollary \ref{cor:decomposition_eigenmeasures}, every probability conformal measure on $X_A$ can be written as a sum of its normalized restrictions on $\Sigma_A$ and $Y_A$, there are no conformal probability measures on $X_A$ for $\beta < \beta_c$, and we conclude that the statement holds. 
\end{proof}

\begin{remark}\label{remark:probability_eigenmeasures_Y_A_pair_renewal_on_cylinders} We observe that a similar characterization for the conformal probability living on $\Sigma_A$ by its value on length one cylinders is possible on each $Y_A$-family when $\beta > \beta_c$, with the difference that we must to consider the measure on the singletons with on the respective empty stems. We have
\begin{equation*}
    \sigma(C_n \cap Y_A(\xi^{0,k})) = \sigma(C_n) \cap Y_A(\xi^{0,k}), \quad n \in \mathbb{N}, \quad k \in \{1,2\},
\end{equation*}
and then,
\begin{equation*}
    \sigma(C_n)\cap Y_A(\xi^{0,k}) = \begin{cases}
                                        Y_A(\xi^{0,k}), \quad \text{if } n = 1;\\
                                        Y_A(\xi^{0,k}) \cap \left(C_1 \sqcup \bigsqcup_{n \in \mathbb{N}} C_{2n}\right), \quad \text{if } n = 2;\\
                                        Y_A(\xi^{0,k}) \cap C_{n-1}, \quad \text{otherwise},
                                     \end{cases}
\end{equation*}
where $k \in \{1,2\}$. The conformality equation \eqref{eq:conformality_Pair_renewal} is analogous for generalized cylinders. We denote by $\mu_{\beta,k}$ the extremal conformal probability living on  $Y_A(\xi^{0,k})$, $k \in \{1,2\}$. We have
\begin{align*}
    \mu_{\beta,k}(C_1) &= e^{-\beta},\\
    \mu_{\beta,k}(C_n) &= e^{-\beta(n-2)}\mu_{\beta,k}(C_2), \quad n>2.
\end{align*}
And since these measures are probabilities we must have
\begin{align*}
    e^\beta \mu_{\beta,k}(C_2)=e^{-\beta}+\mu_{\beta,k}(C_2)+\mu_{\beta,k}(\{\xi^{0,1}\})+\sum_{n=2}^\infty e^{-2\beta (n-1)}\mu_{\beta,k}(C_2).
\end{align*}
By similar series study done for Theorem \ref{thm:pair_renewal_general_phase_transition_conditions} also done for Theorem \ref{thm:existence_conformal_measures_Pair_renewal}, we obtain
\begin{equation*}
    \mu_{\beta,k}(C_2) = \begin{cases}
                           \left(1 + \frac{4 e^\beta}{\mathfrak{L}(\beta)}\right) e^{-\beta}\frac{1-e^{-2\beta}}{2 \sinh(\beta) - 1}, \quad \text{if } k=1;\\
                           \left(1 + \frac{4 e^\beta}{4 + e^{-\beta} \mathfrak{L}(\beta)}\right) e^{-\beta}\frac{1-e^{-2\beta}}{2 \sinh(\beta) - 1}, \quad \text{if } k=2,
                         \end{cases}
\end{equation*}
where
\begin{equation*}
    \mathfrak{L}(\beta) := \sum_{n \in \mathbb{N}_0}\left[(1-\sqrt{2})^{n} + (1-\sqrt{2})^{n+1} + (1+\sqrt{2})^{n} + (1+\sqrt{2})^{n+1}\right] e^{-\beta n}.
\end{equation*}
\end{remark}

The figure \ref{fig:phase_transition_pair_renewal_potential_1} compares the standard formalism of $\Sigma_A$ to the generalized one of $X_A$ for the pair renewal shift and potential $F \equiv 1$.

\begin{figure}[h!]
  \hspace{-.5cm}
 \includegraphics[scale=.3]{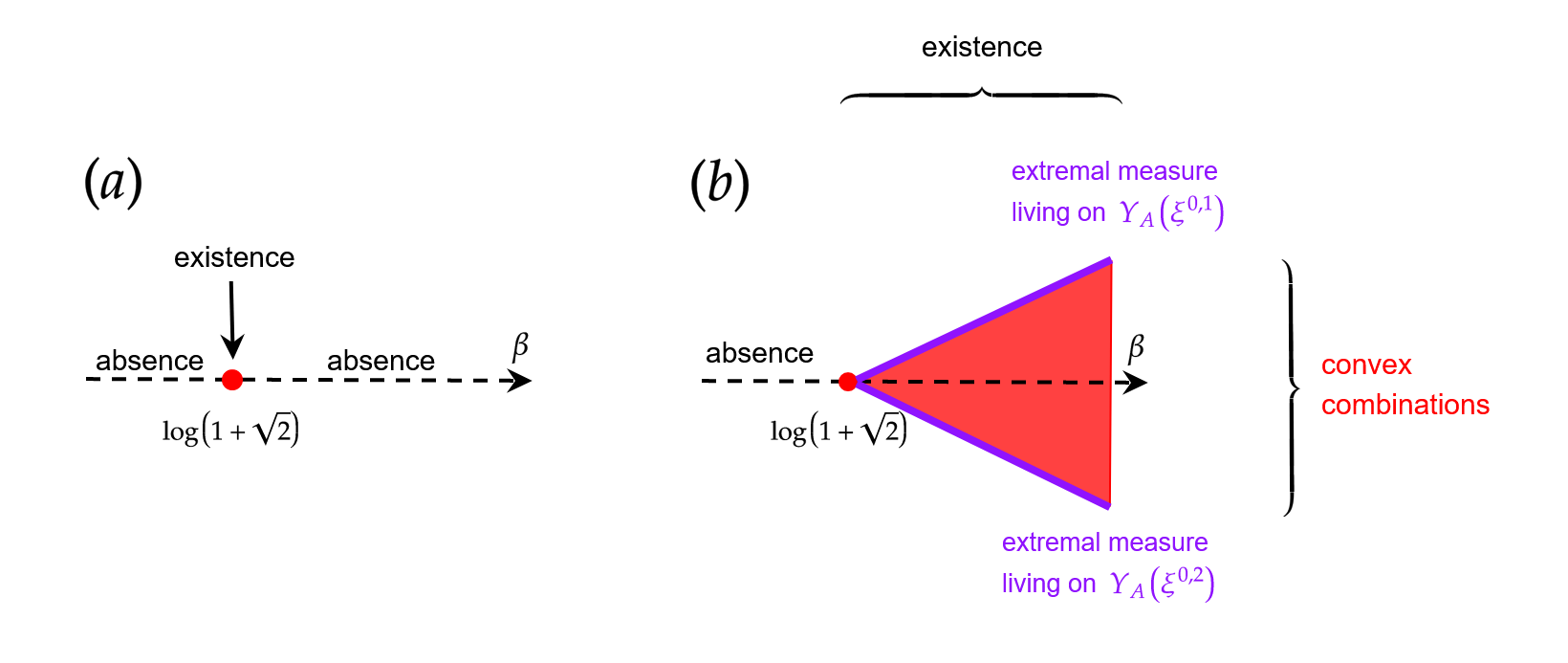}
 
 \caption{The phase transitions on different thermodynamic formalisms for conformal probabilities in the pair renewal shift space with potential $F=1$. The picture $(a)$ represents the standard formalism on $\Sigma_A$, where we have a unique conformal probability for a unique possible inverse of temperature, namely $\beta = \log(1+ \sqrt{2})$ (red dot). The picture $(b)$ represents the generalized formalism on $X_A$ and, unlike in $(a)$, we are able to detect not only the previous standard conformal probability, but also, for $\beta > \log(1+ \sqrt{2})$, two extremal conformal probabilities (purple lines), each one living on a different $Y_A$-family. In this case, by convex combinations (red triangle) we have infinitely many conformal measures for each $\beta > \log(1+ \sqrt{2})$.  \label{fig:phase_transition_pair_renewal_potential_1}}
\end{figure}

Still for the case where $F\equiv 1$, we shall connect the $e^\beta$-conformal measures living on $Y_A$ ($\beta > \log(1+\sqrt{2}$) to the unique one living on $\Sigma_A$ ($\beta = \log(1+\sqrt{2})$) next. Consider the set $\Phi_\beta$ of $e^\beta$-conformal probabilities living on $Y_A$ for the pair renewal shift and potential $F(x) = 1$ as above, which is the set of convex combination of the extremal probabilities $\mu_{\beta,1}$ and $\mu_{\beta,2}$. Also we consider the Hausdorff distance (see \cite{Munkres2000}) between $\Phi_\beta$ and $\mu_{\beta_c}$ on the probability space for $X_A$, which is given by
\begin{equation*}
    d_H(\mu_{\beta_c},\Phi_\beta)=\max \left\{\,\inf _{y\in \Phi_\beta}d(\mu_{\beta_c},y),\,\sup _{y\in \Phi_\beta}d(\mu_{\beta_c},y)\,\right\}= \sup _{y\in \Phi_\beta}d(\mu_{\beta_c},y),
\end{equation*}
where $d$ is a metric compatible with the weak$^*$-topology.

\begin{theorem} $d_H(\mu_{\beta_c},\Phi_\beta) \to 0$ for $\beta \to \beta_c$ by above.
\end{theorem}

\begin{proof} Since for $k \in \{0,1\}$ we have
\begin{equation*}
    \mu_{\beta,k}(\{\xi^{0,k}\}) = \mu_\beta(\{\xi^{0,1}\})=\frac{4}{\mathcal{L}(\beta)}\to 0
\end{equation*}
when $\beta \to \beta_c$ by above, then for every finite set $F \subseteq Y_A$ it holds that
\begin{equation*}
    \mu_{\beta,k}(F) \to 0.
\end{equation*}
Also, it follows by Remark \ref{remark:probability_eigenmeasures_Y_A_pair_renewal_on_cylinders}, the continuity of $H$ and the fact tha the generalized cylinders are clopen that
\begin{align*}
    \lim_{\beta \to \beta^c} \mu_{\beta,1}(C_2) &=  \lim_{\beta \to \beta^c} \left(1 + \frac{4 e^\beta}{\mathfrak{L}(\beta)}\right) H(\beta) = H(\beta_c) = \mu_{\beta_c}([2]),\\
    \lim_{\beta \to \beta^c} \mu_{\beta,2}(C_2) &=  \lim_{\beta \to \beta^c} \left(1 + \frac{4 e^\beta}{4 + e^{-\beta} \mathfrak{L}(\beta)}\right) H(\beta) = H(\beta_c) = \mu_{\beta_c}([2]).
\end{align*}
That is, $\lim_{\beta \to \beta^c} \mu_{\beta,k}(C_2) = \mu_{\beta_c}([2])$, $k \in\{1,2\}$. Consequently,
\begin{equation*}
    \lim_{\beta \to \beta^c}\mu_{\beta,k}(C_n) = \mu_{\beta_c}([n]), \quad n \geq 2, 
\end{equation*}
and for $C_1$ the limit above is straightforward, and by \eqref{eq:conformality_Pair_renewal} the limit above also holds for every generalized cylinder set on positive words. i.e.,
\begin{equation*}
    \lim_{\beta \to \beta^c}\mu_{\beta,k}(C_{\alpha}) = \mu_{\beta_c}([\alpha]), \quad \alpha \text{ positive admissible word}. 
\end{equation*}
We recall that every basic set generated by the subbasis of cylinders and their complements of the generalized pair renewal shift $X_A$ is a countable union of disjoint positive generalized cylinders jointly with a finite subset of $Y_A$ as in \eqref{eq:basis_pair_renewal_is_finite_set_union_countable_cylinders}. Also, the basis of finite intersections of these cylinders and complements of cylinders is, by definition, closed under finite intersections. So, by Theorem \ref{thm:convergence_measures_Bogachev}, both extremal measures $\mu_{\beta,1}$ and $\mu_{\beta,2}$ converges to $\mu_{\beta_c}$ on the weak$^*$ topology. We claim that $\Phi_\beta$ is closed. In fact, Let $\{\eta_n\}_{\mathbb{N}}$ be a sequence in $\Phi_\beta$ converging weakly$^*$ to a Borel probability $\nu$. Since $U$ is open, $C_c(U)$ can be seen as a subspace of $C_c(X_A)$, so by Theorem \ref{thm:equivalences_conformal_measures_generalized_Markov_shift} we have
\begin{equation*}
    \int_U fd\eta=\lim \int_U f d\eta_n=\lim \int_{X_A} L_{-\beta F}(f)d\eta_n=\int_{X_A}L_{-\beta F}(f) d\eta,
\end{equation*}
and hence $\eta \in \Phi_\beta$, and the claim is proved. Since $X_A$ is a compact metric space, we have that the space of probability Borel measures is compact on the weak$^*$ topology and therefore $\Phi_\beta$ is compact, since it is closed. On the other hand, the function
\begin{equation*}
    \nu \mapsto d(\mu_{\beta_c},\nu)
\end{equation*}
is continuous, and by compactness of $\Phi_\beta$, there exists $\nu_\beta\in  \Phi_\beta $ satisfying
\begin{equation*}
    \sup _{\nu \in \Phi_\beta}d(\mu_{\beta_c},\nu)=d(\mu_{\beta_c},\nu_\beta ).
\end{equation*}
We prove now that $\lim_{\beta\to \beta_c}d(\mu_{\beta_c},\eta_\beta)=0$. Indeed, given $\epsilon>0$ there exists $\delta$, s.t. $\beta-\beta_c<\delta$ implies
\begin{align*}
 \left| \int f d\mu_\beta -\int f d\mu_{\beta_c}\right|<\epsilon \quad \text{and} \quad  \left| \int f d\nu_\beta -\int f d\mu_{\beta_c}\right|<\epsilon 
\end{align*}
for every $f\in C(X_A)$. Since $\eta_\beta=\lambda \mu_{\beta,1}+ (1-\lambda) \mu_{\beta,2} $, for some $\lambda \in [0,1]$, we get
\begin{equation*}
    \left|\int f d \nu_\beta -\int f d\mu_{\beta_c}\right|<\epsilon,
\end{equation*}
hence $\lim_{\beta\to \beta_c}d(\mu_{\beta_c},\nu_\beta)=0$, and therefore $\lim_{\beta\to \beta_c}d_H(\mu_{\beta_c},\Phi_\beta)=0$.
\end{proof}

\section{Phase Transition on Prime Renewal shift}

Now we study the conformal measures on the prime renewal shift. We will see that the existence of infinitely many empty configurations can lead to the existence of infinite extremal conformal measures living on $Y_A$. First, we characterize the extremal conformal measures living on $Y_A$ as we realized in the case of the standard renewal shift. Take $p \in \{1\}\cup\{q \in \mathbb{N}: q \text{ is a prime number}\}$, and define
\begin{equation}\label{eq:coefficients_Y_A_confomal_prime_renewal_shift}
    c_{\omega,p}:= \mu(\{\omega\xi^{0,p}\}), \quad \omega \in \mathfrak{R}_p
\end{equation}
where $\xi^{0,p}$ is the empty stem configuration such that and $R_{\xi^{0,p}}(e) = \{1,p\}$, as it is in subsection \ref{subsec:Prime_Renewal_shift}. Consider a potential $F:U \to \mathbb{R}$, where $U$ is the open set of elements for which we can apply the shift map. In $Y_A$ we use the following notation $F(\omega,p):=F(\xi(\omega,p))$, $p$ prime or one. As in the case of the standard renewal shift, we define:
\begin{equation*}
    D(\omega,p) = e^{F(\omega,p)}.
\end{equation*}
By Theorem \ref{theorem.coeficientes}, identity \eqref{eq:c_omega_probability} and Corollary \ref{cor:extremal_conforma_Y_A_characterization}, every extremal conformal probability living on $Y_A$ lives in some $Y_A$-family $Y_A(\xi^{0,p})$ and it satisfies
\begin{align}\label{eq:system_conformal_prime_renewal}
    \begin{cases}
        c_{\omega,p} e^{ F_{|\omega|}(\omega,p)} = c_{e}, \quad \omega \in \mathfrak{R}_p \setminus\{e\};\\
        \sum_{\omega \in \mathfrak{R}_p}c_{\omega,p} = 1.    
    \end{cases}
\end{align}

Now, we have an analogous result to the Theorem \ref{theorem.potentialwithonecoordinate} for the case of countably infinite empty configurations.

\begin{theorem}\label{theorem:infinite_conformal_measures} Consider a potential $F:U \to \mathbb{R}$, we have the following:
\begin{itemize}
    \item[$(i)$] Suppose $\inf F>0$. For each $p \in \{1\}\cup\{p \in \mathbb{N}: p \text{ is a prime number}\}$ and $\beta>\frac{\log 3}{\inf F}$, there exists a unique $e^{\beta F}$-conformal probability measure $\mu_{\beta,p}$ that vanishes out of $Y_A(\xi^0(p))$.
    \item[$(ii)$] Suppose $0\leq \sup F < + \infty$ and $\beta\leq \frac{\log 2}{\sup F}$, there are no $e^{\beta F}$-conformal probability measures which vanish in $\Sigma_A$.
\end{itemize}
\end{theorem}

\begin{proof} For each $p \in \{1\}\cup\{q \in \mathbb{N}: q \text{ is a prime number}\}$, the system \eqref{eq:system_conformal_prime_renewal} gives
\begin{equation*}
    1 = \sum_{\omega \in \mathfrak{R}_p}c_{\omega,p} = c_{e,p}\left(1 + \sum_{n \in \mathbb{N}}\sum_{\substack{\omega \in \mathfrak{R}_p\\ |\omega|=n}}e^{ -\beta F_n(\omega,p)}\right),
\end{equation*}
that is 
\begin{equation*}
    1 + \sum_{n \in \mathbb{N}}\sum_{\substack{\omega \in \mathfrak{R}_p\\ |\omega|=n}}e^{ -\beta F_n(\omega,p)} = \frac{1}{c_{e,p}},
\end{equation*}
where $c_{e,p} \in (0,1)$, and then $c_{e,p}^{-1} \in (1, \infty)$, otherwise we have absence of conformal probabilities on $Y_A(\xi^{0,p})$. Since the LHS of the identity above is greater than $1$ and the terms of the series are positive, then there exists a (unique) conformal probability living on $Y_A(\xi^{0,p})$ if and only if the series above converge. If $\inf F > 0$, we have
\begin{align*}
    \sum_{n \in \mathbb{N}}\sum_{\substack{\omega \in \mathfrak{R}_p\\ |\omega|=n}}e^{-\beta F_n(\omega,p)} \leq  \sum_{n \in \mathbb{N}}\sum_{\substack{\omega \in \mathfrak{R}_p\\ |\omega|=n}}e^{-n \beta \inf F} \stackrel{(\bullet)}{\leq} \sum_{n \in \mathbb{N}} 3^n e^{-n \beta \inf F} = \sum_{n \in \mathbb{N}} e^{n \inf F \left(\frac{\log3}{\inf F} -\beta \right)},
\end{align*}
where in $(\bullet)$ we used Proposition \ref{prop:control_number_configurations}. By last equation above, it is straightforward that $\beta > \frac{\log 3}{\inf F}$ implies the existence of a conformal probability living on $Y_A(\xi^{0,p})$. On the other hand, if $0 \leq \sup F < \infty$, then
\begin{align*}
    \sum_{n \in \mathbb{N}}\sum_{\substack{\omega \in \mathfrak{R}_p\\ |\omega|=n}}e^{-\beta F_n(\omega,p)} \geq  \sum_{n \in \mathbb{N}}\sum_{\substack{\omega \in \mathfrak{R}_p\\ |\omega|=n}}e^{-n \beta \sup F} \stackrel{(\bullet)}{\geq} \frac{1}{2} \sum_{n \in \mathbb{N}} 2^n e^{-n \beta \sup F} = \sum_{n \in \mathbb{N}} e^{n \inf F \left(\frac{\log2}{\sup F} -\beta \right)},
\end{align*}
where in $(\bullet)$ we used again Proposition \ref{prop:control_number_configurations}. It is straightforward that we have the absence of conformal probabilities living on $Y_A(\xi^{0,p})$. Since $p$ is arbitrary, the proof holds for every $Y_A$-family of $X_A$.
\end{proof}

\begin{corollary}
Let $F\equiv 1$. Then, we have the following results.
\begin{itemize}
    \item[$(i)$] For $\beta>\log(3)$ , all $e^{\beta}$-conformal probability measures $\mu_\beta$ that vanishes on $\Sigma_A$ can be written as a convex combination of the measures $\mu_{\beta,p}$.
    \item[$(ii)$] For $\beta\leq \log(2)$ there is no $e^\beta$- conformal probability measure that vanishes on $\Sigma_A$.
\end{itemize}
\end{corollary}
\begin{proof}
Apply Theorem \ref{theorem:infinite_conformal_measures} for the constant potential $F\equiv 1$. For $\beta>\log(3)$, if $\mu_\beta$ is a $e^{\beta}$-conformal measure that vanishes on $\Sigma_A$, denote
$$J=\left\{p\in \{1\}\cup \{p\in \mathbb{N}: p \text{ is prime}\}: \mu_\beta(Y_A(\xi^0(p)))\neq 0\right\}$$
and we have for $E\subseteq Y_A$,
$$\mu_\beta(E)=\sum_{p\in J}\mu_\beta(E\cap Y_A(\xi^0(p)))=\sum_{p\in J}\mu_\beta(Y_A(\xi^0(p)))\dfrac{\mu_\beta(E\cap Y_A(\xi^0(p)))}{\mu_\beta(Y_A(\xi^0(p)))}=\sum_{p\in J}\mu_\beta(Y_A(\xi^0(p)))\mu_{\beta,p}(E).$$
Observe that $\mu_{\beta,p}$ are extremal measures.
For $\beta\leq \log(2)$ is a direct consequence of the previous theorem.
\end{proof}

%%%%%%%%%%%%%%%%%%%%%%%%%%%%%%%%%%%%%%%%%%%%%%%%%%%%%%%%%%%%%%%%%%%%%%%%%%%%%%%%%%%%

\section{Pressures}\label{sec:pressures}

Our main goal in this section is to present the concept of pressure at a point $x \in X_A$ and compare it with the Gurevich pressure defined in chapter \ref{ch:Markov_shift_space}. For the pressure of a potential $F:U\to \mathbb{R}$ at a point $x$, denoted by $P(F,x)$, we follow \cite{DenYu2015}. This concept is introduced on a more general setting, called an \textit{Iterated Function System}, i.e a pair consisting of a Polish Space $X$ and a family $\mathcal{V}$ of homeomorphisms $v:D(v)\to v(D(v))\subset X$ defined on a closed nonempty subset $D(v)\subset X$. The authors require as well the existence of point which the orbit is infinite, but since our goal is to apply the theorems on the generalized symbolic space $X_A$, which contains $\Sigma_A$, this hypothesis is automatically satisfied.

For us, the polish space $X$ is the space $X_A$ and the family of homeomorphisms are chosen to be $\{(\sigma\vert_{C_i})^{-1}: i\in \mathbb{N}\}$, the inverses of the shift map when restricted to the \textit{generalized  cylinder sets} $C_i := \{\xi \in X_A: \xi_i =1\}$. 

For every finite admissible word $\alpha$, $|\alpha| \leq n$, $n \in \mathbb{N}$, we define the set $W_n^\alpha$ of the words of length $n$ which ends with $\alpha$.

\begin{definition} Given $\beta > 0$ and a potential $F:U \to \mathbb{R}$ and $x \in X_A$, the \emph{n-th patition function at the point }$x$ is defined by
\begin{equation*}
    Z_n(\beta F,x) = \sum_{\sigma^n(y)=x}e^{\beta F_n(y)},
\end{equation*}
%\begin{equation*}
%    Z_n(\beta F,x):= \sum_{\alpha\in %W_{n+|x|}}e^{\beta F_n(\alpha)},
%\end{equation*}
where $F_n$ is the Birkhoff sum of $F$. The \emph{pressure at the point }$x$ is
\begin{equation*}
    P(\beta F,x) :=\limsup_{n\to \infty} \frac{1}{n} \log Z_n(\beta F,x).
\end{equation*}
\end{definition}

Until the end of this section, we assume that $X_A$ comes from the Renewal Shift and the potential $F$ depends only on the first coordinate, i.e $F(x)=F(x_0), x\in U$.
\begin{remark}
With the hypothesis above we can write the Gurevich pressure in another way. Let $a=1$, we have
$$ Z_n(\beta F,[1]) = \sum_{x\in \Sigma_A:\sigma^n(x)=x, x_0=1}e^{\beta F_n(x)}=\sum_{x\in \Sigma_A:\sigma^n(x)=x, x_0=1}e^{\beta F_n(\sigma(x))}=\sum_{\alpha\in W_n^1}e^{\beta F_n(\alpha)}$$
We will use that last expression of $Z_n(\beta F,[1])$ for most of our calculations.
\end{remark}
Observe as well that $2|W_n^1| = |W_{n+|\alpha|}^\alpha|=2^n$ for every $\alpha$ finite admissible word and $n \in \mathbb{N}$. For given $x \in Y_A$ and $n \in \mathbb{N}$ we define the maps $J_n: W_n^1 \to W_{n+|x|}^x$ and $T_n: W_n^1 \to W_{n+|x|}^x$ as
\begin{equation}\label{eq:J_n_T_n}
    J_n(\alpha) = \alpha x \quad \text{and} \quad T_n(\alpha):= \sigma^{\alpha_0}(\alpha)(x_0+\alpha_0)\ldots (x_0+1)x.
\end{equation}

\begin{proposition} Given $x \in Y_A$ and $n \in \mathbb{N}$, the following statements are true:
\begin{itemize}
    \item[$(i)$] both $J_n$ and $T_n$ are injective;
    \item[$(ii)$] $J_n(W_n^1) \cap T_n(W_n^1) = \emptyset$;
    \item[$(iii)$] $J_n(W_n^1) \sqcup T_n(W_n^1) = W_{n+|x|}^x$.
\end{itemize}
\end{proposition}

\begin{proof} $(i):$ the injectivity of $J_n$ is straightforward. For $T_n$, we claim that $T_n(W_n^1) = W_{n+|x|}^{(x_0+1)x}$. The inclusion $T_n(W_n^1) \subseteq W_{n+|x|}^{(x_0+1)x}$ comes directly for the defintion of $T_n$. Now, let $\alpha \in  W_{n+|x|}^{(x_0+1)x}$, then $\alpha = \alpha'(x_0+p)(x_0+p-1)\ldots (x_0+1)x$ for some $p \in \mathbb{N}$ and $\alpha'$ is an admissible word which ends with `$1$' or the empty word. Take the least $p$ with such property and note that $p \leq n$. In addition, observe that $p = n$ if and only if $\alpha'$ is the empty word and it is straightforward to notice that $\alpha = T_n(n(n-1)\ldots 1)$. Now, if $p<n$, then $\alpha'$ ends with `$1$' and it is straightforward that $\alpha = T(p(p-1)\ldots 1\alpha')$, and therefore the claim is proved. Since $|W_n^1|=|W_{n+|x|}^{(x_0+1)x}|=2^{n-1}$, we have necessarily that $T_n$ is injective.

$(ii):$ it is a direct consequence form the facts: $T_n(W_n^1) = W_{n+|x|}^{(x_0+1)x}$ and $J_n(W_n^1) = W_{n+|x|}^{1x}$, two disjoint sets by definition.

$(iii):$ since $A(j,x_0) = 1$ if and only if $j \in \{1,x_0 +1\}$, we have $W_{n+|x|}^{(x_0+1)x} \sqcup W_{n+|x|}^{1x} = W_{n+|x|}^{x}$. 
\end{proof}

By the definition of $Z_n(\beta F,x)$ we have
\begin{align*}
    Z_n(\beta F,x) &= \sum_{\sigma^n(y)=x}e^{\beta F_n(y)} = \sum_{\substack{\alpha \in W_{n+|x|}^x}}e^{\beta F_n(\alpha)} = \sum_{\substack{\alpha \in W_{n+|x|}^{1x}}}e^{\beta F_n(\alpha)} + \sum_{\substack{\alpha \in W_{n+|x|}^{(x_0+1)x}}}e^{\beta F_n(\alpha)} \\
    &= \sum_{\substack{\alpha \in W_{n}^{1}}}e^{\beta F_n(\alpha)} + \sum_{\substack{\alpha \in W_{n+|x|}^{(x_0+1)x}}}e^{\beta F_n(\alpha)},
\end{align*}
that is,
\begin{align}\label{eq:Z_veio_decomp_T_J_geral}
    Z_n(\beta F,x) &= Z_n(\beta F,[1]) + \sum_{\substack{\alpha \in W_{n}^{1}}}e^{\beta F_n(T_n\alpha)}.
\end{align}
The identity above implies that $Z_n(\beta F,x) > Z_n(\beta F,[1])$. In the current section we consider the class of potentials $F: X_A\setminus \{\xi^0\} \to \mathbb{R}$, given by $F(x) = g(x_0) - g(x_0 + 1)$, where $g$ is a continuous function. This potential has summable variations, since all of its variations are zero. 

\begin{lemma}\label{lemma:Z_veio_rewritten_general_final}
For every $\alpha \in W_n^1$ and $n \geq 2$ it is true that
\begin{equation*}
    F_n(T_n \alpha) = F_n(\alpha) + g(x_0+1) - g(x_0 + \alpha_0 +1) + g(\alpha_0 + 1) - g(1).
\end{equation*}
\end{lemma}

\begin{proof} First we claim that
\begin{equation*}
    F_n(T_n \alpha) = F_n(\alpha) + \sum_{i=0}^{\alpha_0 -1}\left[F(x_0 + \alpha_0 - i)-F(\alpha_0-i)\right]
\end{equation*}
for every $\alpha \in W_n^1$ and $n \geq 2$. Indeed, by the definition of Birkhoff's sum and $T_n \alpha$
\begin{equation*}
    F_n(T_n \alpha) = \sum_{i=0}^{n-1}F[\sigma^i(\sigma^{\alpha_0}(\alpha)(x_0+ \alpha_0)\ldots(x_0+1)x)].
\end{equation*}
If $\alpha_0 < n$, then
\begin{align*}
    F_n(T_n \alpha) &= \sum_{i=0}^{n-\alpha_0-1} F(\sigma^{i+\alpha_0}(\alpha)(x_0+ \alpha_0)\ldots(x_0+1)x) + \sum_{i=0}^{\alpha_0-1} F((x_0 +\alpha_0 -i)\ldots (x_0+1)x)\\
    &= \sum_{i=0}^{n-\alpha_0-1} F(\alpha_{\alpha_0 + i}) + \sum_{i=0}^{\alpha_0-1} F(x_0 +\alpha_0 -i) = \sum_{i=\alpha_0}^{n-1} F(\alpha_i) + \sum_{i=0}^{\alpha_0-1} F(x_0 +\alpha_0 -i) \\
    &= \sum_{i=0}^{n-1} F(\alpha_i) -\sum_{i=0}^{\alpha_0-1} F(\alpha_i) + \sum_{i=0}^{\alpha_0-1} F(x_0 +\alpha_0 -i) = F_n(\alpha) + \sum_{i=0}^{\alpha_0 -1}\left[F(x_0 + \alpha_0 - i)-F(\alpha_i)\right].
\end{align*}
If $\alpha_0 = n$ the equality obtained above also holds:
\begin{align*}
    F_n(T_n \alpha) &= \sum_{i=0}^{n-1} F((x_0 +n -i)\ldots (x_0+1)x) =  \sum_{i=0}^{n-1} F(x_0 +\alpha_0 -i) \\
    &= F_n(\alpha)-F_n(\alpha) + \sum_{i=0}^{n-1} F(x_0 +\alpha_0 -i) \\
    &= F_n(\alpha) + \sum_{i=0}^{\alpha_0 -1}\left[F(x_0 + \alpha_0 - i)-F(\alpha_i)\right].
\end{align*}
However, for any $\alpha_i$ with $0 \leq i \leq \alpha_0 -1$ we necessarily have $\alpha_i = \alpha_0-i$, and the claim is proved
Now, note that
\begin{equation*}
    \sum_{i=0}^{\alpha_0 - 1} F(x_0 + \alpha_0 - i) = g(x_0+1) - g(x_0 + \alpha_0 +1) \quad \text{and} \quad \sum_{i=0}^{\alpha_0 - 1} F(\alpha_0 - i) = g(1) - g(\alpha_0 + 1),
\end{equation*}
and therefore
\begin{align*}
    F_n(T_n \alpha) &= F_n(\alpha) + \sum_{i=0}^{\alpha_0 -1}\left[F(x_0 + \alpha_0 - i)-F(\alpha_i)\right] \\
    &= F_n(\alpha) + g(x_0+1) - g(x_0 + \alpha_0 +1) + g(\alpha_0 + 1) - g(1).
\end{align*}
\end{proof}

\begin{theorem}\label{teo:equal_pressures_Y_A} Let $x \in Y_A$ and $\beta >0$. Consider a potential $F:U \to \mathbb{R}$, bounded above, such that $F(x) = g(x_0) - g(x_0 + 1)$, then $P(\beta F, x)=P_G(\beta F)$.
\end{theorem}

\begin{proof} First, we compare the fraction
\begin{equation*}
    \frac{Z_n(\beta F,x)}{Z_n(\beta F,[1])}.
\end{equation*}
We have that
\begin{align*}
    1 < \frac{Z_n(\beta F,x)}{Z_n(\beta F,[1])} &\stackrel{\text{\eqref{eq:Z_veio_decomp_T_J_geral}}}{=}
    \frac{Z_n(\beta F,[1]) + \sum_{\substack{\alpha \in W_{n}^{1}}}e^{\beta F_n(T_n\alpha)}}{Z_n(\beta F,[1])} = 1 + \frac{\sum_{\substack{\alpha \in W_{n}^{1}}}e^{\beta F_n(T_n\alpha)}}{Z_n(\beta F,[1])}\\
    &\stackrel{\text{Lemma \ref{lemma:Z_veio_rewritten_general_final}}}{=} 1 + \frac{\sum_{\substack{\alpha \in W_{n}^{1}}}e^{\beta F_n(\alpha)+\beta (g(x_0+1) - g(x_0+\alpha_0+1) + g(\alpha_0 + 1) -g(1))}}{Z_n(\beta F,[1])}.
\end{align*}
Since the potential is bounded above, there exists $M>0$ such that
\begin{equation*}
    F(m) = g(m) - g(m+1) \leq M, \quad m \in \mathbb{N},
\end{equation*}
thus 
\begin{equation*}
    g(\alpha_0 +1) - g(x_0+\alpha_0 +1) \leq x_0 M
\end{equation*}
and we obtain
\begin{align*}
    \frac{Z_n(\beta F,x)}{Z_n(\beta F,[1])} &\leq 1 + e^{\beta [g(x_0+1)-g(1) + x_0 M]}\frac{\sum_{\substack{\alpha \in W_{n}^{1}}}e^{\beta F_n(\alpha)}}{Z_n(\beta F,[1])} \\
    &= 1 + e^{\beta [g(x_0+1)-g(1) + x_0 M]}\frac{Z_n(\beta F,[1])}{Z_n(\beta F,[1])} = 1 + e^{\beta [g(x_0+1)-g(1) + x_0 M]}
\end{align*}
Therefore, we get
\begin{equation}\label{eq:bound_Z_veio_general_final}
Z_n(\beta F,x) \leq  Z_n(\beta F,[1])(1 + e^{\beta [g(x_0+1)-g(1) + x_0 M]}).
\end{equation}
Now, for the pressures, we have that
\begin{align}\label{eq:dif_pressures_1_general_final}
    P(\beta F, x)\leq P_G(\beta F,[1])+ \limsup_{n\to \infty} \frac{1}{n} \log (1 + e^{\beta [g(x_0+1)-g(1) + x_0 M]})=P_G(\beta F,[1]).
\end{align}
Therefore,
\begin{equation*}
    P(\beta F,x)=P_G(\beta F,[1]). \tag*{\qedhere}
\end{equation*}
\end{proof}

\begin{proposition}\label{prop:equal_pressures_Sigma_A}Let $\beta>0$ and $F:U \to \mathbb{R}$ be a potential depending only on the first coordinate. For every $x\in \Sigma_A$ there exists an $\widetilde{x}\in Y_A$ such that $P(\beta F,x)=P(\beta F,\widetilde{x})$
\end{proposition}

\begin{proof} For $x=x_0x_1\cdots x_{i-1}\cdots$, let $i$ be the least positive integer such that $x_{i-1}=1$ and so we can define $\widetilde{x}=x_0x_1\cdots x_{i-1}\xi_0\in Y_A$. Given $y\in \sigma^{-n}(x)$, $y=y_0\cdots y_{n-1}x_0\cdots x_{i-1}\cdots$, we can define $\widetilde{y}:=y_0\cdots y_{n-1}x_0\cdots x_{i-1}\xi_0\in Y_A$. It is clear that $\widetilde{y}\in \sigma^{-n}(\widetilde{x})$ and it is not difficult to see that $y\to \widetilde{y}$ defines a bijection between $\sigma^{-n}(x)$ and $\sigma^{-n}(\widetilde{x})$. More than that, we have, using the fact that $\phi$ depends only on the first coordinate,
$$F_n(y)=\sum_{i=0}^{n-1}F(\sigma^i(y))=\sum_{i=0}^{n-1}F(y_i)=F_n (\widetilde{y}).$$
With these observations, 
$$Z_n(\beta F,x)=\sum_{y\in \sigma^{-n}(x)}e^{\beta F_n(y)} = \sum_{y\in \sigma^{-n}(\widetilde{x})}e^{\beta F_n(\widetilde{y})}=Z_n(\beta F,\widetilde{x}),$$
concluding that $P(\beta F,x)=P(\beta F,\widetilde{x})$. 
\end{proof}

The most important conclusion of this section is a direct consequence of Theorem \ref{teo:equal_pressures_Y_A} and Proposition \ref{prop:equal_pressures_Sigma_A}.

\begin{corollary}\label{cor:equal_pressures} Let $x \in X_A$ and $\beta >0$. Consider a potential $F:U \to \mathbb{R}$, bounded above, such that $F(x) = g(x_0) - g(x_0 + 1)$. Then $P(\beta F, x)=P_G(\beta F)$.
\end{corollary}

As a subclass of examples, $F$ satisfies the hypothesis of the corollary \ref{cor:equal_pressures} if we take $g$ as being an increasing function. In particular, we are interested in the case $g(x) = \log x$ because in this case $F$ is transient for $\beta > \beta_c$, $\beta_c \in (1,2)$, and it is positive recurrent for $\beta < \beta_c$.

\subsection{Existence of eigenmeasures}

At this point, one could ask if it is always possible to grant the existence of a conformal measures for some potential. In order to answer this question, we extend the notion of summable potentials in Definition \ref{def:regularity_potentials} for $U = \Dom \sigma$ as follows: a continous potential $F:U \to \mathbb{R}$ is said to be summable (or exp-summable) when
\begin{equation*}
    \Sigma_{n \in \mathbb{N}} e^{\sup F\vert_{C_n}} < \infty.
\end{equation*}
Similarly, one can also extend the notion of $n$-th variation and summable variations by extending them to the set $U$. For a summable potential with summable variations on $X_A$, its restriction to $\Sigma_A$ is a continuous potential, which has summable variations and it is summable. In this case, it is known by Theorem 1 of \cite{FreireVargas2018} that, for each $\beta > 1$, there exists a unique equilibrium state $\mu_\beta$. Then, by Theorem 1.2 of a paper of Buzzi and Sarig \cite{BuzSa2003}, $\mu_\beta$ necessarily satisfies
\begin{equation*}
    d\mu_\beta = h_\beta d\nu_\beta,
\end{equation*}
where $h_\beta:\Sigma_A \to \mathbb{R}$ is a positive continuous function and $\nu_\beta$ is a Borel measure finite on cyliders with full support satisfying $\int h_\beta d\nu_\beta = 1$, and such that
\begin{equation*}
    L_{\beta F} h_\beta = \lambda_\beta h_\beta \quad \text{and} \quad L_{\beta F}^* \nu_\beta = \lambda_\beta \nu_\beta
\end{equation*}
for $\lambda_\beta = e^{P_G(\beta F)}$. By Theorem \ref{thm:extension_eigenmeasures}, the eigenmeasure $\nu_\beta$ is also an eigenmeasure on $X_A$ for $\beta F$ by extending back the restriction of $F$ to $\Sigma_A$. As an example of potential with such regularity one could choose
\begin{equation*}
    F(\xi) = -\kappa(\xi)_0.
\end{equation*}
The existence of conformal measures for summable potentials was also studied by R. D. Mauldin and M. Urba\'nski in \cite{MaulUr2001}, where they obtained the existence of probability eigenmeasures, which the eigenvalue is the exponential of the pressure. These existence results in \cite{FreireVargas2018,MaulUr2001} above are very general in terms of the nature of the matrix $A$, since the only hypothesis on it is transitivity, while the nature of potentials is somewhat restrictive.

On the other hand, by Theorem 1.1 of \cite{BuzSa2003}, if we look for the set of equilibrium measures associated to a summable variations potential $\beta F$ defined on $\Sigma_A$, the only options are the absence or the existence of an unique equilibrium measure for $\beta F$ for each $\beta$. In other words, the \textit{phase transition} is in the sense of the existence or not of the equilibrium state. Depending on the recurrence nature of the potential, the study of phase transitions, in the sense of existence-absence of eigenmeasures, can be realized by a change of the type recurrent-transient for some critical value $\beta_c$, by combining the Generalized RPF Theorem \cite{Sarig1999,Daon2013} and the Discriminant Theorem \cite{Sarig2001}.
 
Since $\Sigma_A \subseteq X_A$ and in many cases $X_A$ is compact, it is very natural to expect to find more eigenmeasures (they are constructed via subsequences of atomic measures) for the Ruelle transformation on $X_A$ than in the standard symbolic space $\Sigma_A$.  
 
In the last subsection of this chapter, we shall exhibit a concrete example of a potential $F:U \to \mathbb{R}$ such that its restriction to the classical symbolic space presents a phase transition respect to the existence of the eigenmeasure which disappears when we consider the same potential defined on $U\subseteq X_A$. 

In particular for this section, we shall give a fairly general condition on a potential $F:U\to \mathbb{R}$ for the existence of an eigenmeasure $m$ with eigenvalue $\lambda>0$ for the Ruelle transformation $L_F$, in other words,
\[\int f dm= \int L_{-(\log \lambda-F)} f dm \quad  \text{for every $f\in C_c(U)$},\]
by Theorem \ref{thm:equivalences_conformal_measures_generalized_Markov_shift}, $m$ is a $e^{\log\lambda-F}$-conformal measure. Our result for existence is a direct consequence of Theorem 3.6 on \cite{DenYu2015}, but due to some differences on notation and some minor changes on the statement of the theorem, we will be providing a proof below. 

\begin{theorem}[Denker-Yuri]\label{thm:existence_eigenmeasures_Denker_Yuri}
Let $X_A$ be compact and suppose there exists a $x\in X_A$ such that $P(F,x)$ is finite, then there exists an eigenmeasure $m$ for $L_F$ with eigenvalue $e^{P(F,x)}$.
\end{theorem}
\begin{proof}
We shall use Lemma 3.1 of \cite{DenUr1991} to construct our measure. Given a sequence of real numbers $(a_n)$ there exists a sequence of positive numbers $(b_n)_{n\in \mathbb{N}}$ such that 
\begin{equation}\label{eq:Denker_Urbanski_lemma}
    \sum_{n\in \mathbb{N}}b_n\exp{(a_n-ns)}=\left\{\begin{array}{lr}
    <\infty     & s>c \\
     \infty    & s\leq c
    \end{array}\right.
\end{equation}
and $\lim_n b_n/b_{n+1}=1$, where $c:=\limsup_n a_n/n$. As the pressure $P(F,x)$ is finite, $Z_n(F,x)$ is finite for $n$ large enough. Then, without loss of generality, we assume that $Z_n(F,x)$ is finite for every $n \in \mathbb{N}$. By taking $a_n=\log
Z_n(F,x)$, we have $c=P(F,x)$, and we may define
\begin{equation*}
    M(p,x)=\sum_{n\in\mathbb{N}}b_ne^{-np}Z_n(F,x), \quad p > c,
\end{equation*}
and the measures
\begin{equation}
    m(p,x)=M(p,x)^{-1}\sum_{n\in \mathbb{N}}b_n e^{-np}\sum_{\sigma^n(y)=x}e^{F_n(y)}\delta_y,
\end{equation}
where $\delta_y$ is the Dirac measure on $y$. From \eqref{eq:Denker_Urbanski_lemma} we claim that ${\lim_{p\downarrow P(F,x)}M(p,x)=\infty}$. Indeed, for every sequence $(p_k)_\mathbb{N}$ such that $p_k\downarrow c$, let $S(N,k)=\sum_{n=1}^{N}b_n\exp(-np_k)Z_n(F,x)$. Such sequence of two variables is increasing on both of them, by Monotone Convergence Theorem, we have $\lim_N\lim_kS(N,k)=\lim_k\lim_NS(N,k)$, and the claim is proved because $\lim_N\lim_kS(N,k) = \infty$. Now, fix a sequence $p_k\downarrow P(F,x)$ such that $m(p_k,x)\to m$ weakly. We have
\begin{align*}
    \int L_Fg(y) m(p_k,x)(dy)&= M(p_k,x)^{-1}\sum_{n\in \mathbb{N}} b_n e^{-np_k}\sum_{\sigma^n(y)=x}e^{F_n(y)}L_Fg(y)\\
    &=M(p_k,x)^{-1}e^{p_k}\sum_{n=1}^{\infty} b_n e^{-(n+1)p_k}\sum_{\sigma^{n+1}(w)=x}e^{F_{n+1}(w)}g(w),
\end{align*}
for every $g\in C_c(U)$. Define
\begin{equation}\label{eq:notation_Ruelle_Operator}
    L_F^{n+1} g (x) := \sum_{\sigma^{n+1}(w)=x}e^{F_{n+1}(w)}g(w).
\end{equation}
By using that $b_n = b_n+b_{n+1}-b_{n+1}$, one gets
\begin{align*}
    \int L_Fg(y) m(p_k,x)(dy)&= e^{p_k}M(p_k,x)^{-1}\sum_{n=1}^{\infty}(b_n/b_{n+1}-1)b_{n+1}e^{-(n+1)p_k}L^{n+1}_Fg(x) \\
    &+e^{p_k}\int g(y) m(p_k,x)(dy)-b(1)M(p_k,x)^{-1}L_Fg(x).
\end{align*}
Now, by taking $p_k\downarrow P(F,x)$, the RHS of above expression becomes $e^{P(F,x)}\int g(y)dm(y)$. In fact, note that
\begin{align*}
    b(1)M(p_k,x)^{-1}L_Fg(x) &\to 0,\\
    e^{p_k}\int g(y) m(p_k,x)(dy) &\to e^{P(F,x)}\int g(y)dm(y).
\end{align*}
We claim that 
\begin{equation*}
    M(p_k,x)^{-1}\sum_{n\in \mathbb{N}} b_n e^{-np_k}\sum_{\sigma^n(y)=x}e^{F_n(y)}L_Fg(y) \to 0.
\end{equation*}
Indeed, the term above has the upper bound
\begin{equation}\label{eq:upper_bound_Denker_Yuri_proof}
e^{p_k}M(p_k,x)^{-1}\|g\|_{\infty}\sum_{n=1}^{\infty}|b_n/b_{n+1}-1|b_{n+1}e^{-(n+1)p_k}Z_{n+1}(F,x).
\end{equation}
Now, for every $\epsilon>0$, there exists $N\in\mathbb{N}$ s.t. $|b_n/b_{n+1}-1|<\epsilon$ for every $n>N$, then \eqref{eq:upper_bound_Denker_Yuri_proof} is bounded above by
\[e^{p_k}M(p_k,x)^{-1}\|g\|_{\infty}\left(\sum_{n=1}^{N}|b_n/b_{n+1}-1|b_{n+1}e^{-(n+1)p_k}Z_{n+1}(F,x)+\epsilon \sum_{n=N+1}^{\infty}b_{n+1}e^{-(n+1)p_k}Z_{n+1}(F,x)\right)\]
It is straightforward that the first term in the last expression vanishes when $k$ goes to infinity, since it is a finite sum. The remaining term in same expression is less or equal to $\epsilon$ because
\begin{equation*}
    \sum_{n=N+1}^{\infty}b_{n+1}e^{-(n+1)p_k}Z_{n+1}(F,x) \leq M(p_k,x).
\end{equation*}
Since $\epsilon$ is arbitrary the expression goes to zero as $p_k\downarrow P(F,x)$, proving the claim.
\end{proof}

\begin{remark} We emphasize to the reader that \eqref{eq:notation_Ruelle_Operator} was used just as a notation. We did not define it as a transformation on $C_c(U)$. 
\end{remark}

\begin{theorem} For $A$ s.t. $\Sigma_A$ is topologically mixing, let $F:U \to \mathbb{R}$ be a potential s.t. $F\vert_{\Sigma_A}$ satisfies the Walters' condition and take $\beta >0$ s.t. $P_G(\beta F\vert_{\Sigma_A})< \infty$ and $\beta F\vert_{\Sigma_A}$ is positive recurrent. Also suppose that there exists $x \in X_A$ s.t. $P(\beta F,x) = P_G(\beta F\vert_{\Sigma_A})$. If a Denker-Yuri's probability eigenmeasure $m_\beta$ from Theorem \ref{thm:existence_eigenmeasures_Denker_Yuri} lives in $\Sigma_A$. Then, it coincides with the Sarig's eigenmeasure probability for the same eigenvalue $\lambda = e^{P_G(\beta F\vert_{\Sigma_A})}$.
\end{theorem}

\begin{proof} The restriction of $m_\beta$ to $\mathcal{B}_{\Sigma_A}$ is an eigenmeasure associated to the same eigenvalue as in the statement due to Proposition \ref{thm:restriction_eigenmeasures}. Since $m_\beta(\Sigma_A) = 1$, the restriction to $\Sigma_A$ is a probability as well. Since $F\vert_{\Sigma_A}$ satisfies Walters' condition we have, by positive recurrence and the generalized RPF theorem, the existence of the Sarig's eigenmeasure, and the space of the eigenmeasures has dimension $1$. Then, we necessarily have that $m_\beta$ is the Sarig's eigenmeasure.
\end{proof}

. 

\begin{remark} In the conditions of the theorem above, it is straightforward that the existence of the Denker-Yuri's probability eigenmeasure implies the finiteness of the Sarig's eigenmeasure.
\end{remark}

\subsection{Different thermodynamics for the same potential.\newline
Standard versus generalized symbolic spaces: a concrete example}

In this section, we give an explicit example to illustrate the differences which can be found when we use the standard countable Markov shift $\Sigma_A$ and the generalized one $X_A$ as the configuration space for a fixed potential.

\begin{lemma}\label{lemma:condition_for_eigenmeasures_renewal_potential_log} For the generalized renewal shift, consider $\beta >0$ and the potential $F:U \to \mathbb{R}$ given by
\begin{equation*}
    F(\xi) := \log(x_0) - \log(x_0+1),
\end{equation*}
where $x_0$ is the first coordinate of the stem of $\xi$. If a probability measure $\mu_\beta$ is an eigenmeasure for the Ruelle transformation $L_{\beta F}$, with associated eigenvalue $\lambda > 0$, then
\begin{equation}\label{eq:condition_for_eigenmeasures_renewal_potential_log}
    \mu_\beta(C_n) = \frac{1}{\lambda^n}\frac{1}{(n+1)^\beta}
\end{equation}
for every $n \in \mathbb{N}$. In particular,
\begin{equation}\label{eq:eigenmeasure_probability_sum_renewal_potential_log}
    1 = \mu_\beta(\{\xi^0\}) + \sum_{n \in \mathbb{N}} \frac{1}{\lambda^n}\frac{1}{(n+1)^\beta}.
\end{equation}
\end{lemma}

\begin{proof} By Theorem \ref{thm:equivalences_conformal_measures_generalized_Markov_shift}, $\mu_\beta$ is an eigenmeasure as in the statement of this lemma if and only if it is a $\lambda e^{-\beta F}$-conformal measure in the sense of Denker-Urba\'nski, and then
\begin{equation}\label{eq:condition_conformality_renewal_potential_log_lemma_proof}
    \mu_\beta(\sigma(C_n)) = \int_{C_n} \lambda e^{-\beta F} d\mu_\beta,
\end{equation}
for every $n \in \mathbb{N}$. For $n =1$, we have $\sigma(C_1) = X_A$, and since $\mu_\beta$ is a probability, equation \eqref{eq:condition_conformality_renewal_potential_log_lemma_proof} gives
\begin{equation*}
    1 = \mu_\beta(\sigma(C_1)) = \int_{C_1} \lambda e^{-\beta F} d\mu_\beta = \lambda 2^\beta \mu_\beta(C_1),
\end{equation*}
that is,
\begin{equation}\label{eq:measure_on_C_1_renewal_potential_log_lemma_proof}
    \mu_\beta(C_1) = \frac{1}{\lambda}\frac{1}{2^\beta}.
\end{equation}
Now, for $n \neq 1$, then $\sigma(C_n) = C_{n-1}$ and then \eqref{eq:condition_conformality_renewal_potential_log_lemma_proof} implies
\begin{equation*}
    \mu_\beta(\sigma(C_n)) = \mu_\beta(C_{n-1}) = \int_{C_n} \lambda e^{-\beta F} d\mu_\beta = \lambda \left(\frac{n}{n+1}\right)^{-\beta} \mu_\beta(C_n),
\end{equation*}
i.e.,
\begin{equation}\label{eq:measure_on_C_n_renewal_potential_log_lemma_proof}
    \mu_\beta(C_n) = \frac{1}{\lambda} \left(\frac{n}{n+1}\right)^{\beta} \mu_\beta(C_{n-1}).
\end{equation}
Now we prove the validity of \eqref{eq:condition_for_eigenmeasures_renewal_potential_log}. The result is straightforward for $n = 1$ because of \eqref{eq:measure_on_C_1_renewal_potential_log_lemma_proof}. Now, suppose that \eqref{eq:condition_for_eigenmeasures_renewal_potential_log} holds for $C_n$. By equation \eqref{eq:measure_on_C_n_renewal_potential_log_lemma_proof} we have
\begin{align*}
    \mu_\beta(C_{n+1}) = \frac{1}{\lambda} \left(\frac{n+1}{n+2}\right)^{\beta} \mu_\beta(C_n) = \frac{1}{\lambda} \left(\frac{n+1}{n+2}\right)^{\beta} \frac{1}{\lambda^n}\frac{1}{(n+1)^\beta} = \frac{1}{\lambda^{n+1}}\frac{1}{(n+2)^\beta},
\end{align*}
and the equation \eqref{eq:condition_for_eigenmeasures_renewal_potential_log} holds by induction. The identity \eqref{eq:eigenmeasure_probability_sum_renewal_potential_log}.
\end{proof}

\begin{remark} Observe that the existence of an eigenmeasure probability $\mu_\beta$ as in the statement of the lemma above imposes that the series in RHS of \eqref{eq:eigenmeasure_probability_sum_renewal_potential_log} converges. 
\end{remark}

\begin{lemma}\label{lemma:eigenmeasure_C_alpha_formula} For the generalized renewal shift and the same potential as in Lemma \ref{lemma:condition_for_eigenmeasures_renewal_potential_log}, let $\beta >0$ and $\mu_\beta$ be an eigenmeasure for the Ruelle transformation $L_{\beta F}$, with associated eigenvalue $\lambda > 0$. Then, for $\alpha = \alpha_0 \cdots \alpha_{n-1}$, $n > 1$, positive admissible word, we have
\begin{equation}\label{eq:eigenmeasure_C_alpha_formula}
    \mu_\beta(C_\alpha) = \frac{e^{\beta\sum_{k=0}^{n-2}F(\alpha_k)}}{\lambda^{\alpha_{n-1}+(n-1)}} \frac{1}{(\alpha_{n-1}+1)^\beta},
\end{equation}
where $F(p) := F\vert_{C_p} \equiv \log(p) - \log(p+1)$, $p \in \mathbb{N}$
\end{lemma}

\begin{proof} We recall that, if $\mu_\beta$ is an eigenmeasure as in the statement above, then
\begin{equation}\label{eq:conformality_DU_eigenmeasure_recall}
    \mu_\beta(\sigma(C_\alpha)) = \int_{C_\alpha} \lambda e^{-\beta F} d\mu_\beta,
\end{equation}
for every $\alpha$ positive admissible word s.t. $|\alpha| > 1$.
We prove the lemma by induction. For $n = 2$, the identity above becomes
\begin{equation*}
    \mu_\beta(C_{\alpha_1}) = \mu_\beta(\sigma(C_{\alpha_0 \alpha_1})) = \int_{C_{\alpha_0 \alpha_1}} \lambda e^{-\beta F} d\mu_\beta = \lambda e^{-\beta F(\alpha_0)}\mu_\beta(C_{\alpha_0\alpha_1}),
\end{equation*}
that is,
\begin{equation*}
    \mu_\beta(C_{\alpha_0\alpha_1}) = \frac{e^{\beta F(\alpha_0)}}{\lambda} \mu_\beta(C_{\alpha_1}) = \frac{e^{\beta F(\alpha_0)}}{\lambda^{\alpha_1 + 1}} \frac{1}{(\alpha_1+1)^\beta},
\end{equation*}
where in the last equality we used Lemma \ref{lemma:condition_for_eigenmeasures_renewal_potential_log}. Now, suppose that \eqref{eq:eigenmeasure_C_alpha_formula} holds for some $n > 2$ and let $\alpha = \alpha_0 \cdots \alpha_n$ be a positive admissible word. By \eqref{eq:conformality_DU_eigenmeasure_recall} and the inductive step for the word $\alpha_1 \cdots \alpha_n$, we have
\begin{equation*}
    m_\beta(C_\alpha) = \frac{e^{\beta F(\alpha_0)}}{\lambda} m_\beta(C_{\sigma(\alpha)}) = \frac{e^{\beta\sum_{k=0}^{n-1}F(\alpha_k)}}{\lambda^{\alpha_{n}+n}} \frac{1}{(\alpha_n+1)^\beta}.
\end{equation*}
\end{proof}

\begin{theorem}\label{thm:complete_characterization_eigenmeasures_renewal} Let $A$ be the renewal shift transition matrix and $X_A$ its generalized Markov shift space. Consider the potential $F:U \to \mathbb{R}$ given by $$F(x)=\log(x_0)-\log(x_0+1).$$ Then, for every $\beta > 0$, there exists a unique eigenmeasure associated to the eigenvalue $\lambda_\beta = e^{P_G(\beta F)}$. Moreover, there is critical value $\beta_c$, which is the (real) solution for $\zeta(\beta_c) = 2$ such that
\begin{itemize}
    \item[$(i)$] if $\beta >\beta_c$, then the eigenmeasure lives on $Y_A$;
    \item[$(ii)$] if $\beta \leq \beta_c$, then the eigenmeasure lives on $\Sigma_A$.
\end{itemize}
\end{theorem}

\begin{proof} The proof is a summarization of some results we developed and proved in this thesis:
\begin{itemize}
    \item[(1)] The function given by $g(x_0) = \log(x_0)$ is continuous. In Example \ref{exa:renewal_potential_log} we shown that $F$ is bounded above. Hence, by Corollary \ref{cor:equal_pressures}, we conclude that $P(\beta F,x) = P_G(\beta F)$ for every $x \in X_A$. Moreover, since $\sup F< \infty$, we have by direct calculations for the renewal shift space $\Sigma_A$ that $P_G(\beta F) \leq \log 2 + \beta \sup F < \infty$, for every $\beta >0$. 
    \item[(2)] Since $X_A$ is compact (see subsection \ref{subsec:Generalized_Renewal_shift}) we have by (1) and Theorem \ref{thm:existence_eigenmeasures_Denker_Yuri} that there exists a probability eigenmeasure $m_\beta$ of the Ruelle operator for every $\beta >0$, and the associated eigenvalue is $\lambda_\beta = e^{P(\beta F,x)} = e^{P_G(\beta F)}$. Therefore, the first claim of the statement of this theorem is proved, and by Lemma \ref{lemma:condition_for_eigenmeasures_renewal_potential_log} we have
    \begin{equation}\label{eq:Denker_eigenmeasure_probability_sum}
        1 = m_\beta(\{\xi^0\}) + \sum_{n\in \mathbb{N}} \frac{1}{\lambda_\beta^n}\frac{1}{(1+n)^\beta}.
    \end{equation}
    \item[(3)] Again by Example \ref{exa:renewal_potential_log}, there exists a critical value $\beta_c$, which is precisely the positive solution of $\zeta(\beta_c) = 2$ such that the potential is positive recurrent for $0< \beta < \beta_c$ and transient for $\beta > \beta_c$. We study each case as follows.
    \begin{itemize}
        \item[(3.a)] \textbf{Case $0 < \beta < \beta_c$.} By the Generalized RPF Theorem \ref{thm:RPF_Generalized}, since the potential is positive recurrent, there exists an eigenmeasure $\mu_\beta$ living in $\Sigma_A$ and it is necessarily associated to the eigenvalue $e^{P(\beta F)}$. Accordingly to Theorem 30 of \cite{BelBisEndo2020}, since the potential satisfies $\Var_1 F <\infty$, the eigenmeasure is finite, so consider that $\mu_\beta$ is normalized, that is, it is a probability. By Proposition \ref{prop:eigenmeasures_for_positive_recurrent_potentials_dimension_1}, $\mu_\beta$ is the unique probability eigenmeasure on $\Sigma_A$. Theorem \ref{thm:extension_eigenmeasures} $(a)$ gives that $\mu_\beta$ can be seen that the unique eigenmeasure probability on $X_A$ which lives on $\Sigma_A$. Again by Lemma \ref{lemma:condition_for_eigenmeasures_renewal_potential_log}, we obtain
        \begin{equation}\label{eq:Sarig_eigenmeasure_probability_sum}
            1 = \mu_\beta(\{\xi^0\}) + \sum_{n\in \mathbb{N}} \frac{1}{\lambda_\beta^n}\frac{1}{(1+n)^\beta} = \sum_{n\in \mathbb{N}} \frac{1}{\lambda_\beta^n}\frac{1}{(1+n)^\beta},
        \end{equation}
        and we conclude that
        \begin{equation*}
            \sum_{n\in \mathbb{N}} \frac{1}{\lambda_\beta^n}\frac{1}{(1+n)^\beta}.
        \end{equation*}
        In other words, the series above does not depend on the measure, and by \eqref{eq:Denker_eigenmeasure_probability_sum} we conclude that $m_\beta(\xi^0) = 0$, and then $m_\beta(Y_A) = 0$, so $m_\beta$ is an eigenmeasure probability that lives on $\Sigma_A$ and then $m_\beta = \mu_\beta$. Therefore, given $0 < \beta < \beta_c$, $\mu_\beta$ is the unique probability eigenmeasure associated to the eigenvalue $\lambda_\beta = e^{P_G(\beta F)}$, and it lives on $\Sigma_A$.
        \item[(3.b)] \textbf{Case $\beta = \beta_c$.} By Example \ref{exa:renewal_potential_log} we have that $P_G(\beta_c F) = 0$, that is, $\lambda_{\beta_c} = 1$ and then
        \begin{equation}\label{eq:Denker_eigenmeasure_probability_sum_critical_beta}
            1 = m_{\beta_c}(\{\xi^0\}) + \sum_{n\in \mathbb{N}} \frac{1}{(1+n)^{\beta_c}} = m_\beta(\{\xi^0\}) -1  + \sum_{n\in \mathbb{N}} \frac{1}{n^{\beta_c}} = m_\beta(\{\xi^0\}) -1  + \zeta(\beta_c).
        \end{equation}
        Since $\zeta(\beta_c) = 2$ we conclude that $m_\beta(\{\xi^0\}) = 0$. So $\mu_\beta$ lives on $\Sigma_A$. Moreover this measure is unique, since by Lemma \ref{lemma:eigenmeasure_C_alpha_formula} we obtain a unique value for $m_\beta(C_\alpha)$, $|\alpha| \geq 2$, for every generalized cylinder on positive words, namely
        \begin{align*}
            \mu_{\beta_c}([\alpha]) = \mu_{\beta_c}(C_\alpha) = e^{\beta_c \sum_{i=0}^{n-2}F(\alpha_i)}\frac{1}{(\alpha_{n-1}+1)^{\beta_c}},
        \end{align*}
        and this can be extended for the whole space uniquely.
        \item[(3.c)] \textbf{Case $\beta > \beta_c$.} Once more by Example \ref{exa:renewal_potential_log}, we have that $P_G(\beta F) = 0$ for every $\beta > \beta_c$, i.e., $\lambda_{\beta} = 1$ and then
        \begin{equation}\label{eq:Denker_eigenmeasure_probability_sum_beta_greater_than_critical_beta}
            1 = m_\beta(\{\xi^0\}) + \sum_{n\in \mathbb{N}} \frac{1}{(1+n)^{\beta}} =  m_\beta(\{\xi^0\}) -1  + \zeta(\beta).
        \end{equation}
        Since $1<\zeta(\beta) < 2$ for $\beta > \beta_c$, we have $m_\beta(\{\xi^0\}) = 2 - \zeta(\beta) \in (0,1)$, and then necessarily we must have $m_\beta(Y_A) > 0$. We claim that $m_\beta(\Sigma_A) = 0$. In fact, if $m_\beta(\Sigma_A) > 0$. Let $\nu_\beta$ be the restriction of $m_\beta$ to $\mathcal{B}_{\Sigma_A}$. We observe that $\nu_\beta$ is a non-zero eigenmeasure on $\Sigma_A$ with associated eigenvalue $1$, due to Theorem \ref{thm:restriction_eigenmeasures}, and it is finite because $\mu_\beta$ is a probability. So we may take $\nu_\beta$ normalized in order to be a probability, and by Lemma \ref{lemma:condition_for_eigenmeasures_renewal_potential_log} we have
        \begin{equation*}
            1 = \nu_\beta(\{\xi^0\}) + \sum_{n\in \mathbb{N}} \frac{1}{(1+n)^{\beta}} =  -1  + \zeta(\beta),
        \end{equation*}
        and we obtain $\zeta(\beta) = 2$, a contradiction. Therefore $m_\beta$ lives on $Y_A$ and it is unique because $\xi^0$ is the unique with empty stem.
    \end{itemize}
\end{itemize}
\end{proof}

\begin{remark} Since for $\beta > \beta_c$ the eigenmeasure probability $m_\beta$ in Theorem \ref{thm:complete_characterization_eigenmeasures_renewal} is a $e^{-\beta F}$-conformal measure, and it is explicitly determined by \ref{eq:c_omega_new}, that is,  
\begin{equation*}
    m_\beta(\omega) \equiv c_\omega^\beta = e^{\beta F_{|\omega|}(\omega)} c_e = e^{\beta F_{|\omega|}(\omega)} (2-\zeta(\beta)), \quad \omega \in \mathfrak{R}.
\end{equation*}
For example, we have
\begin{equation*}
    c_1^\beta = 2^{-\beta}(2-\zeta(\beta)), \quad c_{11}^\beta = 2^{-2\beta} (2-\zeta(\beta)), \quad c_{21}^\beta = 3^{-\beta} (2-\zeta(\beta)).
\end{equation*}
\end{remark}

The next two results is a straightforward corollary of Theorem \ref{thm:complete_characterization_eigenmeasures_renewal} above.

\begin{proposition}\label{prop:KMS_renewal_potential_log}  Consider the generalized renewal shift space $X_A$ and the potential $F:U \to \mathbb{R}$ given by
\begin{equation*}
    F(x) = \log(x_0) - \log(x_0 + 1).
\end{equation*}
Also, consider the C$^*$-dynamical system $(C^*(\mathcal{G}(X_A, \sigma)),\tau)$, where $\tau = \{\tau_t\}_{t \in \mathbb{R}}$ is the one-parameter group of automorphisms given by
\begin{equation*}
    \tau_t(f)(\gamma) = e^{-it \beta c_F(\gamma)}f(\gamma)
\end{equation*}
and (uniquely) extended to $C^*(\mathcal{G}(X_A, \sigma))$. Then, we have the following:
\begin{itemize}
    \item[$(a)$] for $\beta \geq \beta_c$ there exists a unique KMS$_\beta$ state on $C^*(\mathcal{G}(X_A,\sigma))$;
    \item[$(b)$] for $\beta < \beta_c$ there are not KMS$_\beta$ states on $C^*(\mathcal{G}(X_A,\sigma))$.
\end{itemize}
\end{proposition}

\begin{proof} Since $P_G(\beta F) = 0$ for $\beta \geq \beta_c$, we have by Theorem \ref{thm:complete_characterization_eigenmeasures_renewal} $(i)$, there exists a unique eigenmeasure probability $m_\beta$, and in this case the eigenvalue is $1$. Then, by Theorem \ref{thm:equivalences_conformal_measures_generalized_Markov_shift}, $m_\beta$ is a $e^{\beta c_F}$-quasi-invariant probability measure, and therefore by Remark \ref{remark:KMS_quasi_invariant} there exists a unique KMS$_\beta$-state, given by
\begin{equation*}
    \varphi_\mu(f)=\int_X f(x,0,x)d\mu(x),\quad f\in C_c(\mathcal{G}(X,\sigma)).
\end{equation*}
For $\beta < \beta_c$, Theorem \ref{thm:complete_characterization_eigenmeasures_renewal} $(ii)$, there exists a unique eigenmeasure probability living on $\Sigma_A$, and in this case the eigenvalue is strictly greater than $1$, so there are not $e^{\beta c_F}$-quasi-invariant probability measures, and therefore there are not KMS$_\beta$ states.
\end{proof}

\begin{corollary}\label{cor:crazy_series_renewal_log_potential} Consider the renewal shift space $\Sigma_A$ and the potential $F:\Sigma_A \to \mathbb{R}$ given by
\begin{equation*}
    F(x) = \log(x_0) - \log(x_0 + 1).
\end{equation*}
Then,
\begin{equation}\label{eq:crazy_series_renewal_log_potential}
    \sum_{n \in \mathbb{N}} \frac{1}{e^{n P_G(\beta F)}} \frac{1}{(n+1)^\beta} = \begin{cases}
                                                                                    1, \quad \text{if } \beta < \beta_c;\\
                                                                                    \zeta(\beta) - 1,  \quad \text{if } \beta \geq \beta_c,
                                                                                 \end{cases}
\end{equation}
where $\beta_c$ is the unique positive real number s.t. $\zeta(\beta_c) = 2$.
\end{corollary}

\begin{remark}
The existence of eigenmeasures living on $\Sigma_A$ for transient potentials like in the previous example already was obtained in the standard thermodynamic formalism for countable Markov shifts by Van Cyr \cite{Cyr2010}, see also \cite{Shwartz2019}, assuming that $\Sigma_A$ is locally compact. In our example, $\Sigma_A$ is not locally compact, and for $\beta$ s.t. $\beta F$ is transient we discovered that the eigenmeasures always exist due to compactness of $X_A$ and Denker and Yuri's \cite{DenYu2015} eigenmeasure existence Theorem \ref{thm:existence_eigenmeasures_Denker_Yuri}. Moreover, the phase transition in this setting is not in the sense of existence or absence of eigenmeasures, but meaning that the eigenmeasure changes the space which it gives mass, from $\Sigma_A$ to $Y_A$. Figure \ref{fig:complete_characterization_eigenmeasures_renewal} compares the phase transitions between standard and generalized formalisms.
\end{remark}

\begin{figure}[H]
 \includegraphics[scale=.4]{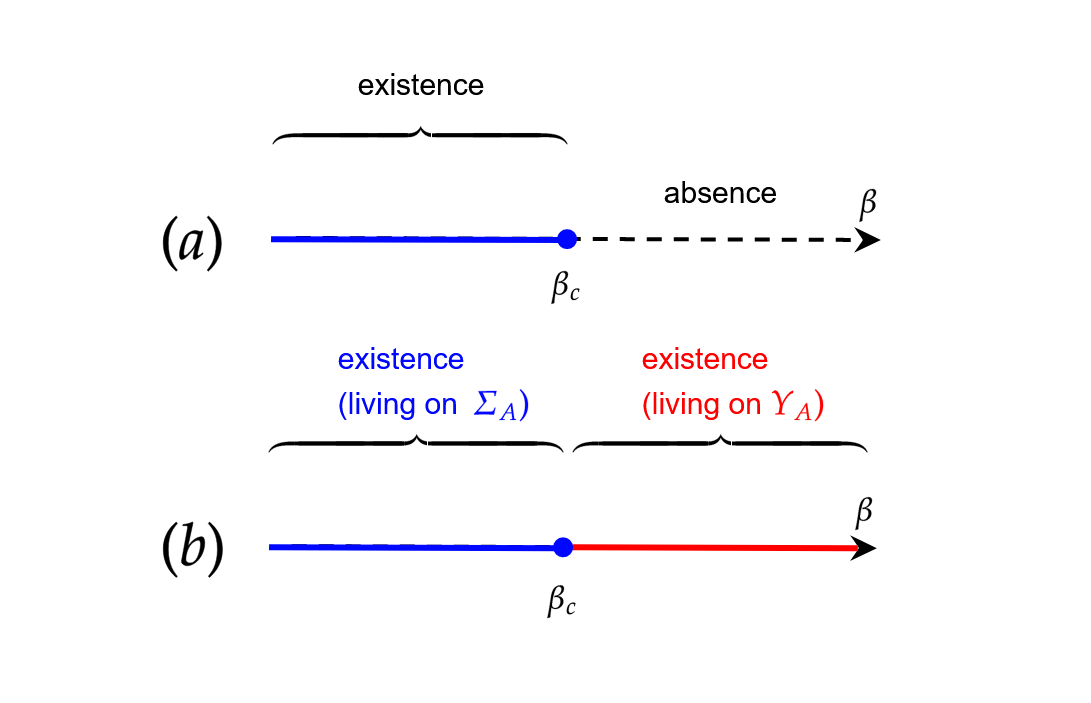}
 
 \caption{The phase transitions on different thermodynamic formalisms for probability eigenmeasures in the renewal shift space with potential as in Theorem \ref{thm:complete_characterization_eigenmeasures_renewal}. In this figure, $\beta_c$ is the unique real positive solution for the equation $\zeta(\beta_c) = 2$. The picture $(a)$ represents the standard formalism on $\Sigma_A$, where we have a unique eigenmeasure for each $\beta \leq \beta_c$ (blue interval). In this case it is not possible to detect any eingenmeasure probability for $\beta> \beta_c$. The picture $(b)$ represents the generalized formalism on $X_A$ and, unlike in $(a)$, we can see more than the standard eigenmeasure probabilities; we can detect a unique eigenmeasure for each $\beta > \log 2$ (red interval), and this measure vanishes on $\Sigma_A$.  \label{fig:complete_characterization_eigenmeasures_renewal}}
\end{figure}

\begin{theorem}\label{thm:weak_convergence_eigenmeasure_renewal} For each $\beta$ be the unique eigenmeasure probability as in Theorem \ref{thm:complete_characterization_eigenmeasures_renewal}. Then, the net $\{m_\beta\}_{\beta > \beta_c}$ converges on the weak$^*$ topology to $m_{\beta_c}$ as $\beta$ goes to $\beta_c$.
\end{theorem}

\begin{proof} Observe that
\begin{equation*}
    \mu_\beta(\{e\}) = c_e^\beta = 2 - \zeta(\beta) \to 0
\end{equation*}
as $\beta \to \beta_c$ by above. Then,
\begin{equation*}
    \lim_{\beta \to \beta^c}\mu_\beta(F) \to 0,
\end{equation*}
by above for every $F \subseteq Y_A$ finite set. For $\beta \geq \beta_c$, the associated eigenvalue is $1$, and by Lemmas \ref{lemma:condition_for_eigenmeasures_renewal_potential_log} and \ref{lemma:eigenmeasure_C_alpha_formula}, we have for every $\alpha$ positive admissible word that $\mu_\beta(C_\alpha)$ is continous on $\beta$ and then
\begin{equation*}
    \mu_\beta(C_\alpha) \to \mu_{\beta_c}(C_\alpha) = \mu_{\beta_c}([\alpha])
\end{equation*}
as $\beta$ decreases to $\beta_c$. Then, for a typical basic element in the form
\begin{equation*}
    F \sqcup \bigsqcup_{n \in \mathbb{N}}C_{w(n)},
\end{equation*}
where $F \subseteq Y_A$ is finite and $w(n)$ is a positive admissible word, one gets
\begin{equation*}
    \mu_\beta\left(F \sqcup \bigsqcup_{n \in \mathbb{N}}C_{w(n)}\right) = \mu_\beta(F) + \sum_{n\in \mathbb{N}} \mu_\beta(C_{w(n)}) \to \sum_{n\in \mathbb{N}}\mu_{\beta_c}([w(n)]) = \mu_{\beta_c}\left(F \sqcup \bigsqcup_{n \in \mathbb{N}}C_{w(n)}\right). 
\end{equation*}
We conclude the proof by Theorem \ref{thm:convergence_measures_Bogachev}.
\end{proof}

\chapter{Conclusions and Further Research}
In this thesis, we introduced the thermodynamic formalism for generalized countable Markov shifts. We were able to extend many notions of conformal measures and the concept of Ruelle's operator. Moreover, we discovered new conformal measures that are not detected by the standard theory. A new kind of phase transition, namely the length-type phase transition, which is the change of set which the conformal measure lives, from $Y_A$ to $\Sigma_A$, as the temperature increases. Furthermore, we gave a complete description of the eigenmeasures on $X_A$ for a couple of potentials. In most cases, we also connected the measures on $Y_A$ with the standard ones in $\Sigma_A$ by weak$^*$ limits on $\beta$.

Among the possible further research directions which can be investigated from this thesis, we highlight the following:
\begin{itemize}
    \item[$1.$] Try to apply similar formalism done for $X_A$ on more realistic physical models. The results obtained in this thesis gives a good opportunity to interact more with the physicists, and a possible path is to verify if $X_A$ may substitute the usual configuration space in some model in Statistical Mechanics, pointing out new conformal measures.
    \item[$2.$] To extend the study of conformal measures on the generalized setting for shift spaces, which are beyond the class of the renewal type shifts and, as well as it was done here, to compare it with the standard setting. In particular, to investigate the length-type phase transition phenomenon on a more general class of shift spaces beyond the renewal class.
    \item[$3.$] To find a suitable definition for the pressure for models on $X_A$, since the Gurevich pressure uses periodic points on its partition function, it does never deal with finite words. In this thesis, we used Denker-Yuri's paper \cite{DenYu2015}, which works for the renewal type shifts; however, we could have, for instance, non-compact $X_A$ and more general potentials. One possible direction on this topic could be, for example, to adapt the definition of pressure developed by Thompson in \cite{Thompson2011}.
    \item[$4.$] To investigate if the length-type phase transition phenomenon also occurs on a more general class of shift spaces beyond the renewal type class and to prove more general results on the occurrence of this phenomenon.
\end{itemize}

% cabe\c{c}alho para os ap\^endices
\renewcommand{\chaptermark}[1]{\markboth{\MakeUppercase{\appendixname\ \thechapter}} {\MakeUppercase{#1}} }
\fancyhead[RE,LO]{}
\appendix

\chapter{Proof of the Theorem \ref{thm:huge_generators_intersections}}
\label{ape:proof_giant_theorem_generalized_cylinders}

\section{Auxiliary Results}

\begin{lemma}\label{lemma:CF_intersections_general} For any two positive admissible words $\alpha$ and $\gamma$ it is true that
\begin{itemize}
\item[$(i)$] $$\displaystyle
        C_\alpha \cap F_\gamma =  F_{\gamma}^{\alpha};$$
     \item[$(ii)$] 
   \begin{equation*}
        F_\alpha \cap F_\gamma = \begin{cases}
                                    F_\alpha, \quad \alpha \in \llbracket \gamma \rrbracket,\\
                                    F_\gamma, \quad \gamma \in \llbracket \alpha \rrbracket,\\
                                    F_{\alpha'}^*, \quad \text{otherwise},
                                 \end{cases}
    \end{equation*}
    where $\alpha'$ is the longest word in $\llbracket \alpha \rrbracket \cap \llbracket \gamma \rrbracket$;
    \item[$(iii)$] for fixed $n \in \mathbb{N}$ with $n \in \{0,1,...,|\gamma|-1\}$, let be the natural number $j \neq \gamma_n$. We have
    \begin{equation*}
        C_\alpha \cap C_{\delta^{n}(\gamma)j} = \begin{cases}
                                            C_{\delta^{n}(\gamma)j}, \quad            \alpha \in \llbracket             \gamma \rrbracket                 \text{ and } n \geq               |\alpha|,\\
                                            C_\alpha, \quad                   \alpha \notin                     \llbracket \gamma                 \rrbracket,                        \gamma \notin                     \llbracket \alpha                 \rrbracket, n =                  |\alpha'| \text{ and }           j=\alpha_{|\alpha'|},           \\
                                            \emptyset, \quad                  \text{otherwise};
                                         \end{cases}
    \end{equation*}
    where $\alpha'$ is the longest stem in $\llbracket \alpha \rrbracket \cap \llbracket \gamma \rrbracket$.
    \item[$(iv)$] for $j \in \mathbb{N}$,
    \begin{equation*} 
        C_\alpha \cap G(\gamma,j) = \begin{cases}
                                        G(\gamma,j), \quad \alpha \in \llbracket \gamma \rrbracket,\\
                                        \emptyset, \quad \text{otherwise};
                                    \end{cases}
    \end{equation*}
    \item[$(v)$] for $j \in \mathbb{N}$,
    \begin{equation*} 
        C_\alpha \cap K(\gamma,j) = \begin{cases}
                                        K(\gamma,j), \quad \alpha \in \llbracket \gamma \rrbracket,\\
                                        \emptyset, \quad \text{otherwise};
                                    \end{cases}
    \end{equation*}
    \item[$(vi)$] for $H,I \subseteq \mathbb{N}$,
    \begin{equation*} 
        G(\alpha,H) \cap G(\gamma,I) = \begin{cases}
                                        G(\alpha,H\cup I), \quad \alpha = \gamma,\\
                                        \emptyset, \quad \text{otherwise};
                                    \end{cases}
    \end{equation*}
    \item[$(vii)$] for $H,I \subseteq \mathbb{N}$,
    \begin{equation*} 
        K(\alpha,H) \cap K(\gamma,I) = \begin{cases}
                                        K(\alpha,H\cup I), \quad \alpha = \gamma,\\
                                        \emptyset, \quad \text{otherwise};
                                    \end{cases}
    \end{equation*}
    \item[$(viii)$] 
    \begin{equation*}
        F_\alpha \cap G(\gamma,j) = \begin{cases}
                                        G(\gamma,j), \quad \gamma \in \llbracket \alpha \rrbracket \setminus \{\alpha\},\\
                                        \emptyset, \quad \text{otherwise};
                                    \end{cases}
    \end{equation*}
    \item[$(ix)$] 
    \begin{equation*}
        F_\alpha \cap K(\gamma,j) = \begin{cases}
                                        K(\gamma,j), \quad \gamma \in \llbracket \alpha \rrbracket \setminus \{\alpha\},\\
                                        \emptyset, \quad \text{otherwise};
                                    \end{cases}
    \end{equation*}
\end{itemize}
\end{lemma}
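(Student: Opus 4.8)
The plan is to prove the nine identities by unwinding, in each case, the definitions of the generalised cylinder $C_\alpha$ and of the auxiliary sets $F_\gamma$, $F_\gamma^\alpha$, $F_{\alpha'}^*$, $G(\gamma,H)$ and $K(\gamma,H)$ from Chapter~\ref{ch:CK_EL}, and then doing a direct element chase in $X_A=\Sigma_A\cup Y_A$. The uniform mechanism is that membership of a point $\xi\in X_A$ in any of these generators is a condition expressed purely in terms of which admissible words are stems of $\xi$ — equivalently, the position of $\xi$ relative to the prefix order $\llbracket\,\cdot\,\rrbracket$ — together with whether $\xi$ is an infinite point of $\Sigma_A$ or a finite-word point of $Y_A$ with its attached data. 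So for each item I intersect the two membership conditions, split according to the relative position of the two base words, and read off which generator (or $\emptyset$) the resulting condition cuts out.

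I would first record the elementary cylinder-intersection principle: $C_\mu\cap C_\nu$ equals $C_\nu$ if $\mu\in\llbracket\nu\rrbracket$, equals $C_\mu$ if $\nu\in\llbracket\mu\rrbracket$, and is empty otherwise — immediate, since any $\xi$ in both has $\mu$ and $\nu$ among its stems, forcing comparability — together with the companion localisation facts: $G(\gamma,H)$ and $K(\gamma,H)$ live on the level-$\gamma$ fibre (whether they are empty there being governed by $H$), and $F_\alpha$ meets that fibre only when $\gamma$ is a \emph{proper} stem of $\alpha$. With these, items (iv)--(vii) drop out: in (vi)/(vii) distinct base words give disjoint fibres while equal base words make the defining conditions conjoin into the condition for $H\cup I$; in (iv)/(v), $C_\alpha$ contains the whole fibre at $\gamma$ exactly when $\alpha\in\llbracket\gamma\rrbracket$. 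Items (viii)/(ix) are the same argument, the only wrinkle being that $F_\alpha$ excludes the fibre sitting exactly at $\alpha$, which is why $\alpha$ must be deleted from $\llbracket\alpha\rrbracket$ in the non-trivial case. Item (i) essentially spells out the definition of $F_\gamma^\alpha$ as the localisation of $F_\gamma$ inside $C_\alpha$, so there the only thing to verify is the admissibility/compatibility condition on $\alpha,\gamma$ under which this set is non-empty.

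The real work — and the step I expect to be the main obstacle — is (iii), with (ii) a close second, since these are the cases in which both sets intersected are cylinder-like and one must split on the three mutually exclusive possibilities $\alpha\in\llbracket\gamma\rrbracket$, $\gamma\in\llbracket\alpha\rrbracket$, and $\alpha,\gamma$ incomparable. In (iii) the delicate point is that $\delta^n(\gamma)j$ with $j\neq\gamma_n$ is precisely the word that shares a length-$n$ stem with $\gamma$ and then branches away from it, so $C_\alpha$ and $C_{\delta^n(\gamma)j}$ are disjoint unless one of $\alpha$, $\delta^n(\gamma)j$ is a stem of the other; translating ``they are comparable'' into the displayed conditions forces, in the non-empty off-diagonal branch, both $n=|\alpha'|$ and $j=\alpha_{|\alpha'|}$ with $\alpha'$ the longest common stem of $\alpha$ and $\gamma$, and one must also check admissibility of $\delta^n(\gamma)j$ so the cylinder is genuinely non-empty while keeping the boundary sub-cases $n\ge|\alpha|$ straight. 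In (ii), the incomparable case yields $F_{\alpha'}^*$ rather than $F_{\alpha'}$: a point lies in $F_\alpha\cap F_\gamma$ exactly when its stem data is compatible with the common stem $\alpha'$ but branches off from both $\alpha$ and $\gamma$ at level $|\alpha'|$, and recognising this as $F_{\alpha'}^*$ is where the fibre sitting at $\alpha'$ has to be added back in. Everywhere the only real risk is bookkeeping with $\llbracket\,\cdot\,\rrbracket$ and with the two kinds of points of $X_A$; once the definitions from Chapter~\ref{ch:CK_EL} are lined up the verifications are routine, and I would present the nine items together so the conventions stay consistent across them.
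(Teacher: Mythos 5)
Your plan is correct and is essentially the paper's own argument: the paper likewise treats $(i)$ and $(iv)$--$(ix)$ as direct consequences of the definitions (your fibre-localisation remarks are exactly the implicit content of its ``straightforward'' cases), and proves $(ii)$ and $(iii)$ by the same reduction --- rewriting $F_\alpha\cap F_\gamma$ as the set of $\xi\in Y_A$ with $\kappa(\xi)\in(\llbracket\alpha\rrbracket\cap\llbracket\gamma\rrbracket)\setminus\{\alpha,\gamma\}$, invoking the basic cylinder-intersection identity, and splitting on the three comparability cases for $\alpha$ and $\gamma$. You have also correctly isolated the only two delicate points, namely that the incomparable case of $(ii)$ retains the fibre at $\alpha'$ (hence $F_{\alpha'}^*$ rather than $F_{\alpha'}$) and that in $(iii)$ the branch word $\delta^n(\gamma)j$ is comparable with $\alpha$ only in the two displayed sub-cases.
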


\begin{proof} For $(i)$, the proof is straightforward. Now, for the proof of $(ii)$, we notice from the definition of $F_\alpha$ that 
\begin{equation*}
    F_\alpha\cap F_\gamma = \left\{\xi \in Y_A: \kappa(\xi) \in (\llbracket\alpha\rrbracket \cap \llbracket\gamma\rrbracket)\setminus\{\alpha,\gamma\}\right\}.
\end{equation*}
If $\alpha \in \llbracket\gamma\rrbracket$ then $\llbracket\alpha\rrbracket \subseteq \llbracket \gamma \rrbracket$ and therefore $(\llbracket\alpha\rrbracket \cap \llbracket\gamma\rrbracket)\setminus\{\alpha,\gamma\}= \llbracket\alpha\rrbracket\setminus\{\alpha\}$. In this case, we obtain $F_\alpha\cap F_\gamma = F_\alpha$. A similar proof holds for the case that $\gamma \in \llbracket \alpha \rrbracket$. The remaining possibility occurs when $\alpha \notin \llbracket\gamma\rrbracket$ is simultaneous to $\gamma \notin \llbracket \alpha \rrbracket$, and here we consider $\alpha'$ as in the statement. It follows immediately that $(\llbracket\alpha\rrbracket \cap \llbracket\gamma\rrbracket)\setminus\{\alpha,\gamma\}= \llbracket \alpha'\rrbracket$, since we only must have $|\alpha'|<\min\{|\alpha|,|\gamma|\}$. We conclude that $\xi  \in F_\alpha\cap F_\gamma$ if and only if $\kappa(\xi) \in \llbracket\alpha'\rrbracket$ and therefore $F_\alpha\cap F_\gamma = F_{\alpha'}^*$.

To prove $(iii)$ we use \eqref{eq:C_cap_C} and obtain
\begin{equation*}
    C_\alpha \cap C_{\delta^{n}(\gamma)j} = \begin{cases}
                                C_\alpha, \quad \text{if } \delta^{n}(\gamma)j \in \llbracket \alpha \rrbracket,\\
                                C_{\delta^{n}(\gamma)j}, \quad \text{if } \alpha \in \llbracket \delta^{n}(\gamma)j \rrbracket,\\
                                \emptyset, \quad \text{otherwise}.
                            \end{cases} 
\end{equation*}
When $\alpha \in \llbracket \gamma \rrbracket$ it never occurs that $\delta^{n}(\gamma) j \in \llbracket \alpha \rrbracket$, for any $n$ and $j$. Indeed, if $\delta^{n}(\gamma) j \in \llbracket \alpha \rrbracket$ then $\delta^{n}(\gamma) j = \alpha_0 \cdots \alpha_n$ and $n \leq |\alpha|-1$, and consequently $j = \alpha_n = \gamma_n$, a contradiction. If $\alpha \in \llbracket \delta^{n}(\gamma) j \rrbracket$ and $n \geq |\alpha|$, then $C_\alpha \supseteq C_{\delta^{n}(\gamma)j}$ and $C_\alpha \cap C_{\delta^{n}(\gamma)j} = C_{\delta^{n}(\gamma)j}$. On other hand, if $n < |\alpha|$ then it is clear that $C_\alpha \cap C_{\delta^{n}(\gamma)j} = \emptyset$ since $j \neq \alpha_n$. Now, let us consider $\alpha \notin \llbracket \gamma \rrbracket$, which implies that $|\alpha|>0$ because, otherwise, we would get $\alpha = e$, which is subword of every positive admissible word. If $\gamma \in \llbracket \alpha \rrbracket$, observe that $\delta^{n}(\gamma)j \notin \llbracket \alpha \rrbracket$ because $j \neq \gamma_n = \alpha_n$, and hence $C_{\delta^{n}(\gamma)j} \subseteq C_\alpha^c$, that is, $C_{\delta^{n}(\gamma)j} \cap C_\alpha = \emptyset$. For the case that $\gamma \notin \llbracket \alpha \rrbracket$, let $\alpha'$ be the longest word of $\llbracket \alpha \rrbracket \cap \llbracket \gamma \rrbracket$. If $n = |\alpha'|$ and $j=\alpha_{|\alpha'|}$, then it follows directly that $C_\alpha \subseteq C_{\delta^{n}(\gamma)j}$ and hence $C_\alpha \cap C_{\delta^{n}(\gamma)j}= C_\alpha$. If $n = |\alpha'|$ and $j\neq\alpha_{|\alpha'|}$ then it is clear that $C_\alpha \cap C_{\delta^{n}(\gamma)j}= \emptyset$. Now, if $n < |\alpha'|$, then also occurs $C_\alpha \cap C_{\delta^{n}(\gamma)j}= \emptyset$ because $j \neq \gamma_n = \alpha'_n$. The same happens for $n > |\alpha'|$ since $\alpha \notin \llbracket \gamma' \rrbracket$ for any $\gamma' \in \llbracket\gamma\rrbracket$ with $|\gamma'|>|\alpha|$.

The proofs of $(iv)-(ix)$ are straightforward. \qed
\end{proof}

\begin{lemma}\label{lemma:C_delta_cap_C_gamma_p} Let $\alpha$. $\gamma$ be finite admissible words such that $\gamma \in \llbracket \alpha \rrbracket \setminus \{\alpha\}$ and consider $k,p \in \mathbb{N}$, $m \in \{0,1,..., |\alpha|-1\}$, such that $k \neq \alpha_m$. Then,

\begin{equation}
    C_{\delta^m(\alpha)k} \cap C_{\gamma p} = \begin{cases}
               C_{\delta^m(\alpha)k}, \quad \text{if } |\alpha| > |\gamma| + 1, \text{ } p = \alpha_{|\gamma|}, \text{ and } m \geq |\gamma|+1;\\
               C_{\gamma p}, \quad  \text{if } m= |\gamma|, k=p;\\
               \emptyset, \quad \text{otherwise}.
    \end{cases}
\end{equation}
\end{lemma}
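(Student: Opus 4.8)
The plan is to reduce everything to the elementary intersection rule \eqref{eq:C_cap_C}, applied to the two words $\mu:=\delta^m(\alpha)k$ and $\nu:=\gamma p$, and then to decide which of its three alternatives occurs by a short case analysis driven by the hypotheses $\gamma\in\llbracket\alpha\rrbracket\setminus\{\alpha\}$ and $k\neq\alpha_m$. Set $g:=|\gamma|$; then $g<|\alpha|$, the word $\mu$ has length $m+1$ and agrees with $\alpha$ on the coordinates $0,\dots,m-1$ while carrying $k\neq\alpha_m$ in coordinate $m$, and the word $\nu$ has length $g+1$, agrees with $\alpha$ on the coordinates $0,\dots,g-1$ (because $\gamma$ is a prefix of $\alpha$), and carries $p$ in coordinate $g$. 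By \eqref{eq:C_cap_C}, $C_\mu\cap C_\nu$ equals $C_\mu$ if $\nu\in\llbracket\mu\rrbracket$, equals $C_\nu$ if $\mu\in\llbracket\nu\rrbracket$, and is $\emptyset$ otherwise; so it suffices to characterize exactly when each prefix relation holds. No admissibility hypothesis is needed for this, since \eqref{eq:C_cap_C} is a statement about prefixes and stays valid even when some of the cylinders are empty.

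First I would determine when $\nu=\gamma p\in\llbracket\mu\rrbracket$. As $|\nu|=g+1$ and $|\mu|=m+1$, this forces $g\le m$. If $g<m$, the first $g+1$ coordinates of $\mu$ are $\alpha_0,\dots,\alpha_g$, so $\nu\in\llbracket\mu\rrbracket$ holds precisely when $p=\alpha_g=\alpha_{|\gamma|}$; moreover $g<m\le|\alpha|-1$ already yields $m\ge|\gamma|+1$ and $|\alpha|>|\gamma|+1$, so this is exactly the first branch of the statement, in which $C_\mu\subseteq C_\nu$ and thus $C_\mu\cap C_\nu=C_{\delta^m(\alpha)k}$. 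If $g=m$, then $\mu=\gamma k$ and $\nu=\gamma p$ have equal length, so $\nu\in\llbracket\mu\rrbracket$ (equivalently $\mu\in\llbracket\nu\rrbracket$) iff $\mu=\nu$ iff $k=p$, which is the second branch, with intersection $C_{\gamma p}$.

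Next I would check that the reverse relation $\mu\in\llbracket\nu\rrbracket$ adds nothing: it forces $m\le g$, and if $m<g$ then the first $m+1$ coordinates of $\nu$ are $\alpha_0,\dots,\alpha_m$, so $\mu\in\llbracket\nu\rrbracket$ would require $k=\alpha_m$, contradicting the hypothesis; the case $m=g$ merely reproduces $k=p$ again. Assembling the cases, the intersection is nonempty only in the two configurations listed and is $\emptyset$ in every other one — in particular when $g<m$ but $p\neq\alpha_{|\gamma|}$, when $g=m$ but $k\neq p$, and when $g>m$ — which is exactly the claimed trichotomy. I do not anticipate a real obstacle here: the only delicate points are the length bookkeeping (making sure the listed side-condition $|\alpha|>|\gamma|+1$ is correctly recognized as being forced by $|\gamma|+1\le m\le|\alpha|-1$) and invoking $k\neq\alpha_m$ at precisely the spot that eliminates the spurious sub-case $m<g$.
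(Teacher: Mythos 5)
Your argument is correct and is essentially the paper's: both reduce the claim to the prefix trichotomy \eqref{eq:C_cap_C} for the two words $\delta^m(\alpha)k$ and $\gamma p$ and then sort out which alternative occurs using $\gamma\in\llbracket\alpha\rrbracket\setminus\{\alpha\}$ and $k\neq\alpha_m$; the paper merely routes through the already-established Lemma~\ref{lemma:CF_intersections_general}~$(iii)$ instead of unfolding \eqref{eq:C_cap_C} directly. Your observation that the side condition $|\alpha|>|\gamma|+1$ is forced by $|\gamma|+1\le m\le|\alpha|-1$ matches the paper's closing remark about the case $|\alpha|=|\gamma|+1$.
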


\begin{proof} Lemma \ref{lemma:CF_intersections_general} $(iii)$ gives 
\begin{equation}\label{eq:C_delta_cap_C_gamma_p}
    C_{\delta^m(\alpha)k} \cap C_{\gamma p} = \begin{cases}
               C_{\delta^m(\alpha)k}, \quad \gamma p \in \llbracket \alpha \rrbracket \text{ and } m \geq |\gamma|+1,\\
               C_{\gamma p}, \quad \gamma p \notin \llbracket \alpha \rrbracket, \alpha \notin \llbracket \gamma p \rrbracket, m= |\gamma|, k=(\gamma p)_{|\gamma|},\\
               \emptyset, \quad \text{otherwise};
    \end{cases}
\end{equation}
Since $\gamma \in \llbracket \alpha \rrbracket \setminus \{\alpha\}$, we have that $\gamma p \in \llbracket \alpha \rrbracket$ if and only if $p = \alpha_{|\gamma|}$. Note that $\gamma p \notin \llbracket \alpha \rrbracket$ implies $\alpha \notin \llbracket \gamma p \rrbracket$. Also, $(\gamma p)_|\gamma| = p$, and when $k=p$ we necessarily have $p \neq \alpha_{|\gamma|}$. The equality \eqref{eq:C_delta_cap_C_gamma_p} becomes
\begin{equation}\label{eq:C_delta_cap_C_gamma_p_gamma_subword}
    C_{\delta^m(\alpha)k} \cap C_{\gamma p} = \begin{cases}
               C_{\delta^m(\alpha)k}, \quad p = \alpha_{|\gamma|} \text{ and } m \geq |\gamma|+1,\\
               C_{\gamma p}, \quad m= |\gamma|,\text{ } k=p,\\
               \emptyset, \quad \text{otherwise}.
    \end{cases}
\end{equation}
For $|\alpha| = |\gamma| + 1$, if the conditions $p = \alpha_{|\gamma|} \text{ and } m \geq |\gamma|+1$ are satisfied, we have necessarily that  $\gamma p = \alpha$ and then $C_{\delta^m(\alpha)k} \cap C_{\gamma p} = C_{\delta^m(\alpha)k} \cap C_\alpha = \emptyset$, for any $m$ and $k$ considered, because $m \leq |\alpha|-1$ and $k \neq \alpha_m$.\qed
\end{proof}

\begin{lemma} \label{lemma:C_delta_cap_C_delta} Let $\alpha$. $\gamma$ be finite admissible words such that $\gamma \notin \llbracket \alpha \rrbracket$ and $\alpha \notin \llbracket \gamma \rrbracket$, and $\alpha'$ the longest word in $\llbracket \alpha \rrbracket \cap \llbracket \gamma \rrbracket$. Also, consider $k,p \in \mathbb{N}$, $m \in \{0,1,..., |\alpha|-1\}$ and $n \in \{0,1,..., |\gamma|-1\}$ , such that $k \neq \alpha_m$ and $p \neq \gamma_n$. Then,

\begin{equation}
    C_{\delta^m(\alpha)k} \cap C_{\delta^n(\gamma) p} = \begin{cases}
               C_{\delta^m(\alpha)k} = C_{\delta^n(\gamma)p}, \quad \text{if } n=m \leq |\alpha'| \text{ and } p=k,\\
               C_{\delta^m(\alpha)k}, \quad \text{if } n = |\alpha'| < m \text{ and } p=\alpha_n;\\
               C_{\delta^n(\gamma)p}, \quad \text{if } m = |\alpha'| < n \text{ and } k=\gamma_m;\\
               \emptyset, \quad \text{otherwise}.
    \end{cases}
\end{equation}
\end{lemma}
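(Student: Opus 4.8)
The plan is to reduce Lemma~\ref{lemma:C_delta_cap_C_delta} to the already-established Lemma~\ref{lemma:CF_intersections_general}~$(iii)$ by treating the cylinder $C_{\delta^n(\gamma)p}$ as a single admissible word $\tilde\gamma := \delta^n(\gamma)p$ and intersecting it with $C_{\delta^m(\alpha)k}$. First I would invoke Lemma~\ref{lemma:CF_intersections_general}~$(iii)$ with $\alpha$ replaced by $\delta^m(\alpha)k$ and $\gamma$ replaced by $\delta^n(\gamma)p$: this already gives a three-case split (one cylinder contained in the other, in either direction, or disjoint), governed by whether $\delta^n(\gamma)p \in \llbracket \delta^m(\alpha)k \rrbracket$, or $\delta^m(\alpha)k \in \llbracket \delta^n(\gamma)p \rrbracket$, or neither. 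Then the work is purely combinatorial: I would translate each of those subword conditions into conditions on $m,n,k,p$ and the longest common stem $\alpha'$, using that $k \neq \alpha_m$, $p \neq \gamma_n$, and that $\alpha,\gamma$ are mutually non-comparable with longest common stem $\alpha'$ of length $|\alpha'| < \min\{|\alpha|,|\gamma|\}$.

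The key observation driving the case analysis is that the words $\delta^m(\alpha)k$ and $\delta^n(\gamma)p$ agree with $\alpha$ and $\gamma$ respectively on their first $\min\{m,|\alpha|\}$, resp. $\min\{n,|\gamma|\}$, letters, and then ``branch off'' via the letters $k \neq \alpha_m$ and $p \neq \gamma_n$. So I would argue: if $m < n$, then $\delta^m(\alpha)k$ is shorter, and it can be a subword of $\delta^n(\gamma)p$ only if $k = \gamma_m$ and the prefixes match up to position $m$, which (since $\alpha$ and $\gamma$ share exactly the stem $\alpha'$) forces $m = |\alpha'|$ and $k = \gamma_m$; symmetrically for $n < m$. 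If $m = n$, the two truncated words have equal length, so containment in either direction becomes equality, which happens exactly when the shared prefix goes through position $m$ (i.e. $m \le |\alpha'|$) and $k = p$. In every remaining configuration the two words disagree at some position $\le \min\{|\delta^m(\alpha)k|,|\delta^n(\gamma)p|\}$, so the cylinders are disjoint. This reproduces precisely the four cases in the statement.

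The main obstacle I anticipate is bookkeeping rather than conceptual: one must carefully rule out the ``boundary'' situations where $m$ or $n$ equals $|\alpha|$ or $|\gamma|$ (so that $\delta^m(\alpha)k$ might coincide with $\alpha$ itself or with an extension), and confirm that the hypotheses $m \le |\alpha|-1$, $n \le |\gamma|-1$, together with $\alpha \notin \llbracket\gamma\rrbracket$ and $\gamma \notin \llbracket\alpha\rrbracket$, forbid the degenerate overlaps — exactly the kind of corner case that had to be excised by hand at the end of the proof of Lemma~\ref{lemma:C_delta_cap_C_gamma_p}. I would also double-check the edge case $|\alpha'| = 0$ (i.e. $\alpha' = e$), where the first case collapses to $m = n = 0$, $k = p$, which is consistent since $C_k = C_k$. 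Once these are dispatched, assembling the four cases is a direct transcription, so I would simply write ``The proof follows from Lemma~\ref{lemma:CF_intersections_general}~$(iii)$ by a case analysis on the relative sizes of $m$ and $n$, analogous to the proof of Lemma~\ref{lemma:C_delta_cap_C_gamma_p}'' and then spell out the three regimes $m<n$, $m>n$, $m=n$ as above. \qed
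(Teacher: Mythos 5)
Your proposal is correct and follows essentially the same route as the paper: the paper also starts from the basic two-cylinder dichotomy (the identity \eqref{eq:C_cap_C}, of which Lemma \ref{lemma:CF_intersections_general}~$(iii)$ is itself a consequence) and then translates the two subword conditions $\delta^n(\gamma)p \in \llbracket \delta^m(\alpha)k\rrbracket$ and $\delta^m(\alpha)k \in \llbracket \delta^n(\gamma)p\rrbracket$ into the same case analysis on $m$, $n$ versus $|\alpha'|$, using $k\neq\alpha_m$, $p\neq\gamma_n$ to kill the cases $n<m\leq|\alpha'|$ and $n<|\alpha'|<m$. The only cosmetic difference is that you cite Lemma \ref{lemma:CF_intersections_general}~$(iii)$ where the paper cites \eqref{eq:C_cap_C} directly.
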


\begin{proof}
The identity \eqref{eq:C_cap_C} gives
\begin{equation*}
    C_{\delta^m(\alpha)k} \cap C_{\delta^n(\gamma) p} = \begin{cases}
        C_{\delta^m(\alpha)k}, \quad \text{if } \delta^n(\gamma) p \in \llbracket \delta^m(\alpha)k \rrbracket,\\
         C_{\delta^n(\gamma) p}, \quad \text{if } \delta^m(\alpha)k \in \llbracket \delta^n(\gamma) p \rrbracket,\\
        \emptyset, \quad \text{otherwise}.
    \end{cases}  
\end{equation*}
The case $\delta^n(\gamma) p \in \llbracket \delta^m(\alpha)k \rrbracket$, which implies that $n \leq m$, occurs if and only if $\gamma_0 \cdots \gamma_{n-1}p = \alpha_0 \cdots \alpha_{n-1}l$, where
\begin{equation*}
    l = \begin{cases}
        k, \quad n=m,\\
        \alpha_n, \quad n<m.
    \end{cases}
\end{equation*}
However, by the definition of $\alpha'$, we have $\gamma_i = \alpha_i$ for all $i = 0,..., n-1$ if and only if $n-1 \leq |\alpha'|-1$, that is, $n \leq |\alpha'|$. If $n = m$, then it is straightforward that $\delta^n(\gamma) p \in \llbracket \delta^m(\alpha)k \rrbracket$ if and only if $p=k$. Now, suppose that $n<m\leq |\alpha'|$, then $\delta^n(\gamma) p \in \llbracket \delta^m(\alpha)k \rrbracket$ if and only if $p = \alpha_n$, which never happens since $p \neq \gamma_n = \alpha_n$ for this case. If $n < |\alpha'|<m$ we have a similar problem as in the previous situation and hence this never happens. Finally, if $n=|\alpha'|<m$ we have that $\delta^n(\gamma) p \in \llbracket \delta^m(\alpha)k \rrbracket$ if and only if $p = \alpha_n \neq \gamma_n$. We conclude that $\delta^n(\gamma) p \in \llbracket \delta^m(\alpha)k \rrbracket$ if and only if $n = m$ and $p=k$, or $n=|\alpha'|<m$ and $p = \alpha_n$. It is analogous to prove that $\delta^m(\alpha) k \in \llbracket \delta^n(\gamma) p \rrbracket$ if and only if $n = m$ and $p=k$, or $m=|\alpha'|<n$ and $k = \gamma_m$. \qed
\end{proof}

\begin{corollary}\label{cor:4_tuple_union_delta} Let $\alpha$ and $\gamma$ be positive admissible words. If $\alpha \in \llbracket \gamma \rrbracket$, then
\begin{align*}
    \bigsqcup_{m=0}^{|\alpha|-1}\bigsqcup_{k\neq \alpha_m} \bigsqcup_{n=0}^{|\gamma|-1}\bigsqcup_{p\neq \gamma_n}  C_{\delta^m(\alpha)k} \cap C_{\delta^n(\gamma )p} &= \bigsqcup_{m = 0}^{|\alpha| - 1} \bigsqcup_{k\neq \alpha_m} C_{\delta^m(\alpha)k}.
\end{align*}
If $\alpha \notin \llbracket \gamma \rrbracket$ and $\gamma \notin \llbracket \alpha \rrbracket$, then
\begin{align*}
    \bigsqcup_{m=0}^{|\alpha|-1}\bigsqcup_{k\neq \alpha_m} \bigsqcup_{n=0}^{|\gamma|-1}\bigsqcup_{p\neq \gamma_n}  C_{\delta^m(\alpha)k} \cap C_{\delta^n(\gamma )p} &= \left(\bigsqcup_{m = 0}^{|\alpha|-1} \bigsqcup_{k\neq \alpha_m} C_{\delta^m(\alpha)k}\right) \sqcup \left(\bigsqcup_{|\alpha'| < n < |\gamma|-1} \bigsqcup_{p\neq \gamma_n} C_{\delta^n(\gamma )p}\right),
\end{align*}
where $\alpha'$ is the longest word in $\llbracket \alpha \rrbracket \cap \llbracket \gamma \rrbracket$.
\end{corollary}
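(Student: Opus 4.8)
\textit{Proof plan.} The plan is to separate the two pairs of indices so that the four-fold union turns into an intersection of two single unions, and then to evaluate that intersection using the pairwise formulas already proved. Write $U_\alpha:=\bigsqcup_{m=0}^{|\alpha|-1}\bigsqcup_{k\neq\alpha_m}C_{\delta^m(\alpha)k}$ and, similarly, $U_\gamma$. By distributivity of $\cap$ over $\cup$,
\begin{equation*}
\bigsqcup_{m=0}^{|\alpha|-1}\bigsqcup_{k\neq\alpha_m}\ \bigsqcup_{n=0}^{|\gamma|-1}\bigsqcup_{p\neq\gamma_n}\ C_{\delta^m(\alpha)k}\cap C_{\delta^n(\gamma)p}=U_\alpha\cap U_\gamma ,
\end{equation*}
and the families $\{C_{\delta^m(\alpha)k}\}_{m,k}$, $\{C_{\delta^n(\gamma)p}\}_{n,p}$ are pairwise disjoint (for $(m,k)\neq(m',k')$ neither of $\delta^m(\alpha)k$, $\delta^{m'}(\alpha)k'$ is a prefix of the other, since $k\neq\alpha_m$), so all the $\bigsqcup$'s above are genuine disjoint unions. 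It remains to identify $U_\alpha\cap U_\gamma$.

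If $\alpha\in\llbracket\gamma\rrbracket$, I would prove $U_\alpha\subseteq U_\gamma$, so that $U_\alpha\cap U_\gamma=U_\alpha$, which is the first claim. The reason is that for $m\le|\alpha|-1$ one has $m<|\alpha|\le|\gamma|$, so, $\alpha$ being a prefix of $\gamma$, $\delta^m(\alpha)=\delta^m(\gamma)$ and $\alpha_m=\gamma_m$; hence each generator $C_{\delta^m(\alpha)k}$ of $U_\alpha$ is literally the generator $C_{\delta^m(\gamma)k}$ of $U_\gamma$ obtained at $(n,p)=(m,k)$. (Equivalently, for a fixed $(m,k)$, \eqref{eq:C_cap_C} shows $C_{\delta^m(\alpha)k}\cap C_{\delta^n(\gamma)p}$ is empty unless $(n,p)=(m,k)$, in which case it equals $C_{\delta^m(\alpha)k}$.)

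If $\alpha\notin\llbracket\gamma\rrbracket$ and $\gamma\notin\llbracket\alpha\rrbracket$, I would insert Lemma~\ref{lemma:C_delta_cap_C_delta} into the union termwise, grouping by the fixed pair $(m,k)$ and recording which of its three alternatives can occur. For $0\le m<|\alpha'|$ (where $\alpha_m=\gamma_m$) only alternative (i) fires, at $(n,p)=(m,k)$, contributing $C_{\delta^m(\alpha)k}$; for $|\alpha'|<m\le|\alpha|-1$ only alternative (ii) fires, at $n=|\alpha'|$ and $p=\alpha_{|\alpha'|}$ (a legitimate value of $p$ precisely because $\alpha_{|\alpha'|}\neq\gamma_{|\alpha'|}$), again contributing $C_{\delta^m(\alpha)k}$; and for the boundary index $m=|\alpha'|$ the two sub-cases $k\neq\gamma_{|\alpha'|}$ and $k=\gamma_{|\alpha'|}$ are controlled, respectively, by alternative (i), which contributes $C_{\delta^{|\alpha'|}(\alpha)k}$, and by alternative (iii), for which the union over $(n,p)$ of $C_{\delta^{|\alpha'|}(\alpha)\gamma_{|\alpha'|}}\cap C_{\delta^n(\gamma)p}$ collects exactly the generators $C_{\delta^n(\gamma)p}$ with $|\alpha'|<n$. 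Assembling over all $(m,k)$: every generator of $U_\alpha$ survives except the single one indexed by $(|\alpha'|,\gamma_{|\alpha'|})$, which is refined into that $\gamma$-tail; using $\delta^{|\alpha'|}(\alpha)\gamma_{|\alpha'|}=\delta^{|\alpha'|+1}(\gamma)$ (valid since $\alpha'$ is the longest common stem and $\gamma_{|\alpha'|}\neq\alpha_{|\alpha'|}$), this reorganization is the asserted right-hand side.

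The step I expect to be the real obstacle is the bookkeeping at the boundary index $m=|\alpha'|$: one must verify that $C_{\delta^{|\alpha'|}(\alpha)\gamma_{|\alpha'|}}$ is the \emph{only} generator of $U_\alpha$ meeting $C_\gamma$, that alternatives (i) and (iii) of Lemma~\ref{lemma:C_delta_cap_C_delta} never fire simultaneously for the same $(m,k)$, and that the identification $\delta^{|\alpha'|}(\alpha)\gamma_{|\alpha'|}=\delta^{|\alpha'|+1}(\gamma)$ lines the $\gamma$-tail up with the right range of $n$ — an off-by-one here being the natural error. Away from this term everything is a mechanical application of Lemma~\ref{lemma:C_delta_cap_C_delta}, together with \eqref{eq:C_cap_C} in the first case.
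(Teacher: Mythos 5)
Your strategy --- rewriting the four-fold union as $U_\alpha\cap U_\gamma$ and feeding Lemma~\ref{lemma:C_delta_cap_C_delta} into it term by term, with the case split at the boundary index $m=|\alpha'|$ --- is essentially the computation the paper performs (the paper splits the index set into four blocks according to whether $m,n\le|\alpha'|$ and applies the same lemma block by block), and your first case is correct and matches the paper. Your case analysis in the second situation is also correct, and in fact more careful than the paper's at the boundary: for $(m,k)=(|\alpha'|,\gamma_{|\alpha'|})$ alternative (i) cannot fire, since it would force $p=k=\gamma_{|\alpha'|}=\gamma_n$, which is excluded from the range of $p$; so that single generator contributes only the tail $\bigsqcup_{n>|\alpha'|}\bigsqcup_{p\neq\gamma_n}C_{\delta^n(\gamma)p}$, while every other generator of $U_\alpha$ survives whole.

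The gap is exactly where you predicted it: the closing sentence, ``this reorganization is the asserted right-hand side,'' is false. What you have computed is $\bigl(\bigsqcup_{(m,k)\neq(|\alpha'|,\gamma_{|\alpha'|})}C_{\delta^m(\alpha)k}\bigr)\sqcup\bigl(\bigsqcup_{n>|\alpha'|}\bigsqcup_{p\neq\gamma_n}C_{\delta^n(\gamma)p}\bigr)$, whereas the stated right-hand side keeps the generator $C_{\delta^{|\alpha'|}(\alpha)\gamma_{|\alpha'|}}=C_{\gamma_0\cdots\gamma_{|\alpha'|}}$ in full and then adjoins the $\gamma$-tail, which is a proper subset of that very generator (so the displayed $\sqcup$ is not even disjoint). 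The two sets genuinely differ: $C_{\gamma_0\cdots\gamma_{|\alpha'|}}\supseteq C_\gamma$, while the left-hand side is contained in $\bigsqcup_{n,p}C_{\delta^n(\gamma)p}\subseteq C_\gamma^c$, so whenever $C_\gamma\neq\emptyset$ no reorganization can make them equal. Concretely, for the full shift with $\alpha=12$ and $\gamma=13$ (so $\alpha'=1$), your computation gives $\bigsqcup_{k\neq1}C_k\sqcup\bigsqcup_{k\neq2,3}C_{1k}$, which is the correct value of $U_\alpha\cap U_\gamma$, while the stated right-hand side is $\bigsqcup_{k\neq1}C_k\sqcup\bigsqcup_{k\neq2}C_{1k}$, which wrongly contains $C_{13}=C_\gamma$. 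In short, your derivation is right and the printed statement needs the exclusion $(m,k)\neq(|\alpha'|,\gamma_{|\alpha'|})$ in the first union (with the tail then being a genuine refinement of the removed generator, not an addition to it); the paper's own proof makes the same slip, asserting that the block $0\le m,n\le|\alpha'|$ contributes all of $\bigsqcup_{m=0}^{|\alpha'|}\bigsqcup_{k\neq\alpha_m}C_{\delta^m(\alpha)k}$ even though alternative (i) of Lemma~\ref{lemma:C_delta_cap_C_delta} requires $p=k\neq\gamma_m$, which at $m=|\alpha'|$ excludes $k=\gamma_{|\alpha'|}$. As a proof of the corollary exactly as written, the argument therefore cannot be closed; you should instead record the corrected identity.
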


\begin{proof} If $\alpha \in \llbracket \gamma \rrbracket$, then we may write
\begin{align*}
    \bigsqcup_{m=0}^{|\alpha|-1}\bigsqcup_{k\neq \alpha_m} \bigsqcup_{n=0}^{|\gamma|-1}\bigsqcup_{p\neq \gamma_n}  C_{\delta^m(\alpha)k} \cap C_{\delta^n(\gamma )p} &= \left(\bigsqcup_{m = 0}^{|\alpha| - 1} \bigsqcup_{0 \leq n \leq |\alpha|-1} \bigsqcup_{k\neq \alpha_m} \bigsqcup_{p\neq \gamma_n}  C_{\delta^m(\alpha)k} \cap C_{\delta^n(\gamma )p}\right)  \\
    &\sqcup  \left(\bigsqcup_{m = 0}^{|\alpha| - 1} \bigsqcup_{|\alpha| - 1 < n < |\gamma|-1} \bigsqcup_{k\neq \alpha_m} \bigsqcup_{p\neq \gamma_n}  C_{\delta^m(\alpha)k} \cap C_{\delta^n(\gamma )p}\right),
\end{align*}
and it is straightforward that 
\begin{equation*}
    \bigsqcup_{m = 0}^{|\alpha| - 1} \bigsqcup_{0 \leq n \leq |\alpha|-1} \bigsqcup_{k\neq \alpha_m} \bigsqcup_{p\neq \gamma_n}  C_{\delta^m(\alpha)k} \cap C_{\delta^n(\gamma )p} = \bigsqcup_{m = 0}^{|\alpha| - 1} \bigsqcup_{k\neq \alpha_m} C_{\delta^m(\alpha)k},
\end{equation*}
because $\delta^n(\gamma) = \delta^n(\alpha)$ and $\alpha_n = \gamma_n$ for $0 \leq n \leq |\alpha|-1$. It is straightforward that
\begin{equation*}
    \bigsqcup_{m = 0}^{|\alpha| - 1} \bigsqcup_{|\alpha| - 1 < n < |\gamma|-1} \bigsqcup_{k\neq \alpha_m} \bigsqcup_{p\neq \gamma_n}  C_{\delta^m(\alpha)k} \cap C_{\delta^n(\gamma )p} = \emptyset,
\end{equation*}
because if $n > |\alpha|-1$ then $\delta^n(\gamma) = \alpha \gamma'$ for some admissible positive word $\gamma'$ such that $\alpha \gamma'$ is also admissible, while we have that $\delta^m(\alpha)k$ with $k \neq \alpha_m$ in the union above.

Suppose now that $\alpha \notin \llbracket \gamma \rrbracket$ and $\gamma \notin \llbracket \alpha \rrbracket$. Then, we may write
\begin{align*}
    \bigsqcup_{m=0}^{|\alpha|-1}\bigsqcup_{k\neq \alpha_m} \bigsqcup_{n=0}^{|\gamma|-1}\bigsqcup_{p\neq \gamma_n}  C_{\delta^m(\alpha)k} \cap C_{\delta^n(\gamma )p} &= \left(\bigsqcup_{0 \leq m \leq |\alpha'|} \bigsqcup_{0 \leq n \leq |\alpha'|} \bigsqcup_{k\neq \alpha_m} \bigsqcup_{p\neq \gamma_n}  C_{\delta^m(\alpha)k} \cap C_{\delta^n(\gamma )p}\right)  \\
    &\sqcup  \left(\bigsqcup_{0 \leq m \leq |\alpha'|} \bigsqcup_{|\alpha'| < n < |\gamma|-1} \bigsqcup_{k\neq \alpha_m} \bigsqcup_{p\neq \gamma_n}  C_{\delta^m(\alpha)k} \cap C_{\delta^n(\gamma )p}\right)  \\
    &\sqcup  \left(\bigsqcup_{|\alpha'| < m  < |\alpha|-1} \bigsqcup_{0 \leq n \leq |\alpha'|} \bigsqcup_{k\neq \alpha_m} \bigsqcup_{p\neq \gamma_n}  C_{\delta^m(\alpha)k} \cap C_{\delta^n(\gamma )p}\right) \\
    &\sqcup \left(\bigsqcup_{|\alpha'| < m  < |\alpha|-1} \bigsqcup_{|\alpha'| < n < |\gamma|-1} \bigsqcup_{k\neq \alpha_m} \bigsqcup_{p\neq \gamma_n}  C_{\delta^m(\alpha)k} \cap C_{\delta^n(\gamma )p}\right). 
\end{align*}
By applying Lemma \ref{lemma:C_delta_cap_C_delta} on each of the four parcels above, we obtain
\begin{align*}
    \bigsqcup_{0 \leq m \leq |\alpha'|} \bigsqcup_{0 \leq n \leq |\alpha'|} \bigsqcup_{k\neq \alpha_m} \bigsqcup_{p\neq \gamma_n}  C_{\delta^m(\alpha)k} \cap C_{\delta^n(\gamma )p} &= \bigsqcup_{m = 0}^{|\alpha'|} \bigsqcup_{k\neq \alpha_m} C_{\delta^m(\alpha)k},\\
    \bigsqcup_{0 \leq m \leq |\alpha'|} \bigsqcup_{|\alpha'| < n < |\gamma|-1} \bigsqcup_{k\neq \alpha_m} \bigsqcup_{p\neq \gamma_n}  C_{\delta^m(\alpha)k} \cap C_{\delta^n(\gamma )p} &= \bigsqcup_{|\alpha'| < n < |\gamma|-1} \bigsqcup_{p\neq \gamma_n} C_{\delta^n(\gamma )p},\\
    \bigsqcup_{|\alpha'| < m  < |\alpha|-1} \bigsqcup_{0 \leq n \leq |\alpha'|} \bigsqcup_{k\neq \alpha_m} \bigsqcup_{p\neq \gamma_n}  C_{\delta^m(\alpha)k} \cap C_{\delta^n(\gamma )p} &= \bigsqcup_{|\alpha'| < m  < |\alpha|-1} \bigsqcup_{k\neq \alpha_m}   C_{\delta^m(\alpha)k}, \\
    \bigsqcup_{|\alpha'| < m  < |\alpha|-1} \bigsqcup_{|\alpha'| < n < |\gamma|-1} \bigsqcup_{k\neq \alpha_m} \bigsqcup_{p\neq \gamma_n}  C_{\delta^m(\alpha)k} \cap C_{\delta^n(\gamma )p} &= \emptyset,
\end{align*}
and the proof is finished by noticing that
\begin{align*}
    \left(\bigsqcup_{m = 0}^{|\alpha'|} \bigsqcup_{k\neq \alpha_m} C_{\delta^m(\alpha)k}\right) \sqcup \left( \bigsqcup_{|\alpha'| < m  < |\alpha|-1} \bigsqcup_{k\neq \alpha_m}   C_{\delta^m(\alpha)k} \right) &= \bigsqcup_{m = 0}^{|\alpha| - 1} \bigsqcup_{k\neq \alpha_m} C_{\delta^m(\alpha)k}.
\end{align*} \qed
\end{proof}

\section{Proof of the Theorem 4.83}

\textbf{Proof of \eqref{eq:C_cap_C_comp}:} let $\alpha \in \llbracket \gamma \rrbracket$, then by Proposition \ref{prop:general_C_alpha_complement} we have that
\begin{equation*}
    C_\alpha \cap C_\gamma^c = (C_\alpha \cap F_\gamma) \sqcup \bigsqcup_{n=0}^{|\gamma|-1}\left(\bigsqcup_{j\neq \gamma_n}C_\alpha \cap C_{\delta^{n}(\gamma)j}\right).
\end{equation*}
By Lemma \ref{lemma:CF_intersections_general} $(i)$ and $(iii)$ it is clear that \eqref{eq:C_cap_C_comp} holds for this case. Similarly the same proof holds\footnote{Alternatively, it is straightforward to notice that $C_\alpha \subseteq C_\gamma^c$ for this case.} for when $\alpha \notin \llbracket \gamma \rrbracket \text{ and } \gamma \notin \llbracket \alpha \rrbracket$. If\footnote{Also this follows similarly for the case $\alpha \in \llbracket \gamma \rrbracket$.} $\gamma \in \llbracket \alpha \rrbracket$ then for every $\xi \in C_\alpha$ we have that $\xi_\gamma = 1$ and therefore $\xi \notin C_\gamma^c$, implying that $C_\alpha \cap C_\gamma^c= \emptyset$.

\textbf{Proof of \eqref{eq:C_comp_cap_C_comp}:} we notice that if $\alpha \in \llbracket \gamma \rrbracket$ is equivalent to affirm that $C_\alpha \supseteq C_\gamma$, which is equivalent to state that $C_\alpha^c \subseteq C_\gamma^c$ and hence $C_\alpha^c \cap C_\gamma^c = C_\alpha^c$. A similar proof is used when $\gamma \in \llbracket \alpha \rrbracket$. For the last case, when  $\alpha \notin \llbracket\gamma\rrbracket$ and $\gamma \notin \llbracket \alpha \rrbracket$, we write
\begin{align*}
    C_\alpha^c \cap C_\gamma^c &= (F_\alpha \cap F_\gamma) \sqcup \left(\bigsqcup_{n=0}^{|\gamma|-1}\bigsqcup_{j\neq \gamma_n}F_\alpha \cap C_{\delta^{n}(\gamma)j}\right) \sqcup \left(\bigsqcup_{m=0}^{|\alpha|-1}\bigsqcup_{k\neq \alpha_m} F_\gamma \cap C_{\delta^{m}(\alpha)k} \right)\\
    &\sqcup \left(\bigsqcup_{m=0}^{|\alpha|-1}\bigsqcup_{k\neq \alpha_m} \bigsqcup_{j=0}^{|\gamma|-1}\bigsqcup_{j\neq \gamma_j}C_{\delta^{m}(\alpha)k} \cap C_{\delta^{n}(\gamma)j}\right).
\end{align*}
By Lemma \ref{lemma:CF_intersections_general} $(i)$ and $(ii)$, and Corollary \ref{cor:4_tuple_union_delta} we obtain
\begin{align*}
    C_\alpha^c \cap C_\gamma^c &= F_{\alpha'}^* \sqcup \left(\bigsqcup_{n=0}^{|\gamma|-1}\bigsqcup_{j\neq \gamma_n}F_\alpha^{\delta^{n}(\gamma)j}\right) \sqcup \left(\bigsqcup_{m=0}^{|\alpha|-1}\bigsqcup_{k\neq \alpha_m} F_\gamma^{\delta^{m}(\alpha)k} \right)\\
    &\sqcup \left(\bigsqcup_{m = 0}^{|\alpha|-1} \bigsqcup_{k\neq \alpha_m} C_{\delta^m(\alpha)k}\right) \sqcup \left(\bigsqcup_{|\alpha'| < n < |\gamma|-1} \bigsqcup_{p\neq \gamma_n} C_{\delta^n(\gamma )p}\right).
\end{align*}
Since $\alpha'$ is the largest word in $\llbracket \alpha \rrbracket \cap \llbracket \gamma \rrbracket$, we separate in the possibilities as follows. If $0 \leq m < |\alpha'|$ (for $|\alpha'|>0$), we have that $\delta^{m}(\alpha)k \notin \llbracket\kappa(\xi)\rrbracket$ for any $\xi \in Y_A$ such that $\kappa(\xi) \in \llbracket\gamma\rrbracket$ because $k \neq \alpha_m = \gamma_m$. The same occurs for $m = |\alpha'|$ with $k \neq \gamma_m$ and for $m > |\alpha'|$. If $m = |\alpha'|$ with $k = \gamma_m$, we have that $\delta^{m}(\alpha)k \in \llbracket\kappa(\xi)\rrbracket$ for every $\xi \in Y_A$ such that $\kappa(\xi) \in \llbracket\gamma\rrbracket\setminus\{\gamma\}$ and $|\kappa(\xi)|\geq m+1$. Hence, 
\begin{equation*}
    F_\gamma^{\delta^{m}(\alpha)k} = \begin{cases}
                            F_\gamma^{\alpha'\gamma_{|\alpha'|}}, \quad m=|\alpha'|, k = \gamma_m,\\
                            \emptyset, \quad \text{otherwise}.
                           \end{cases}
\end{equation*}
Analogously we obtain
\begin{equation*}
    F_\alpha^{\delta^{n}(\gamma)j} = \begin{cases}
                            F_\alpha^{\alpha'\alpha_{|\alpha'|}}, \quad n=|\alpha'|, j = \alpha_n,\\
                            \emptyset, \quad \text{otherwise}.
                           \end{cases}
\end{equation*}
Therefore,
\begin{align*}
    C_\alpha^c \cap C_\gamma^c &= F_{\alpha'}^* \sqcup F_\alpha^{\alpha'\alpha_{|\alpha'|}} \sqcup F_\gamma^{\alpha'\gamma_{|\alpha'|}} \sqcup  \left(\bigsqcup_{m = 0}^{|\alpha|-1} \bigsqcup_{k\neq \alpha_m} C_{\delta^m(\alpha)k}\right) \sqcup \left(\bigsqcup_{|\alpha'| < n < |\gamma|-1} \bigsqcup_{p\neq \gamma_n} C_{\delta^n(\gamma )p}\right).
\end{align*}

\textbf{Proof\footnote{Alternatively, the equality can be proved by using Lemma \ref{lemma:CF_intersections_general}.} of \eqref{eq:C_cap_C_inverse}:} if $\alpha \in \llbracket \gamma \rrbracket$, then it is straightforward that $C_{\gamma j^{-1}} \subseteq C_{\gamma} \subseteq C_\alpha$ and hence $C_\alpha \cap C_{\gamma j^{-1}}=C_{\gamma j^{-1}}$. A similar argument follows when $\gamma \in \llbracket \alpha \rrbracket \setminus \{\alpha\} \text{ and } A(j,\alpha_{|\gamma|})=1$, leading to $C_{\gamma j^{-1}} \supseteq C_\alpha$. If $\alpha \notin \llbracket \gamma \rrbracket$ and $\gamma \notin \llbracket \alpha \rrbracket \setminus \{\alpha\}$ it is straighforward that $C_{\gamma j^{-1}}^c \supseteq C_\alpha$, and the same happens when $\gamma \in \llbracket \alpha \rrbracket \setminus \{\alpha\}$ and $A(j,\alpha_{|\gamma|})=0$.

\textbf{Proof of\eqref{eq:C_cap_C_inverse_comp}:} Proposition \ref{prop:general_C_alpha_j_inverse_complement} gives
\begin{align*}
    C_\alpha \cap  C_{\gamma j^{-1}}^c &= (C_\alpha \cap K(\gamma,j)) \sqcup (C_\alpha \cap F_\gamma) \sqcup \left( \bigsqcup_{n=0}^{|\gamma|-1}\bigsqcup_{p\neq \gamma_n}  C_\alpha \cap C_{\delta^n(\gamma)p}\right)\sqcup \bigsqcup_{\substack{p: A(j,p)=0}} C_\alpha \cap C_{\gamma p}.
\end{align*}
By Lemma \ref{lemma:CF_intersections_general} $(iii)$ and $(v)$, if $\alpha \in \llbracket \gamma \rrbracket$ then
\begin{align*}
    C_\alpha \cap  C_{\gamma j^{-1}}^c &= K(\gamma,j) \sqcup F_\gamma^\alpha \sqcup \left( \bigsqcup_{n=|\alpha|}^{|\gamma|-1}\bigsqcup_{p\neq \gamma_m} C_{\delta^n(\gamma)p}\right)\sqcup \bigsqcup_{\substack{p: A(j,p)=0}} C_{\gamma p}.
\end{align*}
On other hand, if $\gamma \in \llbracket \alpha \rrbracket \setminus \{\alpha\}$, the same lemma gives us that 
\begin{align*}
    C_\alpha \cap  C_{\gamma j^{-1}}^c &=  \bigsqcup_{\substack{p: A(j,p)=0}}C_\alpha \cap C_{\gamma p} = \begin{cases}
                        C_\alpha, \quad A(j,\alpha_{|\gamma|})=0,\\
                        0, \quad \text{otherwise}.
                   \end{cases}
\end{align*}
Now, if $\alpha \notin \llbracket \gamma \rrbracket$ and $\alpha \notin \llbracket \gamma \rrbracket$, it is straightforward that $C_\alpha \cap  C_{\gamma j^{-1}}^c= C_\alpha$. 

\textbf{Proof of \eqref{eq:C_comp_cap_C_inverse}:} we consider first the case when $\alpha \in \llbracket \gamma \rrbracket$. For given $\xi \in C_{\gamma j^{-1}}$ it follows that $\xi_\gamma = 1$ and hence $\xi_\alpha = 1$, that is, $C_{\gamma j^{-1}} \subseteq C_\alpha$ and therefore $C_\alpha^c \cap C_{\gamma j^{-1}} = \emptyset$. Now, if $\alpha \notin \llbracket \gamma \rrbracket$ and $\gamma \notin \llbracket \alpha \rrbracket$ we necessarily have that $\alpha \notin \llbracket \gamma \rrbracket \setminus \{\gamma\}$. Indeed, since $\gamma \notin \llbracket \alpha \rrbracket$ we have only two possibilites, namely $\alpha \in \llbracket \gamma \rrbracket \setminus \{\gamma\}$ and $\alpha \notin \llbracket \gamma \rrbracket \setminus \{\gamma\}$. If the first one happens we conclude that $\alpha \in \llbracket \gamma \rrbracket$, a contradition. Hence, for given $\xi \in C_{\gamma j^{-1}}$ we have that $\xi_\gamma = 1$ and then $\xi_\alpha = 0$, that is, $C_{\gamma j^{-1}}\subseteq C_\alpha^c$ and therefore $C_\alpha^c \cap C_{\gamma j^{-1}} = C_{\gamma j^{-1}}$. For the remaining case, namely $\gamma \in \llbracket \alpha \rrbracket \setminus \{\alpha\}$, we use the identity
\begin{align*}
    C_\alpha^c \cap C_{\gamma j^{-1}} &= (F_\alpha \cap G(\gamma,j))\sqcup\left(\bigsqcup_{p:A(j,p)=1}F_\alpha \cap C_{\gamma p} \right) \sqcup \left( \bigsqcup_{m=0}^{|\alpha|-1} \bigsqcup_{k \neq \alpha_m} C_{\delta^m(\alpha)k} \cap G(\gamma,j) \right)\\ &\sqcup \left( \bigsqcup_{m=0}^{|\alpha|-1} \bigsqcup_{k \neq \alpha_m} \bigsqcup_{p:A(j,p)=1} C_{\delta^m(\alpha)k} \cap C_{\gamma p} \right),
\end{align*}
which is a direct consequence from Propositions \ref{prop:general_C_alpha_complement} and \ref{prop:general_C_alpha_inverse_j}. Lemma \ref{lemma:CF_intersections_general} $(i)$ gives $F_\alpha \cap C_{\gamma p} = F_\alpha^{\gamma p}$, which is empty for $p \neq \alpha_{|\gamma|}$. It is straightforward from Lemma \ref{lemma:CF_intersections_general} $(iv)$ and $(viii)$ that $F_\alpha \cap G(\gamma,j) = G(\gamma,j)$ and that $C_{\delta^m(\alpha)k}  \cap G(\gamma,j) = \emptyset$ for every $m$ and $k$ considered. The equality
\begin{equation}\label{eq:C_german_equals_triple_union}
     \mathfrak{C}[\alpha,\gamma,j] = \bigsqcup_{m=0}^{|\alpha|-1} \bigsqcup_{k \neq \alpha_m} \bigsqcup_{p:A(j,p)=1} C_{\delta^m(\alpha)k} \cap C_{\gamma p}.
\end{equation}
is a straightforward consequence of Lemma \ref{lemma:C_delta_cap_C_gamma_p}.

\textbf{Proof of \eqref{eq:C_comp_cap_C_inverse_comp}:} if $\alpha \in \llbracket \gamma \rrbracket$, then for any $\xi \in C_\alpha^c$, that is $\xi_\alpha = 0$, we have necessarily that $\xi_\gamma = 0$ and hence $C_\alpha^c \subseteq C_\gamma^c$. In addition, the inclusion $C_{\gamma j^{-1}} \subseteq C_\gamma$ implies $C_{\gamma j^{-1}}^c \supseteq C_\gamma^c$ and then $C_\alpha^c \subseteq C_{\gamma j^{-1}}^c$ and therefore $C_\alpha^c \cap C_{\gamma j^{-1}}^c = C_\alpha^c$. Now, suppose that $\gamma \in \llbracket \alpha \rrbracket \setminus \{\alpha\}$. For any $\xi \in C_{\gamma j^{-1}}$ we have only two possibilities, namely $\xi_\gamma = 0$ or $\xi_\gamma = 1$ with $\xi_{\gamma j^{-1}} = 0$. For the first possibility, we have by \ref{prop:general_C_alpha_j_inverse_complement} that $\xi \in F_\gamma \sqcup \bigsqcup_{n=0}^{|\gamma|-1}\bigsqcup_{p\neq \gamma_n}  C_{\delta^n(\gamma )p} \subseteq C_\alpha^c$ because $\gamma \in \llbracket \alpha \rrbracket \setminus \{\alpha\}$. For the second one, we have necessarily that $\xi \in K(\gamma,j)\sqcup \bigsqcup_{p:A(j,p) = 0}C_{\gamma p}$. However, for $\gamma \in \llbracket \alpha \rrbracket \setminus \{\alpha\}$ it is straightforward that $K(\gamma,j) \subseteq F_\alpha$ and hence $C_\alpha^c \cap K(\gamma,j) = K(\gamma,j)$. As showed in the proof of \eqref{eq:C_comp_cap_C_inverse}, the hypothesis $\gamma \in \llbracket \alpha \rrbracket \setminus \{\alpha\}$ implies that $C_{\gamma p} \cap F_\alpha = F_\alpha^{\gamma p}$, which is empty if $|\alpha| = |\gamma| + 1$ or $p \neq \alpha_{|\gamma|}$. As a consequence of Lemma \ref{lemma:C_delta_cap_C_gamma_p} we have that
\begin{equation*}
    \mathfrak{D}[\alpha,\gamma,j] = C_\alpha^c \cap \bigsqcup_{p:A(j,p)=0}  C_{\gamma p} = \bigsqcup_{m=0}^{|\alpha|-1} \bigsqcup_{k \neq \alpha_m} \bigsqcup_{p:A(j,p)=0} C_{\delta^m(\alpha)k} \cap C_{\gamma p}.
\end{equation*}
For the remaining case $\alpha \notin \llbracket \gamma \rrbracket$ and $\gamma \notin \llbracket \alpha \rrbracket$ we use the identity
\begin{align*}
    C_\alpha^c \cap C_{\gamma j^{-1}}^c &= (F_\alpha \cap K(\gamma,j)) \sqcup (F_\alpha \cap F_\gamma) \sqcup \left(\bigsqcup_{n=0}^{|\gamma|-1}\bigsqcup_{p\neq \gamma_n}F_\alpha \cap  C_{\delta^n(\gamma )p}\right) \sqcup \left( \bigsqcup_{\substack{p: A(j,p)=0}}  F_\alpha \cap C_{\gamma p} \right)\\
    &\sqcup \left( \bigsqcup_{m=0}^{|\alpha|-1}\bigsqcup_{k\neq \alpha_m} \bigsqcup_{n=0}^{|\gamma|-1}\bigsqcup_{p\neq \gamma_n}  C_{\delta^m(\alpha)k} \cap C_{\delta^n(\gamma )p}\right) \sqcup \left( \bigsqcup_{m=0}^{|\alpha|-1}\bigsqcup_{k\neq \alpha_m} C_{\delta^m(\alpha)k} \cap K(\gamma,j)\right)  \\
    &\sqcup \left( \bigsqcup_{m=0}^{|\alpha|-1}\bigsqcup_{k\neq \alpha_m} C_{\delta^m(\alpha)k} \cap F_\gamma\right) \sqcup \left(\bigsqcup_{m=0}^{|\alpha|-1}\bigsqcup_{k\neq \alpha_m} \bigsqcup_{\substack{p: A(j,p)=0}} C_{\delta^m(\alpha)k} \cap C_{\gamma p} \right).
\end{align*}
We have $F_\alpha \cap K(\gamma,j) = F_\alpha \cap F_\gamma = \emptyset$, since $\alpha \notin \llbracket \gamma \rrbracket$ and $\gamma \notin \llbracket \alpha \rrbracket$ implies $\gamma \notin \llbracket \alpha \rrbracket \setminus \{\alpha\}$. Let $\alpha'$ be the longest word in $\llbracket \alpha \rrbracket \cap \llbracket \gamma \rrbracket$. By Lemma \ref{lemma:CF_intersections_general} $(i)$ we get
\begin{equation*}
    \bigsqcup_{n=0}^{|\gamma|-1}\bigsqcup_{p\neq \gamma_n}F_\alpha \cap  C_{\delta^n(\gamma )p} = \bigsqcup_{n=0}^{|\gamma|-1}\bigsqcup_{p\neq \gamma_n} F_\alpha^{\delta^n(\gamma )p} = F_{\alpha}^{\alpha'\alpha_{|\alpha'|}},
\end{equation*}
where the last equality holds because $\delta^n(\gamma) p \in \llbracket \alpha \rrbracket$ if and only if $n = |\alpha'|$ and $p = \alpha_{|\alpha'|}$. Similarly we obtain
\begin{equation*}
    \bigsqcup_{\substack{p: A(j,p)=0}}  F_\alpha \cap C_{\gamma p} = \bigsqcup_{\substack{p: A(j,p)=0}}  F_\alpha^{\gamma p} = \emptyset,
\end{equation*}
where the last equality holds because each $F_\alpha^{\gamma p}$ is empty. Indeed, suppose that there exists a configuration $\xi \in F_\alpha^{\gamma p}$, that is, $\kappa(\xi) \in \llbracket \alpha \rrbracket \setminus \{\alpha\}$ and $\gamma p \in \llbracket \kappa(\xi) \rrbracket$. Then, $\gamma \in \llbracket \kappa(\xi) \rrbracket \subseteq \llbracket \alpha \rrbracket$, a contradition since $\gamma \notin \llbracket \alpha \rrbracket$. Corollary \ref{cor:4_tuple_union_delta}, gives
\begin{align*}
    \bigsqcup_{m=0}^{|\alpha|-1}\bigsqcup_{k\neq \alpha_m} \bigsqcup_{n=0}^{|\gamma|-1}\bigsqcup_{p\neq \gamma_n}  C_{\delta^m(\alpha)k} \cap C_{\delta^n(\gamma )p} &= \left(\bigsqcup_{m = 0}^{|\alpha|-1} \bigsqcup_{k\neq \alpha_m} C_{\delta^m(\alpha)k}\right) \sqcup \left(\bigsqcup_{|\alpha'| < n < |\gamma|-1} \bigsqcup_{p\neq \gamma_n} C_{\delta^n(\gamma )p}\right).
\end{align*}
Lemma \ref{lemma:CF_intersections_general} $(v)$ gives
\begin{equation*}
    \bigsqcup_{m=0}^{|\alpha|-1}\bigsqcup_{k\neq \alpha_m} C_{\delta^m(\alpha)k} \cap K(\gamma,j) = K(\gamma,j),
\end{equation*}
because\footnote{Note that $\gamma_{|\alpha'|} \neq \alpha_{|\alpha'|}$.} $K(\gamma,j) \subseteq C_{\delta^{|\alpha'|}(\alpha)\gamma_{|\alpha'|}}$. Also, we have
\begin{equation*}
    \bigsqcup_{m=0}^{|\alpha|-1}\bigsqcup_{k\neq \alpha_m} C_{\delta^m(\alpha)k} \cap F_\gamma = F_\gamma^{\alpha'\gamma_{|\alpha'|}},
\end{equation*}
because
\begin{equation*}
    C_{\delta^m(\alpha)k} \cap F_\gamma = \begin{cases}
        F_\gamma^{\alpha'\gamma_{|\alpha'|}, \quad m = |\alpha'| \text{ and } k = \gamma_{|\alpha'|}},\\
        \emptyset, \quad \text{otherwise}.
    \end{cases}
\end{equation*}
In addition,
\begin{equation*}
    \bigsqcup_{m=0}^{|\alpha|-1}\bigsqcup_{k\neq \alpha_m} \bigsqcup_{\substack{p: A(j,p)=0}} C_{\delta^m(\alpha)k} \cap C_{\gamma p} = \bigsqcup_{\substack{p: A(j,p)=0}} C_{\gamma p},
\end{equation*}
since $C_{\gamma p} \subseteq C_{\delta^{|\alpha'|}(\alpha)\gamma)_{|\alpha'|}}$ for every $p$.

\textbf{Proof of \eqref{eq:C_inverse_cap_C_inverse}:} Proposition \ref{prop:general_C_alpha_inverse_j} gives
\begin{align*}
    C_{\alpha j^{-1}} \cap C_{\gamma l^{-1}} &= (G(\alpha,j) \cap G(\gamma,l)) \sqcup \left(\bigsqcup_{m:A(l,m)=1} G(\alpha,j) \cap C_{\gamma m} \right) \\
    &\sqcup \left(\bigsqcup_{k:A(j,k)=1} C_{\alpha k} \cap G(\gamma,l) \right) \sqcup \bigsqcup_{k:A(j,k)=1} \bigsqcup_{m:A(l,m)=1} C_{\alpha k} \cap C_{\gamma m}.
\end{align*}
The proof is concluded by taking by applying Lemma \ref{lemma:CF_intersections_general} $(iv)$ and $(vi)$ in order to obtain
\begin{align}
    G(\alpha,j) \cap G(\gamma,l) = \begin{cases}
        G(\alpha,\{j,l\}), \quad \alpha = \gamma,\\
        \emptyset, \quad \text{otherwise};
    \end{cases}\nonumber\\
    G(\alpha,j) \cap C_{\gamma m} = \begin{cases}
        G(\alpha,j), \quad \gamma \in \llbracket\alpha \rrbracket\setminus \{\alpha\} \text{ and } m=\alpha_{|\gamma|},\\
        \emptyset, \quad \text{otherwise};
    \end{cases}\label{eq:G_alpha_j_cap_C_gamma_m}\\
   C_{\alpha k} \cap G(\gamma,l) = \begin{cases}
        G(\gamma,l), \quad \alpha \in \llbracket \gamma \rrbracket\setminus \{\gamma\} \text{ and } k=\gamma_{|\alpha|},\\
        \emptyset, \quad \text{otherwise};
    \end{cases}\label{eq:C_alpha_k_cap_G_gamma_j}
\end{align}
and by noticing that
\begin{align} \label{eq:C_alpha_k_cap_C_gamma_m}
     C_{\alpha k} \cap C_{\gamma m} = \begin{cases}
        C_{\alpha k}, \quad \gamma \in \llbracket\alpha \rrbracket\setminus \{\alpha\} \text{ and } m=\alpha_{|\gamma|},\\
        C_{\gamma m}\quad \alpha \in \llbracket \gamma \rrbracket\setminus \{\gamma\} \text{ and } k=\gamma_{|\alpha|},\\
        C_{\alpha k} = C_{\gamma m}, \quad \alpha = \gamma \text{ and } k = m,\\
        \emptyset, \quad \text{otherwise}.
    \end{cases}
\end{align}
\textbf{Proof of \eqref{eq:C_inverse_cap_C_inverse_comp}:} we have that
\begin{align*}
    C_{\alpha j^{-1}} \cap C_{\gamma l^{-1}}^c &= (G(\alpha,j) \cap K(\gamma,l)) \sqcup (G(\alpha,j) \cap F_\gamma) \sqcup \left( \bigsqcup_{n=0}^{|\gamma|-1} \bigsqcup_{p \neq \gamma_n} G(\alpha,j) \cap C_{\delta^n(\gamma)p} \right) \\
    &\sqcup \left( \bigsqcup_{m:A(l,m)=0} G(\alpha,j) \cap C_{\gamma m} \right)\sqcup \left( \bigsqcup_{k:A(j,k)=1}  C_{\alpha k} \cap K(\gamma,l)\right) \\
    &\sqcup \left( \bigsqcup_{k:A(j,k)=1}  C_{\alpha k} \cap F_\gamma\right) \sqcup \left( \bigsqcup_{k:A(j,k)=1} \bigsqcup_{n=0}^{|\gamma|-1} \bigsqcup_{p \neq \gamma_n} C_{\alpha k} \cap C_{\delta^n(\gamma)p}\right)\\
    &\sqcup \left( \bigsqcup_{k:A(j,k)=1} \bigsqcup_{m:A(l,m)=0} C_{\alpha k} \cap C_{\gamma m}\right)
\end{align*}
due to Propositions \ref{prop:general_C_alpha_inverse_j} and \ref{prop:general_C_alpha_j_inverse_complement}. By definition of $GK(\alpha,j,l)$ we have that
\begin{equation*}
    G(\alpha,j) \cap K(\gamma,l) = \begin{cases}
    GK(\alpha,j,l), \quad \alpha = \gamma,\\
    \emptyset, \quad \text{otherwise}.
    \end{cases}
\end{equation*}
Lemma \ref{lemma:CF_intersections_general} $(viii)$ gives
\begin{equation*}
    G(\alpha,j) \cap F_\gamma = \begin{cases}
    G(\alpha,j), \quad \alpha \in \llbracket\gamma\rrbracket \setminus \{\gamma\},\\
    \emptyset, \quad \text{otherwise}.
    \end{cases}
\end{equation*}
From Lemma \ref{lemma:CF_intersections_general} $(iv)$ we have
\begin{equation*}
    G(\alpha,j) \cap C_{\delta^n(\gamma)p} = \begin{cases}
        G(\gamma,j), \quad \delta^n(\gamma)p \in \llbracket \alpha \rrbracket,\\
        \emptyset, \quad \text{otherwise};
    \end{cases}
\end{equation*}
and from Lemma \ref{lemma:CF_intersections_general} $(i)$ we get
\begin{equation*}
    C_{\alpha k} \cap F_{\gamma} = F_{\gamma}^{\alpha k},
\end{equation*}
which is empty if\footnote{It can be empty even satisfying the opposite of the conditions stated, but these cases are included in the definition of the $F$'s, we are just simplyfing the statements.} $\alpha \notin \llbracket \gamma \rrbracket\setminus \{\gamma\}$ or $k \neq \gamma_{|\alpha|}$, and then
\begin{equation*}
    \bigsqcup_{k:A(j,k)=1}  C_{\alpha k} \cap F_\gamma = \begin{cases}
        F_\gamma^{\alpha \gamma_{|\alpha|}}, \quad \alpha \in \llbracket \gamma \rrbracket\setminus\{\gamma\} \text{ and } A(j,\gamma_{|\alpha|}) = 1,\\
        \emptyset, \quad \text{otherwise}.
    \end{cases}
\end{equation*}

If $\gamma \in \llbracket \alpha \rrbracket$, then $\delta^n(\gamma)p \notin \llbracket \alpha \rrbracket$ because $p \neq \gamma_n = \alpha_n$. The same happens when $\alpha \in \llbracket \gamma \rrbracket \setminus \{\gamma\}$ and $n \leq |\alpha| -1$. Now, if $\alpha \in \llbracket \gamma \rrbracket \setminus \{\gamma\}$ and $n < |\alpha| -1$ we have also that $\delta^n(\gamma)p \notin \llbracket \alpha \rrbracket$ because we must have $p \neq \gamma_n = \alpha_n$. The last possibility consists in $\alpha \notin \llbracket \gamma \rrbracket$ and $\gamma \notin \llbracket \alpha \rrbracket$, where we consider $\alpha'$, the longest word in $\llbracket \alpha \rrbracket \cap \llbracket \gamma \rrbracket$. We have the same result as before, except when $n = |\alpha'|$, where it is straightforward that $\delta^n(\gamma)p \in \llbracket \alpha \rrbracket$ if and only if $p = \alpha_{|\alpha'|} \neq \gamma_{|\alpha'|}$. We conclude that
\begin{equation}\label{eq:union_intersection_G_C_delta}
    \bigsqcup_{n=0}^{|\gamma|-1} \bigsqcup_{p \neq \gamma_n} G(\alpha,j) \cap C_{\delta^n(\gamma)p} = \begin{cases}
        G(\alpha,j), \quad \alpha \notin \llbracket \gamma \rrbracket, \text{ }\gamma \notin \llbracket \alpha \rrbracket,\\
        \emptyset, \quad \text{otherwise}.
    \end{cases}
\end{equation}

The intersection $C_{\gamma m} \cap G(\alpha,j)$ was already explicited in \eqref{eq:G_alpha_j_cap_C_gamma_m} and therefore
\begin{equation*}
    \bigsqcup_{m:A(l,m)=0} G(\alpha,j) \cap C_{\gamma m} = \begin{cases}
        G(\alpha,j), \quad \gamma \in \llbracket \alpha \rrbracket \setminus \{\alpha\} \text{ and } A(l,\alpha_{|\gamma|}) = 0,\\
        \emptyset, \quad \text{otherwise}.
    \end{cases}
\end{equation*}
We also have
\begin{equation*}
     C_{\alpha k} \cap K(\gamma,l) = \begin{cases}
        K(\gamma,l), \quad \alpha \in \llbracket \gamma \rrbracket\setminus \{\gamma\} \text{ and } k=\gamma_{|\alpha|},\\
        \emptyset, \quad \text{otherwise},
    \end{cases}
\end{equation*}
which is proved similarly as in \eqref{eq:C_alpha_k_cap_G_gamma_j} and gives
\begin{equation}\label{eq:union_C_alpha_k_cap_K}
     \bigsqcup_{k:A(j,k)=1}  C_{\alpha k} \cap K(\gamma,l) = \begin{cases}
        K(\gamma,l), \quad \alpha \in \llbracket \gamma \rrbracket\setminus \{\gamma\} \text{ and } A(j,\gamma_{|\alpha|}) = 1,\\
        \emptyset, \quad \text{otherwise}.
    \end{cases}
\end{equation}
Now, if $\alpha \in \llbracket \gamma \rrbracket \setminus \{\gamma\}$, then by \eqref{eq:C_german_equals_triple_union} we have that
\begin{equation*}
    \mathfrak{C}[\gamma,\alpha,j] = \bigsqcup_{k:A(j,k)=1} \bigsqcup_{n=0}^{|\gamma|-1} \bigsqcup_{p \neq \gamma_n} C_{\alpha k} \cap C_{\delta^n(\gamma)p}.
\end{equation*}
The remaining case, when $\alpha \notin  \llbracket \gamma \rrbracket \setminus \{\gamma\}$, is separated into two possibilities, namely $\gamma \in  \llbracket \alpha \rrbracket$ and $\gamma \notin  \llbracket \alpha \rrbracket$. For the first one, we have necessarily that $\delta^n(\gamma) \in \llbracket \alpha \rrbracket \setminus \{\alpha\}$ for every $n$ possible. Since $p \neq \gamma_n = \alpha_n$, we conclude that $ C_{\alpha k} \cap C_{\delta^n(\gamma)p} = \emptyset$ for every $n$ and $k$ and therefore
\begin{equation*}
    \bigsqcup_{k:A(j,k)=1} \bigsqcup_{n=0}^{|\gamma|-1} \bigsqcup_{p \neq \gamma_n} C_{\alpha k} \cap C_{\delta^n(\gamma)p} = \emptyset.
\end{equation*}
The second case is equivalent to the condition $\gamma \notin \llbracket \alpha \rrbracket$ and $\alpha \notin \llbracket \gamma \rrbracket$, and we have that
\begin{equation*}
    C_{\delta^n(\gamma)p}\cap C_{\alpha k} = \begin{cases}
        C_{\alpha k}, \quad n = |\alpha'| \text{ and } p = \alpha_{|\alpha'|},\\
        \emptyset, \quad \text{otherwise};    \end{cases}
\end{equation*}
obtained from Lemma \ref{lemma:CF_intersections_general} $(iii)$ and considering that $C_{\delta^n(\gamma)p}\cap C_{\alpha k} = (C_{\delta^n(\gamma)p}\cap C_{\alpha})\cap C_{\alpha k}$. Therefore, in this case we have
\begin{equation*}
    \bigsqcup_{k:A(j,k)=1} \bigsqcup_{n=0}^{|\gamma|-1} \bigsqcup_{p \neq \gamma_n} C_{\alpha k} \cap C_{\delta^n(\gamma)p} = \bigsqcup_{k:A(j,k)=1} C_{\alpha k}.
\end{equation*}
Summarizing, we get
\begin{equation}\label{eq:triple_union_C_alpha_k_C_delta}
    \bigsqcup_{k:A(j,k)=1} \bigsqcup_{n=0}^{|\gamma|-1} \bigsqcup_{p \neq \gamma_n} C_{\alpha k} \cap C_{\delta^n(\gamma)p} = \begin{cases}
        \mathfrak{C}[\gamma,\alpha,j], \quad \alpha \in \llbracket \gamma \rrbracket \setminus \{\gamma\},\\
        \bigsqcup_{k:A(j,k)=1} C_{\alpha k}, \quad \alpha \notin \llbracket \gamma \rrbracket \text{ and } \gamma \notin \llbracket \alpha \rrbracket,\\
        \emptyset, \quad \text{otherwise}.
    \end{cases}
\end{equation}
From \eqref{eq:C_alpha_k_cap_C_gamma_m} we obtain
\begin{equation}\label{eq:double_union_matix_conditions}
    \bigsqcup_{k:A(j,k)=1} \bigsqcup_{m:A(l,m)=0} C_{\alpha k} \cap C_{\gamma m} = \begin{cases}
        \bigsqcup_{k:A(j,k) = 1, A(l,k)=0}C_{\alpha k}, \quad \alpha = \gamma, \\
        \bigsqcup_{m:A(l,m)=0}  C_{\gamma m}, \quad \alpha \in \llbracket \gamma \rrbracket \setminus \{\gamma\} \text{ and }A(j,\gamma_{|\alpha|}) = 1,\\
        \bigsqcup_{k:A(j,k)=1}  C_{\alpha k}, \quad \gamma \in \llbracket \alpha \rrbracket \setminus \{\alpha\} \text{ and }A(l,\alpha_{|\gamma|}) = 0,\\
        \emptyset, \quad \text{otherwise}.
    \end{cases} 
\end{equation}

\textbf{Proof of \eqref{eq:C_inverse_comp_cap_C_inverse_comp}:} Proposition \ref{prop:general_C_alpha_j_inverse_complement} gives
\begin{align*}
    C_{\alpha j^{-1}}^c \cap C_{\gamma l^{-1}}^c &= (K(\alpha,j) \cap K(\gamma,l)) \sqcup (K(\alpha,j) \cap F_\gamma) \sqcup \left(\bigsqcup_{n=0}^{|\gamma|-1} \bigsqcup_{p \neq \gamma_n} K(\alpha,j) \cap C_{\delta^n(\gamma)p} \right) \\
    &\sqcup \left(\bigsqcup_{p:A(l,p)=0} K(\alpha,j) \cap C_{\gamma p} \right) \sqcup (F_\alpha \cap K(\gamma,l)) \sqcup (F_\alpha \cap F_\gamma) \\
    &\sqcup \left(\bigsqcup_{n=0}^{|\gamma|-1} \bigsqcup_{p \neq \gamma_n} F_\alpha \cap C_{\delta^n(\gamma)p} \right) \sqcup \left( \bigsqcup_{p:A(l,p)=0} F_\alpha \cap C_{\gamma p}\right)\\
    &\sqcup \left(\bigsqcup_{m=0}^{|\alpha|-1}\bigsqcup_{k\neq \alpha_m} C_{\delta^m(\alpha)k} \cap K(\gamma,l)\right) \sqcup \left( \bigsqcup_{m=0}^{|\alpha|-1} \bigsqcup_{k \neq \alpha_m} C_{\delta^m(\alpha)k} \cap F_\gamma \right) \\
    &\sqcup \left(\bigsqcup_{m=0}^{|\alpha|-1}\bigsqcup_{k\neq \alpha_m} \bigsqcup_{n=0}^{|\gamma|-1} \bigsqcup_{p \neq \gamma_n} C_{\delta^m(\alpha)k} \cap C_{\delta^n(\gamma) p}\right)
    \sqcup \left(\bigsqcup_{m=0}^{|\alpha|-1}\bigsqcup_{k\neq \alpha_m} \bigsqcup_{p:A(l,p)=0}   C_{\delta^m(\alpha)k} \cap C_{\gamma p} \right)\\
    &\sqcup \left(\bigsqcup_{k:A(j,k)=0}  C_{\alpha k} \cap K(\gamma,l) \right) \sqcup \left(\bigsqcup_{k:A(j,k)=0}  C_{\alpha k} \cap F_\gamma \right) \\
    &\sqcup \left(\bigsqcup_{k:A(j,k)=0} \bigsqcup_{n=0}^{|\gamma|-1} \bigsqcup_{p \neq \gamma_n} C_{\alpha k} \cap C_{\delta^n(\gamma)p}\right) \sqcup \left(\bigsqcup_{k:A(j,k)=0} \bigsqcup_{p:A(l,p)=0} C_{\alpha k} \cap C_{\gamma p} \right).
\end{align*}
Lemma \ref{lemma:CF_intersections_general} $(vii)$ gives
\begin{equation*} 
    K(\alpha,j) \cap K(\gamma,l) = \begin{cases}
                        K(\alpha,\{j,l\}), \quad \alpha = \gamma,\\
                        \emptyset, \quad \text{otherwise}.
                        \end{cases}
\end{equation*}
Also we have
\begin{equation*}
    K(\alpha,j) \cap F_\gamma = \begin{cases}
                        K(\alpha,j), \quad \alpha \in \llbracket \gamma \rrbracket \setminus \{\gamma\},\\
                        \emptyset, \quad \text{otherwise};
                        \end{cases}
\end{equation*}
and 
\begin{equation*}
    F_\alpha \cap K(\gamma,l)  = \begin{cases}
                        K(\gamma,l), \quad \gamma \in \llbracket \alpha \rrbracket \setminus \{\alpha\},\\
                        \emptyset, \quad \text{otherwise};
                        \end{cases}
\end{equation*}
due to Lemma \ref{lemma:CF_intersections_general} $(ix)$. In addition,
\begin{equation*} 
    \bigsqcup_{n=0}^{|\gamma|-1} \bigsqcup_{p \neq \gamma_n} K(\alpha,j) \cap C_{\delta^n(\gamma)p} = \begin{cases}
        K(\alpha,j), \quad \alpha \notin \llbracket \gamma \rrbracket, \text{ }\gamma \notin \llbracket \alpha \rrbracket,\\
        \emptyset, \quad \text{otherwise};
         \end{cases}
\end{equation*}
and
\begin{equation*} 
    \bigsqcup_{n=0}^{|\alpha|-1} \bigsqcup_{k \neq \alpha_k} C_{\delta^m(\alpha)k} \cap K(\gamma,l) = \begin{cases}
        K(\gamma,l), \quad \alpha \notin \llbracket \gamma \rrbracket, \text{ }\gamma \notin \llbracket \alpha \rrbracket,\\
        \emptyset, \quad \text{otherwise};
         \end{cases}
\end{equation*}
due to a similar proof as it was done to prove \eqref{eq:union_intersection_G_C_delta}. In analogous way, the identities
\begin{equation*}
    \bigsqcup_{p:A(l,p)=0} K(\alpha,j) \cap C_{\gamma p} = \begin{cases}
        K(\alpha,j), \quad \gamma \in \llbracket \alpha \rrbracket\setminus \{\alpha\} \text{ and } A(l,\alpha_{|\gamma|}) = 0,\\
        \emptyset, \quad \text{otherwise};
        \end{cases}
\end{equation*}
and
\begin{equation*}
    \bigsqcup_{k:A(j,k)=0}  C_{\alpha k} \cap K(\gamma,l) = \begin{cases}
        K(\gamma,l), \quad \alpha \in \llbracket \gamma \rrbracket\setminus \{\gamma\} \text{ and } A(j,\gamma_{|\alpha|}) = 0,\\
        \emptyset, \quad \text{otherwise}.
    \end{cases}
\end{equation*}
are proved similarly to what is done to proof \eqref{eq:union_C_alpha_k_cap_K}. As presented in Lemma \ref{lemma:CF_intersections_general} $(ii)$,
\begin{equation*}
        F_\alpha \cap F_\gamma = \begin{cases}
                                    F_\alpha, \quad \alpha \in \llbracket \gamma \rrbracket,\\
                                    F_\gamma, \quad \gamma \in \llbracket \alpha \rrbracket,\\
                                    F_{\alpha'}^*, \quad \text{otherwise}.
                                 \end{cases}
\end{equation*}
Moreover, by Lemma \ref{lemma:CF_intersections_general} $(i)$ we have
\begin{equation*}
    F_\alpha \cap C_{\delta^n(\gamma)p} = \begin{cases}
                                 F_\alpha^{\alpha'\alpha_{|\alpha'|}}, \quad \alpha \notin \llbracket \gamma \rrbracket \setminus \{\gamma\}, \quad  \gamma \notin \llbracket \alpha \rrbracket \setminus \{\alpha\} \text{ and } n = |\alpha'|\\
                                 \emptyset, \quad \text{otherwise};
                                 \end{cases}
\end{equation*}
and the proof is straightforward for $\alpha \notin \llbracket \gamma \rrbracket \setminus \{\gamma\} \text{ and }  \gamma \notin \llbracket \alpha \rrbracket \setminus \{\alpha\}$. For $\alpha = \gamma$, the proof is the same as for $\gamma \in \llbracket \alpha \rrbracket \setminus \{\alpha\}$ and for $\alpha \in \llbracket \gamma \rrbracket \setminus \{\gamma\}$ with $0 \leq n < |\alpha|-1$, because in all these cases we have $p \neq \gamma_n = \alpha_n$. For the remaining case, $\alpha \in \llbracket \gamma \rrbracket \setminus \{\gamma\}$ and $n \geq |\alpha|-1$ it follows that $|\delta^n(\gamma)p|\geq |\alpha|$, however for every $\xi \in F_\alpha$ it is true that $|\kappa(\xi)| \leq |\alpha|-1$ and therefore $F_\alpha \cap C_{\delta^n(\gamma)p} = \emptyset$ for this case. Then,
\begin{equation*}
    \bigsqcup_{n=0}^{|\gamma|-1} \bigsqcup_{p \neq \gamma_n} F_\alpha \cap C_{\delta^n(\gamma)p} =  \begin{cases}
        F_\alpha^{\alpha'\alpha_{|\alpha'|}}, \quad \alpha \notin \llbracket \gamma \rrbracket \setminus \{\gamma\} \text{ and } \gamma \notin \llbracket \alpha \rrbracket \setminus \{\alpha\},\\
        \emptyset, \quad \text{otherwise};
    \end{cases}
\end{equation*}
and
\begin{equation*}
    \bigsqcup_{m=0}^{|\alpha|-1} \bigsqcup_{k \neq \alpha_m} C_{\delta^m(\alpha)k} \cap F_\gamma =  \begin{cases}
        F_\gamma^{\alpha'\gamma_{|\alpha'|}}, \quad \alpha \notin \llbracket \gamma \rrbracket \setminus \{\gamma\} \text{ and } \gamma \notin \llbracket \alpha \rrbracket \setminus \{\alpha\},\\
        \emptyset, \quad \text{otherwise}.
    \end{cases}
\end{equation*}
On the other hand, Lemma \ref{lemma:CF_intersections_general} $(i)$ gives\footnote{It may happen that $\gamma$ is the longest word in $\llbracket \alpha \rrbracket \setminus \{\alpha\}$. In this case we have $F_\alpha^{\gamma p} = F_\alpha^\alpha = \emptyset$.}
\begin{equation*}
    F_\alpha \cap C_{\gamma p} = \begin{cases}
        F_\alpha^{\gamma p}, \quad \gamma \in \llbracket \alpha \rrbracket \setminus \{\alpha\} \text{ and } p=\alpha_{|\gamma|},\\
        \emptyset, \quad \text{otherwise}.
    \end{cases}
\end{equation*}
Hence, we have
\begin{equation*}
    \bigsqcup_{p:A(l,p)=0} F_\alpha \cap C_{\gamma p} = \begin{cases}
        F_\alpha^{\gamma \alpha_{|\gamma|}}, \quad \gamma \in \llbracket \alpha \rrbracket \setminus \{\alpha\} \text{ and } A(l,\alpha_{|\gamma|})=0,\\
        \emptyset, \quad \text{otherwise};
    \end{cases}
\end{equation*}
and
\begin{equation*}
    \bigsqcup_{k:A(j,k)=0}  C_{\alpha k} \cap F_\gamma = \begin{cases}
        F_\gamma^{\alpha \gamma_{|\alpha|}}, \quad \alpha \in \llbracket \gamma \rrbracket \setminus \{\gamma\} \text{ and } A(j,\gamma_{|\alpha|})=0,\\
        \emptyset, \quad \text{otherwise};
    \end{cases}
\end{equation*}
In addition,
\begin{equation*}
    \bigsqcup_{m=0}^{|\alpha|-1}\bigsqcup_{k\neq \alpha_m} \bigsqcup_{p:A(l,p)=0}   C_{\delta^m(\alpha)k} \cap C_{\gamma p} = \begin{cases}
        \mathfrak{D}[\alpha,\gamma,l], \quad \gamma \in \llbracket \alpha \rrbracket \setminus \{\alpha\},\\
        \bigsqcup_{p:A(l,p)=0} C_{\gamma p}, \quad \alpha \notin \llbracket \gamma \rrbracket \text{ and } \gamma \notin \llbracket \alpha \rrbracket,\\
        \emptyset, \quad \text{otherwise};
    \end{cases}
\end{equation*}
and
\begin{equation*}
    \bigsqcup_{k:A(j,k)=0} \bigsqcup_{n=0}^{|\gamma|-1} \bigsqcup_{p \neq \gamma_n} C_{\alpha k} \cap C_{\delta^n(\gamma)p} = \begin{cases}
        \mathfrak{D}[\gamma,\alpha,j], \quad \alpha \in \llbracket \gamma \rrbracket \setminus \{\gamma\},\\
        \bigsqcup_{k:A(j,k)=0} C_{\alpha k}, \quad \alpha \notin \llbracket \gamma \rrbracket \text{ and } \gamma \notin \llbracket \alpha \rrbracket,\\
        \emptyset, \quad \text{otherwise};
    \end{cases}
\end{equation*}
as analogously we proved \eqref{eq:triple_union_C_alpha_k_C_delta}. Also we have
\begin{equation*}
    \bigsqcup_{k:A(j,k)=0} \bigsqcup_{p:A(l,p)=0} C_{\alpha k} \cap C_{\gamma p} = \begin{cases}
        \bigsqcup_{k:A(j,k) = 0, A(l,k)=0}C_{\alpha k}, \quad \alpha = \gamma, \\
        \bigsqcup_{m:A(l,m)=0}  C_{\gamma m}, \quad \alpha \in \llbracket \gamma \rrbracket \setminus \{\gamma\} \text{ and }A(j,\gamma_{|\alpha|}) = 0,\\
        \bigsqcup_{k:A(j,k)=0}  C_{\alpha k}, \quad \gamma \in \llbracket \alpha \rrbracket \setminus \{\alpha\} \text{ and }A(l,\alpha_{|\gamma|}) = 0,\\
        \emptyset, \quad \text{otherwise};
    \end{cases}
\end{equation*}
analogous to the proof of \eqref{eq:double_union_matix_conditions}. The remaining parcel
\begin{align*}
    \bigsqcup_{m=0}^{|\alpha|-1}\bigsqcup_{k\neq \alpha_m} \bigsqcup_{n=0}^{|\gamma|-1}\bigsqcup_{p\neq \gamma_n}  C_{\delta^m(\alpha)k} \cap C_{\delta^n(\gamma )p}
\end{align*}
was already studied in Corollary \ref{cor:4_tuple_union_delta}.
\qed

      % associado ao arquivo: 'ape-conjuntos.tex'
\chapter{Boolean Algebras} \label{ape:Boolean_algebras}

\begin{definition} Let $X$ be a topological space.
\begin{itemize}
    \item We say $X$ is said to be disconnected when it is the union of two disjoint non-empty open sets. Otherwise, $X$ is said to be connected.
    \item A subset of $X$ is said to be connected if it is connected under its subspace topology.
    \item A subset of $X$ is said to be a connected component of $X$ if it is connected and maximal under the inclusion ordering.
    \item $X$ is said to be totally disconnected if all of its connected components are singletons.
\end{itemize}
\end{definition}

\begin{remark} Connected components of a topological space are closed sets of the whole space and they form a partition, that is, they are disjoint and their union is the whole space.
\end{remark}

\begin{definition}[Boolean Algebra] A Boolean algebra is a six-tuple $(\mathcal{B},\vee,\wedge,\neg,\mathbf{0},\mathbf{1})$, where $\mathcal{B}$ is a non-empty set, $\vee$ and $\wedge$ are two binary operations, resepectively named `and' and `or', $\neg$ is a unary operation called `not', and $\mathbf{0}$ and $\mathbf{1}$ are two distinct elements of $\mathcal{B}$, respectively called `zero' and `one'. Such six-tuple must satisfy, for every $a$, $b$, $c \in \mathcal{B}$ the following properties
\begin{itemize}
    \item idempotency on $\vee$ and $\wedge$:
    \begin{equation*}
        a \vee a = a \quad \text{and} \quad a \wedge a = a;
    \end{equation*}
    \item commutativity on $\vee$ and $\wedge$:
    \begin{equation*}
        a \vee b = b \vee a \quad \text{and} \quad a \wedge b = b \wedge a;
    \end{equation*}
    \item associativity on $\vee$ and $\wedge$:
    \begin{equation*}
        a \vee (b \vee c) = (a \vee b) \vee c \quad \text{and} \quad a \wedge (b \wedge c) = (a \wedge b) \wedge c;
    \end{equation*}
    \item absorvency on $\vee$ and $\wedge$:
    \begin{equation*}
        a \vee (a \wedge b) = a \quad \text{and} \quad a \wedge (a \vee b) = a;
    \end{equation*}
    \item distributive between $\vee$ and $\wedge$:
    \begin{equation*}
        a \vee (b \wedge c) = (a \vee b) \wedge (a \vee c) \quad \text{and} \quad a \wedge (b \vee c) = (a \wedge b) \vee (a \wedge c);
    \end{equation*}
    \item $\mathbf{1} \wedge a = a$ and $\mathbf{0} \vee a = a$;
    \item $a \wedge (\neg a) = \mathbf{0}$ and $a \vee (\neg a) = \mathbf{1}$.
\end{itemize}
\end{definition}

\begin{definition}[Generated Boolean Algebras] Let $\mathcal{B}$ be a Boolean algebra and $\mathcal{A} \subseteq \mathcal{B}$. The Boolean algebra generated by $\mathcal{A}$ is the smallest Boolean subalgebra of $\mathcal{B}$ containing $\mathcal{A}$. Equivalently, the Boolean algebra generated by $\mathcal{A}$ is the intersection of all Boolean subalgebras of $\mathcal{B}$ containing $\mathcal{A}$.
\end{definition}

\begin{proposition}\label{prop:boolean_subalgebra_DNF} Let $\mathcal{B}$ be a Boolean algebra and $\mathcal{A} \subseteq \mathcal{B}$, and denote by $\mathcal{B}(\mathcal{A})$ the Boolean subalgebra generated by $\mathcal{A}$. Every element of $\mathcal{B}(\mathcal{A})$ has a disjunctive normal form (DNF), that is, if $C \in \mathcal{B}(\mathcal{A})$, then
\begin{equation*}
    C = \bigvee_{i=1}^n \bigwedge_{j=1}^{\phi(n)} C_{ij},
\end{equation*}
where $\phi: \{1,...,n\} \to \mathbb{N}$ is a function and either $C_{ij} \in \mathcal{A}$ or $C_{ij}^c \in \mathcal{A}$.
\end{proposition}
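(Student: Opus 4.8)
The plan is to introduce the collection $\mathcal{D} \subseteq \mathcal{B}$ of all elements admitting a disjunctive normal form over $\mathcal{A}$, i.e. all finite joins of finite meets of \emph{literals}, where a literal is an element $a \in \mathcal{B}$ with $a \in \mathcal{A}$ or $\neg a \in \mathcal{A}$, together with the elements $\mathbf{0}$ and $\mathbf{1}$ (the latter two included explicitly so as to cover the degenerate case, e.g. when $\mathcal{A} = \emptyset$, where one would otherwise need an empty join or meet). I would then prove $\mathcal{D} = \mathcal{B}(\mathcal{A})$ by a double inclusion; the proposition follows immediately, since every element of $\mathcal{B}(\mathcal{A}) = \mathcal{D}$ then has a DNF by definition of $\mathcal{D}$.

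First I would note that $\mathcal{A} \subseteq \mathcal{D}$, since each $a \in \mathcal{A}$ is its own DNF with $n = 1$ and $\phi(1) = 1$. Next, $\mathcal{D} \subseteq \mathcal{B}(\mathcal{A})$: every literal lies in $\mathcal{B}(\mathcal{A})$ because $\mathcal{B}(\mathcal{A}) \supseteq \mathcal{A}$ and $\mathcal{B}(\mathcal{A})$ is closed under $\neg$, and then closure of $\mathcal{B}(\mathcal{A})$ under finite $\wedge$ and finite $\vee$ (together with $\mathbf{0},\mathbf{1} \in \mathcal{B}(\mathcal{A})$) forces every DNF expression to belong to $\mathcal{B}(\mathcal{A})$. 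The heart of the argument is then to show that $\mathcal{D}$ is itself a Boolean subalgebra of $\mathcal{B}$; granting this, $\mathcal{D}$ is a Boolean subalgebra containing $\mathcal{A}$ and contained in $\mathcal{B}(\mathcal{A})$, so minimality of $\mathcal{B}(\mathcal{A})$ yields $\mathcal{B}(\mathcal{A}) \subseteq \mathcal{D}$, hence equality.

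To verify that $\mathcal{D}$ is a subalgebra I would check closure under the three operations and the presence of $\mathbf{0},\mathbf{1}$. Closure under $\vee$ is immediate: juxtapose the two outer joins. Closure under $\wedge$ follows from repeated use of the distributive law $a \wedge (b \vee c) = (a \wedge b) \vee (a \wedge c)$, expanding $\bigl(\bigvee_i \bigwedge_j C_{ij}\bigr) \wedge \bigl(\bigvee_k \bigwedge_l D_{kl}\bigr)$ into $\bigvee_{i,k}\bigl(\bigwedge_j C_{ij} \wedge \bigwedge_l D_{kl}\bigr)$, again a join of meets of literals. Closure under $\neg$ is the step I expect to be the main obstacle: applying De Morgan's laws (derivable from the Boolean axioms) to $\neg\bigl(\bigvee_i \bigwedge_j C_{ij}\bigr)$ gives $\bigwedge_i \bigvee_j \neg C_{ij}$, a \emph{conjunctive} normal form whose atoms $\neg C_{ij}$ are still literals because $\neg$ is an involution ($\neg\neg a = a$, again from the axioms, so $\neg C_{ij}^c = C_{ij}$). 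One must then renormalize this CNF to a DNF, which is achieved by the same distributive expansion used above (choose one disjunct from each of the $n$ conjuncts, producing a join of meets each of length $n$); the only real care needed is the bookkeeping, namely confirming the output still has the stated double-indexed form with $\phi$ valued in $\mathbb{N}$ and that no element outside the literal pool is ever created. Finally $\mathbf{0},\mathbf{1} \in \mathcal{D}$ by construction (or, when $\mathcal{A} \neq \emptyset$, as $a \wedge \neg a$ and $a \vee \neg a$ for any $a \in \mathcal{A}$), completing the verification and hence the proof.
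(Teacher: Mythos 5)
Your proposal is correct and follows essentially the same route as the paper: define the set of all DNF expressions over $\mathcal{A}$, observe it contains $\mathcal{A}$ and sits inside $\mathcal{B}(\mathcal{A})$, and then verify it is a Boolean subalgebra by checking closure under join (juxtaposition), meet (distributivity), and complement (De Morgan followed by re-expansion), so that minimality forces equality. Your explicit treatment of $\mathbf{0}$, $\mathbf{1}$ and the degenerate case $\mathcal{A}=\emptyset$ is a small point of extra care that the paper's proof glosses over, but the argument is otherwise the same.
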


\begin{proof} Let $\mathcal{C}$ be all the elements written in DNF using elements of $\mathcal{A}$. It is straighforward that $\mathcal{C} \subseteq \mathcal{B}(\mathcal{A})$. It remains to prove that $\mathcal{C} \supseteq \mathcal{B}(\mathcal{A})$. Observe that the join of two elements of $\mathcal{C}$ is in $\mathcal{C}$. We claim that the complement of an element in $\mathcal{C}$ is also in $\mathcal{C}$. This is proven in three steps:
\begin{itemize}
    \item if $K \in \mathcal{C}$ and either $V \in \mathcal{A}$ or $V^c \in \mathcal{A}$, then $V \wedge K \in \mathcal{C}$. Indeed, if
    \begin{equation*}
        K = \bigvee_{i=1}^n \bigwedge_{j=1}^{\phi(n)} K_{ij},
    \end{equation*}
    then
    \begin{equation*}
        K \wedge V= \bigvee_{i=1}^n \bigwedge_{j=1}^{\phi(n)} \left(K_{ij} \wedge V\right),
    \end{equation*}
    which is in DNF using elements in $\mathcal{A}$;
    \item if $K,V \in \mathcal{C}$, then $K \wedge V \in \mathcal{C}$. In fact, each term
    \begin{equation*}
        \bigwedge_{j=1}^{\phi(n)} \left(K_{ij} \wedge V\right)
    \end{equation*}
    can be written in DNF by using the previous step iteratively. Then, the join of all these terms again is in DNF and therefore $K\wedge V \in \mathcal{C}$;
    \item if $K \in \mathcal{C}$, then $K^c \in \mathcal{C}$. Indeed, if
    \begin{equation*}
        K = \bigvee_{i=1}^n \bigwedge_{j=1}^{\phi(n)} K_{ij},
    \end{equation*}
    then
    \begin{equation*}
        K^c = \bigwedge_{i=1}^n \bigvee_{j=1}^{\phi(n)} K_{ij}^c,
    \end{equation*}
    which is the meet of $n$ elements in $\mathcal{C}$ and then $K^c \in \mathcal{C}$ by the previous step.
\end{itemize}
We conclude that $\mathcal{C}$ is closed under the operation $\wedge$ and complementation and it is straighforward that it is also closed under the join operation. Then $\mathcal{C}$ is a Boolean subalgebra of $\mathcal{B}$ containing $\mathcal{A}$ and hence $\mathcal{C} \supseteq \mathcal{B}(\mathcal{A})$. Therefore $\mathcal{C} = \mathcal{B}(\mathcal{A})$. 
\end{proof}

%\include{ape-compactification}

% ---------------------------------------------------------------------------- %
% Bibliografia
\backmatter \singlespacing   % espa\c{c}amento simples

% ---------------------------------------------------------------------------- %
% Índice remissivo
%\index{TBP|see{periodicidade regi\~ao codificante}}
%\index{DSP|see{processamento digital de sinais}}
%\index{STFT|see{transformada de Fourier de tempo reduzido}}
%\index{DFT|see{transformada discreta de Fourier}}
%\index{Fourier!transformada|see{transformada de Fourier}}

%\printindex   % imprime o \'indice remissivo no documento 

\end{document}